\documentclass[12pt]{article}
\usepackage[letterpaper, margin=1in]{geometry}

\usepackage{setspace}
\usepackage{fancyhdr}
\usepackage{authblk}

\usepackage{amsmath}
\allowdisplaybreaks
\usepackage{amsthm}
\usepackage{amssymb}
\usepackage{mathrsfs}
\usepackage{bm}

\usepackage[colorlinks, citecolor=blue, urlcolor=blue]{hyperref}

\usepackage{enumitem}
\usepackage[markup=underlined]{changes}

\usepackage[numbers,sort&compress]{natbib}

\usepackage{longtable}
\usepackage{adjustbox}
\usepackage{threeparttable}
\usepackage{booktabs}

\usepackage{algorithm}
\usepackage{algpseudocode}

\usepackage{tikz-cd}
\usetikzlibrary{arrows.meta, positioning}

\usepackage{placeins}

\DeclareMathOperator{\var}{var}
\DeclareMathOperator{\diag}{diag}
\DeclareMathOperator{\argmax}{argmax}
\DeclareMathOperator{\Bernoulli}{Bernoulli}
\DeclareMathOperator{\Row}{Row}

\DeclareMathOperator{\Cov}{Cov}
\DeclareMathOperator{\diam}{diam}

\numberwithin{equation}{section}

\theoremstyle{plain}
\newtheorem{theorem}{Theorem}[section]
\newtheorem{lemma}{Lemma}[section]
\newtheorem{corollary}{Corollary}[section]
\newtheorem{assumption}{Assumption}[section]

\newtheorem*{proposition}{Proposition}

\theoremstyle{definition}
\newtheorem{definition}{Definition}[section]
\newtheorem{remark}{Remark}[section]
\newtheorem{example}{Example}[section]

\begin{document}

\title{CBARA: Covariate-Balanced-and-Adjusted Response-Adaptive Randomization}
\author[1]{Hengjia Fang\thanks{hengjiafang@ruc.edu.cn}}
\author[1]{Wei Ma\thanks{mawei@ruc.edu.cn}}
\affil[1]{Institute of Statistics and Big Data, Renmin University of China, Beijing, China}

\date{} 

\maketitle

\begin{abstract}
    We propose the covariate-balanced-and-adjusted response-adaptive randomization (CBARA) procedure for adaptive design in clinical trials, which integrates the complementary strengths of covariate-adjusted response-adaptive randomization (CARA) and covariate-adaptive randomization (CAR).
    The CBARA procedure updates the target allocation ratio according to observed responses and patient covariate profiles without requiring a correctly specified model, thereby retaining CARA's ethical and efficiency considerations while improving robustness.
    In addition, the CBARA procedure extends the CAR principle from fixed target allocation ratios to covariate-adjusted adaptive target allocation ratios, yet still pursues balance in treatment allocation with respect to covariate features.
    This integration is enabled by a newly defined imbalance vector and three interrelated components: the allocation function, parameter estimation and update mechanism.
    We establish the asymptotic properties of covariate imbalance and the estimators under the CBARA procedure.
    The results demonstrate that the CBARA procedure can improve balance for both observed and unobserved covariates while preserving the consistency of the allocation ratio.
    The theoretical analysis is developed through a pseudo-Markov chain framework, where a new discrepancy measure for transition kernels is introduced to handle the continuity of Poisson equation solutions with respect to parameters.
\end{abstract}
\textbf{Keywords:} Adaptive Design, CARA Procedure, Clinical Trial, Covariate Balance, Pseudo-Markov Chain

\tableofcontents

\section{Introduction}

\subsection{Background}

In the context of precision medicine, the design of clinical trials must efficiently exploit both treatment responses and covariate information, such as biomarkers, in order to improve the allocation mechanism while maintaining a balance between ethical considerations and inferential efficiency.
The covariate-adjusted response-adaptive randomization (CARA) procedure achieves this objective by dynamically updating treatment allocation probabilities as data accumulate \cite{rosenbergerRandomizationClinicalTrials2016,sverdlovModernAdaptiveRandomized2016}.
This approach serves two fundamental purposes.
From an ethical standpoint, it reduces the number of units assigned to treatments that are emerging as inferior for their specific covariate.
From an inferential perspective, this design enhances statistical efficiency by adaptively allocating a larger proportion of units to the treatment group exhibiting higher response variance.
This feature is consistent with the Neyman allocation principle and leads to a reduction in the asymptotic variance of the treatment effect estimators.

Most existing CARA procedures achieve these two objectives by adjusting the conditional allocation ratio of units given any covariate value to its theoretical optimum.
When the potential response distribution of each unit conditional on its baseline covariates is known, one can compute the theoretically optimal allocation probability for each unit according to a predefined optimization criterion that balances ethical and efficiency considerations.
In practice, however, this distribution is typically unknown.
Therefore, before each allocation, a CARA procedure estimates the optimal allocation probability based on the accumulated data.
The estimate represents the current allocation target and is therefore referred to as the targeted allocation ratio at the current allocation step.

To achieve the targeted allocation ratio, most of the CARA procedures directly set the allocation probability to the targeted allocation ratio \cite{biswasClassCovariateAdjustedResponseAdaptive2018,biswasClassOptimalCovariateadjusted2016,bhattacharyaClassOptimalType2015,zhangAsymptoticPropertiesCovariateadjusted2007,rosenbergerCovariateAdjustedResponseAdaptiveDesigns2001,bandyopadhyayAdaptiveDesignsNormal2001,chambazTargetedSequentialDesign2017,zhangOnlineMetaLevelAdaptiveDesign2025}.
They can be analyzed within the framework of Zhang et al. \cite{zhangAsymptoticPropertiesCovariateadjusted2007}.
However, these procedures usually suffer from covariate imbalance across treatment groups.
In addition to efficiency and ethics, a third operating characteristic of adaptive designs is balance, which can also potentially enhance statistical efficiency \cite{rosenbergerHandlingCovariatesDesign2008}.
Specifically, under equal allocation, in linear models with homoscedastic normal errors, achieving marginal covariate balance is equivalent to minimizing the variance of the estimated treatment effect, thereby improving the statistical power of hypothesis tests \cite{rosenbergerHandlingCovariatesDesign2008, baldiantogniniCovariateadaptiveBiasedCoin2011, maTestingHypothesesCovariateAdaptive2015, maStatisticalInferenceCovariateAdaptive2020}.
Therefore, under the CARA procedure, achieving covariate balance may also improve the efficiency of estimation.

Furthermore, covariate-adaptive randomization (CAR) procedures, also referred to as covariate-balanced randomization in some literature \cite{yuanBayesianResponseadaptiveCovariatebalanced2011,ballouResponseadaptiveCovariatebalancedRandomization2015,meurerSimulationVariousRandomization2016}, are primarily designed to achieve covariate balance across treatment groups under a fixed targeted allocation ratio.
They take the current covariate imbalance into account when determining the allocation probabilities.

In earlier studies, most CAR procedures focused primarily on balancing discrete covariates \cite{zelenRandomizationStratificationPatients1974,tavesMinimizationNewMethod1974,pocockSequentialTreatmentAssignment1975,weiApplicationUrnModel1978,huAsymptoticPropertiesCovariateadaptive2012}.
As a generalization of the CAR procedure proposed by Hu and Hu \cite{huAsymptoticPropertiesCovariateadaptive2012}, Zhao et al. proposed a CARA procedure for discrete covariates that balances prognostic covariates \cite{zhaoIncorporatingCovariatesInformation2022}.
The procedure uses discrete predictive covariates to define strata and to determine the targeted allocation ratio within each stratum, while balancing prognostic covariates within strata.
Hence, its application is confined to scenarios with discrete covariates, and it only balances these covariates independently within predefined strata.
However, theoretical guarantees for its asymptotic properties and inferential validity are not yet available.

Existing CAR studies have shown that achieving balance in covariates, whether continuous or discrete, improves the efficiency of average treatment effect (ATE) estimation under linear models \cite{shaoTheoryTestingHypotheses2010,maStatisticalInferenceCovariateAdaptive2020}.
Correspondingly, recent developments in CAR procedures have focused on achieving marginal balance for continuous covariates \cite{maNewUnifiedFamily2024,zhangAsymptoticPropertiesMultitreatment2023,baldiantogniniEfficientCovariateAdaptiveDesign2024,liuPropertiesCovariateadaptiveRandomization2025,fangGeneralNonMarkovianFramework2026}.
Thus, it is conceptually possible to achieve marginal balance of continuous covariates within a CARA procedure.
However, without any stratification structure, combining CAR and CARA into a unified adaptive design still poses significant challenges in both methodology and theory.

 \subsection{Contributions}

In this article, we propose the covariate-balanced-and-adjusted response-adaptive randomization (CBARA).
It addresses the methodological and theoretical challenges of balancing continuous covariates under a varying targeted allocation ratio across allocation steps, providing rigorous theoretical guarantees.
As a result of achieving balance, it also improves estimation efficiency, both theoretically and practically.

For clarity, we assume that the optimal allocation ratio is determined by a finite-dimensional model parameter that characterizes the joint distribution of covariates and responses.
However, in the CBARA procedure, the targeted allocation ratio is not exactly equal to the estimated optimal allocation ratio at each step.
This is because the sequence of estimates needs to be smoothed to reduce variation in the targeted allocation ratios across steps.
Accordingly, we refer to the parameter that determines the targeted allocation ratio as the allocation parameter, which serves as a stabilized proxy for the model parameter estimate.

In Subsubsections \ref{subsubsec_introduction_covariate_imbalance_allocation_function}--\ref{subsubsec_introduction_parameter_update}, we introduce the three core methodological components of the CBARA procedure proposed in this paper: covariate imbalance and allocation function, model parameter estimation, and allocation parameter update mechanism.
In Subsubsection \ref{subsubsec_introduction_theory}, we present the challenges and introduce a new theoretical framework for analyzing the CBARA procedure.
It clarifies why the variation of ${\theta_n}$ needs to be restricted in Subsubsection \ref{subsubsec_introduction_parameter_update}, and motivates the introduction of two allocation parameter update mechanisms designed to satisfy these restrictions.

\subsubsection{Covariate Imbalance and Allocation Function}
\label{subsubsec_introduction_covariate_imbalance_allocation_function}

The CARA procedure proposed by Zhao et al. balances discrete prognostic covariates within each stratum defined by predictive covariates, while the targeted allocation ratio among units within the same stratum is equal \cite{zhaoIncorporatingCovariatesInformation2022}.
When the covariates are continuous, the situation becomes more complex because marginal covariate imbalance is difficult to characterize when the targeted allocation ratio varies across covariate values and over allocation steps.
Even if covariate imbalance can be characterized, it remains difficult to construct a procedure that both controls it and prevents severe imbalance in potentially unobserved covariates.
The imbalance of the unobserved covariates has attracted increasing attention in the literature \cite{liuBalancingUnobservedCovariates2022,liuImpactsUnobservedCovariates2023}.
Existing studies have shown that, in the context of unequal targeted allocation ratios, such imbalance may be severe under certain CAR procedures \cite{liuPropertiesCovariateadaptiveRandomization2025,fangGeneralNonMarkovianFramework2026}.
This may compromise the consistency of subsequent estimators and the validity of inference.

In this article, we consider only two treatment groups.
If we are only interested in quantifying the covariate imbalance induced by the randomness between each treatment assignment and the targeted allocation ratio, we may define the following covariate imbalance vector:
\begin{equation*}
    \Lambda_n = \sum_{i=1}^n 
    \frac{(T_i - \rho_{i-1}(X_i)) \phi(X_i)}
    {\rho_{i-1}(X_i)(1-\rho_{i-1}(X_i))} ,
\end{equation*}
where $\phi$ is a prespecified feature mapping, $X_i$ denotes the covariate vector of the $i$th unit used for randomization, $T_i$ denotes the treatment assignment indicator and $\rho_{i-1}$ is the targeted allocation ratio for the $i$th allocation.
This definition is motivated by the notion of covariate balance in \cite{imaiCovariateBalancingPropensity2014}.
Similarly, we define the imbalance of the additional covariate $Z$ as
\begin{equation*}
    \Psi_n = \sum_{i=1}^n \frac{(T_i-\rho_{i-1}(X_i)) Z_i}{\rho_{i-1}(X_i)(1-\rho_{i-1}(X_i))} .
\end{equation*}
For the definition of the imbalance vector, the connection between the form above and the existing CAR literature is explained in Remark \ref{remark_imbalance_vector_CAR_connection}.
An obvious advantage of defining the imbalance vector in this way is that each allocation only adds an incremental term, $\left[(T_i - \rho_{i-1}(X_i)) \phi(X_i)\right]/\left[\rho_{i-1}(X_i)(1-\rho_{i-1}(X_i))\right]$, to the existing imbalance vector.

Accordingly, by generalizing existing solutions to the shift problem \cite{fangGeneralNonMarkovianFramework2026}, we set the allocation probability for the $n$th unit to be $g_{\theta_{n-1}}(\Lambda_{n-1}, X_n)$.
Here, the allocation function
\begin{equation*}
    g_\theta(\Lambda, X) = \rho_\theta(X) - p_\theta \rho_\theta(X)(1-\rho_\theta(X))
    \mathcal{S}_\phi \left( \frac{\phi(X)}{\rho_\theta(X)(1-\rho_\theta(X))} \right)^T
    \mathcal{S}_\Lambda (\Lambda),
\end{equation*}
where $p_\theta$ and $\mathcal{S}_\cdot(\cdot)$ are a scaling constant and scaling functions, respectively, designed to ensure that $g_\theta \in [0,1]$.
For further details, see \eqref{eq_CARA_allocation_function}.
This construction controls $\Lambda_n = O_P(1)$ while ensuring that $\Psi_n$ is asymptotically normal with mean zero, and we further derive its asymptotic variance.
Moreover, under certain conditions, the variance is guaranteed to be no larger than that under simple randomization with the oracle targeted allocation ratio.

\subsubsection{Model Parameter Estimation}
\label{subsubsec_introduction_estimation}

Except for stratified designs, the asymptotic properties of parameter estimators in existing CAR and CARA procedures typically rely on correct model specification.
For the CARA procedure, the estimation approach in \cite{zhangAsymptoticPropertiesCovariateadjusted2007}, based on the maximum likelihood estimator (MLE), may lack robustness under model misspecification in the CARA procedure.
When the parametric model is misspecified, the oracle model parameter $\theta^*$, defined as the maximizer of the expected log-likelihood, may not be unique, because the expected log-likelihood depends on the targeted allocation ratio.
This arises from the fact that the log-likelihood may have a nonzero conditional expectation and vary across covariates and treatment groups, and the targeted allocation ratio determines the joint distribution of covariates and treatment groups.
The nonzero conditional expectation also violates a key condition of Theorem 2.1 in \cite{zhangAsymptoticPropertiesCovariateadjusted2007}, and consequently, the theory in \cite{zhangAsymptoticPropertiesCovariateadjusted2007} cannot accommodate model misspecification.
Therefore, it is crucial to develop a universal and robust estimation method for the oracle model parameters in the CARA procedure that remains valid even when the model is not correctly specified.

For targeted sequential inference, Chambaz et al. adopted an inverse probability weighted version of the M-estimator in their design, where the propensity score is taken to be the allocation probability \cite{chambazInferenceTargetedGroupSequential2014,chambazTargetedSequentialDesign2017}.
This approach provides robustness against model misspecification.
In CAR procedures and in our CBARA procedure, the allocation probability must fluctuate around the targeted allocation ratio to maintain covariate balance, and such fluctuations may lead to high variance and instability for this weighted M-estimator.
Therefore, in this context, we use the targeted allocation ratio instead of the actual allocation probability as the propensity score, since it provides greater stability.
Although this breaks the martingale structure of the estimator, under our CBARA procedure it still maintains robustness and does not rely on correct model specification.

\subsubsection{Allocation Parameter Update Mechanism}
\label{subsubsec_introduction_parameter_update}

In this article, the targeted allocation ratio can be parameterized as $\rho_n = \rho_{\theta_n}$ for the allocation parameter $\theta_n$.
Built upon the theoretical framework in \cite{fangGeneralNonMarkovianFramework2026}, we develop a general theoretical framework for the CBARA procedure. 
Although many technical aspects differ, the theoretical analysis of the CBARA procedure exhibits a similar dependence on the variation of the allocation parameter sequence $\{\theta_n\}$ as in \cite{fangGeneralNonMarkovianFramework2026}.
We summarize these dependencies in the following two assumptions, corresponding to the law of large numbers and the central limit theorem, respectively.

The weaker assumption on the allocation parameter sequence $\{\theta_n\}$, used for establishing the law of large numbers, is stated as follows.
\begin{assumption}[Weak Diminishing Adaptation]
    \label{assumption_theta_weak_diminishing_adaptation}
    There exists some $q \in (0,1]$ such that, for any $\delta > 0$, there exists a constant $c_\delta > 0$ satisfying
    \begin{equation*}
        P\left( \frac{1}{N} \sum_{n=0}^{N-1} \| \theta_n - \theta_{n+1} \| > \delta
        \right) < c_\delta N^{-q} .
    \end{equation*}
\end{assumption}
The stronger assumption on the allocation parameter sequence $\{\theta_n\}$, used for establishing the central limit theorem, is given as follows.
\begin{assumption}[Convergence and Strong Diminishing Adaptation]
    \label{assumption_theta_convergence_strong_diminishing_adaptation}
    The allocation parameter sequence $\{\theta_n\}$ converges to $\theta^*$ in probability.
    Moreover, for some $p \in (0,1/2)$, it holds that
    \begin{equation*}
        \sum_{n=0}^{N-1} \| \theta_n - \theta_{n+1} \|
        = o_P(N^p) .
    \end{equation*}
\end{assumption}
From Assumptions \ref{assumption_theta_weak_diminishing_adaptation} and \ref{assumption_theta_convergence_strong_diminishing_adaptation}, it can be seen that the properties of the CBARA procedure and the model parameter estimation in our article do not rely on the allocation parameter having a particular asymptotic form, as is required in \cite{zhangAsymptoticPropertiesCovariateadjusted2007,alettiNonparametricCovariateadjustedResponseadaptive2018}.
They only require that the changes between successive elements of $\{\theta_n\}$ remain sufficiently small so that the variation of $\{\theta_n\}$ does not substantially disturb the analysis.

However, using the estimate of the model parameter directly as the allocation parameter may fail to satisfy this variation constraint.
To mitigate this issue, we propose two allocation parameter update mechanisms in Subsection \ref{subsec_parameter_update}, which maintain the allocation parameter in close proximity to the estimate while ensuring controlled, moderate changes at each allocation step.
Therefore, in this article, we distinguish between the estimate of the model parameter $\eta_n$ and the allocation parameter $\theta_n$.
These update mechanisms ensure that Assumptions \ref{assumption_theta_weak_diminishing_adaptation} and \ref{assumption_theta_convergence_strong_diminishing_adaptation} are automatically satisfied.
Moreover, they further guarantee the stability of the allocation parameter sequence, even under potential practical irregularities, such as the presence of response delays or the use of surrogate responses for estimation, as considered in some adaptive designs \cite{huDoublyAdaptiveBiased2008,gaoResponseAdaptiveRandomizationProcedure2024,zhangOnlineMetaLevelAdaptiveDesign2025}.

We summarize the relationships between Assumptions \ref{assumption_theta_weak_diminishing_adaptation}--\ref{assumption_theta_convergence_strong_diminishing_adaptation} and the properties of the CBARA procedure and the estimation established in Section \ref{sec_properties} as follows.
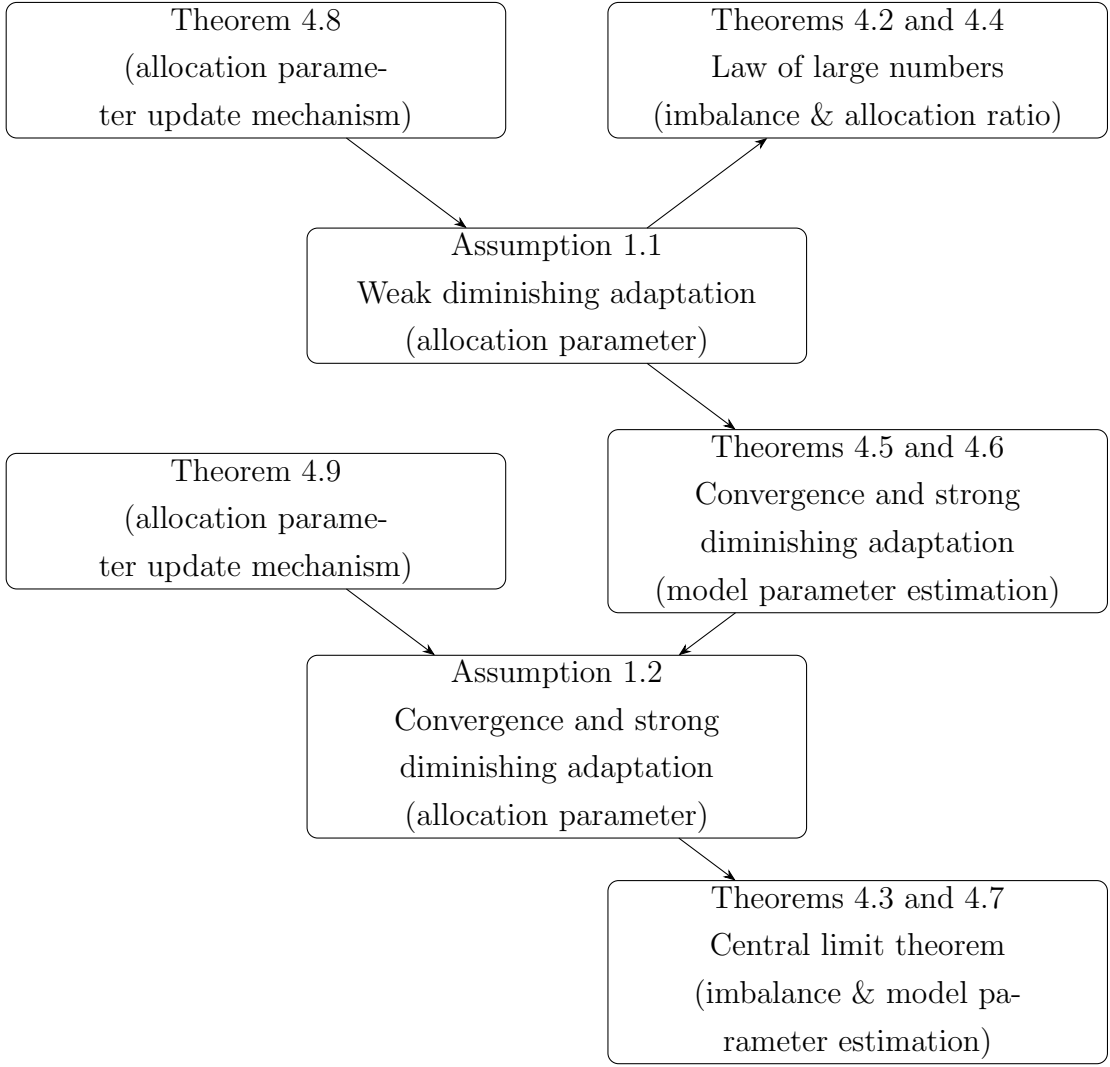
\begin{figure}[H]
    \centering
    \begin{tikzpicture}[
            every node/.style={
                draw,
                rectangle,
                rounded corners,
                align=center,
                minimum height=1.2cm,
                text width=6.5cm,
                inner sep=2pt,
                outer sep=0pt
            },
            arrow/.style={->, thin}
        ]

        \node (theorem_LLN) at (8,12) {Theorems \ref{theorem_additional_covariate_LLN} and \ref{theorem_conditional_allocation_ratio} \\ Law of large numbers \\ (imbalance \& allocation ratio)};
        \node (theorem_CLT) at (8,0) {Theorems \ref{theorem_additional_covariate_CLT} and \ref{theorem_estimation_asymptotic_normality} \\ Central limit theorem \\ (imbalance \& model parameter estimation)};
        \node (theorem_connection) at (8,6) {Theorems \ref{theorem_estimation_consistency} and \ref{theorem_estimation_difference} \\ Convergence and strong diminishing adaptation \\ (model parameter estimation)};
        \node (theorem_allocation_parameter_update_weak) at (0,12) {Theorem \ref{theorem_allocation_parameter_update_weak} \\ (allocation parameter update mechanism)};
        \node (theorem_allocation_parameter_update_strong) at (0,6) {Theorem \ref{theorem_allocation_parameter_update_strong} \\ (allocation parameter update mechanism)};
        
        \node (assumption_theta_weak_diminishing_adaptation) at (4,9) {Assumption \ref{assumption_theta_weak_diminishing_adaptation} \\ Weak diminishing adaptation \\ (allocation parameter)};
        \node (assumption_theta_convergence_strong_diminishing_adaptation) at (4,3) {Assumption \ref{assumption_theta_convergence_strong_diminishing_adaptation} \\ Convergence and strong diminishing adaptation \\ (allocation parameter)};

        \draw[-{Stealth}] (assumption_theta_weak_diminishing_adaptation) -- (theorem_LLN);
        \draw[-{Stealth}] (assumption_theta_weak_diminishing_adaptation) -- (theorem_connection);
        \draw[-{Stealth}] (assumption_theta_convergence_strong_diminishing_adaptation) -- (theorem_CLT);
        
        \draw[-{Stealth}] (theorem_allocation_parameter_update_weak) -- (assumption_theta_weak_diminishing_adaptation);
        \draw[-{Stealth}] (theorem_connection) -- (assumption_theta_convergence_strong_diminishing_adaptation);
        \draw[-{Stealth}] (theorem_allocation_parameter_update_strong) -- (assumption_theta_convergence_strong_diminishing_adaptation);
    \end{tikzpicture}
    \caption{Dependency structure among assumptions on the allocation parameter sequence and theorems.
    An arrow from node A to node B indicates that the condition on the variation of $\{\theta_n\}$ in Theorem(s) B is Assumption A, or Assumption B can be derived from Theorem(s) A.}
    \label{figure_theorem_dependency}
\end{figure}
Therefore, the conditions of Theorems \ref{theorem_estimation_consistency}, \ref{theorem_estimation_difference}, \ref{theorem_allocation_parameter_update_weak}, and \ref{theorem_allocation_parameter_update_strong} are sufficient to ensure that Assumptions \ref{assumption_theta_weak_diminishing_adaptation} and \ref{assumption_theta_convergence_strong_diminishing_adaptation} hold.
We summarize these conditions in the following proposition.
\begin{proposition}
    Suppose that Assumptions \ref{assumption_theta_compact}, \ref{assumption_lipschitz_continuity_functions_kernels}, \ref{assumption_x_sub_exponential_bound}, \ref{assumption_for_simultaneous_small_set_condition} and \ref{assumption_estimation_asymptotic} hold.
    If the allocation parameter $\theta_n$ is updated according to \eqref{eq_allocation_parameter_update_mechanism_1}, or according to \eqref{eq_allocation_parameter_update_mechanism_2} when the parameter space $\Theta$ is a convex subset of a Euclidean space and $\sum_{n=1}^\infty C_{\mathrm{clip}, n} = \infty$ for \eqref{eq_allocation_parameter_update_mechanism_2}, then both Assumptions \ref{assumption_theta_weak_diminishing_adaptation} and \ref{assumption_theta_convergence_strong_diminishing_adaptation} hold.
\end{proposition}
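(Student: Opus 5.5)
The proposition follows by chaining the results summarized in Figure \ref{figure_theorem_dependency}, so I will verify the two assumptions in sequence rather than from scratch.

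\textbf{Step 1: Assumption \ref{assumption_theta_weak_diminishing_adaptation}.} Under either update rule \eqref{eq_allocation_parameter_update_mechanism_1} or \eqref{eq_allocation_parameter_update_mechanism_2} (with $\Theta$ convex and $\sum_n C_{\mathrm{clip},n}=\infty$ in the second case), I invoke Theorem \ref{theorem_allocation_parameter_update_weak}. Its hypotheses are among Assumptions \ref{assumption_theta_compact}--\ref{assumption_estimation_asymptotic}, and it gives the weak diminishing adaptation bound directly. The point to check is that this theorem is purely deterministic or pathwise in the increments. Each step moves $\theta_n$ by at most a clipped amount whose Cesàro average tends to zero at a polynomial rate, so the probability bound $c_\delta N^{-q}$ holds uniformly in the estimator sequence $\{\eta_n\}$. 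In particular, Step 1 does not depend on any convergence of $\eta_n$.

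\textbf{Step 2: properties of the estimator.} With Assumption \ref{assumption_theta_weak_diminishing_adaptation} established and Assumptions \ref{assumption_lipschitz_continuity_functions_kernels}, \ref{assumption_x_sub_exponential_bound}, \ref{assumption_for_simultaneous_small_set_condition}, \ref{assumption_estimation_asymptotic} in force, I apply Theorems \ref{theorem_estimation_consistency} and \ref{theorem_estimation_difference}. These give that the model parameter estimate $\eta_n\to\theta^*$ in probability. They also give a strong diminishing adaptation type control on the estimator increments, e.g. $\sum_{n<N}\|\eta_n-\eta_{n+1}\|=o_P(N^p)$ or a rate bound of the form $\|\eta_n-\theta^*\|=O_P(n^{-1/2+\epsilon})$.

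\textbf{Step 3: Assumption \ref{assumption_theta_convergence_strong_diminishing_adaptation}.} I feed the Step 2 conclusions into Theorem \ref{theorem_allocation_parameter_update_strong}. That theorem should transfer convergence of $\eta_n$ and control of its variation to $\theta_n$, giving both $\theta_n\to\theta^*$ and $\sum\|\theta_n-\theta_{n+1}\|=o_P(N^p)$ for some $p<1/2$.

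For rule \eqref{eq_allocation_parameter_update_mechanism_2}, two facts matter here:
\begin{itemize}
\item Convexity of $\Theta$ keeps the clipped step $\theta_n+\text{clip}(\eta_n-\theta_n)$ inside $\Theta$.
\item $\sum C_{\mathrm{clip},n}=\infty$ ensures the clipped steps cannot stall before $\theta_n$ catches up to $\eta_n$, which yields convergence.
\end{itemize}
The variation bound then follows because, once $\theta_n$ tracks $\eta_n$, its increments are dominated by those of $\eta_n$ plus clipping errors.

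\textbf{Main obstacle.} The delicate point is the apparent circularity: Theorems \ref{theorem_estimation_consistency}--\ref{theorem_estimation_difference} require a condition on $\{\theta_n\}$, while the strong condition on $\{\theta_n\}$ requires properties of the estimator. The circularity is resolved by ordering the argument so that the weak assumption is obtained unconditionally first (Step 1). I will double-check that the $p$ produced in Step 3 is indeed in $(0,1/2)$, matching the exponents in Theorem \ref{theorem_estimation_difference}, and that the constants in Step 1 do not depend on the estimator sequence.
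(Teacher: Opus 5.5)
Your ordering (Theorem \ref{theorem_allocation_parameter_update_weak}, then Theorems \ref{theorem_estimation_consistency} and \ref{theorem_estimation_difference}, then Theorem \ref{theorem_allocation_parameter_update_strong}) is exactly the paper's argument, as summarized in Figure \ref{figure_theorem_dependency}. The step you defer (``I will double-check that $p\in(0,1/2)$'') is, however, the only quantitative step in the chain, so it should be settled explicitly.

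Theorem \ref{theorem_estimation_difference} gives $\sum_{n<N}\|\eta_n-\eta_{n+1}\|=o_P(N^p)$ only for $p\in(1-q,1)$, where $q$ is the exponent in Assumption \ref{assumption_theta_weak_diminishing_adaptation}. If $q\le 1/2$, this excludes every $p<1/2$, and the chain does not deliver the hypothesis of Theorem \ref{theorem_allocation_parameter_update_strong}.

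The resolution is that Theorem \ref{theorem_allocation_parameter_update_weak} gives Assumption \ref{assumption_theta_weak_diminishing_adaptation} for \emph{every} $q\in(0,1]$. The bounds $\frac1N\sum_{n<N}\|\theta_n-\theta_{n+1}\|\le\diam(\Theta)\,\#(S\cap\{1,\dots,N\})/N$ and $\le\frac1N\sum_{n\le N}C_{\mathrm{clip},n}$ are deterministic and tend to zero. Hence the event in Assumption \ref{assumption_theta_weak_diminishing_adaptation} is empty for $N\ge N_\delta$, and $c_\delta=2N_\delta^{q}$ works for any $q$.

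Take $q=1$. Then $\sum_{n\le N}\mathbb{E}\|\eta_n-\eta_{n-1}\|=O(\ln N)$, so the variation of $\{\eta_n\}$ is $o_P(N^p)$ for every $p\in(0,1)$. You may therefore fix any $p\in(0,1/2)$ before invoking Theorem \ref{theorem_allocation_parameter_update_strong}, which returns Assumption \ref{assumption_theta_convergence_strong_diminishing_adaptation} with the same $p$.

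This also corrects your description of Step 1 as a ``polynomial rate'' argument. The ratios $\#(S\cap\{1,\dots,N\})/N$ and $\frac1N\sum C_{\mathrm{clip},n}$ may decay arbitrarily slowly; it is determinism, not a rate, that yields every $q$.

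Three smaller points:
\begin{itemize}
\item $\sum_n C_{\mathrm{clip},n}=\infty$ is used only in Step 3, to get $\theta_n\to\eta^*$; it plays no role in Step 1.
\item The variation transfer in Theorem \ref{theorem_allocation_parameter_update_strong} is the pathwise inequality $\sum_{n<N}\|\theta_n-\theta_{n+1}\|\le\sum_{n<N}\|\eta_n-\eta_{n+1}\|$ (with $\eta_0:=\theta_0$). It holds for every $N$ and has no ``clipping error'' term.
\item The alternative rate $\|\eta_n-\theta^*\|=O_P(n^{-1/2+\epsilon})$ you mention is neither supplied by Theorem \ref{theorem_estimation_difference} nor needed.
\end{itemize}
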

In the next subsection, we briefly explain why Assumptions \ref{assumption_theta_weak_diminishing_adaptation} and \ref{assumption_theta_convergence_strong_diminishing_adaptation} are required as conditions for Theorems \ref{theorem_additional_covariate_LLN}--\ref{theorem_estimation_asymptotic_normality}, through the theory of the CBARA procedure.

\subsubsection{Theory}
\label{subsubsec_introduction_theory}

The CBARA procedure exhibits features of both the CARA procedure and the CAR procedure.
Consequently, some properties that are used to analyze the CARA procedure or the CAR procedure are no longer preserved.
For example, the analysis of the CARA procedure typically relies on the allocation probability having an asymptotic linear representation in expectation, whereas in the CAR procedure, the dependence of the allocation probability on imbalance invalidates such a representation \cite{zhangAsymptoticPropertiesCovariateadjusted2007,zhangNewFamilyCovariateadjusted2009,huUnifiedFamilyCovariateAdjusted2015}.
As another example, the CARA procedure continuously updates the targeted allocation ratio, which destroys the Markovian property that is commonly used in the analysis of most CAR procedures \cite{huAsymptoticPropertiesCovariateadaptive2012,huTheoryCovariateadaptiveDesigns2020,zhangAsymptoticPropertiesMultitreatment2023,maNewUnifiedFamily2024,liuPropertiesCovariateadaptiveRandomization2025}.

In this article, we develop a new theoretical framework for adaptive designs, which allows the use of tools from Markov chain theory for a class of non-Markovian stochastic processes.
In this setting, even though the stochastic process itself does not possess the Markovian property, it can still be described using transition kernels.
For the CBARA procedure, the transition kernel corresponding to each step is different, random, and history-dependent.
Accordingly, when the object of analysis is a sum of dependent terms, we apply the Poisson equation together with a rearrangement of the terms in the sum to decompose each term into a martingale difference.
This analytical approach may appear similar to that used for the traditional CAR procedure.
However, it is important to note that in the CBARA procedure, the parameter updates introduce an additional remainder term in each summand, which captures the disturbance caused by the variation of parameters across successive steps.
Controlling this remainder term so that it does not affect the law of large numbers and the central limit theorem for the entire sum is a key analytical challenge addressed in this article.

To achieve this, we need to relate the size of the remainder term in each summand to the magnitude of the corresponding variation in the allocation parameter sequence $\{\theta_n\}$.
The size of the remainder term can be decomposed into contributions from the differences in transition kernels, the corresponding invariant probabilities and allocation functions across successive steps.
Among these differences, the main difficulty lies in the differences of the transition kernels.
Specifically, it is necessary to identify an appropriate discrepancy for the transition kernels such that, when $\theta$ is sufficiently close, the corresponding discrepancy between the kernels is also sufficiently small.
To address this issue, Subsection \ref{subsec_proof_strategy_negligibility_remaining_term} introduces a new discrepancy between transition kernels.
This discrepancy not only captures the variation in allocation probabilities induced by changes in $\theta$, but also quantifies the change in the distribution of the imbalance vector increment $\left[(T_i - \rho_{i-1}(X_i)) \phi(X_i)\right]/\left[\rho_{i-1}(X_i)(1-\rho_{i-1}(X_i))\right]$.
Using this new discrepancy, we can bound the remainder term in each summand by the corresponding variation of $\{\theta_n\}$.
Therefore, the conditions required to ensure the negligibility of the remainder term for the law of large numbers and the central limit theorem, respectively, can be summarized as the variation conditions on $\{\theta_n\}$, as stated in Assumptions \ref{assumption_theta_weak_diminishing_adaptation} and \ref{assumption_theta_convergence_strong_diminishing_adaptation}.

\subsection{Organization of the Article}

The rest of the article is organized as follows.
We first introduce the setup in Section \ref{sec_framework_setup}, and then present the CBARA procedure in Section \ref{sec_general_CARA_procedure}.
Section \ref{sec_properties} develops theoretical results.
In Section \ref{sec_proof_strategy}, we state our proof strategy for the theoretical results.
Finally, we conclude our article and provide directions for future work in Section \ref{sec_discussion}.
All technical lemmas, proofs, and experiments are provided in \cite{fangSupplementCBARACovariateBalancedandAdjusted2026}.

\section{Framework and Setup}
\label{sec_framework_setup}

Consider a trial in which patients are sequentially and randomly assigned to two groups.
Let $T_n$ be the treatment assignment of the $n$th unit, such that $T_n=1$ for the treatment and $T_n=0$ for the control.
For the $n$th unit, let $X_n \in \mathbb{R}^{d_x}$ denote the covariate vector used in the randomization procedure.
To evaluate balance between treatment groups, we consider the balance of the transformed version $\phi(X_n)$, where $\phi: \mathbb{R}^{d_x} \to \mathbb{R}^d$ is a prespecified feature map.
Let the potential outcome vector be $\bm{Y}_n = (Y_n(1), Y_n(0))$, representing the responses under treatment and control, respectively.

Denote by $Z_n$ an additional covariate vector that is not incorporated into the randomization procedure and may even be unobserved by the experimenter.
The covariate $Z_n$ is introduced solely for analytical purposes and is not required to be used in the CBARA procedure.
The additional covariate $Z$ may partially overlap with $X$, be defined as a function of $X$ and $(Y(1), Y(0))$, or represent additional covariate information collected after treatment assignment.
We suppose that the vectors $\{(X_n, \bm{Y}_n, Z_n)\}_{n \in \mathbb{N}^*}$ are independent and identically distributed (i.i.d.) random vectors.
Let $\Gamma$ denote the distribution of the covariates $X_n$, and let $\Gamma_{X,\bm{Y}}$ denote the distribution of $(X_n, \bm{Y}_n)$.
Moreover, let $\Gamma_{X,Y(t)}$ denote the distribution of $(X_n, Y_n(t))$, for $t \in \{0,1\}$.
In subsequent sections, we will similarly use $\Gamma_U$ to denote the distribution of an arbitrary random vector $U$.
\begin{assumption}
    \label{assumption_iid}

    The vectors $\{(X_n, \bm{Y}_n, Z_n)\}_{n \in \mathbb{N}^*}$ are i.i.d. random vectors.
\end{assumption}

Before the $(n+1)$th allocation, we compute the allocation parameter $\theta_n \in \Theta$ to be used for this allocation based on all information observed and available up to that point, excluding any unobserved responses or covariates from future units.
Thus, at each step, the allocation parameter may be updated dynamically based on the information accumulated so far, and in turn influence the allocation of the next unit.
Define the $\sigma$-field $\mathcal{F}_n$ by
\begin{equation*}
    \mathcal{F}_n = \sigma\left( \theta_0, X_1, \bm{Y}_1, Z_1, T_1, \theta_1, \dots, X_n, \bm{Y}_n, Z_n, T_n, \theta_n \right) .
\end{equation*}
We next formally state an assumption that, through the filtration $\{\mathcal{F}_n\}_{n \in \mathbb{N}}$, restricts the treatment assignment and the allocation parameter update at each step to depend only on the information collected up to that point, excluding any influence from future units.
\begin{assumption}
    \label{assumption_filtration}
    
The information of the future units $\{(X_{n+k}, \bm{Y}_{n+k}, Z_{n+k})\}_{k \in \mathbb{N}^*}$ is independent of $\mathcal{F}_n$ for any $n \in \mathbb{N}$.
\end{assumption}
\begin{remark}
    The most common setting for the allocation parameter is that, $\theta_0$ is fixed and for any $n \in \mathbb{N}^*$, $\theta_n$ is computed based only on the information and treatment assignments of the first $n$ units, and the CBARA procedure described in Section \ref{sec_general_CARA_procedure} satisfies this condition.
    In this case,
    \begin{equation*}
        \theta_n \in \sigma(X_1, Y_1(T_1), T_1, \dots, X_n, Y_n(T_n), T_n) .
    \end{equation*}
    Moreover, if $T_n$ depends only on the information of the first $n$ units and the treatment assignments of the first $n-1$ units, then Assumption \ref{assumption_filtration} is automatically satisfied under Assumption \ref{assumption_iid}.
\end{remark}
\begin{remark}
    The flexible specification of $\theta_n$ allows the framework to accommodate delayed responses, rare parameter updates, and other practical adaptations in constructing the allocation parameter sequence $\{\theta_n\}$.
    These capabilities are achieved through the allocation parameter update mechanism and the model parameter estimation method in Section \ref{sec_general_CARA_procedure}.
\end{remark}

Before each allocation, the desirable ratio of treatment assignments for units with covariate value $x$ is specified by the targeted allocation ratio $\rho(x)$.
Formally, $\rho(x)$ is a function of $x$ that represents the desired ratio of assigning such units to the treatment and control groups, namely $\rho(x):(1-\rho(x))$.
For brevity, we refer to ``the targeted allocation ratio for the treatment group'' as ``the targeted allocation ratio''.
The allocation probability specifies the actual conditional probability that the unit is assigned to the treatment group, given the past history $\mathcal{F}$ and its covariate $X$.

Under the CBARA procedure in this article, for the $(n+1)$th allocation, the targeted allocation ratio and the allocation probability depend on the current allocation parameter $\theta_n$.
Accordingly, at the $(n+1)$th step, the targeted allocation ratio and the allocation probability can be denoted by $\rho_{\theta_n}$ and $g_{\theta_n}$, respectively.
For any $\theta \in \Theta$, $\rho_\theta$ is a function of the covariate value $x$ alone, while $g_\theta$ is a function of both $x$ and the imbalance vector defined in \eqref{eq_definition_imbalance_vector}.

Now we start to formally develop our framework.
After $n$th allocation, we define the imbalance vector as
\begin{equation}
    \Lambda_n = \sum_{i=1}^n \frac{(T_i-\rho_{\theta_{i-1}}(X_i))\phi(X_i)}{\rho_{\theta_{i-1}}(X_i)(1-\rho_{\theta_{i-1}}(X_i))}
    \label{eq_definition_imbalance_vector} .
\end{equation}
Similarly, we can define the imbalance of the additional covariate $Z$ as
\begin{equation}
    \Psi_n = \sum_{i=1}^n \frac{(T_i-\rho_{\theta_{i-1}}(X_i)) Z_i}{\rho_{\theta_{i-1}}(X_i)(1-\rho_{\theta_{i-1}}(X_i))}
    \label{eq_definition_imbalance_vector_additional} .
\end{equation}
When the additional covariate $Z$ is taken as $\psi(X)$ for some other feature map $\psi$, $\Psi_n$ reduces to the covariate imbalance of $X$ with respect to $\psi$, in contrast to $\Lambda_n$, which is defined with respect to $\phi$.

\begin{remark}
    \label{remark_imbalance_vector_observational_study}
    Note that the definition of $\Lambda_n$ in \eqref{eq_definition_imbalance_vector} is equivalent to
    \begin{equation*}
        \Lambda_n
        = \sum_{i=1}^n \frac{T_i \phi(X_i)}{\rho_{\theta_{i-1}}(X_i)}
        - \sum_{i=1}^n \frac{(1-T_i) \phi(X_i)}{1-\rho_{\theta_{i-1}}(X_i)} .
    \end{equation*}
    Therefore, controlling $\Lambda_n$ implies that covariate means weighted by the inverse propensity score are close between treatment groups.
    This aligns with the goal in Imai and Ratkovic of achieving mean independence between the treatment and covariates after inverse propensity score weighting \cite{imaiCovariateBalancingPropensity2014}.
    The key difference between their approach and ours is that they control $\Lambda_n$ by computing appropriate propensity scores, whereas we control it at the design stage through adaptive treatment allocation.
\end{remark}

\begin{remark}
    \label{remark_imbalance_vector_CAR_connection}
    If the targeted allocation ratio $\rho_\theta(x)$ is fixed at a constant $\rho$ for all $x$ and all allocation parameter $\theta$, the imbalance vector $\Lambda_n$ reduces to a constant multiple of $\sum_{i=1}^n (T_i-\rho)\phi(X_i)$, which is a commonly used imbalance vector in the CAR literature \cite{maNewUnifiedFamily2024,bugniInferenceCovariateadaptiveRandomization2018}.
    However, when the targeted allocation ratio is allowed to vary across allocation steps and covariate values, the imbalance vector form $\sum_{i=1}^n (T_i-\rho)\phi(X_i)$ and the form in \eqref{eq_definition_imbalance_vector} are no longer equivalent.
\end{remark}

In the framework of the CBARA procedure, we consider the data-adaptive allocation mechanisms such that the allocation of $(n+1)$th unit depends on the history $\mathcal{F}_n$ and the information of the unit through the allocation parameter $\theta_n$, the imbalance vector $\Lambda_n$ and the covariate vector $X_{n+1}$.
Accordingly, for $(n+1)$th unit, the conditional probability of treatment assignment $T_{n+1}$ is given by
\begin{align}
    P(T_{n+1} = 1 \mid \mathcal{F}_n, X_{n+1}, \bm{Y}_{n+1}, Z_{n+1})
    &= P(T_{n+1} = 1 \mid \mathcal{F}_n, X_{n+1}) = g_{\theta_n}(\Lambda_n, X_{n+1}) \label{eq_T_conditional_1} \\
    P(T_{n+1} = 0 \mid \mathcal{F}_n, X_{n+1}, \bm{Y}_{n+1}, Z_{n+1})
    &= 1-P(T_{n+1} = 1 \mid \mathcal{F}_n, X_{n+1}) . \label{eq_T_conditional_0}
\end{align}

Occasionally, we adopt the following shorthand notation for clarity:
\begin{align*}
    g_\theta(1 \mid \Lambda, X) &:= g_\theta(\Lambda, X) , &
    g_\theta(0 \mid \Lambda, X) &:= 1 - g_\theta(\Lambda, X) , \\
    \rho_\theta(1 \mid X) &:= \rho_\theta(X) , &
    \rho_\theta(0 \mid X) &:= 1 - \rho_\theta(X) .
\end{align*}

In \eqref{eq_T_conditional_1} and \eqref{eq_T_conditional_0}, the function $g_\theta(\Lambda, X)$ is also referred to as the allocation function.
It fluctuates around the targeted allocation ratio $\rho_\theta(X)$ to control the imbalance vector $\Lambda_n$.
We now state the assumption regarding the treatment assignment and the ranges of the allocation function and targeted allocation ratio.
\begin{assumption}
    \label{assumption_sampling}
    The conditional probability of the treatment assignment $T_n$ satisfies \eqref{eq_T_conditional_1} and \eqref{eq_T_conditional_0}.
    In addition, there exists some $\iota > 0$ such that the allocation function $g_\theta(\Lambda, X) \in [\iota, 1 - \iota]$, and similarly $\rho_\theta(X) \in [\iota_\rho, 1 - \iota_\rho]$ for some $\iota_\rho \in (\iota, \frac{1}{2})$.
\end{assumption}
Throughout this article, we always assume that Assumptions \ref{assumption_iid}--\ref{assumption_sampling} hold.

\section{General CBARA Procedure}
\label{sec_general_CARA_procedure}

We next provide a detailed description of the CBARA procedure.
It consists of three components: an allocation function for computing the allocation probabilities of units (see Subsection \ref{subsec_allocation_function_procedure}), an estimation method of the model parameter (see Subsection \ref{subsec_estimation_parameter_procedure}), and one of two alternative allocation parameter update mechanisms (see Subsection \ref{subsec_parameter_update_procedure}).
In the absence of response delay, the CBARA procedure first obtains a sequential estimate $\eta_n$ of the model parameter $\eta^*$.
It then transforms this estimate into the allocation parameter $\theta_n$.
Rather than $\eta_n$, the allocation parameter $\theta_n$ determines both the target allocation ratio $\rho_{\theta_n}$ and the allocation function $g_{\theta_n}$ for the $(n+1)$th allocation.

\begin{remark}
    \label{remark_model_parameter}
    
    In existing CARA literature for continuous covariates, it is typically assumed that the model parameter is an element of the parameter space $\Theta$ that indexes the statistical model and characterizes the distribution of $(X, \bm{Y})$ \cite{zhangAsymptoticPropertiesCovariateadjusted2007,cheungCovariateadjustedResponseadaptiveDesigns2014,zhuCovariateadjustedResponseAdaptive2015}.
    The model is often assumed to be a generalized linear model.
    More generally, we define the model parameter $\eta^*$ as the unique maximizer of the population-level criterion in \eqref{eq_population_maximizing_point}.
\end{remark}

If the responses in each treatment group are available without delay, the CBARA procedure proceeds as follows:
\begin{algorithm}[H]
\caption{CBARA Procedure}
\begin{algorithmic}[1]
    \Require Initial allocation parameter $\theta_0$ and imbalance vector $\Lambda_0 = 0$.
    \Ensure Allocation assignments for all experimental units.

    \State \textbf{Initialization:} Set $\theta_0$ and $\Lambda_0 \gets 0$.
    Choose an update mechanism according to \eqref{eq_allocation_parameter_update_mechanism_1} or \eqref{eq_allocation_parameter_update_mechanism_2}.
    
    \For{$n = 0, 1, \dots, N-1$} \Comment{$N$ is the total number of units}
        \State Observe the $(n+1)$th unit with covariates $X_{n+1}$.
        
        \If{$n > 0$}
            \State Collect the responses $\{Y_i(T_i)\}_{i=1:n}$ of the first $n$ units.

            \State Estimate the model parameter $\eta^*$ using \eqref{eq_eta_estimation} based on the data $\{(X_i, Y_i(T_i), T_i)\}_{i=1:n}$, and denote the estimate by $\eta_n$.
            
            \State Update the allocation parameter $\theta_n$ using the pre-selected update mechanism based on the estimate sequence $\{\eta_i\}_{i=1:n}$ and the past allocation parameter sequence $\{\theta_i\}_{i=0:(n-1)}$.
        \EndIf
        
        \State Assign the $(n+1)$th unit to the treatment group with probability $g_{\theta_n}(\Lambda_n, X_{n+1})$ (see \eqref{eq_CARA_allocation_function}).
        Denote the assignment indicator by $T_{n+1} \in \{0,1\}$.
        
        \State Calculate the imbalance vector $\Lambda_{n+1} \gets \Lambda_n + \frac{(T_{n+1}-\rho_{\theta_n}(X_{n+1}))\phi(X_{n+1})}{\rho_{\theta_n}(X_{n+1})(1-\rho_{\theta_n}(X_{n+1}))}$.
    \EndFor
    
    \State \Return allocation assignments $\{T_i\}_{i=1}^N$
\end{algorithmic}
\end{algorithm}

In the presence of a delay, the only modification required in the CBARA procedure is in the estimation of $\eta_n$.
The estimation can be performed using only the units with observed responses.

\subsection{Allocation Function}

\label{subsec_allocation_function_procedure}

The allocation function, which represents the allocation mechanism in the CBARA procedure, is defined by
\begin{equation}
    g_\theta(\Lambda, X) = \rho_\theta(X) - \frac{p_\theta \phi(X)^T \Lambda}{\max\{\|\phi(X)\|/[\rho_\theta(X)(1-\rho_\theta(X))],C_\theta\} \max\{\|\Lambda\|,C_\Lambda\}} \label{eq_CARA_allocation_function} ,
\end{equation}
where $\| \cdot \|$ is the $\ell_2$-norm of the vector, $\rho_\theta: \mathbb{R}^{d_x} \to [\iota_\rho, 1-\iota_\rho]$ denotes the targeted allocation ratio, $\Lambda$ is the current imbalance vector, $\phi: \mathbb{R}^{d_x} \to \mathbb{R}^d$ is a feature mapping of the covariates, $p_\theta$, $C_\Lambda$ and $C_\theta>0$ are tuning constants.
Moreover, we assume that $p_\theta$ and $1/C_\theta$ are uniformly bounded away from zero over $\theta \in \Theta$.

The construction of the allocation function is somewhat involved.
To provide intuition, note that
\begin{align*}
    & \quad \frac{\phi(X)^T \Lambda}{\max\{\|\phi(X)\|/[\rho_\theta(X)(1-\rho_\theta(X))],C_\theta\} \max\{\|\Lambda\|,C_\Lambda\}} \\
    &= \rho_\theta(X)(1-\rho_\theta(X))
    \left[ \frac{\phi(X)/[\rho_\theta(X)(1-\rho_\theta(X))]}{\max\{\|\phi(X)\|/[\rho_\theta(X)(1-\rho_\theta(X))],C_\theta\}} \right]^T
    \frac{\Lambda}{\max\{\|\Lambda\|,C_\Lambda\}} .
\end{align*}
Therefore, the allocation function can be interpreted as adjusting the targeted allocation ratio, where the adjustment magnitude is given by the inner product between the normalized vector $\phi(X)/[\rho_\theta(X)(1-\rho_\theta(X))]$ and the normalized $\Lambda$, scaled by $\rho_\theta(X)(1-\rho_\theta(X))$ and the constant $p_\theta$.
The vector
\begin{equation*}
    \frac{\phi(X)}{\rho_\theta(X)(1-\rho_\theta(X))}
    = \frac{(1-\rho_\theta(X))\phi(X)}{\rho_\theta(X)(1-\rho_\theta(X))}
    - \frac{(0-\rho_\theta(X))\phi(X)}{\rho_\theta(X)(1-\rho_\theta(X))}
\end{equation*}
represents the difference between the increment vectors of $\Lambda$ under $T=1$ and $T=0$ at each allocation.
This inner product formulation ensures that each allocation tends to produce an increment of $\Lambda$ in a direction opposite to $\Lambda$ itself.
The quantity $\rho_\theta(X)(1-\rho_\theta(X))$ represents the adjustment weight associated with different values of $X$.
Its inclusion is necessary, because it influences the asymptotic behavior, which Remark \ref{remark_allocation_function_unequal_form} will demonstrate.
The constant $p_\theta$ controls the magnitude of this adjustment uniformly across all $X$.
Setting $p_\theta = 1$ is already sufficient for any $\theta$.
Larger values of $p_\theta$ are also permissible, provided that $g_\theta$ remains within the interval $[0,1]$.
The constants $C_\Lambda$ and $C_\theta > 0$ are introduced to avoid division by very small values and to ensure continuity during normalization.
The former is independent of $\theta$.

\begin{remark}
    When the targeted allocation ratio $\rho_\theta$ is fixed at a constant value, the CBARA procedure reduces to a CAR procedure.
    Under the conditions $\rho_\theta \equiv \rho$ for some constant $\rho \in (0,1)$, and $\|\phi(X)\| \leq C \rho(1-\rho)$ for some constant $C > 0$, taking $C_\theta = C$ and $p_\theta = 1$ yields
    \begin{equation*}
        g_\theta(\Lambda,X)
        = \rho - \frac{\phi(X)^T \Lambda}{C\max\{\|\Lambda\|,C_\Lambda\}}.
    \end{equation*}
    Hence, the allocation reduces to a constant targeted allocation ratio $\rho$ plus a bounded linear adjustment.
\end{remark}

\begin{remark}
    \label{remark_allocation_function_unequal_form}

This allocation function originates from the solution to the shift problem in the CAR setting.
    The shift problem is that, with a fixed unequal allocation ratio $\rho$, the imbalance of the additional covariate may no longer be centered at $0$ \cite{liuPropertiesCovariateadaptiveRandomization2025,fangGeneralNonMarkovianFramework2026}.
    To address this issue, Fang and Ma propose a new form of allocation function \cite{fangGeneralNonMarkovianFramework2026}
    \begin{equation*}
        g_\theta(\Lambda, x) = \rho + \alpha(x)^T \beta(\Lambda) ,
    \end{equation*}
    where $\alpha: \mathbb{R}^{d_x} \to \mathbb{R}^d$ and $\beta: \mathbb{R}^{d_x} \to \mathbb{R}^d$ are two functions that satisfy certain conditions to balance $\phi(X) = X$.
    Similarly, for the CBARA case with a flexible arbitrary targeted allocation ratio, the allocation function should have the form
    \begin{equation*}
        g_\theta(\Lambda, x) = \rho_\theta(x) + \alpha_\theta(x)^T \beta_\theta(\Lambda) .
    \end{equation*}
    The choice of the function $\alpha_\theta$ affects the asymptotic variance, as shown in Theorems \ref{theorem_additional_covariate_CLT} and \ref{theorem_estimation_asymptotic_normality}.
The details of how the function $\alpha_\theta$ influences the variance are provided in the proof of Lemma F.2 in \cite{fangSupplementCBARACovariateBalancedandAdjusted2026}.
    The choice of the function $\beta_\theta$ affects the boundedness in probability of the imbalance vector $\Lambda_n$.
    In light of these considerations, we adopt \eqref{eq_CARA_allocation_function} as the allocation function in our CBARA procedure.
\end{remark}

\subsection{Model Parameter Estimation}

\label{subsec_estimation_parameter_procedure}

Suppose that the allocation parameter $\theta$ and the estimate of the model parameter $\eta$ share a common parameter space $\Theta$.
Let $\eta^* \in \Theta$ denote the oracle model parameter associated with the joint distribution of $(X, \bm{Y})$.
Specifically, we assume that
\begin{assumption}
    \label{assumption_population_maximizing_point}
    The model parameter $\eta^*$ is defined as the unique maximizer of the expected criterion,
    \begin{align}
        \eta^*
        &= \argmax_{\eta \in \Theta}
        \mathbb{E}_{(X,Y(1),Y(0)) \sim \Gamma_{X,\bm{Y}}} \label{eq_population_maximizing_point} \\
        & \quad\quad \left[ \rho^{\mathrm{ref}}(X) m_\eta(X, Y(1), 1)
        + [1 - \rho^{\mathrm{ref}}(X)] m_\eta(X, Y(0), 0) \right] \notag ,
    \end{align}
    where $\rho^{\mathrm{ref}}: \mathbb{R}^{d_x} \to [0,1]$ is a function specifying the reference allocation ratio for the treatment group.
\end{assumption}
Here, $m_\eta$ can be any known function used as the criterion in an M-estimator \cite{vaartAsymptoticStatistics2007,zhangStatisticalInferenceMEstimators2021}.
The definition in \eqref{eq_population_maximizing_point} allows the parameter $\eta^*$ to be defined in a manner that accounts for differing contributions from the treatment and control groups, as weighted by $\rho^{\mathrm{ref}}$.
A common choice for $\rho^{\mathrm{ref}}$ is $\rho^{\mathrm{ref}} \equiv 1/2$, in which case the treatment and control groups are weighted equally \cite{chambazTargetedSequentialDesign2017}.
Throughout this article, we always assume that Assumption \ref{assumption_population_maximizing_point} holds.
\begin{example}
    Under the generalized linear model, for a given covariate $x$, the response $Y(t)$ under treatment $T = t$ is assumed to follow a distribution in the exponential family.
    Specifically, for any $t \in \{0,1\}$, the conditional density is given by
    \begin{equation*}
        f_t \left( y_t \mid x, \beta_t, \beta \right)
        = \exp \left\{ \left( y_t \mu_t - a_t\left(\mu_t\right)\right) / \phi_t
        + b_t\left(y_t, \phi_t\right) \right\}
    \end{equation*}
    with an inverse link function $\mu_t=h_t\left( x^T \left(\beta_t^T, \beta^T\right)^T \right)$, where $\beta_t$ are group-specific coefficients and $\beta$ is a common coefficient shared across two groups.
    Here, $x^T \left(\beta_t^T, \beta^T\right)^T$ is the inner product of the covariate vector $x$ and the coefficient vector $(\beta_t, \beta)$.
    Assuming that the scale parameter $\phi_t$ is fixed, and defining $\eta = (\beta_t, \beta)$, the function $m_\eta$ can be written as the log-likelihood function
    \begin{equation*}
        m_\eta(x, y, t)
        = \ln \left[ f_t \left( y \mid x, \beta_t, \beta \right) \right]
        = \left( y_t \mu_t - a_t\left(\mu_t\right)\right) / \phi_t
        + b_t\left(y_t, \phi_t\right) .
    \end{equation*}
    Accordingly, under the generalized linear model, $\eta$ includes both the common parameters $\beta$ shared across treatment groups and the treatment-specific parameters $\beta_t$ capturing group-level differences.
    This generalized linear model setup is similar to that in \cite{zhuCovariateadjustedResponseAdaptive2015,cheungCovariateadjustedResponseadaptiveDesigns2014}.
    However, it differs from \cite{zhangAsymptoticPropertiesCovariateadjusted2007,zhangNewFamilyCovariateadjusted2009}, which do not allow for common parameters shared across treatment groups.
\end{example}

Once the responses of the first $n$ units have been observed, the corresponding estimator of $\eta^*$ is defined as
\begin{equation}
    \eta_n \in \argmax_{\eta \in \Theta}
    \sum_{i=1}^n
    \left[ \frac{\rho^{\mathrm{ref}}(T_i \mid X_i)}{\rho_{\theta_{i-1}}(T_i \mid X_i)} m_\eta(X_i, Y_i(T_i), T_i)
    \right] \label{eq_eta_estimation} ,
\end{equation}
which serves as the sample analog of \eqref{eq_population_maximizing_point}.
Here, the reference allocation ratio $\rho^{\mathrm{ref}}$ is the same as that in Assumption \ref{assumption_population_maximizing_point}.
This approach follows the inverse propensity score weighted (IPW) methodology as described in \cite{chambazTargetedSequentialDesign2017}.
However, the denominator, which represents the propensity score, is taken as the targeted allocation ratio rather than the actual allocation probability.

\subsection{Allocation Parameter Update Mechanism}

\label{subsec_parameter_update_procedure}

Let the oracle allocation parameter be $\theta^* = \eta^*$.
Instead of directly setting $\theta_n = \eta_n$, we consider two mechanisms for updating the allocation parameter $\theta_n$ before the $(n+1)$th allocation, based on the estimate sequence $\{ \eta_i \}_{i=1:n}$ and the historical allocation parameter sequence $\{\theta_i\}_{i=0:(n-1)}$.
The initial value $\theta_0$ can be chosen arbitrarily, and in particular it may be chosen such that $\rho_{\theta_0} \equiv 1/2$.
\begin{enumerate}
    \item \textbf{Increasingly Rare Update Mechanism.}
    The allocation parameter $\theta_n$ is updated as
    \begin{equation}
        \theta_n =
        \begin{cases}
            \theta_{n-1} , & \text{if } n \notin S , \\
            \eta_n , & \text{if } n \in S ,
        \end{cases}
        \label{eq_allocation_parameter_update_mechanism_1}
    \end{equation}
    where the infinite set $S \subset \mathbb{N}^*$ satisfies $\frac{\#(S \cap \{1, \dots, n\})}{n} \rightarrow 0$ as $n \rightarrow \infty$.
    This mechanism is inspired by Adapted Increasingly Rarely Markov chain Monte Carlo (AirMCMC) \cite{chimisovAirMarkovChain2018}.
    This mechanism guarantees that the allocation parameter updates occur increasingly rarely as the CBARA procedure proceeds.
    \item \textbf{Clipped Update Mechanism.}
    The allocation parameter $\theta_n$ is updated as
    \begin{equation}
        \theta_n =
        \theta_{n-1} + \min \left\{ \|\eta_n - \theta_{n-1}\|, C_{\mathrm{clip}, n} \right\}
        \cdot \frac{\eta_n - \theta_{n-1}}{\|\eta_n - \theta_{n-1}\|} , \label{eq_allocation_parameter_update_mechanism_2}
    \end{equation}
    where the constant sequence $\{ C_{\mathrm{clip}, n} \}$ satisfies $C_{\mathrm{clip}, n} \rightarrow 0$ as $n \rightarrow \infty$.
    This mechanism is similar to the trust-region Newton-CG method \cite{nocedalNumericalOptimization2006}, as it restricts the magnitude of each allocation parameter update to be small and vanishing.
\end{enumerate}

\section{Properties}
\label{sec_properties}

Let the parameter space $\Theta$ be endowed with the $\ell_2$ norm $\|\cdot\|$.
We impose the following assumption on $\Theta$:
\begin{assumption}
    \label{assumption_theta_compact}

    The parameter space $\Theta$ is a compact subset of a Euclidean space equipped with the $\ell_2$ norm.
\end{assumption}
The imbalance vector is also measured under the $\ell_2$ norm.
In addition, we assume that the targeted allocation ratio $\rho_\theta(\cdot)$ and the allocation function $g_\theta(\Lambda, \cdot)$ are Lipschitz continuous with respect to the parameter $\theta \in \Theta$ and $\Lambda \in \mathbb{R}^d$.
\begin{assumption}
    \label{assumption_lipschitz_continuity_functions_kernels}

    There exist constants $L_g, L_\rho > 0$ such that for any $\theta, \theta^\prime \in \Theta$ and any $\Lambda, \Lambda^\prime$,
    \begin{equation*}
        \left\| g_{\theta}(\Lambda,\cdot) - g_{\theta^\prime}(\Lambda^\prime,\cdot) \right\|_{L^2(\Gamma)}
        \leq L_g \|\theta - \theta^\prime\| + L_g \|\Lambda - \Lambda^\prime\|
        \quad \text{and} \quad
        \left\| \rho_{\theta} - \rho_{\theta^\prime} \right\|_{L^2(\Gamma)}
        \leq L_\rho \|\theta - \theta^\prime\| .
    \end{equation*}
\end{assumption}
Under the allocation function form in \eqref{eq_CARA_allocation_function}, a sufficient condition for Assumption \ref{assumption_lipschitz_continuity_functions_kernels} is that $\phi(X)$ has a finite first moment and there exist constants $L_C, L_\rho > 0$ such that for any $\theta, \theta^\prime \in \Theta$,
\begin{equation*}
    \max \left\{ \left| C_\theta^{-1} - C_{\theta^\prime}^{-1} \right|, \left| p_\theta - p_{\theta^\prime} \right| \right\}
    \leq L_C \|\theta - \theta^\prime\|
    \quad \text{and} \quad
    \left\| \rho_{\theta} - \rho_{\theta^\prime} \right\|_{L^2(\Gamma)}
    \leq L_\rho \|\theta - \theta^\prime\| .
\end{equation*}

\subsection{Allocation Function}

\label{subsec_allocation_function}

Let $W_\phi$ denote the linear subspace spanned by the support of the distribution of $\phi(X)$.
This subspace serves as the state space of $\{\Lambda_n\}_{n \in \mathbb{N}}$.
Note that $W_\phi$ may be a proper subspace of $\mathbb{R}^d$ when the components of the vector $\phi(X)$ are not linearly independent.
Under the following assumption alone, the CBARA procedure achieves covariate balance in the sense that $\Lambda_n = O_P(1)$.
\begin{assumption}
    \label{assumption_x_sub_exponential_bound}
    For some $\lambda > 0$, $\mathbb{E} \left[ \exp(\lambda \|\phi(X)\|) \right]  = C< \infty$.
    Equivalently, $\phi(X)$ is sub-exponential.
\end{assumption}

\begin{theorem}
    \label{theorem_boundedness}
    Suppose that Assumption \ref{assumption_x_sub_exponential_bound} holds.
    Then the stochastic process $\{\Lambda_n\}$ is bounded in probability, that is, $\Lambda_n = O_P(1)$.
\end{theorem}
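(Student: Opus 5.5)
A drift argument with an exponential Lyapunov function $V(\Lambda)=\exp(\lambda'\|\Lambda\|)$, for a small $\lambda'\in(0,\lambda)$, should do this. The aim is a uniform one-step drift inequality
\begin{equation*}
\mathbb{E}[V(\Lambda_{n+1})\mid\mathcal{F}_n]\le (1-\epsilon)V(\Lambda_n)+b,
\end{equation*}
holding for all $n$ and uniformly in the (random, history-dependent) allocation parameter $\theta_n\in\Theta$. Iterating gives $\sup_n\mathbb{E}V(\Lambda_n)\le V(0)+b/\epsilon<\infty$, and Markov's inequality yields $\Lambda_n=O_P(1)$. Because the bound is uniform over $\theta$, the non-Markovian updating of $\theta_n$ does not matter here; only Assumptions \ref{assumption_filtration} and \ref{assumption_sampling} are used to condition on $\mathcal{F}_n$ with $X_{n+1}$ independent.

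Write the increment as $D_{n+1}=(T_{n+1}-\rho)\phi(X)/[\rho(1-\rho)]$ with $\rho=\rho_{\theta_n}(X_{n+1})$ and $X=X_{n+1}$. Then $\|D_{n+1}\|\le\|\phi(X)\|/\iota_\rho^2$, which is sub-exponential by Assumption \ref{assumption_x_sub_exponential_bound}. For $\|\Lambda_n\|\ge C_\Lambda$, set $u=\Lambda_n/\|\Lambda_n\|$. Since $\mathbb{E}[T_{n+1}-\rho\mid\mathcal{F}_n,X]=g_{\theta_n}-\rho$, the conditional drift in direction $u$ is
\begin{equation*}
\mathbb{E}[u^TD_{n+1}\mid\mathcal{F}_n]=-p_\theta\,\mathbb{E}\left[\frac{(\phi(X)^Tu)^2}{\rho(1-\rho)\max\{\|\phi(X)\|/[\rho(1-\rho)],C_\theta\}}\right].
\end{equation*}
The key step is to bound this above by $-\kappa<0$ uniformly in $\theta$ and in unit $u\in W_\phi$. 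I would use $\rho(1-\rho)\in[\iota_\rho(1-\iota_\rho),1/4]$, together with $p_\theta$ and $1/C_\theta$ being uniformly bounded away from zero. After these bounds it suffices that $\inf_{\|u\|=1,u\in W_\phi}\mathbb{E}[(\phi^Tu)^2/\max\{c\|\phi\|,C\}]>0$. This holds because the map is continuous on the compact unit sphere of $W_\phi$, and it is positive since $\phi(X)^Tu\neq0$ with positive probability by the definition of $W_\phi$; $\Lambda_n$ always lies in $W_\phi$. Keeping $C_\theta$ bounded above matters here, so I would use that assumption explicitly.

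Next, a Taylor expansion controls $\|\Lambda+D\|-\|\Lambda\|\le u^TD+\|D\|^2/(2\|\Lambda\|)$ for large $\|\Lambda\|$, which gives
\begin{equation*}
\mathbb{E}\left[e^{\lambda'(\|\Lambda_{n+1}\|-\|\Lambda_n\|)}\,\middle|\,\mathcal{F}_n\right]\le 1-\lambda'\kappa+\lambda'^2K+\lambda' K/\|\Lambda_n\|.
\end{equation*}
The second-order remainder is bounded using $e^{x}\le1+x+x^2e^{|x|}/2$ and sub-exponential moments with $\lambda'\le\lambda\iota_\rho^2/2$. Choosing $\lambda'$ small and then $R$ large makes this at most $1-\epsilon$ on $\{\|\Lambda_n\|\ge R\}$. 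On $\{\|\Lambda_n\|<R\}$, the bound $\|\Lambda_{n+1}\|\le R+\|D\|$ gives $\mathbb{E}[V(\Lambda_{n+1})\mid\mathcal{F}_n]\le e^{\lambda'R}C'=:b$, and this completes the drift inequality.

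The main obstacle is the uniform negative drift. The expectation of the Taylor remainder near small $\|\Lambda\|$ and the heavy-ish tails are routine with the sub-exponential moment. Uniformity over the random $\theta_n$ is what prevents a direct citation of CAR results, and the compactness-of-the-sphere argument is where I expect care to be needed.
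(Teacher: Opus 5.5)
Your proposal is correct and follows essentially the same route as the paper: a $\theta$-uniform negative drift in the direction $\Lambda/\|\Lambda\|$ for $\|\Lambda\|\ge C_\Lambda$, obtained from compactness of the unit sphere in $W_\phi$ (Lemma \ref{lemma_all_negative_feedback}); linearization of $e^{\lambda_1(\|\Lambda_1\|-\|\Lambda_0\|)}$ to get the uniform drift inequality $P_\theta V\le\beta V+b$ (Lemma \ref{lemma_drift_condition_inequality}); and then iteration through $\mathbb{E}[h(\Lambda_{n+1})\mid\mathcal{F}_n]=P_{\theta_n}(\Lambda_n,h)$ together with Markov's inequality (Lemma \ref{lemma_convergence_V_as}). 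Your global bound $\|\Lambda+D\|-\|\Lambda\|\le u^TD+\|D\|^2/(2\|\Lambda\|)$, combined with $e^x\le 1+x+\tfrac{x^2}{2}e^{|x|}$, handles the remainder somewhat more cleanly than the paper's truncation of the increment at $M_2$ with $\lambda_1=M_2^{-3}$, but the argument is the same in substance.
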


\begin{remark}
Existing analyses that establish $\Lambda_n = O_P(1)$ for CAR procedures generally rely on the Markov property and ergodicity of the process $\{\Lambda_n\}$ \cite{huAsymptoticPropertiesCovariateadaptive2012,huTheoryCovariateadaptiveDesigns2020,maNewUnifiedFamily2024,zhangAsymptoticPropertiesMultitreatment2023,huMultiArmCovariateAdaptiveRandomization2023,yangSequentialCovariateadjustedRandomization2024,liuPropertiesCovariateadaptiveRandomization2025}.
    To the best of our knowledge, theoretical results on covariate balance for the CARA procedure have not yet been established.
    This is partly because the Markov property and the ergodicity condition are violated, rendering the classical Markov chain techniques inapplicable.
    Instead, we establish the inequality $\mathbb{E} \left[ e^{\lambda_1 \|\Lambda_{n+1}\|} \mid \mathcal{F}_n \right] \leq \beta e^{\lambda_1 \|\Lambda_n\|} + b$ for some positive constants $\beta < 1$, $b$, and $\lambda_1$; see Subsection \ref{subsec_proof_strategy_boundedness} for the proof strategy.
    This inequality implies that $\sup_{n \geq 0} \mathbb{E} \left[ e^{\lambda_1 \|\Lambda_n\|} \right] < \infty$, and hence $\Lambda_n = O_P(1)$.
\end{remark}

Next, we consider the balance of the additional covariate, which we formalize by deriving the asymptotic distribution of $\Psi_n$.
To achieve this, we introduce an assumption that ensures the small set condition of the transition kernel $P_\theta$ on the state space $W_\phi$, where $P_\theta$ denotes the transition kernel of the Markov chain $\{\Lambda_n\}$ when the allocation parameter $\theta$ is fixed.
The definition of the small set condition can be found in \cite{meynMarkovChainsStochastic2009}.
A formal definition of $P_\theta$ is provided in Subsection \ref{subsec_connection}.
\begin{assumption}
    \label{assumption_for_simultaneous_small_set_condition}

For any possible parameter value $\theta_* \in \Theta$, there exist some neighborhood $B_{\theta_*}$, $s$ points $x^{(1)}, \dots, x^{(s)} \in \mathbb{R}^{d_x}$, and $\epsilon, r_x>0$ such that
    \begin{enumerate}
        \item \label{item_assumption_for_simultaneous_small_set_condition_1} (Mass condition) $\Gamma \geq \epsilon \mu_{\mathrm{leb}, B(x^{(i)}, r_x)}$ for any $i \in \{1, \dots, s\}$, where the collection of balls $\{ B(x^{(i)}, r_x) \}_{i=1:s}$ are pairwise disjoint.
        \item \label{item_assumption_for_simultaneous_small_set_condition_2} (Regularity) For any $\theta \in B_{\theta_*}$ and $i \in \{1, \dots, s\}$, the functions $\phi$ and $\rho_\theta$ are continuously differentiable on $B(x^{(i)}, r_x)$, and there exists some $L>0$ (independent of $i,\theta$) such that $\phi/(1-\rho_\theta)$, $\phi/\rho_\theta$, $D [\phi/(1-\rho_\theta)]$ and $D (\phi/\rho_\theta)$ are $L$-Lipschitz continuous on $B(x^{(i)}, r_x)$.
        \item \label{item_assumption_for_simultaneous_small_set_condition_3} (Non-degeneracy) There exists some $C_\sigma > 0$ such that for any $\theta \in B_{\theta_*}$ and $i \in \{1, \dots, s\}$, the $d$th singular value satisfies
        \begin{equation*}
            \sigma_d(D\Phi_\theta(x^{(1)}, \dots, x^{(s)}, x^{(1)}, \dots, x^{(s)})) \geq C_\sigma ,
        \end{equation*}
        where
        \begin{align*}
            & \quad D \Phi_\theta(x_1, \dots, x_{2s}) \\
            &= \left(
                D(\phi/\rho_\theta)(x_1), \dots, D(\phi/\rho_\theta)(x_s),
                -D[\phi/(1-\rho_\theta)](x_{s+1}), \dots, -D[\phi/(1-\rho_\theta)](x_{2s})
            \right)
        \end{align*}
        is the differentiation of the function
        \begin{align*}
            \Phi_\theta :
            & \mathbb{R}^{2s d_x} \to \mathbb{R}^d , \\
            & (x_1, \dots, x_{2s}) \mapsto 
            \Phi_\theta(x_1, \dots, x_{2s})
            = \sum_{i=1}^s \left\{ \phi(x_i)/\rho_\theta(x_i)
            - \phi(x_{i+s})/\left[ 1-\rho_\theta(x_{i+s}) \right]
            \right\} .
        \end{align*}
    \end{enumerate}
\end{assumption}
\begin{remark}
    This assumption applies only when $W_\phi = \mathbb{R}^d$.
    After an appropriate linear transformation, the theoretical result in this section also extends to the case where $W_\phi$ is a proper subspace of $\mathbb{R}^d$, through the representation $\phi = A \phi_A$,
    where $\phi_A$ has full column rank equal to the dimension of $W_\phi$.
    However, the assumption does not hold when both $\rho_\theta$ and $\phi$ are discrete valued functions.
    In such cases, regardless of how a linear transformation is applied, the small set condition may fail, and consequently, classical Markov chain theory is no longer applicable.
\end{remark}

Under Assumption \ref{assumption_for_simultaneous_small_set_condition} on the density of $\Gamma$ and the smoothness of the functions $\phi$ and $\rho_\theta$, the following two theorems establish the law of large numbers and the central limit theorem for $\Psi_n$.
They respectively characterize weaker and stronger balance properties of the additional covariate.
Although the following two theorems only consider the case where $Z$ is one-dimensional, the extension to the multivariate case is straightforward.
\begin{theorem}
    \label{theorem_additional_covariate_LLN}

    Let the additional covariate $Z$ be one-dimensional.
    Suppose that Assumptions \ref{assumption_theta_compact}, \ref{assumption_lipschitz_continuity_functions_kernels}, \ref{assumption_x_sub_exponential_bound} and \ref{assumption_for_simultaneous_small_set_condition} hold.
    If the step sizes of the allocation parameter sequence $\{\theta_n\}$ satisfy Assumption \ref{assumption_theta_weak_diminishing_adaptation} and $\mathbb{E} \left[ Z^2 \right] < \infty$, then
    \begin{equation*}
        \frac{\Psi_N}{N}
        = \frac{1}{N} \sum_{n=1}^N \frac{(T_n-\rho_{\theta_{n-1}}(X_n)) Z_n}{\rho_{\theta_{n-1}}(X_n)(1-\rho_{\theta_{n-1}}(X_n))}
        \xrightarrow{\mathbb{P}} 0 .
    \end{equation*}
\end{theorem}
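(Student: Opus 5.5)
Write $\xi_n=(T_n-\rho_{\theta_{n-1}}(X_n))Z_n/[\rho_{\theta_{n-1}}(X_n)(1-\rho_{\theta_{n-1}}(X_n))]$. We have $\mathbb{E}[\xi_n\mid\mathcal{F}_{n-1}]=h_{\theta_{n-1}}(\Lambda_{n-1})$, where
\begin{equation*}
    h_\theta(\Lambda)=\mathbb{E}_{X,Z}\left[\frac{(g_\theta(\Lambda,X)-\rho_\theta(X))Z}{\rho_\theta(X)(1-\rho_\theta(X))}\right].
\end{equation*}
This follows from Assumption \ref{assumption_filtration} and \eqref{eq_T_conditional_1}. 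We then split $\Psi_N=M_N+\sum_{n=1}^N h_{\theta_{n-1}}(\Lambda_{n-1})$, with $M_N$ a martingale.

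\textbf{Martingale part.} The increments of $M_N$ have conditional second moments bounded by $\iota_\rho^{-4}\mathbb{E}[Z^2]<\infty$, using Assumption \ref{assumption_sampling} and $\mathbb{E}[Z^2]<\infty$. Hence $\mathbb{E}[M_N^2]=O(N)$ and $M_N/N\to0$ in $L^2$.

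\textbf{Drift part.} Set $\bar h_\theta=\pi_\theta(h_\theta)$, where $\pi_\theta$ is the invariant probability of $P_\theta$. Assumption \ref{assumption_for_simultaneous_small_set_condition} gives a small set condition locally uniformly in $\theta$, and by compactness (Assumption \ref{assumption_theta_compact}) uniformly over $\Theta$. The exponential drift inequality from the proof of Theorem \ref{theorem_boundedness} then yields uniform $V$-geometric ergodicity of $P_\theta$ with $V(\Lambda)=e^{\lambda_1\|\Lambda\|}$.

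First we show $\bar h_\theta=0$ for every $\theta$. Under $\pi_\theta$ the chain is stationary and $\Lambda$ has a finite first moment, so $\pi_\theta$-expected increments vanish:
\begin{equation*}
    \mathbb{E}_{\pi_\theta}\left[\frac{(g_\theta-\rho_\theta)\phi(X)}{\rho_\theta(1-\rho_\theta)}\right]=0.
\end{equation*}
Substituting the form \eqref{eq_CARA_allocation_function}, $g_\theta-\rho_\theta=-c_\theta(X)\phi(X)^T\beta(\Lambda)$ with $c_\theta>0$. This identity alone does not force $\mathbb{E}_{\pi_\theta}[\beta(\Lambda)]=0$. However, $\pi_\theta$ is symmetric under $\Lambda\mapsto-\Lambda$: the transformation $T\mapsto 1-T$ is not exact when $\rho_\theta\ne1/2$, but the linear-in-$\beta$, odd structure of $g_\theta-\rho_\theta$ makes the first moment of $\beta(\Lambda)$ vanish. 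The precise symmetry argument has to be checked. Granting it, $h_\theta(\Lambda)=-\mathbb{E}[c_\theta(X)Z\phi(X)^T]\beta(\Lambda)$ is linear in $\beta(\Lambda)$, so $\bar h_\theta=0$. This is exactly the shift-problem fix of Remark \ref{remark_allocation_function_unequal_form}.

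Next, for each $\theta$ we solve the Poisson equation $\hat h_\theta-P_\theta\hat h_\theta=h_\theta$, with $|\hat h_\theta|\le CV$ uniformly in $\theta$. This gives
\begin{equation*}
    \sum_n h_{\theta_{n-1}}(\Lambda_{n-1})=\sum_n\bigl[\hat h_{\theta_{n-1}}(\Lambda_{n-1})-P_{\theta_{n-1}}\hat h_{\theta_{n-1}}(\Lambda_{n-1})\bigr].
\end{equation*}
Rearranging, this equals a martingale sum $\sum_n[\hat h_{\theta_{n-1}}(\Lambda_n)-P_{\theta_{n-1}}\hat h_{\theta_{n-1}}(\Lambda_{n-1})]$, plus boundary terms $O(V(\Lambda_0)+V(\Lambda_N))$, plus remainders $\sum_n[\hat h_{\theta_n}(\Lambda_n)-\hat h_{\theta_{n-1}}(\Lambda_n)]$.

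The martingale sum is $o_P(N)$ because $\sup_n\mathbb{E}V^2(\Lambda_n)<\infty$; for this we shrink $\lambda_1$ in Theorem \ref{theorem_boundedness}. The boundary terms are $O_P(1)$.

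\textbf{Main obstacle.} The remainder requires the continuity estimate $|\hat h_\theta-\hat h_{\theta'}|(\Lambda)\le C\|\theta-\theta'\|V(\Lambda)$. Once it holds, the remainder is bounded by $C\sum_n\|\theta_n-\theta_{n-1}\|V(\Lambda_n)$. This is $o_P(N)$: split on the event of Assumption \ref{assumption_theta_weak_diminishing_adaptation}, and on its complement truncate $V(\Lambda_n)$ at a large level, using the uniform exponential moments. The difficulty is bounding $\|P_\theta-P_{\theta'}\|$ in $V$-norm. The increment distributions differ in location through $\rho_\theta$, so a total variation bound on $P_\theta(\Lambda,\cdot)$ is not Lipschitz in $\theta$. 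I would therefore use a kernel discrepancy that couples the increments: it measures $\|g_\theta-g_{\theta'}\|_{L^2(\Gamma)}$ and the displacement $\|\phi/\rho_\theta-\phi/\rho_{\theta'}\|$. Both are Lipschitz by Assumption \ref{assumption_lipschitz_continuity_functions_kernels}. I would then feed this into the resolvent expansion $\hat h_\theta=\sum_k(P^k_\theta h_\theta-\bar h_\theta)$, using the smoothness in Assumption \ref{assumption_for_simultaneous_small_set_condition} so that $P^k_\theta$ is regular in $\Lambda$ and absorbs shifts of the increment.
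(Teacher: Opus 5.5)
Your overall architecture matches the paper's: a martingale part, a Poisson-equation decomposition of the drift part, $\pi_\theta h_\theta=0$, and a coupling-type kernel discrepancy for the remainder. The step $\bar h_\theta=0$, however, has a genuine gap.

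\textbf{The symmetry argument for $\bar h_\theta=0$ is false, and the identity you discarded already closes the step.} The increments are $+\phi(X)/\rho_\theta(X)$ and $-\phi(X)/(1-\rho_\theta(X))$, which have different sizes when $\rho_\theta\neq 1/2$. So reflecting $\Lambda$ does not map $P_\theta$ to itself, and $\pi_\theta$ is in general skewed. For example, take $\phi\equiv1$ and $\rho_\theta\equiv\rho_0\neq1/2$: the second- and third-moment stationarity equations are incompatible with a symmetric $\pi_\theta$. Meanwhile the argument you set aside does work. The incoming $X$ is independent of $\Lambda$, so your stationarity identity reads $S_\theta\,\pi_\theta\beta=0$, where $S_\theta=\mathbb{E}[c_\theta(X)\phi(X)\phi(X)^T/\{\rho_\theta(X)(1-\rho_\theta(X))\}]$. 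This matrix is positive definite on $W_\phi$ (the argument of Lemma~\ref{lemma_all_negative_feedback}), hence $\pi_\theta\beta\in W_\phi^\perp$. Write $h_\theta(\Lambda)=-b_\theta^T\beta(\Lambda)$ with $b_\theta=\mathbb{E}[c_\theta(X)Z\phi(X)/\{\rho_\theta(X)(1-\rho_\theta(X))\}]$; you dropped this factor $1/[\rho_\theta(1-\rho_\theta)]$. Since $b_\theta\in W_\phi$, you get $\pi_\theta h_\theta=-b_\theta^T\pi_\theta\beta=0$. This is exactly Lemma~\ref{lemma_expectation_pi_h}.

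\textbf{Your ``main obstacle'' is left open and aimed at the wrong target.} You ask for a Lipschitz bound $|\hat h_\theta-\hat h_{\theta'}|\le C\|\theta-\theta'\|V$, which the paper neither proves nor needs. You also propose to get regularity in $\Lambda$ by smoothing through $P_\theta^k$ via Assumption~\ref{assumption_for_simultaneous_small_set_condition}. In the paper, that regularity comes instead from $h_\theta$ itself being Lipschitz in $\Lambda$ (Lemma~\ref{lemma_h_lipschitz_continuous_bounded}). It is transported by the coupling of Lemma~\ref{lemma_continuity_transition_kernels}, which is non-expansive up to $L_Pd(\theta,\theta')$ and loses mass at most $L_P[d(\Lambda,\Lambda')+d(\theta,\theta')]$. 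Iterating this (Theorem~\ref{theorem_n_step_coupled_robustly_Lipschitz_continuous}) gives an $n$-step discrepancy of order $n^2d(\theta,\theta')$. Truncating the Poisson series at $n\asymp\log(1/d(\theta,\theta'))$ against geometric ergodicity then yields $V^\kappa[d(\theta,\theta')]^\alpha$ for every $\alpha,\kappa\in(0,1)$ (Corollary~\ref{corollary_V_Poisson_solution_robust_lipschitz_continuous_bounded}). Hölder continuity suffices, because $\frac{1}{N}\sum_n d_n^\alpha\le(\frac{1}{N}\sum_n d_n)^\alpha$.

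\textbf{A simpler route avoids both difficulties.} The linear structure you already found gives an explicit Poisson solution. We have $\mathbb{E}[\Lambda_{n+1}-\Lambda_n\mid\mathcal{F}_n]=-S_{\theta_n}\beta(\Lambda_n)$ and $b_\theta\in W_\phi=\operatorname{range}(S_\theta)$. Let $a_\theta\in W_\phi$ solve $S_\theta a_\theta=b_\theta$. Then $h_{\theta_n}(\Lambda_n)=a_{\theta_n}^T\mathbb{E}[\Lambda_{n+1}-\Lambda_n\mid\mathcal{F}_n]$; this is the linear solution $-a_\theta^T\Lambda$ used in Lemma~\ref{lemma_expression_variance}. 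Summation by parts gives
\begin{equation*}
\sum_{n=0}^{N-1}h_{\theta_n}(\Lambda_n)=a_{\theta_{N-1}}^T\Lambda_N-\sum_{n=1}^{N-1}\bigl(a_{\theta_n}-a_{\theta_{n-1}}\bigr)^T\Lambda_n-\sum_{n=0}^{N-1}a_{\theta_n}^T\varepsilon_{n+1},
\end{equation*}
where $\varepsilon_{n+1}=\Lambda_{n+1}-\mathbb{E}[\Lambda_{n+1}\mid\mathcal{F}_n]$ are square-integrable martingale differences. The map $\theta\mapsto a_\theta$ is Lipschitz, by Assumption~\ref{assumption_lipschitz_continuity_functions_kernels} and the uniform positive definiteness of $S_\theta$ on $W_\phi$. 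Also $\sup_n\mathbb{E}\|\Lambda_n\|^2<\infty$ by the drift bound behind Theorem~\ref{theorem_boundedness}. Your own truncation argument then makes every term $o_P(N)$. No invariant measure, ergodicity or kernel discrepancy is needed on this route.
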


Let $\Gamma_{X,Z}$ denote the joint distribution of $(X, Z)$.
\begin{theorem}
    \label{theorem_additional_covariate_CLT}

    Let the additional covariate $Z$ be one-dimensional.
    Suppose that Assumptions \ref{assumption_theta_compact}, \ref{assumption_lipschitz_continuity_functions_kernels}, \ref{assumption_x_sub_exponential_bound} and \ref{assumption_for_simultaneous_small_set_condition} hold.
    If the allocation parameter sequence $\{\theta_n\}$ satisfies Assumption \ref{assumption_theta_convergence_strong_diminishing_adaptation} and $\mathbb{E} \left[ |Z|^{4+\epsilon} \right] < \infty$ for some $\epsilon > 0$, then
    \begin{equation*}
        \frac{\Psi_N}{\sqrt{N}}
        = \frac{1}{\sqrt{N}} \sum_{n=1}^N \frac{(T_n-\rho_{\theta_{n-1}}(X_n)) Z_n}{\rho_{\theta_{n-1}}(X_n)(1-\rho_{\theta_{n-1}}(X_n))}
        \xrightarrow{d} \mathcal{N}(0, {\sigma^*_{(Z)}}^2) ,
    \end{equation*}
    where the asymptotic variance is
    \begin{equation*}
        {\sigma^*_{(Z)}}^2
        = \mathbb{E}_{(X, Z) \sim \Gamma_{X,Z}} \left[
        \left[ \rho_{\theta^*}(X)(1-\rho_{\theta^*}(X)) \right]^{-1}
        \left\{ Z - a^T \phi(X) \right\}^2 \right] ,
    \end{equation*}
    and the vector $a$ satisfies
    \begin{align*}
        & a^T \mathbb{E} \left[ [\rho_{\theta^*}(X)(1-\rho_{\theta^*}(X))]^{-1} \phi(X) \phi(X)^T
        / \max\{ \|\phi(X)\|/[\rho_{\theta^*}(X)(1-\rho_{\theta^*}(X))], C_{\theta^*} \} \right] \\
        & \quad = \mathbb{E} \left[ [\rho_{\theta^*}(X)(1-\rho_{\theta^*}(X))]^{-1} Z \phi(X)^T
        / \max\{ \|\phi(X)\|/[\rho_{\theta^*}(X)(1-\rho_{\theta^*}(X))], C_{\theta^*} \} \right] .
    \end{align*}
\end{theorem}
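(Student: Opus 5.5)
We split $\Psi_N$ into a martingale part and a part driven by the imbalance vector. Write $h_\theta(x)=\rho_\theta(x)(1-\rho_\theta(x))$ and decompose
\begin{equation*}
\frac{(T_n-\rho_{\theta_{n-1}}(X_n))Z_n}{h_{\theta_{n-1}}(X_n)}
= \frac{(T_n-g_{\theta_{n-1}}(\Lambda_{n-1},X_n))Z_n}{h_{\theta_{n-1}}(X_n)}
+ \frac{(g_{\theta_{n-1}}(\Lambda_{n-1},X_n)-\rho_{\theta_{n-1}}(X_n))Z_n}{h_{\theta_{n-1}}(X_n)} .
\end{equation*}
The first term is a martingale difference with respect to $\{\mathcal{F}_n\}$. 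The second term equals $-p_{\theta_{n-1}} Z_n\phi(X_n)^T\Lambda_{n-1}/[D_{\theta_{n-1}}(X_n)\max\{\|\Lambda_{n-1}\|,C_\Lambda\}]$, where $D_\theta(x)=\max\{\|\phi(x)\|/h_\theta(x),C_\theta\}$.

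The key device is to subtract the analogous decomposition of $a^T\Lambda_N$. Since $\Lambda_N=O_P(1)$ by Theorem \ref{theorem_boundedness}, we have $\Psi_N/\sqrt N=(\Psi_N-a^T\Lambda_N)/\sqrt N+o_P(1)$. So it suffices to treat the sum with $Z_n$ replaced by $\tilde Z_n=Z_n-a^T\phi(X_n)$, and we use $a_{\theta}$ defined by the stated normal equation at $\theta$.

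\textbf{Drift term.} Condition on $\mathcal{F}_{n-1}$ and integrate out $(X_n,Z_n)$, which is independent of $\mathcal{F}_{n-1}$ by Assumption \ref{assumption_filtration}. The conditional expectation of the drift term for $\tilde Z_n$ is
\begin{equation*}
-p_{\theta_{n-1}}\,\mathbb{E}\!\left[\tilde Z\phi(X)^T/D_{\theta_{n-1}}(X)\right]\Lambda_{n-1}/\max\{\|\Lambda_{n-1}\|,C_\Lambda\} .
\end{equation*}
This vanishes when $a=a_{\theta_{n-1}}$, because the normal equation says exactly that $\mathbb{E}[\tilde Z\phi(X)^T/D_\theta(X)]=0$; the factors $h_\theta$ cancel by construction of the allocation function. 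With $a=a_{\theta^*}$ fixed, the discrepancy is bounded by $\|a_{\theta_{n-1}}-a_{\theta^*}\|\cdot O(1)$, which is small since $\theta_n\to\theta^*$. However, summing $o_P(1)$ terms over $n\le N$ against $\sqrt N$ is not obviously negligible. The remaining centered fluctuation, drift term minus its conditional mean, is a function of the pseudo-Markov chain $(\Lambda_{n-1},X_n,Z_n)$ with history-dependent kernel $P_{\theta_{n-1}}$.

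\textbf{Poisson equation step.} For the fluctuation term, we solve the Poisson equation $\hat f_\theta-P_\theta\hat f_\theta=f_\theta-\pi_\theta f_\theta$ for the functional $f_\theta(\Lambda)$ defining the drift. The small set condition (Assumption \ref{assumption_for_simultaneous_small_set_condition}) together with the drift inequality $\mathbb{E}[e^{\lambda_1\|\Lambda_{n+1}\|}\mid\mathcal F_n]\le\beta e^{\lambda_1\|\Lambda_n\|}+b$ gives $V$-uniform ergodicity with $V=e^{\lambda_1\|\cdot\|}$, uniformly over a neighborhood of $\theta^*$, and hence $V$-bounded solutions. Rearranging the sum $\sum_n(\hat f_{\theta_{n-1}}(\Lambda_{n-1})-P_{\theta_{n-1}}\hat f_{\theta_{n-1}}(\Lambda_{n-1}))$ produces three pieces: a martingale difference sum, a telescoping boundary term, and a remainder $\sum_n[\hat f_{\theta_n}-\hat f_{\theta_{n-1}}](\Lambda_n)$.

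\textbf{Main obstacle: the remainder.} The hard step is to show the remainder is $O(\sum\|\theta_n-\theta_{n-1}\|\cdot V(\Lambda_n))$. I would first bound the difference of Poisson solutions via the standard resolvent identity, which expresses $\hat f_\theta-\hat f_{\theta'}$ through $\sum_k P_\theta^k(P_\theta-P_{\theta'})\hat f_{\theta'}$. Then $P_\theta-P_{\theta'}$ is controlled in $V$-norm by the Lipschitz properties of $g_\theta$ and $\rho_\theta$ (Assumption \ref{assumption_lipschitz_continuity_functions_kernels}). This step requires handling the shift of the increment $\phi(X)/\rho_\theta$, which changes the landing point of $\Lambda$ and not only the jump probability, presumably via the paper's new kernel discrepancy. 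Under Assumption \ref{assumption_theta_convergence_strong_diminishing_adaptation} the remainder is then $o_P(N^p)\cdot O_P(\max_n V(\Lambda_n))$. Since $\max_{n\le N}V(\Lambda_n)$ grows at most polynomially, the remainder is $o_P(\sqrt N)$.

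\textbf{Martingale CLT.} All martingale differences are then combined and the martingale CLT is applied. The conditional Lindeberg condition follows from $\mathbb{E}|Z|^{4+\epsilon}<\infty$ and the sub-exponential tail of $\phi(X)$. For the conditional variance, the leading contribution is $\mathbb{E}[(T-g)^2\tilde Z^2/h^2\mid\mathcal F_{n-1}]$. Since $g=\rho+O(\cdot)$ and the correction averages to zero by the same LLN argument used for Theorem \ref{theorem_additional_covariate_LLN}, we get $\mathbb{E}[\rho(1-\rho)\tilde Z^2/h^2]=\mathbb{E}[\tilde Z^2/h_{\theta^*}]$. The Poisson-equation martingale parts must also contribute nothing extra to the limit; I expect this to follow because their driving function has conditional mean zero by the choice of $a$. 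Verifying this covariance bookkeeping carefully, as in Lemma F.2 of the supplement, is the second delicate point. Finally, the convergence $\theta_n\to\theta^*$ lets us replace $\theta_{n-1}$ by $\theta^*$ in the variance limit, which yields ${\sigma^*_{(Z)}}^2$.
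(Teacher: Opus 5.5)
Your reduction $\Psi_N/\sqrt N=(\Psi_N-a^T\Lambda_N)/\sqrt N+o_P(1)$ is valid, and it is a genuinely different starting point from the paper's. The paper runs its general Poisson-equation CLT (Lemma~\ref{lemma_CLT}) and uses the linear structure only at fixed $\theta^*$, in Lemma~\ref{lemma_expression_variance}. However, your proposal does not close the step you yourself call the main obstacle.

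(i) Controlling $P_\theta-P_{\theta'}$ in $V$-norm is exactly what fails here. Changing $\theta$ moves the landing points $\Lambda+\phi(x)/\rho_\theta(x)$, not only the jump probabilities, so a total-variation-type distance need not be small. What is actually needed is regularity of the Poisson solution in $\Lambda$ together with a coupling of the landing points, i.e.\ the paper's coupled robust Lipschitz continuity. You only allude to this.

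(ii) Even granting $|\hat f_\theta-\hat f_{\theta'}|\le C\,d(\theta,\theta')V$, your closing estimate fails. From $\sup_n\mathbb{E}V(\Lambda_n)<\infty$ you only get $\max_{n\le N}V(\Lambda_n)=O_P(N)$, and $o_P(N^p)\,O_P(N)$ is not $o_P(\sqrt N)$. You need a weight $V^\kappa$ with small $\kappa$. The paper obtains this from a bound $|\hat f_\theta-\hat f_{\theta'}|\lesssim V^\kappa d(\theta,\theta')^{1-\kappa}$ combined with H\"older's inequality, choosing $\kappa+p(1-\kappa)=1/2$.

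(iii) Rearranging $\sum_n(\hat f-P\hat f)$ recovers the drift sum only up to $\sum_n\pi_{\theta_{n-1}}f_{\theta_{n-1}}$. Since there is no rate for $\theta_n\to\theta^*$, you need $\pi_\theta f_\theta=0$ for every $\theta$. This is Lemma~\ref{lemma_expectation_pi_h}, obtained by applying stationarity of $\pi_\theta$ to the identity map, and it is missing from your argument.

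(iv) In the variance step, $g(1-g)=\rho(1-\rho)+(g-\rho)(1-2\rho)-(g-\rho)^2$. So $\mathbb{E}[(T-g)^2\tilde Z^2/h^2\mid\mathcal F_{n-1}]$ carries a term $-\mathbb{E}[(g-\rho)^2\tilde Z^2/h^2\mid\mathcal F_{n-1}]$ that does not average out. It cancels only against the conditional variance of the drift fluctuation $(g-\rho)\tilde Z/h$, which you dropped.

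A smaller point: both the drift and the normal equation carry the factor $1/h_\theta$. The relevant identity is $\mathbb{E}[\tilde Z\phi^T/(h_\theta D_\theta)]=0$; nothing ``cancels''.

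Your own device removes the need for all of this. Set $S_\theta=\mathbb{E}[\phi\phi^T/(h_\theta D_\theta)]$ and $\beta(\Lambda)=\Lambda/\max\{\|\Lambda\|,C_\Lambda\}$. Then $\mathbb{E}[\Lambda_n-\Lambda_{n-1}\mid\mathcal F_{n-1}]=-p_{\theta_{n-1}}S_{\theta_{n-1}}\beta(\Lambda_{n-1})$. Hence every drift that is linear in $\beta(\Lambda_{n-1})$ has a linear Poisson solution; yours is $-(a_\theta-a_{\theta^*})^T\Lambda$.

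The simplest route is to use $a_{\theta_{n-1}}$ in the $n$th term. Then $\xi_n=(T_n-\rho_{\theta_{n-1}}(X_n))(Z_n-a_{\theta_{n-1}}^T\phi(X_n))/h_{\theta_{n-1}}(X_n)$ is an exact martingale difference. Abel summation with $\Lambda_0=0$ gives
\[
\Psi_N-\sum_{n=1}^N\xi_n=a_{\theta_{N-1}}^T\Lambda_N-\sum_{n=1}^{N-1}(a_{\theta_n}-a_{\theta_{n-1}})^T\Lambda_n=O_P(1)+O_P(\log N)\,o_P(N^p)=o_P(\sqrt N).
\]
This uses two facts:
\begin{itemize}
\item $\theta\mapsto a_\theta$ is Lipschitz. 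Assumption~\ref{assumption_lipschitz_continuity_functions_kernels} makes $(p_\theta\phi/D_\theta)^Te=\rho_\theta-g_\theta(C_\Lambda e,\cdot)$ Lipschitz in $L^2(\Gamma)$ for each unit $e$. Also $p_\theta S_\theta$ is uniformly positive definite, because Assumption~\ref{assumption_for_simultaneous_small_set_condition} forces $W_\phi=\mathbb{R}^d$.
\item $P(\max_{n\le N}\|\Lambda_n\|>t)\le NCe^{-\lambda_1 t}$, so $\max_{n\le N}\|\Lambda_n\|=O_P(\log N)$.
\end{itemize}
The conditional variance of $\xi_n$ equals $\mathbb{E}[\tilde Z^2/h]+\mathbb{E}[(g-\rho)(1-2\rho)\tilde Z^2/h^2]$.
\begin{itemize}
\item The first term is a continuous function of $\theta_{n-1}$, so its Ces\`aro mean tends to ${\sigma^*_{(Z)}}^2$.
\item The second term is $c_{\theta_{n-1}}^T\beta(\Lambda_{n-1})$, and its Ces\`aro mean vanishes by the same identity. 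Substitute $\beta(\Lambda_{n-1})=-(p_{\theta_{n-1}}S_{\theta_{n-1}})^{-1}\mathbb{E}[\Lambda_n-\Lambda_{n-1}\mid\mathcal F_{n-1}]$, freeze the coefficient at $\theta^*$, and telescope.
\end{itemize}
Lindeberg is immediate from $|\xi_n|\lesssim|Z_n|+\|\phi(X_n)\|$. This argument relies on the linear form of $g_\theta$. The paper's kernel-discrepancy machinery is what it needs for the general Lemma~\ref{lemma_CLT}, whose conditional-variance functions are not linear in $\beta(\Lambda)$.
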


\begin{remark}
    The law of large numbers and the central limit theorem rely on different restrictions on the magnitude of changes between consecutive elements of the allocation parameter sequence, namely Assumptions \ref{assumption_theta_weak_diminishing_adaptation} and \ref{assumption_theta_convergence_strong_diminishing_adaptation}.
    Once these conditions are satisfied, the specific values of $\{\theta_n\}$ do not affect the fundamental asymptotic behavior of the estimators.
    In particular, the law of large numbers does not require the convergence of the parameter sequence, and the asymptotic variance in the central limit theorem depends only on the limiting value $\theta^*$ of the parameter sequence.
\end{remark}

\begin{remark}
    The asymptotic variance in the central limit theorem admits an explicit analytical form, which corresponds to a linear regression adjustment.
    The analytical form is equivalent to the variance of the imbalance of $Z - a^T \phi(X)$ under simple randomization with the oracle targeted allocation ratio.
    The variable $Z - a^T \phi(X)$ is a linearly adjusted version of $Z$, and the coefficient vector $a$ depends on the tuning constants of the CBARA procedure.
    Specifically, if the tuning constant $C_{\theta^*} \geq \|\phi(X)\|/[\rho_{\theta^*}(X)(1-\rho_{\theta^*}(X))]$ almost surely, then the coefficient vector $a$ in Theorem \ref{theorem_additional_covariate_CLT} minimizes the variance $\mathbb{E}_{(X, Z) \sim \Gamma_{X,Z}} \left[ \left[ \rho_{\theta^*}(X)(1-\rho_{\theta^*}(X)) \right]^{-1} \left\{ Z - a^T \phi(X) \right\}^2 \right]$, and the variance is guaranteed to be no larger than the variance under simple randomization with the oracle targeted allocation ratio.
    However, if $C_{\theta^*}$ is too small, the CBARA procedure still balances $\Lambda_n$, but may yield a suboptimal coefficient vector $a$, resulting in larger imbalance in $Z$ than simple randomization.
\end{remark}

Given a covariate value $x$ such that $P_\Gamma(X = x) > 0$, the conditional allocation ratio among units with the same covariate value $x$ can be defined as
\begin{equation*}
    \frac{N_{n,1}(x)}{N_n(x)} ,
    \quad \text{where }
    N_{n,1}(x) := \sum_{i=1}^n T_i \mathbb{I}(X_i = x)
    \text{ and }
    N_n(x) := \sum_{i=1}^n \mathbb{I}(X_i = x) .
\end{equation*}
The consistency of the conditional allocation ratio follows from the following theorem, which is analogous to Theorem \ref{theorem_additional_covariate_LLN} when setting $Z = \mathbb{I}(X = x)$.
\begin{theorem}
    \label{theorem_conditional_allocation_ratio}
    Given a covariate x, suppose that $P_\Gamma(X = x) > 0$.
    Suppose that Assumptions \ref{assumption_theta_compact}, \ref{assumption_lipschitz_continuity_functions_kernels}, \ref{assumption_x_sub_exponential_bound} and \ref{assumption_for_simultaneous_small_set_condition} hold.
    If the step sizes of the allocation parameter sequence $\{\theta_n\}$ satisfy Assumption \ref{assumption_theta_weak_diminishing_adaptation} and $\theta_n \xrightarrow{\mathbb{P}} \theta^*$, then
    \begin{equation*}
        \frac{N_{n,1}(x)}{N_n(x)}
        \xrightarrow{\mathbb{P}} \rho_{\theta^*}(x) .
    \end{equation*}
\end{theorem}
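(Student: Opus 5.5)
The plan is to reduce Theorem \ref{theorem_conditional_allocation_ratio} to Theorem \ref{theorem_additional_covariate_LLN} with the choice $Z = \mathbb{I}(X = x)$, plus a separate law of large numbers for the targeted ratios. This $Z$ is bounded, so $\mathbb{E}[Z^2] < \infty$ holds trivially. We may also take it as part of the i.i.d. vector $(X,\bm{Y},Z)$ without affecting Assumptions \ref{assumption_iid}--\ref{assumption_filtration}, since it is a function of $X$.

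Write $\rho_{n-1} := \rho_{\theta_{n-1}}(x)$, $\pi := P_\Gamma(X=x) > 0$, and $I_n := \mathbb{I}(X_n = x)$. Then
\begin{equation*}
    N_{n,1}(x) - \sum_{i=1}^n \rho_{i-1} I_i
    = \sum_{i=1}^n (T_i - \rho_{i-1}) I_i .
\end{equation*}
On the event $\{X_i = x\}$ we have $\rho_{\theta_{i-1}}(X_i) = \rho_{i-1}$. Each summand is therefore $\rho_{i-1}(1-\rho_{i-1})$ times the corresponding summand of $\Psi_n$. This factor lies in $[\iota_\rho(1-\iota_\rho), 1/4]$ but varies with $i$, so I cannot pull it out directly. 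Instead I will apply Theorem \ref{theorem_additional_covariate_LLN} with the modified $Z$-type weights. Concretely, the proof of that theorem, via the Poisson-equation decomposition along the pseudo-Markov chain, should go through verbatim for summands $f_{\theta_{i-1}}(T_i, X_i, Z_i)$ that are Lipschitz in $\theta$ and square integrable. Taking $f_\theta(t, X, Z) = (t - \rho_\theta(X)) Z$ gives exactly our sum. The needed Lipschitz property is $|\rho_\theta(x) - \rho_{\theta'}(x)| \le \pi^{-1/2} L_\rho \|\theta - \theta'\|$, which follows from Assumption \ref{assumption_lipschitz_continuity_functions_kernels} because $x$ is an atom of $\Gamma$.

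If one prefers to use Theorem \ref{theorem_additional_covariate_LLN} strictly as stated, there is an alternative route. Since $\theta_n \to \theta^*$ in probability, $\rho_{i-1} \to \rho^* := \rho_{\theta^*}(x)$ in probability, again by the atom argument. Write the factor as $\rho^*(1-\rho^*) + (\rho_{i-1}(1-\rho_{i-1}) - \rho^*(1-\rho^*))$. The first part contributes $\rho^*(1-\rho^*)\Psi_n/n \to 0$ by Theorem \ref{theorem_additional_covariate_LLN}. The second part is bounded by
\begin{equation*}
    \frac{C}{n}\sum_{i=1}^n |\rho_{i-1} - \rho^*|
    \cdot \frac{|T_i - \rho_{i-1}| I_i}{\rho_{i-1}(1-\rho_{i-1})}
    \le \frac{C'}{n}\sum_{i=1}^n |\rho_{i-1} - \rho^*| .
\end{equation*}
This is a Cesàro mean of bounded terms tending to zero in probability, so it tends to zero in $L^1$ by dominated convergence. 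Either way,
\begin{equation*}
    \frac{1}{n}\Bigl(N_{n,1}(x) - \sum_{i=1}^n \rho_{i-1} I_i\Bigr) \xrightarrow{\mathbb{P}} 0 .
\end{equation*}

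Next, I show that
\begin{equation*}
    \frac{1}{n}\sum_{i=1}^n \rho_{i-1} I_i \xrightarrow{\mathbb{P}} \rho^* \pi .
\end{equation*}
Split it as
\begin{equation*}
    \frac{\rho^*}{n}\sum_{i=1}^n I_i + \frac{1}{n}\sum_{i=1}^n (\rho_{i-1} - \rho^*) I_i .
\end{equation*}
The first piece tends to $\rho^*\pi$ by the strong law of large numbers. The second is bounded by the same Cesàro average as above, which tends to zero. Finally, $N_n(x)/n \to \pi > 0$ almost surely, and dividing gives $N_{n,1}(x)/N_n(x) \to \rho^*$ in probability.

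The main obstacle is the step $\rho_{\theta_{n}}(x) \to \rho_{\theta^*}(x)$: the only continuity available is $L^2(\Gamma)$-Lipschitz, and it transfers to pointwise continuity at $x$ precisely because $x$ is an atom with $\Gamma(\{x\}) = \pi$. Everything else is bookkeeping around Theorem \ref{theorem_additional_covariate_LLN}.
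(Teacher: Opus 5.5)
Your proposal is correct and matches the paper's proof. The paper takes your first route: it reruns the argument of Theorem \ref{theorem_additional_covariate_LLN} with $h_\theta(\Lambda)=\mathbb{E}_{X\sim\Gamma}[(g_\theta(\Lambda,X)-\rho_\theta(X))\mathbb{I}(X=x)]$, and then uses continuity of $\theta\mapsto\rho_\theta(x)$ at the atom together with a Ces\`aro argument, as you do. The limit there is zero because $\pi_\theta h_\theta=0$ by Lemma \ref{lemma_expectation_pi_h}, which is the ingredient your ``verbatim'' remark leaves implicit.

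The differences are minor. Your second route uses Theorem \ref{theorem_additional_covariate_LLN} as a black box and absorbs the varying factor $\rho_{i-1}(1-\rho_{i-1})$ through the bound $\frac{C'}{n}\sum_i|\rho_{i-1}-\rho^*|$. For $\frac1n\sum_i\rho_{i-1}I_i$ you center at $\rho^*$ and use the i.i.d.\ strong law, whereas the paper applies a martingale strong law to $\sum_i\rho_{i-1}\{I_i-P_\Gamma(X=x)\}$.
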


\subsection{Model Parameter Estimation}

\label{subsec_estimation_parameter}

The model parameter estimation in \eqref{eq_eta_estimation} requires certain regularity conditions on the function $m_\eta$.
In order to ensure the validity of the consistency and the asymptotic normality, we introduce two assumptions: a weaker one, Assumption \ref{assumption_estimation_convergence}, and a strictly stronger one, Assumption \ref{assumption_estimation_asymptotic}.
Both are standard in the theory of M-estimators \cite{vaartAsymptoticStatistics2007,zhangStatisticalInferenceMEstimators2021}.

\begin{assumption}
    \label{assumption_estimation_convergence}

    The function $m_\eta$ satisfies the following conditions.
    \begin{enumerate}
        \item (Lower-semicontinuous) For each $t \in \{0, 1\}$, there exists a measurable set $\mathcal{N}_t$ such that $m_\eta(x, y, t)$ is lower-semicontinuous in $\eta$ for all $(x, y) \in \mathcal{N}_t$, and $\Gamma_{X,Y(t)}(\mathcal{N}_t) = 1$.
        \item (Local) For each $t \in \{0, 1\}$ and $\eta \in \Theta$, there exists a neighborhood $U$ of $\eta$ such that
        \begin{equation*}
            \mathbb{E}_{(X,Y(t)) \sim \Gamma_{X,Y(t)}} \left| \sup_{\eta \in U} m_\eta(X, Y(t), t) \right|^2 < \infty .
        \end{equation*}
    \end{enumerate}
\end{assumption}

\begin{theorem}
    \label{theorem_estimation_consistency}

    Suppose that Assumptions \ref{assumption_theta_compact}, \ref{assumption_lipschitz_continuity_functions_kernels}, \ref{assumption_x_sub_exponential_bound}, \ref{assumption_for_simultaneous_small_set_condition} and \ref{assumption_estimation_convergence} hold.
    If the step sizes of the allocation parameter sequence $\{\theta_n\}$ satisfy Assumption \ref{assumption_theta_weak_diminishing_adaptation}, then $\eta_n \xrightarrow{\mathbb{P}} \eta^*$.
\end{theorem}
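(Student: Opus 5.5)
We prove Theorem \ref{theorem_estimation_consistency} with the classical Wald-type argument for M-estimators over a compact parameter space, as in \cite{vaartAsymptoticStatistics2007}. The one non-standard ingredient is a law of large numbers for the inverse-weighted criterion under the adaptive, non-Markovian allocation. That law of large numbers will come from the same pseudo-Markov machinery behind Theorem \ref{theorem_additional_covariate_LLN}.

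Define the sample criterion
\begin{equation*}
M_n(\eta) = \frac{1}{n}\sum_{i=1}^n \frac{\rho^{\mathrm{ref}}(T_i\mid X_i)}{\rho_{\theta_{i-1}}(T_i\mid X_i)} m_\eta(X_i,Y_i(T_i),T_i)
\end{equation*}
and the population criterion $M(\eta)$, which is the expectation in \eqref{eq_population_maximizing_point}. For a fixed measurable $f$ with $\mathbb{E}|f(X,Y(t))|^2<\infty$, write
\begin{equation*}
\frac{\rho^{\mathrm{ref}}(t\mid X_i)}{\rho_{\theta_{i-1}}(t\mid X_i)}\,\mathbb{I}(T_i=t) f(X_i,Y_i(t)) = \rho^{\mathrm{ref}}(t\mid X_i) f(X_i,Y_i(t)) + \frac{\bigl(\mathbb{I}(T_i=t)-\rho_{\theta_{i-1}}(t\mid X_i)\bigr)\rho^{\mathrm{ref}}(t\mid X_i) f(X_i,Y_i(t))}{\rho_{\theta_{i-1}}(t\mid X_i)} .
\end{equation*}
The first term is i.i.d. and averages to its mean by the ordinary LLN. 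For $t=1$, the second term equals $(T_i-\rho_{\theta_{i-1}}(X_i))Z_i/[\rho_{\theta_{i-1}}(X_i)(1-\rho_{\theta_{i-1}}(X_i))]$ with the additional covariate $Z_i=(1-\rho_{\theta_{i-1}}(X_i))\rho^{\mathrm{ref}}(X_i)f(X_i,Y_i(1))$. This $Z_i$ depends on $\theta_{i-1}$, so Theorem \ref{theorem_additional_covariate_LLN} does not apply verbatim. We handle this in one of two ways.
\begin{itemize}
\item Split the factor $1-\rho_{\theta_{i-1}}$ into $1$ and $-\rho_{\theta_{i-1}}$. The piece $-\rho_{\theta_{i-1}}(X_i)\rho^{\mathrm{ref}}(X_i)f(X_i,Y_i(1))$ still depends on $\theta_{i-1}$.
\item Rerun the proof of Theorem \ref{theorem_additional_covariate_LLN} with $\theta$-indexed test functions $Z=h_\theta(X,\bm Y)$, with $h_\theta$ uniformly $L^2$-bounded and Lipschitz in $\theta$ (bounded by $\iota_\rho^{-1}$). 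The Poisson-equation decomposition and the remainder bound via the kernel discrepancy of Subsection \ref{subsec_proof_strategy_negligibility_remaining_term} then pick up an extra Lipschitz term in $\|\theta_{i-1}-\theta_i\|$. Assumption \ref{assumption_theta_weak_diminishing_adaptation} controls that term.
\end{itemize}
The case $t=0$ is symmetric. Together these give $\frac1n\sum(\cdots)\to \mathbb{E}[\rho^{\mathrm{ref}}(t\mid X)f(X,Y(t))]$ in probability. The second moment in Assumption \ref{assumption_estimation_convergence} is exactly what Theorem \ref{theorem_additional_covariate_LLN} needs ($\mathbb{E}Z^2<\infty$).

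Next comes the Wald argument. Fix $\epsilon>0$ and let $K=\{\eta\in\Theta:\|\eta-\eta^*\|\ge\epsilon\}$, which is compact by Assumption \ref{assumption_theta_compact}. For each $\eta\in K$, take a small neighborhood $U$ from the local condition and set $f_U(x,y,t)=\sup_{\eta'\in U}m_{\eta'}(x,y,t)$. By lower semicontinuity and monotone convergence, as $U\downarrow\{\eta\}$ we get $\mathbb{E}[\rho^{\mathrm{ref}}(t\mid X)f_U(X,Y(t),t)]$ summed over $t$ decreasing to a limit at most $M(\eta)<M(\eta^*)$. Here "sup" should read as the inf-direction appropriate for maxima. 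This step needs upper semicontinuity, so the stated lower-semicontinuity is presumably meant for $-m$, and I would match the paper's sign convention. Measurability of the sup is handled by working with outer expectations or a countable dense subset.

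Cover $K$ by finitely many neighborhoods $U_1,\dots,U_k$ with $\mathbb{E}$-value $<M(\eta^*)-\delta$ for some $\delta>0$. The weights $\rho^{\mathrm{ref}}/\rho_{\theta_{i-1}}$ are nonnegative, so $\sup_{\eta\in U_j}M_n(\eta)\le$ the weighted average of $f_{U_j}$. By the LLN of the first step this is $<M(\eta^*)-\delta/2$ with probability tending to one. Also $M_n(\eta^*)\to M(\eta^*)$, and $M_n(\eta_n)\ge M_n(\eta^*)$. Hence $P(\eta_n\in K)\to0$.

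The main obstacle is the first step. It needs the LLN for $\Psi$-type sums where the summand's weight, and hence the effective "covariate", varies with $\theta_{i-1}$. It also needs this for a sup-envelope $f_U$ that is only $L^2$, not smooth. I would first try to absorb the $\theta$-dependence into the Poisson-equation framework by treating $(X,\bm Y)\mapsto h_\theta$ as part of the $\theta$-indexed family, and check that the remainder is still bounded by $C\|\theta_{i-1}-\theta_i\|$ times an integrable envelope. If that fails, I would fall back on the finite-$\epsilon$-net argument over the compact $\Theta$ in $\theta$: freeze $\theta$ on a net and use Lipschitz continuity of $\rho_\theta$ (Assumption \ref{assumption_lipschitz_continuity_functions_kernels}) to control the discretization error.
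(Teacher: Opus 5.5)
Your proposal is correct and is essentially the paper's proof. Lemma \ref{lemma_estimation_convergence} runs the same Wald finite-cover argument, and Lemma \ref{lemma_allocation_convergence} supplies the law of large numbers by applying Lemma \ref{lemma_convergence_average_enter} directly to the $\theta$-indexed conditional mean $h_\theta(\Lambda)=\mathbb{E}\bigl[\sum_{T}\rho^{\mathrm{ref}}(T\mid X)g_\theta(T\mid\Lambda,X)m(X,Y(T),T)/\rho_\theta(T\mid X)\bigr]$, which is your i.i.d.-plus-$\Psi$ split up to the additive constant $M$; its invariant mean equals $M$ because $\pi_\theta[g_\theta(\cdot,x)]=\rho_\theta(x)$ (Lemma \ref{lemma_expectation_pi_h}). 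Since $(X,\bm Y)$ is integrated out, the $\theta$-dependent weights, the $L^2$-only envelopes $f_U$, and a merely H\"older (not Lipschitz) remainder bound are all already covered. You are also right that the cover step needs upper semicontinuity of $\eta\mapsto m_\eta$, with the $\sup$ envelope kept as it is, so the paper's claim that lower semicontinuity yields $m_{U_l}\downarrow m_\eta$ is a slip relative to Theorem 5.14 of \cite{vaartAsymptoticStatistics2007}.
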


\begin{assumption}
    \label{assumption_estimation_asymptotic}

    The function $m_\eta$ and the parameter space $\Theta$ satisfy the following conditions.
    \begin{enumerate}
        \item The parameter space $\Theta$ is a subset of a Euclidean space, and $\eta^*$ is an interior point of $\Theta$.
        \item For each $t \in \{0, 1\}$, and for all $(x, y)$ in the support of $\Gamma_{X,Y(t)}$ and $\eta \in \Theta$, the first and second derivatives of $m_\eta(x, y, t)$ with respect to $\eta$ exist and are denoted by $\dot{m}_\eta(x, y, t)$ and $\ddot{m}_\eta(x, y, t)$, respectively.
        \item For each $t \in \{0, 1\}$,        
        \begin{align*}
            \mathbb{E}_{(X,Y(t)) \sim \Gamma_{X,Y(t)}} \left\| \dot{m}_{\eta^*}(X,Y(t),t) \right\|^2
            &< \infty, \quad \text{and} \\
            \mathbb{E}_{(X,Y(t)) \sim \Gamma_{X,Y(t)}} \left\| \ddot{m}_{\eta^*}(X,Y(t),t) \right\|^2
            &< \infty .
        \end{align*}
        \item There exists a positive function $s(x, y, t)$ such that
        \begin{itemize}
            \item for any $\eta$, $\eta^\prime \in \Theta$,
            \begin{equation*}
                \left\| \ddot{m}_\eta(x,y,t) - \ddot{m}_{\eta^\prime}(x,y,t) \right\|
                \leq s(x,y,t) \| \eta - \eta^\prime \| ,
            \end{equation*}
            where $\|\cdot\|$ denotes the $\ell_2$ norm of a matrix,
            \item for each $t \in \{0, 1\}$,
            \begin{equation*}
                \mathbb{E}_{(X,Y(t)) \sim \Gamma_{X,Y(t)}} \left[ s(X,Y(t),t) \right]^2 < \infty .
            \end{equation*}
        \end{itemize}
        \item The matrix
        \begin{equation*}
            \ddot{M}_{\eta^*}
            = \mathbb{E}_{(X,Y(1),Y(0)) \sim \Gamma_{X,\bm{Y}}} \left[ \rho^{\mathrm{ref}}(X) \ddot{m}_{\eta^*}(X, Y(1), 1)
            + [1 - \rho^{\mathrm{ref}}(X)] \ddot{m}_{\eta^*}(X, Y(0), 0) \right]
        \end{equation*}
        is invertible.
    \end{enumerate}
\end{assumption}

Before stating the central limit theorem, we first present a bound on the magnitude of the variations of $\{\eta_n\}$.
\begin{theorem}
    \label{theorem_estimation_difference}

    Suppose that Assumptions \ref{assumption_theta_compact}, \ref{assumption_lipschitz_continuity_functions_kernels}, \ref{assumption_x_sub_exponential_bound}, \ref{assumption_for_simultaneous_small_set_condition} and \ref{assumption_estimation_asymptotic} hold.
    If the step sizes of the allocation parameter sequence $\{\theta_n\}$ satisfy Assumption \ref{assumption_theta_weak_diminishing_adaptation}, then
    \begin{equation*}
        \mathbb{E} \left[ \| \eta_n - \eta_{n-1} \| \right] = O(n^{-q}) ,
    \end{equation*}
    and
    \begin{equation*}
        \sum_{n=1}^{N} \| \eta_n - \eta_{n-1} \| = o_P(N^p) ,
    \end{equation*}
    for any $p \in (1-q, 1)$, where $q \in (0,1]$ is defined in Assumption \ref{assumption_theta_weak_diminishing_adaptation}.
\end{theorem}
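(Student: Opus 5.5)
We bound the increments of the estimator through first-order conditions. Write $M_n(\eta)=\frac1n\sum_{i=1}^n w_i m_\eta(X_i,Y_i(T_i),T_i)$ with weights $w_i=\rho^{\mathrm{ref}}(T_i\mid X_i)/\rho_{\theta_{i-1}}(T_i\mid X_i)$. These weights are bounded by $1/\iota_\rho$ by Assumption \ref{assumption_sampling}. By Theorem \ref{theorem_estimation_consistency}, $\eta_n\xrightarrow{\mathbb P}\eta^*$. Since $\eta^*$ is interior (Assumption \ref{assumption_estimation_asymptotic}), on an event $E_n$ of high probability both $\eta_n$ and $\eta_{n-1}$ lie in a ball $B$ around $\eta^*$ contained in $\Theta$. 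On $E_n$ they satisfy $\dot M_n(\eta_n)=0$ and $\dot M_{n-1}(\eta_{n-1})=0$. The identity
\begin{equation*}
n\dot M_n(\eta)=(n-1)\dot M_{n-1}(\eta)+w_n\dot m_\eta(X_n,Y_n(T_n),T_n)
\end{equation*}
then gives
\begin{equation*}
\dot M_n(\eta_n)-\dot M_n(\eta_{n-1})=-\tfrac1n w_n\dot m_{\eta_{n-1}}(X_n,Y_n(T_n),T_n).
\end{equation*}
A mean-value expansion turns the left side into $\bar H_n(\eta_n-\eta_{n-1})$, where $\bar H_n=\int_0^1\ddot M_n(\eta_{n-1}+u(\eta_n-\eta_{n-1}))\,du$. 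Hence
\begin{equation*}
\|\eta_n-\eta_{n-1}\|\le \tfrac1n\|\bar H_n^{-1}\|\,\iota_\rho^{-1}\,\|\dot m_{\eta_{n-1}}(X_n,Y_n(T_n),T_n)\|.
\end{equation*}

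\textbf{Step 1: invertibility of the Hessian.} I would show that $\ddot M_n(\eta^*)\to\ddot M_{\eta^*}$ in probability. This uses the weighted law of large numbers underlying Theorem \ref{theorem_additional_covariate_LLN}, i.e. the Poisson-equation decomposition under Assumption \ref{assumption_theta_weak_diminishing_adaptation}, applied to $\ddot m_{\eta^*}$. The IPW weighting makes the conditional mean of $w_i\ddot m_{\eta^*}$ given $X_i$ the reference-weighted mixture. The deviation $g_\theta-\rho_\theta$ contributes a term linear in $\Lambda_{i-1}$, which is controlled as for $\Psi_n$. I would combine this with the Lipschitz bound $\|\ddot M_n(\eta)-\ddot M_n(\eta^*)\|\le(\frac1n\sum w_i s_i)\|\eta-\eta^*\|$ from item 4. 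This shows that $\|\bar H_n^{-1}\|\le 2\|\ddot M_{\eta^*}^{-1}\|$ on $E_n$ once $B$ is small.

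\textbf{Step 2: the expectation bound.} The displayed inequality gives $O(1/n)$ times a quantity with bounded second moment on $E_n$. For this, expand $\dot m_{\eta_{n-1}}$ around $\eta^*$ using the second-moment bounds on $\dot m_{\eta^*}$, $\ddot m_{\eta^*}$ and $s$. Here $X_n,Y_n$ are independent of $\eta_{n-1}$ by Assumption \ref{assumption_filtration}, so Cauchy--Schwarz applies. On $E_n^c$ I would use only compactness, $\|\eta_n-\eta_{n-1}\|\le\diam\Theta$, and so need $P(E_n^c)=O(n^{-q})$.

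\textbf{Main obstacle.} This rate is the hard step. It is where the polynomial tail $c_\delta N^{-q}$ of Assumption \ref{assumption_theta_weak_diminishing_adaptation} enters. I would revisit the proof of Theorem \ref{theorem_estimation_consistency} and Step 1 and track probabilities quantitatively. The martingale parts from the Poisson decomposition have second moments $O(1/n)$, by the exponential moment bounds behind Theorem \ref{theorem_boundedness} together with the small-set condition. The remainder term is bounded through the new kernel discrepancy by $\frac1n\sum\|\theta_i-\theta_{i+1}\|$. Its tail is $O(n^{-q})$ by assumption. Uniformity over the compact $\Theta$ would come from a finite cover combined with the local integrability conditions. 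Since $q\le1$, Chebyshev on the martingale parts gives $O(n^{-1})\subseteq O(n^{-q})$.

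\textbf{Step 3: the sum.} Summing Step 2 gives $\mathbb E\sum_{n\le N}\|\eta_n-\eta_{n-1}\|=O(\sum_{n\le N} n^{-q})$, which is $O(N^{1-q})$ for $q<1$ and $O(\log N)$ for $q=1$. For any $p>1-q$, Markov's inequality then yields $\sum_{n=1}^N\|\eta_n-\eta_{n-1}\|=o_P(N^p)$.
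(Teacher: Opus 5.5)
Your proposal is correct and follows essentially the same route as the paper (Lemma \ref{lemma_estimation_difference_bounded}). On a good event, where $\eta_n,\eta_{n-1}$ are near $\eta^*$ and the weighted averages of $\ddot m_{\eta^*}$ and $s$ are close to their limits, the two first-order conditions give $\|\eta_n-\eta_{n-1}\|\lesssim n^{-1}\|\dot m\|$; the complement has probability $O(n^{-q})$ via the large-deviation versions of the consistency and weighted-LLN lemmas; and summing plus Markov gives the $o_P(N^p)$ bound. One small point in your favour: differencing the first-order conditions over $i\le n$ produces $\dot m_{\eta_{n-1}}(X_n,Y_n(T_n),T_n)$, as you wrote (the paper's display has $\eta_n$), and this makes the moment bound immediate from the $\mathcal{F}_{n-1}$-measurability of $\eta_{n-1}$.
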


\begin{remark}
    When we directly set $\theta_n \equiv \eta_n$, Assumption \ref{assumption_theta_convergence_strong_diminishing_adaptation} holds provided that Assumption \ref{assumption_theta_weak_diminishing_adaptation} with $q > 1/2$ is satisfied.
    The result in Theorem \ref{theorem_estimation_difference},
    \begin{equation*}
        \sum_{n=1}^{N} \| \eta_n - \eta_{n-1} \| = o_P(N^p) ,
    \end{equation*}
    parallels the condition in Assumption \ref{assumption_theta_convergence_strong_diminishing_adaptation}, $\sum_{n=0}^{N-1} \| \theta_n - \theta_{n+1} \| = o_P(N^p)$.
    Moreover, the result in Theorem \ref{theorem_estimation_consistency}, $\eta_n \xrightarrow{\mathbb{P}} \eta^*$, similarly parallels the corresponding convergence condition in Assumption \ref{assumption_theta_convergence_strong_diminishing_adaptation}, $\theta_n \xrightarrow{\mathbb{P}} \theta^*$.
    These would imply that if $\theta_n \equiv \eta_n$, Assumption \ref{assumption_theta_weak_diminishing_adaptation} with $q > 1/2$ would be sufficient to establish Assumption \ref{assumption_theta_convergence_strong_diminishing_adaptation}.
However, if we directly use the estimate sequence as the allocation parameter sequence, Assumption \ref{assumption_theta_weak_diminishing_adaptation} may not hold.
    Therefore, a suitable parameter update mechanism is still required, as discussed in Subsection \ref{subsec_parameter_update}.
\end{remark}

Under the stronger assumption, Assumption \ref{assumption_theta_convergence_strong_diminishing_adaptation}, the following central limit theorem can be established.
\begin{theorem}
    \label{theorem_estimation_asymptotic_normality}

    Suppose that Assumptions \ref{assumption_theta_compact}, \ref{assumption_lipschitz_continuity_functions_kernels}, \ref{assumption_x_sub_exponential_bound}, \ref{assumption_for_simultaneous_small_set_condition} and \ref{assumption_estimation_asymptotic} hold.
    If the allocation parameter sequence $\{\theta_n\}$ satisfies Assumption \ref{assumption_theta_convergence_strong_diminishing_adaptation}, and for each $t \in \{0, 1\}$, there exists some $\epsilon > 0$ such that
    \begin{equation*}
        \mathbb{E}_{(X,Y(t)) \sim \Gamma_{X,Y(t)}} \left\| \dot{m}_{\eta^*}(X,Y(t),t) \right\|^{4+\epsilon}
        < \infty ,
    \end{equation*}
    then
    \begin{equation*}
        \sqrt{n} (\eta_n - \eta^*)
        = - \ddot{M}_{\eta^*}^{-1} \frac{1}{\sqrt{n}} \dot{M}_{\eta^*, n}
        + o_P(1) .
    \end{equation*}
    Here,
    \begin{equation*}
        \dot{M}_{\eta^*, n}
        = \sum_{i=1}^n
        \left[ \frac{\rho^{\mathrm{ref}}(T_i \mid X_i)}{\rho_{\theta_{i-1}}(T_i \mid X_i)}
        \dot{m}_{\eta^*}(X_i, Y_i(T_i), T_i) \right]
    \end{equation*}
    is asymptotically normal with mean zero and covariance matrix
    \begin{align*}
        \Sigma_{(Z)}
        &= \Cov(Z^{(u)}) + \mathbb{E}_{(X, Z) \sim \Gamma_{X,Z}} \left[ \rho_{\theta^*}(X)(1-\rho_{\theta^*}(X)) \right. \\
        & \quad \left. \left\{ Z^{(c)} - \frac{A \phi(X)}{\rho_{\theta^*}(X)(1-\rho_{\theta^*}(X))} \right\}
        \left\{ Z^{(c)} - \frac{A \phi(X)}{\rho_{\theta^*}(X)(1-\rho_{\theta^*}(X))} \right\}^T \right] ,
    \end{align*}
    where
    \begin{align*}
        Z^{(c)}
        &= \frac{\rho^{\mathrm{ref}}(X)}{\rho_{\theta^*}(X)}
        \ddot{M}_{\eta^*}^{-1} \dot{m}_{\eta^*}(X, Y(1), 1)
        -
        \frac{1-\rho^{\mathrm{ref}}(X)}{1-\rho_{\theta^*}(X)}
        \ddot{M}_{\eta^*}^{-1} \dot{m}_{\eta^*}(X, Y(0), 0) , \\
        Z^{(u)}
        &= 
        \rho^{\mathrm{ref}}(X) \ddot{M}_{\eta^*}^{-1} \dot{m}_{\eta^*}(X, Y(1), 1)
        + (1-\rho^{\mathrm{ref}}(X)) \ddot{M}_{\eta^*}^{-1} \dot{m}_{\eta^*}(X, Y(0), 0) ,
    \end{align*}
    and the matrix $A$ satisfies
    \begin{align}
        & A \mathbb{E} \left[ [\rho_{\theta^*}(X)(1-\rho_{\theta^*}(X))]^{-1} \phi(X) \phi(X)^T
        / \max\{ \|\phi(X)\|/[\rho_{\theta^*}(X)(1-\rho_{\theta^*}(X))], C_{\theta^*} \} \right] \notag \\
        & \quad = \mathbb{E} \left[ Z^{(c)} \phi(X)^T
        / \max\{ \|\phi(X)\|/[\rho_{\theta^*}(X)(1-\rho_{\theta^*}(X))], C_{\theta^*} \} \right] 
        \label{eq_theorem_estimation_asymptotic_normality_matrix_A} .
    \end{align}
\end{theorem}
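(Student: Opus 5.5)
The plan has two stages: a standard M-estimator linearization, followed by a CLT for the weighted score, which I reduce to the machinery behind Theorem \ref{theorem_additional_covariate_CLT}.

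\textbf{Linearization.} Theorem \ref{theorem_estimation_consistency} gives $\eta_n \xrightarrow{\mathbb{P}} \eta^*$. Since $\eta^*$ is interior, with probability tending to one $\eta_n$ satisfies $\dot{M}_{\eta_n,n}=0$, where $\dot M_{\eta,n}$ is the weighted score sum. A Taylor expansion around $\eta^*$ gives
\begin{equation*}
0=\dot M_{\eta^*,n}+\ddot M_{\eta^*,n}(\eta_n-\eta^*)+R_n, \qquad \|R_n\|\le \sum_{i=1}^n w_i\, s(X_i,Y_i(T_i),T_i)\,\|\eta_n-\eta^*\|^2 .
\end{equation*}
Here $w_i=\rho^{\mathrm{ref}}(T_i\mid X_i)/\rho_{\theta_{i-1}}(T_i\mid X_i)\le 1/\iota_\rho$. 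Next I show $\ddot M_{\eta^*,n}/n\to \ddot M_{\eta^*}$ in probability. The summand has conditional mean, given $\mathcal F_{i-1}$, $X_i$, equal to a $g/\rho$-weighted combination. I rewrite it as
\begin{equation*}
\rho^{\mathrm{ref}}\ddot m(\cdot,1)+(1-\rho^{\mathrm{ref}})\ddot m(\cdot,0)+(T_i-\rho_{\theta_{i-1}})\times\text{(IPW contrast)} .
\end{equation*}
The first part obeys the ordinary LLN. The second is of the form $\Psi_n/n$ with $Z$ the contrast $\rho^{\mathrm{ref}}\ddot m(\cdot,1)/\rho-(1-\rho^{\mathrm{ref}})\ddot m(\cdot,0)/(1-\rho)$, multiplied by $\rho(1-\rho)$. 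The LLN argument of Theorem \ref{theorem_additional_covariate_LLN} (Poisson equation plus Assumption \ref{assumption_theta_weak_diminishing_adaptation}, which follows from Assumption \ref{assumption_theta_convergence_strong_diminishing_adaptation}) kills it; the small dependence of the weight on $\theta_{i-1}$ is handled since $\theta_{i-1}\to\theta^*$ and by Lipschitz continuity. Similarly, $\frac1n\sum w_i s_i=O_P(1)$ by a second-moment Markov bound. Invertibility of $\ddot M_{\eta^*}$ then yields $\sqrt n(\eta_n-\eta^*)=-\ddot M_{\eta^*}^{-1}\dot M_{\eta^*,n}/\sqrt n+o_P(1+\sqrt n\|\eta_n-\eta^*\|)$. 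Once the score is shown to be $O_P(\sqrt n)$, this gives the displayed representation.

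\textbf{CLT for the score.} I decompose $\ddot M_{\eta^*}^{-1}w_i\dot m_{\eta^*}(X_i,Y_i(T_i),T_i)$ exactly as
\begin{equation*}
Z^{(u)}_i+(T_i-\rho_{\theta_{i-1}}(X_i))\,Z^{(c)}_{i,\theta_{i-1}}, \qquad Z^{(c)}_{i,\theta}=\frac{\rho^{\mathrm{ref}}}{\rho_\theta}\ddot M^{-1}\dot m(\cdot,1)-\frac{1-\rho^{\mathrm{ref}}}{1-\rho_\theta}\ddot M^{-1}\dot m(\cdot,0) .
\end{equation*}
This identity is checked directly for $T_i\in\{0,1\}$. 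The first sum is i.i.d. and has mean zero by the first-order condition for $\eta^*$ in \eqref{eq_population_maximizing_point}. The second sum equals $\Psi_n$ with the vector covariate $\tilde Z=\rho_\theta(1-\rho_\theta)Z^{(c)}_\theta$, which depends on $\theta_{i-1}$. I replace $\theta_{i-1}$ by $\theta^*$ inside $\tilde Z$. The error is $\sum (T_i-\rho_{\theta_{i-1}})[\tilde Z_{i,\theta_{i-1}}-\tilde Z_{i,\theta^*}]$; the $(T_i-g)$ part is a martingale with vanishing conditional variance because $\theta_{i-1}\to\theta^*$ and $\rho$ is Lipschitz. The $(g-\rho)$ part has to go through the Poisson-equation decomposition, exactly as in Theorem \ref{theorem_additional_covariate_CLT}.

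\textbf{Joint limit and main obstacle.} I then apply the multivariate version of Theorem \ref{theorem_additional_covariate_CLT} (Cramér--Wold) jointly with the i.i.d. term. After the Poisson decomposition both pieces are martingale differences with respect to $\{\mathcal F_n\}$, so the martingale CLT applies; the $4+\epsilon$ moment condition supplies the Lindeberg and remainder control. The conditional covariance is computed from three pieces. The i.i.d. part gives $\Cov(Z^{(u)})$. The adjusted imbalance part gives
\begin{equation*}
\mathbb E\Big[\rho(1-\rho)\big\{Z^{(c)}-A\phi/(\rho(1-\rho))\big\}\{\cdot\}^T\Big],
\end{equation*}
where $A$ solves \eqref{eq_theorem_estimation_asymptotic_normality_matrix_A}; this is the analogue of $a$ with $Z$ replaced by $\rho(1-\rho)Z^{(c)}$. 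The cross term vanishes because $\mathbb E[T-\rho\mid X,\bm Y]$-type factors are centered after projection. The main obstacle is that step: verifying that the Poisson-solution correction does not create a nonzero covariance between the $Z^{(u)}$ martingale and the imbalance martingale, and carrying the $\theta$-dependent covariate through the remainder bounds of Subsection \ref{subsec_proof_strategy_negligibility_remaining_term} under only $o_P(N^p)$ variation. I would try to show that the Poisson correction is a function of $(\Lambda,\theta)$ only, so that its increments are driven by $T_i-g$ and $X_i$, and then compute the cross moment conditionally on $X_i$.
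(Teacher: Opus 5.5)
Your linearization and the exact split $\ddot M_{\eta^*}^{-1}w_i\dot m_{\eta^*}=Z^{(u)}_i+(T_i-\rho_{\theta_{i-1}}(X_i))Z^{(c)}_i$ match the paper. The step you flag as the main obstacle, however, fails as you justify it.

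\textbf{The gap: the cross-covariance does not vanish step by step.} Given $\mathcal F_{i-1}$ and $X_i$, the factor $T_i-\rho_{\theta_{i-1}}(X_i)$ is \emph{not} centered. Its conditional mean is $g_{\theta_{i-1}}(\Lambda_{i-1},X_i)-\rho_{\theta_{i-1}}(X_i)$, a nonzero linear function of $\beta(\Lambda_{i-1})=\Lambda_{i-1}/\max\{\|\Lambda_{i-1}\|,C_\Lambda\}$.

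So if you compute the cross moment conditionally on $X_i$, as you propose, you get a nonzero answer at every step. Both the raw imbalance term and the $X_i$-driven part of the Poisson correction have conditional covariance with $Z^{(u)}_i$ of the form $\mathbb{E}_X[(g_\theta(\Lambda,X)-\rho_\theta(X))\psi(X)]$. Here $\psi$ is built from $\mathbb{E}[Z^{(c)}(Z^{(u)})^T\mid X]$, $\mathbb{E}[Z^{(u)}\mid X]$ and $\phi(X)$.

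These terms vanish only in the time average, which needs two ingredients your plan lacks:
\begin{enumerate}
\item A law of large numbers for the full conditional covariance $G_{\theta_n}+F_{\theta_n}+2H_{\theta_n}$ along the non-Markov chain. This is why the paper proves joint H\"older continuity of $\{H_\theta\}$ (Lemma \ref{lemma_properties_H}) and of $\theta\mapsto\pi_\theta(\cdot)$ (Theorem \ref{theorem_V_invariant_probability_Holder}), then applies Lemma \ref{lemma_convergence_average_enter}.
\item The invariance identity $\pi_\theta[g_\theta(\cdot,x)]=\rho_\theta(x)$, i.e.\ $\pi_\theta\beta\in W_\phi^\perp$ (Lemma \ref{lemma_expectation_pi_h}). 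It kills the $\pi_{\theta^*}$-average of anything linear in $\beta(\Lambda)$.
\end{enumerate}

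Lemma \ref{lemma_expression_variance} then finishes quickly:
\begin{itemize}
\item Because the equation defining $A$ is solvable, the linear map $\Lambda\mapsto -A\Lambda$ solves the Poisson equation.
\item The combined martingale increment therefore collapses to $(T-\rho)\{Z^{(c)}-A\phi/(\rho(1-\rho))\}+Z^{(u)}-\mathbb{E}Z^{(u)}$.
\item Its outer product is affine in $T\in\{0,1\}$. Hence its expectation under $\Lambda\sim\pi_{\theta^*}$, $T\sim\Bernoulli(g_{\theta^*}(\Lambda,X))$ equals its expectation under $T\sim\Bernoulli(\rho_{\theta^*}(X))$.
\item Under the latter, the cross term is zero and the diagonal block carries $\rho(1-\rho)$.
\end{itemize}
That same identity is what produces $\rho(1-\rho)$ in your imbalance block. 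It is the central idea of the step, not a detail.

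\textbf{Differences from the paper's route.}
\begin{itemize}
\item Replacing $\theta_{i-1}$ by $\theta^*$ inside $\tilde Z$ is harmless but buys nothing. The paper applies Lemma \ref{lemma_CLT} directly, and that lemma already allows the law of $(X_n,Z^{(c)}_n,Z^{(u)}_n)$ to depend on $\theta_{n-1}$ (Assumption \ref{assumption_lipschitz_continuity_conditional_expectation}, verified via Lemma \ref{lemma_lipschitz_continuity_conditional_expectation}). That verification is where the $4+\epsilon$ moment is really used.
\item Lemma \ref{lemma_CLT} also treats the $Z^{(u)}$ and imbalance pieces as one martingale from the start. Even Theorem \ref{theorem_additional_covariate_CLT} is proved this way, because its summand carries the $\theta_{n-1}$-dependent factor $1/[\rho_{\theta_{n-1}}(1-\rho_{\theta_{n-1}})]$. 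The $(g-\rho)$ part of your replacement error needs the same Poisson machinery anyway.
\item Your $\Psi_n/n$ reduction for the Hessian works. Lemma \ref{lemma_allocation_convergence} is cleaner: the invariance identity makes $\pi_\theta h_\theta$ independent of $\theta$, so it never needs $\theta_n\to\theta^*$.
\end{itemize}

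\textbf{Citation fix.} Theorems \ref{theorem_estimation_consistency} and \ref{theorem_additional_covariate_LLN} assume Assumption \ref{assumption_theta_weak_diminishing_adaptation}. That assumption does \emph{not} follow from Assumption \ref{assumption_theta_convergence_strong_diminishing_adaptation}, which supplies no polynomial rate. Cite Lemmas \ref{lemma_estimation_convergence} and \ref{lemma_allocation_convergence} instead. They need only Assumption \ref{assumption_theta_difference_average_convergence}, which is implied by $\sum_n d(\theta_n,\theta_{n+1})=o_P(N^p)$.
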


\begin{remark}
    The CBARA procedure affects only the second term in the asymptotic covariance matrix $\Sigma_{(Z)}$, through the targeted allocation ratio $\rho_{\theta^*}$ and the linear adjustment matrix $A$ arising from covariate balance.
    In particular, when $A = 0$, the asymptotic covariance matrix reduces to that of the estimator under simple randomization with the oracle targeted allocation ratio $\rho_{\theta^*}$.
    Furthermore, the optimal linear adjustment matrix can be derived as follows.
    For any direction vector $x$, the derivative of the quadratic form $x^T \Sigma_{(Z)} x$ for the covariance matrix $\Sigma_{(Z)}$ with respect to the matrix $A$ is given by
    \begin{equation}
        x^T \mathbb{E}_{(X, Z) \sim \Gamma_{X,Z}} \left[
        \left\{ Z^{(c)} - \frac{A \phi(X)}{\rho_{\theta^*}(X)(1-\rho_{\theta^*}(X))} \right\} \phi(X)^T \right] x
        \label{eq_derivative_covariance_matrix} .
    \end{equation}
    Therefore, the optimal linear adjustment matrix $A$ must satisfy that the derivative in \eqref{eq_derivative_covariance_matrix} is equal to zero for any $x$, namely
    \begin{equation}
        A \mathbb{E} \left[ [\rho_{\theta^*}(X)(1-\rho_{\theta^*}(X))]^{-1} \phi(X) \phi(X)^T \right]
        = \mathbb{E} \left[ Z^{(c)} \phi(X)^T \right]
        \label{eq_derivative_covariance_matrix_optimal} .
    \end{equation}
    If the normalization constant $C_{\theta^*} \geq \|\phi(X)\|/[\rho_{\theta^*}(X)(1-\rho_{\theta^*}(X))]$ almost surely, then the condition \eqref{eq_theorem_estimation_asymptotic_normality_matrix_A} reduces to \eqref{eq_derivative_covariance_matrix_optimal}.
    Therefore, we conclude that, under the CBARA procedure, the optimal linearly adjusted covariance matrix is achievable, and this covariance matrix is guaranteed to be no larger than that under simple randomization with the oracle targeted allocation ratio.
\end{remark}

\subsection{Allocation Parameter Update Mechanism}

\label{subsec_parameter_update}

The theorems in this subsection provide a rigorous justification of the two central assumptions introduced in Subsection \ref{subsubsec_introduction_parameter_update}, namely Assumptions \ref{assumption_theta_weak_diminishing_adaptation} and \ref{assumption_theta_convergence_strong_diminishing_adaptation}.
Theorem \ref{theorem_allocation_parameter_update_weak} is used to verify Assumption \ref{assumption_theta_weak_diminishing_adaptation}, while Theorem \ref{theorem_allocation_parameter_update_strong} is used to establish Assumption \ref{assumption_theta_convergence_strong_diminishing_adaptation}.
\begin{theorem}
    \label{theorem_allocation_parameter_update_weak}

    If the allocation parameter $\theta_n$ is updated according to \eqref{eq_allocation_parameter_update_mechanism_1} under Assumption \ref{assumption_theta_compact}, or according to \eqref{eq_allocation_parameter_update_mechanism_2} when the parameter space $\Theta$ is a convex subset of a Euclidean space, then Assumption \ref{assumption_theta_weak_diminishing_adaptation} holds for any $q \in (0,1]$.
\end{theorem}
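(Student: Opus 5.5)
The argument is deterministic: in both mechanisms the total variation $\sum_{n=0}^{N-1}\|\theta_n-\theta_{n+1}\|$ is bounded by a nonrandom quantity that is $o(N)$. Fix $\delta>0$. It then suffices to exhibit $N_\delta$ such that $\frac1N\sum_{n=0}^{N-1}\|\theta_n-\theta_{n+1}\|\le\delta$ surely for all $N\ge N_\delta$. For such $N$ the probability in Assumption \ref{assumption_theta_weak_diminishing_adaptation} equals $0<c_\delta N^{-q}$. For the finitely many $N<N_\delta$ the probability is at most $1$, so the choice $c_\delta=N_\delta^{q}$ gives $c_\delta N^{-q}\ge 1$ for $N<N_\delta$. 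This works for every $q\in(0,1]$ at once.

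\emph{Mechanism \eqref{eq_allocation_parameter_update_mechanism_1}.} By Assumption \ref{assumption_theta_compact}, $\Theta$ is compact, so $D:=\diam(\Theta)<\infty$. Since $\eta_n\in\Theta$ and $\theta_0\in\Theta$, every $\theta_n$ lies in $\Theta$ (by induction). The increment $\theta_{n+1}-\theta_n$ vanishes unless $n+1\in S$, and otherwise has norm at most $D$. Therefore
\begin{equation*}
\frac1N\sum_{n=0}^{N-1}\|\theta_n-\theta_{n+1}\|\le D\,\frac{\#(S\cap\{1,\dots,N\})}{N}\longrightarrow 0,
\end{equation*}
and this is a deterministic sequence. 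It is at most $\delta$ for all $N\ge N_\delta$.

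\emph{Mechanism \eqref{eq_allocation_parameter_update_mechanism_2}.} First, convexity keeps the iterates in $\Theta$. The update is $\theta_n=(1-\lambda_n)\theta_{n-1}+\lambda_n\eta_n$ with $\lambda_n=\min\{1,C_{\mathrm{clip},n}/\|\eta_n-\theta_{n-1}\|\}\in[0,1]$, using the convention $\theta_n=\theta_{n-1}$ when $\eta_n=\theta_{n-1}$. Since $\Theta$ is convex and contains both $\theta_{n-1}$ and $\eta_n$, induction gives $\theta_n\in\Theta$, so the estimator and the rule are well defined. Second, by construction $\|\theta_n-\theta_{n-1}\|\le C_{\mathrm{clip},n}$. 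Since $C_{\mathrm{clip},n}\to0$, the Cesàro means $\frac1N\sum_{n=1}^{N}C_{\mathrm{clip},n}$ also tend to $0$. The average variation is therefore dominated by a deterministic null sequence, and the same choice of $N_\delta$ and $c_\delta$ applies. Compactness is not needed here; the $C_{\mathrm{clip},n}$ are finite constants.

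No step is a real obstacle. The only point needing care is showing the iterates remain in $\Theta$, which uses convexity for the clipped rule and the definition $\eta_n\in\Theta$ for the rare-update rule. Another is handling the degenerate direction when $\eta_n=\theta_{n-1}$. The last is the bookkeeping that turns an eventually-zero probability into the polynomial bound $c_\delta N^{-q}$.
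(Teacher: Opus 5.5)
Your proposal is correct and follows the paper's proof. Both bound the average variation by the deterministic null sequences $\diam(\Theta)\,\#(S\cap\{1,\dots,N\})/N$ and $N^{-1}\sum_{n=1}^{N}C_{\mathrm{clip},n}$; you additionally make explicit the routine choice $c_\delta=N_\delta^{q}$ and the fact that the iterates remain in $\Theta$, which the paper leaves implicit.
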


\begin{theorem}
    \label{theorem_allocation_parameter_update_strong}
    Suppose that Assumption \ref{assumption_theta_compact} holds, the estimate sequence $\{\eta_n\}$ converges to $\eta^*$ in probability and there exists some $p \in (0,1/2)$ such that
    \begin{equation*}
        \sum_{n=0}^{N-1} \| \eta_n - \eta_{n+1} \|
        = o_P(N^p) .
    \end{equation*}
    If the allocation parameter $\theta_n$ is updated according to \eqref{eq_allocation_parameter_update_mechanism_1}, or according to \eqref{eq_allocation_parameter_update_mechanism_2} when the parameter space $\Theta$ is a convex subset of a Euclidean space and $\sum_{n=1}^\infty C_{\mathrm{clip}, n} = \infty$ for \eqref{eq_allocation_parameter_update_mechanism_2}, then Assumption \ref{assumption_theta_convergence_strong_diminishing_adaptation} is satisfied with $\theta^* = \eta^*$ and the same exponent $p$.
\end{theorem}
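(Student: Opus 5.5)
The proof treats the two update mechanisms separately. Throughout, write $\Delta_n=\|\eta_n-\eta_{n-1}\|$ and $D=\diam(\Theta)<\infty$, which is finite by Assumption \ref{assumption_theta_compact}. The hypothesis gives $\sum_{n\le N}\Delta_n=o_P(N^p)$.

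\textbf{Mechanism \eqref{eq_allocation_parameter_update_mechanism_1}.} Let $s(n)=\max(S\cap\{1,\dots,n\})$, so that $\theta_n=\eta_{s(n)}$ once $n\ge\min S$. Since $S$ is infinite, $s(n)\to\infty$ deterministically, and hence $\theta_n\xrightarrow{\mathbb{P}}\eta^*$. The sequence $\theta$ only jumps at times $n\in S$. Each jump $\|\eta_{s_k}-\eta_{s_{k-1}}\|$ is bounded, by the triangle inequality, by $\sum_{s_{k-1}<j\le s_k}\Delta_j$. The first jump from $\theta_0$ is at most $D$. Summing gives
\begin{equation*}
\sum_{n<N}\|\theta_n-\theta_{n+1}\|\le D+\sum_{n\le N}\Delta_n=o_P(N^p),
\end{equation*}
since $D$ is a constant and $p>0$.

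\textbf{Mechanism \eqref{eq_allocation_parameter_update_mechanism_2}: variation.} Convexity of $\Theta$ keeps $\theta_n$ in $\Theta$, because $\theta_n$ lies on the segment from $\theta_{n-1}$ to $\eta_n$. Let $e_n=\|\theta_n-\eta_n\|$. Whether or not clipping is active, $\theta_n$ lies on that segment, so
\begin{equation*}
\|\theta_n-\theta_{n-1}\|=\|\eta_n-\theta_{n-1}\|-e_n\le e_{n-1}+\Delta_n-e_n .
\end{equation*}
Telescoping yields $\sum_{n\le N}\|\theta_n-\theta_{n-1}\|\le e_0+\sum_{n\le N}\Delta_n\le D+o_P(N^p)$. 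Thus the strong diminishing adaptation part holds with the same $p$, and this step does not use $\sum_n C_{\mathrm{clip},n}=\infty$.

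\textbf{Mechanism \eqref{eq_allocation_parameter_update_mechanism_2}: convergence.} This is the main obstacle. It needs $\sum_n C_{\mathrm{clip},n}=\infty$, but convergence in probability of $\eta_n$ only controls one time at a time, not uniformly over long windows. I would use the Lyapunov function $V_n=\|\theta_n-\eta^*\|^2$. Fix $\varepsilon>0$ and call step $n$ good if $\|\eta_n-\eta^*\|\le\varepsilon/4$.

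On a good step with clipping active, set $c=C_{\mathrm{clip},n}$ and $u=(\eta_n-\theta_{n-1})/\|\eta_n-\theta_{n-1}\|$. Then
\begin{equation*}
u^T(\eta^*-\theta_{n-1})\ge\|\eta_n-\theta_{n-1}\|-\varepsilon/4\ge\sqrt{V_{n-1}}-\varepsilon/2 ,
\end{equation*}
so $V_n\le V_{n-1}-2c(\sqrt{V_{n-1}}-\varepsilon/2)+c^2$. Consequently, if $V_{n-1}>\varepsilon^2$, then $V_n\le V_{n-1}-c\varepsilon/2$ once $c\le\varepsilon/2$.

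On a good step without clipping, $V_n\le\varepsilon^2/16$. On any bad step, $V_n\le V_{n-1}+C_{\mathrm{clip},n}(2D+C_{\mathrm{clip},n})$. Moreover, once $V_{n-1}\le\varepsilon^2$, a good step keeps $V_n\le(\varepsilon+C_{\mathrm{clip},n})^2$.

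Taking expectations of the positive part $W_n=(V_n-(2\varepsilon)^2)^+$ should give a recursion of the form
\begin{equation*}
\mathbb{E}W_n\le \mathbb{E}W_{n-1}-\tfrac{\varepsilon}{2}C_{\mathrm{clip},n}\,\mathbb{P}(W_{n-1}>0)+3D\,C_{\mathrm{clip},n}\,\mathbb{P}(\text{step } n \text{ bad}).
\end{equation*}
Here $\mathbb{P}(\text{step } n\text{ bad})\to0$ by consistency of $\eta_n$, and $\sum_n C_{\mathrm{clip},n}=\infty$. A standard argument for such recursions then shows $\mathbb{P}(W_n>0)$ cannot stay above any $\delta>0$ along a tail. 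It also shows that $\mathbb{E}W_n$ eventually becomes small, since the positive increments are $o(C_{\mathrm{clip},n})$ while the drift is of order $C_{\mathrm{clip},n}$ whenever $\mathbb{P}(W_{n-1}>0)$ is not small.

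This gives $\|\theta_n-\eta^*\|\le 2\varepsilon+o_P(1)$, and since $\varepsilon$ is arbitrary, $\theta_n\xrightarrow{\mathbb{P}}\eta^*=\theta^*$. Making this last step rigorous, carefully tracking $\mathbb{E}W_n$ rather than $\mathbb{P}(W_n>0)$, is where I expect the real work to lie.
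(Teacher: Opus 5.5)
Your variation bounds for both mechanisms, and your convergence argument for \eqref{eq_allocation_parameter_update_mechanism_1}, are correct and match Lemma \ref{lemma_allocation_parameter_update}. The paper telescopes exactly as you do; it sets $\eta_0=\theta_0$, which removes your additive $D$. For convergence under \eqref{eq_allocation_parameter_update_mechanism_2} you also follow the paper's route: an expectation recursion for a truncated distance to $\eta^*$, with drift of order $C_{\mathrm{clip},n}$ on good steps and increments of order $C_{\mathrm{clip},n}$ on bad steps. However, the argument is not closed, and the displayed recursion is not valid as written. Because $W_n$ is truncated at zero, a good step with $0<W_{n-1}<\tfrac{\varepsilon}{2}C_{\mathrm{clip},n}$ lowers $W$ only by $W_{n-1}$. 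What your case analysis actually gives on a good step (for $n$ large) is $W_n\le W_{n-1}-\min\{W_{n-1},\tfrac{\varepsilon}{2}C_{\mathrm{clip},n}\}$, not a decrease of $\tfrac{\varepsilon}{2}C_{\mathrm{clip},n}$ on the whole event $\{W_{n-1}>0\}$. Even after this correction, the step you call a ``standard argument'' is the missing one: a drift proportional to a \emph{probability} does not by itself force $\mathbb{E}W_n\to0$.

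The fix is to use compactness once more to turn the drift into a multiple of $\mathbb{E}W_{n-1}$. Since $0\le W_{n-1}\le D^2$ and $a_n:=\tfrac{\varepsilon}{2}C_{\mathrm{clip},n}\le D^2$ eventually, $\min\{W_{n-1},a_n\}\ge a_nW_{n-1}/D^2$. The bound $\mathbb{E}[\mathbb{I}(\text{step } n\text{ good})\min\{W_{n-1},a_n\}]\ge\mathbb{E}\min\{W_{n-1},a_n\}-a_n\mathbb{P}(\text{step } n\text{ bad})$ handles the dependence between the good event and $W_{n-1}$. Together with your bad-step bound this gives
\begin{equation*}
\mathbb{E}W_n\le\Bigl(1-\tfrac{\varepsilon}{2D^2}C_{\mathrm{clip},n}\Bigr)\mathbb{E}W_{n-1}+(3D+\varepsilon)\,C_{\mathrm{clip},n}\,\mathbb{P}(\text{step } n\text{ bad}).
\end{equation*}
Iterate from a large $N$ and use $\prod_k(1-\tfrac{\varepsilon}{2D^2}C_{\mathrm{clip},k})\le\exp(-\tfrac{\varepsilon}{2D^2}\sum_k C_{\mathrm{clip},k})\to0$. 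This yields $\limsup_n\mathbb{E}W_n\le\tfrac{2D^2(3D+\varepsilon)}{\varepsilon}\sup_{m\ge N}\mathbb{P}(\text{step } m\text{ bad})$, which tends to $0$ as $N\to\infty$. Then $\mathbb{P}(\|\theta_n-\eta^*\|>3\varepsilon)\le\mathbb{E}W_n/(5\varepsilon^2)\to0$.

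The paper does the same with $b_n=\max\{\|\eta^*-\theta_n\|,4\epsilon/3\}$. It builds in a buffer by working on the event $\{b_{n-1}\ge3\epsilon/2\}$, where the full decrease $C_{\mathrm{clip},n}/6$ survives the truncation. It then uses boundedness through $\mathbb{P}(b_{n-1}\ge3\epsilon/2)\ge(\mathbb{E}b_{n-1}-3\epsilon/2)/(\diam(\Theta)+\epsilon/2)$. This gives the linear recursion $\mathbb{E}b_n\le(1-c_n)\mathbb{E}b_{n-1}+(p_n+3\epsilon/2)c_n$, with $c_n\propto C_{\mathrm{clip},n}$ and $p_n\propto\mathbb{P}(\|\eta_n-\eta^*\|\ge\epsilon/3)$.
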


The assumptions of Theorem \ref{theorem_allocation_parameter_update_strong} on the estimate sequence $\{\eta_n\}$ are satisfied if the allocation parameter sequence $\{\theta_n\}$ fulfills Assumption \ref{assumption_theta_weak_diminishing_adaptation} with $q > 1/2$, as guaranteed by Theorems \ref{theorem_estimation_consistency} and \ref{theorem_estimation_difference}.
Thus, with Theorems \ref{theorem_allocation_parameter_update_weak} and \ref{theorem_allocation_parameter_update_strong} established, we complete the development of the theory in this section.
The resulting logical dependencies among the assumptions on the allocation parameter sequence and the theorems are summarized in Figure \ref{figure_theorem_dependency}.

\section{Proof Strategy}
\label{sec_proof_strategy}

In this section, we outline the proof strategy.
A key theoretical contribution of this article lies in the new discrepancy for transition kernels introduced in Subsection \ref{subsec_proof_strategy_negligibility_remaining_term}, which is well aligned with the setting considered here.
Subsections \ref{subsec_connection} and \ref{subsec_proof_strategy_LLN_CLT} present the theoretical background of the CBARA procedure.
In particular, the procedure can be described via transition kernels, which allows us to employ tools from Markov chain theory.
The discrepancy introduced in Subsection \ref{subsec_proof_strategy_negligibility_remaining_term} is then used to handle the disturbance caused by the variation of $\{\theta_n\}$.
Subsection \ref{subsec_proof_strategy_boundedness} outlines the approach for establishing that $\Lambda = O_P(1)$, with some technical details omitted.
Subsection \ref{subsec_explicit_form} provides several properties of the allocation function design, which are used to derive the explicit form of the variance.

\subsection{Connection of the CBARA Procedure and Transition Kernels}

\label{subsec_connection}

The transition kernel is commonly used to characterize the dynamics of a Markov chain.
Although the CBARA procedure considered in this article is not Markovian in general, it can become Markovian when the allocation parameter $\theta_n$ is forcibly fixed at a constant value.
For each $\theta \in \Theta$, define a transition kernel $P_\theta$ on the state space $\mathrm{X}$ by
\begin{equation*}
    P_\theta(\Lambda,h) = \int \left[ g_\theta(\Lambda, X)h(\Lambda+\phi(X)/\rho_\theta(X)) + [1-g_\theta(\Lambda, X)]h(\Lambda-\phi(X)/(1-\rho_\theta(X))) \right] \Gamma( \mathrm{d} X) ,
\end{equation*}
for any integrable function $h$.
Then under the forcibly fixed parameter sequence $\{\theta_n = \theta\}_{n \in \mathbb{N}}$, the stochastic process $\{\Lambda_n\}_{n \in \mathbb{N}}$ becomes a Markov chain with transition kernel $P_\theta$.
For the original CBARA procedure, the following lemma establishes its connection with the family of transition kernels $\{P_\theta\}_{\theta \in \Theta}$.
\begin{lemma}
    \label{lemma_transition_kernel_main}
    
    Let the function $h: \mathrm{X} \to \mathbb{R}$ be any integrable function.
    Then, under the CBARA procedure, for any $n \in \mathbb{N}$,
    \begin{equation*}
        \mathbb{E} \left[ h(\Lambda_{n+1}) \mid \mathcal{F}_n \right]
        = P_{\theta_n}(\Lambda_n,h) .
    \end{equation*}
\end{lemma}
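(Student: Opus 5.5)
The statement is Lemma \ref{lemma_transition_kernel_main}, and we prove it by a direct conditioning computation. The plan has three parts: write $\Lambda_{n+1}$ explicitly in terms of $\Lambda_n$, $\theta_n$, $X_{n+1}$ and $T_{n+1}$; condition first on $(\mathcal{F}_n, X_{n+1})$ to integrate out $T_{n+1}$; then integrate out $X_{n+1}$ using its independence from $\mathcal{F}_n$.

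\textbf{Step 1: the increment.} From \eqref{eq_definition_imbalance_vector}, $\Lambda_{n+1}=\Lambda_n+\frac{(T_{n+1}-\rho_{\theta_n}(X_{n+1}))\phi(X_{n+1})}{\rho_{\theta_n}(X_{n+1})(1-\rho_{\theta_n}(X_{n+1}))}$. On $\{T_{n+1}=1\}$ the increment simplifies to $\phi(X_{n+1})/\rho_{\theta_n}(X_{n+1})$. On $\{T_{n+1}=0\}$ it simplifies to $-\phi(X_{n+1})/(1-\rho_{\theta_n}(X_{n+1}))$. Consequently
\begin{equation*}
h(\Lambda_{n+1}) = T_{n+1}\, h\bigl(\Lambda_n+\phi(X_{n+1})/\rho_{\theta_n}(X_{n+1})\bigr) + (1-T_{n+1})\, h\bigl(\Lambda_n-\phi(X_{n+1})/(1-\rho_{\theta_n}(X_{n+1}))\bigr).
\end{equation*}

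\textbf{Step 2: integrate out $T_{n+1}$.} Both $h$-terms are $\sigma(\mathcal{F}_n,X_{n+1})$-measurable, because $\Lambda_n$ and $\theta_n$ are $\mathcal{F}_n$-measurable. Taking conditional expectation given $(\mathcal{F}_n,X_{n+1})$ and using \eqref{eq_T_conditional_1}--\eqref{eq_T_conditional_0}, the factor $T_{n+1}$ is replaced by $g_{\theta_n}(\Lambda_n,X_{n+1})$ and $1-T_{n+1}$ by $1-g_{\theta_n}(\Lambda_n,X_{n+1})$.

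\textbf{Step 3: integrate out $X_{n+1}$.} By Assumption \ref{assumption_filtration}, $X_{n+1}$ is independent of $\mathcal{F}_n$, and by Assumption \ref{assumption_iid} it has law $\Gamma$. Write the result of Step 2 as $H(\Lambda_n,\theta_n,X_{n+1})$ for a jointly measurable $H$. The freezing lemma for conditional expectations then gives $\mathbb{E}[H(\Lambda_n,\theta_n,X_{n+1})\mid\mathcal{F}_n]=\int H(\Lambda_n,\theta_n,x)\,\Gamma(\mathrm{d}x)$. This integral is exactly $P_{\theta_n}(\Lambda_n,h)$. The tower property combines Steps 2 and 3 and completes the proof.

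\textbf{Main obstacle.} The only real care point is technical: justifying integrability and joint measurability so that the freezing lemma applies. Measurability holds because $\rho_\theta$ and $g_\theta$ are measurable in $(\theta,x)$; this follows from their explicit form \eqref{eq_CARA_allocation_function} and the Lipschitz continuity in $\theta$. Integrability is covered by the hypothesis that $h$ is integrable. If integrability is in doubt, we first prove the identity for bounded $h$ and then extend it to nonnegative $h$ by monotone convergence, and from there to integrable $h$.
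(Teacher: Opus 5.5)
Your proposal is correct and follows the paper's proof essentially step for step: the paper also writes $\Lambda_{n+1}$ via the two possible increments, conditions on $(\mathcal{F}_n, X_{n+1})$ to replace $T_{n+1}$ by $g_{\theta_n}(\Lambda_n,X_{n+1})$ using Assumption~\ref{assumption_sampling}, and then integrates $X_{n+1}$ against $\Gamma$ using its independence from $\mathcal{F}_n$ (Assumption~\ref{assumption_filtration}); your freezing-lemma formulation just makes that last step explicit. One small caveat on your measurability remark: the Lipschitz condition in Assumption~\ref{assumption_lipschitz_continuity_functions_kernels} holds only in $L^2(\Gamma)$, so it does not by itself yield joint measurability of $(\theta,x)\mapsto\rho_\theta(x)$; like the paper, treat this as part of the standing setup, since it is already needed for \eqref{eq_T_conditional_1} to make sense.
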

In this article, we refer to a stochastic process $\{((\Lambda_n, \theta_n),\mathcal{F}_n)\}_{n \in \mathbb{N}}$ that satisfies the conclusions of Lemma \ref{lemma_transition_kernel_main} for any integrable function $h$ as a pseudo-Markov chain.
Although $\{\Lambda_n\}$ is not a Markov chain, we can still employ tools from Markov chain theory, such as the Poisson equation, as outlined in Subsection \ref{subsec_proof_strategy_LLN_CLT}.

\subsection{Proof Strategy of Theorem \ref{theorem_boundedness}}

\label{subsec_proof_strategy_boundedness}

Theorem \ref{theorem_boundedness} relies on the inequality
\begin{equation*}
    \mathbb{E} \left[ e^{\lambda_1 \|\Lambda_{n+1}\|} \mid \mathcal{F}_n \right] \leq \beta e^{\lambda_1 \|\Lambda_n\|} + b
\end{equation*}
for some positive constants $\beta<1$, $b$ and $\lambda_1$.
This inequality can be obtained by showing that under any fixed parameter sequence $\{\theta_n = \theta\}_{n \in \mathbb{N}}$, $\mathbb{E}_\theta \left[ e^{\lambda_1 (\|\Lambda_{n+1}\| - \|\Lambda_n\|)} \mid \mathcal{F}_n \right]$ is bounded by a constant less than $1$ when $\lambda_1$ is sufficiently small and $\|\Lambda_n\|$ is sufficiently large, with these bounds chosen uniformly over $\theta$.
To achieve this, we linearize both the exponential function and the norm function $x \mapsto |x|$ to prove
\begin{equation*}
    \mathbb{E}_\theta \left[ e^{\lambda_1 (\|\Lambda_{n+1}\| - \|\Lambda_n\|)} \right]
    \approx
    1 + \lambda_1 \mathbb{E}_\theta \left[ \|\Lambda_{n+1}\| - \|\Lambda_n\| \right]
    \approx 1 + \lambda_1 \mathbb{E}_\theta \left[ (\Lambda_{n+1} - \Lambda_n)^T \frac{\Lambda_n}{\|\Lambda_n\|} \right] ,
\end{equation*}
and establish that for any $\Lambda \neq 0$,
\begin{equation*}
    \mathbb{E}_\theta \left[ (\Lambda_{n+1} - \Lambda_n)^T \frac{\Lambda_n}{\|\Lambda_n\|} \mid \Lambda_n = \Lambda \right]
    = \mathbb{E}_{X \sim \Gamma} \left[
    \frac{g_\theta(\Lambda,X) - \rho_\theta(X)}{\rho_\theta(X)(1-\rho_\theta(X))} \cdot
    \frac{\phi(X)^T \Lambda}{\|\Lambda\|} \right]
    \leq -\Delta .
\end{equation*}

\subsection{Proof Strategy of Theorems \ref{theorem_additional_covariate_LLN}--\ref{theorem_estimation_asymptotic_normality}}
\label{subsec_proof_strategy_LLN_CLT}

Theorems \ref{theorem_additional_covariate_LLN}-\ref{theorem_estimation_asymptotic_normality} rely on the asymptotic behavior of the sum
\begin{equation}
    \sum_{n=1}^N \text{Term}_n
    = \underbrace{ \sum_{n=1}^N \left[ \text{Term}_n - \mathbb{E} \left[ \text{Term}_n \mid \mathcal{F}_{n-1} \right] \right] }_{\text{martingale term}}
    + \underbrace{ \sum_{n=1}^N \mathbb{E} \left[ \text{Term}_n \mid \mathcal{F}_{n-1} \right] }_{\text{dependent term}} 
    \label{eq_decomposition_martingale_dependent} ,
\end{equation}
where $\text{Term}_n$ denotes a generic term in the summation, and the conditional expectation $\mathbb{E} \left[ \text{Term}_n \mid \mathcal{F}_{n-1} \right]$ can be expressed in terms of $\theta_{n-1}$, $\Lambda_{n-1}$.
An example of such a generic term $\text{Term}_n$ is the summand in the definition of $\Psi_n$ in \eqref{eq_definition_imbalance_vector_additional},
\begin{equation*}
    \text{Term}_n = \frac{(T_n-\rho_{\theta_{n-1}}(X_n)) Z_n}{\rho_{\theta_{n-1}}(X_n)(1-\rho_{\theta_{n-1}}(X_n))} .
\end{equation*}
Its conditional expectation satisfies
\begin{equation*}
    \mathbb{E} \left[ \text{Term}_n \mid \mathcal{F}_{n-1} \right]
    = \mathbb{E}_{(X, Z) \sim \Gamma_{X,Z}} \left[ \frac{(g_{\theta_{n-1}}(\Lambda_{n-1}, X)-\rho_{\theta_{n-1}}(X)) Z}{\rho_{\theta_{n-1}}(X)(1-\rho_{\theta_{n-1}}(X))} \right] ,
\end{equation*}
which is a function of $\theta_{n-1}$ and $\Lambda_{n-1}$.
Therefore, to prove Theorems \ref{theorem_additional_covariate_LLN} and \ref{theorem_additional_covariate_CLT}, we establish the law of large numbers and the central limit theorem for $\Psi_N = \sum_{n=1}^N \text{Term}_n$.

Denote the conditional expectation function
\begin{equation*}
    h_{\theta_{n-1}}(\Lambda_{n-1}) = \mathbb{E} \left[ \text{Term}_n \mid \mathcal{F}_{n-1} \right] .
\end{equation*}
Then the martingale term in \eqref{eq_decomposition_martingale_dependent} can be analyzed using martingale techniques, while the dependent term can be analyzed by leveraging the properties of the pseudo-Markov chain $\{\Lambda_n\}$ and the transition kernel $P_\theta$.
Specifically, under geometric ergodicity of the Markov chain with transition kernel $P_\theta$ and invariant probability $\pi_\theta$, the Poisson equation associated with the function $h_\theta$,
\begin{equation*}
    \hat{h}_\theta - P_\theta \hat{h}_\theta = h_\theta - \pi_\theta h_\theta ,
\end{equation*}
admits a solution given by
\begin{equation*}
    \hat{h}_\theta = \sum_{n=0}^\infty (P_\theta^n - \pi_\theta) (h_\theta) .
\end{equation*}

The Poisson equation involves the term $h_\theta$, which corresponds to the dependent term in \eqref{eq_decomposition_martingale_dependent}.
It also involves the term $\pi_\theta h_\theta$, which does not appear in \eqref{eq_decomposition_martingale_dependent}.
In the central limit theorem considered in this article, the corresponding function $h_\theta$ satisfies $\pi_\theta h_\theta \equiv 0$, so the term $\pi_\theta h_\theta$ in the above expression can be omitted.
In contrast, in the law of large numbers required in this article, the quantity $\pi_{\theta_n} h_{\theta_n}$ converges to $\pi_{\theta^*} h_{\theta^*}$ as $\theta_n \xrightarrow{\mathbb{P}} \theta^*$.
Hence, the term $\pi_\theta h_\theta$ can be treated as a convergent centering term and can be omitted.

Therefore, the centered dependent term in \eqref{eq_decomposition_martingale_dependent} can be rewritten as
\begin{align*}
    & \quad \sum_{n=1}^N \mathbb{E} \left[ \text{Term}_n \mid \mathcal{F}_{n-1} \right]
    - \sum_{n=1}^N \left[ \pi_{\theta_{n-1}} h_{\theta_{n-1}} \right]
    = \sum_{n=1}^N \left[ h_{\theta_{n-1}}(\Lambda_{n-1}) - \pi_{\theta_{n-1}} h_{\theta_{n-1}} \right] \\
    &= \sum_{n=1}^N \left[ \hat{h}_{\theta_{n-1}}(\Lambda_{n-1}) - P_{\theta_{n-1}} \hat{h}_{\theta_{n-1}}(\Lambda_{n-1}) \right]
    = \sum_{n=0}^{N-1} \left[ \hat{h}_{\theta_n}(\Lambda_n) - P_{\theta_n} \hat{h}_{\theta_n}(\Lambda_n) \right] \\
    &= \sum_{n=0}^{N-1} \left[ \hat{h}_{\theta_n}(\Lambda_{n+1}) - P_{\theta_n} \hat{h}_{\theta_n}(\Lambda_n) \right] + \sum_{n=0}^{N-1} \left[ \hat{h}_{\theta_n}(\Lambda_n) - \hat{h}_{\theta_n}(\Lambda_{n+1}) \right] \\
    &= \underbrace{ \sum_{n=0}^{N-1} \left[ \hat{h}_{\theta_n}(\Lambda_{n+1}) - P_{\theta_n} \hat{h}_{\theta_n}(\Lambda_n) \right] }_{\text{martingale term}} \\
    & \quad + \underbrace{ \sum_{n=0}^{N-1} \left[ \hat{h}_{\theta_{n+1}}(\Lambda_{n+1}) - \hat{h}_{\theta_n}(\Lambda_{n+1}) \right]
    + \left[ \hat{h}_{\theta_0}(\Lambda_0) - \hat{h}_{\theta_N}(\Lambda_N) \right] }_{\text{remaining term}} .
\end{align*}
The term $\hat{h}_{\theta_n}(\Lambda_{n+1}) - P_{\theta_n} \hat{h}_{\theta_n}(\Lambda_n)$ forms a martingale difference because, by Lemma \ref{lemma_transition_kernel_main}, the conditional distribution of $\Lambda_{n+1}$ given $\mathcal{F}_n$ is $P_{\theta_n}(\Lambda_n, \cdot)$.
This term can therefore be analyzed using martingale techniques.
Therefore, the law of large numbers and the central limit theorem for the summation in \eqref{eq_decomposition_martingale_dependent} can be established by analyzing the two martingale terms and verifying that the remaining term is negligible.
As a remark, we note that the degree of negligibility required differs for the law of large numbers and the central limit theorem.

\subsection{Negligibility of the Remaining Term}

\label{subsec_proof_strategy_negligibility_remaining_term}

Let the distance on the parameter space be defined by $d(\theta, \theta^\prime) := \|\theta - \theta^\prime\|$.
With Assumptions \ref{assumption_theta_weak_diminishing_adaptation} or \ref{assumption_theta_convergence_strong_diminishing_adaptation}, the negligibility of the remaining term follows from the bound on $\hat{h}_{\theta_{n+1}} - \hat{h}_{\theta_n}$ in terms of $d(\theta_n, \theta_{n+1})$ together with the boundedness in probability of $\Lambda_n$.
The two parts of the remaining term introduced in Subsection \ref{subsec_proof_strategy_LLN_CLT}, $\sum_{n=0}^{N-1} \left[ \hat{h}_{\theta_{n+1}}(\Lambda_{n+1}) - \hat{h}_{\theta_n}(\Lambda_{n+1}) \right]$ and $\left[ \hat{h}_{\theta_0}(\Lambda_0) - \hat{h}_{\theta_N}(\Lambda_N) \right]$, have different levels of analytical difficulty.
The latter term can be handled straightforwardly using the bound of $\hat{h}_\theta$ together with the boundedness in probability of $\Lambda_n$.
However, the former term should be analyzed in the following way.

Note that $\hat{h}_\theta = \sum_{n=0}^\infty (P_\theta^n - \pi_\theta) (h_\theta)$, and the difference $\hat{h}_\theta(\Lambda) - \hat{h}_{\theta^\prime}(\Lambda)$ for different parameters $\theta$ and $\theta^\prime$ can be expressed as
\begin{equation}
    \hat{h}_\theta(\Lambda) - \hat{h}_{\theta^\prime}(\Lambda)
    = \sum_{n=0}^\infty \left[ (P_\theta^n h_\theta)(\Lambda)
    - (P_{\theta^\prime}^n h_{\theta^\prime})(\Lambda)
    - \pi_\theta h_\theta + \pi_{\theta^\prime} h_{\theta^\prime} \right] \label{eq_difference_Poisson_equation_summation} .
\end{equation}
To control each term in terms of the distance $d(\theta, \theta^\prime)$, we need to relate a discrepancy between the $n$-step transition kernels $P_\theta^n$ and $P_{\theta^\prime}^n$ to a discrepancy between the one-step kernels $P_\theta$ and $P_{\theta^\prime}$, and then relate the latter to $d(\theta, \theta^\prime)$.

\subsubsection{Lipschitz Continuity of the Family $\{P_\theta\}_{\theta \in \Theta}$ with Respect to $\theta$}

To characterize the continuity of the transition kernels $\{P_\theta\}_{\theta \in \Theta}$, we first define a suitable discrepancy measure between kernels.
A standard choice is the $V$-norm, as used in the adaptive MCMC literature \cite{fortConvergenceAdaptiveInteracting2011,fortCentralLimitTheorem2014}.
The $V$-norm between the transition kernels $P_\theta$ and $P_{\theta^\prime}$ is measured by
\begin{equation*}
    \sup_{\Lambda} V^{-1}(\Lambda)\|P_\theta(\Lambda, \cdot) - P_{\theta^\prime}(\Lambda, \cdot)\|_V
\end{equation*}
where $V \geq 1$ is a Lyapunov function and $\|\mu\|_V := \sup_{|f| \leq V} |\mu(f)|$ defines a norm on the measure $\mu$.
This discrepancy is also adopted in the framework of the CAR procedure \cite{fangGeneralNonMarkovianFramework2026}.
However, this discrepancy may not be suitable for the CBARA procedure, as we cannot control the $V$-norm between the transition kernels $P_\theta$ and $P_{\theta^\prime}$ using $d(\theta, \theta^\prime)$.
The underlying reason why the $V$-norm is unsuitable is discussed in detail in Subsubsection \ref{subsubsec_limitations_V_norm}.

To address this issue, we develop an alternative way to characterize the discrepancy between the transition kernels $P_\theta$ and $P_{\theta^\prime}$ using $d(\theta, \theta^\prime)$.
\begin{definition}[Coupled Robust Lipschitz Continuity of Transition Kernels]
    \label{definition_coupled_robust_lipschitz_continuity_kernels_main}

    Let $P$ and $Q$ be transition probability kernels on $(\mathrm{X}, \mathcal{X})$. 
    We say that $P$ and $Q$ are \emph{$(L_P,\tau,\epsilon)$-coupled robustly Lipschitz continuous} if there exists a coupling kernel
    \begin{equation*}
        K: \mathrm{X}^2 \times \mathcal{X}^{\otimes 2} \to [0,1]
    \end{equation*}
    such that, for all $\Lambda, \Lambda^\prime \in \mathrm{X}$,
    \begin{align*}
        K(\Lambda, \Lambda^\prime; A \times \mathrm{X})
        & \leq P(\Lambda, A), 
        && A \in \mathcal{X}, \\
        K(\Lambda, \Lambda^\prime; \mathrm{X} \times B)
        & \leq Q(\Lambda^\prime, B), 
        && B \in \mathcal{X},
    \end{align*}
    and the following bounds hold:
    \begin{align*}
        0 \leq 1 - K(\Lambda, \Lambda^\prime; \mathrm{X} \times \mathrm{X})
        & \leq L_P d(\Lambda, \Lambda^\prime) + \tau , \\
        \int d(u,v) K(\Lambda, \Lambda^\prime; \mathrm{d} u \times \mathrm{d} v)
        &\leq d(\Lambda, \Lambda^\prime) + \epsilon .
    \end{align*}
\end{definition}
Definition \ref{definition_coupled_robust_lipschitz_continuity_kernels_main} shows that, for any $\Lambda$ and $\Lambda^\prime$, and allowing for a mass deficiency of size $\epsilon$, the generalized Wasserstein distance between the measures $P(\Lambda, \cdot)$ and $Q(\Lambda^\prime, \cdot)$ is upper bounded by $L_P d(\Lambda, \Lambda^\prime) + \tau$.
In particular, if $\Lambda = \Lambda^\prime$, then the generalized Wasserstein distance between $P(\Lambda, \cdot)$ and $Q(\Lambda, \cdot)$ is bounded by $\tau$.
This implies that when the mass deficiency $\epsilon$ and the bound $\tau$ are zero, we can conclude that $P$ and $Q$ coincide.
Therefore, $(\tau, \epsilon)$ can be regarded as measuring the discrepancy between the transition kernels $P$ and $Q$.

Under Assumption \ref{assumption_lipschitz_continuity_functions_kernels}, we establish the following lemma characterizing the coupled robust Lipschitz continuity of the transition kernels $P_\theta$ and $P_{\theta^\prime}$ in terms of $d(\theta, \theta^\prime)$.
\begin{lemma}
    \label{lemma_continuity_transition_kernels_main}

    Suppose that Assumption \ref{assumption_lipschitz_continuity_functions_kernels} holds and that $\phi(X)$ has a finite second moment.
    Then the family of transition probability kernels $\{P_\theta\}_{\theta \in \Theta}$ is robustly Lipschitz continuous with a Lipschitz constant $L_P \geq 0$, that is, for any parameters $\theta$, $\theta^\prime \in \Theta$, the kernels $P_\theta$ and $P_{\theta^\prime}$ are $(L_P, L_P d(\theta, \theta^\prime), L_P d(\theta, \theta^\prime))$-coupled robustly Lipschitz continuous.
\end{lemma}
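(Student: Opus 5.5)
We construct the coupling kernel $K$ explicitly by coupling the randomness in one transition step. Under $P_\theta(\Lambda,\cdot)$, the next state is generated by drawing $X\sim\Gamma$ and $U\sim\mathrm{Unif}[0,1]$, then setting
\[
T=\mathbb{I}(U\le g_\theta(\Lambda,X)), \qquad
\Lambda_{+}=\Lambda+\frac{(T-\rho_\theta(X))\phi(X)}{\rho_\theta(X)(1-\rho_\theta(X))}.
\]
For $P_{\theta'}(\Lambda',\cdot)$ we use the \emph{same} pair $(X,U)$, with $T'=\mathbb{I}(U\le g_{\theta'}(\Lambda',X))$ and $\Lambda'_{+}$ defined analogously. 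The joint law of $(\Lambda_+,\Lambda'_+)$ is a genuine coupling. We then define $K(\Lambda,\Lambda';\cdot)$ as this joint law restricted to the event $\{T=T'\}$. The marginal constraints $K(\cdot\,;A\times\mathrm{X})\le P_\theta(\Lambda,A)$ and $K(\cdot\,;\mathrm{X}\times B)\le P_{\theta'}(\Lambda',B)$ are immediate, since restriction only removes mass.

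The first step is to bound the mass deficiency. Because $T\neq T'$ occurs exactly when $U$ falls between the two allocation probabilities, we get
\[
1-K(\mathrm{X}\times\mathrm{X})=\mathbb{E}_X\,|g_\theta(\Lambda,X)-g_{\theta'}(\Lambda',X)|
\le \|g_\theta(\Lambda,\cdot)-g_{\theta'}(\Lambda',\cdot)\|_{L^2(\Gamma)}.
\]
By Assumption \ref{assumption_lipschitz_continuity_functions_kernels} this is at most $L_g d(\Lambda,\Lambda')+L_g d(\theta,\theta')$. That is the required bound with $\tau=L_P d(\theta,\theta')$, provided $L_P\ge L_g$.

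The second step bounds the transport cost on the event $\{T=T'=t\}$. There $\Lambda_+-\Lambda'_+=(\Lambda-\Lambda')+\phi(X)\,[\kappa_\theta(t,X)-\kappa_{\theta'}(t,X)]$, where $\kappa_\theta(1,x)=1/\rho_\theta(x)$ and $\kappa_\theta(0,x)=-1/(1-\rho_\theta(x))$. By the triangle inequality and the fact that $K$ has total mass at most one,
\[
\int d\,\mathrm{d}K\le d(\Lambda,\Lambda')+\mathbb{E}_X\Big[\|\phi(X)\|\max_{t}|\kappa_\theta(t,X)-\kappa_{\theta'}(t,X)|\Big].
\]
Since $\rho_\theta,\rho_{\theta'}\in[\iota_\rho,1-\iota_\rho]$, we have $|\kappa_\theta-\kappa_{\theta'}|\le \iota_\rho^{-2}|\rho_\theta-\rho_{\theta'}|$. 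Applying Cauchy--Schwarz, the expectation is at most $\iota_\rho^{-2}(\mathbb{E}\|\phi(X)\|^2)^{1/2}\|\rho_\theta-\rho_{\theta'}\|_{L^2(\Gamma)}$, which is bounded by $\iota_\rho^{-2}(\mathbb{E}\|\phi\|^2)^{1/2}L_\rho\, d(\theta,\theta')$ using Assumption \ref{assumption_lipschitz_continuity_functions_kernels} again. This is exactly where the finite second moment of $\phi(X)$ is needed. It gives $\epsilon=L_P d(\theta,\theta')$ once $L_P\ge \iota_\rho^{-2}L_\rho(\mathbb{E}\|\phi\|^2)^{1/2}$.

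Taking $L_P$ to be the maximum of the two constants completes the argument. The only delicate point is the choice to discard the discordant event $\{T\neq T'\}$ rather than transport it. On that event the two increments differ by roughly $\phi(X)/[\rho(1-\rho)]$, which does not shrink with $d(\theta,\theta')$. Moving that mass into the deficiency term is exactly what Definition \ref{definition_coupled_robust_lipschitz_continuity_kernels_main} permits, and it is also why the $V$-norm fails here. I expect no further obstacles: the interchange of the $L^1$ and $L^2$ norms and the bounded-denominator estimates are routine.
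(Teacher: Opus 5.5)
Your proposal is correct and matches the paper's proof. The paper's coupling kernel is exactly your shared-$(X,U)$ coupling restricted to $\{T=T'\}$, written as the weights $\min\{g_\theta(\Lambda,x),g_{\theta'}(\Lambda',x)\}$ and $\min\{1-g_\theta(\Lambda,x),1-g_{\theta'}(\Lambda',x)\}$, and it is bounded the same way: an $L^1\le L^2$ bound on $\int|g_\theta-g_{\theta'}|\,\mathrm{d}\Gamma$ for the mass deficiency, and the triangle inequality with Cauchy--Schwarz against $\|\phi\|_{L^2(\Gamma)}$ for the transport cost.
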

The discrepancy result in Lemma \ref{lemma_continuity_transition_kernels_main} can be interpreted as the Lipschitz continuity of the family $\{P_\theta\}_{\theta \in \Theta}$ with respect to $\theta$ in the sense of Definition \ref{definition_coupled_robust_lipschitz_continuity_kernels_main}, and the non-expansivity of the transition kernels $P_\theta$ with respect to the state variable $\Lambda$.
This result can be extended to the family of $n$-step transition kernels $\{P_\theta^n\}_{\theta \in \Theta}$ via the following corollary.
\begin{corollary}
    \label{corollary_n_step_family_robustly_Lipschitz_continuous_main}
    Let $\{P_\theta\}_{\theta \in \Theta}$ be a family of transition probability kernels that is robustly Lipschitz continuous with a Lipschitz constant $L_P \geq 0$.
    Then for any $\theta,\theta^\prime\in\Theta$ and any $n \in \mathbb{N}$, the probability kernels $P_\theta^n$ and $P_{\theta^\prime}^n$ are $(n L_P, n L_P d(\theta,\theta^\prime) + \frac{n(n-1) L_P^2}{2} d(\theta,\theta^\prime), n L_P d(\theta,\theta^\prime))$-coupled robustly Lipschitz continuous.
\end{corollary}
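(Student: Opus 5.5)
We argue by induction on $n$, composing coupling kernels step by step. The base case $n=0$ is trivial: take $K$ to be the diagonal coupling $\delta_{(\Lambda,\Lambda')}$, which has full mass and transport cost exactly $d(\Lambda,\Lambda')$. The case $n=1$ is the hypothesis with constants $(L_P, L_P d(\theta,\theta'), L_P d(\theta,\theta'))$. Write $\delta:=d(\theta,\theta')$. Suppose $K_n$ witnesses that $P^n_\theta$ and $P^n_{\theta'}$ are $(nL_P,\tau_n,\epsilon_n)$-coupled robustly Lipschitz continuous. Let $K_1$ be the one-step coupling for $P_\theta,P_{\theta'}$. Define the composed sub-probability kernel
\begin{equation*}
K_{n+1}(\Lambda,\Lambda';A\times B)=\int K_n(\Lambda,\Lambda';\mathrm{d}u\times \mathrm{d}v)\,K_1(u,v;A\times B).
\end{equation*}
The marginal constraints follow from monotonicity. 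Indeed, $K_{n+1}(\Lambda,\Lambda';A\times\mathrm{X})\le\int K_n(\mathrm{d}u\times\mathrm{d}v)P_\theta(u,A)\le\int P^n_\theta(\Lambda,\mathrm{d}u)P_\theta(u,A)=P^{n+1}_\theta(\Lambda,A)$. The second inequality uses that the first marginal of $K_n$ is dominated by $P^n_\theta(\Lambda,\cdot)$ and that $u\mapsto P_\theta(u,A)$ is nonnegative. The second marginal is handled symmetrically.

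Next come the two quantitative bounds. For the transport cost, integrating the one-step bound gives
\begin{equation*}
\int d\,\mathrm{d}K_{n+1}\le\int K_n(\mathrm{d}u\times\mathrm{d}v)\,[d(u,v)+L_P\delta]\le d(\Lambda,\Lambda')+\epsilon_n+L_P\delta,
\end{equation*}
using that $K_n$ has total mass at most $1$. Hence $\epsilon_{n+1}=\epsilon_n+L_P\delta$ and $\epsilon_n=nL_P\delta$.

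For the mass deficiency, split the missing mass into what was already lost at stage $n$ and what is lost at the last step:
\begin{equation*}
1-K_{n+1}(\mathrm{X}^2)=\bigl[1-K_n(\mathrm{X}^2)\bigr]+\int K_n(\mathrm{d}u\times\mathrm{d}v)\bigl[1-K_1(u,v;\mathrm{X}^2)\bigr].
\end{equation*}
The second piece is at most $\int K_n\,[L_P d(u,v)+L_P\delta]\le L_P\bigl(d(\Lambda,\Lambda')+\epsilon_n\bigr)+L_P\delta$. So the deficiency of $K_{n+1}$ is at most
\begin{equation*}
nL_P d(\Lambda,\Lambda')+\tau_n+L_P d(\Lambda,\Lambda')+L_P\epsilon_n+L_P\delta.
\end{equation*}
This gives Lipschitz constant $(n+1)L_P$ and the recursion $\tau_{n+1}=\tau_n+L_P\delta+L_P\cdot nL_P\delta$.

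Summing the recursion gives $\tau_n=nL_P\delta+L_P^2\delta\sum_{k=0}^{n-1}k=nL_P\delta+\tfrac{n(n-1)}{2}L_P^2\delta$, exactly the claimed constant. Nonnegativity of the deficiency is automatic, since $K_{n+1}$ is sub-probability.

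The main obstacle is measure-theoretic. We must ensure the composition is a genuine kernel, i.e. that $(u,v)\mapsto K_1(u,v;C)$ is measurable for $C\in\mathcal{X}^{\otimes2}$. This holds because $K_1$ is assumed to be a kernel on $\mathrm{X}^2$. A second point is that the hypothesis must supply, for the fixed pair $(\theta,\theta')$, a single coupling kernel $K_1$ valid for all pairs of states $(u,v)$ simultaneously. This is exactly how Definition \ref{definition_coupled_robust_lipschitz_continuity_kernels_main} is phrased, so the pointwise bounds can be integrated against $K_n$. The bookkeeping of the cross term $L_P\epsilon_n$ is what produces the quadratic $n(n-1)/2$ growth.
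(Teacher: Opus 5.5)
Your proposal is correct and is essentially the paper's argument. The paper proves Theorem~\ref{theorem_n_step_coupled_robustly_Lipschitz_continuous} by induction on the iterated coupling $K^n$, obtaining deficiency at most $nL_P d(\Lambda,\Lambda')+n\tau+\frac{n(n-1)}{2}L_P\epsilon$ and cost at most $d(\Lambda,\Lambda')+n\epsilon$, and then substitutes $\tau=\epsilon=L_P d(\theta,\theta')$. The only cosmetic difference is the order of peeling: the paper writes $K^{k+1}=K\circ K^k$ and applies the inductive bound at the intermediate states, whereas you write $K_{n+1}=K_n\circ K_1$ and split the missing mass exactly. By associativity this is the same kernel, and it gives the same recursion.
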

Therefore, when the family $\{P_\theta\}_{\theta \in \Theta}$ is robustly Lipschitz continuous, the family $\{P_\theta^n\}_{\theta \in \Theta}$ is also Lipschitz continuous, with a Lipschitz constant inflated at a rate of $n^2$.

\subsubsection{Limitations of the $V$-Norm for the CBARA Procedure}

\label{subsubsec_limitations_V_norm}

To understand why the $V$-norm works in the context of the CAR procedure but not in the CBARA procedure, consider how the transition kernel $P_\theta$ changes with respect to $\theta$ under the CAR and CBARA procedures.
More specifically, we examine the change in the imbalance vector $\Lambda$ when a unit with covariate value $x$ is assigned to the treatment group.

Under the CAR procedure, when the imbalance vector $\Lambda$ encounters a unit with covariate $x$ and this unit is assigned to the treatment group, the transition corresponds to moving from $\Lambda$ to $\Lambda + (1-\rho)\phi(x)$ with probability $P(X=x) g_\theta(\Lambda,x)$.
In this case, changing $\theta$ only affects the allocation probability $g_\theta(\Lambda,x)$, while the destination state remains unchanged.
Thus, the $V$-norm discrepancy between the corresponding transition kernels can be controlled by the difference between the allocation probabilities, and the $V$-norm is suitable for the CAR procedure.

In contrast, under the CBARA procedure, when the imbalance vector $\Lambda$ encounters a unit with covariate $x$ and this unit is assigned to the treatment group, the transition corresponds to moving from $\Lambda$ to $\Lambda + \phi(x)/\rho_\theta(x)$ with probability $P(X=x) g_\theta(\Lambda,x)$.
Consequently, changing $\theta$ not only alters the probability but also changes the destination state itself.
Therefore, when the $V$-norm is used to measure the discrepancy between the transition kernels, the change in the destination state induced by $\theta$ under the CBARA procedure can lead to a non-negligible discrepancy between $P_\theta(\Lambda,\cdot)$ and $P_{\theta^\prime}(\Lambda,\cdot)$.
In particular, even when $\theta$ and $\theta^\prime$ are close, the corresponding transitions may place probability mass on different states, namely $\Lambda + \phi(x)/\rho_\theta(x)$ and $\Lambda + \phi(x)/\rho_{\theta^\prime}(x)$. 
As a result, the quantity $\|P_\theta(\Lambda,\cdot)-P_{\theta^\prime}(\Lambda,\cdot)\|_V$ may remain bounded away from zero, since the two measures are supported on different locations in the state space.
Unfortunately, the $V$-norm is insensitive to the geometry of the state space and may assign maximal discrepancy to probability measures with nearby but non-identical supports.
These observations suggest that the $V$-norm used in the adaptive MCMC literature is suitable for the CAR procedure but may not be well suited for analyzing the CBARA procedure.

\subsubsection{Results for the New Discrepancy Measure}

Let the family of functions $\{h_\theta\}_{\theta \in \Theta}$ possess certain $\alpha$-Hölder continuity for any $\alpha \in (0,1)$ within any sufficiently small region in both $\Lambda$ and $\theta$.
With continuity properties on $\{h_\theta\}_{\theta \in \Theta}$ and $\{P_\theta\}_{\theta \in \Theta}$, together with some properties of $P_\theta$, it can be shown that the difference $\left| (P_\theta^n h_\theta)(\Lambda) - (P_{\theta^\prime}^n h_{\theta^\prime})(\Lambda) \right|$ can be controlled by a multiple of $(n^2 + 1) \left[ d(\theta, \theta^\prime) \right]^\alpha$ for any $\alpha \in (0,1)$ and any $\Lambda$, when $d(\theta,\theta^\prime)$ is bounded by a positive constant.

For the $\pi_\theta h_\theta - \pi_{\theta^\prime} h_{\theta^\prime}$ term in \eqref{eq_difference_Poisson_equation_summation}, by using the result above and the geometric ergodicity $\left| (P_\theta^n h_\theta)(\Lambda) - \pi_\theta h_\theta \right| = O(\rho^{-n} \exp(\lambda^\prime \|\Lambda\|))$, we can also establish a similar bound by decomposing that $\pi_\theta h_\theta - \pi_{\theta^\prime} h_{\theta^\prime}$ into
\begin{equation*}
    \left[ \pi_\theta h_\theta - P_\theta^n h_\theta \right]
    + \left[ P_\theta^n h_\theta - P_{\theta^\prime}^n h_{\theta^\prime} \right]
    + \left[ P_{\theta^\prime}^n h_{\theta^\prime} - \pi_{\theta^\prime} h_{\theta^\prime} \right] ,
\end{equation*}
and choosing an appropriate $n$ to balance the bounds for each term.
Therefore, it holds that $\pi_\theta h_\theta - \pi_{\theta^\prime} h_{\theta^\prime} = O(\left[ d(\theta,\theta^\prime) \right]^\alpha)$ for any $\alpha \in (0,1)$, when $d(\theta,\theta^\prime)$ is bounded by a positive constant.

In conclusion, each term in \eqref{eq_difference_Poisson_equation_summation} can be controlled by a multiple of $(n^2 + 1) \left[ d(\theta, \theta^\prime) \right]^\alpha$ for any $\alpha \in (0,1)$, when $d(\theta,\theta^\prime)$ is bounded by a positive constant.
By the geometric ergodicity of the Markov chain with transition kernel $P_\theta$, we have $\left| (P_\theta^n h_\theta)(\Lambda) - \pi_\theta h_\theta \right| = O(\rho^{-n} \exp(\lambda^\prime \|\Lambda\|))$ for some $\rho \in (0,1)$ and $\lambda^\prime > 0$.
Therefore, we use the polynomial bound $O(n^2 \left[ d(\theta, \theta^\prime) \right]^\alpha)$ to control the first finite terms of the series in \eqref{eq_difference_Poisson_equation_summation}, and use the exponential bound $O(\rho^{-n} \exp(\lambda^\prime \|\Lambda\|))$ to control the tail of the series in \eqref{eq_difference_Poisson_equation_summation}.
It results in a theoretical bound for $\hat{h}_\theta(\Lambda) - \hat{h}_{\theta^\prime}(\Lambda)$, which is $O(\left[ d(\theta, \theta^\prime) \right]^\alpha)$ for any $\alpha \in (0,1)$, when $d(\theta,\theta^\prime)$ is bounded by a positive constant.

\subsection{Explicit Form of Variance}

\label{subsec_explicit_form}

The previous discussion explains how the remaining term can be eliminated.
Therefore, the asymptotic behavior of \eqref{eq_decomposition_martingale_dependent} is determined by the martingale terms
\begin{equation*}
    \sum_{n=1}^N \left[ \text{Term}_n - \mathbb{E} \left[ \text{Term}_n \mid \mathcal{F}_{n-1} \right] \right]
    + \sum_{n=0}^{N-1} \left[ \hat{h}_{\theta_n}(\Lambda_{n+1}) - P_{\theta_n} \hat{h}_{\theta_n}(\Lambda_n) \right] .
\end{equation*}
By applying the martingale central limit theorem and establishing a law of large numbers for the conditional variance, using the same arguments as those used to eliminate the remaining term in Subsection \ref{subsec_proof_strategy_negligibility_remaining_term}, we obtain that the asymptotic variance coincides with the variance associated with the limiting parameter.
Therefore, it suffices to analyze the asymptotic variance when $\theta$ is fixed at the limiting parameter.

For central limit theorems considered in this article, the corresponding function $h_\theta$ takes the form
\begin{equation}
    h_\theta(\Lambda)
    = \mathbb{E}_{X \sim \Gamma} \left[ \left(g_\theta(\Lambda, X) - \rho_\theta(X)\right) f_\theta(X) \right]
    \label{eq_form_special_h_theta} ,
\end{equation}
for some function $f_\theta$ depending on the specific setting.
It can be shown that the associated solution $\hat{h}_\theta$ to the Poisson equation admits an explicit expression, up to an unknown additive constant.
Therefore, the asymptotic variance can be computed explicitly.
It is worth noting that not all functions $h_\theta$ used in the proof admit the special form in \eqref{eq_form_special_h_theta}.
For instance, when applying the martingale central limit theorem, we need to establish a law of large numbers for the conditional variance, and the corresponding function $h_\theta$ associated with the conditional variance does not take the form in \eqref{eq_form_special_h_theta}.
It is therefore necessary to use the analysis developed in Subsection \ref{subsec_proof_strategy_negligibility_remaining_term} to establish the continuity of the Poisson equation solution $\hat{h}_\theta$, and thereby prove the negligibility of the remaining term.

\section{Discussion}
\label{sec_discussion}

In this article, we propose the CBARA procedure.
It combines the advantages of the CARA and CAR procedures, allowing for adjustment of the targeted allocation ratio while simultaneously balancing covariates.
We show that it improves both the balance of additional covariates and the performance of weighted M-estimation.

\paragraph*{Valid Inference}
Building on these theoretical properties, we can further conduct statistical inference.
For the IPW estimator
\begin{equation*}
    \hat{\tau}_{\mathrm{IPW}}
    = \frac{1}{N} \sum_{n=1}^N \left[ \frac{T_n Y_n(1)}{\rho_{\theta_{n-1}}(X_n)}
    - \frac{(1-T_n) Y_n(0)}{1-\rho_{\theta_{n-1}}(X_n)} \right]
\end{equation*}
we can establish the following asymptotic properties.
\begin{theorem}[IPW Estimator]
    \label{theorem_IPW_estimator}

    Suppose that Assumptions \ref{assumption_theta_compact}, \ref{assumption_lipschitz_continuity_functions_kernels}, \ref{assumption_x_sub_exponential_bound} and \ref{assumption_for_simultaneous_small_set_condition} hold.
    If the allocation parameter sequence $\{\theta_n\}$ satisfies Assumption \ref{assumption_theta_convergence_strong_diminishing_adaptation} and there exists $\epsilon > 0$ such that $\mathbb{E} \left[ Y(t)^{4+\epsilon} \right] < \infty$ for any $t \in \{0,1\}$, then the IPW estimator is asymptotically normal with mean zero and variance
    \begin{equation*}
        \var(Y(1) - Y(0))
        + \mathbb{E} \left[ \rho_{\theta^*}(X)(1-\rho_{\theta^*}(X))
        \left\{ \frac{Y(1)}{\rho_{\theta^*}(X)} + \frac{Y(0)}{1 - \rho_{\theta^*}(X)} - \frac{a^T \phi(X)}{\rho_{\theta^*}(X)(1-\rho_{\theta^*}(X))} \right\}^2 \right] ,
    \end{equation*}
    where the vector $a$ satisfies
    \begin{align*}
        & a^T \mathbb{E} \left[ [\rho_{\theta^*}(X)(1-\rho_{\theta^*}(X))]^{-1} \phi(X) \phi(X)^T
        / \max\{ \|\phi(X)\|/[\rho_{\theta^*}(X)(1-\rho_{\theta^*}(X))], C_{\theta^*} \} \right] \\
        & \quad = \mathbb{E} \left[ \left[ \frac{Y(1)}{\rho_{\theta^*}(X)} + \frac{Y(0)}{1 - \rho_{\theta^*}(X)} \right] \phi(X)^T
        / \max\{ \|\phi(X)\|/[\rho_{\theta^*}(X)(1-\rho_{\theta^*}(X))], C_{\theta^*} \} \right] .
    \end{align*}
\end{theorem}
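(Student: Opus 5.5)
The plan is to reduce the statement to the machinery already built for Theorems \ref{theorem_additional_covariate_CLT} and \ref{theorem_estimation_asymptotic_normality}. I read the claim as asymptotic normality of $\sqrt{N}(\hat{\tau}_{\mathrm{IPW}}-\tau)$ with $\tau=\mathbb{E}[Y(1)-Y(0)]$.

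\textbf{Decomposition.} The first step is an exact algebraic identity. Write $\rho_n:=\rho_{\theta_{n-1}}(X_n)$. Since $T_nY_n(1)/\rho_n - Y_n(1) = (T_n-\rho_n)Y_n(1)/\rho_n$ and $-(1-T_n)Y_n(0)/(1-\rho_n)+Y_n(0)=(T_n-\rho_n)Y_n(0)/(1-\rho_n)$, we obtain
\begin{equation*}
\sqrt{N}(\hat{\tau}_{\mathrm{IPW}}-\tau)
= \frac{1}{\sqrt{N}}\sum_{n=1}^N \bigl[Y_n(1)-Y_n(0)-\tau\bigr]
+ \frac{1}{\sqrt{N}}\sum_{n=1}^N (T_n-\rho_n)\, f_{\theta_{n-1}}(X_n,\bm{Y}_n),
\end{equation*}
where $f_\theta(X,\bm{Y}) = Y(1)/\rho_\theta(X)+Y(0)/(1-\rho_\theta(X))$. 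The first sum is an i.i.d.\ sum. The second sum is of the same type as $\Psi_N$, with $Z_n$ replaced by $\rho_n(1-\rho_n) f_{\theta_{n-1}}$. This replacement depends on $\theta_{n-1}$, which is why Theorem \ref{theorem_additional_covariate_CLT} cannot be quoted verbatim.

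\textbf{Martingale approximation.} I would not freeze $\theta$ at $\theta^*$. Instead I would run the argument of Subsection \ref{subsec_proof_strategy_LLN_CLT} directly with $\text{Term}_n = (T_n-\rho_n)f_{\theta_{n-1}}(X_n,\bm{Y}_n)$. Its conditional mean is $h_{\theta_{n-1}}(\Lambda_{n-1})$ with
\begin{equation*}
h_\theta(\Lambda)=\mathbb{E}\bigl[(g_\theta(\Lambda,X)-\rho_\theta(X))\,\bar f_\theta(X)\bigr],
\end{equation*}
where $\bar f_\theta(X)=\mathbb{E}[f_\theta\mid X]$. This is exactly of the special form \eqref{eq_form_special_h_theta}, so $\pi_\theta h_\theta=0$ and $\hat h_\theta$ is explicit.

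The Poisson decomposition then writes $\sqrt{N}(\hat{\tau}_{\mathrm{IPW}}-\tau)$ as the sum of three martingale-difference arrays and a remaining term:
\begin{itemize}[label={}]
\item the i.i.d.\ part $Y_n(1)-Y_n(0)-\tau$;
\item $\text{Term}_n-h_{\theta_{n-1}}(\Lambda_{n-1})$;
\item $\hat h_{\theta_n}(\Lambda_{n+1})-P_{\theta_n}\hat h_{\theta_n}(\Lambda_n)$.
\end{itemize}
To show the remaining term is $o_P(\sqrt{N})$, I would invoke the Hölder continuity of $\theta\mapsto\hat h_\theta$ from Subsection \ref{subsec_proof_strategy_negligibility_remaining_term}, together with Lemma \ref{lemma_continuity_transition_kernels_main}, Theorem \ref{theorem_boundedness} and Assumption \ref{assumption_theta_convergence_strong_diminishing_adaptation}. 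The key point is that $\sum\|\theta_n-\theta_{n+1}\|^\alpha$ stays $o_P(\sqrt N)$ for $\alpha$ close to $1$, since $p<1/2$. The moment condition $\mathbb{E}|Y(t)|^{4+\epsilon}<\infty$ supplies the Lindeberg condition. The Lipschitz dependence of $f_\theta$ on $\theta$ comes from Assumption \ref{assumption_lipschitz_continuity_functions_kernels} combined with $\rho_\theta\in[\iota_\rho,1-\iota_\rho]$.

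\textbf{Conditional variance and the main obstacle.} I would apply the martingale CLT to the combined array. The conditional variance is a function $v_{\theta_{n-1}}(\Lambda_{n-1})$, which is \emph{not} of the form \eqref{eq_form_special_h_theta}. Its law of large numbers therefore uses the Poisson-equation continuity argument again, exactly as in the proof of Theorem \ref{theorem_estimation_asymptotic_normality}. This reduces the limit to $\pi_{\theta^*}v_{\theta^*}$.

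The main obstacle is computing this limit explicitly, and in particular the cross covariances. These are (i) between the i.i.d.\ part $Y(1)-Y(0)-\tau$ and the Poisson martingale increments $\hat h(\Lambda_{n+1})$, which depend on $X_{n+1}$ and $T_{n+1}$, and (ii) between $T-g$ and $\hat h$.

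\begin{itemize}[label={}]
\item For (i) I would condition on $X_{n+1}$ and $T_{n+1}$. Because $\hat h_{\theta^*}$ is asymptotically linear in $\Lambda$, the explicit form of Subsection \ref{subsec_explicit_form} shows that these contributions recombine into the term $-a^T\phi(X)/[\rho_{\theta^*}(1-\rho_{\theta^*})]$ inside the square.
\item The vector $a$ is then characterized by the normal equation with the $\max\{\cdot,C_{\theta^*}\}$ weights, as in Theorem \ref{theorem_additional_covariate_CLT}.
\end{itemize}

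Equivalently, one can observe that $\hat\tau_{\mathrm{IPW}}$ is a weighted M-estimator of the kind in \eqref{eq_eta_estimation}, with quadratic $m_\eta$ and $\rho^{\mathrm{ref}}\equiv 1/2$ after rescaling. One then reads off the additive structure $\Cov(Z^{(u)})+\mathbb{E}[\rho(1-\rho)\{Z^{(c)}-A\phi/(\rho(1-\rho))\}^2]$ from Theorem \ref{theorem_estimation_asymptotic_normality}. I would try this reduction first and verify that $Z^{(u)}$ and $Z^{(c)}$ specialize to $Y(1)-Y(0)$ and $Y(1)/\rho_{\theta^*}+Y(0)/(1-\rho_{\theta^*})$.
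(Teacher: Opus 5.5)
Your main route is the paper's proof. The paper uses the same exact decomposition of $\hat{\tau}_{\mathrm{IPW}}$ and then applies the general martingale CLT of Lemma~\ref{lemma_CLT}, which is built for exactly the $\theta$-dependent choice $Z^{(c)}_n = Y_n(1)/\rho_{\theta_{n-1}}(X_n) + Y_n(0)/(1-\rho_{\theta_{n-1}}(X_n))$, $Z^{(u)}_n = Y_n(1)-Y_n(0)$. It combines this with Lemma~\ref{lemma_expectation_pi_h} ($\pi_\theta h_\theta = 0$), Lemma~\ref{lemma_lipschitz_continuity_conditional_expectation} (regularity in $\theta$) and Lemma~\ref{lemma_expression_variance} (the explicit variance at $\theta^*$).

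Your ``main obstacle'' is simpler than you expect. The function $\hat h_\theta(\Lambda) = -a^T\Lambda$ solves the Poisson equation \emph{exactly}, not just asymptotically. The combined martingale increment therefore collapses identically to $(T-\rho)\{Z^{(c)} - a^T\phi(X)/[\rho(1-\rho)]\} + Z^{(u)} - \mathbb{E}Z^{(u)}$. Its square is affine in $T$, and $\pi_\theta[g_\theta(\cdot,x)] = \rho_\theta(x)$. So under stationarity you may replace $T\sim\mathrm{Bernoulli}(g)$ by $T\sim\mathrm{Bernoulli}(\rho)$. This kills the cross term with $Z^{(u)}$ outright, rather than ``recombining'' it, and produces the factor $\rho(1-\rho)$.

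Also, the $4+\epsilon$ moments are not what Lindeberg needs. Lindeberg follows from second moments, since $|Z^{(c)}_n| \le (|Y_n(1)|+|Y_n(0)|)/\iota_\rho$. The higher moments feed the weighted-$L^1$ Lipschitz condition of Assumption~\ref{assumption_lipschitz_continuity_conditional_expectation} for the conditional second moments, which the conditional-variance LLN requires.

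Drop the ``equivalent'' shortcut you planned to try first, because it fails. $\hat{\tau}_{\mathrm{IPW}}$ is the Horvitz--Thompson average and is not an argmax of \eqref{eq_eta_estimation} for any fixed $m_\eta$. Its estimating equation $\sum_i [T_iY_i(1)/\rho_{\theta_{i-1}}(X_i) - \mu_1] = 0$ charges $-\mu_1$ to every unit with weight one. By contrast, \eqref{eq_eta_estimation} weights only observed-arm terms, by $\rho^{\mathrm{ref}}(T_i\mid X_i)/\rho_{\theta_{i-1}}(T_i\mid X_i)$. Matching the two would force $m_\eta$ to depend on $\rho_{\theta_{i-1}}$.

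With quadratic $m_\eta$ and $\rho^{\mathrm{ref}}\equiv 1/2$ you instead get the H\'ajek ratio $\sum_i T_iY_i(1)/\rho_{\theta_{i-1}}(X_i) \big/ \sum_i T_i/\rho_{\theta_{i-1}}(X_i)$ and its control analogue. For the contrast, Theorem~\ref{theorem_estimation_asymptotic_normality} gives a $Z^{(u)}$ with variance $\var(Y(1)-Y(0))$. However, $Z^{(c)}$ equals, up to sign, $\{Y(1)-\mathbb{E}Y(1)\}/\rho_{\theta^*}(X) + \{Y(0)-\mathbb{E}Y(0)\}/(1-\rho_{\theta^*}(X))$, i.e.\ the outcomes are centered. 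The discrepancy $\mathbb{E}Y(1)/\rho_{\theta^*}(X) + \mathbb{E}Y(0)/(1-\rho_{\theta^*}(X))$ is a function of $X$ that the $a^T\phi(X)$ adjustment does not absorb in general, especially with the $\max\{\cdot, C_{\theta^*}\}$ weights. So both $a$ and the variance differ from the statement, and the verification step you planned would not go through. Stick with the direct Poisson-equation route.
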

Since the theorem provides an explicit expression for the variance, it is feasible to conduct inference by constructing a consistent estimator of the variance.
Moreover, inference can also be conducted using other estimators.

\paragraph*{Estimation Using Machine Learning}
In this article, the targeted allocation ratio is updated by estimating a finite-dimensional parameter via weighted M-estimation.
In practice, a promising extension is to incorporate machine learning methods to learn an optimal targeted allocation ratio in a data-driven manner, potentially improving flexibility and performance in complex settings.

\paragraph*{Multiple Treatment Arms}
The current CBARA procedure focuses on the case of two treatment groups.
It can be naturally extended to settings with multiple treatment arms, where the allocation mechanism and the corresponding theoretical analysis can also be adapted to the multi-arm setting.

\newpage

\appendix

\section{Structure of the Proofs}

\subsection{Notation}

Given a space $\mathrm{X}$ and a function $V : \mathrm{X} \to [1,+\infty)$, define the $V$-norm of a function $f : \mathrm{X} \to \mathbb{R}$ by
\begin{equation*}
    |f|_V := \sup_{x \in \mathrm{X}} \frac{|f|(x)}{V(x)}
\end{equation*}
When $V=1$, the $V$-norm is the supremum norm denoted by $|f|_{\infty}$.
For a measure $\mu$ on $\mathrm{X}$, we define the $L^2(\mu)$ norm of a measurable function $f: \mathrm{X} \to \mathbb{R}$ by
\begin{equation*}
    \left\| f \right\|_{L^2(\mu)}
    = \left( \int |f|^2 d\mu \right)^{1/2}.
\end{equation*}

Denote the Lebesgue measure by $\mu_{\mathrm{leb}}$.
For a measure $\mu$, let $\mu_A = \mu|_A$ denote the restriction of $\mu$ to $A$, that is, $\mu_A(B) = (\mu |_A)(B) = \mu(A \cap B)$ for any measurable set $B$.
Let $\mu f = \int f(\Lambda) \mu(d\Lambda)$ and $\mu \left[ g(\cdot, x) \right] = \int g(\Lambda,x) \mu(d\Lambda)$.
For $\mu$ a finite signed measure on the measurable space $(\mathrm{X}, \mathcal{X})$ and $V: \mathrm{X} \to [1, \infty)$ such that $|\mu|(V)<\infty$, where $|\mu|$ is the variation of $\mu$, we define $\|\mu\|_V$ the $V$-norm of $\mu$ as
\begin{equation*}
    \|\mu\|_V := \sup_{|f|_V \leq 1}|\mu(f)|
\end{equation*}
When $V \equiv 1$, the $V$-norm corresponds to the total variation norm.
Denote the state space by $\mathrm{X} = \mathbb{R}^d$.

Let $\mathbb{E}_\theta$ denote the expectation under the randomization procedure with a fixed parameter sequence $\{\theta_n = \theta\}_{n \in \mathbb{N}}$.
If it exists, let $\pi_\theta$ denote the invariant probability measure corresponding to the transition kernel $P_\theta$ defined in \eqref{eq_transition_kernel_CARA_procedure}.

Let $P$ be a finite signed kernel on $(\mathrm{X}, \mathcal{X})$ such that $|P(x,\cdot)|(V) < \infty$ for any $x \in \mathrm{X}$.
For any measurable $f:\mathrm{X} \to \mathbb{R}$, we write $(Pf)(x) = P(x,f) := \int f(y) P(x, \mathrm{d} y)$, which is the integral of $f$ with respect to the signed measure $P(x,\cdot)$.
Denote $(\mu P)(A) := \int \mu(\mathrm{d} x) P(x, A)$ and $P^n(x, A) := \int P(x, d y) P^{n-1}(y, A)$ for $n$-step transition kernel $P^n$.
The $V$-norm of $P$ is defined by $\|P\|_V := \sup_{x \in \mathrm{X}} V^{-1}(x)\|P(x, \cdot)\|_V$.

Let $B^{(d)}(x,r)$ denote the $d$-dimensional ball centered at $x$ with radius $r$.
When there is no ambiguity, we write $B(x,r)$ for simplicity.

We equip the spaces with the metrics
\begin{equation*}
    d(\Lambda, \Lambda^\prime) = \| \Lambda - \Lambda^\prime \|, \quad
    d(\theta, \theta^\prime) = \| \theta - \theta^\prime \|,
\end{equation*}
where $\|\cdot\|$ denotes the corresponding norm in each space.

\subsection{Structure of the Proofs}

The CBARA procedure considered in this article is not Markovian in general.
However, if the allocation parameter $\theta$ is fixed, the stochastic process $\{\Lambda_n\}$ becomes a Markov chain.
We denote the corresponding transition kernel by $P_\theta$, which depends on the allocation parameter $\theta$.
This simplified scenario is relevant to our CBARA procedure. 
As shown in Lemma \ref{lemma_transition_kernel}, for any $n \in \mathbb{N}$, the conditional expectation at the $(n+1)$th step of the CBARA procedure, given the history, can be expressed in terms of the transition kernel $P_{\theta_n}$ and the imbalance vector $\Lambda_n$.
In this article, we refer to a stochastic process that satisfies the conclusions of Lemma \ref{lemma_transition_kernel} in Subsection \ref{subsec_transition_kernel} as a pseudo-Markov chain.

The works \cite{fortConvergenceAdaptiveInteracting2011,fortCentralLimitTheorem2014} inspired our approach to analyzing a pseudo-Markov chain associated with the family of transition kernels $\{P_\theta\}$, by leveraging both the properties of $P_\theta$ and transition kernels.
The properties required for this analysis are those specified in Assumption \ref{assumption_simultaneous_properties}.
Lemma \ref{lemma_application_simultaneous_properties} summarizes all lemmas in Subsections \ref{subsec_continuity_transition_kernels} and \ref{subsec_technical_lemmas_CARA} and is used to verify Assumption \ref{assumption_simultaneous_properties}.
In Subsections \ref{subsec_definitions_lemmas_kernel}--\ref{subsec_definitions_assumptions_chains_functions}, building on Assumption \ref{assumption_simultaneous_properties}, we establish bounds and continuity properties of $P_\theta$ and certain related functions with respect to different values of $\theta$.
Building on this, Lemma \ref{lemma_allocation_convergence} in Subsection \ref{subsec_lemmas_LLN} establishes a weak law of large numbers (WLLN) in a straightforward manner.

The main text presents numerous results concerning asymptotic normality.
A key requirement for establishing asymptotic normality is that variations in the parameter do not affect the limiting distribution.
This asymptotic normality result is formalized in Lemma \ref{lemma_CLT} in Subsection \ref{subsec_central_limit_theorem}.
The proof of Lemma \ref{lemma_CLT} relies on bounds and continuity properties of certain functions associated with the central limit theorem.
By applying the theorems and corollaries in Subsections \ref{subsec_definitions_lemmas_kernel}--\ref{subsec_definitions_assumptions_chains_functions}, we establish these properties in Subsection \ref{subsec_Holder_continuity_functions}.
Moreover, the center and the variance of the asymptotic distribution presented in the main text correspond to the case when the parameter is fixed at its limiting value $\theta^*$, under which the procedure becomes Markovian.
The specific values of the center and variance under this fixed parameter require Lemmas \ref{lemma_expectation_pi_h} and \ref{lemma_expression_variance} in Section \ref{sec_lemmas_allocation_form}, and the proofs of these lemmas rely heavily on the form of the allocation function \eqref{eq_CARA_allocation_function}.

At this point, the majority of the technical work has been completed, and Section \ref{sec_lemmas_estimation} applies the above law of large numbers and the central limit theorem results to the model parameter estimation.
Consequently, the theorems presented in the main text are proved in Sections \ref{sec_properties_CARA_procedure} and \ref{sec_treatment_effect_estimation}.

\section{Detailed Properties of CBARA Procedure}
\label{sec_properties_CARA_procedure}

\subsection{Proof of Theorem \ref{theorem_boundedness}}

The proof of Theorem \ref{theorem_boundedness} is based on Lemmas \ref{lemma_drift_condition_inequality} and \ref{lemma_convergence_V_as}.
Lemma \ref{lemma_drift_condition_inequality} implies that
\begin{equation*}
    (P_\theta V^\alpha)(\Lambda) \leq \beta_\alpha V^\alpha(\Lambda) + b_\alpha
\end{equation*}
for any $\theta \in \Theta$ and $\alpha \in (0,1]$ with $V(\Lambda) = \exp(\lambda_1 \|\Lambda\|)$.
Then based on Lemma \ref{lemma_convergence_V_as}, we have $V(\Lambda_n) = O_P(1)$.
It implies that $\{\Lambda_n\}$ is bounded in probability.

\subsection{Proof of Theorem \ref{theorem_additional_covariate_LLN}}

By Lemma \ref{lemma_application_simultaneous_properties}, Assumption \ref{assumption_simultaneous_properties} holds.

Define
\begin{equation*}
    h_\theta(\Lambda)
    = \mathbb{E}_{X \sim \Gamma} \left[ \frac{(g_\theta(\Lambda, X) - \rho_\theta(X)) Z}{\rho_\theta(X)(1-\rho_\theta(X))} \right] .
\end{equation*}

Under Assumption \ref{assumption_lipschitz_continuity_functions_kernels} and $\mathbb{E} \left[ Z^2 \right] < \infty$, by Lemmas \ref{lemma_h_lipschitz_continuous_bounded} and \ref{lemma_lipschitz_continuity_conditional_expectation} with
\begin{equation*}
    Z^{(c)}_n = Z_n / [\rho_{\theta_{n-1}}(X_n)(1-\rho_{\theta_{n-1}}(X_n))] ,
\end{equation*}
we have the family of functions $\{h_\theta\}_{\theta \in \Theta}$ is $((1, L_{h, \kappa, \alpha} V^\kappa, \alpha), (1, \tilde{L}_{h, \tilde{\kappa}, \tilde{\alpha}} V^{\tilde{\kappa}}, \tilde{\alpha}))$-joint locally Hölder continuous, and bounded by $C_{h,\gamma} V^\gamma$ with corresponding positive constants.

By Lemma \ref{lemma_convergence_average_enter}, the limit holds that
\begin{equation*}
    \frac{1}{N} \sum_{n=0}^{N-1} h_{\theta_n}(\Lambda_n)
    -\frac{1}{N} \sum_{n=0}^{N-1} \pi_{\theta_n}h_{\theta_n}
    \xrightarrow{\mathbb{P}} 0 .
\end{equation*}
By Lemma \ref{lemma_expectation_pi_h}, each term, $\pi_\theta h_\theta$, is zero.
Thus,
\begin{equation}
    \frac{1}{N} \sum_{n=0}^{N-1} h_{\theta_n}(\Lambda_n)
    \xrightarrow{\mathbb{P}} 0
    \label{eq_proof_theorem_additional_covariate_LLN_1} .
\end{equation}

Furthermore, $\{\Delta M_n\}_{n \in \mathbb{N}^*}$ forms a martingale difference sequence with respect to the filtration $\{\mathcal{F}_n\}_{n \in \mathbb{N}}$, where
\begin{equation*}
    \Delta M_n =
    \frac{(T_n-\rho_{\theta_{n-1}}(X_n)) Z_n}{\rho_{\theta_{n-1}}(X_n)(1-\rho_{\theta_{n-1}}(X_n))}
    - h_{\theta_{n-1}}(\Lambda_{n-1}) .
\end{equation*}
Thus,
\begin{equation*}
    \frac{\Psi_N}{N} - \frac{1}{N} \sum_{n=0}^{N-1} h_{\theta_n}(\Lambda_n)
    = \frac{1}{N} \sum_{n=1}^{N}
    \left[ \frac{(T_n-\rho_{\theta_{n-1}}(X_n)) Z_n}{\rho_{\theta_{n-1}}(X_n)(1-\rho_{\theta_{n-1}}(X_n))}
    - h_{\theta_{n-1}}(\Lambda_{n-1}) \right]
    = \frac{1}{N} \sum_{n=1}^{N} \Delta M_n
\end{equation*}
To apply a law of large numbers for martingales, we bound its conditional moments.
We use the Jensen's inequality and choose appropriate $a \in [1/2,1)$, it holds that
\begin{align*}
    \mathbb{E} \left[ \left| \Delta M_n \right|^{1/a} \mid \mathcal{F}_{n-1} \right]
    &= \mathbb{E} \left[ \left| \frac{(T_n-\rho_{\theta_{n-1}}(X_n)) Z_n}{\rho_{\theta_{n-1}}(X_n)(1-\rho_{\theta_{n-1}}(X_n))}
    - h_{\theta_{n-1}}(\Lambda_{n-1}) \right|^{1/a} \mid \mathcal{F}_{n-1} \right] \\
    & \leq \frac{1}{\iota_\rho^{1/a}(1-\iota_\rho)^{1/a}} \left\{ \mathbb{E} \left[ \left| Z_n \right|^{1/a} \mid \mathcal{F}_n \right]
    + \mathbb{E} \left[ \left| h_{\theta_{n-1}}(\Lambda_{n-1}) \right|^{1/a} \mid \mathcal{F}_{n-1} \right] \right\} \\
    & \leq \frac{2}{\iota_\rho^{1/a}(1-\iota_\rho)^{1/a}} \mathbb{E} \left| Z \right|^{1/a}
    < \infty .
\end{align*}
Thus,
\begin{equation*}
    \sum_{n=1}^\infty n^{-1/a}
    \mathbb{E} \left[ \left| \Delta M_n \right|^{1/a} \mid \mathcal{F}_{n-1} \right]
    < \infty .
\end{equation*}
By Theorem 2.18 in \cite{hallMartingaleLimitTheory1980}, it holds that
\begin{equation}
    \frac{\Psi_N}{N} - \frac{1}{N} \sum_{n=0}^{N-1} h_{\theta_n}(\Lambda_n)
    = \frac{1}{N} \sum_{n=1}^{N} \Delta M_n
    \rightarrow 0 \quad \text{a.s.}
    \label{eq_proof_theorem_additional_covariate_LLN_2}
\end{equation}

Combining \eqref{eq_proof_theorem_additional_covariate_LLN_1} and \eqref{eq_proof_theorem_additional_covariate_LLN_2}, we obtain
\begin{equation*}
    \frac{\Psi_N}{N}
    = \frac{1}{N} \sum_{n=1}^N \frac{(T_n-\rho_{\theta_{n-1}}(X_n)) Z_n}{\rho_{\theta_{n-1}}(X_n)(1-\rho_{\theta_{n-1}}(X_n))}
    \xrightarrow{\mathbb{P}} 0 .
\end{equation*}

\subsection{Proof of Theorem \ref{theorem_additional_covariate_CLT}}

By Lemma \ref{lemma_application_simultaneous_properties}, Assumption \ref{assumption_simultaneous_properties} holds.

By Lemma \ref{lemma_CLT} with $Z^{(c)}_n := Z_n / [\rho_{\theta_{n-1}}(X_n)(1-\rho_{\theta_{n-1}}(X_n))]$, the limit holds that
\begin{equation*}
    \frac{1}{\sqrt{N}} \sum_{n=1}^{N}
    \left[ \frac{(T_n-\rho_{\theta_{n-1}}(X_n)) Z_n}{\rho_{\theta_{n-1}}(X_n)(1-\rho_{\theta_{n-1}}(X_n))}
    - \pi_{\theta_{n-1}} h_{\theta_{n-1}} \right]
    \xrightarrow{d} \mathcal{N}(0, {\sigma^*_{(Z)}}^2) ,
\end{equation*}
where $\pi_\theta$ is the invariant probability of the transition kernel $P_\theta$, the function $h_\theta$ is defined by
\begin{equation*}
    h_\theta(\Lambda)
    = \mathbb{E}_{X \sim \Gamma} \left[ \frac{(g_\theta(\Lambda, X) - \rho_\theta(X)) Z}{\rho_\theta(X)(1-\rho_\theta(X))} \right] ,
\end{equation*}
and the asymptotic variance ${\sigma^*_{(Z)}}^2$ equals that under the fixed parameter sequence $\{\theta_n = \theta^*\}_{n \in \mathbb{N}}$.
Note that one requirement in Lemma \ref{lemma_CLT}, Assumption \ref{assumption_lipschitz_continuity_conditional_expectation}, is satisfied as a direct consequence of Assumption \ref{assumption_lipschitz_continuity_functions_kernels} together with the identity
\begin{equation*}
    \mathbb{E} \left[ \frac{Z}{\rho_\theta(X)(1-\rho_\theta(X))} \mid X \right]
    = \frac{\mathbb{E} \left[ Z \mid X \right]}{\rho_\theta(X)(1-\rho_\theta(X))} .
\end{equation*}

Finally, Lemma \ref{lemma_expectation_pi_h} implies that $\pi_\theta h_\theta = 0$ for any $\theta \in \Theta$. 
Moreover, Lemma \ref{lemma_expression_variance} implies that under the fixed parameter sequence $\{\theta_n = \theta^*\}_{n \in \mathbb{N}}$, the asymptotic variance is
\begin{equation*}
    {\sigma^*_{(Z)}}^2
    = \mathbb{E}_{(X, Z) \sim \Gamma_{X,Z}} \left[
    \left[ \rho_{\theta^*}(X)(1-\rho_{\theta^*}(X)) \right]^{-1}
    \left\{ Z - a^T \phi(X) \right\}^2 \right] ,
\end{equation*}
where the vector $a$ satisfies
\begin{align*}
    & a^T \mathbb{E} \left[ [\rho_{\theta^*}(X)(1-\rho_{\theta^*}(X))]^{-1} \phi(X) \phi(X)^T
    / \max\{ \|\phi(X)\|/[\rho_{\theta^*}(X)(1-\rho_{\theta^*}(X))], C_{\theta^*} \} \right] \\
    & \quad = \mathbb{E} \left[ [\rho_{\theta^*}(X)(1-\rho_{\theta^*}(X))]^{-1} Z \phi(X)^T
    / \max\{ \|\phi(X)\|/[\rho_{\theta^*}(X)(1-\rho_{\theta^*}(X))], C_{\theta^*} \} \right] .
\end{align*}

\subsection{Proof of Theorem \ref{theorem_conditional_allocation_ratio}}

Assumption \ref{assumption_lipschitz_continuity_functions_kernels} implies that $\rho_\theta(x) \rightarrow \rho_{\theta^*}(x)$ as $\theta \rightarrow \theta^*$.

Following a similar argument as in the proof of Theorem \ref{theorem_additional_covariate_LLN} with
\begin{equation*}
    h_\theta(\Lambda)
    = \mathbb{E}_{X \sim \Gamma} \left[ (g_\theta(\Lambda, X) - \rho_\theta(X)) \mathbb{I}(X = x) \right] .
\end{equation*}
It then follows that
\begin{equation*}
    \frac{N_{n,1}(x)}{N}
    - \frac{1}{N} \sum_{n=1}^N \left[ \rho_{\theta_{n-1}}(X_n) \mathbb{I}(X_n = x) \right]
    = \frac{1}{N} \sum_{n=1}^N (T_n-\rho_{\theta_{n-1}}(X_n)) \mathbb{I}(X_n = x)
    \xrightarrow{\mathbb{P}} 0 .
\end{equation*}
Moreover,
\begin{equation*}
    \sum_{n=1}^N \left[ \rho_{\theta_{n-1}}(X_n) \mathbb{I}(X_n = x)
    - \rho_{\theta_{n-1}}(x) P_\Gamma(X = x) \right]
\end{equation*}
forms a martingale.
Because the conditional variance of each term
\begin{equation*}
    \mathbb{E}_{X_n \sim \Gamma} \left[ \rho_{\theta_{n-1}}(X_n) \mathbb{I}(X_n = x)
    - \rho_{\theta_{n-1}}(x) P_\Gamma(X = x) \right]^2
    \leq 1 < \infty ,
\end{equation*}
by Theorem 2.18 in \cite{hallMartingaleLimitTheory1980}, it holds that
\begin{equation*}
    \frac{1}{N} \sum_{n=1}^N \left[ \rho_{\theta_{n-1}}(X_n) \mathbb{I}(X_n = x)
    - \rho_{\theta_{n-1}}(x) P_\Gamma(X = x) \right]
    \rightarrow 0 \quad \text{a.s.}
\end{equation*}

Combining the limits above, it holds that
\begin{equation*}
    \frac{N_{n,1}(x)}{N}
    - \frac{\sum_{n=1}^N \rho_{\theta_{n-1}}(x) }{N} P_\Gamma(X = x)
    \xrightarrow{\mathbb{P}} 0 .
\end{equation*}
Thus, by $\theta_n \xrightarrow{\mathbb{P}} \theta^*$ and $\rho_\theta(x) \rightarrow \rho_{\theta^*}(x)$ as $\theta \rightarrow \theta^*$, it follows that
\begin{equation*}
    \frac{N_{n,1}(x)}{N}
    \xrightarrow{\mathbb{P}} \rho_{\theta^*}(x) P_\Gamma(X = x) .
\end{equation*}

By the law of large numbers for the i.i.d. variables $\{\mathbb{I}(X_n = x)\}$, we have
\begin{equation*}
    \frac{N_n(x)}{N}
    = \frac{\sum_{n=1}^N \mathbb{I}(X_n = x)}{N}
    \xrightarrow{\mathbb{P}} P_\Gamma(X = x) .
\end{equation*}

In conclusion,
\begin{equation*}
    \frac{N_{n,1}(x)}{N_n(x)}
    = \left[ \frac{N_n(x)}{N} \right]^{-1} \frac{N_{n,1}(x)}{N}
    \xrightarrow{\mathbb{P}} \rho_{\theta^*}(x) .
\end{equation*}

\subsection{Proof of Theorem \ref{theorem_estimation_consistency}}

By Lemma \ref{lemma_application_simultaneous_properties}, Assumption \ref{assumption_simultaneous_properties} is satisfied.
Moreover, by Lemma \ref{lemma_expectation_pi_h}, we have $\pi_\theta \left[ g_\theta(\cdot, x) \right] = \rho_\theta(x)$ for $\Gamma$-a.e. $x$.
Therefore, Lemma \ref{lemma_estimation_convergence} directly implies the desired conclusion.

\subsection{Proof of Theorem \ref{theorem_estimation_difference}}

By Lemma \ref{lemma_application_simultaneous_properties}, Assumption \ref{assumption_simultaneous_properties} is satisfied.
Moreover, by Lemma \ref{lemma_expectation_pi_h}, we have $\pi_\theta \left[ g_\theta(\cdot, x) \right] = \rho_\theta(x)$ for $\Gamma$-a.e. $x$.
Therefore, Lemma \ref{lemma_estimation_difference_bounded} implies that
\begin{equation*}
    \mathbb{E} \left[ \| \eta_n - \eta_{n-1} \| \right] < c n^{-q} ,
\end{equation*}
with $q \in (0,1]$ defined in Assumption \ref{assumption_theta_weak_diminishing_adaptation}.
Thus,
\begin{equation*}
    \sum_{n=0}^{N-1} \mathbb{E} \| \eta_n - \eta_{n+1} \|
    = \begin{cases}
        O(N^{1-q}) ,  & 0 < q < 1, \\
        O(\ln N)   ,  & q = 1 .
    \end{cases}
\end{equation*}
In conclusion,
\begin{equation*}
    \sum_{n=0}^{N-1} \| \eta_n - \eta_{n+1} \| = o_P(N^p) ,
\end{equation*}
for any $p \in (1-q, 1)$.

\subsection{Proof of Theorem \ref{theorem_estimation_asymptotic_normality}}

By Lemma \ref{lemma_application_simultaneous_properties}, Assumption \ref{assumption_simultaneous_properties} is satisfied.
Moreover, by Lemma \ref{lemma_expectation_pi_h}, we have $\pi_\theta \left[ g_\theta(\cdot, x) \right] = \rho_\theta(x)$ for $\Gamma$-a.e. $x$.
Therefore, Lemma \ref{lemma_estimation_asymptotic_normality} directly implies the desired conclusion.

\subsection{Proof of Theorem \ref{theorem_allocation_parameter_update_weak}}

When the allocation parameter $\theta_{n+1}$ is updated according to \eqref{eq_allocation_parameter_update_mechanism_1} under Assumption \ref{assumption_theta_compact},
\begin{equation*}
    \frac{1}{N} \sum_{n=0}^{N-1} d(\theta_n,\theta_{n+1})
    = \frac{1}{N} \sum_{n=1}^{N} [ d(\theta_n,\theta_{n-1}) \mathbb{I}(n \in S) ]
    \leq \diam(\Theta) \frac{\#(S \cap \{1, \dots, N\})}{N}
    \rightarrow 0
\end{equation*}

When the allocation parameter $\theta_{n+1}$ is updated according to \eqref{eq_allocation_parameter_update_mechanism_2},
\begin{equation*}
    \frac{1}{N} \sum_{n=0}^{N-1} d(\theta_n,\theta_{n+1})
    \leq \frac{1}{N} \sum_{n=1}^{N} C_{\mathrm{clip}, n}
    \rightarrow 0 .
\end{equation*}

Because these bounds are deterministic and converge to zero, the proof of Theorem \ref{theorem_allocation_parameter_update_weak} is complete.

\subsection{Proof of Theorem \ref{theorem_allocation_parameter_update_strong}}

By Lemma \ref{lemma_allocation_parameter_update}, we obtain $\theta_n \rightarrow \theta^*$ in probability, $\theta^* = \eta^*$ and
\begin{equation*}
    \sum_{n=0}^{N-1} d(\theta_n,\theta_{n+1})
    \leq \sum_{n=0}^{N-1} d(\eta_n,\eta_{n+1})
    = o_P(N^p) .
\end{equation*}

\section{Treatment Effect Estimation}
\label{sec_treatment_effect_estimation}

\subsection{Proof of Theorem \ref{theorem_IPW_estimator}}

The asymptotic distribution of $\hat{\tau}_{\mathrm{IPW}}$ follows from Lemmas \ref{lemma_expectation_pi_h}, \ref{lemma_expression_variance}, \ref{lemma_CLT} and \ref{lemma_lipschitz_continuity_conditional_expectation} with $Z^{(c)}_n := Y_n(1)/\rho_{\theta_{n-1}}(X_n) + Y_n(0)/\left(1-\rho_{\theta_{n-1}}(X_n)\right)$ by the decomposition
\begin{align*}
    \hat{\tau}_{\mathrm{IPW}}
    &= \frac{1}{N} \sum_{n=1}^N \left[ \frac{T_n Y_n(1)}{\rho_{\theta_{n-1}}(X_n)}
    - \frac{(1-T_n) Y_n(0)}{1-\rho_{\theta_{n-1}}(X_n)} \right] \\
    &= \frac{1}{N} \sum_{n=1}^N \left[ \left( T_n - \rho_{\theta_{n-1}}(X_n) \right) \left[ \frac{Y_n(1)}{\rho_{\theta_{n-1}}(X_n)} + \frac{Y_n(0)}{1-\rho_{\theta_{n-1}}(X_n)} \right]
    + Y_n(1) - Y_n(0) \right] .
\end{align*}

\section{Lemmas for the CBARA Procedure}
\label{sec_application_process}

\subsection{Definition of the Transition Kernel}

\label{subsec_transition_kernel}

Let the state space be $\mathrm{X} = \mathbb{R}^d$ and the parameter space be $\Theta$.
For each $\theta \in \Theta$, define a transition kernel $P_\theta$ on $\mathrm{X}$ by
\begin{equation}
    P_\theta(\Lambda,h) = \int \left[ g_\theta(\Lambda, X)h(\Lambda+\phi(X)/\rho_\theta(X)) + [1-g_\theta(\Lambda, X)]h(\Lambda-\phi(X)/(1-\rho_\theta(X))) \right] \Gamma( \mathrm{d} X)
    \label{eq_transition_kernel_CARA_procedure} .
\end{equation}
In this subsection, we show, via the following lemma, that the evolution of the process $\{\Lambda_n\}$ under the CBARA procedure can be fully characterized by the transition kernel $P_\theta$ defined above.
\begin{lemma}[Lemma \ref{lemma_transition_kernel_main}]
    \label{lemma_transition_kernel}
    
    Let the function $h: \mathrm{X} \to \mathbb{R}$ be any integrable function.
    Then, for any $n \in \mathbb{N}$,
    \begin{equation*}
        \mathbb{E} \left[ h(\Lambda_{n+1}) \mid \mathcal{F}_n \right]
        = P_{\theta_n}(\Lambda_n,h) .
    \end{equation*}
\end{lemma}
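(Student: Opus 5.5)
The plan is to compute the conditional expectation directly, in two stages: first condition on $X_{n+1}$ and integrate out $T_{n+1}$, then integrate out $X_{n+1}$. The starting point is the recursion defining the imbalance vector in \eqref{eq_definition_imbalance_vector}:
\begin{equation*}
    \Lambda_{n+1} = \Lambda_n + \frac{(T_{n+1}-\rho_{\theta_n}(X_{n+1}))\phi(X_{n+1})}{\rho_{\theta_n}(X_{n+1})(1-\rho_{\theta_n}(X_{n+1}))} .
\end{equation*}
On $\{T_{n+1}=1\}$ the increment simplifies to $\phi(X_{n+1})/\rho_{\theta_n}(X_{n+1})$, and on $\{T_{n+1}=0\}$ it simplifies to $-\phi(X_{n+1})/(1-\rho_{\theta_n}(X_{n+1}))$. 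Hence
\begin{equation*}
    h(\Lambda_{n+1}) = T_{n+1}\,h\bigl(\Lambda_n+\phi(X_{n+1})/\rho_{\theta_n}(X_{n+1})\bigr) + (1-T_{n+1})\,h\bigl(\Lambda_n-\phi(X_{n+1})/(1-\rho_{\theta_n}(X_{n+1}))\bigr).
\end{equation*}

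Next, I will use the tower property with the intermediate $\sigma$-field $\sigma(\mathcal{F}_n, X_{n+1})$. Both $\Lambda_n$ and $\theta_n$ are $\mathcal{F}_n$-measurable; indeed $\Lambda_n$ is a measurable function of $(X_i,T_i,\theta_{i-1})_{i\le n}$. The two $h(\cdot)$ factors above are therefore $\sigma(\mathcal{F}_n,X_{n+1})$-measurable, and only $T_{n+1}$ is random given this $\sigma$-field. By \eqref{eq_T_conditional_1}--\eqref{eq_T_conditional_0} (Assumption \ref{assumption_sampling}), $\mathbb{E}[T_{n+1}\mid\mathcal{F}_n,X_{n+1}] = g_{\theta_n}(\Lambda_n,X_{n+1})$. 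This gives
\begin{equation*}
    \mathbb{E}[h(\Lambda_{n+1})\mid\mathcal{F}_n,X_{n+1}] = G(\Lambda_n,\theta_n,X_{n+1}),
\end{equation*}
where $G(\Lambda,\theta,x) = g_\theta(\Lambda,x)h(\Lambda+\phi(x)/\rho_\theta(x)) + [1-g_\theta(\Lambda,x)]h(\Lambda-\phi(x)/(1-\rho_\theta(x)))$, a jointly measurable function.

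Finally, I integrate out $X_{n+1}$. By Assumption \ref{assumption_filtration}, $X_{n+1}$ is independent of $\mathcal{F}_n$, and by Assumption \ref{assumption_iid} it has law $\Gamma$. The freezing lemma (conditional expectation of a jointly measurable function of an $\mathcal{F}_n$-measurable variable and an independent variable) then yields
\begin{equation*}
    \mathbb{E}[G(\Lambda_n,\theta_n,X_{n+1})\mid\mathcal{F}_n] = \int G(\Lambda_n,\theta_n,x)\,\Gamma(\mathrm{d}x),
\end{equation*}
which is exactly $P_{\theta_n}(\Lambda_n,h)$ by \eqref{eq_transition_kernel_CARA_procedure}.

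The only delicate point is integrability, which the freezing lemma needs. The statement does not make precise what "integrable" means, so I will interpret it as $\mathbb{E}|h(\Lambda_{n+1})|<\infty$. If that is too weak for the conditional version, I will first prove the identity for bounded measurable $h$ (or nonnegative $h$, via monotone convergence), where every step is unconditional, and then extend to general $h$ by splitting $h=h^+-h^-$. Joint measurability of $G$ follows from measurability of $(\theta,\Lambda,x)\mapsto g_\theta(\Lambda,x)$ and $\rho_\theta(x)$, which holds by the explicit form \eqref{eq_CARA_allocation_function}.
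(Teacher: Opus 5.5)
Your proposal is correct and follows the paper's proof essentially step for step. Both condition on $\sigma(\mathcal{F}_n, X_{n+1})$, replace $T_{n+1}$ by $g_{\theta_n}(\Lambda_n, X_{n+1})$ via Assumption \ref{assumption_sampling}, and then integrate $X_{n+1}$ against $\Gamma$ using its independence from $\mathcal{F}_n$ (Assumption \ref{assumption_filtration}); your extra care about the freezing lemma, joint measurability, and reducing to nonnegative or bounded $h$ fills in details the paper leaves implicit rather than changing the argument.
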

\begin{proof}
    By definition of $\Lambda_{n+1}$, we have
    \begin{align*}
        & \quad \mathbb{E} \left[ h(\Lambda_{n+1}) \mid \mathcal{F}_n \right] \\
        &= \mathbb{E} \left[ h(\Lambda_n + (T_{n+1}-\rho_{\theta_n}(X_{n+1})) \phi(X_{n+1}) / [\rho_{\theta_n}(X_{n+1})(1-\rho_{\theta_n}(X_{n+1}))]) \mid \mathcal{F}_n \right] \\
        &= \mathbb{E} \left[ \mathbb{E} \left[ h(\Lambda_n + (T_{n+1}-\rho_{\theta_n}(X_{n+1})) \phi(X_{n+1}) / [\rho_{\theta_n}(X_{n+1})(1-\rho_{\theta_n}(X_{n+1}))]) \mid \mathcal{F}_n, X_{n+1} \right] \mid \mathcal{F}_n \right] \\
        &= \mathbb{E} \left[ \left[ h(\Lambda_n +  \phi(X_{n+1}) / \rho_{\theta_n}(X_{n+1}))g_{\theta_n}(\Lambda_n, X_{n+1}) \right. \right. \\
        & \quad \left. \left. + h(\Lambda_n - \phi(X_{n+1}) / (1-\rho_{\theta_n}(X_{n+1})))(1-g_{\theta_n}(\Lambda_n, X_{n+1})) \right]
        \mid \mathcal{F}_n \right] \\
        &= \mathbb{E}_{X \sim \Gamma} \left[ h(\Lambda_n + \phi(X)/\rho_{\theta_n}(X))g_{\theta_n}(\Lambda_n, X)
        + h(\Lambda_n - \phi(X)/(1-\rho_{\theta_n}(X)))(1-g_{\theta_n}(\Lambda_n, X)) \right] \\
        &= P_{\theta_n}(\Lambda_n,h) ,
    \end{align*}
    where the third equality is from Assumption \ref{assumption_sampling} and the fourth equality is from the independence between $X_{n+1}$ and $\mathcal{F}_n$ in Assumption \ref{assumption_filtration}.
\end{proof}

Although many of the subsequent proofs involve not a fixed function $h$ but a family of functions $\{h_\theta\}_{\theta \in \Theta}$, Lemma \ref{lemma_transition_kernel} still implies that
\begin{equation*}
    \mathbb{E} \left[ h_{\theta_n}(\Lambda_{n+1}) \mid \mathcal{F}_n \right] = P_{\theta_n}(\Lambda_n,h_{\theta_n}) ,
\end{equation*}
since $h_{\theta_n}$ is $\mathcal{F}_n$-measurable.

\subsection{Continuity of Transition Kernels}

\label{subsec_continuity_transition_kernels}

In this subsection, based on Assumption \ref{assumption_lipschitz_continuity_functions_kernels}, for the transition kernels \eqref{eq_transition_kernel_CARA_procedure}, we establish a particular form of continuity (Definition \ref{definition_robust_lipschitz_continuity_family_kernels}). 
The key step is provided by the following lemma.
\begin{lemma}[Lemma \ref{lemma_continuity_transition_kernels_main}]
    \label{lemma_continuity_transition_kernels}

    Suppose that Assumption \ref{assumption_lipschitz_continuity_functions_kernels} holds and that $\phi(X)$ has a finite second moment.
    Then the family of transition probability kernels $\{P_\theta\}_{\theta \in \Theta}$ is robustly Lipschitz continuous with a Lipschitz constant $L_P \geq 0$.
\end{lemma}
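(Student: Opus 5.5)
To prove the statement, I will construct the coupling kernel $K$ explicitly as a synchronous coupling driven by a common covariate and a common uniform variable. Fix $\theta,\theta'$ and states $\Lambda,\Lambda'$. Draw $X\sim\Gamma$ and $U\sim\mathrm{Unif}(0,1)$ independently. Set $T=\mathbb{I}(U\le g_\theta(\Lambda,X))$ and $T'=\mathbb{I}(U\le g_{\theta'}(\Lambda',X))$. The next states are
\begin{equation*}
u=\Lambda+\frac{(T-\rho_\theta(X))\phi(X)}{\rho_\theta(X)(1-\rho_\theta(X))},\qquad
v=\Lambda'+\frac{(T'-\rho_{\theta'}(X))\phi(X)}{\rho_{\theta'}(X)(1-\rho_{\theta'}(X))}.
\end{equation*}
Let $K(\Lambda,\Lambda';\cdot)$ be the law of $(u,v)$ restricted to the event $\{T=T'\}$, i.e.\ we discard the mass where the assignments disagree. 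The marginal inequalities $K(\Lambda,\Lambda';A\times\mathrm{X})\le P_\theta(\Lambda,A)$ and $K(\Lambda,\Lambda';\mathrm{X}\times B)\le P_{\theta'}(\Lambda',B)$ then hold immediately: by \eqref{eq_transition_kernel_CARA_procedure}, the full law of $u$ is $P_\theta(\Lambda,\cdot)$ and the full law of $v$ is $P_{\theta'}(\Lambda',\cdot)$.

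\textbf{Mass deficiency.} The discarded mass is
\begin{equation*}
1-K(\mathrm{X}\times\mathrm{X})=P(T\neq T')=\mathbb{E}_X\,|g_\theta(\Lambda,X)-g_{\theta'}(\Lambda',X)|.
\end{equation*}
By Cauchy--Schwarz and Assumption \ref{assumption_lipschitz_continuity_functions_kernels}, this is at most $\|g_\theta(\Lambda,\cdot)-g_{\theta'}(\Lambda',\cdot)\|_{L^2(\Gamma)}\le L_g d(\theta,\theta')+L_g d(\Lambda,\Lambda')$. This gives the first bound with $L_P\ge L_g$ and $\tau=L_P d(\theta,\theta')$.

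\textbf{Transport cost.} On $\{T=T'=1\}$ we have $u-v=\Lambda-\Lambda'+\phi(X)\,[1/\rho_\theta(X)-1/\rho_{\theta'}(X)]$, and on $\{T=T'=0\}$ the analogous identity holds with $1/(1-\rho)$. Using $\rho\in[\iota_\rho,1-\iota_\rho]$, we have $|1/\rho_\theta-1/\rho_{\theta'}|\le\iota_\rho^{-2}|\rho_\theta-\rho_{\theta'}|$. Hence
\begin{equation*}
\int d(u,v)\,K(\mathrm{d}u\times\mathrm{d}v)\le d(\Lambda,\Lambda')+\iota_\rho^{-2}\,\mathbb{E}\bigl[\|\phi(X)\|\,|\rho_\theta(X)-\rho_{\theta'}(X)|\bigr].
\end{equation*}
Here I used $K$'s total mass $\le1$ for the first term. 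By Cauchy--Schwarz, the expectation is at most $(\mathbb{E}\|\phi(X)\|^2)^{1/2}L_\rho d(\theta,\theta')$; this is exactly where the finite second moment of $\phi(X)$ enters. Therefore $\epsilon=L_P d(\theta,\theta')$ with $L_P=\max\{L_g,\iota_\rho^{-2}L_\rho(\mathbb{E}\|\phi\|^2)^{1/2}\}$. Note that the coefficient of $d(\Lambda,\Lambda')$ is exactly $1$, which is the required non-expansivity; discarding the disagreement event is essential here, since otherwise the jump $\phi(X)/[\rho(1-\rho)]$ would add an $O(1)$ cost.

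\textbf{Main obstacle.} The delicate point is measurability and well-definedness of $K$ as a kernel on $\mathrm{X}^2\times\mathcal{X}^{\otimes2}$. I would handle this by writing $K(\Lambda,\Lambda';C)=\int\!\!\int \mathbb{I}(T=T')\,\mathbb{I}((u,v)\in C)\,\mathrm{d}U\,\Gamma(\mathrm{d}X)$, which is jointly measurable in $(\Lambda,\Lambda')$ because $g_\theta(\Lambda,x)$ is measurable in $(\Lambda,x)$. A second point to check is that the definition of robust Lipschitz continuity of the family (Definition \ref{definition_robust_lipschitz_continuity_family_kernels}) matches the triple $(L_P,L_Pd(\theta,\theta'),L_Pd(\theta,\theta'))$ in Lemma \ref{lemma_continuity_transition_kernels_main}; I would take a single $L_P$ as the maximum of the constants above.
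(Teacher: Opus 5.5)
Your proposal is correct and is essentially the paper's proof: the synchronous coupling restricted to $\{T=T'\}$ is exactly the paper's kernel $K_{\theta,\theta'}$, which puts weight $\min\{g_\theta(\Lambda,x),g_{\theta'}(\Lambda',x)\}$ on the treatment branch and $\min\{1-g_\theta(\Lambda,x),1-g_{\theta'}(\Lambda',x)\}$ on the control branch. Your mass-deficiency bound ($\mathbb{E}|g_\theta-g_{\theta'}|\le$ the $L^2(\Gamma)$ norm) and transport-cost bound (Cauchy--Schwarz with $\|\phi\|_{L^2(\Gamma)}$) are the same as the paper's; your constant $\iota_\rho^{-2}$ is slightly sharper than the paper's $\iota_\rho^{-2}(1-\iota_\rho)^{-2}$, and your remark on measurability of $K$ covers a point the paper leaves implicit.
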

\begin{proof}
    For any given $\theta$, $\theta^\prime \in \Theta$, define the coupling kernel 
    \begin{equation*}
        K_{\theta, \theta^\prime}: \mathrm{X}^2 \times \mathcal{X}^{\otimes 2} \to [0,1]
    \end{equation*}
    such that, for all $\Lambda, \Lambda^\prime \in \mathrm{X}$, for any function $\Phi: \mathrm{X}^2 \to \mathbb{R}$,
    \begin{align*}
        & \quad K_{\theta, \theta^\prime}(\Lambda, \Lambda^\prime; \Phi) \\
        &= \int \left[ \min \left\{ g_\theta(\Lambda, x), g_{\theta^\prime}(\Lambda^\prime, x) \right\}
        \Phi(\Lambda + \phi(x)/\rho_\theta(x), \Lambda^\prime + \phi(x)/\rho_{\theta^\prime}(x))
        \right] \Gamma(\mathrm{d} x) \\
        & \quad + \int \left[ \min \left\{ 1 - g_\theta(\Lambda, x), 1 - g_{\theta^\prime}(\Lambda^\prime, x) \right\}
        \Phi(\Lambda - \phi(x)/(1-\rho_\theta(x)), \Lambda^\prime - \phi(x)/(1-\rho_{\theta^\prime}(x)))
        \right] \Gamma(\mathrm{d} x) .
    \end{align*}

    Consider an indicator function $\chi_A$ on $A \in \mathcal{X}$.  
    Since
    \begin{align*}
        P_\theta(\Lambda, \chi_A)
        &= \int \left[ g_\theta(\Lambda, x)
        \chi_A(\Lambda + \phi(x)/\rho_\theta(x))
        \right] \Gamma(\mathrm{d} x) \\
        & \quad + \int \left[ \left\{ 1 - g_\theta(\Lambda, x) \right\}
        \chi_A(\Lambda - \phi(x)/(1-\rho_\theta(x)))
        \right] \Gamma(\mathrm{d} x) ,
    \end{align*}
    taking $\Phi(x_1, x_2) = \chi_A(x_1)$ yields
    \begin{align*}
        & \min \left\{ g_\theta(\Lambda, x), g_{\theta^\prime}(\Lambda^\prime, x) \right\}
        \Phi(\Lambda + \phi(x)/\rho_\theta(x), \Lambda^\prime + \phi(x)/\rho_{\theta^\prime}(x)) \\
        & \quad \leq g_\theta(\Lambda, x) \chi_A(\Lambda + \phi(x)/\rho_\theta(x)) , \\
        & \min \left\{ 1 - g_\theta(\Lambda, x), 1 - g_{\theta^\prime}(\Lambda^\prime, x) \right\}
        \Phi(\Lambda - \phi(x)/(1-\rho_\theta(x)), \Lambda^\prime - \phi(x)/(1-\rho_{\theta^\prime}(x))) \\
        & \quad \leq \left\{ 1 - g_\theta(\Lambda, x) \right\}
        \chi_A(\Lambda - \phi(x)/(1-\rho_\theta(x))) .
    \end{align*}
    Therefore, for any $A \in \mathcal{X}$,
    \begin{equation*}
        K_{\theta,\theta^\prime}(\Lambda, \Lambda^\prime; A \times \mathrm{X})
        \leq P_\theta(\Lambda,A) .
    \end{equation*}
    Similarly, for any $B \in \mathcal{X}$,
    \begin{equation*}
        K_{\theta,\theta^\prime}(\Lambda, \Lambda^\prime; \mathrm{X} \times B)
        \leq P_{\theta^\prime}(\Lambda^\prime,B) .
    \end{equation*}

    Moreover,
    \begin{align*}
        & \quad K_{\theta, \theta^\prime}(\Lambda, \Lambda^\prime; \mathrm{X} \times \mathrm{X}) \\
        &= \int \left[ \min \left\{ g_\theta(\Lambda, x), g_{\theta^\prime}(\Lambda^\prime, x) \right\}
        + \min \left\{ 1 - g_\theta(\Lambda, x), 1 - g_{\theta^\prime}(\Lambda^\prime, x) \right\} \right] \Gamma(\mathrm{d} x) \\
        &= 1 - \int \left| g_\theta(\Lambda, x) - g_{\theta^\prime}(\Lambda^\prime, x) \right| \Gamma(\mathrm{d} x) \\
        & \geq 1 - \left\| g_{\theta}(\Lambda,\cdot) - g_{\theta^\prime}(\Lambda^\prime,\cdot) \right\|_{L^2(\Gamma)} ,
    \end{align*}
    and
    \begin{align*}
        & \quad \int d(u,v) K(\Lambda, \Lambda^\prime; \mathrm{d} u \times \mathrm{d} v)
        = \int \| u - v \| K(\Lambda, \Lambda^\prime; \mathrm{d} u \times \mathrm{d} v) \\
        &= \int \left[ \min \left\{ g_\theta(\Lambda, x), g_{\theta^\prime}(\Lambda^\prime, x) \right\}
        \left\| (\Lambda + \phi(x)/\rho_\theta(x))
        - (\Lambda^\prime + \phi(x)/\rho_{\theta^\prime}(x)) \right\|
        \right] \Gamma(\mathrm{d} x) \\
        & \quad + \int \left[ \min \left\{ 1 - g_\theta(\Lambda, x), 1 - g_{\theta^\prime}(\Lambda^\prime, x) \right\}
        \left\| (\Lambda - \phi(x)/(1-\rho_\theta(x))) - (\Lambda^\prime - \phi(x)/(1-\rho_{\theta^\prime}(x))) \right\|
        \right] \Gamma(\mathrm{d} x) \\
        & \leq \int \left[ \min \left\{ g_\theta(\Lambda, x), g_{\theta^\prime}(\Lambda^\prime, x) \right\}
        \left\{ \left\| \Lambda - \Lambda^\prime \right\|
        + \left\| \phi(x)/\rho_\theta(x) - \phi(x)/\rho_{\theta^\prime}(x) \right\| \right\}
        \right] \Gamma(\mathrm{d} x) \\
        & \quad + \int \left[ \min \left\{ 1 - g_\theta(\Lambda, x), 1 - g_{\theta^\prime}(\Lambda^\prime, x) \right\}
        \left\{ \left\| \Lambda - \Lambda^\prime \right\|
        + \left\| \phi(x)/(1-\rho_\theta(x)) - \phi(x)/(1-\rho_{\theta^\prime}(x)) \right\| \right\}
        \right] \Gamma(\mathrm{d} x) \\
        & \leq \left\| \Lambda - \Lambda^\prime \right\|
        + \frac{1}{\iota_\rho^2(1-\iota_\rho)^2} \int \left| \rho_\theta(x) - \rho_{\theta^\prime}(x) \right|
        \left\| \phi(x) \right\| \Gamma(\mathrm{d} x) \\
        & \leq \left\| \Lambda - \Lambda^\prime \right\|
        + \frac{1}{\iota_\rho^2(1-\iota_\rho)^2}
        \left\| \rho_{\theta} - \rho_{\theta^\prime} \right\|_{L^2(\Gamma)}
        \left\| \phi \right\|_{L^2(\Gamma)} .
    \end{align*}

    Thus, we conclude that
    \begin{align*}
        0 \leq 1 - K(\Lambda, \Lambda^\prime; \mathrm{X} \times \mathrm{X})
        & \leq L_P d(\Lambda, \Lambda^\prime) + L_P d(\theta, \theta^\prime) , \\
        \int d(u,v) K(\Lambda, \Lambda^\prime; \mathrm{d} u \times \mathrm{d} v)
        &\leq d(\Lambda, \Lambda^\prime) + L_P d(\theta, \theta^\prime) .
    \end{align*}
    This completes the proof.
\end{proof}

\subsection{Technical Lemmas under the CBARA Procedure}

\label{subsec_technical_lemmas_CARA}

The lemmas used in our analysis are primarily aimed at establishing Lemma \ref{lemma_for_simultaneous_small_set_condition}.
The conclusions of these lemmas constitute the components of Assumption \ref{assumption_simultaneous_properties}, which plays a central role in Section \ref{sec_lemmas_Markov_chain}.

The following two lemmas are adapted from \cite{fangGeneralNonMarkovianFramework2026}.
\begin{lemma}
    \label{lemma_all_negative_feedback}

    If $\mathbb{E}_{X \sim \Gamma} \left[ \|\phi(X)\| \right] < \infty$, then there exists some $M>0$ and $\Delta>0$ such that for any $\theta \in \Theta$ and $\Lambda \in W_\phi$ with $\|\Lambda\| \geq M$,
    \begin{equation*}
        \mathbb{E}_\theta \left[ (\Lambda_1 - \Lambda_0)^T \frac{\Lambda_0}{\|\Lambda_0\|} \mid \Lambda_0 = \Lambda \right]
        = \mathbb{E}_{X \sim \Gamma} \left[
        \frac{g_\theta(\Lambda,X) - \rho_\theta(X)}{\rho_\theta(X)(1-\rho_\theta(X))} \cdot
        \frac{\phi(X)^T \Lambda}{\|\Lambda\|} \right]
        \leq -\Delta .
    \end{equation*}
\end{lemma}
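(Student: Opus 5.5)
The plan is to compute the conditional drift exactly, using the explicit form of the allocation function \eqref{eq_CARA_allocation_function}, and then bound it away from zero uniformly in $\theta$ and in the direction of $\Lambda$ by a compactness argument on the unit sphere of $W_\phi$.

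\textbf{Step 1: the identity.} Under the fixed parameter $\theta$, with $\Lambda_0=\Lambda$, the increment is $\Lambda_1-\Lambda_0=(T_1-\rho_\theta(X_1))\phi(X_1)/[\rho_\theta(X_1)(1-\rho_\theta(X_1))]$. Conditioning on $X_1$ and using $P(T_1=1\mid X_1)=g_\theta(\Lambda,X_1)$ gives the stated equality.

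\textbf{Step 2: insert the allocation function.} Take $M\ge C_\Lambda$ (and $M>0$). Then for $\|\Lambda\|\ge M$ we have $\max\{\|\Lambda\|,C_\Lambda\}=\|\Lambda\|$. Writing $u=\Lambda/\|\Lambda\|$, the drift equals
\begin{equation*}
-p_\theta\,\mathbb{E}_{X\sim\Gamma}\left[\frac{(\phi(X)^Tu)^2}{\rho_\theta(X)(1-\rho_\theta(X))\max\{\|\phi(X)\|/[\rho_\theta(X)(1-\rho_\theta(X))],C_\theta\}}\right].
\end{equation*}
The integrand is nonnegative, so it remains to bound this expectation below uniformly in $\theta$.

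\textbf{Step 3: remove the dependence on $\theta$.} We use three facts:
\begin{itemize}
\item[] $1/[\rho_\theta(1-\rho_\theta)]\ge 4$;
\item[] $\|\phi\|/[\rho_\theta(1-\rho_\theta)]\le \|\phi\|/c_\rho$ with $c_\rho=\iota_\rho(1-\iota_\rho)$;
\item[] $C_\theta\le \bar C:=\sup_\theta C_\theta<\infty$, since $1/C_\theta$ is bounded away from zero, and $p_\theta\ge \underline p>0$.
\end{itemize}
Together these show that the drift is at most $-4\underline p\,F(u)$, where
\begin{equation*}
F(u)=\mathbb{E}\left[\frac{(\phi(X)^Tu)^2}{\max\{\|\phi(X)\|/c_\rho,\bar C\}}\right].
\end{equation*}

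\textbf{Step 4: uniform positivity of $F$, the main point.} On the unit sphere $S$ of $W_\phi$, the integrand is continuous in $u$ and dominated by $\|\phi(X)\|^2/(\|\phi(X)\|/c_\rho)=c_\rho\|\phi(X)\|$, which is integrable by the first-moment hypothesis. Dominated convergence therefore makes $F$ continuous on the compact set $S$.

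For each $u\in S$ we have $F(u)>0$. Otherwise $\phi(X)^Tu=0$ almost surely, so $u$ would be orthogonal to the support of $\phi(X)$ and hence to its span $W_\phi$. This contradicts $u\in W_\phi$ with $\|u\|=1$.

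Hence $\min_S F>0$, and we set $\Delta=4\underline p\min_S F$. The only delicate points are the uniformity over $\theta$, which Step 3 handles by replacing every $\theta$-dependent quantity with a $\theta$-free bound in the right direction, and the restriction $\Lambda\in W_\phi$, which is what makes the positivity argument work.
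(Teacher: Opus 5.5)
Your proposal is correct and takes essentially the same route as the paper. You compute the drift exactly from the allocation function and take $M=C_\Lambda$, so the drift becomes $-p_\theta$ times a nonnegative quadratic form in $\Lambda/\|\Lambda\|$; you then bound that form below using compactness of the unit sphere of $W_\phi$ together with $\inf_\theta p_\theta>0$ and $\sup_\theta C_\theta<\infty$. Your Step~3, which replaces every $\theta$-dependent quantity by an explicit $\theta$-free bound before applying continuity and compactness, makes rigorous the uniformity in $\theta$ that the paper only asserts in passing.
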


\begin{lemma}
    \label{lemma_drift_condition_inequality}
    If Assumption \ref{assumption_x_sub_exponential_bound} holds, then there exists positive constants $\beta<1$, $b$ and $\lambda_1$ that only depend on $M$, $\Delta$, $\lambda$ and $C$ such that for any $\theta \in \Theta$ and $\Lambda \in W_\phi$,
    \begin{equation*}
        \mathbb{E}_\theta \left[ e^{\lambda_1 \|\Lambda_1\|} \mid \Lambda_0 = \Lambda \right] \leq \beta e^{\lambda_1 \|\Lambda\|} + b .
    \end{equation*}

    Denote the Lyapunov function $V(\Lambda) = e^{\lambda_1 \|\Lambda\|}$. For any $\alpha \in (0,1]$, there exists positive constants $\beta_\alpha<1$ and $b_\alpha=b$ such that the inequality
    \begin{equation*}
        (P_\theta V^\alpha)(\Lambda) \leq \beta_\alpha V^\alpha(\Lambda) + b_\alpha
    \end{equation*}
    holds.
\end{lemma}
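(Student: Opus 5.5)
The plan is to show that the one-step exponential drift holds uniformly in $\theta$ for large $\|\Lambda\|$, and then handle bounded $\|\Lambda\|$ crudely; the $V^\alpha$ version will follow by Jensen. Write the increment as $\xi = \Lambda_1 - \Lambda_0$. Given $X$, it equals $\phi(X)/\rho_\theta(X)$ or $-\phi(X)/(1-\rho_\theta(X))$, so $\|\xi\| \le \|\phi(X)\|/\iota_\rho$ uniformly in $\theta$ and $\Lambda$. The triangle inequality gives $\|\Lambda+\xi\| - \|\Lambda\| \le \|\xi\|$, hence
\begin{equation*}
    \mathbb{E}_\theta\left[ e^{\lambda_1(\|\Lambda_1\|-\|\Lambda\|)} \mid \Lambda_0 = \Lambda \right] \le \mathbb{E}\left[ e^{\lambda_1\|\phi(X)\|/\iota_\rho} \right],
\end{equation*}
which is finite once $\lambda_1 \le \lambda\iota_\rho$, by Assumption \ref{assumption_x_sub_exponential_bound}. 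For $\|\Lambda\| < M'$ this yields $\mathbb{E}_\theta[e^{\lambda_1\|\Lambda_1\|}] \le e^{\lambda_1 M'} C'$, and this quantity will serve as $b$.

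For $\|\Lambda\| \ge M'$, I will expand $D := \|\Lambda+\xi\|-\|\Lambda\|$. First I will establish the norm linearization $D \le \xi^T\Lambda/\|\Lambda\| + \|\xi\|^2/(2\|\Lambda\|)$, which follows from $\|\Lambda+\xi\|^2 = \|\Lambda\|^2 + 2\xi^T\Lambda + \|\xi\|^2$ together with $\sqrt{a^2+c} \le a + c/(2a)$. Next I will use the exponential bound $e^{\lambda_1 D} \le 1 + \lambda_1 D + \tfrac{1}{2}\lambda_1^2 D^2 e^{\lambda_1|D|}$, where $|D| \le \|\xi\|$. Taking expectations and invoking Lemma \ref{lemma_all_negative_feedback}, whose constants $M,\Delta$ are uniform over $\theta$, gives
\begin{equation*}
    \mathbb{E}_\theta\left[e^{\lambda_1 D}\right] \le 1 - \lambda_1\Delta + \frac{\lambda_1 \mathbb{E}\|\xi\|^2}{2\|\Lambda\|} + \frac{\lambda_1^2}{2}\,\mathbb{E}\left[\|\xi\|^2 e^{\lambda_1\|\xi\|}\right].
\end{equation*}
Both moments are bounded by constants depending only on $\iota_\rho$, $\lambda$ and $C$, provided $\lambda_1 \le \lambda\iota_\rho/2$, because sub-exponential variables have all polynomial-times-exponential moments at half rate. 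I will then choose $\lambda_1$ small so that the last term is at most $\lambda_1\Delta/4$, and choose $M' \ge M$ large so that the middle term is at most $\lambda_1\Delta/4$. This gives $\beta = 1 - \lambda_1\Delta/2 < 1$ for $\|\Lambda\| \ge M'$. Combining the two regions, and noting that $\Lambda_0 \in W_\phi$ implies $\Lambda_1 \in W_\phi$, yields $\mathbb{E}_\theta[e^{\lambda_1\|\Lambda_1\|}] \le \beta e^{\lambda_1\|\Lambda\|} + b$.

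For the second claim, Jensen with the concave map $t \mapsto t^\alpha$ gives $P_\theta V^\alpha \le (P_\theta V)^\alpha \le (\beta V + b)^\alpha \le \beta^\alpha V^\alpha + b^\alpha$, using subadditivity of $t^\alpha$. To obtain exactly $b_\alpha = b$, I will instead rerun the argument above with $\alpha\lambda_1$ in place of $\lambda_1$; all the smallness conditions are preserved when $\lambda_1$ is reduced, so this yields a $\beta_\alpha < 1$, and the bounded region can be absorbed into the constant $b$ after adjusting it. Alternatively, I can keep the Jensen route and set $\beta_\alpha = \beta^\alpha$, enlarging $b$ to $\max(b, b^\alpha)$ at the outset.

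The main obstacle is the second-order control in the large-$\|\Lambda\|$ region. The remainder must be bounded uniformly in $\theta$ using only the sub-exponential moment, and the norm curvature term $\|\xi\|^2/\|\Lambda\|$ forces $M'$ to be taken larger than $M$. Handling this correctly requires choosing $\lambda_1$ before $M'$.
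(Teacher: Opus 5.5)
Your proof is correct. Its skeleton matches the paper's: linearize the exponential and the norm around the projected increment $\xi^T\Lambda/\|\Lambda\|$, invoke Lemma~\ref{lemma_all_negative_feedback} for large $\|\Lambda\|$, and bound small $\|\Lambda\|$ crudely. The technical execution, however, differs.

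\textbf{Paper's route.} It truncates the increment at a level $M_2$ and treats the two pieces separately.
On $\{\|\xi\|\le M_2\}$ it uses $|e^x-1-x|\le x^2$ for small $|x|$, together with a two-sided norm expansion with error $2M_2^2/\|\Lambda\|$.
On the complement it uses the Chernoff-type bound $Ce^{(\lambda_1-\lambda\iota_\rho(1-\iota_\rho))M_2}$.
It recovers the full drift $-\Delta$ through a truncated first-moment bound $\mathbb{E}[\|\xi\|\mathbb{I}(\|\xi\|>M_1)]\le\Delta/2$.
All parameters are then tied together via $\lambda_1=M_2^{-3}$, $\|\Lambda\|\ge M_2^3$, and $M_2$ large.

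\textbf{Your route.} You avoid truncation altogether. The Lagrange remainder $e^x\le 1+x+\tfrac12x^2e^{|x|}$, combined with the one-sided bound $\|\Lambda+\xi\|-\|\Lambda\|\le \xi^T\Lambda/\|\Lambda\|+\|\xi\|^2/(2\|\Lambda\|)$, pushes all the error into the moments $\mathbb{E}\|\xi\|^2$ and $\mathbb{E}[\|\xi\|^2e^{\lambda_1\|\xi\|}]$. Both are finite at half the sub-exponential rate, with bounds depending only on $\iota_\rho,\lambda,C$. This decouples $\lambda_1$ from $M'$ and yields the transparent $\beta=1-\lambda_1\Delta/2$. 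In fact $M'=\max\{M,2\mathbb{E}\|\xi\|^2/\Delta\}$ does not depend on $\lambda_1$ at all, so the ordering you flag as the main obstacle is not actually delicate.

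\textbf{The $V^\alpha$ statement.} The paper reruns the whole argument with $M_2=M_2^*\alpha^{-1/3}$, i.e.\ with $\alpha\lambda_1$ in place of $\lambda_1$. This keeps $b=Ce$ and produces an explicit $\beta_\alpha$.
Your Jensen route $P_\theta V^\alpha\le(\beta V+b)^\alpha\le\beta^\alpha V^\alpha+b^\alpha$ is shorter. No enlargement of $b$ is needed, since your $b=e^{\lambda_1M'}\mathbb{E}[e^{\lambda_1\|\phi(X)\|/\iota_\rho}]\ge 1$ already gives $b^\alpha\le b$. So your remark about replacing $b$ by $\max(b,b^\alpha)$ is unnecessary, and slightly circular as phrased.
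Both routes give constants uniform in $\theta$ and $\beta_\alpha\to1$ as $\alpha\to0$.
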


\begin{lemma}
    \label{lemma_lipschitz_continuity_conditional_expectation}

    Suppose that the distribution $\Gamma_{X,Z,\theta}$ is the distribution of
    \begin{equation*}
        \left( X, \frac{Z^{(c)}}{\rho_\theta(X)^{m_{c,1}}(1-\rho_\theta(X))^{m_{c,0}}},
        \frac{Z^{(u)}}{\rho_\theta(X)^{m_{u,1}}(1-\rho_\theta(X))^{m_{u,0}}} \right) ,
    \end{equation*}
    where $X$ and $Z = (Z^{(c)}, Z^{(u)})$ are random variables following the joint distribution $\Gamma_{X,Z}$, and $m_{c,1}, m_{c,0}, m_{u,1}, m_{u,0} \in \mathbb{N}$.
    If Assumption \ref{assumption_lipschitz_continuity_functions_kernels} holds and $\mathbb{E} \left[ \|Z\|^2 \right] < \infty$, then Assumption \ref{assumption_lipschitz_continuity_conditional_expectation_relaxed} holds.
    If Assumption \ref{assumption_lipschitz_continuity_functions_kernels} holds and $\mathbb{E} \left[ \|Z\|^{4+\epsilon} \right] < \infty$ for some $\epsilon > 0$, then Assumption \ref{assumption_lipschitz_continuity_conditional_expectation} holds.
\end{lemma}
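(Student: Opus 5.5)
The plan is a direct verification. Fix the exponents $m_{c,1},m_{c,0},m_{u,1},m_{u,0}\in\mathbb{N}$ and write the transformed variables as multiplicative reweightings,
\begin{equation*}
    Z^{(c)}_\theta = w^{(c)}_\theta(X)\, Z^{(c)}, \qquad
    Z^{(u)}_\theta = w^{(u)}_\theta(X)\, Z^{(u)}, \qquad
    w^{(\cdot)}_\theta(x) = \rho_\theta(x)^{-m_{\cdot,1}}\,(1-\rho_\theta(x))^{-m_{\cdot,0}} .
\end{equation*}
All $\theta$-dependence then sits in the deterministic weights, which are functions of $\rho_\theta(X)$ alone.

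\textbf{Step 1 (weights).} By Assumption \ref{assumption_sampling}, $\rho_\theta(x)\in[\iota_\rho,1-\iota_\rho]$. The map $r\mapsto r^{-m_1}(1-r)^{-m_0}$ is smooth on this compact interval. Hence there are constants $C_w,L_w$, independent of $\theta$ and $x$, such that
\begin{equation*}
    0< w_\theta(x)\le C_w , \qquad
    |w_\theta(x)-w_{\theta'}(x)|\le L_w\,|\rho_\theta(x)-\rho_{\theta'}(x)| .
\end{equation*}
Combining this with Assumption \ref{assumption_lipschitz_continuity_functions_kernels} gives $\|w_\theta-w_{\theta'}\|_{L^2(\Gamma)}\le L_wL_\rho\, d(\theta,\theta')$. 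The same pointwise bound, together with $|\rho_\theta-\rho_{\theta'}|\le 1$, gives for every $s\ge 2$ the interpolation estimate
\begin{equation*}
    \|w_\theta-w_{\theta'}\|_{L^s(\Gamma)}\le C\, d(\theta,\theta')^{2/s} .
\end{equation*}

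\textbf{Step 2 (first-order quantities, relaxed assumption).} The objects entering Assumption \ref{assumption_lipschitz_continuity_conditional_expectation_relaxed} are, I expect, conditional means such as $\mathbb{E}[Z_\theta\mid X]=w_\theta(X)\mathbb{E}[Z\mid X]$, together with uniform moment bounds in $\theta$.
\begin{itemize}
    \item Uniform boundedness follows from $|w_\theta|\le C_w$ and $\mathbb{E}\|Z\|^2<\infty$.
    \item For continuity, write the difference as $(w_\theta-w_{\theta'})(X)\,\mathbb{E}[Z\mid X]$.
    \item Cauchy--Schwarz then bounds its $L^1(\Gamma)$ norm by $L_wL_\rho\,d(\theta,\theta')\,\|\mathbb{E}[Z\mid X]\|_{L^2}$.
    \item Its $L^2$ norm is controlled via the $L^s$ interpolation of Step 1 paired with $\|Z\|_{L^2}$; when only second moments are available, this yields Hölder rather than Lipschitz continuity.
\end{itemize}
This is exactly where $\mathbb{E}\|Z\|^2<\infty$ suffices.

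\textbf{Step 3 (second-order quantities, full assumption).} For the CLT version, the assumption must also cover quadratic functionals such as
\begin{equation*}
    \mathbb{E}\bigl[Z_\theta Z_\theta^T\mid X\bigr]=w_\theta(X)^2\,\mathbb{E}[ZZ^T\mid X]
\end{equation*}
and cross terms between the $(c)$ and $(u)$ parts, typically in an $L^{1+\delta}$ or $L^2$ sense. The difference of the weights, $w_\theta^2-w_{\theta'}^2$, is again bounded by $C|\rho_\theta-\rho_{\theta'}|$. I would apply Hölder's inequality with exponents chosen so that $\|Z\|^2$ lands in $L^{2+\epsilon/2}$; this is exactly what $\mathbb{E}\|Z\|^{4+\epsilon}<\infty$ provides. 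The weight difference is placed in the conjugate $L^{s}$ space and bounded by $C\,d(\theta,\theta')^{2/s}$ from Step 1. This produces a Hölder modulus in $d(\theta,\theta')$ with a positive exponent, plus uniform bounds in $\theta$.

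\textbf{Main obstacle.} The main obstacle is bookkeeping rather than ideas. Assumption \ref{assumption_lipschitz_continuity_functions_kernels} controls $\rho_\theta$ only in $L^2(\Gamma)$, not uniformly, so Lipschitz continuity in $\theta$ may degrade to Hölder continuity. I must therefore check that the exponents demanded in Assumptions \ref{assumption_lipschitz_continuity_conditional_expectation_relaxed} and \ref{assumption_lipschitz_continuity_conditional_expectation} are compatible with the $2/s$ rates from the interpolation. I would first try matching each required norm exactly by Hölder's inequality. If a genuinely Lipschitz rate is demanded in a norm higher than $L^1$, I would instead truncate $\|Z\|$ at level $K$ and optimize over $K$, using the $4+\epsilon$ moment to control the tail.
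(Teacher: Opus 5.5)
Your Step 2 proves the first claim. Assumption \ref{assumption_lipschitz_continuity_conditional_expectation_relaxed} asks exactly for $\|f_{Z^{(c)},\theta}-f_{Z^{(c)},\theta'}\|_{L^1(\Gamma)}\le L_f\,d(\theta,\theta')$ (and likewise for $Z^{(u)}$), and your Cauchy--Schwarz bound $L_wL_\rho\,d(\theta,\theta')\,\|\mathbb{E}[Z\mid X]\|_{L^2(\Gamma)}$ delivers it. The second claim has a genuine gap. Assumption \ref{assumption_lipschitz_continuity_conditional_expectation} covers the five families $f_{Z^{(c)},\theta}$, $f_{Z^{(u)},\theta}$, $f_{(Z^{(c)})^2,\theta}$, $f_{(Z^{(u)})^2,\theta}$, $f_{(Z^{(u)}-\mathbb{E}Z^{(u)})Z^{(c)},\theta}$, and for each it requires a \emph{Lipschitz} bound in an exponentially weighted $L^1$ norm:
\[
\bigl\|e^{\lambda_f\|\phi(X)\|}\,|f_\theta-f_{\theta'}|\bigr\|_{L^1(\Gamma)}\le L_f\,d(\theta,\theta') \quad\text{for some }\lambda_f>0 .
\]
Your Step 3 ends with a H\"older modulus $d(\theta,\theta')^{2/s}$, which is weaker than this. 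It also never deals with the weight $e^{\lambda_f\|\phi(X)\|}$, which is unbounded unless $\phi(X)$ is bounded. Controlling that weight needs an input your plan never uses: the sub-exponential moment of $\phi(X)$ from Assumption \ref{assumption_x_sub_exponential_bound}. The paper's proof invokes it, even though the lemma does not list it.

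The fix is a three-factor H\"older inequality. Take $f_{(Z^{(c)})^2,\theta}=(w^{(c)}_\theta)^2 f_{(Z^{(c)})^2}$. Your Step 1 gives $|f_\theta-f_{\theta'}|\le C\,|f_{(Z^{(c)})^2}|\,|\rho_\theta-\rho_{\theta'}|$ pointwise, hence
\[
\bigl\|e^{\lambda_f\|\phi\|}|f_\theta-f_{\theta'}|\bigr\|_{L^1(\Gamma)}\le C\,\bigl\|e^{\lambda_f\|\phi\|}\bigr\|_{L^{M_\epsilon}(\Gamma)}\,\bigl\|f_{(Z^{(c)})^2}\bigr\|_{L^{2+\epsilon/2}(\Gamma)}\,\|\rho_\theta-\rho_{\theta'}\|_{L^2(\Gamma)},\qquad \frac{1}{M_\epsilon}+\frac{1}{2+\epsilon/2}+\frac12=1 .
\]
Choose $\lambda_f$ small enough that $M_\epsilon\lambda_f<\lambda$; the middle factor is finite by conditional Jensen and $\mathbb{E}\|Z\|^{4+\epsilon}<\infty$. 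Because the $\rho$-difference sits in $L^2$, Assumption \ref{assumption_lipschitz_continuity_functions_kernels} gives the linear rate directly, with no interpolation to $L^s$ for $s\ge2$ and no truncation.

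This also corrects your diagnosis of the obstacle. $L^2$ control of $\rho_\theta$ is exactly enough for an $L^1$-type target. The role of $\epsilon>0$ is to leave the slack $1/M_\epsilon>0$ that absorbs the exponential weight. With only fourth moments there is no room for that factor, and pushing $|\rho_\theta-\rho_{\theta'}|$ into a higher $L^r$ (using $|\rho_\theta-\rho_{\theta'}|\le1$) only recovers the H\"older rate you obtained.

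The other families are handled the same way. For the cross-term family, note also that the centring $\mathbb{E}_\theta Z^{(u)}=\mathbb{E}[w^{(u)}_\theta(X)Z^{(u)}]$ depends on $\theta$; its Lipschitz dependence follows from your Step 2 bound.
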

\begin{proof}
    For brevity, we only present the proof for the case of Assumption \ref{assumption_lipschitz_continuity_conditional_expectation}, and the proof for Assumption \ref{assumption_lipschitz_continuity_conditional_expectation_relaxed} is similar.
    We only consider
    \begin{equation*}
        f_\theta = f_{\left( Z^{(c)} \right)^2,\theta} = \mathbb{E}_\theta \left[ \left( \frac{Z^{(c)}}{\rho_\theta(X)^{m_{c,1}}(1-\rho_\theta(X))^{m_{c,0}}} \right)^2 \middle| X = x \right]
    \end{equation*}
    here.
    Other cases can be proved similarly.
    Define
    \begin{equation*}
        f_{\left( Z^{(c)} \right)^2} = \mathbb{E} \left[ \left( Z^{(c)} \right)^2 \mid X = x \right] .
    \end{equation*}
    Then $f_\theta$ can be rewritten as
    \begin{equation*}
        f_\theta(x) = \frac{f_{\left( Z^{(c)} \right)^2}(x)}{\rho_\theta(x)^{2 m_{c,1}} (1-\rho_\theta(x))^{2 m_{c,0}}} .
    \end{equation*}

    Then
    \begin{align*}
        | f_\theta(x) - f_{\theta^\prime}(x) |
        & \leq \left| f_{\left( Z^{(c)} \right)^2}(x) \right|
        \left| \frac{1}{\rho_\theta(x)^{2 m_{c,1}} (1-\rho_\theta(x))^{2 m_{c,0}}}
        - \frac{1}{\rho_{\theta^\prime}(x)^{2 m_{c,1}} (1-\rho_{\theta^\prime}(x))^{2 m_{c,0}}} \right| \\
        & \leq \left| f_{\left( Z^{(c)} \right)^2}(x) \right|
        \frac{ 2 (m_{c,1} + m_{c,0})
        \left| \rho_{\theta^\prime}(x) - \rho_\theta(x) \right|}
        {\iota_\rho^{2(m_{c,1}+m_{c,0})+1} (1-\iota_\rho)^{2(m_{c,1}+m_{c,0})+1}} .
    \end{align*}

    For any $\theta, \theta^\prime \in \Theta$, it holds that
    \begin{align*}
        \left\| e^{\lambda_f \|\phi(X)\|} \left| f_{\theta} - f_{\theta^\prime} \right| \right\|_{L^1(\Gamma)}
        & \lesssim \left\| e^{\lambda_f \|\phi(X)\|} \left| f_{\left( Z^{(c)} \right)^2} \right|
        \left| \rho_{\theta^\prime} - \rho_\theta \right|
        \right\|_{L^2(\Gamma)} \\
        & \lesssim
        \left\| e^{\lambda_f \|\phi(X)\|} \right\|_{L^{M_\epsilon}(\Gamma)}
        \left\| f_{\left( Z^{(c)} \right)^2} \right\|_{L^{2+\epsilon/2}(\Gamma)}
        \left\| \rho_{\theta^\prime} - \rho_\theta \right\|_{L^2(\Gamma)} \\
        & \lesssim \left\| \rho_{\theta^\prime} - \rho_\theta \right\|_{L^2(\Gamma)} ,
    \end{align*}
    where $M_\epsilon$ satisfies $\frac{1}{M_\epsilon} + \frac{1}{2+\epsilon/2} + \frac{1}{2} = 1$ and the last inequality holds by $\mathbb{E} \left[ \|Z\|^{4+\epsilon} \right] < \infty$ and Assumption \ref{assumption_x_sub_exponential_bound} when $\lambda_f$ is sufficiently small such that $M_\epsilon \lambda_f < \lambda$.

    By Assumption \ref{assumption_lipschitz_continuity_functions_kernels}, Assumption \ref{assumption_lipschitz_continuity_conditional_expectation} holds.
\end{proof}

\begin{lemma}
    \label{lemma_for_simultaneous_small_set_condition}

    Under Assumptions \ref{assumption_theta_compact} and \ref{assumption_for_simultaneous_small_set_condition}, for any $R > 0$, there exists some $d_R \in \mathbb{N}^*$ and $\delta_{R,P} > 0$ such that for any $\theta \in \Theta$,
    \begin{equation*}
        P_\theta^{d_R}(\Lambda, \cdot)
        \geq \delta_{R,P} \mu_{\mathrm{leb}, B(\Lambda,R)} .
    \end{equation*}
\end{lemma}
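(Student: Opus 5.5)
We build a uniform minorization of the $m$-step kernel: a local small set on a ball of fixed radius $r_0$, extended to $B(\Lambda,R)$ by chaining, with compactness of $\Theta$ giving uniformity.

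\emph{Step 1: one block of $2s$ steps.} Fix $\theta_*\in\Theta$ with neighborhood $B_{\theta_*}$, points $x^{(1)},\dots,x^{(s)}$ and constants $\epsilon,r_x,L,C_\sigma$ from Assumption~\ref{assumption_for_simultaneous_small_set_condition}. Since $g_\theta\ge\iota$ and $1-g_\theta\ge\iota$ by Assumption~\ref{assumption_sampling}, the $2s$-step kernel dominates the following path: the first $s$ units are all assigned to treatment with covariates $x_i\in B(x^{(i)},r_x)$, and the next $s$ units are assigned to control with covariates $x_{s+i}\in B(x^{(i)},r_x)$. Along this path the increment is exactly $\Phi_\theta(x_1,\dots,x_{2s})$, whatever the intermediate states are, because increments do not depend on $\Lambda$. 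Using the mass condition, this gives
\begin{equation*}
P_\theta^{2s}(\Lambda,A)\ \ge\ \iota^{2s}\epsilon^{2s}\int_{\prod B(x^{(i)},r_x)^2}\chi_A\big(\Lambda+\Phi_\theta(x_1,\dots,x_{2s})\big)\,\mathrm{d}x_1\cdots\mathrm{d}x_{2s}.
\end{equation*}
It remains to lower-bound the pushforward of Lebesgue measure under $\Phi_\theta$.

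\emph{Step 2: the main obstacle.} We need the pushforward density of $\Phi_\theta$ to be bounded below uniformly on a ball $B(\Phi_\theta(\bar x),r_0)$, where $\bar x=(x^{(1)},\dots,x^{(s)},x^{(1)},\dots,x^{(s)})$, with $r_0$ and the bound $c_0$ independent of $\theta\in B_{\theta_*}$. The plan is a quantitative submersion theorem. Because $\sigma_d(D\Phi_\theta(\bar x))\ge C_\sigma$ and $D\Phi_\theta$ is $L'$-Lipschitz with $L'$ depending only on $L$ and $s$, we have $\sigma_d(D\Phi_\theta)\ge C_\sigma/2$ on $B(\bar x,r_1)$ with $r_1=\min\{r_x,C_\sigma/(2L')\}$. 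Choose $d$ coordinates whose minor is bounded below; this is possible since the choice among finitely many minors can be made uniform. Freeze the other coordinates in a small cube, and apply the inverse function theorem with explicit radii controlled by $C_\sigma$ and $L'$. By the change-of-variables formula, the image of each slice contains a ball of radius $r_0$ on which the Jacobian is at most a constant. Integrating over the frozen coordinates then gives
\begin{equation*}
P_\theta^{2s}(\Lambda,\cdot)\ \ge\ c_0\,\mu_{\mathrm{leb},B(\Lambda+v_\theta,r_0)},
\end{equation*}
where $v_\theta=\Phi_\theta(\bar x)$. We also need this for a whole family of centres. Perturbing $\bar x$ within $B(\bar x,r_1/2)$ moves the centre over a ball of radius comparable to $r_1C_\sigma$, so the same bound holds with $v_\theta$ replaced by any $v$ in a fixed neighbourhood $B(v_\theta,r_2)$, possibly after shrinking the constants.

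\emph{Step 3: reaching $B(\Lambda,R)$ and making everything uniform.} The drift $v_\theta$ is inconvenient. To cancel it, interleave blocks with the reverse ordering, first control and then treatment with mirrored points. More simply, because the allowed centres form an open set $U_\theta$ of increments, the $k$-fold sums $U_\theta+\cdots+U_\theta$ eventually cover a neighbourhood of $k v_\theta$. Since $\mathbb{E}_\theta$ of the increment is centred only in a weighted sense, the cleaner route is the following. The covered set of $P_\theta^{2s}$ contains a ball around $v_\theta$. Running the analogous construction with the roles of the groups swapped (control first), and composing, yields a ball around $0$ after $4s$ steps: the sum $\Phi_\theta+\Phi'_\theta$ nearly cancels because the control-then-treatment arrangement negates it. Hence $P_\theta^{4s}(\Lambda,\cdot)\ge c_1\mu_{\mathrm{leb},B(\Lambda,r_0)}$. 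Chaining $k=\lceil R/r_0\rceil\cdot$const such blocks, and using that the convolution of uniform measures on balls of radius $r_0$ dominates a multiple of the uniform measure on a ball of radius growing linearly in $k$, gives $P_\theta^{4sk}(\Lambda,\cdot)\ge\delta\,\mu_{\mathrm{leb},B(\Lambda,R)}$ for $\theta\in B_{\theta_*}$. Finally, cover the compact $\Theta$ (Assumption~\ref{assumption_theta_compact}) by finitely many neighbourhoods $B_{\theta_*}$. The step counts differ between neighbourhoods, so take $d_R$ to be a common multiple; a minorization for $m$ steps propagates to multiples of $m$ by chaining. Take $\delta_{R,P}$ to be the minimum over the finitely many neighbourhoods.
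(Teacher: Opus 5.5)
\textbf{The gap is in Step 3}, where the drift $v_\theta=\Phi_\theta(\bar x)$ has to be removed so that the ball is centred at $\Lambda$.

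Your claim that running the block ``control first'' negates $\Phi_\theta$ is false. As you note in Step 1, the one-step increment $(T-\rho_\theta(X))\phi(X)/[\rho_\theta(X)(1-\rho_\theta(X))]$ does not depend on the state. So a block's total increment depends only on the multiset of (covariate, assignment) pairs, not on their order. Assigning control to the first $s$ slots and treatment to the last $s$ gives $\Phi_\theta(x_{s+1},\dots,x_{2s},x_1,\dots,x_s)$. At the symmetric point $\bar x$ this equals $v_\theta=\sum_{i=1}^s\phi(x^{(i)})(1-2\rho_\theta(x^{(i)}))/[\rho_\theta(x^{(i)})(1-\rho_\theta(x^{(i)}))]$ again. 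Composing the two blocks therefore gives a ball around $\Lambda+2v_\theta$, not around $\Lambda$.

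No arrangement with $s$ treatments and $s$ controls can work in general. Take $\phi(x)=(1,x^T)^T$ and $\rho_\theta\ge\rho_0>1/2$. The first coordinate of every such block, at any order and any covariate values, is at most $s[1/\rho_0-1/(1-\rho_0)]<0$. So after $k$ blocks your minorizing measure sits at first coordinate $\le -ck$ and puts no mass near $\Lambda$, however many blocks you chain. For the same reason, your remark that $k$-fold sums cover a neighbourhood of $kv_\theta$ does not reach $\Lambda$.

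\textbf{What is missing} is the centring identity $\rho_\theta(x)\cdot\phi(x)/\rho_\theta(x)-(1-\rho_\theta(x))\cdot\phi(x)/(1-\rho_\theta(x))=0$. You set this aside as ``centred only in a weighted sense'', but it is exactly what the paper uses.
\begin{itemize}
\item Since $g_\theta\ge\iota\ge\iota\rho_\theta$ and $1-g_\theta\ge\iota(1-\rho_\theta)$, one has $P_\theta(\Lambda,\cdot)\ge\iota\epsilon\mu_{\mathrm{leb}}(B(0,r_x))\,\delta_\Lambda*\Gamma_{\mathrm{Bin},i}$. Here $\Gamma_{\mathrm{Bin},i}$ is the increment law with $X$ uniform on $B(x^{(i)},r_x)$ and $T\sim\Bernoulli(\rho_\theta(X))$, and it has mean exactly zero.
\item The paper keeps \emph{all} assignment patterns in $\Gamma_{\mathrm{Bin},1}^{2*}*\cdots*\Gamma_{\mathrm{Bin},s}^{2*}$. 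This convolution is mean zero and, by your Step 2, dominates $C_{\mathrm{gap}}\mu_{\mathrm{leb},V^*}$.
\item It is written as $Z_1+Z_2Z_3$. The uniform piece is collapsed to a point mass at the centre of $V^*$, so $Z_1$ stays centred. $Z_2\sim\Bernoulli(p)$, and $Z_3$ is uniform on $B(0,C_{\mathrm{dl}})$.
\item After $n$ blocks, Chebyshev gives $\|\sum_i Z_1^{(i)}\|\le pnC_{\mathrm{dl}}/6$ and $\sum_i Z_2^{(i)}\ge 2pn/3$ with probability at least $1/2$. Meanwhile the sum of the active copies of $Z_3$ has density bounded below on a ball of radius of order $pnC_{\mathrm{dl}}$.
\end{itemize}
The drift is thus cancelled statistically: $O(\sqrt n)$ fluctuations against linear spreading. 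The constants are uniform on $B_{\theta_*}$ because $Z_1$ is bounded via the regularity item of Assumption~\ref{assumption_for_simultaneous_small_set_condition}.

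If you want to keep a deterministic path, you must use unbalanced blocks. With $a_i=\phi(x^{(i)})/\rho_i$ and $b_i=-\phi(x^{(i)})/(1-\rho_i)$, one has $-(a_i+b_i)=\frac{2\rho_i-1}{1-\rho_i}a_i=\frac{1-2\rho_i}{\rho_i}b_i$. So extra treatments or controls, in overall proportion roughly $\rho_i:(1-\rho_i)$, cancel $kv_\theta$ up to $O(1)$. This is the same identity. You would also need bookkeeping to make the step count uniform in $\theta$ and to keep the accumulated covariate perturbations below the linear radius.
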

The proof of Lemma \ref{lemma_for_simultaneous_small_set_condition} is presented in Subsection \ref{subsec_proof_lemma_for_simultaneous_small_set_condition}. 
The argument is somewhat technical and relies on three main observations.
First, Assumption \ref{assumption_sampling} implies that the transition kernel $P_\theta$ controls a constant multiple of a random walk.
Second, Assumption \ref{assumption_for_simultaneous_small_set_condition} shows that this random walk, starting from any point, admits a density on an open ball with a possibly nonzero center after $2s$ steps.
Therefore, using the increment distribution of the random walk, the density on this open ball can be concentrated into a mass at the center.
As $n$ increases, after $n$ steps, the region where the original random walk and the concentrated random walk admit densities grows on the order of $n$. 
Third, as in the original random walk, the concentrated random walk starting from the original point has zero mean and is therefore concentrated around that point. 
As a result, its typical range grows on the order of $O(\sqrt{n})$, which is much slower than $n$.
Therefore, when $n$ is sufficiently large, the open ball on which the original random walk admits a density necessarily covers the original point. 
Finally, by the translation invariance of the random walk, this conclusion extends to any starting point.

\begin{lemma}
    \label{lemma_application_simultaneous_geometric_ergodicity}
    
    Let the Lyapunov function $V(\Lambda) = e^{\lambda_1 \|\Lambda\|}$ be as defined in Lemma \ref{lemma_drift_condition_inequality}.
    If Assumptions \ref{assumption_theta_compact}, \ref{assumption_x_sub_exponential_bound} and \ref{assumption_for_simultaneous_small_set_condition} hold, then for any $\theta \in \Theta$, $P_\theta$ is positive Harris recurrent with the unique invariant probability $\pi_\theta$ and $\pi_\theta V \leq \frac{b}{1-\beta}$.
    Moreover, there exists a positive number $L>1$ such that for any $\theta \in \Theta$,
    \begin{equation*}
        \| P_\theta^{n}(\Lambda, \cdot) - \pi_\theta \|_V \leq L(1-L^{-1})^n V(\Lambda) .
    \end{equation*}
    
    If $V$ is replaced by $V^\alpha$ for some $\alpha \in (0,1]$, the corresponding constant is denoted by $L_\alpha$.
\end{lemma}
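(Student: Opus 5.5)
The plan is to derive the statement from a uniform minorization together with a uniform drift condition. Quantitative geometric ergodicity bounds for Markov kernels (for example Meyn--Tweedie, or the explicit bounds of Baxendale or of Hairer--Mattingly) depend only on the drift constants $(\beta,b)$ and on the minorization constants of a small set. So it suffices to show that both ingredients hold with constants that do not depend on $\theta\in\Theta$. The drift part is already available: Lemma \ref{lemma_drift_condition_inequality} gives $P_\theta V\le \beta V+b$ and $P_\theta V^\alpha\le\beta_\alpha V^\alpha+b_\alpha$ with $\beta,b$ independent of $\theta$.

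\textbf{Small set.} Fix $R$ large enough that the sublevel set $C_R=\{\Lambda:\|\Lambda\|\le R\}$ satisfies $\lambda_1 R$ large relative to $b/(1-\beta)$. Concretely, choose $R$ so that $V\ge 2b/(1-\beta)$ outside $C_R$; this is the standard requirement making $C_R$ the set toward which the drift pulls. Apply Lemma \ref{lemma_for_simultaneous_small_set_condition} with radius $2R$. For any $\Lambda\in C_R$ it gives
\[
P_\theta^{d_{2R}}(\Lambda,\cdot)\ge \delta_{2R,P}\,\mu_{\mathrm{leb},B(\Lambda,2R)}\ge \delta_{2R,P}\,\mu_{\mathrm{leb},C_R},
\]
because $C_R\subset B(\Lambda,2R)$. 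Hence $C_R$ is a $(d_{2R},\nu)$-small set, with $\nu$ equal to normalized Lebesgue measure on $C_R$ and minorization constant $\epsilon=\delta_{2R,P}\mu_{\mathrm{leb}}(C_R)$. Both $d_{2R}$ and $\epsilon$ are uniform in $\theta$. The minorizing measure puts positive mass on every open subset of $C_R$, which gives $\psi$-irreducibility and aperiodicity; aperiodicity follows because the same lemma also holds for $d_{2R}+1$ steps, applied with a slightly larger radius.

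\textbf{Recurrence and the bound on $\pi_\theta V$.} Drift toward a petite set plus irreducibility yields positive Harris recurrence and a unique invariant probability $\pi_\theta$ (Meyn--Tweedie, Theorem 15.0.1). The bound $\pi_\theta V\le b/(1-\beta)$ is obtained by integrating the drift inequality against $\pi_\theta$, which gives $\pi_\theta V\le\beta\pi_\theta V+b$. To make this step rigorous when $\pi_\theta V$ might a priori be infinite, I will first apply it to the truncations $\min(V,k)$, or equivalently iterate $P^n_\theta V\le \beta^n V+b/(1-\beta)$ and use Fatou's lemma.

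\textbf{Uniform rate.} The main obstacle is that the drift holds for a single step while the minorization holds only after $m=d_{2R}$ steps. I will pass to the $m$-step kernel. Iterating the drift gives
\[
P_\theta^m V\le \beta^m V+b/(1-\beta),
\]
which is a drift for the $m$-step kernel with the same small set. A quantitative theorem, such as Baxendale's explicit bounds or the Hairer--Mattingly weighted total-variation contraction, then gives
\[
\|P_\theta^{mk}(\Lambda,\cdot)-\pi_\theta\|_V\le C r^k V(\Lambda),
\]
with $C,r$ depending only on $(\beta^m,b/(1-\beta),\epsilon)$, and therefore uniform in $\theta$. For intermediate times $n=mk+j$ with $0\le j<m$, write $P_\theta^n-\pi_\theta=(P_\theta^{mk}-\pi_\theta)P_\theta^j$. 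Use $\|P^j_\theta\|_V\le \beta^j+b\le 1+b$, so that applying $P^j_\theta$ to $V$ costs at most a factor $1+b$. Absorbing $C(1+b)$ and $r^{1/m}$ into one constant $L>1$ gives the form $L(1-L^{-1})^nV(\Lambda)$.

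The same argument goes through verbatim with $V^\alpha$ in place of $V$, using $(\beta_\alpha,b_\alpha)$. This produces the constants $L_\alpha$.
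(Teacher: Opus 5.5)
Your argument follows the paper's proof closely. Both use the uniform drift from Lemma~\ref{lemma_drift_condition_inequality}, and both get a $\theta$-uniform minorization on a sublevel set of $V$ from Lemma~\ref{lemma_for_simultaneous_small_set_condition} with doubled radius. Both then apply a quantitative geometric-ergodicity bound to the $d$-step kernel: the paper uses the Fort--Moulines--Priouret bound stated as Lemma~\ref{lemma_simultaneous_geometric_ergodicity}, where you invoke Baxendale or Hairer--Mattingly. Finally, both extend to all $n$ via $(P_\theta^{dk}-\pi_\theta)P_\theta^{j}$. Your truncation/Fatou treatment of $\pi_\theta V$ is more careful than the paper's one-line argument.

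One step needs tightening. You iterate the drift to $P_\theta^m V\le \beta^m V+b/(1-\beta)$. With these constants, the cited theorems need the minorization on a sublevel set at level (at least) $2b/[(1-\beta)(1-\beta^m)]$. That set is larger than the $C_R$ you sized from the one-step constants. Since $m=d_{2R}$ itself depends on $R$, the choice is circular as written.

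The paper avoids this by keeping the exact iterate $P_\theta^d V\le\beta^d V+b(1-\beta^d)/(1-\beta)$. For that bound the threshold $2b_d/(1-\beta^d)$ equals $2b/(1-\beta)$ for every $d$. Alternatively, use $1-\beta^m\ge 1-\beta$ and size $R$ from the level $2b/(1-\beta)^2$, adding a small margin if your theorem needs a strict inequality.

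Two minor points:
\begin{itemize}
\item The correct bound is $\|P_\theta^j\|_V\le \beta^j+b(1-\beta^j)/(1-\beta)\le 1+b/(1-\beta)$, not $1+b$. It is still uniform in $\theta$, so this is harmless.
\item The separate aperiodicity argument is unnecessary. Lemma~\ref{lemma_for_simultaneous_small_set_condition} as stated supplies only one step count per radius, so your $d_{2R}+1$ claim would need an extra one-step argument. The one-step minorization of $P_\theta^m$ already makes that kernel strongly aperiodic, and $\pi_\theta P_\theta^j=\pi_\theta$ transfers the bound back to $P_\theta$.
\end{itemize}
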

The proof of Lemma \ref{lemma_application_simultaneous_geometric_ergodicity} is presented in Subsection \ref{subsec_proof_lemma_application_simultaneous_geometric_ergodicity}.
The proof follows as a simple corollary of Lemma \ref{lemma_simultaneous_geometric_ergodicity}.

\begin{lemma}
    \label{lemma_application_simultaneous_properties}

    Let the Lyapunov function $V(\Lambda) = e^{\lambda_1 \|\Lambda\|}$ be as defined in Lemma \ref{lemma_drift_condition_inequality}.
    If Assumptions \ref{assumption_theta_compact}, \ref{assumption_lipschitz_continuity_functions_kernels}, \ref{assumption_x_sub_exponential_bound} and \ref{assumption_for_simultaneous_small_set_condition} hold, then Assumption \ref{assumption_simultaneous_properties} holds.
\end{lemma}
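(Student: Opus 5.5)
The statement is Lemma \ref{lemma_application_simultaneous_properties}. Assumption \ref{assumption_simultaneous_properties} is not reproduced above, but the text says it collects the conclusions of the lemmas in Subsections \ref{subsec_continuity_transition_kernels} and \ref{subsec_technical_lemmas_CARA}. I therefore plan to verify it component by component, reading each component off one of those lemmas.

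I expect the assumption to require four things: a uniform drift condition, a uniform small-set (minorization) condition, uniform geometric ergodicity with a common Lyapunov function, and robust Lipschitz continuity of $\theta\mapsto P_\theta$ in the sense of Definition \ref{definition_coupled_robust_lipschitz_continuity_kernels_main}. I will check them in that order.

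\textbf{Drift.} Assumption \ref{assumption_x_sub_exponential_bound} gives a finite first moment of $\|\phi(X)\|$. Lemma \ref{lemma_all_negative_feedback} then supplies the uniform negative drift constants $M,\Delta$. Lemma \ref{lemma_drift_condition_inequality} upgrades this to $P_\theta V^\alpha\le\beta_\alpha V^\alpha+b_\alpha$ uniformly in $\theta\in\Theta$, with $V=e^{\lambda_1\|\Lambda\|}$ and every $\alpha\in(0,1]$.

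\textbf{Minorization.} Lemma \ref{lemma_for_simultaneous_small_set_condition} gives, for each $R$, a common $d_R$ and $\delta_{R,P}$ with $P_\theta^{d_R}(\Lambda,\cdot)\ge\delta_{R,P}\mu_{\mathrm{leb},B(\Lambda,R)}$. This uses compactness (Assumption \ref{assumption_theta_compact}) to pass from the local neighborhoods $B_{\theta_*}$ of Assumption \ref{assumption_for_simultaneous_small_set_condition} to a finite subcover. If the abstract assumption is phrased with a fixed small set $C=\{V\le c\}$ and a fixed probability measure, I take $R$ large relative to the radius of $C$. Then every ball $B(\Lambda,R)$ with $\Lambda\in C$ contains a common ball $B(0,R/2)$, so the normalized Lebesgue measure on that common ball is a single minorizing measure.

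\textbf{Ergodicity and integrability.} Lemma \ref{lemma_application_simultaneous_geometric_ergodicity} then yields existence and uniqueness of $\pi_\theta$, the bound $\pi_\theta V\le b/(1-\beta)$, and uniform geometric $V^\alpha$-ergodicity with constants $L_\alpha$.

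\textbf{Continuity in $\theta$.} Assumption \ref{assumption_x_sub_exponential_bound} implies a finite second moment of $\phi(X)$. Together with Assumption \ref{assumption_lipschitz_continuity_functions_kernels}, Lemma \ref{lemma_continuity_transition_kernels} gives robust Lipschitz continuity of $\{P_\theta\}$ with constant $L_P$.

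\textbf{Main obstacle.} The main difficulty is bookkeeping rather than a new estimate: every constant must be uniform over $\Theta$ and tied to the same Lyapunov function $V$ and exponent $\lambda_1$. So I will fix $\lambda_1$ once, from Lemma \ref{lemma_drift_condition_inequality}, and use it throughout. If Assumption \ref{assumption_simultaneous_properties} additionally requires the state space $W_\phi=\mathbb{R}^d$, I note that this is implicit in Assumption \ref{assumption_for_simultaneous_small_set_condition}, since its non-degeneracy condition forces full rank.
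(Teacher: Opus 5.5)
Your handling of three of the four items matches the paper's proof: Item~\ref{item_assumption_simultaneous_properties_2} from Lemma~\ref{lemma_drift_condition_inequality}, Item~\ref{item_assumption_simultaneous_properties_1} from Lemma~\ref{lemma_application_simultaneous_geometric_ergodicity}, and Item~\ref{item_assumption_simultaneous_properties_3} from Lemma~\ref{lemma_continuity_transition_kernels}. Two of your side remarks are correct and slightly more careful than the paper. First, the finite second moment of $\phi(X)$ needed in Lemma~\ref{lemma_continuity_transition_kernels} comes from Assumption~\ref{assumption_x_sub_exponential_bound}. Second, non-degeneracy forces $W_\phi=\mathbb{R}^d$, so the drift bound holds on all of $\mathrm{X}$. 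Your minorization step is not a separate item of Assumption~\ref{assumption_simultaneous_properties}. It is already used inside the proof of Lemma~\ref{lemma_application_simultaneous_geometric_ergodicity}, so you need not verify it again.

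The gap is Item~\ref{item_assumption_simultaneous_properties_4}, the regularity of the Lyapunov function. It requires a finite $L_{\ln V}>1$ with
$\sup_{d(\Lambda,\Lambda')\le 1}|\ln V(\Lambda)-\ln V(\Lambda')|\le \ln L_{\ln V}$.
None of the lemmas you cite provides this. It is genuinely needed later, in Theorems~\ref{theorem_V_n_step_robust_lipschitz_continuous_bounded} and \ref{theorem_V_Poisson_solution_robust_lipschitz_continuous_bounded}, to replace $\max\{V(\Lambda),V(\Lambda')\}$ by a multiple of $V(\Lambda)$ in the local H\"older weights. It follows in one line from the explicit form of $V$:
$|\ln V(\Lambda)-\ln V(\Lambda')| = \lambda_1\bigl|\|\Lambda\|-\|\Lambda'\|\bigr| \le \lambda_1\|\Lambda-\Lambda'\|\le\lambda_1$
whenever $d(\Lambda,\Lambda')\le 1$. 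So $L_{\ln V}=e^{\lambda_1}>1$ works. With this added, your argument is complete and coincides with the paper's.
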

\begin{proof}
    Item \ref{item_assumption_simultaneous_properties_1} of Assumption \ref{assumption_simultaneous_properties} can be deduced from Lemma \ref{lemma_application_simultaneous_geometric_ergodicity}.
    Item \ref{item_assumption_simultaneous_properties_2} of Assumption \ref{assumption_simultaneous_properties} can be deduced from Lemma \ref{lemma_drift_condition_inequality}.
    Item \ref{item_assumption_simultaneous_properties_3} of Assumption \ref{assumption_simultaneous_properties} can be deduced from Lemma \ref{lemma_continuity_transition_kernels}.
    Item \ref{item_assumption_simultaneous_properties_4} of Assumption \ref{assumption_simultaneous_properties} can be deduced from the definition of the Lyapunov function $V(\Lambda) = e^{\lambda_1 \|\Lambda\|}$.
\end{proof}

\begin{lemma}
    \label{lemma_allocation_parameter_update}

    If the allocation parameter $\theta_n$ is updated according to \eqref{eq_allocation_parameter_update_mechanism_1}, or according to \eqref{eq_allocation_parameter_update_mechanism_2} when the parameter space $\Theta$ is a convex subset of a Euclidean space, it holds that
    \begin{equation*}
        \sum_{n=0}^{N-1} d(\theta_n,\theta_{n+1})
        \leq \sum_{n=0}^{N-1} d(\eta_n,\eta_{n+1}) .
    \end{equation*}
    In addition, if $\eta_n \rightarrow \eta^*$ in probability, Assumption \ref{assumption_theta_compact} holds and $\sum_{n=1}^\infty C_{\mathrm{clip}, n} = \infty$ for \eqref{eq_allocation_parameter_update_mechanism_2}, then $\theta_n \rightarrow \theta^*$ in probability, and $\theta^* = \eta^*$.
\end{lemma}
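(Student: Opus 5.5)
We prove the two parts of Lemma \ref{lemma_allocation_parameter_update} separately, treating each update mechanism in turn.

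\textbf{Path-length bound.} Under \eqref{eq_allocation_parameter_update_mechanism_1}, the sequence $\{\theta_n\}$ equals $\{\eta_n\}$ sampled at the update times in $S$ and held constant between them. If $n_1<n_2$ are consecutive elements of $S$, the only nonzero step inside $(n_1,n_2]$ is $d(\theta_{n_2-1},\theta_{n_2}) = d(\eta_{n_1},\eta_{n_2})$. By the triangle inequality this is at most $\sum_{k=n_1}^{n_2-1} d(\eta_k,\eta_{k+1})$. The first update, from $\theta_0$, must be handled separately. We either adopt the convention $\eta_0:=\theta_0$ (natural, since $\theta_0$ is the initial point) or absorb $d(\theta_0,\eta_{n_1})$ into the sum with that convention. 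Summing over the blocks up to $N$ gives the claim.

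Under \eqref{eq_allocation_parameter_update_mechanism_2}, $\theta_n$ lies on the segment from $\theta_{n-1}$ toward $\eta_n$, at distance $\min\{\|\eta_n-\theta_{n-1}\|,C_{\mathrm{clip},n}\}$ from $\theta_{n-1}$. This point stays in $\Theta$ by convexity. The natural potential is $D_n:=\|\eta_n-\theta_n\|$. A step of length $s_n:=d(\theta_{n-1},\theta_n)$ toward $\eta_n$ gives $\|\eta_n-\theta_n\| = \|\eta_n-\theta_{n-1}\|-s_n$. Hence
\[
s_n = \|\eta_n-\theta_{n-1}\| - D_n \le D_{n-1} + d(\eta_{n-1},\eta_n) - D_n .
\]
Telescoping yields $\sum_{n=1}^{N} s_n \le D_0 + \sum_{n=1}^{N} d(\eta_{n-1},\eta_n) - D_N$. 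With $\eta_0:=\theta_0$ we have $D_0=0$, and the bound follows after matching the index range.

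\textbf{Consistency.} For \eqref{eq_allocation_parameter_update_mechanism_1}, $\theta_n=\eta_{\tau(n)}$, where $\tau(n)=\max(S\cap\{1,\dots,n\})\to\infty$ deterministically because $S$ is infinite. Therefore $\theta_n\to\eta^*$ in probability.

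For \eqref{eq_allocation_parameter_update_mechanism_2}, fix $\epsilon>0$. Choose $m$ such that $C_{\mathrm{clip},n}<\epsilon$ and $P(\|\eta_n-\eta^*\|>\epsilon)$ is small for $n\ge m$. By compactness, $\|\eta_m-\theta_m\|\le\diam(\Theta)$. Note that convergence in probability at each $n$ does not imply the event $\{\|\eta_k-\eta^*\|\le\epsilon$ for all $k\in[m,n]\}$, and this is the main obstacle.

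One remedy is to argue along subsequences with almost sure convergence. Alternatively we can use the drift directly. Let $E_n:=\|\theta_n-\eta^*\|$. Then $E_n\le \max\{E_{n-1}-C_{\mathrm{clip},n},\,2\epsilon'\}+\|\eta_n-\eta^*\|$-type bounds hold, and whenever the clip binds, $\theta$ moves by exactly $C_{\mathrm{clip},n}$ toward a point within $\|\eta_n-\eta^*\|$ of $\eta^*$.

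To make this rigorous, I would use the subsequence principle. From any subsequence, extract a further subsequence along which $\eta_{n}\to\eta^*$ a.s. This only controls $\eta$ on the subsequence, so instead I would use the deterministic lemma: if $x_n\to x^*$, $c_n\to0$ and $\sum c_n=\infty$, then the clipped iterates converge to $x^*$. Combined with the Skorokhod/subsequence trick applied to the whole path this is delicate. So I would rather prove, for each realization, that
\[
E_n \le \max\{E_m-\textstyle\sum_{k=m+1}^n C_{\mathrm{clip},k},\ 0\} + 2\sup_{m\le k\le n}\|\eta_k-\eta^*\| + \sup_{k>m} C_{\mathrm{clip},k} .
\]
Then I would control $\sup_{m\le k\le n}\|\eta_k-\eta^*\|$. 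Here I would use the additional information $\sum\|\eta_k-\eta_{k+1}\|=o_P(N^p)$ available in Theorem \ref{theorem_allocation_parameter_update_strong}, or choose $m=m(n)$ close to $n$ with $\sum_{k=m(n)}^n C_{\mathrm{clip},k}\ge\diam(\Theta)$ and bound the sup over the window by the path length $\sum_{k=m(n)}^{n} d(\eta_k,\eta_{k+1})$ plus $\|\eta_n-\eta^*\|$. The window-length control is the step I expect to require the most care.
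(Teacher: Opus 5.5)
Your path-length bound for both mechanisms (with the convention $\eta_0:=\theta_0$) is correct and essentially matches the paper, and so is your consistency argument for \eqref{eq_allocation_parameter_update_mechanism_1}. The genuine gap is consistency under the clipped mechanism \eqref{eq_allocation_parameter_update_mechanism_2}. You leave it open, and the route you sketch cannot be completed under the lemma's hypotheses.

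Any bound through $\sup_{m\le k\le n}\|\eta_k-\eta^*\|$, over a window with $\sum_{k=m+1}^{n}C_{\mathrm{clip},k}$ bounded below, needs that supremum to vanish in probability. But only marginal convergence $\eta_n\to\eta^*$ in probability is assumed. With $C_{\mathrm{clip},k}=1/k$ such windows have length of order $n$. Independent excursions of $\eta_k$ away from $\eta^*$, each with probability $1/\log k$, then keep the supremum large with probability tending to one, even though each marginal converges and the lemma's conclusion still holds.

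Invoking $\sum_k\|\eta_k-\eta_{k+1}\|=o_P(N^p)$ is not allowed, since it is not a hypothesis of this lemma. It would not help anyway: an $o_P(n^p)$ path length over a window of length of order $n$ need not be $o_P(1)$. The subsequence/almost-sure route fails for the reason you noticed: the clipped recursion depends on the whole path of $\{\eta_k\}$.

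The missing idea is that bad steps need not be excluded, only charged. A step with $\eta_n$ far from $\eta^*$ moves $\theta$ by at most $C_{\mathrm{clip},n}$, so its expected cost is $C_{\mathrm{clip},n}P(\|\eta_n-\eta^*\|\ge\epsilon/3)=o(C_{\mathrm{clip},n})$. The paper sets $b_n=\max\{\|\theta_n-\eta^*\|,4\epsilon/3\}$ and shows, once $C_{\mathrm{clip},n}<\epsilon/3$, the pathwise inequality
\[
b_n\le b_{n-1}-\tfrac{1}{6}C_{\mathrm{clip},n}\,\mathbb{I}\bigl(b_{n-1}\ge 3\epsilon/2,\ \|\eta_n-\eta^*\|<\epsilon/3\bigr)+C_{\mathrm{clip},n}\,\mathbb{I}\bigl(\|\eta_n-\eta^*\|\ge\epsilon/3\bigr).
\]
The decrease $C_{\mathrm{clip},n}/6$ comes from the clip binding and from expanding $\|\eta^*-\theta_n\|^2$.

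Take expectations. Compactness gives $b_{n-1}\le\diam(\Theta)+2\epsilon$, hence $P(b_{n-1}\ge3\epsilon/2)\ge(\mathbb{E}b_{n-1}-3\epsilon/2)/(\diam(\Theta)+\epsilon/2)$. This yields a deterministic recursion for $B_n=\mathbb{E}b_n$:
\[
B_n\le(1-c_n)B_{n-1}+(p_n+3\epsilon/2)c_n,
\]
with $c_n\propto C_{\mathrm{clip},n}$, $\sum_n c_n=\infty$, and $p_n\propto P(\|\eta_n-\eta^*\|\ge\epsilon/3)\to0$. Hence $\limsup_n\mathbb{E}\|\theta_n-\eta^*\|\le3\epsilon/2$ for every $\epsilon>0$. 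This gives convergence in $L^1$, and thus in probability.
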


\subsection{Proof of Lemma \ref{lemma_all_negative_feedback}}

\label{subsec_proof_lemma_all_negative_feedback}

The quantity
\begin{align*}
    & \quad \mathbb{E}_{X \sim \Gamma} \left[
    \frac{g_\theta(\Lambda,X) - \rho_\theta(X)}{\rho_\theta(X)(1-\rho_\theta(X))} \cdot
    \frac{\phi(X)^T \Lambda}{\|\Lambda\|} \right] \\
    &= p_\theta \mathbb{E}_{X \sim \Gamma} \left[
    \frac{- [\rho_\theta(X)(1-\rho_\theta(X))]^{-1} \phi(X)^T \Lambda}{\max\{\|\phi(X)\|/[\rho_\theta(X)(1-\rho_\theta(X))],C_\theta\}
    \max\{\|\Lambda\|,C_\Lambda\}} \frac{\phi(X)^T \Lambda}{\|\Lambda\|} \right] \\
    &= -\frac{p_\theta}{\|\Lambda\|\max\{\|\Lambda\|,C_\Lambda\}}
    \mathbb{E}_{X \sim \Gamma} \left[
        \frac{[\rho_\theta(X)(1-\rho_\theta(X))]^{-1} [\phi(X)^T \Lambda]^2}{\max\{\|\phi(X)\|/[\rho_\theta(X)(1-\rho_\theta(X))],C_\theta\}}
    \right] .
\end{align*}

Because of the definition of the subspace $W_\phi$, for any $\Lambda \in W_\phi$ satisfying $\Lambda \neq 0$, it holds that $P(\phi(X)^T \Lambda \neq 0) > 0$.
Since $\rho_\theta(X) \in [\iota_\rho, 1 - \iota_\rho]$ in Assumption \ref{assumption_sampling}, for any $\Lambda \in W_\phi$ with $\|\Lambda\| = 1$,
\begin{align}
    & \quad \mathbb{E}_{X \sim \Gamma} \left[
        \frac{[\rho_\theta(X)(1-\rho_\theta(X))]^{-1} [\phi(X)^T \Lambda]^2}{\max\{\|\phi(X)\|/[\rho_\theta(X)(1-\rho_\theta(X))],C_\theta\}}
    \right] \notag \\
    & \geq \mathbb{E}_{X \sim \Gamma} \left[
        \frac{ [\phi(X)^T \Lambda]^2}{\max\{\|\phi(X)\|,C_\theta\rho_\theta(X)(1-\rho_\theta(X))\}}
    \right]
    > 0 \label{eq_negative_feedback_quadratic_form} .
\end{align}
Furthermore, by $\inf_{\theta \in \Theta} 1/C_\theta > 0$, the compactness of the unit sphere $\{ \Lambda \in W_\phi \mid \|\Lambda\| = 1\}$ and the continuity of \eqref{eq_negative_feedback_quadratic_form} in $\Lambda$, there exists a positive number $c$, independent of $\theta \in \Theta$, such that for any $\Lambda \in W_\phi$ with $\|\Lambda\| = 1$,
\begin{equation*}
    \mathbb{E}_{X \sim \Gamma} \left[
        \frac{ [\phi(X)^T \Lambda]^2}{\max\{\|\phi(X)\|,C_\theta \rho_\theta(X)(1-\rho_\theta(X))\}}
    \right] .
\end{equation*}
Therefore, for any $\Lambda \in W_\phi$ with $\|\Lambda\| \geq C_\Lambda$,
\begin{align*}
    & \mathbb{E}_{X \sim \Gamma} \left[
    \frac{g_\theta(\Lambda,X) - \rho_\theta(X)}{\rho_\theta(X)(1-\rho_\theta(X))} \cdot
    \frac{\phi(X)^T \Lambda}{\|\Lambda\|} \right]
    = -\frac{p_\theta}{\|\Lambda\|^2}
    \mathbb{E}_{X \sim \Gamma} \left[
        \frac{[\phi(X)^T \Lambda]^2}{\max\{\|\phi(X)\|,C_\theta \rho_\theta(X)(1-\rho_\theta(X))\}}
    \right] \\
    & \quad = - p_\theta \mathbb{E}_{X \sim \Gamma} \left[
        \frac{[\phi(X)^T \Lambda/\|\Lambda\|]^2}{\max\{\|\phi(X)\|,C_\theta \rho_\theta(X)(1-\rho_\theta(X))\}}
    \right]
    \leq - c \inf_{\theta \in \Theta} p_\theta .
\end{align*}
The proof is thus complete by choosing $\Delta = c \inf_{\theta \in \Theta} p_\theta$ and $M = C_\Lambda$.

\subsection{Proof of Lemma \ref{lemma_drift_condition_inequality}}

Under Assumption \ref{assumption_x_sub_exponential_bound}, Lemma \ref{lemma_all_negative_feedback} guarantees the existence of constants $M > 0$ and $\Delta > 0$ such that for any $\theta \in \Theta$ and $\Lambda \in W_\phi$ with $\|\Lambda\| \geq M$,
\begin{equation*}
    \mathbb{E}_\theta \left[ (\Lambda_1 - \Lambda_0)^T \frac{\Lambda_0}{\|\Lambda_0\|} \mid \Lambda_0 = \Lambda \right]
    \leq -\Delta .
\end{equation*}

The primary objective of this proof is to establish an upper bound for the expression
\begin{equation}
    \mathbb{E}_\theta \left[ e^{\lambda_1 (\|\Lambda_1\| - \|\Lambda_0\|) } \mid \Lambda_0 = \Lambda \right] \label{eq_target_to_bound}
\end{equation}
where $\Lambda \in W_\phi$ is sufficiently large and $\lambda_1 > 0$ is a constant determined by $M$, $\Delta$, $\lambda$, and $C$ in Assumption \ref{assumption_x_sub_exponential_bound}.
This bound is universal for all $\theta \in \Theta$.
Without loss of generality, we assume that $\phi$ is the identity map, such that $\phi(X_1) = X_1$.

For any parameter $\theta \in \Theta$ and initial state $\Lambda_0$, let the random variable $\Lambda_1$ be distributed according to $P_\theta(\Lambda_0, \cdot)$.
Given that $\mathbb{E} \|X_1\| < \infty$, there exists a sufficiently large $M_1 > 0$ such that
\begin{align*}
    & \quad \mathbb{E}_\theta \left[ \| \Lambda_1 - \Lambda_0 \| \mathbb{I} \left( \|\Lambda_1 - \Lambda_0\| > M_1 \right) \mid \Lambda_0=\Lambda \right] \\
    & \leq \mathbb{E}_\theta \left[ \|X_1/[\iota_\rho(1-\iota_\rho)]\| \mathbb{I} \left( \|X_1/[\iota_\rho(1-\iota_\rho)]\| > M_1 \right) \mid \Lambda_0=\Lambda \right]
    \leq \frac{\Delta}{2} .
\end{align*}

We now proceed to the main body of the proof.
First, we decompose the expression (\ref{eq_target_to_bound}) as follows:
\begin{equation*}
    \mathbb{E}_\theta \left[ e^{\lambda_1 (\|\Lambda_1\| - \|\Lambda_0\|) } \mathbb{I} \left( \|\Lambda_1 - \Lambda_0\| > M_2 \right) \mid \Lambda_0 = \Lambda \right]
\end{equation*}
and
\begin{equation*}
    \mathbb{E}_\theta \left[ e^{\lambda_1 (\|\Lambda_1\| - \|\Lambda_0\|) } \mathbb{I} \left( \|\Lambda_1 - \Lambda_0\| \leq M_2 \right) \mid \Lambda_0 = \Lambda \right]
\end{equation*}
for some $M_2 > 0$.

Regarding the former expectation, let $C = \mathbb{E} e^{\lambda \|X_1\|} < \infty$ for some $\lambda > 0$.
For any $M_2 > 0$ and $\lambda_1 \in (0, \lambda \iota_\rho(1-\iota_\rho))$, it can be shown that
\begin{align*}
    & \quad \mathbb{E}_\theta \left[ e^{\lambda_1 (\|\Lambda_1\| - \|\Lambda_0\|) } \mathbb{I} \left( \|\Lambda_1 - \Lambda_0\| > M_2 \right) \mid \Lambda_0 = \Lambda \right] \\
    & \leq \mathbb{E}_\theta \left[ e^{\lambda_1 (\|\Lambda_1 - \Lambda_0\|) } \mathbb{I} \left( \|\Lambda_1 - \Lambda_0\| > M_2 \right) \mid \Lambda_0 = \Lambda \right] \\
    & \leq \mathbb{E}_\theta \left[ e^{\lambda\iota_\rho(1-\iota_\rho) (\|\Lambda_1 - \Lambda_0\|) } e^{(\lambda_1-\lambda\iota_\rho(1-\iota_\rho)) M_2} \mathbb{I} \left( \|\Lambda_1 - \Lambda_0\| > M_2 \right) \mid \Lambda_0 = \Lambda \right] \\
    & \leq \mathbb{E}_\theta \left[ e^{\lambda\iota_\rho(1-\iota_\rho) (\|\Lambda_1 - \Lambda_0\|) } e^{(\lambda_1-\lambda\iota_\rho(1-\iota_\rho)) M_2} \mid \Lambda_0 = \Lambda \right] \\
    & \leq \mathbb{E}_\theta \left[ e^{\lambda \|X_1\| } e^{(\lambda_1-\lambda\iota_\rho(1-\iota_\rho)) M_2} \mid \Lambda_0 = \Lambda \right] \\
    & \leq C e^{(\lambda_1-\lambda\iota_\rho(1-\iota_\rho)) M_2} .
\end{align*}

For the latter expectation, we linearize the exponential term.
We first control the error between the exponent and its linear approximation under the condition $\|\Lambda_1 - \Lambda_0\| \leq M_2$ such that
\begin{align*}
    \left| e^{\lambda_1 (\|\Lambda_1\| - \|\Lambda_0\|)}
    - [ 1 + \lambda_1 (\|\Lambda_1\| - \|\Lambda_0\|) ] \right|
    & \leq \left[ \lambda_1 (\|\Lambda_1\| - \|\Lambda_0\|) \right]^2
    \leq \left( \lambda_1 \|\Lambda_1 - \Lambda_0\| \right)^2 \\
    & \leq \left( \lambda_1 M_2 \right)^2, 
\end{align*}
provided that $\lambda_1 M_2 \leq \inf_{x \in \mathbb{R}} \{|x| \mid |e^x-1-x|>x^2 \}$.

By treating $\| \Lambda_1 \| - \| \Lambda_0 \| = \|\Lambda_1 - \Lambda_0 + \Lambda_0\| - \| \Lambda_0 \|$ as a function of the perturbation $\Lambda_1 - \Lambda_0$ and the initial state $\Lambda_0$, we apply a first-order approximation via the projection $(\Lambda_1 - \Lambda_0)^T \frac{\Lambda_0}{\|\Lambda_0\|}$.
The resulting approximation error under the condition $\|\Lambda_1 - \Lambda_0\| \leq M_2$ is bounded as follows:
\begin{align*}
    & \quad \left| \|\Lambda_1\| - \|\Lambda_0\| - (\Lambda_1 - \Lambda_0)^T \frac{\Lambda_0}{\|\Lambda_0\|} \right| \\
    &= \left| \frac{2 (\Lambda_1-\Lambda_0)^T \Lambda_0 + \|\Lambda_1-\Lambda_0\|^2}{\|\Lambda_1\| + \|\Lambda_0\|} - (\Lambda_1 - \Lambda_0)^T \frac{\Lambda_0}{\|\Lambda_0\|} \right| \\
    &= \left| (\Lambda_1 - \Lambda_0)^T \Lambda_0 \left( \frac{2}{\|\Lambda_1\| + \|\Lambda_0\|} - \frac{1}{\|\Lambda_0\|} \right) + \frac{\|\Lambda_1-\Lambda_0\|^2}{\|\Lambda_1\| + \|\Lambda_0\|} \right| \\
    & \leq \left| \| \Lambda_1 - \Lambda_0 \| \| \Lambda_0 \| \frac{\|\Lambda_0\| - \|\Lambda_1\|}{(\|\Lambda_1\| + \|\Lambda_0\|) \|\Lambda_0\|} \right| + \left| \frac{\|\Lambda_1-\Lambda_0\|^2}{\|\Lambda_1\| + \|\Lambda_0\|} \right| \\
    & \leq \frac{\| \Lambda_1 - \Lambda_0 \|^2}{\|\Lambda_1\| + \|\Lambda_0\|} + \frac{\|\Lambda_1-\Lambda_0\|^2}{\|\Lambda_1\| + \|\Lambda_0\|} \\
    & \leq \frac{2 M_2^2}{\|\Lambda_0\|} .
\end{align*}

Combining the above two inequalities, we obtain
\begin{align*}
    e^{\lambda_1 (\|\Lambda_1\| - \|\Lambda_0\|)}
    & \leq 1 + \lambda_1 (\|\Lambda_1\| - \|\Lambda_0\|) + \left( \lambda_1 M_2 \right)^2 \\
    & \leq 1 + \lambda_1
    \left[ (\Lambda_1 - \Lambda_0)^T \frac{\Lambda_0}{\|\Lambda_0\|} + \frac{2 M_2^2}{\|\Lambda_0\|} \right]
    + \left( \lambda_1 M_2 \right)^2 ,
\end{align*}
subject to $\|\Lambda_1 - \Lambda_0\| \leq M_2$ and $\lambda_1 M_2 \leq \inf_{x \in \mathbb{R}} \{|x| \mid |e^x-1-x|>x^2 \}$.

Consequently, if $\lambda_1 M_2 \leq \inf_{x \in \mathbb{R}} \{|x| \mid |e^x-1-x|>x^2 \}$, then
\begin{align*}
    & \quad \mathbb{E}_\theta \left[ e^{\lambda_1 (\|\Lambda_1\| - \|\Lambda_0\|) } \mid \Lambda_0 = \Lambda \right] \\
    & \leq C e^{(\lambda_1-\lambda\iota_\rho(1-\iota_\rho)) M_2} + \mathbb{E}_\theta \left[ e^{\lambda_1 (\|\Lambda_1\| - \|\Lambda_0\|) } \mathbb{I} \left( \|\Lambda_1 - \Lambda_0\| \leq M_2 \right) \mid \Lambda_0 = \Lambda \right] \\
    & \leq C e^{(\lambda_1-\lambda\iota_\rho(1-\iota_\rho)) M_2} \\
    & \quad + \mathbb{E}_\theta \left[
    \left[ 1 + \lambda_1(\Lambda_1 - \Lambda_0)^T \frac{\Lambda_0}{\|\Lambda_0\|} + \lambda_1^2 M_2^2 + \frac{2 \lambda_1 M_2^2}{\|\Lambda_0\|} \right]
    \mathbb{I} \left( \|\Lambda_1 - \Lambda_0\| \leq M_2 \right)
    \mid \Lambda_0 = \Lambda \right] \\
    & \leq \lambda_1 \mathbb{E}_\theta \left[ (\Lambda_1 - \Lambda_0)^T \frac{\Lambda_0}{\|\Lambda_0\|} \mid \Lambda_0 = \Lambda \right] + \lambda_1 \mathbb{E}_\theta \left[ \|\Lambda_1 - \Lambda_0\| \mathbb{I} \left( \|\Lambda_1 - \Lambda_0\| > M_2 \right) \mid \Lambda_0 = \Lambda \right] \\
    & \quad + C e^{(\lambda_1-\lambda\iota_\rho(1-\iota_\rho)) M_2} + 1 + \lambda_1^2 M_2^2 + \frac{2 \lambda_1 M_2^2}{\|\Lambda\|} .
\end{align*}
For $\theta \in \Theta$, $\Lambda \in W_\phi$, $\|\Lambda\| \geq M$ and $M_2 \geq M_1$, the expression is further bounded by
\begin{align}
    & \quad \lambda_1 \cdot (-\Delta) + \lambda_1 \frac{\Delta}{2} + C e^{(\lambda_1-\lambda\iota_\rho(1-\iota_\rho)) M_2} + 1 + \lambda_1^2 M_2^2 + \frac{2 \lambda_1 M_2^2}{\|\Lambda\|} \label{eq_target_bound} \\
    &= -\frac{\lambda_1 \Delta}{2} + C e^{(\lambda_1-\lambda\iota_\rho(1-\iota_\rho)) M_2} + 1 + \lambda_1^2 M_2^2 + \frac{2 \lambda_1 M_2^2}{\|\Lambda\|} \notag .
\end{align}

In summary, we have assumed
\begin{equation}
    \label{eq_inequalities_drift_condition_1_rev}
    \begin{cases}
        \lambda_1 M_2 & \leq \inf_{x \in \mathbb{R}} \{|x| \mid |e^x-1-x|>x^2 \} , \\
        M_2 & \geq M_1 , \\
        \|\Lambda\| & \geq M .
    \end{cases}
\end{equation}

Furthermore, suppose
\begin{equation}
    \label{eq_inequalities_drift_condition_2_rev}
    \begin{cases}
        \lambda_1 &= \frac{1}{M_2^3} , \\
        \lambda_1 & \leq \frac{\lambda\iota_\rho(1-\iota_\rho)}{2} , \\
        \|\Lambda\| & \geq M_2^3 , \\
        M_2^3 & \geq M .
    \end{cases}
\end{equation}

Under conditions (\ref{eq_inequalities_drift_condition_1_rev}) and (\ref{eq_inequalities_drift_condition_2_rev}), an upper bound for (\ref{eq_target_bound}) is
\begin{equation*}
    -\frac{\Delta}{2M_2^3} + C e^{-\frac{\lambda\iota_\rho(1-\iota_\rho) M_2}{2}} + 1 + \frac{1}{M_2^4} + \frac{2}{M_2^4}
    = 1 + \frac{1}{M_2^3} \left[ -\frac{\Delta}{2} + C M_2^3 e^{-\frac{\lambda\iota_\rho(1-\iota_\rho) M_2}{2}} + \frac{3}{M_2} \right] .
\end{equation*}

Conditions (\ref{eq_inequalities_drift_condition_1_rev}) and (\ref{eq_inequalities_drift_condition_2_rev}) are equivalent to
\begin{equation}
    M_2 \geq \max \left\{ M^{\frac{1}{3}}, \left(\frac{2}{\lambda\iota_\rho(1-\iota_\rho)}\right)^{\frac{1}{3}}, M_1,
    \left[ \inf_{x \in \mathbb{R}} \{|x| \mid |e^x-1-x|>x^2 \} \right]^{-\frac{1}{2}} \right\} \label{eq_M2_lower_bound}
\end{equation}
with $\lambda_1 = \frac{1}{M_2^3}$ and $\|\Lambda\| \geq M_2^3$.
Since $C M_2^3 e^{-\frac{\lambda\iota_\rho(1-\iota_\rho) M_2}{2}} + \frac{3}{M_2} \rightarrow 0$ as $M_2 \rightarrow +\infty$, there exists a sufficiently large $M_2^*$ satisfying (\ref{eq_M2_lower_bound}) such that for any $M_2 \geq M_2^*$,
\begin{equation*}
    -\frac{\Delta}{2} + C M_2^3 e^{-\frac{\lambda\iota_\rho(1-\iota_\rho) M_2}{2}} + \frac{3}{M_2}
    < -\frac{\Delta}{4} .
\end{equation*}

We have shown that if $M_2 \geq M_2^*$, then for any $\Lambda \in W_\phi$ satisfying $\|\Lambda\| \geq M_2^3$, it holds that
\begin{equation*}
    \mathbb{E}_\theta \left[ e^{\frac{1}{M_2^3} (\|\Lambda_1\| - \|\Lambda_0\|) } \mid \Lambda_0 = \Lambda \right]
    \leq 1 + \frac{1}{M_2^3} \left[ -\frac{\Delta}{2} + C M_2^3 e^{-\frac{\lambda\iota_\rho(1-\iota_\rho) M_2}{2}} + \frac{3}{M_2} \right]
    < 1 .
\end{equation*}
Conversely, for any $\Lambda$ such that $\|\Lambda\| < M_2^3$, it holds that
\begin{equation*}
    \mathbb{E}_\theta \left[ e^{\frac{1}{M_2^3} \|\Lambda_1\| } \mid \Lambda_0 = \Lambda \right]
    \leq \mathbb{E}_\theta \left[ e^{\frac{1}{M_2^3} \|\Lambda_1 - \Lambda_0\| } \mid \Lambda_0 = \Lambda \right]
    e^{\frac{1}{M_2^3} \|\Lambda\|}
    \leq \mathbb{E} \left[ e^{\frac{1}{M_2^3\iota_\rho(1-\iota_\rho)} \|X\| } \right] \cdot e
    \leq C e ,
\end{equation*}
since $\frac{1}{M_2^3} \leq \frac{\lambda\iota_\rho(1-\iota_\rho)}{2} < \lambda$.
These results lead to the inequality
\begin{equation}
    \label{eq_origin_drift_condition_rev}
    \mathbb{E}_\theta \left[ e^{\frac{1}{M_2^3} \|\Lambda_1\|} \mid \Lambda_0 = \Lambda \right]
    \leq \left\{ 1 + \frac{1}{M_2^3} \left[ -\frac{\Delta}{2} + C M_2^3 e^{-\frac{\lambda\iota_\rho(1-\iota_\rho) M_2}{2}} + \frac{3}{M_2} \right] \right\}
    e^{\frac{1}{M_2^3} \|\Lambda\|} + C e .
\end{equation}

By setting $\lambda_1=\frac{1}{(M_2^*)^3}$, $b = C e$, and
\begin{equation*}
    \beta
    = 1 + \frac{1}{(M_2^*)^3} \left[ -\frac{\Delta}{2} + C (M_2^*)^3 e^{-\frac{\lambda\iota_\rho(1-\iota_\rho) M_2^*}{2}} + \frac{3}{M_2^*} \right]
    < 1-\frac{\Delta}{4(M_2^*)^3}
    < 1 ,
\end{equation*}
the drift condition
\begin{equation*}
    \mathbb{E}_\theta \left[ e^{\lambda_1 \|\Lambda_1\|} \mid \Lambda_0 = \Lambda \right] \leq \beta e^{\lambda_1 \|\Lambda\|} + b
\end{equation*}
holds for all $\theta \in \Theta$.
Let $V(\Lambda) = \exp(\lambda_1\|\Lambda\|)$.

Finally, for any $\alpha \in (0,1]$ and $M_2 = \frac{M_2^*}{\alpha^{1/3}}$, the left-hand side of (\ref{eq_origin_drift_condition_rev}) becomes
\begin{equation*}
    \mathbb{E}_\theta \left[ e^{\frac{1}{M_2^3} \|\Lambda_1\|} \mid \Lambda_0 = \Lambda \right]
    = \mathbb{E}_\theta \left[ e^{\lambda_1 \alpha \|\Lambda_1\|} \mid \Lambda_0 = \Lambda \right]
    = \mathbb{E}_\theta \left[ V^\alpha(\Lambda_1) \mid \Lambda_0 = \Lambda \right]
    = (P_\theta V^\alpha)(\Lambda) .
\end{equation*}
The inequality (\ref{eq_origin_drift_condition_rev}) with $M_2 = \frac{M_2^*}{\alpha^{1/3}}$ implies that $P_\theta V^{\alpha} \leq \beta_{\alpha} V^{\alpha} + b_{\alpha}$ for some $\beta_{\alpha} \in (0,1)$ and $b_{\alpha}>0$, where
\begin{align*}
    \beta_{\alpha}
    &= 1 + \frac{\alpha}{(M_2^*)^3} \left[ -\frac{\Delta}{2} + \frac{C (M_2^*)^3 e^{-\frac{\lambda\iota_\rho(1-\iota_\rho) M_2^*}{2\alpha^{1/3}}}}{\alpha} + \frac{3\alpha^{1/3}}{M_2^*} \right] , \\
    b_{\alpha} &= C e .
\end{align*}
Thus, $P_\theta V^\alpha \leq \beta_\alpha V^\alpha + b$ holds with a constant $b_\alpha=b$ and a parameter $\beta_\alpha < 1$, such that $\lim_{\alpha \rightarrow 0} \beta_\alpha = 1$.

\subsection{Proof of Lemma \ref{lemma_for_simultaneous_small_set_condition}}

\label{subsec_proof_lemma_for_simultaneous_small_set_condition}

\subsubsection{Proof Sketch and Lemma}

Before stating the proof sketch and the lemma needed for the proof, we introduce the notion of the pushforward of a measure.
Given a measure $\nu$ on the $\mathbb{R}^p$, the pushforward of $\nu$ by $\Phi$, denoted as $\Phi^* \nu$, is defined as the measure on $\mathbb{R}^d$ given by
\begin{equation*}
    \Phi^*\nu(A) := \nu\left( \Phi^{-1}(A) \right) .
\end{equation*}

By the compactness of $\Theta$, it suffices to work locally on an arbitrary neighborhood $B_{\theta_*} \subset \Theta$.
Denote $X^* = (x^{(1)}, \dots, x^{(s)}, x^{(1)}, \dots, x^{(s)})$, and let $S$ be any sufficiently small neighborhood of $X^*$.
Under Assumption \ref{assumption_for_simultaneous_small_set_condition}, the Jacobian
$D\Phi_\theta(x^*)$ is uniformly nondegenerate on $B_{\theta_*}$, i.e.,
its smallest singular value is bounded below by a positive constant.
After an orthogonal change of coordinates, $\Phi_\theta$ is locally Lipschitz with full-rank derivative at $x^*$, and Lemma \ref{lemma_density_lowerbound} yields the uniform density lower bound
\begin{equation*}
    \Phi_\theta^* \mu_{\mathrm{leb}, S}
    \geq
    \epsilon^\prime \mu_{\mathrm{leb}, V^*} ,
\end{equation*}
for some ball $V^*$ centered at $\Phi_\theta(x^*)$ and constant $\epsilon^\prime > 0$.
It is established rigorously in \eqref{eq_proof_lemma_for_simultaneous_small_set_condition_Phi_density}, and Lemma \ref{lemma_density_lowerbound} will be proved in Subsection \ref{subsec_proof_lemma_density_lowerbound}.
This provides a nontrivial absolutely continuous component in the distribution of the aggregated update map $\Phi_\theta$.

We then transfer this density lower bound to the transition kernel $P_\theta$.
For each treatment assignment and any state $\Lambda$, there exists a
centered random increment $\Gamma_{\mathrm{Bin}, i}$, whose distribution corresponds to the case where the allocation probability exactly equals the targeted allocation ratio, and which is uniformly dominated by the increment of $\Lambda$ under the transition kernel $P_\theta$.
By the convolution and the inequality above, the distribution $\Gamma_{\mathrm{Bin}, 1}^{2*} * \cdots * \Gamma_{\mathrm{Bin}, s}^{2*}$ dominates a multiple of a Lebesgue measure on some ball $V^*$.
Thus, introducing some auxiliary notations and constants, this convolution can be written as the sum of a mean-zero drift term and an i.i.d. continuous perturbations.
A second-moment argument shows that, for sufficiently large $n$, the cumulative drift $\sum_{i=1}^n Z_1^{(i)}$ remains of order $o_P(n)$, whereas the support of the perturbations $\sum_{k=1}^{T_n} Z_3^{(k)}$ grows linearly in $n$ and retains positive Lebesgue density on its interior.
This implies that the resulting centered random walk, with increments 
$\Gamma_{\mathrm{Bin}, 1}^{2*} * \cdots * \Gamma_{\mathrm{Bin}, s}^{2*}$, covers a centered Euclidean ball of radius proportional to $n$ with uniformly positive probability.
Because of the connection between $P_\theta$ and $\Gamma_{\mathrm{Bin}, i}$, it yields that, for all $\theta \in B_{\theta_*}$ and all $n$ large enough,
\begin{equation*}
    P_\theta^{2ns}(\Lambda, \cdot)
    \geq
    c_n \mu_{\mathrm{leb}} \left( B(\Lambda, c^\prime n) \right) ,
\end{equation*}
which establishes a simultaneous small set condition.

\begin{lemma}
    \label{lemma_density_lowerbound}

    Suppose the function $\Phi: \mathbb{R}^p \to \mathbb{R}^d$ and $p > d$.
    If the function $\Phi$ is differentiable on the ball $B(x^*,r_x)$, the continuous differentiation $D\Phi = \left( D_1\Phi, D_2\Phi \right)$ is with $d \times (p-d)$ matrix $D_1\Phi$ and $d \times d$ matrix $D_2\Phi$, where the latter is full rank at $x^* = (x_1^*, x_2^*)$, where $x_1^*$ is $(p-d)$-dimensional and $x_2^*$ is $d$-dimensional.

    Let $\mu_{\mathrm{leb}, B(x^*,r_x)}$ denote the measure on $\mathbb{R}^d$ defined by
    \begin{equation*}
        \mu_{\mathrm{leb}, B(x^*,r_x)}(A) := \mu_{\mathrm{leb}} ( A \cap B(x^*,r_x) ) ,
        \quad A \subset \mathbb{R}^d .
    \end{equation*}
    Let $L_\Phi > 0$ be the Lipschitz constant of $\Phi$ and $L_{D\Phi}$ be the Lipschitz constant of $D\Phi$ on the ball $B(x^*,r_x)$.

    Then $\Phi^* \mu_{\mathrm{leb}, B(x^*,r_x)} \geq \epsilon \mu_{\mathrm{leb}, V^*}$, where
    \begin{equation*}
        V^* = B \left( \Phi(x^*), C_{\mathrm{dl}} \right) ,
        \quad
        \epsilon = \frac{\mu_{\mathrm{leb}}(B^{(d)}(0, 1)) C_{\mathrm{dl}}^d}{L_\Phi^p} ,
    \end{equation*}
    and
    \begin{equation*}
        C_{\mathrm{dl}}
        = \frac{\min \left\{ \left[ 2 L_{D\Phi} \| \{ D_2\Phi(x^*) \}^{-1} \| \right]^{-1} , r_x \right\}}
        {128 L_\Phi \| \{ D_2\Phi(x^*) \}^{-1} \|^2} .
    \end{equation*}
\end{lemma}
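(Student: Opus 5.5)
The plan is a Fubini-type argument: split $x=(x_1,x_2)$ with $x_1\in\mathbb{R}^{p-d}$ and $x_2\in\mathbb{R}^d$, freeze $x_1$ near $x_1^*$, and treat the slice map $F_{x_1}:x_2\mapsto \Phi(x_1,x_2)$ as a local diffeomorphism of $\mathbb{R}^d$ via a quantitative inverse function theorem. Write $M=\{D_2\Phi(x^*)\}^{-1}$ and $r=\min\{[2L_{D\Phi}\|M\|]^{-1},r_x\}$.

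\textbf{Step 1 (uniform invertibility of the slices).} For $(x_1,x_2)$ with $\|x-x^*\|\le r$, Lipschitz continuity of $D\Phi$ gives
\begin{equation*}
\|D_2\Phi(x)-D_2\Phi(x^*)\|\le L_{D\Phi}\, r\le \tfrac{1}{2}\|M\|^{-1} .
\end{equation*}
By a Neumann series argument, $D_2\Phi(x)$ is then invertible with $\|\{D_2\Phi(x)\}^{-1}\|\le 2\|M\|$. A standard contraction-mapping estimate then yields two facts for every $x_1$ with $\|x_1-x_1^*\|\le r/2$. First, $F_{x_1}$ is injective on $B(x_2^*,r/2)$. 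Second, its image contains a ball $B(F_{x_1}(x_2^*),\,c\,r/\|M\|)$ for an absolute constant $c$; I would obtain this by running Newton/Picard iteration with the fixed inverse $M$ and checking the iterates stay in the ball.

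\textbf{Step 2 (the slice images contain a common ball).} Since $\|F_{x_1}(x_2^*)-\Phi(x^*)\|\le L_\Phi\|x_1-x_1^*\|$, I restrict to
\begin{equation*}
\|x_1-x_1^*\|\le \rho:=C_{\mathrm{dl}}/L_\Phi .
\end{equation*}
With the constant $128\,L_\Phi\|M\|^2$ in the denominator of $C_{\mathrm{dl}}$, both $2C_{\mathrm{dl}}$ and $L_\Phi\rho + C_{\mathrm{dl}}$ are at most $c\,r/\|M\|$. Hence every slice image $F_{x_1}(B(x_2^*,r/2))$ contains $V^*=B(\Phi(x^*),C_{\mathrm{dl}})$. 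This constant bookkeeping is where the factor $128$ and the $\|M\|^2$ should come from. I expect it to be the main obstacle: one has to choose the Newton radius, the restriction on $x_1$ and the target radius consistently so that the explicit $C_{\mathrm{dl}}$ in the statement comes out.

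\textbf{Step 3 (change of variables on each slice).} For measurable $A$, injectivity and the area formula on each slice give
\begin{equation*}
\mu_{\mathrm{leb}}\bigl(F_{x_1}^{-1}(A)\cap B(x_2^*,r/2)\bigr)\ \ge\ \frac{\mu_{\mathrm{leb}}(A\cap V^*)}{\sup|\det D_2\Phi|}\ \ge\ \frac{\mu_{\mathrm{leb}}(A\cap V^*)}{L_\Phi^{d}} ,
\end{equation*}
using $|\det D_2\Phi|\le\|D_2\Phi\|^d\le L_\Phi^d$.

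\textbf{Step 4 (integrate over the frozen coordinate).} Integrating over $x_1\in B(x_1^*,\rho)$ by Fubini, and noting that the product set lies in $B(x^*,r_x)$ provided $\rho+r/2\le r_x$ (true by the size of $C_{\mathrm{dl}}$), gives
\begin{equation*}
\Phi^*\mu_{\mathrm{leb},B(x^*,r_x)}(A)\ \ge\ \mu_{\mathrm{leb}}\bigl(B^{(p-d)}(0,\rho)\bigr)\,L_\Phi^{-d}\,\mu_{\mathrm{leb}}(A\cap V^*) .
\end{equation*}
The final step is to lower bound $\mu_{\mathrm{leb}}(B^{(p-d)}(0,\rho))L_\Phi^{-d}$ by the stated $\epsilon=\mu_{\mathrm{leb}}(B^{(d)}(0,1))C_{\mathrm{dl}}^d/L_\Phi^p$. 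I would do this by shrinking the $x_1$-region to a suitable cube or ball of radius $\rho=C_{\mathrm{dl}}/L_\Phi$, using $C_{\mathrm{dl}}\le 1$ (which we may assume after rescaling). If the dimensional constants do not match exactly, I would adjust the admissible region rather than the claimed form of $\epsilon$.
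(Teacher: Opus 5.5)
Your Steps 1--4 are correct and follow the paper's skeleton: the Neumann-series bound $\|\{D_2\Phi(x)\}^{-1}\|\le 2\|M\|$ on $B(x^*,r)$, uniform invertibility of the slices $x_2\mapsto\Phi(x_1,x_2)$ for $x_1$ within $C_{\mathrm{dl}}/L_\Phi$ of $x_1^*$, and Fubini over $x_1$. The two middle ingredients differ, however. The paper cites a quantitative Lipschitz inverse function theorem (Phien, Thm.~3.1) to get inverses $\Psi_{x_1}$ with Lipschitz constant at most $4\|M\|$ on nested balls $W^*,U^*,V^*$. It never uses a Jacobian. Instead it shows the restricted pushforward is absolutely continuous, and bounds the mass of every ball $B(y,r_y)\subset V^*$ below by $\mu_{\mathrm{leb}}(W^*)L_\Phi^{-d}\mu_{\mathrm{leb}}(B(y,r_y))$, because $B(\Psi_{x_1}(y),r_y/L_\Phi)$ is mapped into $B(y,r_y)$. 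It then concludes via Radon--Nikodym and Lebesgue differentiation.

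Your contraction argument works with $c=1/4$ on $\{x_1\}\times\bar B(x_2^*,r/2)$. Combined with $|\det D_2\Phi|\le L_\Phi^d$ and change of variables, it is self-contained and gives the bound for every measurable $A$ directly, with no differentiation step. The paper's route uses only Lipschitz properties of $\Phi$ and its local inverse.

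You should make explicit that Step 2 and the containment $\rho+r/2\le r_x$ in Step 4 both rest on $L_\Phi\|M\|\ge 1$. This follows from $\|D_2\Phi(x^*)\|\le L_\Phi$, and is the paper's $\delta\le L_\Phi/2$. With it, $2C_{\mathrm{dl}}\le r/(64\|M\|)$ and $\rho\le r/128$. So your argument has ample room, and the $128$ and the square on $\|M\|$ are not forced by it.

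Your final step, however, cannot be carried out, because the $\epsilon$ in the statement is false unless $p=2d$. Replacing $\Phi$ by $\lambda\Phi$ multiplies $L_\Phi$ and $L_{D\Phi}$ by $\lambda$ and divides $\|M\|$ by $\lambda$, so $C_{\mathrm{dl}}$ becomes $\lambda C_{\mathrm{dl}}$. The asserted inequality for $\lambda\Phi$ and the set $\lambda A$ then reads $\Phi^*\mu_{\mathrm{leb},B(x^*,r_x)}(A)\ge\lambda^{2d-p}\epsilon\,\mu_{\mathrm{leb}}(A\cap V^*)$. With $A=V^*$, this fails as $\lambda\to 0$ or $\lambda\to\infty$, since the left side is finite.

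Concretely, take $p=3$, $d=1$, $x^*=0$, $r_x=1$ and $\Phi(x)=\lambda x_2$ with $x_2\in\mathbb{R}$. Here $D\Phi$ is constant, so $C_{\mathrm{dl}}=\lambda/128$ and the stated $\epsilon$ equals $1/(64\lambda^2)$. The pushforward density on $V^*$ is at most $\pi/\lambda$, so the claim fails for $\lambda<1/(64\pi)$.

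The constant your Fubini step produces, $\mu_{\mathrm{leb}}(B^{(p-d)}(0,1))\,C_{\mathrm{dl}}^{p-d}/L_\Phi^{p}$, is the correct one. Shrinking the $x_1$-region only lowers it. Assuming $C_{\mathrm{dl}}\le 1$ is not free, since rescaling the target changes the claim itself, and it points the wrong way when $p-d>d$.

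The paper's proof ends at exactly your quantity, $\mu_{\mathrm{leb}}(W^*)/L_\Phi^{d}$ with $W^*=B(x_1^*,C_{\mathrm{dl}}/L_\Phi)\subset\mathbb{R}^{p-d}$. Its last display then rewrites this with $B^{(d)}$ and exponent $d$, which is a slip. The lemma is only used through $\epsilon>0$ in the small-set argument, so you should state and prove the corrected constant rather than try to adjust the region.
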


\subsubsection{Formal Proof}

Due to the compactness assumed in Assumption \ref{assumption_theta_compact}, it suffices to consider each neighborhood $B_{\theta_*}$ around a possible parameter value $\theta_* \in \Theta$.

Denote
\begin{equation*}
    S = B^{(d_x)}(x^{(1)}, r_x) \times \dots \times B^{(d_x)}(x^{(s)}, r_x)
    \times B^{(d_x)}(x^{(1)}, r_x) \times \dots \times B^{(d_x)}(x^{(s)}, r_x) .
\end{equation*}
We first show that $\Phi_\theta^* \mu_{\mathrm{leb}, S} \geq \epsilon^\prime \mu_{\mathrm{leb}, V^*}$ for some ball $V^*$ and some constant $\epsilon^\prime$.

Suppose $x^* = (x^{(1)}, \dots, x^{(s)}, x^{(1)}, \dots, x^{(s)})$, which is a $2sd_x$-dimensional vector.
Then $B^{(2sd_x)}(x^*, r_x) \subset S$.
For any $\theta \in B_{\theta_*}$, denote the $d \times (2sd_x)$ matrix $M_\theta = D\Phi_\theta(x^*)$.
Then Assumption \ref{assumption_for_simultaneous_small_set_condition} ensures that $\sigma_d(M_\theta) \geq C_\sigma$.
By SVD decomposition and Item \ref{item_assumption_for_simultaneous_small_set_condition_3} of Assumption \ref{assumption_for_simultaneous_small_set_condition}, it holds that
\begin{equation*}
    M_\theta = U_\theta \Sigma_\theta V_\theta^T ,
\end{equation*}
where $U_\theta \in \mathbb{R}^{d \times d}$ and $V_\theta \in \mathbb{R}^{(2sd_x) \times (2sd_x)}$ are orthogonal matrices, and $\Sigma_\theta \in \mathbb{R}^{m\times n}$ is a diagonal (rectangular) matrix of the form
\begin{equation*}
    \Sigma_\theta =
    \begin{bmatrix}
        0 & \cdots & 0 & \sigma_1(M_\theta) & 0 & \cdots & 0 \\
        0 & \cdots & 0 & 0 & \sigma_2(M_\theta) & \cdots & 0 \\
        0 & \cdots & 0 & \vdots & \vdots & \ddots & \vdots \\
        0 & \cdots & 0 & 0 & 0 & \cdots & \sigma_d(M_\theta) \\
    \end{bmatrix} ,
\end{equation*}
with singular values ordered as
\begin{equation*}
    \sigma_1(M_\theta) \geq \sigma_2(M_\theta) \geq \cdots \geq \sigma_d(M_\theta) \geq C_\sigma .
\end{equation*}

Let $\tilde{\Phi}_\theta(x) = \Phi_\theta (V_\theta x)$.
Then we can obtain that
\begin{equation*}
    D\tilde{\Phi}_\theta (V_\theta^T x^*) = D\Phi_\theta(x^*) V_\theta = U_\theta \Sigma_\theta .
\end{equation*}
Thus, by the form of $\Sigma_\theta$, following the statement of Lemma \ref{lemma_density_lowerbound}, it holds that $D_2 \tilde{\Phi}_\theta$ is full rank at $V_\theta^T x^*$.
Moreover, $\tilde{\Phi}_\theta$ is differentiable on the ball $B^{(2sd_x)}(V_\theta^T x^*,r_x)$.

Then the Lipschitz constant in Item \ref{item_assumption_for_simultaneous_small_set_condition_2} of Assumption \ref{assumption_for_simultaneous_small_set_condition} implies that $\Phi_\theta$ is $\sqrt{2s}L$-Lipschitz continuous on $S$
Thus, by Lemma \ref{lemma_density_lowerbound} and the equality
\begin{equation*}
    \| \{ D_2 \tilde{\Phi}_\theta(V_\theta^T x^*) \}^{-1} \|
    = \| \{ U_\theta \diag \left( \sigma_1(M_\theta), \sigma_2(M_\theta), \dots, \sigma_d(M_\theta) \right) \}^{-1} \|
    = \left[ \sigma_d(M_\theta) \right]^{-1}
    \leq C_\sigma^{-1} ,
\end{equation*}
we can conclude that
\begin{equation}
    \Phi_\theta^* \mu_{\mathrm{leb}, S}
    \geq \Phi_\theta^* \mu_{\mathrm{leb}, B^{(2sd_x)}(x^*, r_x)}
    = \tilde{\Phi}_\theta^* \mu_{\mathrm{leb}, B^{(2sd_x)}(V_\theta^T x^*, r_x)}
    \geq \epsilon^\prime \mu_{\mathrm{leb}, V^*} \label{eq_proof_lemma_for_simultaneous_small_set_condition_Phi_density} ,
\end{equation}
where
\begin{equation*}
    V^* = B \left( \Phi_\theta(x^*), C_{\mathrm{dl}} \right) ,
    \quad
    \epsilon^\prime = \frac{\mu_{\mathrm{leb}}(B^{(d)}(0, 1)) C_{\mathrm{dl}}^d}{(\sqrt{2s}L)^{2s d_x}} ,
\end{equation*}
and
\begin{equation*}
    C_{\mathrm{dl}}
    = \frac{\min \left\{ \left[ 2 \sqrt{2s}L C_\sigma^{-1} \right]^{-1} , r_x \right\}}
    {128 \sqrt{2s}L C_\sigma^{-2}} .
\end{equation*}

For any $T_i \in \{ 0,1 \}$, any $i \in \{ 1, \dots, 2s \}$, let the function
\begin{equation*}
    \Phi_{\theta, (T_i)_{i=1:(2s)}}
    = \sum_{i=1}^{2s} \frac{\left[ T_i-\rho_\theta(x_i) \right] \phi(x_i)}{\rho_\theta(x_i)\left[ 1-\rho_\theta(x_i) \right]} .
\end{equation*}
When $T_i = 0$ for $i \in \{1, \dots, s\}$ and $T_i = 1$ for $i \in \{s+1, \dots, 2s\}$, $\Phi_{\theta, (T_i)_{i=1:(2s)}} = \Phi_\theta$.

Define
\begin{equation*}
    \Gamma_{\mathrm{leb}, i}
    = \frac{\mu_{\mathrm{leb}, B^{(d_x)}(x^{(i)}, r_x)}}
    {\mu_{\mathrm{leb}}(B^{(d_x)}(0, r_x))}
\end{equation*}
as the uniform probability distribution over the ball $B(x^{(i)}, r_x)$.

Let $X \sim \Gamma_{\mathrm{leb}, i}$. Conditionally on $X$, the random variable $\Delta$ is distributed as  
\begin{equation*}
    P (\Delta \mid X) =
    \begin{cases}
        \phi(X)/\rho_\theta(X), & \text{with probability } \rho_\theta(X), \\[6pt]
        -\phi(X)/(1-\rho_\theta(X)), & \text{with probability } 1-\rho_\theta(X) .
    \end{cases}
\end{equation*}
Denote the distribution of $\Delta$ in this case by $\Gamma_{\mathrm{Bin}, i}$.

It then follows that the first moment of $\Gamma_{\mathrm{Bin}, i}$ is $0$,
\begin{equation*}
    \Gamma_{\mathrm{Bin}, i}
    \geq \iota \{-\phi/(1-\rho_\theta)\}^* \Gamma_{\mathrm{leb}, i}
\end{equation*}
and
\begin{equation*}
    \Gamma_{\mathrm{Bin}, i}
    \geq \iota \{\phi/\rho_\theta\}^* \Gamma_{\mathrm{leb}, i}
\end{equation*}
by $\rho_\theta \in [\iota_\rho, 1-\iota_\rho]$ in Assumption \ref{assumption_sampling}.

Moreover, from the definition of $\Phi_{\theta, (T_i)_{i=1:(2s)}}$, we have
\begin{align*}
    & \quad \Phi_{\theta, (T_i)_{i=1:(2s)}}^*
    (\Gamma_{\mathrm{leb}, 1} \otimes \dots \otimes \Gamma_{\mathrm{leb}, s}
    \otimes \Gamma_{\mathrm{leb}, 1} \otimes \dots \otimes \Gamma_{\mathrm{leb}, s}) \\
    &= \{ [(T_1-\rho_\theta) \phi / (\rho_\theta(1-\rho_\theta))]^* \Gamma_{\mathrm{leb}, 1} \} * \dots * \{ [(T_s-\rho_\theta) \phi / (\rho_\theta(1-\rho_\theta))]^* \Gamma_{\mathrm{leb}, s} \} \\
    & \quad * \{ [(T_{s+1}-\rho_\theta) \phi / (\rho_\theta(1-\rho_\theta))]^* \Gamma_{\mathrm{leb}, 1} \} * \dots * \{ [(T_{2s}-\rho_\theta) \phi / (\rho_\theta(1-\rho_\theta))]^* \Gamma_{\mathrm{leb}, s} \} ,
\end{align*}
where $*$ denotes the coupling of measures on $\mathbb{R}^d$, which also corresponds to the distribution of the sum of independent random variables associated with the respective probability distributions.

Therefore,
\begin{align*}
    \Gamma_{\mathrm{Bin}, 1}^{2*} * \dots * \Gamma_{\mathrm{Bin}, s}^{2*}
    & \geq \iota^{2s}
    \Phi_{\theta, (T_i)_{i=1:(2s)}}^*
    (\Gamma_{\mathrm{leb}, 1} \otimes \dots \otimes \Gamma_{\mathrm{leb}, s}
    \otimes \Gamma_{\mathrm{leb}, 1} \otimes \dots \otimes \Gamma_{\mathrm{leb}, s}) \\
    &= \iota^{2s} \mu_{\mathrm{leb}}(B^{(d_x)}(0, r_x))^{-2s} \Phi_{\theta, (T_i)_{i=1:(2s)}}^* \mu_{\mathrm{leb}, S} ,
\end{align*}
for any $T_i \in \{0, 1\}$.

Furthermore, from \eqref{eq_proof_lemma_for_simultaneous_small_set_condition_Phi_density}, we have
\begin{equation*}
    \Gamma_{\mathrm{Bin}, 1}^{2*} * \dots * \Gamma_{\mathrm{Bin}, s}^{2*}
    \geq \iota^{2s} \mu_{\mathrm{leb}}(B^{(d_x)}(0, r_x))^{-2s} \Phi_\theta^* \mu_{\mathrm{leb}, S}
    \geq \iota^{2s} \mu_{\mathrm{leb}}(B^{(d_x)}(0, r_x))^{-2s} \epsilon^\prime \mu_{\mathrm{leb}, V^*} .
\end{equation*}

Let $C_{\mathrm{gap}} = \iota^{2s} \mu_{\mathrm{leb}}(B^{(d_x)}(0, r_x))^{-2s} \epsilon^\prime$, and define the distribution
\begin{equation*}
    \Gamma^{(\mathrm{gap})}
    = \left[ \Gamma_{\mathrm{Bin}, 1}^{2*} * \dots * \Gamma_{\mathrm{Bin}, s}^{2*}
    - C_{\mathrm{gap}} \mu_{\mathrm{leb}, V^*} \right] \otimes \delta_0
    + \left[ C_{\mathrm{gap}} \mu_{\mathrm{leb}}(V^*) \delta_{\Phi_\theta(x^*)} \right] \otimes \delta_1 ,
\end{equation*}
where $\delta_{\Phi_\theta(x^*)}$ denotes the Dirac measure at $\Phi_\theta(x^*)$, and $\otimes$ represents the product measure.

Since the center of the ball $V^*$ is $\Phi_\theta(x^*)$, the first $d$ components of $\Gamma^{(\mathrm{gap})}$ have first moments equal to those of $\Gamma_{\mathrm{Bin}, 1}^{2*} * \dots * \Gamma_{\mathrm{Bin}, s}^{2*}$, which in turn are the sums of the first moments of $\Gamma_{\mathrm{Bin}, 1}^{2*}$, and hence equal to $0$.
The last component of $\Gamma^{(\mathrm{gap})}$ follows a $\mathrm{Bin}(1, C_{\mathrm{gap}} \mu_{\mathrm{leb}}(V^*))$ distribution, with expected value $C_{\mathrm{gap}} \mu_{\mathrm{leb}}(V^*)$.

Let $\{ Z^{(i)} = (Z_1^{(i)}, Z_2^{(i)}) \}_{i \in \mathbb{N}^*}$ be an i.i.d. sequence of random vectors with distribution $\Gamma^{(\mathrm{gap})}$, where $Z_1^{(i)} \in \mathbb{R}^d$ and $Z_2^{(i)} \in \mathbb{R}$.
Next, let $\{ Z_3^{(i)} \}_{i \in \mathbb{N}^*}$ be an i.i.d. sequence of random vectors, independent of $\{ Z^{(i)} \}$, with uniform distribution over the ball $B^{(d)}(0, C_{\mathrm{dl}})$.
Then
\begin{equation}
    Z_1^{(i)} + Z_2^{(i)}Z_3^{(i)} \sim \Gamma_{\mathrm{Bin}, 1}^{2*} * \dots * \Gamma_{\mathrm{Bin}, s}^{2*} \label{eq_proof_lemma_for_simultaneous_small_set_condition_equivalent_1} .
\end{equation}
Therefore, by independence,
\begin{equation}
    \sum_{i=1}^n Z_1^{(i)}
    + \sum_{k=1}^{\sum_{i=1}^n Z_2^{(i)}} Z_3^{(k)}
    \stackrel{d}{=}
    \sum_{i=1}^n \left[ Z_1^{(i)} + Z_2^{(i)}Z_3^{(i)} \right]
    \sim \left\{ \Gamma_{\mathrm{Bin}, 1}^{2*} * \dots * \Gamma_{\mathrm{Bin}, s}^{2*} \right\}^{n*} \label{eq_proof_lemma_for_simultaneous_small_set_condition_equivalent_2} .
\end{equation}
Moreover, from $P_\theta(\Lambda, \cdot) \geq \Lambda + \iota \epsilon \mu_{\mathrm{leb}}(B^{(d_x)}(0, r_x)) \Gamma_{\mathrm{Bin}, i}$ for any $i \in \{1, \dots, s\}$, where ``$+$'' denotes the measure addition, the relationship between $P_\theta$ and $\Gamma_{\mathrm{Bin}, 1}^{2*} * \dots * \Gamma_{\mathrm{Bin}, s}^{2*}$ also satisfies
\begin{equation}
    P_\theta^{2ns}(\Lambda, \cdot) - \Lambda
    \geq \left\{ \iota \epsilon \mu_{\mathrm{leb}}(B^{(d_x)}(0, r_x)) \right\}^{2ns}
    \left\{ \Gamma_{\mathrm{Bin}, 1}^{2*} * \dots * \Gamma_{\mathrm{Bin}, s}^{2*} \right\}^{n} \label{eq_proof_lemma_for_simultaneous_small_set_condition_P_random_walk} .
\end{equation}

Hence, it suffices to control a lower bound of $\left\{ \Gamma_{\mathrm{Bin}, 1}^{2*} * \dots * \Gamma_{\mathrm{Bin}, s}^{2*} \right\}^{n}$.
Since the random variable $\sum_{i=1}^n Z_1^{(i)} + \sum_{k=1}^{\sum_{i=1}^n Z_2^{(i)}} Z_3^{(k)}$ follows this distribution, it is enough to focus on this random variable.

Denote the indicator function $\chi_m(x) = \mathbb{I} \left( x \in B^{(d)}(0,(m+1)C_{\mathrm{dl}}/2) \right)$, then
\begin{equation*}
    (\chi_{m_1} * \chi_{m_2})(\Lambda)
    \geq \mu_{\mathrm{leb}}(B^{(d)}(0,C_{\mathrm{dl}}/4)) \chi_{m_1+m_2}(\Lambda) .
\end{equation*}
A detailed proof can be found in the proof of Lemma C.1 in \cite{fangGeneralNonMarkovianFramework2026}.
Thus, the distribution of $\sum_{k=1}^K Z_3^{(k)}$ is $\frac{\mu_{\mathrm{leb}, B^{(d)}(0, C_{\mathrm{dl}})}^{K*}}{\mu_{\mathrm{leb}}^K(B^{(d)}(0, C_{\mathrm{dl}}))}$, and
\begin{align}
    \frac{\mu_{\mathrm{leb}, B^{(d)}(0, C_{\mathrm{dl}})}^{K*}}{\mu_{\mathrm{leb}}^K(B^{(d)}(0, C_{\mathrm{dl}}))}
    & \geq \frac{\mu_{\mathrm{leb}}^{K-1}(B^{(d)}(0, C_{\mathrm{dl}}/4)) \mu_{\mathrm{leb}, B^{(d)}(0, (K+1)C_{\mathrm{dl}}/2)}}{\mu_{\mathrm{leb}}^K(B^{(d)}(0, C_{\mathrm{dl}}))} \label{eq_proof_lemma_for_simultaneous_small_set_condition_n_step} \\
    &= \frac{\mu_{\mathrm{leb}, B^{(d)}(0, (K+1)C_{\mathrm{dl}}/2)}}{4^{(K-1)d} \mu_{\mathrm{leb}}(B^{(d)}(0, C_{\mathrm{dl}}))} \notag .
\end{align}

Denote
\begin{equation*}
    p = C_{\mathrm{gap}} \mu_{\mathrm{leb}}(V^*),
    \quad
    S_n = \sum_{i=1}^n Z_1^{(i)},
    \quad
    T_n = \sum_{i=1}^n Z_2^{(i)} .
\end{equation*}
Let $S_{n,j}$ and $Z_{1,j}$ denote the $j$-th components of $S_n$ and $Z_1$, respectively.
We aim to find an $n$ such that
\begin{equation}
    P \left( |T_n - pn| > \frac{pn}{3} \right)
    + \sum_{j=1}^d P \left( |S_{n,j}| > \frac{pn}{6\sqrt{d}}C_{\mathrm{dl}} \right)
    \leq \frac{1}{2} \label{eq_proof_lemma_for_simultaneous_small_set_condition_n_condition_1}.
\end{equation}
Moreover,
\begin{equation}
    P \left( |T_n - pn| > \frac{pn}{3} \right)
    \leq \mathbb{E} \left[ (T_n - pn)^2 \right] \frac{9}{p^2n^2}
    = \frac{9(1-p)}{np} \label{eq_proof_lemma_for_simultaneous_small_set_condition_n_condition_2} ,
\end{equation}
\begin{equation}
    P \left( |S_{n,j}| > \frac{pn}{6\sqrt{d}}C_{\mathrm{dl}} \right)
    \leq \mathbb{E} \left[ S_{n,j}^2 \right] \frac{36d}{p^2n^2 C_{\mathrm{dl}}^2}
    \leq \frac{36d \mathbb{E} \left[ Z_{1,j}^2 \right] }{p^2n C_{\mathrm{dl}}^2} \label{eq_proof_lemma_for_simultaneous_small_set_condition_n_condition_3} ,
\end{equation}
and
\begin{align*}
    | Z_{1,j} |
    & \leq 2 \sum_{i=1}^s \max\{\|\phi/\rho_\theta\|_{L^\infty(\Gamma_{\mathrm{leb}, i})}, \|\phi/(1-\rho_\theta)\|_{L^\infty(\Gamma_{\mathrm{leb}, i})}\} \\
    & \leq 2 \sum_{i=1}^s \left[ \| \phi(x^{(i)}) \|_{L^\infty(\Gamma_{\mathrm{leb}, i})}/\iota + L r_x \right] ,
\end{align*}
almost surely, where $Z_{1,j}$ denotes the $j$th component of the random vector $Z_1$, and $Z_1$ follows the first $d$ components of the distribution $\Gamma^{(\mathrm{gap})}$.
The last inequality is by Item \ref{item_assumption_for_simultaneous_small_set_condition_2} in Assumption \ref{assumption_for_simultaneous_small_set_condition}.

Therefore, from \eqref{eq_proof_lemma_for_simultaneous_small_set_condition_n_condition_1}, \eqref{eq_proof_lemma_for_simultaneous_small_set_condition_n_condition_2}, and \eqref{eq_proof_lemma_for_simultaneous_small_set_condition_n_condition_3}, it suffices to find an $n$ such that
\begin{equation*}
    \frac{9(1-p)}{np} + \frac{36d \left\{ 2 \sum_{i=1}^s \left[ \| \phi(x^{(i)}) \|_\infty/\iota + L r_x \right] \right\}^2 }{p^2n C_{\mathrm{dl}}^2}
    \leq \frac{1}{2} .
\end{equation*}
Thus, we set
\begin{equation*}
    n_l
    = \frac{18(1-p)}{p}
    + \frac{72d \left\{ 2 \sum_{i=1}^s \left[ \| \phi(x^{(i)}) \|_\infty/\iota + L r_x \right] \right\}^2 }{p^2 C_{\mathrm{dl}}^2}
    + 1 .
\end{equation*}

For $n \geq n_l$, the intersection of the events
\begin{equation*}
    \| S_n \|
    = \left\| \sum_{i=1}^n Z_1^{(i)} \right\|
    \leq \frac{pn}{6}C_{\mathrm{dl}}
\end{equation*}
and
\begin{equation*}
    T_n = \sum_{i=1}^n Z_2^{(i)} \geq \frac{2pn}{3}
\end{equation*}
occurs with probability greater than $\frac{1}{2}$.
For any $A \subset B^{(d)} \left( 0, \frac{pnC_{\mathrm{dl}}}{3} \right)$, $S \in \mathbb{R}^d$ with $\| S \| \leq \frac{pn}{6}C_{\mathrm{dl}}$ and $T \in [\frac{2pn}{3}, \infty)$, by \eqref{eq_proof_lemma_for_simultaneous_small_set_condition_n_step},
\begin{align*}
    & \quad P \left(
        \sum_{k=1}^{T_n} Z_3^{(k)} \in A \mid S_n = S, T_n = T
    \right)
    = \frac{\left\{ \mu_{\mathrm{leb}, B^{(d)}(0, C_{\mathrm{dl}})}^{T_n*} \right\}(A)}{\mu_{\mathrm{leb}}^{T_n}(B^{(d)}(0, C_{\mathrm{dl}}))} \\
    & \geq \frac{\mu_{\mathrm{leb}, B^{(d)}(0, (T_n+1)C_{\mathrm{dl}}/2)}(A)}{4^{(T_n-1)d} \mu_{\mathrm{leb}}(B^{(d)}(0, C_{\mathrm{dl}}))}
    \geq \frac{\mu_{\mathrm{leb}, B^{(d)}(0, pnC_{\mathrm{dl}}/3)}(A)}{4^{(n-1)d} \mu_{\mathrm{leb}}(B^{(d)}(0, C_{\mathrm{dl}}))} .
\end{align*}
Let $A - S_n := \{\, x - y \mid x \in A, y \in S_n \}$ denotes the translation of the set $A$ by $-S_n$.
Then for any $A \subset B^{(d)} \left( 0, \frac{pnC_{\mathrm{dl}}}{6} \right)$, $S \in \mathbb{R}^d$, by $A - S_n \subset B^{(d)} \left( 0, \frac{pnC_{\mathrm{dl}}}{3} \right)$, we have
\begin{align*}
    & \quad P \left(
        \sum_{i=1}^n Z_1^{(i)} + \sum_{k=1}^{\sum_{i=1}^n Z_2^{(i)}} Z_3^{(k)} \in A \mid S_n = S, T_n = T
    \right) \\
    &= P \left(
        \sum_{k=1}^{T_n} Z_3^{(k)} \in A - S_n \mid S_n = S, T_n = T
    \right)
    \geq \frac{\mu_{\mathrm{leb}, B^{(d)}(0, pnC_{\mathrm{dl}}/3)}(A)}{4^{(n-1)d} \mu_{\mathrm{leb}}(B^{(d)}(0, C_{\mathrm{dl}}))} .
\end{align*}

Thus, we can conclude that for any $n \geq n_l$, for any $A \subset B^{(d)} \left( 0, \frac{pnC_{\mathrm{dl}}}{6} \right)$
\begin{align*}
    & \quad P \left(
        \sum_{i=1}^n Z_1^{(i)} + \sum_{k=1}^{\sum_{i=1}^n Z_2^{(i)}} Z_3^{(k)} \in A
    \right) \\
    & \geq \int
    \mathbb{I} \left( \| S \| \leq \frac{pn}{6}C_{\mathrm{dl}} \right)
    \mathbb{I} \left( T \in [\frac{2pn}{3}, \infty) \right)
    P \left(
        \sum_{i=1}^n Z_1^{(i)} + \sum_{k=1}^{\sum_{i=1}^n Z_2^{(i)}} Z_3^{(k)} \in A \mid S_n = S, T_n = T
    \right) \\
    & \geq \int
    \mathbb{I} \left( \forall j \in \{1, \dots, d\}, |S_{n,j}| \leq \frac{pn}{6\sqrt{d}}C_{\mathrm{dl}} \right)
    \mathbb{I} \left( T \in [\frac{2pn}{3}, \infty) \right)
    \frac{\mu_{\mathrm{leb}, B^{(d)}(0, pnC_{\mathrm{dl}}/3)}(A)}{4^{(n-1)d} \mu_{\mathrm{leb}}(B^{(d)}(0, C_{\mathrm{dl}}))} \\
    & \geq \frac{\mu_{\mathrm{leb}, B^{(d)}(0, pnC_{\mathrm{dl}}/3)}(A)}{2^{2(n-1)d+1} \mu_{\mathrm{leb}}(B^{(d)}(0, C_{\mathrm{dl}}))} .
\end{align*}
Because of \eqref{eq_proof_lemma_for_simultaneous_small_set_condition_equivalent_1} and \eqref{eq_proof_lemma_for_simultaneous_small_set_condition_equivalent_2}, the inequality above implies that
\begin{equation*}
    \left\{ \Gamma_{\mathrm{Bin}, 1}^{2*} * \dots * \Gamma_{\mathrm{Bin}, s}^{2*} \right\}^{n*}
    \geq \frac{\mu_{\mathrm{leb}, B^{(d)}(0, pnC_{\mathrm{dl}}/3)}}{2^{2(n-1)d+1} \mu_{\mathrm{leb}}(B^{(d)}(0, C_{\mathrm{dl}}))} .
\end{equation*}
Combining \eqref{eq_proof_lemma_for_simultaneous_small_set_condition_P_random_walk}, we obtain that, for all $n \geq n_l$,
\begin{equation*}
    P_\theta^{2ns}(\Lambda, \cdot)
    \geq \left\{ \iota \epsilon \mu_{\mathrm{leb}}(B^{(d_x)}(0, r_x)) \right\}^{2ns}
    \frac{\mu_{\mathrm{leb}, B^{(d)}(\Lambda, pnC_{\mathrm{dl}}/3)}}{2^{2(n-1)d+1} \mu_{\mathrm{leb}}(B^{(d)}(0, C_{\mathrm{dl}}))} .
\end{equation*}

\subsubsection{Proof of Lemma \ref{lemma_density_lowerbound}}

\label{subsec_proof_lemma_density_lowerbound}

Let $\|\cdot\|$ be the $\ell_2$ norm of the linear map and the vector.
In the proof, we will use both $\Phi(x^*)$ and $\Phi(x_1^*, x_2^*)$ interchangeably.
We first show that there exist neighborhoods $W^*$, $U^*$, and $V^*$ of the points $x_1^*$, $x_2^*$, and $\Phi(x^*)$, respectively, such that
\begin{enumerate}
    \item $V^* \subset \Phi_{x_1}(U^*)$ and $\Phi_{x_1}$ is injective on the domain $U^*$ for any $x_1 \in W^*$,
    \item The inverse function $\Psi_{x_1}$ of the function $\Phi_{x_1}$ restricted to $U^*$ is Lipschitz continuous on the domain $V^*$.
\end{enumerate}

Before the formal proof, we first demonstrate that the inverse of the differential $D_2 \Phi(x^*)$ is bounded within a certain neighborhood.
First, assume that
\begin{equation*}
    \delta = \frac{1}{2\|\{ D_2\Phi(x^*) \}^{-1}\|}
    \leq \frac{\| D_2\Phi(x^*) \|}{2}
    \leq \frac{\| D\Phi(x^*) \|}{2}
    \leq \frac{L_\Phi}{2}.
\end{equation*}
Let the constant $r = \min \left\{ L_{D\Phi}^{-1} \delta, r_x \right\}$.
As a consequence of the definition of $r$, we have
\begin{equation*}
    \| D_2\Phi(x) - D_2\Phi(x^*) \|
    \leq \| D\Phi(x) - D\Phi(x^*) \|
    \leq L_{D\Phi} \| x - x^* \|
    < L_{D\Phi} r
    \leq \delta .
\end{equation*}
Thus,
\begin{align*}
    \|\{ D_2\Phi(x^*) \}^{-1} \{ D_2\Phi(x) - D_2\Phi(x^*) \}\|
    & \leq \|\{ D_2\Phi(x^*) \}^{-1}\| \|D_2\Phi(x) - D_2\Phi(x^*)\| \\
    & \leq \|\{ D_2\Phi(x^*) \}^{-1}\| \delta
    = \frac{1}{2} < 1 .
\end{align*}
Furthermore, by $\|\{ D_2\Phi(x^*) \}^{-1} \{ D_2\Phi(x) - D_2\Phi(x^*) \}\| \leq \frac{1}{2} < 1$, we have
\begin{equation*}
    \{ I + \{ D_2\Phi(x^*) \}^{-1} \{ D_2\Phi(x) - D_2\Phi(x^*) \} \}^{-1}
    = \sum_{j=0}^\infty \left[ -\{ D_2\Phi(x^*) \}^{-1} \{ D_2\Phi(x) - D_2\Phi(x^*) \} \right]^j
\end{equation*}
Thus,
\begin{align*}
    & \quad \| \{ I + \{ D_2\Phi(x^*) \}^{-1} \{ D_2\Phi(x) - D_2\Phi(x^*) \} \}^{-1} \| \\
    & \leq \sum_{j=0}^\infty \| -\{ D_2\Phi(x^*) \}^{-1} \{ D_2\Phi(x) - D_2\Phi(x^*) \} \|^j
    \leq 2 .
\end{align*}
In conclusion, for any $x \in B(x^*, r)$,
\begin{align}
    \|\{ D_2\Phi(x) \}^{-1}\|
    &= \|\{ I + \{ D_2\Phi(x^*) \}^{-1} \{ D_2\Phi(x) - D_2\Phi(x^*) \} \}^{-1}
    \{ D_2\Phi(x^*) \}^{-1}\| \label{eq_proof_lemma_density_1} \\
    & \leq \| \{ I + \{ D_2\Phi(x^*) \}^{-1} \{ D_2\Phi(x) - D_2\Phi(x^*) \} \}^{-1} \|
    \| \{ D_2\Phi(x^*) \}^{-1} \| \notag \\
    & \leq 2\|\{ D_2\Phi(x^*) \}^{-1}\| \notag .
\end{align}

We next provide explicit constructions of the neighborhoods $W^*$, $V^*$, and $U^*$.
Denote the Lipschizt continuous constant $L_\Psi = \frac{2}{\delta}$.
Then
\begin{equation}
    L_\Psi L_\Phi = \frac{2 L_\Phi}{\delta} \geq 4 \label{eq_proof_lemma_density_L_L_lower_bound} .
\end{equation}
Let the set $W^* = B(x_1^*, r_w)$, where $r_w = \frac{r \delta}{16L_\Phi^2 L_\Psi}$.
For any $x_1 \in W^*$, let the function $\Phi_{x_1}(\cdot) = \Phi(x_1, \cdot)$ denote the restriction of $\Phi$, and let the corresponding reference point be $x^{**} = (x_1, x_2)$.
By Theorems 3.1 in \cite{phienQuantitativeResultsLipschitz2012}, denote
\begin{equation*}
    \delta_{x_1}
    = \frac{1}{2\|\{ D_2\Phi(x^{**}) \}^{-1}\|} .
\end{equation*}
Then from the definition of the constant $\delta_{x_1}$, we have the bounds
\begin{equation*}
    \delta_{x_1}
    \leq \frac{\| D_2\Phi(x^{**}) \|}{2}
    \leq \frac{\| D\Phi(x^{**}) \|}{2}
    \leq \frac{L_\Phi}{2} ,
\end{equation*}
and
\begin{equation}
    \label{eq_proof_lemma_density_delta_x}
    \delta_{x_1}
    \geq \frac{1}{4\|\{ D_2\Phi(x^*) \}^{-1}\|}
    = \frac{\delta}{2} ,
\end{equation}
where the last inequality follows from $x^{**} \in B(x^*, r)$ and \eqref{eq_proof_lemma_density_1}.
Let the constant $r_{x_1} = \min \left\{ L_{D\Phi}^{-1} \delta_{x_1}, \frac{r_x}{2} \right\}$.
Then
\begin{equation}
    \label{eq_proof_lemma_density_r_x}
    \frac{r}{2}
    \leq \min \left\{ L_{D\Phi}^{-1} \frac{\delta}{2}, \frac{r_x}{2} \right\}
    \leq r_{x_1}
    \leq \frac{r_x}{2} .
\end{equation}
For any $x \in B(x^{**}, r_{x_1})$, it holds that
\begin{equation*}
    \| D\Phi(x) - D\Phi(x^{**}) \|
    \leq L_{D\Phi} \| x - x^{**} \|
    < L_{D\Phi} r_{x_1}
    \leq \delta_{x_1} ,
\end{equation*}
and $B(x^{**}, r_{x_1})$ is a subset of $B(x^*, r_x)$ by \eqref{eq_proof_lemma_density_r_x} and the inequality
\begin{equation}
    \|x^{**} - x^*\|
    \leq r_w
    = \frac{r \delta}{16L_\Phi^2 L_\Psi}
    \leq \frac{r}{128}
    \leq \frac{r_x}{128} \label{eq_proof_lemma_density_r_w} .
\end{equation}
Thus, the definition of $r_{x_1}$ satisfies the requirement of Theorem 3.1 for the constant $r$ in \cite{phienQuantitativeResultsLipschitz2012}.

Therefore, by Theorem 3.1 in \cite{phienQuantitativeResultsLipschitz2012}, there exists a Lipschitz mapping $\Psi_{x_1}: V_{x_1} \to \mathbb{R}^d$ such that
\begin{enumerate}
    \item $\Psi_{x_1}(\Phi_{x_1}(u)) = u$ for every $u \in U_{x_1}$,
    \item $\Phi_{x_1}(\Psi_{x_1}(v)) = v$ for every $v \in V_{x_1}$,
\end{enumerate}
where $U_{x_1} = B(x_2^*, \frac{r_{x_1}\delta_{x_1}}{2L_\Phi})$, $V_{x_1} = B(\Phi_{x_1}(x_2^*), \frac{r_{x_1}\delta_{x_1}}{2})$, and the Lipschitz constant of the function $\Psi_{x_1}$, $L_{\Psi,x_1} = \frac{1}{\delta_{x_1}}$.
Note that $L_{\Psi,x_1} \leq L_\Psi$ by \eqref{eq_proof_lemma_density_delta_x}.
According to \eqref{eq_proof_lemma_density_r_w} and \eqref{eq_proof_lemma_density_r_x}, the inclusion $\{x_1\} \times U_{x_1} \subset \{x_1\} \times \Psi_{x_1}(V_{x_1}) \subset B(x^*,r_x)$ follows immediately, and we omit the details.

Let the set $U^* = B(x_2^*, r_u)$, where $r_u = \frac{r \delta}{8 L_\Phi}$.
Then by \eqref{eq_proof_lemma_density_delta_x} and \eqref{eq_proof_lemma_density_r_x}, we have $r_u \leq \frac{r_{x_1}\delta_{x_1}}{2L_\Phi}$ for any $x_1 \in W^*$.
Thus, for any $x_1 \in W^*$, it holds that $U^* \subset U_{x_1}$.
Therefore, by $\Psi_{x_1}(\Phi_{x_1}(u)) = u$ for every $u \in U_{x_1}$, it follows that $\Phi_{x_1}$ is injective on the domain $U^*$ for any $x_1 \in W^*$.

Moreover, denote the set $V^* = B(\Phi_{x_1^*}(x_2^*), r_v)$, where $r_v = \frac{r \delta}{16 L_\Phi L_\Psi}$.
Then by \eqref{eq_proof_lemma_density_L_L_lower_bound}, \eqref{eq_proof_lemma_density_delta_x} and \eqref{eq_proof_lemma_density_r_x}, we have
\begin{align*}
    \| y - \Phi_{x_1}(x_2^*) \|
    & \leq \| y - \Phi_{x_1^*}(x_2^*) \| + \| \Phi_{x_1^*}(x_2^*) - \Phi_{x_1}(x_2^*) \| \\
    & \leq r_v + L_\Phi \| x_1^*-x_1 \|
    \leq r_v + L_\Phi r_w
    \leq \frac{r \delta}{8 L_\Phi L_\Psi}
    \leq \frac{r_{x_1}\delta_{x_1}}{8} ,
\end{align*}
and it implies that $V^* \subset V_{x_1}$ for any $x_1 \in W^*$.
Thus, because the function $\Psi_{x_1}$ is Lipschitz continuous on the domain $V_{x_1}$, the function $\Psi_{x_1}$ is Lipschitz continuous on $V^*$.

In addition, for any $y \in V^*$ and any $x_1 \in W^*$,
\begin{align*}
    \| \Psi_{x_1}(y) - x_2^* \|
    &= \| \Psi_{x_1}(y) -  \Psi_{x_1}(\Phi_{x_1}(x_2^*)) \|
    \leq L_\Psi \| y - \Phi_{x_1}(x_2^*) \| \\
    & \leq L_\Psi \| y - \Phi_{x_1^*}(x_2^*) \|
    + L_\Psi \| \Phi_{x_1^*}(x_2^*) - \Phi_{x_1}(x_2^*) \| \\
    & \leq L_\Psi (r_v + L_\Phi r_w)
    \leq \frac{r \delta}{8 L_\Phi}
    = r_u .
\end{align*}
Thus, $\Psi_{x_1}(V^*) \subset U^*$, and it implies that $V^* \subset \Phi_{x_1}(U^*)$.
This concludes the first part of the proof.

Similarly to $\mu_{\mathrm{leb}, B(x^*,r_x)}$, let $\mu_{\mathrm{leb}, W^* \times U^*}$ and $\mu_{\mathrm{leb}, V^*}$ denote the Lebesgue measure restricted to $W^* \times U^*$ and $V^*$, respectively.
The remaining part of the proof is to establish the following two assertions:
\begin{enumerate}
    \item $\left. \left\{ \Phi^* \mu_{\mathrm{leb}, W^* \times U^*} \right\} \right|_{V^*}$, the measure $\Phi^* \mu_{\mathrm{leb}, W^* \times U^*}$ restricted to $V^*$, is absolutely continuous with respect to $\mu_{\mathrm{leb}}$,
    \item There exists some $\epsilon>0$ such that for any $B(x,r) \subset V^*$, $(\Phi^* \mu_{\mathrm{leb}, W^* \times U^*})(B(x,r)) \geq \epsilon \mu_{\mathrm{leb}, V^*}(B(x,r))$.
\end{enumerate}
According to Lebesgue-Radon-Nikodym Theorem in \cite{follandRealAnalysisModern1999}, there exists a Lebesgue integrable function $f: V^* \to [0, \infty)$ such that for any Lebesgue measurable set $A \subset V^*$, we have
\begin{equation*}
    \left( \left. \left\{ \Phi^* \mu_{\mathrm{leb}, W^* \times U^*} \right\} \right|_{V^*} \right)(A)
    = \int_A f(y)  \mu_{\mathrm{leb}}(\mathrm{d} y).
\end{equation*}
Thus, for any $B(x,r) \subset V^*$, $(\Phi^* \mu_{\mathrm{leb}, W^* \times U^*})(B(x,r)) \geq \epsilon \mu_{\mathrm{leb}}(B(x,r))$ is equivalent to
\begin{equation*}
    \int_{B(x,r)} f(y) \mu_{\mathrm{leb}}(\mathrm{d} y)
    \geq \epsilon \mu_{\mathrm{leb}}(B(x,r)) .
\end{equation*}
By Lebesgue Differentiation Theorem in \cite{follandRealAnalysisModern1999}, we have for almost every $x \in V^*$, it holds that
\begin{equation*}
    f(x)
    = \lim_{r \rightarrow 0^+} \frac{1}{\mu_{\mathrm{leb}}(B(x,r))}
    \int_{B(x,r)} f(y) \mu_{\mathrm{leb}}(\mathrm{d} y)
    \geq \epsilon .
\end{equation*}
In conclusion, $\Phi^* \mu_{\mathrm{leb}, B(x^*,r_x)} \geq \Phi^* \mu_{\mathrm{leb}, W^* \times U^*} \geq \left. \left\{ \Phi^* \mu_{\mathrm{leb}, W^* \times U^*} \right\} \right|_{V^*} \geq \epsilon \mu_{\mathrm{leb}, V^*}$.

We now proceed to prove the two assertions stated above.
First, because for any $x_1 \in W^*$, the restriction of $\Phi_{x_1}$ to $U^*$ is injective, $V^* \subset \Phi_{x_1}(U^*)$, and its inverse $\Psi_{x_1}$ is Lipschitz continuous on the domain $V^*$, we obtain that for any Lebesgue null set $A \subset V^*$,
\begin{align*}
    & \quad \mu_{\mathrm{leb}, W^* \times U^*}(\Phi^{-1}(A))
    = \mu_{\mathrm{leb}}(\Phi^{-1}(A) \cap (W^* \times U^*))
    = \int_{W^*} \mu_{\mathrm{leb}}(\mathrm{d} x_1) \mu_{\mathrm{leb}}(\Phi_{x_1}^{-1}(A) \cap U^*) \\
    &= \int_{W^*} \mu_{\mathrm{leb}}(\mathrm{d} x_1) \mu_{\mathrm{leb}}(\Psi_{x_1}(A) \cap U^*)
    \leq \int_{W^*} \mu_{\mathrm{leb}}(\mathrm{d} x_1) \mu_{\mathrm{leb}}(\Psi_{x_1}(A))
    = 0 .
\end{align*}
The last equality follows from the fact that $A$ is of Lebesgue measure zero and $\Psi_{x_1}$, being Lipschitz continuous, maps Lebesgue null sets to Lebesgue null sets.

Then, for any ball $B(y,r_y) \subset V^*$, we have $B \left( \Psi_{x_1}(y), \frac{r_y}{L_\Phi} \right) \subset U^*$, since
\begin{align*}
    & \quad \| \Psi_{x_1}(y) - x_2^* \| + \frac{r_y}{L_\Phi}
    = \| \Psi_{x_1}(y) - \Psi_{x_1}(\Phi_{x_1}(x_2^*)) \| + \frac{r_y}{L_\Phi}
    \leq L_\Psi \| y - \Phi_{x_1}(x_2^*) \| + \frac{r_y}{L_\Phi} \\
    & \leq L_\Psi \left[ \| y - \Phi_{x_1^*}(x_2^*) \| + \| \Phi_{x_1^*}(x_2^*) - \Phi_{x_1}(x_2^*) \| \right]
    + \frac{r_y}{L_\Phi} \\
    & \leq L_\Psi \left[ \| y - \Phi_{x_1^*}(x_2^*) \| + L_\Phi \| x_1 - x_1^* \| \right]
    + \frac{r_y}{L_\Phi} \\
    & \leq L_\Psi \left[ r_v - r_y + L_\Phi \| x_1 - x_1^* \| \right]
    + \frac{L_\Psi r_y}{4}
    \leq L_\Psi r_v + L_\Psi L_\Phi r_w
    \leq r_u ,
\end{align*}
where the last three inequalities follow from \eqref{eq_proof_lemma_density_L_L_lower_bound} and the definitions of $r_u$, $r_v$, and $r_w$.

Therefore, by the Lipschitz continuity of the function $\Phi$, we can conclude that
\begin{align*}
    (\Phi^* \mu_{\mathrm{leb}, W^* \times U^*})(B(y,r_y))
    &= \int \mathbb{I} \left( \Phi(x) \in B(y,r_y) \right) \mu_{\mathrm{leb}, W^* \times U^*}(\mathrm{d} x) \\
    &= \int_{W^*} \int_{U^*} \mathbb{I} \left( \Phi_{x_1}(x_2) \in B(y,r_y) \right) \mu_{\mathrm{leb}}(\mathrm{d} x_2) \mu_{\mathrm{leb}}(\mathrm{d} x_1) \\
    & \geq \int_{W^*} \int_{B \left( \Psi_{x_1}(y), \frac{r_y}{L_\Phi} \right) } \mathbb{I} \left( \Phi_{x_1}(x_2) \in B(y,r_y) \right) \mu_{\mathrm{leb}}(\mathrm{d} x_2) \mu_{\mathrm{leb}}(\mathrm{d} x_1) \\
    &= \int_{W^*} \int_{B \left( \Psi_{x_1}(y), \frac{r_y}{L_\Phi} \right) } \mu_{\mathrm{leb}}(\mathrm{d} x_2) \mu_{\mathrm{leb}}(\mathrm{d} x_1) \\
    &= \mu_{\mathrm{leb}} \left( B \left( \Psi_{x_1}(y), \frac{r_y}{L_\Phi} \right) \right) \mu_{\mathrm{leb}}(W^*) \\
    &= \frac{1}{L_\Phi^d}
    \mu_{\mathrm{leb}} \left( B(y,r_y) \right)
    \mu_{\mathrm{leb}}(W^*)
\end{align*}
Thus, letting $\epsilon = \frac{\mu_{\mathrm{leb}}(W^*)}{L_\Phi^d}$, we obtain that for any ball $B(y,r_y) \subset V^*$,
\begin{equation*}
    (\Phi^* \mu_{\mathrm{leb}, W^* \times U^*})(B(y,r_y)) \geq \epsilon \mu_{\mathrm{leb}, V^*}(B(y,r_y)) .
\end{equation*}

Let
\begin{equation*}
    C_{\mathrm{dl}}
    = \frac{\min \left\{ \left[ 2 L_{D\Phi} \| \{ D_2\Phi(x^*) \}^{-1} \| \right]^{-1} , r_x \right\}}
    {128 L_\Phi \| \{ D_2\Phi(x^*) \}^{-1} \|^2} .
\end{equation*}
Noting that
\begin{equation*}
    V^*
    = B \left( \Phi(x^*), \frac{r \delta}{16L_\Psi L_\Phi} \right)
    = B \left( \Phi(x^*), \frac{r \delta^2}{32L_\Phi} \right)
    = B \left( \Phi(x^*), C_{\mathrm{dl}} \right) ,
\end{equation*}
and
\begin{equation*}
    \epsilon
    = \frac{\mu_{\mathrm{leb}}(W^*)}{L_\Phi^d}
    = \frac{\mu_{\mathrm{leb}}(B(x_1^*, \frac{r \delta}{16 L_\Phi^2 L_\Psi}))}{L_\Phi^d}
    = \frac{\mu_{\mathrm{leb}}(B(x_1^*, \frac{C_{\mathrm{dl}}}{L_\Phi}))}{L_\Phi^d}
    = \frac{\mu_{\mathrm{leb}}(B^{(d)}(0, 1)) C_{\mathrm{dl}}^d}{L_\Phi^p} ,
\end{equation*}
the proof is complete.

\subsection{Proof of Lemma \ref{lemma_application_simultaneous_geometric_ergodicity}}
\label{subsec_proof_lemma_application_simultaneous_geometric_ergodicity}

We note that, although the proofs in this subsection are carried out for the function $V$, they extend naturally to $V^\alpha$ for any $\alpha \in (0,1)$, because Lemma \ref{lemma_drift_condition_inequality} remains valid with $V^\alpha$ in place of $V$.

\subsubsection{Positive Recurrence}

With Assumption \ref{assumption_x_sub_exponential_bound}, we can derive the following inequality from Lemma \ref{lemma_drift_condition_inequality}:
\begin{equation*}
    \mathbb{E}_\theta \left[ e^{\lambda_1 \|\Lambda_1\|} \mid \Lambda_0 = \Lambda \right] \leq \beta e^{\lambda_1 \|\Lambda_0\|} + b ,
\end{equation*}
for some positive numbers $\beta<1$, $b$ and $\lambda_1$.

For positive recurrence, we need Assumption \ref{assumption_for_simultaneous_small_set_condition} to ensure the validity of Lemma \ref{lemma_for_simultaneous_small_set_condition}.
Based on Lemma \ref{lemma_for_simultaneous_small_set_condition} and the inequality above, we can prove $\{\Lambda_n\}$ is positive recurrent by Theorem 11.3.4 in \cite{meynMarkovChainsStochastic2009}.
Theorem 10.4.9 in \cite{meynMarkovChainsStochastic2009} implies that the invariant probability measure $\pi_\theta$ for $P_\theta$ is unique and equivalent to the maximal irreducibility measure $\mu_{\mathrm{leb}}$.
This result implies that $P_\theta$ is $\pi_\theta$-irreducible.

\subsubsection{Simultaneous Geometric Ergodicity for $\theta$}

Lemma \ref{lemma_drift_condition_inequality} implies that there exist positive constants $\lambda_1$, $\beta<1$ and $b$ such that for any $\theta \in \Theta$,
\begin{equation*}
    P_\theta V \leq \beta V + b ,
\end{equation*}
where the Lyapunov function $V(\Lambda) = \exp(\lambda_1 \|\Lambda\|)$.

Since $V(\Lambda) \leq 2b(1-\beta)^{-1}$ when $\|\Lambda\| \leq \ln\left[2b(1-\beta)^{-1}\right]/\lambda_1$, denote the constant
\begin{equation*}
    c_\Lambda = \ln\left[2b(1-\beta)^{-1}\right]/\lambda_1 > 0 .
\end{equation*}
By Lemma \ref{lemma_for_simultaneous_small_set_condition} with $d = d_{2c_\Lambda}$, it holds that for any $\theta \in \Theta$,
\begin{equation}
    P_\theta^d(\Lambda, \cdot) \geq \delta_{R, P} \mu_{\mathrm{leb}}(\cdot \cap B(\Lambda, 2c_\Lambda))
    \label{eq_simultaneous_ergodicity_small_set_condition} ,
\end{equation}
and
\begin{equation}
    P_\theta^d V \leq \beta^d V + b\frac{1-\beta^d}{1-\beta}
    \label{eq_simultaneous_ergodicity_inequality_condition} .
\end{equation}

Combining the identity
\begin{equation*}
    \ln\left[2b\frac{1-\beta^d}{1-\beta}(1-\beta^d)^{-1}\right]/\lambda_1 = \ln\left[2b(1-\beta)^{-1}\right]/\lambda_1 = c_\Lambda ,
\end{equation*}
with the two inequalities (\ref{eq_simultaneous_ergodicity_small_set_condition}) and (\ref{eq_simultaneous_ergodicity_inequality_condition}), the condition in Lemma \ref{lemma_simultaneous_geometric_ergodicity} is satisfied with the $d$-step transition probability kernel $P_\theta^d$ for $\theta \in \Theta$ and $\nu$ defined as the uniform distribution on $B(0, c_\Lambda)$.
Therefore, we can conclude that there exists some constant $L_{\mathrm{small}}>1$, as defined in Lemma \ref{lemma_simultaneous_geometric_ergodicity}, such that for any $\theta \in \Theta$,
\begin{equation*}
    \| P_\theta^{d n}(\Lambda, \cdot) - \pi_\theta \|_V \leq L_{\mathrm{small}}(1-L_{\mathrm{small}}^{-1})^n V(\Lambda) .
\end{equation*}

Furthermore, we extend the inequality for $P_\theta^{d n}$ to arbitrary powers $P_\theta^n$.
For any nonnegative integer $l$, the following inequality holds:
\begin{align*}
\| P_\theta^{d n + l}(\Lambda, \cdot) - \pi_\theta \|_V
&= \| (P_\theta^{d n}(\Lambda, \cdot) - \pi_\theta ) P_\theta^{l} \|_V \\
& \leq \| P_\theta^{d n}(\Lambda, \cdot) - \pi_\theta \|_{\beta^l V + b\frac{1-\beta^l}{1-\beta}} \\
& \leq \| P_\theta^{d n}(\Lambda, \cdot) - \pi_\theta \|_{ \left( \beta^l + b\frac{1-\beta^l}{1-\beta} \right) V } \\
& \leq \left( \beta^l + b\frac{1-\beta^l}{1-\beta} \right) \| P_\theta^{d n}(\Lambda, \cdot) - \pi_\theta \|_V \\
& \leq \left( 1+\frac{b}{1-\beta} \right) \| P_\theta^{d n}(\Lambda, \cdot) - \pi_\theta \|_V ,
\end{align*}
where the first inequality can be derived by
\begin{equation*}
    P_\theta^l V
    \leq P_\theta^{l-1} (\beta V + b)
    \leq \dots
    \leq \beta^l V + b\frac{1-\beta^l}{1-\beta} .
\end{equation*}

Let $n=n^\prime d+l$ with $l \in \{0,\dots,d-1\}$. Using the inequality $n^\prime \geq \frac{n}{d}-1$, it holds that
\begin{equation*}
    \| P_\theta^{n}(\Lambda, \cdot) - \pi_\theta \|_V \leq \left( 1+\frac{b}{1-\beta} \right)L_{\mathrm{small}}(1-L_{\mathrm{small}}^{-1})^{\frac{n}{d}-1} V(\Lambda) .
\end{equation*}

Thus, for the constant $L$ in Lemma \ref{lemma_application_simultaneous_geometric_ergodicity}, we set $L = \max\{C,(1-\rho)^{-1}\}$, where $C = \left( 1+\frac{b}{1-\beta} \right)L_{\mathrm{small}}(1-L_{\mathrm{small}}^{-1})^{-1}$ and $\rho = (1-L_{\mathrm{small}}^{-1})^{\frac{1}{d}}$.

\subsubsection{Boundedness of $\pi_\theta V$}

For any $\theta \in \Theta$,
\begin{equation*}
    P_\theta V \leq \beta V + b
\end{equation*}
implies
\begin{equation*}
    \mathbb{E} V(\Lambda_n)
    \leq \beta \mathbb{E} V(\Lambda_{n-1}) + b
    \leq \dots
    \leq \beta^n \mathbb{E} V(\Lambda_0) + b\frac{1-\beta^n}{1-\beta}
\end{equation*}
and
\begin{equation*}
    \pi_\theta V
    = (\pi_\theta P_\theta) V
    = \pi_\theta (P_\theta V)
    \leq \beta \pi_\theta V + b .
\end{equation*}
Thus, $\mathbb{E} V(\Lambda_n) \leq \max \left\{ \frac{b}{1-\beta}, \mathbb{E} V(\Lambda_0) \right\}$ and $\pi_\theta V \leq \frac{b}{1-\beta}$.

\subsection{Proof of Lemma \ref{lemma_allocation_parameter_update}}

When the allocation parameter $\theta_{n+1}$ is updated according to \eqref{eq_allocation_parameter_update_mechanism_1}, suppose that the elements of $S$ can be ordered increasingly as $\{a_n\}_{n \in \mathbb{N}^*}$, and define $a_0 = 0$. Then
\begin{align*}
    \sum_{n=0}^{N-1} d(\theta_n,\theta_{n+1})
    &= \sum_{n \in \mathbb{N}^*, a_n \leq N} d(\theta_{a_n},\theta_{a_n-1})
    = \sum_{n \in \mathbb{N}^*, a_n \leq N} d(\eta_{a_n},\eta_{a_{n-1}}) \\
    &= \sum_{n \in \mathbb{N}^*, a_n \leq N} \sum_{i=a_{n-1}+1}^{a_n} d(\eta_i,\eta_{i-1})
    \leq \sum_{n=0}^{N-1} d(\eta_n,\eta_{n+1}) ,
\end{align*}
and the limit of $\{\theta_n\}$ coincides with that of $\{\eta_n\}$, denoted by $\eta^*$.

When the allocation parameter $\theta_{n+1}$ is updated according to \eqref{eq_allocation_parameter_update_mechanism_2} and the parameter space $\Theta$ is a convex subset of a Euclidean space, each $\theta_i$ lies on the line segment connecting $\theta_{i-1}$ and $\eta_i$.
Hence, for any $i$,
\begin{equation*}
    d(\theta_{i-1}, \eta_i) - d(\theta_{i-1}, \theta_i)
    = d(\eta_i, \theta_i) .
\end{equation*}
By the triangle inequality, it follows that
\begin{equation*}
    d(\theta_{i-1}, \theta_i) + d(\theta_i, \eta_{i+1})
    \leq d(\theta_{i-1}, \eta_i) + d(\eta_i, \eta_{i+1}) .
\end{equation*}
Therefore, setting $\eta_0 = \theta_0$, we have
\begin{align*}
    \sum_{n=0}^{N-1} d(\theta_n,\theta_{n+1})
    & \leq \sum_{n=0}^{N-2} d(\theta_n,\theta_{n+1}) + d(\theta_{N-1},\eta_N) \\
    & \leq \sum_{n=0}^{N-3} d(\theta_n,\theta_{n+1}) + d(\theta_{N-2}, \eta_{N-1})
    + d(\eta_{N-1},\eta_N) \\
    & \leq \dots
    \leq d(\theta_0, \eta_1)
    + \sum_{n=1}^{N-1} d(\eta_n,\eta_{n+1})
    = \sum_{n=0}^{N-1} d(\eta_n,\eta_{n+1}) .
\end{align*}

For any $\epsilon > 0$, denote $B_1 = B(\eta^*, \epsilon/3)$ and $B_2 = B(\eta^*, \epsilon)$.
Thus, for any $n$ such that $C_{\mathrm{clip}, n} < \epsilon/3$, if $\theta_{n-1} \in B_2^c$ and $\eta_n \in B_1$, then since $\| \eta_n - \theta_{n-1} \| > 2\epsilon/3 > 2C_{\mathrm{clip}, n}$, it follows that
\begin{align*}
    \| \eta^* - \theta_n \|^2 - \| \eta^* - \theta_{n-1} \|^2
    &= \| \eta^* - \theta_{n-1} - C (\eta_n - \theta_{n-1}) \|^2 - \| \eta^* - \theta_{n-1} \|^2 \\
    &= C^2 \| \eta_n - \theta_{n-1} \|^2 - 2C(\eta^* - \theta_{n-1})^T(\eta_n - \theta_{n-1}) \\
    &= C_{\mathrm{clip}, n}^2 - 2C \| \eta_n - \theta_{n-1} \|^2
    + 2C(\eta_n - \eta^*)^T(\eta_n - \theta_{n-1}) \\
    & \leq C_{\mathrm{clip}, n}^2 - 2 C_{\mathrm{clip}, n} \| \eta_n - \theta_{n-1} \| + 2 C_{\mathrm{clip}, n} \epsilon/3 \\
    & \leq C_{\mathrm{clip}, n}^2 - C_{\mathrm{clip}, n} \| \eta_n - \theta_{n-1} \| \\
    & \leq - C_{\mathrm{clip}, n} \| \eta_n - \theta_{n-1} \| / 2 < 0 ,
\end{align*}
where
\begin{equation*}
    C = \frac{\min \left\{ \|\eta_n - \theta_{n-1}\|, C_{\mathrm{clip}, n} \right\}}{\|\eta_n - \theta_{n-1}\|}
    = \frac{C_{\mathrm{clip}, n}}{\|\eta_n - \theta_{n-1}\|} .
\end{equation*}
Hence, by $\| \eta_n - \theta_{n-1} \| \geq \| \eta_n - \theta_n \|$, it holds that
\begin{align*}
    & \| \eta^* - \theta_n \| - \| \eta^* - \theta_{n-1} \|
    \leq \frac{\| \eta^* - \theta_n \|^2 - \| \eta^* - \theta_{n-1} \|^2}{\| \eta^* - \theta_n \| + \| \eta^* - \theta_{n-1} \|}
    \leq - \frac{C_{\mathrm{clip}, n} \| \eta_n - \theta_{n-1} \| / 2}{\| \eta^* - \theta_n \| + \| \eta^* - \theta_{n-1} \|} \\
    & \quad \leq - \frac{C_{\mathrm{clip}, n} \| \eta_n - \theta_{n-1} \| / 2}{2\epsilon/3 + 2\| \eta_n - \theta_{n-1} \|}
    \leq - \frac{C_{\mathrm{clip}, n} \| \eta_n - \theta_{n-1} \| / 2}{3\| \eta_n - \theta_{n-1} \|}
    \leq - \frac{C_{\mathrm{clip}, n}}{6} .
\end{align*}

Let $N_0 \in \mathbb{N}^*$ be such that $C_{\mathrm{clip}, n} < \epsilon/3$ for all $n \geq N_0$, and define $a_n = \| \eta^* - \theta_n \|$.
Then
\begin{equation*}
    a_n \leq
    \begin{cases}
        a_{n-1} - C_{\mathrm{clip}, n}/6 , & \text{if } a_{n-1} \geq \epsilon \text{ and } \eta_n \in B_1 , \\
        a_{n-1} + C_{\mathrm{clip}, n} , & \text{otherwise} .
    \end{cases}
\end{equation*}
Let $b_n = \max\{a_n, 4\epsilon/3\}$.
Because
\begin{equation*}
    a_n \leq
    \begin{cases}
        a_{n-1} - C_{\mathrm{clip}, n}/6 < a_{n-1} \leq b_{n-1} ,
        & \text{if } a_{n-1} \geq \epsilon \text{ and } \eta_n \in B_1 , \\
        a_{n-1} + C_{\mathrm{clip}, n} \leq 4\epsilon/3 \leq b_{n-1} ,
        & \text{if } a_{n-1} < \epsilon \text{ and } \eta_n \in B_1 , \\
    \end{cases}
\end{equation*}
it holds that $b_n \leq b_{n-1}$ when $\eta_n \in B_1$.
If $b_{n-1} \geq 3\epsilon/2$, it holds that $a_{n-1} - C_{\mathrm{clip}, n}/6 > 3\epsilon/2 - C_{\mathrm{clip}, n}/6 > 4\epsilon/3$, and then
\begin{equation*}
    b_n
    = \max\{a_n, 4\epsilon/3\}
    \leq \max\{a_{n-1} - C_{\mathrm{clip}, n}/6, 4\epsilon/3\}
    = a_{n-1} - C_{\mathrm{clip}, n}/6
    \leq b_{n-1} - C_{\mathrm{clip}, n}/6 .
\end{equation*}
Thus, we can obtain
\begin{equation*}
    b_n \leq
    \begin{cases}
        b_{n-1} - C_{\mathrm{clip}, n}/6 , & \text{if } b_{n-1} \geq 3\epsilon/2 \text{ and } \eta_n \in B_1 , \\
        b_{n-1} + C_{\mathrm{clip}, n} , & \text{if } \eta_n \notin B_1 , \\
        b_{n-1} , & \text{otherwise} . \\
    \end{cases}
\end{equation*}

Therefore,
\begin{align}
    \mathbb{E} [ b_n ]
    & \leq \mathbb{E} [ b_{n-1} ]
    + P(\eta_n \notin B_1) \cdot C_{\mathrm{clip}, n}
    - [P( b_{n-1} \geq 3\epsilon/2) - P(\eta_n \notin B_1)] \cdot C_{\mathrm{clip}, n}/6 \label{eq_proof_lemma_allocation_parameter_update} \\
    & \leq \mathbb{E} [ b_{n-1} ]
    + P(\eta_n \notin B_1) \cdot 7C_{\mathrm{clip}, n}/6
    - P( b_{n-1} \geq 3\epsilon/2) \cdot C_{\mathrm{clip}, n}/6 \notag .
\end{align}
By Assumption \ref{assumption_theta_compact}, $\diam(\Theta) < \infty$, so $b_n$ is bounded.
Let $C_b = \diam(\Theta) + 2\epsilon$.
Then $b_{n-1} \leq C_b$ almost surely, and
\begin{equation*}
    C_b P( b_{n-1} \geq 3\epsilon/2) + 3\epsilon/2 \cdot (1 - P( b_{n-1} \geq 3\epsilon/2))
    \geq \mathbb{E} \left[ b_{n-1} \right] ,
\end{equation*}
which implies
\begin{equation}
    P( b_{n-1} \geq 3\epsilon/2 )
    \geq \frac{\mathbb{E} \left[ b_{n-1} \right] - 3\epsilon/2}{C_b - 3\epsilon/2} \label{eq_proof_lemma_allocation_parameter_update_lower_bound} .
\end{equation}
Substituting \eqref{eq_proof_lemma_allocation_parameter_update_lower_bound} into \eqref{eq_proof_lemma_allocation_parameter_update} gives
\begin{equation}
    \mathbb{E} [ b_n ]
    \leq \mathbb{E} [ b_{n-1} ]
    + P(\eta_n \notin B_1) \cdot 7C_{\mathrm{clip}, n}/6
    - \frac{\mathbb{E} \left[ b_{n-1} \right] - 3\epsilon/2}{\diam(\Theta) + \epsilon/2} \cdot C_{\mathrm{clip}, n}/6 \label{eq_proof_lemma_allocation_parameter_update_final} .
\end{equation}

Denote
\begin{align*}
    B_n &= \mathbb{E} [ b_n ] , \\
    c_n &= \frac{C_{\mathrm{clip}, n}}{6\diam(\Theta)+3\epsilon} < 1 , \\
    D &= \frac{3\epsilon}{2} , \\
    p_n &= \left[ 7 \diam(\Theta) + 7\epsilon/2 \right] P(\eta_n \notin B_1) .
\end{align*}
Thus, for any $N \geq N_0$ and $n \geq N$, \eqref{eq_proof_lemma_allocation_parameter_update_final} can be rewritten as
\begin{equation*}
    B_n
    \leq \left( 1 - c_n \right) B_{n-1} + p_n c_n + D c_n
    \leq \left( 1 - c_n \right) B_{n-1} + \left\{ \sup_{m \geq N} p_m \right\} c_n + D c_n ,
\end{equation*}
which implies that
\begin{equation*}
    B_n - \left\{ \sup_{m \geq N} p_m \right\} - D
    \leq \left( 1 - c_n \right) \left[ B_{n-1} - \left\{ \sup_{m \geq N} p_m \right\} - D \right] .
\end{equation*}

Because $\sum_{n=1}^\infty c_n = +\infty$,
\begin{align*}
    B_{N+k} - \left\{ \sup_{m \geq N} p_m \right\} - D
    & \leq \prod_{i=0}^k (1 - c_{N+i}) \left[ B_{N-1} - \left\{ \sup_{m \geq N} p_m \right\} - D \right] \\
    & \leq \exp(- \sum_{i=0}^k c_{N+i}) \left[ B_{N-1} - \left\{ \sup_{m \geq N} p_m \right\} - D \right] \\
    & \rightarrow 0 ,
\end{align*}
as $k \rightarrow \infty$.
Because $p_n \rightarrow 0$ as $n \rightarrow \infty$,
\begin{equation*}
    \limsup_{k \rightarrow \infty} B_{N+k} \leq \left\{ \sup_{m \geq N} p_m \right\} + D
    \rightarrow D = \frac{3\epsilon}{2} ,
\end{equation*}
as $N \rightarrow \infty$.

Therefore,
\begin{equation*}
    \limsup_{n \rightarrow \infty}
    \mathbb{E} \left[ \| \eta^* - \theta_n\| \right]
    \leq \limsup_{n \rightarrow \infty}
    \mathbb{E} \left[ \max\{ \| \eta^* - \theta_n\|, 3\epsilon/2 \} \right]
    \leq 3\epsilon/2 .
\end{equation*}
Due to arbitrariness of $\epsilon > 0$, we can conclude that
\begin{equation*}
    \mathbb{E} \left[ \| \eta^* - \theta_n\| \right] \rightarrow 0 ,
\end{equation*}
and
\begin{equation*}
    \theta_n \xrightarrow{\mathbb{P}} \eta^* .
\end{equation*}

\section{Lemmas for the Estimation}
\label{sec_lemmas_estimation}

\begin{lemma}
    \label{lemma_estimation_convergence}

    Suppose that the allocation function $g_\theta$ satisfies $\pi_\theta \left[ g_\theta(\cdot, x) \right] = \rho_\theta(x)$ for $\Gamma$-a.e. $x$.
    Under Assumptions \ref{assumption_theta_compact}, \ref{assumption_lipschitz_continuity_functions_kernels}, \ref{assumption_estimation_convergence} and \ref{assumption_simultaneous_properties}, if the step sizes of the allocation parameter sequence $\{\theta_n\}$ satisfy Assumption \ref{assumption_theta_difference_average_convergence}, then $\eta_n \rightarrow \eta^*$ in probability.

    In addition, under Assumption \ref{assumption_theta_difference_average_convergence_large_deviation} on the step sizes of the allocation parameter sequence $\{\theta_n\}$, for any $\delta>0$, there exists a constant $c_\delta > 0$ such that
    \begin{equation*}
        P \left( d(\eta_n, \eta^*)
        > \delta \right) < c_\delta n^{-q} ,
    \end{equation*}
    with $q \in (0,1]$ defined in Assumption \ref{assumption_theta_difference_average_convergence_large_deviation}.
\end{lemma}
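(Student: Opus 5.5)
This is a consistency argument for a weighted M-estimator; I would follow Wald's approach (van der Vaart, Theorem 5.14) on the compact set $\Theta$. Write the sample criterion as
\[
\mathbb{M}_n(\eta)=\frac1n\sum_{i=1}^n \frac{\rho^{\mathrm{ref}}(T_i\mid X_i)}{\rho_{\theta_{i-1}}(T_i\mid X_i)}\, m_\eta(X_i,Y_i(T_i),T_i).
\]
Write the population criterion as $M(\eta)$, the expectation in \eqref{eq_population_maximizing_point}. The key observation is that the IPW weight uses $\rho_{\theta_{i-1}}$, not the allocation probability $g_{\theta_{i-1}}$. So for a fixed function $f$ of $(X,Y,T)$,
\[
\mathbb{E}\!\left[\frac{\rho^{\mathrm{ref}}(T_i\mid X_i)}{\rho_{\theta_{i-1}}(T_i\mid X_i)} f \,\Big|\, \mathcal{F}_{i-1}\right]
=\mathbb{E}_X\!\left[\sum_{t}\frac{\rho^{\mathrm{ref}}(t\mid X)\,g_{\theta_{i-1}}(t\mid\Lambda_{i-1},X)}{\rho_{\theta_{i-1}}(t\mid X)}\,\bar f_t(X)\right],
\]
where $\bar f_t(x)=\mathbb{E}[f(X,Y(t),t)\mid X=x]$.

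\textbf{Step 1: a pointwise WLLN for fixed $f$.} This conditional expectation is a function $h_{\theta_{i-1}}(\Lambda_{i-1})$. I would apply the generic WLLN used in the proof of Theorem \ref{theorem_additional_covariate_LLN}, namely Lemma \ref{lemma_convergence_average_enter}, together with the martingale LLN (Theorem 2.18 of Hall--Heyde), using a finite second moment. This gives
\[
\frac1n\sum_{i}\Big(\text{summand}_i-\pi_{\theta_{i-1}}h_{\theta_{i-1}}\Big)\to 0 \quad\text{in probability}.
\]
The Hölder-continuity hypotheses that lemma needs are obtained from Lemma \ref{lemma_lipschitz_continuity_conditional_expectation} and the relaxed version of Assumption \ref{assumption_lipschitz_continuity_conditional_expectation}, with $Z^{(c)}$ built from $\bar f_1,\bar f_0$. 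The hypothesis $\pi_\theta[g_\theta(\cdot,x)]=\rho_\theta(x)$ then makes the ratio $g/\rho$ integrate to one under $\pi_\theta$, so that $\pi_\theta h_\theta=\mathbb{E}f$ under $\rho^{\mathrm{ref}}$, independently of $\theta$. Hence $\frac1n\sum_i \text{summand}_i\to \mathbb{E}[\rho^{\mathrm{ref}} f_1+(1-\rho^{\mathrm{ref}})f_0]$ in probability, with no need for $\theta_n$ to converge.

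\textbf{Step 2: Wald's argument.} Apply Step 1 with $f=\sup_{\eta\in U}m_\eta$ for small balls $U$. Lower semicontinuity and the local square-integrability condition of Assumption \ref{assumption_estimation_convergence} make these functions admissible. Monotone convergence then shows that, for every $\eta\neq\eta^*$, there is a ball $U_\eta$ with $\mathbb{E}\sup_{U_\eta}m<M(\eta^*)$.

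Cover the compact set $\Theta\setminus B(\eta^*,\delta)$ by finitely many such balls. The uniqueness in Assumption \ref{assumption_population_maximizing_point} gives
\[
\sup_{d(\eta,\eta^*)\ge\delta}\mathbb{M}_n(\eta)<M(\eta^*)-c
\]
with probability tending to one. By Step 1 applied to $m_{\eta^*}$, $\mathbb{M}_n(\eta^*)\to M(\eta^*)$. Since $\eta_n$ maximizes $\mathbb{M}_n$, it follows that $P(d(\eta_n,\eta^*)>\delta)\to0$.

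\textbf{Step 3: the rate.} For the second claim, I repeat the argument but replace convergence in probability by tail bounds of order $n^{-q}$. The martingale parts satisfy Chebyshev bounds of order $O(n^{-1})$ because the second moments are finite. The Poisson-equation remainder is handled as in Subsection \ref{subsec_proof_strategy_negligibility_remaining_term}: its size is bounded by
\[
\frac1n\sum_i d(\theta_i,\theta_{i+1})^\alpha+\frac{V(\Lambda_n)}{n},
\]
and Assumption \ref{assumption_theta_difference_average_convergence_large_deviation} controls the probability that this exceeds a threshold at rate $c\,n^{-q}$. The $V(\Lambda_n)/n$ term is handled by Markov's inequality using $\sup_n\mathbb{E}V(\Lambda_n)<\infty$.

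\textbf{Main obstacle.} I expect the hardest part to be this remainder in Step 3. It requires Hölder continuity in $\theta$ of $h_\theta$ built from the unbounded $\sup_U m$, controlled with moments weighted by $V^\kappa$. The sub-exponential $\phi$ combined with Hölder's inequality, as in the proof of Lemma \ref{lemma_lipschitz_continuity_conditional_expectation}, should supply this control.
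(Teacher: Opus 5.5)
Your Steps 1--2 are sound and match the paper's proof. Step 1 re-derives Lemma \ref{lemma_allocation_convergence}: a martingale SLLN plus Lemma \ref{lemma_convergence_average_enter}, with $\pi_\theta h_\theta$ free of $\theta$ because $\pi_\theta[g_\theta(\cdot,x)]=\rho_\theta(x)$. Step 2 is the same Wald covering argument.

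Your diagnosis of the obstacle is off, though. Here $h_\theta$ is harmless: it is uniformly bounded by $(\mathbb{E}|f(X,Y(1),1)|+\mathbb{E}|f(X,Y(0),0)|)/\iota_\rho$. It is also Lipschitz in $(\theta,\Lambda)$ by a single Cauchy--Schwarz step, using $\mathbb{E}f^2<\infty$ and Assumption \ref{assumption_lipschitz_continuity_functions_kernels}. No sub-exponential moment of $\phi$ enters.

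The gap is in Step 3: your bound for the Poisson remainder omits the Lyapunov weights, and they cannot be dropped. Corollary \ref{corollary_V_Poisson_solution_robust_lipschitz_continuous_bounded} controls $\hat h_\theta$ only up to factors $V^{\kappa}(\Lambda)$. This is genuine: $g_\theta-\rho_\theta$ is linear in $\Lambda/\max\{\|\Lambda\|,C_\Lambda\}$, so the argument of Lemma \ref{lemma_expression_variance} shows $\hat h_\theta$ is affine in $\Lambda$ with a generally $\theta$-dependent slope. The remainder is therefore of the form
\[
\frac1n\sum_{i<n} V^{\kappa}(\Lambda_{i+1})\Big[\tilde L\, d(\theta_i,\theta_{i+1})^{\tilde\alpha}\,\mathbb{I}\big(d(\theta_i,\theta_{i+1})<\tilde\Delta\big)+2C\,\mathbb{I}\big(d(\theta_i,\theta_{i+1})\ge\tilde\Delta\big)\Big].
\]
To invoke Assumption \ref{assumption_theta_difference_average_convergence_large_deviation} you must separate the two factors by Hölder's inequality. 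You then need $P\bigl(\frac1n\sum_{i\le n}V^{1/2}(\Lambda_i)>x\bigr)=O(n^{-q})$ for some fixed $x$.

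Markov's inequality with $\sup_n\mathbb{E}V(\Lambda_n)<\infty$ gives only a bound that does not decay in $n$. You also cannot let the threshold $x$ grow with $n$, because the constants $c_\delta$ in Assumption \ref{assumption_theta_difference_average_convergence_large_deviation} have no uniformity in $\delta$.

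The missing ingredient is a large-deviation bound for the running average of $V^{1/2}(\Lambda_n)$, which is the paper's Lemma \ref{lemma_V_large_deviation}. There, the drift inequality $P_\theta V^{1/2}\le\beta_{1/2}V^{1/2}+b_{1/2}$ dominates $\sum_{n\le N}V^{1/2}(\Lambda_n)$ by $O(N)$ plus a martingale. The Lesigne--Voln\'y inequality (Lemma \ref{lemma_martingale_large_deviation}) then gives that martingale an $O(N^{-1})$ tail.

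With this added, the rest of your Step 3 goes through as you describe: Chebyshev for the two martingales and Markov for the boundary term.
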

\begin{proof}
    Let $U \subset \Theta$ be either an open subset or a singleton,
    \begin{equation}
        M_U = \mathbb{E}_{(X,Y(1),Y(0)) \sim \Gamma_{X,\bm{Y}}} \left[ \rho^{\mathrm{ref}}(X) m_U(X, Y(1), 1)
        + [1 - \rho^{\mathrm{ref}}(X)] m_U(X, Y(0), 0) \right]
        \label{eq_proof_lemma_estimation_convergence_sup_expectation} ,
    \end{equation}
    and
    \begin{equation*}
        M_{U,n}
        = \sum_{i=1}^n
        \left[ \frac{\rho^{\mathrm{ref}}(T_i \mid X_i)}{\rho_{\theta_{i-1}}(T_i \mid X_i)} m_U(X_i, Y_i(T_i), T_i) \right]
        \label{eq_proof_lemma_estimation_convergence_sup_empirical} ,
    \end{equation*}
    where
    \begin{equation*}
        m_U(x, y, t) = \sup_{\eta \in U} m_\eta(x, y, t) .
    \end{equation*}
    Denote $M_\eta = M_{\{\eta\}}$ and $M_{\eta,n} = M_{\{\eta\},n}$, where $\{\eta\}$ is the singleton set containing $\eta$, and $M_{\{\eta\}}$ and $M_{\{\eta\},n}$ are defined in \eqref{eq_proof_lemma_estimation_convergence_sup_expectation} and \eqref{eq_proof_lemma_estimation_convergence_sup_empirical}, respectively.

    By applying Lemma \ref{lemma_allocation_convergence}, we have
    \begin{equation*}
        \frac{M_{U,n}}{n} \xrightarrow{\mathbb{P}} M_U
        \quad \text{and} \quad
        \frac{M_{\eta^*,n}}{n} \xrightarrow{\mathbb{P}} M_{\eta^*} .
    \end{equation*}

    Similar to the argument in the proof of Theorem 5.14 in \cite{vaartAsymptoticStatistics2007}, fix some $\eta \in \Theta$ and let $U_l \downarrow \eta$ be a decreasing sequence of open balls around $\eta$ of diameter converging to zero.
    The sequence $m_{U_l}$ is decreasing and greater than $m_\eta$ for every $l$.
    Lower semicontinuity of $m_\eta(x,y,t)$ yields that $m_{U_l} \downarrow m_\eta$ almost surely.
    We can apply the monotone convergence theorem and obtain that
    \begin{align*}
        & M_{U_l} = \mathbb{E}_{(X,Y(1),Y(0)) \sim \Gamma_{X,\bm{Y}}} \left[ \rho^{\mathrm{ref}}(X) m_{U_l}(X, Y(1), 1) + [1 - \rho^{\mathrm{ref}}(X)] m_{U_l}(X, Y(0), 0) \right] \\
        & \quad \rightarrow M_\eta = \mathbb{E}_{(X,Y(1),Y(0)) \sim \Gamma_{X,\bm{Y}}} \left[ \rho^{\mathrm{ref}}(X) m_\eta(X, Y(1), 1) + [1 - \rho^{\mathrm{ref}}(X)] m_\eta(X, Y(0), 0) \right] .
    \end{align*}
    
    For any $\eta \neq \eta^*$, we have $M_\eta < M_{\eta^*}$ since $\eta^*$ is the unique maximizer of $M_\eta$.
    Combine this with the preceding paragraph to see that for every $\eta \neq \eta^*$, there exists an open ball $U_\eta$ around $\eta$ with $M_{U_\eta}<M_{\eta^*}$.
    The set $B=\left\{\eta \in \Theta \mid d\left(\eta, \eta^*\right) \geq \delta\right\}$ is compact and is covered by the balls $\left\{U_\eta: \eta \in B\right\}$, where $\delta>0$. Let $U_{\eta_1}, \ldots, U_{\eta_q}$ be a finite subcover.

    By $\frac{M_{U_{\eta_j},n}}{n} \xrightarrow{\mathbb{P}} M_{U_{\eta_j}}$, $\frac{M_{\eta^*,n}}{n} \xrightarrow{\mathbb{P}} M_{\eta^*}$, $M_{U_{\eta_j}} < M_{\eta^*}$ and the definition of
    \begin{equation*}
        \eta_n \in \argmax_{\eta \in \Theta}
        \sum_{i=1}^n
        \left[ \frac{\rho^{\mathrm{ref}}(T_i \mid X_i)}{\rho_{\theta_{i-1}}(T_i \mid X_i)} m_\eta(X_i, Y_i(T_i), T_i)
        \right] ,
    \end{equation*}
    we can conclude that
    \begin{align*}
        P\left( \eta_n \in B \right)
        & \leq P\left( \sup_{\eta \in B} M_{\eta,n} \geq M_{\eta^*,n} \right)
        \leq \sum_{j=1}^q P\left( M_{U_{\eta_j},n} \geq M_{\eta^*,n} \right) \\
        &= \sum_{j=1}^q P\left( M_{U_{\eta_j}} + o_P(1) \geq M_{\eta^*} + o_P(1) \right)
        \rightarrow 0 .
    \end{align*}

    By the arbitrariness of $\delta$ in the definition of $B$, we have $\eta_n \xrightarrow{\mathbb{P}} \eta^*$.

    Under Assumption \ref{assumption_theta_difference_average_convergence_large_deviation}, for any $\delta>0$, by the second part of Lemma \ref{lemma_allocation_convergence}, we have
    \begin{align*}
        & P \left( d(\eta_n, \eta^*) > \delta \right)
        = P\left( \eta_n \in B \right)
        \leq P\left( \sup_{\eta \in B} M_{\eta,n} \geq M_{\eta^*,n} \right)
        \leq \sum_{j=1}^q P\left( M_{U_{\eta_j},n} \geq M_{\eta^*,n} \right) \\
        & \quad \leq \sum_{j=1}^q \left[ P\left( \left| M_{U_{\eta_j},n} - M_{U_{\eta_j}} \right|
        > \left| M_{\eta^*} - M_{U_{\eta_j}} \right|/2 \right) \right. \\
        & \quad\quad \left. + P\left( \left| M_{\eta^*,n} - M_{\eta^*} \right|
        > \left| M_{\eta^*} - M_{U_{\eta_j}} \right|/2 \right)
        \right]
        \\
        & \quad = O(n^{-q}) .
    \end{align*}
    with $q \in (0,1]$ defined in Assumption \ref{assumption_theta_difference_average_convergence_large_deviation}.
\end{proof}

\begin{lemma}
    \label{lemma_estimation_difference_bounded}
    
    Suppose that the allocation function $g_\theta$ satisfies $\pi_\theta \left[ g_\theta(\cdot, x) \right] = \rho_\theta(x)$ for $\Gamma$-a.e. $x$.
    Under Assumptions \ref{assumption_theta_compact}, \ref{assumption_lipschitz_continuity_functions_kernels}, \ref{assumption_estimation_asymptotic} and \ref{assumption_simultaneous_properties}, if the step sizes of the allocation parameter sequence $\{\theta_n\}$ satisfy Assumption \ref{assumption_theta_difference_average_convergence}, then
    \begin{equation*}
        \mathbb{E} \left[ \| \eta_n - \eta_{n-1} \| \right] \rightarrow 0 .
    \end{equation*}

    In addition, under Assumption \ref{assumption_theta_difference_average_convergence_large_deviation} on the step sizes of the allocation parameter sequence $\{\theta_n\}$, there exists a constant $c > 0$ such that
    \begin{equation*}
        \mathbb{E} \left[ \| \eta_n - \eta_{n-1} \| \right] \leq c n^{-q} ,
    \end{equation*}
    with $q \in (0,1]$ defined in Assumption \ref{assumption_theta_difference_average_convergence_large_deviation}.
\end{lemma}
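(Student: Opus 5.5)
The plan is to split $\mathbb{E}\|\eta_n-\eta_{n-1}\|$ over a good event $G_n$ and its complement. On $G_n$, first-order optimality conditions give a deterministic $O(1/n)$ bound on the increment. On $G_n^c$, the crude bound $\|\eta_n-\eta_{n-1}\|\le \diam(\Theta)$ from Assumption \ref{assumption_theta_compact} is multiplied by $P(G_n^c)$, which we control through Lemma \ref{lemma_estimation_convergence} and the law of large numbers in Lemma \ref{lemma_allocation_convergence}. Write
\begin{equation*}
w_i=\rho^{\mathrm{ref}}(T_i\mid X_i)/\rho_{\theta_{i-1}}(T_i\mid X_i)\le \iota_\rho^{-1},\qquad \dot M_n(\eta)=\sum_{i=1}^n w_i\dot m_\eta(X_i,Y_i(T_i),T_i),
\end{equation*}
and let $\ddot M_n(\eta)$ be the corresponding sum of Hessians.

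\textbf{Step 1: the good event.} Since $\eta^*$ is interior (Assumption \ref{assumption_estimation_asymptotic}), fix a small radius $r$ with $B(\eta^*,r)\subset\Theta$. Let $G_n$ be the event on which all of the following hold:
\begin{itemize}
\item $\eta_n,\eta_{n-1}\in B(\eta^*,r)$;
\item $\|\ddot M_{n-1}(\eta^*)/(n-1)-\ddot M_{\eta^*}\|\le \sigma_{\min}(\ddot M_{\eta^*})/4$;
\item $\frac{1}{n-1}\sum_{i<n} w_i s(X_i,Y_i(T_i),T_i)\le 2\bar s$, where $\bar s$ is the $\rho^{\mathrm{ref}}$-weighted population mean of $s$.
\end{itemize}
The first condition fails with probability $O(n^{-q})$ by the second part of Lemma \ref{lemma_estimation_convergence}, and with probability $o(1)$ under Assumption \ref{assumption_theta_difference_average_convergence} alone. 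The other two conditions are averages of the form handled by Lemma \ref{lemma_allocation_convergence}, using $\pi_\theta[g_\theta(\cdot,x)]=\rho_\theta(x)$ to identify the limits. The limits are $\ddot M_{\eta^*}$ and $\bar s$, which are finite by the square-integrability in Assumption \ref{assumption_estimation_asymptotic}. By the rate version of that lemma, each of these conditions fails with probability $O(n^{-q})$, or $o(1)$ without the rate. Choosing $r\le \sigma_{\min}(\ddot M_{\eta^*})/(8\bar s)$, the Lipschitz bound through $s$ then gives $\sigma_{\min}(\ddot M_{n-1}(\eta)/(n-1))\ge \sigma_{\min}(\ddot M_{\eta^*})/2$ uniformly for $\eta\in B(\eta^*,r)$ on $G_n$.

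\textbf{Step 2: the bound on $G_n$.} Interior maximizers satisfy $\dot M_n(\eta_n)=0$ and $\dot M_{n-1}(\eta_{n-1})=0$, so
\begin{equation*}
\dot M_{n-1}(\eta_n)-\dot M_{n-1}(\eta_{n-1})=-w_n\dot m_{\eta_n}(X_n,Y_n(T_n),T_n).
\end{equation*}
Apply the integral mean-value form on the segment joining $\eta_{n-1}$ and $\eta_n$; the segment stays in the convex ball $B(\eta^*,r)$. Inverting the averaged Hessian then gives
\begin{equation*}
\|\eta_n-\eta_{n-1}\|\le \frac{2}{(n-1)\sigma_{\min}(\ddot M_{\eta^*})}\,\iota_\rho^{-1}E_n .
\end{equation*}
Here $E_n$ is the $\eta$-free envelope
\begin{equation*}
E_n=\|\dot m_{\eta^*}\|+r\|\ddot m_{\eta^*}\|+r^2 s,
\end{equation*}
evaluated at $(X_n,Y_n(T_n),T_n)$ and obtained by a second-order Taylor bound. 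We bound $E_n\le \sum_t E^{(t)}_n$, where $E^{(t)}_n$ is the envelope evaluated at $(X_n,Y_n(t),t)$. Each $E^{(t)}_n$ is i.i.d. across $n$ and has finite mean by Assumption \ref{assumption_estimation_asymptotic}. Using an envelope that does not depend on $\eta_n$ avoids the dependence between $\eta_n$ and unit $n$. Hence $\mathbb{E}[\|\eta_n-\eta_{n-1}\|\mathbb{I}(G_n)]=O(1/n)\le O(n^{-q})$.

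\textbf{Step 3: combining.} We obtain
\begin{equation*}
\mathbb{E}\|\eta_n-\eta_{n-1}\|\le \diam(\Theta)\,P(G_n^c)+O(1/n),
\end{equation*}
which is $o(1)$ in general and $O(n^{-q})$ under Assumption \ref{assumption_theta_difference_average_convergence_large_deviation}.

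The main obstacle is Step 1. The Hessian and the $s$-average must satisfy Lemma \ref{lemma_allocation_convergence} with the same polynomial deviation rate $n^{-q}$, even though they are weighted by the history-dependent $w_i$. I would first verify that these functions meet the integrability hypotheses of that lemma, through Lemma \ref{lemma_lipschitz_continuity_conditional_expectation}. Failing that, I would fall back on Markov-type bounds that give the rate $n^{-q}$ directly.
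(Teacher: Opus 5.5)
Your proposal is correct and follows the paper's proof essentially step for step. It uses the same good event (consistency of $\eta_n,\eta_{n-1}$ via Lemma \ref{lemma_estimation_convergence}, concentration of the weighted Hessian and $s$-averages via Lemma \ref{lemma_allocation_convergence}, and a radius tied to $\sigma_{\min}(\ddot M_{\eta^*})/\bar s$), the same first-order-condition identity giving an $O(1/n)$ bound on that event, and the crude $\diam(\Theta)\,P(G_n^c)$ bound off it.

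Two remarks. Your $\eta$-free envelope $E_n$ is more careful than the paper, which bounds $\mathbb{E}\|\dot m_{\eta_n}(X_n,Y_n(T_n),T_n)\|$ by $\sup_{\eta}\mathbb{E}\|\dot m_{\eta}\|$ even though $\eta_n$ depends on unit $n$. The obstacle you anticipate does not arise: Lemma \ref{lemma_allocation_convergence} only needs square-integrability, and items 3--4 of Assumption \ref{assumption_estimation_asymptotic} supply this directly for $\ddot m_{\eta^*}$ (componentwise) and for $s$.
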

\begin{proof}
    Let $\sigma_{-1}(\ddot{M}_{\eta^*})$ denote the smallest singular value of $\ddot{M}_{\eta^*}$.
    Define
    \begin{equation*}
        M_{s,n}
        = \sum_{i=1}^n
        \left[ \frac{\rho^{\mathrm{ref}}(T_i \mid X_i)}{\rho_{\theta_{i-1}}(T_i \mid X_i)} s(X_i, Y_i(T_i), T_i) \right]
    \end{equation*}
    and
    \begin{equation*}
        M_s = \mathbb{E}_{(X,Y(1),Y(0)) \sim \Gamma_{X,\bm{Y}}} \left[ \rho^{\mathrm{ref}}(X) s(X, Y(1), 1) + [1 - \rho^{\mathrm{ref}}(X)] s(X, Y(0), 0) \right] .
    \end{equation*}
    Similarly, define $M_{\ddot{m}_{\eta^*},n}$ and $M_{\ddot{m}_{\eta^*}}$ in an analogous manner.

    For any $n \in \mathbb{N}^*$, the proof is divided into two cases. First, consider the case where
    \begin{enumerate}
        \item $\left\| \eta_n - \eta^* \right\| < \delta_{\mathrm{max}}$,
        \item $\left\| \eta_{n-1} - \eta^* \right\| < \delta_{\mathrm{max}}$,
        \item $\left\| M_{\ddot{m}_{\eta^*},n}/n - M_{\ddot{m}_{\eta^*}} \right\| \leq \sigma_{-1}(\ddot{M}_{\eta^*}) / 4$,
        \item $|M_{s,n}/n - M_s| \leq M_s$,
    \end{enumerate}
    where $\delta_{\mathrm{max}} \in \left. \left( 0, \frac{\sigma_{-1}(\ddot{M}_{\eta^*})}{8M_s} \right. \right]$ is chosen such that $B(\eta^*, \delta_{\mathrm{max}}) \subset \mathring{\Theta}$.
    Denote this event by $A_{\mathrm{good}, n}$.

    Because for any $\eta \in B(\eta^*, \delta_{\mathrm{max}}) \subset \mathring{\Theta}$,
    \begin{equation*}
        \dot{m}_\eta(x, y, t)
        = \dot{m}_{\eta^*}(x, y, t)
        + \int_{\eta^*}^\eta \ddot{m}_\eta(x, y, t) d\eta ,
    \end{equation*}
    it follows that
    \begin{align*}
        & \quad \| \dot{m}_\eta(x, y, t) - \dot{m}_{\eta^*}(x, y, t) \| \\
        & \leq \left\| \int_0^1 \left[ \ddot{m}_{\eta^* + a(\eta - \eta^*)}(x, y, t) \right]
        [ \eta - \eta^* ] \mathrm{d} a \right\| \\
        & \leq \int_0^1 \left\| \ddot{m}_{\eta^* + a(\eta - \eta^*)}(x, y, t) \right\|
        \| \eta - \eta^* \| \mathrm{d} a \\
        & \leq \int_0^1 \left\| \ddot{m}_{\eta^* + a(\eta - \eta^*)}(x, y, t)
        - \ddot{m}_{\eta^*}(x, y, t) \right\|
        \| \eta - \eta^* \| \mathrm{d} a
        + \int_0^1 \left\| \ddot{m}_{\eta^*}(x, y, t) \right\|
        \| \eta - \eta^* \| \mathrm{d} a \\
        & \leq s(x, y, t) \frac{\| \eta - \eta^* \|^2}{2}
        + \left\| \ddot{m}_{\eta^*}(x, y, t) \right\| \| \eta - \eta^* \| .
    \end{align*}

    Consequently,
    \begin{align*}
        & \quad \mathbb{E}_{(X,Y(t)) \sim \Gamma_{X,Y(t)}}
        \left\| \dot{m}_{\eta}(X, Y(t), t) \right\| \\
        & \leq \mathbb{E}_{(X,Y(t)) \sim \Gamma_{X,Y(t)}}
        \left[ \left\| \dot{m}_{\eta^*}(X, Y(t), t) \right\|
        + s(X, Y(t), t) \frac{\| \eta - \eta^* \|^2}{2}
        + \left\| \ddot{m}_{\eta^*}(X, Y(t), t) \right\| \| \eta - \eta^* \| \right] ,
    \end{align*}
    which is uniformly bounded for $\eta \in B(\eta^*, \delta_{\mathrm{max}})$.
    Therefore,
    \begin{align}
        & \quad \sup_{\eta \in B(\eta^*, \delta_{\mathrm{max}})}
        \mathbb{E} \left\| \dot{m}_\eta(X_n, Y_n(T_n), T_n) \right\| \label{eq_proof_lemma_estimation_difference_bounded_1} \\
        & \leq \sup_{\eta \in B(\eta^*, \delta_{\mathrm{max}})} \mathbb{E}_{(X,Y(1)) \sim \Gamma_{X,Y(1)}}
        \left\| \dot{m}_{\eta}(X, Y(1), 1) \right\|
        + \sup_{\eta \in B(\eta^*, \delta_{\mathrm{max}})} \mathbb{E}_{(X,Y(0)) \sim \Gamma_{X,Y(0)}}
        \left\| \dot{m}_{\eta}(X, Y(0), 0) \right\|
        < \infty \notag ,
    \end{align}
    and
    \begin{equation}
        \max\{ \| \eta_n - \eta^* \|, \| \eta_{n-1} - \eta^* \| \}
        \frac{M_{s,n}}{n}
        \leq 2 \delta_{\mathrm{max}} M_s
        \leq \sigma_{-1}(\ddot{M}_{\eta^*}) / 4 \label{eq_proof_lemma_estimation_difference_bounded_2} .
    \end{equation}
    These inequalities play a crucial role in establishing the bound in \eqref{eq_proof_lemma_estimation_difference_bounded_7}.

    Since $\eta^*$, $\eta_n$, and $\eta_{n-1}$ are interior points of $\Theta$, it holds that
    \begin{align}
        0
        &= \dot{M}_{\eta^*}
        = \mathbb{E}_{(X,Y(1),Y(0)) \sim \Gamma_{X,\bm{Y}}} \left[ \rho^{\mathrm{ref}}(X) \dot{m}_{\eta^*}(X, Y(1), 1)
        + [1 - \rho^{\mathrm{ref}}(X)] \dot{m}_{\eta^*}(X, Y(0), 0) \right] \notag , \\
        0
        &= \dot{M}_{\eta_n, n}
        = \sum_{i=1}^n
        \left[ \frac{\rho^{\mathrm{ref}}(T_i \mid X_i)}{\rho_{\theta_{i-1}}(T_i \mid X_i)}
        \dot{m}_{\eta_n}(X_i, Y_i(T_i), T_i) \right] \label{eq_proof_lemma_estimation_difference_bounded_3} , \\
        0
        &= \dot{M}_{\eta_{n-1}, n-1}
        = \sum_{i=1}^{n-1}
        \left[ \frac{\rho^{\mathrm{ref}}(T_i \mid X_i)}{\rho_{\theta_{i-1}}(T_i \mid X_i)}
        \dot{m}_{\eta_{n-1}}(X_i, Y_i(T_i), T_i) \right] \label{eq_proof_lemma_estimation_difference_bounded_4} .
    \end{align}

    \eqref{eq_proof_lemma_estimation_difference_bounded_3} and \eqref{eq_proof_lemma_estimation_difference_bounded_4} imply that
    \begin{align}
        & \quad - \frac{\rho^{\mathrm{ref}}(T_n \mid X_n)}{\rho_{\theta_{n-1}}(T_n \mid X_n)}
        \dot{m}_{\eta_n}(X_n, Y_n(T_n), T_n) \label{eq_proof_lemma_estimation_difference_bounded_5} \\
        &= \sum_{i=1}^n
        \left[ \frac{\rho^{\mathrm{ref}}(T_i \mid X_i)}{\rho_{\theta_{i-1}}(T_i \mid X_i)}
        \left[ \dot{m}_{\eta_n}(X_i, Y_i(T_i), T_i) - \dot{m}_{\eta_{n-1}}(X_i, Y_i(T_i), T_i) \right] \right] \notag \\
        &= \sum_{i=1}^n
        \left[ \frac{\rho^{\mathrm{ref}}(T_i \mid X_i)}{\rho_{\theta_{i-1}}(T_i \mid X_i)}
        \left[ \int_{\eta_{n-1}}^{\eta_n} \ddot{m}_\eta(X_i, Y_i(T_i), T_i) d\eta \right] \right] \notag \\
        &= \sum_{i=1}^n
        \left[ \frac{\rho^{\mathrm{ref}}(T_i \mid X_i)}{\rho_{\theta_{i-1}}(T_i \mid X_i)}
        \left[ \int_{\eta_{n-1}}^{\eta_n} \ddot{m}_{\eta^*}(X_i, Y_i(T_i), T_i) d\eta \right] \right] \notag \\
        & \quad + \sum_{i=1}^n
        \left[ \frac{\rho^{\mathrm{ref}}(T_i \mid X_i)}{\rho_{\theta_{i-1}}(T_i \mid X_i)}
        \left[ \int_{\eta_{n-1}}^{\eta_n} \left[ \ddot{m}_\eta(X_i, Y_i(T_i), T_i)
        - \ddot{m}_{\eta^*}(X_i, Y_i(T_i), T_i) \right] d\eta \right] \right] \notag \\
        &= \left[ \sum_{i=1}^n
        \left[ \frac{\rho^{\mathrm{ref}}(T_i \mid X_i)}{\rho_{\theta_{i-1}}(T_i \mid X_i)}
        \ddot{m}_{\eta^*}(X_i, Y_i(T_i), T_i) \right] \right] (\eta_n - \eta_{n-1}) \notag \\
        & \quad + \sum_{i=1}^n
        \left[ \frac{\rho^{\mathrm{ref}}(T_i \mid X_i)}{\rho_{\theta_{i-1}}(T_i \mid X_i)}
        \left[ \int_{\eta_{n-1}}^{\eta_n} \left[ \ddot{m}_\eta(X_i, Y_i(T_i), T_i)
        - \ddot{m}_{\eta^*}(X_i, Y_i(T_i), T_i) \right] d\eta \right] \right] \notag .
    \end{align}

    The last term of the \eqref{eq_proof_lemma_estimation_difference_bounded_5} can be bounded as
    \begin{align}
        & \quad \left\| \sum_{i=1}^n
        \left[ \frac{\rho^{\mathrm{ref}}(T_i \mid X_i)}{\rho_{\theta_{i-1}}(T_i \mid X_i)}
        \left[ \int_{\eta_{n-1}}^{\eta_n} \left[ \ddot{m}_\eta(X_i, Y_i(T_i), T_i)
        - \ddot{m}_{\eta^*}(X_i, Y_i(T_i), T_i) \right] d\eta \right] \right] \right\| \label{eq_proof_lemma_estimation_difference_bounded_6} \\
        & \leq \sum_{i=1}^n
        \frac{\rho^{\mathrm{ref}}(T_i \mid X_i)}{\rho_{\theta_{i-1}}(T_i \mid X_i)}
        \left\| \int_0^1 \left[ \ddot{m}_{\eta_{n-1} + t(\eta_n - \eta_{n-1})}(X_i, Y_i(T_i), T_i)
        - \ddot{m}_{\eta^*}(X_i, Y_i(T_i), T_i) \right]
        \left[ \eta_n - \eta_{n-1} \right] \mathrm{d} t \right\| \notag \\
        & \leq \sum_{i=1}^n
        \frac{\rho^{\mathrm{ref}}(T_i \mid X_i)}{\rho_{\theta_{i-1}}(T_i \mid X_i)}
        \int_0^1 \left\| \ddot{m}_{\eta_{n-1} + t(\eta_n - \eta_{n-1})}(X_i, Y_i(T_i), T_i)
        - \ddot{m}_{\eta^*}(X_i, Y_i(T_i), T_i) \right\|
        \| \eta_n - \eta_{n-1} \| \mathrm{d} t \notag \\
        & \leq \sum_{i=1}^n
        \frac{\rho^{\mathrm{ref}}(T_i \mid X_i)}{\rho_{\theta_{i-1}}(T_i \mid X_i)}
        \int_0^1 \left[ s(X_i, Y_i(T_i), T_i)
        \max\{ \| \eta_n - \eta^* \|, \| \eta_{n-1} - \eta^* \| \}
        \| \eta_n - \eta_{n-1} \| \right] \mathrm{d} t \notag \\
        &= \max\{ \| \eta_n - \eta^* \|, \| \eta_{n-1} - \eta^* \| \}
        M_{s,n} \| \eta_n - \eta_{n-1} \| \notag \\
        & \leq \frac{ n \sigma_{-1}(\ddot{M}_{\eta^*}) \| \eta_n - \eta_{n-1} \| }{4} \notag ,
    \end{align}
    where the last inequality follows from \eqref{eq_proof_lemma_estimation_difference_bounded_2}.

    It then follows from \eqref{eq_proof_lemma_estimation_difference_bounded_5} and \eqref{eq_proof_lemma_estimation_difference_bounded_6} that
    \begin{align*}
        & \left\| - \frac{1}{n} \frac{\rho^{\mathrm{ref}}(T_n \mid X_n)}{\rho_{\theta_{n-1}}(T_n \mid X_n)}
        \dot{m}_{\eta_n}(X_n, Y_n(T_n), T_n) \right\| \\
        & \geq \sigma_{-1} \left( \frac{1}{n} \sum_{i=1}^n
        \left[ \frac{\rho^{\mathrm{ref}}(T_i \mid X_i)}{\rho_{\theta_{i-1}}(T_i \mid X_i)}
        \ddot{m}_{\eta^*}(X_i, Y_i(T_i), T_i) \right] \right) \left\| \eta_n - \eta_{n-1} \right\|
        - \frac{ \sigma_{-1}(\ddot{M}_{\eta^*}) \| \eta_n - \eta_{n-1} \| }{4} \\
        & \geq \left[ \sigma_{-1} \left( M_{\ddot{m}_{\eta^*}} \right)
        - \left\| \frac{M_{\ddot{m}_{\eta^*},n}}{n} - M_{\ddot{m}_{\eta^*}} \right\| \right]
        \left\| \eta_n - \eta_{n-1} \right\|
        - \frac{ \sigma_{-1}(\ddot{M}_{\eta^*}) \| \eta_n - \eta_{n-1} \| }{4} \\
        & \geq \left[ \sigma_{-1} \left( M_{\ddot{m}_{\eta^*}} \right)
        - \frac{ \sigma_{-1}(\ddot{M}_{\eta^*}) }{4} \right]
        \left\| \eta_n - \eta_{n-1} \right\|
        - \frac{ \sigma_{-1}(\ddot{M}_{\eta^*}) \| \eta_n - \eta_{n-1} \| }{4} \\
        &= \frac{ \sigma_{-1}(\ddot{M}_{\eta^*}) \| \eta_n - \eta_{n-1} \| }{2} .
    \end{align*}

    Hence,
    \begin{align*}
        \left\| \eta_n - \eta_{n-1} \right\|
        & \leq \frac{2}{\sigma_{-1}(\ddot{M}_{\eta^*})}
        \left\| - \frac{1}{n} \frac{\rho^{\mathrm{ref}}(T_n \mid X_n)}{\rho_{\theta_{n-1}}(T_n \mid X_n)}
        \dot{m}_{\eta_n}(X_n, Y_n(T_n), T_n) \right\| \\
        & \leq \frac{1}{n} \frac{2}{\iota \sigma_{-1}(\ddot{M}_{\eta^*})}
        \left\| \dot{m}_{\eta_n}(X_n, Y_n(T_n), T_n) \right\|
    \end{align*}
    on the event $A_{\mathrm{good}, n}$. Combining this with \eqref{eq_proof_lemma_estimation_difference_bounded_1}, it follows that
    \begin{align}
        & \quad \mathbb{E} \left[ \left\| \eta_n - \eta_{n-1} \right\| \mathbb{I}_{A_{\mathrm{good}, n}} \right]
        \label{eq_proof_lemma_estimation_difference_bounded_7} \\
        & \leq \frac{1}{n} \frac{2}{\iota \sigma_{-1}(\ddot{M}_{\eta^*})}
        \left[ \sup_{\eta \in B(\eta^*, \delta_{\mathrm{max}})} \mathbb{E}_{(X,Y(1)) \sim \Gamma_{X,Y(1)}} \right.
        \left\| \dot{m}_{\eta}(X, Y(1), 1) \right\| \notag \\
        & \quad + \left. \sup_{\eta \in B(\eta^*, \delta_{\mathrm{max}})} \mathbb{E}_{(X,Y(0)) \sim \Gamma_{X,Y(0)}}
        \left\| \dot{m}_{\eta}(X, Y(0), 0) \right\| \right] \notag \\
        & = O \left( \frac{1}{n} \right) \notag .
    \end{align}
    
    Next, consider the complement event $A_{\mathrm{good}, n}^c$.
    Because Assumption \ref{assumption_estimation_asymptotic} implies Assumption \ref{assumption_estimation_convergence}, by Lemmas \ref{lemma_estimation_convergence} and \ref{lemma_allocation_convergence},
    \begin{equation*}
        P(A_{\mathrm{good}, n}^c) \rightarrow 0 .
    \end{equation*}
    Moreover, since Assumption \ref{assumption_estimation_asymptotic} implies Assumption \ref{assumption_estimation_convergence}, under Assumption \ref{assumption_theta_difference_average_convergence_large_deviation}, Lemmas \ref{lemma_estimation_convergence} and \ref{lemma_allocation_convergence} guarantee the existence of a constant $c_A$ such that
    \begin{equation*}
        P(A_{\mathrm{good}, n}^c) < c_A n^{-q} ,
    \end{equation*}
    with $q \in (0,1]$ defined in Assumption \ref{assumption_theta_difference_average_convergence_large_deviation}.

    Due to the compactness of $\Theta$, the distance $d(\eta_n, \eta_{n-1})$ is uniformly bounded. 
    Consequently, $P(A_{\mathrm{good}, n}^c) \rightarrow 0$ implies $\mathbb{E} \left[ \left\| \eta_n - \eta_{n-1} \right\| \mathbb{I}_{A_{\mathrm{good}, n}^c} \right] = o(1)$ and the bound $P(A_{\mathrm{good}, n}^c) < c_A n^{-q}$ further implies $\mathbb{E} \left[ \left\| \eta_n - \eta_{n-1} \right\| \mathbb{I}_{A_{\mathrm{good}, n}^c} \right] = O(n^{-q})$.

    In conclusion, combining the analysis on the event $A_{\mathrm{good}, n}$ and its complement $A_{\mathrm{good}, n}^c$, we obtain that
    \begin{equation*}
        \mathbb{E} \left[ \| \eta_n - \eta_{n-1} \| \right] = o(1) ,
    \end{equation*}
    and, under Assumption \ref{assumption_theta_difference_average_convergence_large_deviation},
    \begin{equation*}
        \mathbb{E} \left[ \| \eta_n - \eta_{n-1} \| \right] = O\left(n^{-q}\right) ,
    \end{equation*}
    with $q \in (0,1]$. This establishes the desired asymptotic bound on the difference between successive estimators.
\end{proof}

\begin{lemma}
    \label{lemma_estimation_asymptotic_normality}

    Suppose that the allocation function $g_\theta$ satisfies $\pi_\theta \left[ g_\theta(\cdot, x) \right] = \rho_\theta(x)$ for $\Gamma$-a.e. $x$.
    Under Assumptions \ref{assumption_theta_compact}, \ref{assumption_lipschitz_continuity_functions_kernels}, \ref{assumption_estimation_asymptotic} and \ref{assumption_simultaneous_properties}, if the allocation parameter sequence $\{\theta_n\}$ satisfies Assumption \ref{assumption_theta_convergence_strong_diminishing_adaptation},
    \begin{equation*}
        \ddot{M}_{\eta^*}
        = \mathbb{E}_{(X,Y(1),Y(0)) \sim \Gamma_{X,\bm{Y}}} \left[ \rho^{\mathrm{ref}}(X) \ddot{m}_{\eta^*}(X, Y(1), 1)
        + [1 - \rho^{\mathrm{ref}}(X)] \ddot{m}_{\eta^*}(X, Y(0), 0) \right]
    \end{equation*}
    is an invertible matrix and for each $t \in \{0, 1\}$, there exists some $\epsilon > 0$ such that
    \begin{equation*}
        \mathbb{E}_{(X,Y(t)) \sim \Gamma_{X,Y(t)}} \left\| \dot{m}_{\eta^*}(X,Y(t),t) \right\|^{4+\epsilon}
        < \infty ,
    \end{equation*}
    then
    \begin{equation*}
        \sqrt{n} (\eta_n - \eta^*)
        = - \ddot{M}_{\eta^*}^{-1} \frac{1}{\sqrt{n}} \dot{M}_{\eta^*, n}
        + o_P(1) .
    \end{equation*}
    Here,
    \begin{equation*}
        \dot{M}_{\eta^*, n}
        = \sum_{i=1}^n
        \left[ \frac{\rho^{\mathrm{ref}}(T_i \mid X_i)}{\rho_{\theta_{i-1}}(T_i \mid X_i)}
        \dot{m}_{\eta^*}(X_i, Y_i(T_i), T_i) \right]
    \end{equation*}
    is asymptotically normal with mean zero and covariance matrix
    \begin{align*}
        \Sigma_{(Z)}
        &= \Cov(Z^{(u)}) + \mathbb{E}_{(X, Z) \sim \Gamma_{X,Z}} \left[ \rho_{\theta^*}(X)(1-\rho_{\theta^*}(X)) \right. \\
        & \quad \left. \left\{ Z^{(c)} - \frac{A \phi(X)}{\rho_{\theta^*}(X)(1-\rho_{\theta^*}(X))} \right\}
        \left\{ Z^{(c)} - \frac{A \phi(X)}{\rho_{\theta^*}(X)(1-\rho_{\theta^*}(X))} \right\}^T \right] ,
    \end{align*}
    where
    \begin{align*}
        Z^{(c)}
        &= \frac{\rho^{\mathrm{ref}}(X)}{\rho_{\theta^*}(X)}
        \ddot{M}_{\eta^*}^{-1} \dot{m}_{\eta^*}(X, Y(1), 1)
        -
        \frac{1-\rho^{\mathrm{ref}}(X)}{1-\rho_{\theta^*}(X)}
        \ddot{M}_{\eta^*}^{-1} \dot{m}_{\eta^*}(X, Y(0), 0) , \\
        Z^{(u)}
        &= 
        \rho^{\mathrm{ref}}(X) \ddot{M}_{\eta^*}^{-1} \dot{m}_{\eta^*}(X, Y(1), 1)
        + (1-\rho^{\mathrm{ref}}(X)) \ddot{M}_{\eta^*}^{-1} \dot{m}_{\eta^*}(X, Y(0), 0) ,
    \end{align*}
    and the matrix $A$ satisfies
    \begin{align*}
        & A \mathbb{E} \left[ [\rho_{\theta^*}(X)(1-\rho_{\theta^*}(X))]^{-1} \phi(X) \phi(X)^T
        / \max\{ \|\phi(X)\|/[\rho_{\theta^*}(X)(1-\rho_{\theta^*}(X))], C_{\theta^*} \} \right] \\
        & \quad = \mathbb{E} \left[ Z^{(c)} \phi(X)^T
        / \max\{ \|\phi(X)\|/[\rho_{\theta^*}(X)(1-\rho_{\theta^*}(X))], C_{\theta^*} \} \right] .
    \end{align*}
\end{lemma}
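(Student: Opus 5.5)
We will follow the standard M-estimation route of Theorem 5.41 in \cite{vaartAsymptoticStatistics2007}, feeding in the paper's pseudo-Markov law of large numbers and central limit theorem (Lemmas \ref{lemma_allocation_convergence} and \ref{lemma_CLT}).

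\textbf{Step 1: consistency and the first-order conditions.} Assumption \ref{assumption_estimation_asymptotic} implies Assumption \ref{assumption_estimation_convergence}, and Assumption \ref{assumption_theta_convergence_strong_diminishing_adaptation} implies the averaged weak variation condition. Hence Lemma \ref{lemma_estimation_convergence} gives $\eta_n \xrightarrow{\mathbb{P}} \eta^*$. Since $\eta^*$ is interior, with probability tending to one $\eta_n$ is interior, so $\dot{M}_{\eta_n,n}=0$.

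\textbf{Step 2: Taylor expansion.} Expand around $\eta^*$, using the Lipschitz bound on $\ddot{m}$ through $s$ exactly as in the proof of Lemma \ref{lemma_estimation_difference_bounded}:
\begin{equation*}
0=\dot{M}_{\eta^*,n}+M_{\ddot{m}_{\eta^*},n}(\eta_n-\eta^*)+R_n,\qquad \|R_n\|\le \tfrac12 M_{s,n}\|\eta_n-\eta^*\|^2 .
\end{equation*}
Lemma \ref{lemma_allocation_convergence}, applied with integrands $\ddot{m}_{\eta^*}$ and $s$ (both square integrable) and using $\pi_\theta[g_\theta(\cdot,x)]=\rho_\theta(x)$, gives $M_{\ddot m_{\eta^*},n}/n\to \ddot{M}_{\eta^*}$ and $M_{s,n}/n=O_P(1)$. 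Here the IPW weights cancel the $\rho_{\theta_{i-1}}$ in expectation, leaving the $\rho^{\mathrm{ref}}$-weighted population quantities. Therefore $(\ddot{M}_{\eta^*}+o_P(1))(\eta_n-\eta^*)=-\dot M_{\eta^*,n}/n$. Once Step 3 gives $\dot M_{\eta^*,n}=O_P(\sqrt n)$, we get $\sqrt n$-consistency and then the stated linear representation.

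\textbf{Step 3: CLT for the score (the main work).} Decompose the summand of $\ddot{M}_{\eta^*}^{-1}\dot{M}_{\eta^*,n}$, writing $\rho=\rho_{\theta_{i-1}}(X_i)$, as
\begin{equation*}
\frac{\rho^{\mathrm{ref}}(T_i\mid X_i)}{\rho_{\theta_{i-1}}(T_i\mid X_i)}\ddot{M}_{\eta^*}^{-1}\dot m_{\eta^*}
=Z^{(u)}_i+(T_i-\rho)\,Z^{(c)}_{i,\theta_{i-1}} .
\end{equation*}
To check this identity: with $a=\rho^{\mathrm{ref}}\dot m(1)$ and $b=(1-\rho^{\mathrm{ref}})\dot m(0)$, the left side equals $Ta/\rho+(1-T)b/(1-\rho)$, which is $a+b+(T-\rho)[a/\rho-b/(1-\rho)]$.

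The term $\sum Z^{(u)}_i$ is an i.i.d. sum. It is centered because $\dot M_{\eta^*}=0$ by the first-order condition at the interior maximizer.

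The term $\sum(T_i-\rho)Z^{(c)}_i$ has exactly the form handled by Lemma \ref{lemma_CLT}, with $Z^{(c)}_n$ as defined there. Lemma \ref{lemma_lipschitz_continuity_conditional_expectation}, with the $4+\epsilon$ moment hypothesis, verifies Assumption \ref{assumption_lipschitz_continuity_conditional_expectation}. Lemma \ref{lemma_expectation_pi_h} gives $\pi_\theta h_\theta=0$, so no centering drift appears.

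We must show the two pieces are jointly asymptotically normal with no cross-covariance. Their conditional cross term is $\mathbb{E}[(T-\rho)\mid\mathcal F_{n-1},X]\,Z^{(u)}Z^{(c)T}$. This is a function of $\Lambda$ whose $\pi_\theta$-average vanishes, again by Lemma \ref{lemma_expectation_pi_h}, so its time average is $o_P(1)$ by Lemma \ref{lemma_convergence_average_enter}. The Poisson-equation correction contributes only to the $(T-\rho)$ part.

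\textbf{Step 4: identify the variance.} Combine the Cram\'er--Wold device with the martingale CLT on the combined martingale (direct term plus Poisson term). Lemma \ref{lemma_expression_variance}, evaluated at $\theta^*$, then gives the second block with the matrix $A$ solving \eqref{eq_theorem_estimation_asymptotic_normality_matrix_A}; the first block is $\Cov(Z^{(u)})$.

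\textbf{Main obstacle.} The hardest step is ensuring the joint conditional-variance law of large numbers under a varying $\theta_n$. This needs the Hölder continuity of the Poisson solutions, which Lemma \ref{lemma_CLT} already packages. I would therefore reduce everything, coordinatewise after Cram\'er--Wold, to a single application of Lemma \ref{lemma_CLT} with an augmented $(Z^{(c)},Z^{(u)})$, which is the structure that lemma's hypotheses are designed for.
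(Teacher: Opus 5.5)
Your proposal is correct and follows the paper's proof essentially step for step. It uses consistency from Lemma \ref{lemma_estimation_convergence}, the Taylor expansion with the $s$-Lipschitz remainder controlled by Lemma \ref{lemma_allocation_convergence}, the decomposition of each weighted score into $(T_i-\rho_{\theta_{i-1}}(X_i))Z^{(c)}_i+Z^{(u)}_i$, and a single application of Lemmas \ref{lemma_lipschitz_continuity_conditional_expectation}, \ref{lemma_CLT} and \ref{lemma_expression_variance} via the Cram\'er--Wold device. Your separate cross-covariance argument is redundant: Lemma \ref{lemma_CLT} already handles the combined martingale, and Lemma \ref{lemma_expression_variance} shows the cross term drops out of the limiting variance.
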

\begin{proof}
    Because Assumption \ref{assumption_estimation_asymptotic} implies Assumption \ref{assumption_estimation_convergence}, by Lemma \ref{lemma_estimation_convergence}, $\eta_n \xrightarrow{\mathbb{P}} \eta^*$.
    Thus, $P(\eta_n \in \mathring{\Theta}) \rightarrow 1$.

    Because $\eta^*$ is an interior point of $\Theta$, it holds that
    \begin{equation*}
        0
        = \dot{M}_{\eta^*}
        = \mathbb{E}_{(X,Y(1),Y(0)) \sim \Gamma_{X,\bm{Y}}} \left[ \rho^{\mathrm{ref}}(X) \dot{m}_{\eta^*}(X, Y(1), 1)
        + [1 - \rho^{\mathrm{ref}}(X)] \dot{m}_{\eta^*}(X, Y(0), 0) \right] .
    \end{equation*}

    When $\eta_n \in \mathring{\Theta}$, it holds that
    \begin{align}
        0
        &= \dot{M}_{\eta_n, n}
        = \sum_{i=1}^n
        \left[ \frac{\rho^{\mathrm{ref}}(T_i \mid X_i)}{\rho_{\theta_{i-1}}(T_i \mid X_i)}
        \dot{m}_{\eta_n}(X_i, Y_i(T_i), T_i) \right] \label{eq_proof_lemma_estimation_asymptotic_normality_fff} \\
        &= \dot{M}_{\eta^*, n}
        + \int_{\eta^*}^{\eta_n} \left[ \sum_{i=1}^n
        \left[ \frac{\rho^{\mathrm{ref}}(T_i \mid X_i)}{\rho_{\theta_{i-1}}(T_i \mid X_i)}
        \ddot{m}_\eta(X_i, Y_i(T_i), T_i) \right] \right] d\eta \notag \\
        &= \dot{M}_{\eta^*, n}
        + \left[ \sum_{i=1}^n
        \left[ \frac{\rho^{\mathrm{ref}}(T_i \mid X_i)}{\rho_{\theta_{i-1}}(T_i \mid X_i)}
        \ddot{m}_{\eta^*}(X_i, Y_i(T_i), T_i) \right] \right] (\eta_n - \eta^*) \notag \\
        & \quad + \sum_{i=1}^n
        \left[ \frac{\rho^{\mathrm{ref}}(T_i \mid X_i)}{\rho_{\theta_{i-1}}(T_i \mid X_i)}
        \left[ \int_{\eta^*}^{\eta_n} \left[ \ddot{m}_\eta(X_i, Y_i(T_i), T_i)
        - \ddot{m}_{\eta^*}(X_i, Y_i(T_i), T_i) \right] d\eta \right] \right] \notag .
    \end{align}
    The last term of the \eqref{eq_proof_lemma_estimation_asymptotic_normality_fff} can be bounded as
    \begin{align}
        & \quad \left\| \sum_{i=1}^n
        \left[ \frac{\rho^{\mathrm{ref}}(T_i \mid X_i)}{\rho_{\theta_{i-1}}(T_i \mid X_i)}
        \left[ \int_{\eta^*}^{\eta_n} \left[ \ddot{m}_\eta(X_i, Y_i(T_i), T_i)
        - \ddot{m}_{\eta^*}(X_i, Y_i(T_i), T_i) \right] d\eta \right] \right] \right\| \label{eq_proof_lemma_estimation_asymptotic_normality_eee} \\
        & \leq \sum_{i=1}^n
        \frac{\rho^{\mathrm{ref}}(T_i \mid X_i)}{\rho_{\theta_{i-1}}(T_i \mid X_i)}
        \left\| \int_0^1 \left[ \ddot{m}_{\eta^* + t(\eta_n - \eta^*)}(X_i, Y_i(T_i), T_i)
        - \ddot{m}_{\eta^*}(X_i, Y_i(T_i), T_i) \right]
        \left[ \eta_n - \eta^* \right] \mathrm{d} t \right\| \notag \\
        & \leq \sum_{i=1}^n
        \frac{\rho^{\mathrm{ref}}(T_i \mid X_i)}{\rho_{\theta_{i-1}}(T_i \mid X_i)}
        \int_0^1 \left\| \ddot{m}_{\eta^* + t(\eta_n - \eta^*)}(X_i, Y_i(T_i), T_i)
        - \ddot{m}_{\eta^*}(X_i, Y_i(T_i), T_i) \right\|
        \| \eta_n - \eta^* \| \mathrm{d} t \notag \\
        & \leq \sum_{i=1}^n
        \frac{\rho^{\mathrm{ref}}(T_i \mid X_i)}{\rho_{\theta_{i-1}}(T_i \mid X_i)}
        \int_0^1 \left[ t s(X_i, Y_i(T_i), T_i) \| \eta_n - \eta^* \|^2 \right] \mathrm{d} t \notag \\
        & \leq \frac{\| \eta_n - \eta^* \|^2}{2} \sum_{i=1}^n
        \left[ \frac{\rho^{\mathrm{ref}}(T_i \mid X_i)}{\rho_{\theta_{i-1}}(T_i \mid X_i)} s(X_i, Y_i(T_i), T_i) \right] \notag .
    \end{align}
    By applying Lemma \ref{lemma_allocation_convergence}, we have
    \begin{align}
        & \frac{1}{n} \sum_{i=1}^n
        \left[ \frac{\rho^{\mathrm{ref}}(T_i \mid X_i)}{\rho_{\theta_{i-1}}(T_i \mid X_i)} s(X_i, Y_i(T_i), T_i) \right] \label{eq_proof_lemma_estimation_asymptotic_normality_3} \\
        & \quad \xrightarrow{\mathbb{P}}
        \mathbb{E}_{(X,Y(1),Y(0)) \sim \Gamma_{X,\bm{Y}}} \left[ \rho^{\mathrm{ref}}(X) s(X, Y(1), 1) + [1 - \rho^{\mathrm{ref}}(X)] s(X, Y(0), 0) \right] \notag .
    \end{align}
    Combining \eqref{eq_proof_lemma_estimation_asymptotic_normality_eee} with \eqref{eq_proof_lemma_estimation_asymptotic_normality_3} and $\eta_n \xrightarrow{\mathbb{P}} \eta^*$, we obtain
    \begin{align*}
        & \frac{1}{n} \sum_{i=1}^n
        \left[ \frac{\rho^{\mathrm{ref}}(T_i \mid X_i)}{\rho_{\theta_{i-1}}(T_i \mid X_i)}
        \left[ \int_{\eta^*}^{\eta_n} \left[ \ddot{m}_\eta(X_i, Y_i(T_i), T_i)
        - \ddot{m}_{\eta^*}(X_i, Y_i(T_i), T_i) \right] d\eta \right] \right] \\
        & \quad = O_P(\| \eta_n - \eta^* \|^2)
        = o_P(\| \eta_n - \eta^* \|) .
    \end{align*}

    Applying Lemma \ref{lemma_allocation_convergence} once again yields
    \begin{align*}
        & \frac{1}{n} \sum_{i=1}^n
        \left[ \frac{\rho^{\mathrm{ref}}(T_i \mid X_i)}{\rho_{\theta_{i-1}}(T_i \mid X_i)} \ddot{m}_{\eta^*}(X_i, Y_i(T_i), T_i) \right] \\
        & \quad \xrightarrow{\mathbb{P}}
        \ddot{M}_{\eta^*} =
        \mathbb{E}_{(X,Y(1),Y(0)) \sim \Gamma_{X,\bm{Y}}} \left[ \rho^{\mathrm{ref}}(X) \ddot{m}_{\eta^*}(X, Y(1), 1)
        + [1 - \rho^{\mathrm{ref}}(X)] \ddot{m}_{\eta^*}(X, Y(0), 0) \right] .
    \end{align*}
    According to these above results, \eqref{eq_proof_lemma_estimation_asymptotic_normality_fff} reduces to
    \begin{align*}
        0 &= \frac{\dot{M}_{\eta^*, n}}{n}
        + \left[ \frac{1}{n} \sum_{i=1}^n
        \left[ \frac{\rho^{\mathrm{ref}}(T_i \mid X_i)}{\rho_{\theta_{i-1}}(T_i \mid X_i)}
        \ddot{m}_{\eta^*}(X_i, Y_i(T_i), T_i) \right] \right] (\eta_n - \eta^*)
        + o_P(\| \eta_n - \eta^* \|) \\
        &= \frac{\dot{M}_{\eta^*, n}}{n}
        + \left[ \ddot{M}_{\eta^*} + o_P(1) \right] (\eta_n - \eta^*) .
    \end{align*}
    Therefore,
    \begin{equation}
        \sqrt{n} (\eta_n - \eta^*)
        = - \left[ \ddot{M}_{\eta^*} + o_P(1) \right]^{-1}
        \frac{\dot{M}_{\eta^*, n}}{\sqrt{n}}
        = - \left[ 1 + o_P(1) \right]
        \ddot{M}_{\eta^*}^{-1}
        \frac{\dot{M}_{\eta^*, n}}{\sqrt{n}} \label{eq_proof_lemma_estimation_asymptotic_normality_4} .
    \end{equation}

    Because each term in the sum
    \begin{equation*}
        \ddot{M}_{\eta^*}^{-1} \dot{M}_{\eta^*, n}
        = \sum_{i=1}^n
        \left[ \frac{\rho^{\mathrm{ref}}(T_i \mid X_i)}{\rho_{\theta_{i-1}}(T_i \mid X_i)}
        \ddot{M}_{\eta^*}^{-1} \dot{m}_{\eta^*}(X_i, Y_i(T_i), T_i) \right]
    \end{equation*}
    can be transformed into
    \begin{align*}
        & \quad \frac{\rho^{\mathrm{ref}}(T_i \mid X_i)}{\rho_{\theta_{i-1}}(T_i \mid X_i)}
        \ddot{M}_{\eta^*}^{-1} \dot{m}_{\eta^*}(X_i, Y_i(T_i), T_i) \\
        &= T_i \frac{\rho^{\mathrm{ref}}(X_i)}{\rho_{\theta_{i-1}}(X_i)}
        \ddot{M}_{\eta^*}^{-1} \dot{m}_{\eta^*}(X_i, Y_i(1), 1)
        + (1-T_i) \frac{1-\rho^{\mathrm{ref}}(X_i)}{1-\rho_{\theta_{i-1}}(X_i)}
        \ddot{M}_{\eta^*}^{-1} \dot{m}_{\eta^*}(X_i, Y_i(0), 0) \\
        &= (T_i - \rho_{\theta_{i-1}}(X_i)) Z^{(c)}_i + Z^{(u)}_i ,
    \end{align*}
    where
    \begin{align*}
        Z^{(c)}_i
        &= 
        \frac{\rho^{\mathrm{ref}}(X_i)}{\rho_{\theta_{i-1}}(X_i)}
        \ddot{M}_{\eta^*}^{-1} \dot{m}_{\eta^*}(X_i, Y_i(1), 1)
        -
        \frac{1-\rho^{\mathrm{ref}}(X_i)}{1-\rho_{\theta_{i-1}}(X_i)}
        \ddot{M}_{\eta^*}^{-1} \dot{m}_{\eta^*}(X_i, Y_i(0), 0) , \\
        Z^{(u)}_i
        &= 
        \rho^{\mathrm{ref}}(X_i) \ddot{M}_{\eta^*}^{-1} \dot{m}_{\eta^*}(X_i, Y_i(1), 1)
        + (1-\rho^{\mathrm{ref}}(X_i)) \ddot{M}_{\eta^*}^{-1} \dot{m}_{\eta^*}(X_i, Y_i(0), 0) .
    \end{align*}

    The requirement for the distribution of $(X_i, Z^{(c)}_i, Z^{(u)}_i)$ in Assumption \ref{assumption_lipschitz_continuity_conditional_expectation} can be proved by Assumption \ref{assumption_lipschitz_continuity_functions_kernels} and Lemma \ref{lemma_h_lipschitz_continuous_bounded}.

    Therefore, by applying Lemmas \ref{lemma_lipschitz_continuity_conditional_expectation}, \ref{lemma_CLT} and \ref{lemma_expression_variance} with the Cramér--Wold device, it holds that
    \begin{align*}
        \frac{\ddot{M}_{\eta^*}^{-1} \dot{M}_{\eta^*, n}}{\sqrt{n}}
        &= \frac{1}{\sqrt{n}} \sum_{i=1}^n
        \left[ \frac{\rho^{\mathrm{ref}}(T_i \mid X_i)}{\rho_{\theta_{i-1}}(T_i \mid X_i)}
        \ddot{M}_{\eta^*}^{-1} \dot{m}_{\eta^*}(X_i, Y_i(T_i), T_i) \right] \\
        &= \frac{1}{\sqrt{n}} \sum_{i=1}^n \left[ (T_i - \rho_{\theta_{i-1}}(X_i)) Z^{(c)}_i + Z^{(u)}_i \right]
    \end{align*}
    is asymptotically normal with mean zero and covariance matrix
    \begin{align*}
        \Sigma_{(Z)}
        &= \Cov(Z^{(u)}) + \mathbb{E}_{(X, Z) \sim \Gamma_{X,Z}} \left[ \rho_{\theta^*}(X)(1-\rho_{\theta^*}(X)) \right. \\
        & \quad \left. \left\{ Z^{(c)} - \frac{A \phi(X)}{\rho_{\theta^*}(X)(1-\rho_{\theta^*}(X))} \right\}
        \left\{ Z^{(c)} - \frac{A \phi(X)}{\rho_{\theta^*}(X)(1-\rho_{\theta^*}(X))} \right\}^T \right] ,
    \end{align*}
    where
    \begin{align*}
        Z^{(c)}_i
        &= \frac{\rho^{\mathrm{ref}}(X_i)}{\rho_{\theta_{i-1}}(X_i)}
        \ddot{M}_{\eta^*}^{-1} \dot{m}_{\eta^*}(X_i, Y_i(1), 1)
        -
        \frac{1-\rho^{\mathrm{ref}}(X_i)}{1-\rho_{\theta_{i-1}}(X_i)}
        \ddot{M}_{\eta^*}^{-1} \dot{m}_{\eta^*}(X_i, Y_i(0), 0) , \\
        Z^{(u)}_i
        &= 
        \rho^{\mathrm{ref}}(X_i) \ddot{M}_{\eta^*}^{-1} \dot{m}_{\eta^*}(X_i, Y_i(1), 1)
        + (1-\rho^{\mathrm{ref}}(X_i)) \ddot{M}_{\eta^*}^{-1} \dot{m}_{\eta^*}(X_i, Y_i(0), 0) , \\
        Z^{(c)}
        &= \frac{\rho^{\mathrm{ref}}(X)}{\rho_{\theta^*}(X)}
        \ddot{M}_{\eta^*}^{-1} \dot{m}_{\eta^*}(X, Y(1), 1)
        -
        \frac{1-\rho^{\mathrm{ref}}(X)}{1-\rho_{\theta^*}(X)}
        \ddot{M}_{\eta^*}^{-1} \dot{m}_{\eta^*}(X, Y(0), 0) , \\
        Z^{(u)}
        &= 
        \rho^{\mathrm{ref}}(X) \ddot{M}_{\eta^*}^{-1} \dot{m}_{\eta^*}(X, Y(1), 1)
        + (1-\rho^{\mathrm{ref}}(X)) \ddot{M}_{\eta^*}^{-1} \dot{m}_{\eta^*}(X, Y(0), 0) ,
    \end{align*}
    and the matrix $A$ satisfies
    \begin{align*}
        & A \mathbb{E} \left[ [\rho_{\theta^*}(X)(1-\rho_{\theta^*}(X))]^{-1} \phi(X) \phi(X)^T
        / \max\{ \|\phi(X)\|/[\rho_{\theta^*}(X)(1-\rho_{\theta^*}(X))], C_{\theta^*} \} \right] \\
        & \quad = \mathbb{E} \left[ Z^{(c)} \phi(X)^T
        / \max\{ \|\phi(X)\|/[\rho_{\theta^*}(X)(1-\rho_{\theta^*}(X))], C_{\theta^*} \} \right] .
    \end{align*}
    
    Thus, $\frac{\ddot{M}_{\eta^*}^{-1} \dot{M}_{\eta^*, n}}{\sqrt{n}} = O_P(1)$.
    Consequently, from \eqref{eq_proof_lemma_estimation_asymptotic_normality_4}, we have
    \begin{equation*}
        \sqrt{n} (\eta_n - \eta^*)
        = - \frac{\ddot{M}_{\eta^*}^{-1} \dot{M}_{\eta^*, n}}{\sqrt{n}}
        + o_P(1) .
    \end{equation*}
\end{proof}

\section{Lemmas for New Allocation Form under the CBARA Procedure}
\label{sec_lemmas_allocation_form}

\subsection{Expectation under the Invariant Probability}

\begin{lemma}
    \label{lemma_expectation_pi_h}

    Suppose the additional covariate $Z$ is one-dimensional.
    If $\mathbb{E}_{X \sim \Gamma} \left[ \|\phi(X)\| \right] < \infty$, $\mathbb{E}_{\Lambda \sim \pi_\theta} \left[ \| \Lambda \| \right] < \infty$, $\mathbb{E} \left| Z \right| < \infty$ and the transition kernel $P_\theta$ is positive recurrent with an invariant probability $\pi_\theta$, then $\pi_\theta \left[ g_\theta(\cdot, x) \right] = \rho_\theta(x)$ for $\Gamma$-a.e. $x$ and
    \begin{equation*}
        \pi_\theta h_\theta = 0 ,
    \end{equation*}
    where
    \begin{equation*}
        h_\theta(\Lambda)
        = \mathbb{E}_{X \sim \Gamma} \left[ [g_\theta(\Lambda, X) - \rho_\theta(X)] Z \right] .
    \end{equation*}
\end{lemma}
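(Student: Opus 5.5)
The plan is to read off the constraint $\pi_\theta[g_\theta(\cdot,x)]=\rho_\theta(x)$ from the stationarity of $\pi_\theta$, applied to the identity test function $h(\Lambda)=\Lambda$. From \eqref{eq_CARA_allocation_function},
\begin{equation*}
    g_\theta(\Lambda,x)-\rho_\theta(x) = -p_\theta\, c_\theta(x)\, \phi(x)^T \beta(\Lambda),
\end{equation*}
where
\begin{equation*}
    c_\theta(x)=\max\{\|\phi(x)\|/[\rho_\theta(x)(1-\rho_\theta(x))],C_\theta\}^{-1}
    \quad\text{and}\quad
    \beta(\Lambda)=\Lambda/\max\{\|\Lambda\|,C_\Lambda\}.
\end{equation*}
Since $\beta$ is bounded, $\pi_\theta\beta$ is well defined. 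So the first claim reduces to showing $\phi(x)^T\pi_\theta\beta=0$ for $\Gamma$-a.e. $x$. Because $\pi_\theta$ lives on $W_\phi$ (the chain started in $W_\phi$ stays there), $\pi_\theta\beta\in W_\phi$. It therefore suffices to prove $\pi_\theta\beta=0$.

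For the key step I take $h(\Lambda)=\Lambda$ (componentwise). By the definition \eqref{eq_transition_kernel_CARA_procedure},
\begin{equation*}
    P_\theta h(\Lambda)-\Lambda
    = \mathbb{E}_{X\sim\Gamma}\left[\frac{(g_\theta(\Lambda,X)-\rho_\theta(X))\phi(X)}{\rho_\theta(X)(1-\rho_\theta(X))}\right]
    = -p_\theta M_\theta\,\beta(\Lambda),
\end{equation*}
where
\begin{equation*}
    M_\theta=\mathbb{E}\left[\frac{c_\theta(X)\,\phi(X)\phi(X)^T}{\rho_\theta(X)(1-\rho_\theta(X))}\right].
\end{equation*}
The matrix $M_\theta$ is finite because $c_\theta(X)\|\phi(X)\|\le \rho_\theta(X)(1-\rho_\theta(X))$, so the integrand is bounded by $\|\phi(X)\|$, which is integrable. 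Invariance $\pi_\theta P_\theta=\pi_\theta$ then gives
\begin{equation*}
    \pi_\theta(P_\theta h - h)=0, \quad\text{hence}\quad M_\theta\,\pi_\theta\beta=0 .
\end{equation*}
As computed in the proof of Lemma \ref{lemma_all_negative_feedback}, for every $v\in W_\phi\setminus\{0\}$,
\begin{equation*}
    v^TM_\theta v=\mathbb{E}\left[c_\theta(X)(\phi(X)^Tv)^2/(\rho_\theta(X)(1-\rho_\theta(X)))\right]>0,
\end{equation*}
because $P(\phi(X)^Tv\neq0)>0$. So $M_\theta$ is positive definite on $W_\phi$. Taking $v=\pi_\theta\beta\in W_\phi$ in $v^TM_\theta v=0$ forces $\pi_\theta\beta=0$, which yields $\pi_\theta[g_\theta(\cdot,x)]=\rho_\theta(x)$ for $\Gamma$-a.e.\ $x$.

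The second claim follows by Fubini:
\begin{equation*}
    \pi_\theta h_\theta=\mathbb{E}_{(X,Z)}\bigl[(\pi_\theta[g_\theta(\cdot,X)]-\rho_\theta(X))Z\bigr]=0 .
\end{equation*}
The interchange is legitimate since $|g_\theta-\rho_\theta|\le1$ and $\mathbb{E}|Z|<\infty$. The same argument applies verbatim if $Z$ is divided by $\rho_\theta(X)(1-\rho_\theta(X))\ge\iota_\rho(1-\iota_\rho)$.

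The step I expect to be the main obstacle is justifying $\pi_\theta(P_\theta h)=\pi_\theta h$ for the unbounded function $h(\Lambda)=\Lambda$, since invariance is a priori only stated for bounded functions. My plan is to use the hypothesis $\pi_\theta\|\Lambda\|<\infty$ together with the increment bound $\|P_\theta h(\Lambda)-\Lambda\|\le \mathbb{E}\|\phi(X)\|/[\iota_\rho(1-\iota_\rho)]$. Concretely, I apply invariance to truncated functions $h_K$ (each coordinate clipped at level $\pm K$), which are bounded. Then I pass to the limit $K\to\infty$ by dominated convergence on both sides. On the right the dominating function is $\|\Lambda\|$. On the left it is $\|\Lambda\|+\mathbb{E}\|\phi(X)\|/[\iota_\rho(1-\iota_\rho)]$, which is $\pi_\theta$-integrable.
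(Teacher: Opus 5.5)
Your proof is correct and is essentially the paper's argument: stationarity tested on the identity map gives $M_\theta\,\pi_\theta\beta=0$, and non-degeneracy of $M_\theta$ on $W_\phi$ then kills $\phi(X)^T\pi_\theta\beta$. The only difference is that you reverse the order of the two conclusions: you obtain the pointwise identity first and then $\pi_\theta h_\theta=0$ by Fubini, whereas the paper proves $\pi_\theta h_\theta=0$ via the $W_\phi$/$W_\phi^\perp$ orthogonality and then appeals to the ``arbitrariness of $f_Z$''.

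Your truncation argument usefully supplies the justification of $\pi_\theta(P_\theta\,\mathrm{id})=\pi_\theta\,\mathrm{id}$ that the paper leaves implicit. You also do not need $\pi_\theta\beta\in W_\phi$, which presumes $\pi_\theta$ is carried by $W_\phi$ and does not follow merely from the chain started in $W_\phi$ staying there. The identity $(\pi_\theta\beta)^T M_\theta\,\pi_\theta\beta=0$ together with $c_\theta>0$ already forces $\phi(X)^T\pi_\theta\beta=0$ almost surely.
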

\begin{proof}
    Denote
    \begin{equation*}
        \alpha_\theta(X) = \frac{\phi(X)}{\max\{\|\phi(X)\|/[\rho_\theta(X)(1-\rho_\theta(X))],C_\theta\}}
        \quad \text{and} \quad
        \beta(\Lambda) = \frac{\Lambda}{\max\{\|\Lambda\|,C_\Lambda\}} .
    \end{equation*}
    Then
    \begin{equation*}
        g_\theta(\Lambda,X) = \rho_\theta(X) - \alpha_\theta(X)^T \beta(\Lambda) ,
    \end{equation*}
    and hence
    \begin{align}
        \pi_\theta h_\theta
        &= - \mathbb{E}_{\Lambda \sim \pi_\theta, X \sim \Gamma}
        \left[ \frac{\phi(X)^T \Lambda}{\max\{\|\phi(X)\|/[\rho_\theta(X)(1-\rho_\theta(X))],C_\theta\} \max\{\|\Lambda\|,C_\Lambda\}} \cdot Z \right] \label{eq_proof_lemma_expectation_pi_h_1} \\
        &= - (\mathbb{E}_{(X,Z) \sim \Gamma_{X,Z}} \left[ Z \alpha_\theta(X)^T \right])
        (\pi_\theta \beta) \notag .
    \end{align}

    Let $\mathrm{id}_\Lambda$ denote the identity map on $\Lambda$.
    Since $\pi_\theta = \pi_\theta P_\theta$, we have
    \begin{equation*}
        (\pi_\theta P_\theta)(\mathrm{id}_\Lambda)
        = \pi_\theta \mathrm{id}_\Lambda .
    \end{equation*}
    Moreover,
    \begin{align*}
        & \quad [P_\theta(\Lambda, \mathrm{id}_\Lambda) - \mathrm{id}_\Lambda](\Lambda)
        = \mathbb{E}_\theta \left[ \Lambda_1 - \Lambda_0 \mid \Lambda_0 = \Lambda \right] \\
        &= \mathbb{E}_{X \sim \Gamma} \left[ \left[ g_\theta(\Lambda,X) - \rho_\theta(X) \right] \phi(X)/[\rho_\theta(X)(1-\rho_\theta(X))] \right] \\
        &= - (\mathbb{E}_{X \sim \Gamma} \left[ \phi(X) \alpha_\theta(X)^T/[\rho_\theta(X)(1-\rho_\theta(X))] \right])
        \beta(\Lambda) .
    \end{align*}
    It follows that
    \begin{align}
        & \quad 0
        = (\pi_\theta P_\theta - \pi_\theta)(\mathrm{id}_\Lambda)
        = - (\mathbb{E}_{X \sim \Gamma} \left[ \phi(X) \alpha_\theta(X)^T/[\rho_\theta(X)(1-\rho_\theta(X))] \right])
        (\pi_\theta \beta) \label{eq_proof_lemma_expectation_pi_h_2} \\
        &= - \left\{ \mathbb{E}_{X \sim \Gamma} \left[
            \frac{\phi(X) \phi(X)^T}{ \rho_\theta(X)(1-\rho_\theta(X)) \max\{\|\phi(X)\|/[\rho_\theta(X)(1-\rho_\theta(X))],C_\theta\} }
        \right] \right\}
        (\pi_\theta \beta) \notag .
    \end{align}
    By an argument similar to that in Subsection \ref{subsec_proof_lemma_all_negative_feedback}, the row space of
    \begin{equation*}
        \mathbb{E}_{X \sim \Gamma} \left[
            \frac{\phi(X) \phi(X)^T}{ \rho_\theta(X)(1-\rho_\theta(X)) \max\{\|\phi(X)\|/[\rho_\theta(X)(1-\rho_\theta(X))],C_\theta\} }
        \right]
    \end{equation*}
    is $W_\phi$.
    Thus, \eqref{eq_proof_lemma_expectation_pi_h_2} implies that $\pi_\theta \beta \in W_\phi^\perp$.

    Since $(\mathbb{E}_{(X,Z) \sim \Gamma_{X,Z}} \left[ Z \alpha_\theta(X)^T \right])^T \in W_\phi$, $\pi_\theta \beta \in W_\phi^\perp$ implies that \eqref{eq_proof_lemma_expectation_pi_h_1} equals zero, that is,
    \begin{equation*}
        \pi_\theta h_\theta
        = - (\mathbb{E}_{(X,Z) \sim \Gamma_{X,Z}} \left[ Z \alpha_\theta(X)^T \right])
        (\pi_\theta \beta)
        = 0 .
    \end{equation*}

    Denote the conditional expectation function $f_Z(x) = \mathbb{E} \left[ Z \mid X=x \right]$.
    Due to
    \begin{equation*}
        0 = \pi_\theta h_\theta
        = \mathbb{E}_{X \sim \Gamma} \left[ \left[ \pi_\theta \left[ g_\theta(\cdot, x) \right] - \rho_\theta(x) \right] f_Z(X) \right]
    \end{equation*}
    and the arbitrariness of the function $f_Z$, it holds that
    \begin{equation*}
        \pi_\theta \left[ g_\theta(\cdot, x) \right] - \rho_\theta(x) = 0 ,
    \end{equation*}
    for $\Gamma$-a.e. $x$.
\end{proof}

\subsection{Expression of the Variance}

\begin{lemma}
    \label{lemma_expression_variance}

    Suppose that Assumption \ref{assumption_simultaneous_properties} holds.
    If the allocation parameter sequence $\{\theta_n\}$ is the fixed parameter sequence $\{\theta_n = \theta\}_{n \in \mathbb{N}}$, $\mathbb{E} \left[ {Z^{(c)}}^4 \right] < \infty$ and $\mathbb{E} \left[ {Z^{(u)}}^4 \right] < \infty$, then
    \begin{equation*}
        \frac{1}{\sqrt{N}} \sum_{n=1}^{N}
        \left[ (T_n-\rho_\theta(X_n)) Z^{(c)}_n + Z^{(u)}_n
        - \pi_\theta h_\theta
        - \mathbb{E}Z^{(u)}_n \right]
        \xrightarrow{d} \mathcal{N}(0, \sigma_{(Z)}^2) ,
    \end{equation*}
    where $\pi_\theta$ is the invariant probability of the transition kernel $P_\theta$, the function $h_\theta$ is defined by
    \begin{equation*}
        h_\theta(\Lambda)
        = \mathbb{E}_{X \sim \Gamma} \left[ [g_\theta(\Lambda, X) - \rho_\theta(X)] f_{Z^{(c)}}(X) \right]
        = \mathbb{E}_{X \sim \Gamma} \left[ [g_\theta(\Lambda, X) - \rho_\theta(X)] \mathbb{E}[Z^{(c)} | X] \right] ,
    \end{equation*}
    and the asymptotic variance is
    \begin{equation*}
        \sigma_{(Z)}^2
        = \var(Z^{(u)})
        + \mathbb{E}_{(X, Z) \sim \Gamma_{X,Z}} \left[ \rho_\theta(X)(1-\rho_\theta(X)) \left\{ Z^{(c)} - \frac{a^T \phi(X)}{\rho_\theta(X)(1-\rho_\theta(X))} \right\}^2 \right] ,
    \end{equation*}
    where the vector $a$ satisfies
    \begin{align*}
        & a^T \mathbb{E} \left[ [\rho_\theta(X)(1-\rho_\theta(X))]^{-1} \phi(X) \phi(X)^T
        / \max\{ \|\phi(X)\|/[\rho_\theta(X)(1-\rho_\theta(X))], C_\theta \} \right] \\
        & \quad = \mathbb{E} \left[ Z^{(c)} \phi(X)^T
        / \max\{ \|\phi(X)\|/[\rho_\theta(X)(1-\rho_\theta(X))], C_\theta \} \right] .
    \end{align*}
\end{lemma}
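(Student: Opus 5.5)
With $\theta$ fixed, $\{\Lambda_n\}$ is a genuine Markov chain with kernel $P_\theta$. I would therefore prove the statement by a Poisson-equation martingale decomposition, and then compute the variance explicitly using the special linear structure of $g_\theta$.

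\textbf{Step 1: split the summand.} Write
\[
(T_n-\rho_\theta(X_n))Z^{(c)}_n+Z^{(u)}_n-\pi_\theta h_\theta-\mathbb{E}Z^{(u)}_n = \xi_n + \bigl(h_\theta(\Lambda_{n-1})-\pi_\theta h_\theta\bigr),
\]
where
\[
\xi_n=(T_n-g_\theta(\Lambda_{n-1},X_n))Z^{(c)}_n+(g_\theta(\Lambda_{n-1},X_n)-\rho_\theta(X_n))Z^{(c)}_n-h_\theta(\Lambda_{n-1})+Z^{(u)}_n-\mathbb{E}Z^{(u)}.
\]
By Lemma \ref{lemma_transition_kernel} and the i.i.d.\ structure, $\xi_n$ is a martingale difference.

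\textbf{Step 2: solve the Poisson equation explicitly.} Using $g_\theta=\rho_\theta-\alpha_\theta^T\beta$ from the proof of Lemma \ref{lemma_expectation_pi_h}, we have $h_\theta(\Lambda)=-b^T\beta(\Lambda)$ with $b=\mathbb{E}[Z^{(c)}\alpha_\theta(X)]$. The drift of the identity map is
\[
P_\theta\mathrm{id}-\mathrm{id}=-G\beta(\Lambda),\qquad G=\mathbb{E}\bigl[\phi\alpha_\theta^T/(\rho_\theta(1-\rho_\theta))\bigr],
\]
which is symmetric and invertible on $W_\phi$. Choose $a$ with $Ga=b$; this is exactly the defining equation for $a$ in the statement. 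Then $\hat h_\theta(\Lambda)=a^T\Lambda$ satisfies $\hat h_\theta-P_\theta\hat h_\theta=h_\theta-\pi_\theta h_\theta$, since $\pi_\theta h_\theta=0$ by Lemma \ref{lemma_expectation_pi_h}.

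This gives the telescoping
\[
\sum_{n}\bigl(h_\theta(\Lambda_{n-1})-\pi_\theta h_\theta\bigr)=\sum_n a^T\bigl(\Lambda_n-\mathbb{E}[\Lambda_n\mid\mathcal{F}_{n-1}]\bigr)-a^T\Lambda_N.
\]
The boundary term $a^T\Lambda_N/\sqrt N$ tends to $0$ because $\Lambda_N=O_P(1)$ (Theorem \ref{theorem_boundedness}, or the drift condition in Assumption \ref{assumption_simultaneous_properties}).

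\textbf{Step 3: identify the total martingale increment.} Combining Steps 1 and 2, the increment is
\[
D_n=(T_n-\rho_\theta(X_n))\Bigl[Z^{(c)}_n-\frac{a^T\phi(X_n)}{\rho_\theta(1-\rho_\theta)}\Bigr]+Z^{(u)}_n-\mathbb{E}Z^{(u)}-c(\Lambda_{n-1}),
\]
where $c(\Lambda)$ is its conditional mean given $\mathcal{F}_{n-1}$. This is $h_\theta$-type with the adjusted covariate, so its mean vanishes after recentering.

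\textbf{Step 4: martingale CLT.} I would apply the martingale CLT. The Lindeberg condition follows from the fourth moments and the exponential moments of $\Lambda$. The main obstacle is the law of large numbers for the conditional variance. The conditional second moment $\mathbb{E}[D_n^2\mid\mathcal{F}_{n-1}]$ is a function $v(\Lambda_{n-1})$ bounded by $C V^\gamma$. Ergodicity of the positive Harris chain (Lemma \ref{lemma_application_simultaneous_geometric_ergodicity}) gives $N^{-1}\sum v(\Lambda_{n-1})\to\pi_\theta v$.

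\textbf{Step 5: compute $\pi_\theta v$.} Terms linear in $g_\theta-\rho_\theta$ vanish under $\pi_\theta$, since $\pi_\theta[g_\theta(\cdot,x)]=\rho_\theta(x)$ by Lemma \ref{lemma_expectation_pi_h}. The cross term between $Z^{(u)}$ and $(T-\rho_\theta)$ vanishes for the same reason, and $\pi_\theta c^2=0$ since $c(\Lambda)=0$ identically by construction.

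What remains needs care: the conditional variance of $T$ is $g_\theta(1-g_\theta)$, not $\rho_\theta(1-\rho_\theta)$, so I must show $\pi_\theta[\mathbb{E}_X(g_\theta-\rho_\theta)^2W^2]$ cancels. I would try to obtain this cancellation from the quadratic identity $\pi_\theta(P_\theta q-q)=0$ with $q(\Lambda)=(a^T\Lambda)^2$, which expresses exactly this second-order term. Once the cancellation holds, $\pi_\theta v$ reduces to the stated $\var(Z^{(u)})+\mathbb{E}[\rho_\theta(1-\rho_\theta)\{Z^{(c)}-a^T\phi/(\rho_\theta(1-\rho_\theta))\}^2]$.
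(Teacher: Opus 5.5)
\textbf{Gap in Step 5.} The term you isolate, $\pi_\theta\bigl[\mathbb{E}_X (g_\theta-\rho_\theta)^2 W^2\bigr]$ with $W=Z^{(c)}-a^T\phi(X)/[\rho_\theta(X)(1-\rho_\theta(X))]$, is nonnegative and generally positive. So no invariance identity can make it vanish on its own.

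The identity you propose does not even involve it. Indeed, $\pi_\theta(P_\theta q-q)=0$ with $q=(a^T\Lambda)^2$ reads
$2\pi_\theta[(a^T\Lambda)h_\theta(\Lambda)]+\pi_\theta\bigl[\mathbb{E}_\theta[\{a^T(\Lambda_1-\Lambda_0)\}^2\mid\Lambda_0=\Lambda]\bigr]=0$.
This only pins down $\pi_\theta[(a^T\Lambda)h_\theta]$, and its second term is affine in $g_\theta$ and carries no $Z^{(c)}$.

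The missing observation is elementary. Since $T\in\{0,1\}$,
\[
\mathbb{E}\bigl[(T-\rho_\theta)^2\mid X,\Lambda\bigr]=g_\theta(1-g_\theta)+(g_\theta-\rho_\theta)^2=\rho_\theta(1-\rho_\theta)+(1-2\rho_\theta)(g_\theta-\rho_\theta).
\]
So the $-(g_\theta-\rho_\theta)^2W^2$ hidden in $g_\theta(1-g_\theta)W^2$ cancels \emph{pointwise} with the $+(g_\theta-\rho_\theta)^2W^2$ coming from the squared conditional mean $\{(g_\theta-\rho_\theta)W+U\}^2$, where $U=Z^{(u)}-\mathbb{E}Z^{(u)}$. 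Your bookkeeping does not list that second term.

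Consequently
$v(\Lambda)=\mathbb{E}[\rho_\theta(1-\rho_\theta)W^2+U^2]+\mathbb{E}_X\bigl[(g_\theta(\Lambda,X)-\rho_\theta(X))\{(1-2\rho_\theta(X))f_{W^2}(X)+2f_{WU}(X)\}\bigr]$,
where $f_{W^2},f_{WU}$ are the conditional means given $X$. This is affine in $g_\theta(\Lambda,\cdot)$, and $\pi_\theta[g_\theta(\cdot,x)]=\rho_\theta(x)$ removes the second part, giving the stated $\sigma_{(Z)}^2$. This is exactly the paper's step of replacing $\Lambda\sim\pi_\theta$, $T\sim\mathrm{Ber}(1,g_\theta(\Lambda,X))$ by $T\sim\mathrm{Ber}(1,\rho_\theta(X))$ inside $\mathbb{E}[(\Delta M)^2]$.

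\textbf{Comparison with the paper.} With that repair, your argument is a valid and more self-contained route than the paper's.

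\emph{The paper's route.} It takes the convergence from Lemma \ref{lemma_CLT}, whose conditional-variance law of large numbers rests on the H\"older-continuity machinery for $G_\theta,F_\theta,H_\theta$. It then only identifies $\pi_\theta[G_\theta+F_\theta+2H_\theta]$, swapping the series solution $\sum_n(P_\theta^n-\pi_\theta)h_\theta$ for the linear one via uniqueness of Poisson solutions up to an additive constant.

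\emph{Your route.} You insert the linear solution directly into the telescoping, which needs no uniqueness argument. It yields the exact increment $D_n=(T_n-\rho_\theta(X_n))W_n+U_n$ with $\mathbb{E}[D_n\mid\mathcal{F}_{n-1}]=0$. Then $v$ is bounded, Lindeberg is immediate from $|D_n|\le|W_n|+|U_n|$, and the Harris ergodic theorem suffices for $N^{-1}\sum_n v(\Lambda_{n-1})\to\pi_\theta v$.

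\emph{A sign slip.} Since $h_\theta=-b^T\beta$ and $P_\theta\mathrm{id}-\mathrm{id}=-G\beta$, the function $a^T\Lambda$ solves $\hat h-P_\theta\hat h=-h_\theta$. The correct solution is $\hat h_\theta(\Lambda)=-a^T\Lambda$, as in the paper, and your telescoping identity flips sign accordingly. Your Step 3 increment is nonetheless the correct one.
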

\begin{proof}
    All aspects of this lemma, except for the explicit expression of the variance, are already stated as results in Lemma \ref{lemma_CLT}.
    As shown in the proof of Lemma \ref{lemma_CLT}, the variance is
    \begin{equation*}
        \sigma_{(Z)}^2 = \pi_\theta[G_\theta+F_\theta+2H_\theta] ,
    \end{equation*}
    and this expression equals the probabilistic representation
    \begin{equation*}
        \mathbb{E}_{(X, Z) \sim \Gamma_{X,Z,\theta}, \Lambda \sim \pi_\theta, T \sim \Bernoulli(1, g_\theta(\Lambda, X))} \left[ \left\{ \Delta M \right\}^2 \right] ,
    \end{equation*}
    where
    \begin{equation*}
        \Delta M =
        (T-\rho_\theta(X)) Z^{(c)} - h_\theta(\Lambda) + Z^{(u)}-\mathbb{E}Z^{(u)}
        + \hat{h}_\theta \left( \Lambda + \frac{(T-\rho_\theta(X)) \phi(X)}{\rho_\theta(X)(1-\rho_\theta(X))} \right)
        - (P_\theta\hat{h}_\theta)(\Lambda) .
    \end{equation*}

    Denote $g_\theta(\Lambda, \alpha) = \rho_\theta(X) - w_\theta(X) \alpha_\theta(X)^T \beta_\theta(\Lambda)$, where
    \begin{align*}
        \alpha_\theta(X) &= \frac{\phi(X)/[\rho_\theta(X)(1-\rho_\theta(X))]}{\max\{\|\phi(X)\|/[\rho_\theta(X)(1-\rho_\theta(X))],C_\theta\}} , \\
        \beta_\theta(\Lambda) &= \Lambda / \max\{\|\Lambda\|,C_\Lambda\} , \\
        w_\theta(X) &= \rho_\theta(X)(1-\rho_\theta(X)) .
    \end{align*}
    Since the parameter is fixed at $\theta$, we henceforth suppress the subscript $\theta$ in the following proof to simplify notation.

    Given the function $h$, the transition kernel $P$ and the invariant probability $\pi$, consider the following Poisson equation, which is common in the theory of Markov chain \cite{meynMarkovChainsStochastic2009}:
    \begin{equation*}
        \hat{h} - P \hat{h} = h - \pi h ,
    \end{equation*}
    where $\hat{h}$ denotes the solution to the Poisson equation.
    If $P$ is geometrically ergodic and $|h|_V < \infty$, the Poisson equation admits the following solution:
    \begin{equation*}
        \hat{h} = \sum_{n=0}^\infty (P^n - \pi) (h) .
    \end{equation*}

    An obvious fact is that this Poisson equation does not admit a unique solution.
    Suppose that
    \begin{equation*}
        S_{Z^{(c)}\alpha} = \mathbb{E} \left[ w(X) Z^{(c)} \alpha(X)^T \right]
        \quad \text{and} \quad
        S_{\phi\alpha} = \mathbb{E} \left[ w(X) \phi(X) \alpha(X)^T / [\rho(X)(1-\rho(X))] \right] .
    \end{equation*}
    Since the function $h$ is defined by
    \begin{equation*}
        h(\Lambda)
        = \mathbb{E} \left[ [g(\Lambda, X) - \rho(X)] f_{Z^{(c)}}(X) \right]
        = - \mathbb{E} \left[ w(X) \alpha(X)^T \beta(\Lambda) Z^{(c)} \right]
        = - S_{Z^{(c)}\alpha} \beta(\Lambda) ,
    \end{equation*}
    and
    \begin{equation*}
        \mathbb{E} \left[ \Lambda_1 \mid \Lambda_0 = \Lambda \right] - \Lambda
        = \mathbb{E} \left[ \frac{(T_1 - \rho(X_1)) \phi(X_1)}{\rho(X_1)(1-\rho(X_1))}
        + \Lambda_0 \middle| \Lambda_0 = \Lambda \right] - \Lambda
        = - S_{\phi\alpha} \beta(\Lambda) ,
    \end{equation*}
    both quantities are linear in $\beta(\Lambda)$.
    Hence, consider a linear candidate function
    \begin{equation*}
        \tilde{h}_a(\Lambda) = -a^T \Lambda .
    \end{equation*}
    It follows that
    \begin{align*}
        \tilde{h}_a(\Lambda) - \left(P\tilde{h}_a\right)(\Lambda)
        &= -a^T S_{\phi\alpha} \beta(\Lambda) .
    \end{align*}

    The proof of Lemma \ref{lemma_expectation_pi_h} shows that $S_{Z^{(c)}\alpha} \in \Row(S_{\phi\alpha})$.
    Thus, there exists a vector $a$ such that $a^T S_{\phi\alpha} = S_{Z^{(c)}\alpha}$.
    Consequently, $\hat{h} = \tilde{h}_a$ is a valid solution to the Poisson equation.
    
    By positive Harris recurrence in Assumption \ref{assumption_simultaneous_properties} and Proposition 17.4.1 in \cite{meynMarkovChainsStochastic2009}, the difference between
    \begin{equation*}
        \hat{h} = \tilde{h}_a
        \quad \text{and} \quad
        \hat{h} = \sum_{n=0}^\infty (P^n - \pi) (h)
    \end{equation*}
    is $\pi$-almost surely constant.
    Moreover, because both functions are continuous and Lemma \ref{lemma_for_simultaneous_small_set_condition} implies that $\pi$ is strictly positive on open sets, this difference must be constant for all $\Lambda$.

    Therefore, we can use $\hat{h} = \tilde{h}_a$ as an alternative to $\hat{h} = \sum_{n=0}^\infty (P^n - \pi) (h)$ in the expression of $\Delta M$.
    The resulting expression is
    \begin{align*}
        \Delta M
        &= (T-\rho(X)) Z^{(c)} - h(\Lambda) + Z^{(u)}-\mathbb{E}Z^{(u)}
        + \hat{h}(\Lambda + (T-\rho(X)) \phi(X)) - (P\hat{h})(\Lambda) \\
        &= (T-\rho(X)) Z^{(c)} - S_{Z^{(c)}\alpha} \beta(\Lambda)
        + Z^{(u)}-\mathbb{E}Z^{(u)} \\
        & \quad - a^T \left[ \Lambda + (T-\rho(X)) \phi(X) / [\rho(X)(1-\rho(X))] \right]
        + a^T \left[ \Lambda + S_{\phi\alpha} \beta(\Lambda) \right] \\
        &= (T-\rho(X)) \left\{ Z^{(c)} - a^T \phi(X) / [\rho(X)(1-\rho(X))] \right\} + Z^{(u)}-\mathbb{E}Z^{(u)} .
    \end{align*}
    In conclusion, the variance $\sigma_{(Z)}^2 = \pi_\theta[G_\theta+F_\theta+2H_\theta]$ equals
    \begin{align*}
        & \quad \mathbb{E}_{(X, Z) \sim \Gamma_{X,Z}, \Lambda \sim \pi, T \sim \mathrm{Ber}(1, g(\Lambda, X))} \left[ \left\{ \Delta M \right\}^2 \right] \\
        &= \mathbb{E}_{(X, Z) \sim \Gamma_{X,Z}, T \sim \mathrm{Ber}(1, \rho(X))} \left[ \left\{ \Delta M \right\}^2 \right] \\
        &= \mathbb{E}_{(X, Z) \sim \Gamma_{X,Z}} \left[ \left\{ Z^{(u)}-\mathbb{E}Z^{(u)} \right\}^2 \right]
        + \mathbb{E}_{(X, Z) \sim \Gamma_{X,Z}} \left[ \rho(X)(1-\rho(X)) \left\{ Z^{(c)} - \frac{a^T \phi(X)}{\rho(X)(1-\rho(X))} \right\}^2 \right] .
    \end{align*}
\end{proof}

\section{CLT}
\label{sec_CLT}

\subsection{Central Limit Theorem}

\label{subsec_central_limit_theorem}

In this section, we extend our analysis to a more general setting that relaxes the independence condition of Assumption \ref{assumption_iid}.
We now consider a scenario where the distribution of the additional covariate is parameterized by $\theta$, and the parameter $\theta$ is sequentially updated as the procedure progresses.

Specifically, the joint conditional distribution of $(X_n, Z_n)$ given the past information $\mathcal{F}_{n-1}$ is denoted by $\Gamma_{X,Z,\theta_{n-1}}$, which depends on a parameter $\theta_{n-1}$ that can be determined by the history.
A crucial constraint in our model is that the conditional marginal distribution of $X_n$ remains a fixed distribution $\Gamma$, regardless of the parameter $\theta_{n-1}$.
This more general setting allows the definition of $Z_n$ to depend on the targeted allocation ratio $\rho_{\theta_{n-1}}$.

Let $Z_n = (Z^{(c)}_n, Z^{(u)}_n)$ be a two-dimensional vector.
For a given parameter $\theta$ and a random variable $Z$, we define the conditional mean function $f_{Z, \theta}$ as
\begin{equation*}
    f_{Z, \theta}(x) := \mathbb{E}_\theta[Z | X = x] .
\end{equation*}
This function represents the expected value of $Z$ given $X_n = x$, under the model parameterized by $\theta$.
\begin{assumption}
    \label{assumption_lipschitz_continuity_conditional_expectation}
    
    Given $\mathcal{F}_{n-1}$, the conditional distribution of $(X_n, Z_n)$ is $\Gamma_{X,Z,\theta_{n-1}}$, and the marginal distribution of $X_n$ remains fixed as $\Gamma$.
    There exists a constant $L_f > 0$ and $\lambda_f > 0$ such that for any $\theta, \theta^\prime \in \Theta$ and for any family of functions
    \begin{equation*}
        \left\{ f_\theta \right\}_{\theta \in \Theta} \in \left\{
            \left\{ f_{Z^{(c)},\theta} \right\}_{\theta \in \Theta} ,
            \left\{ f_{Z^{(u)},\theta} \right\}_{\theta \in \Theta} ,
            \left\{ f_{\left( Z^{(c)} \right)^2,\theta} \right\}_{\theta \in \Theta} ,
            \left\{ f_{\left( Z^{(u)} \right)^2,\theta} \right\}_{\theta \in \Theta} ,
            \left\{ f_{(Z^{(u)}-\mathbb{E}[Z^{(u)}])Z^{(c)}, \theta} \right\}_{\theta \in \Theta}
        \right\} ,
    \end{equation*}
    the Lipschitz continuous condition
    \begin{equation*}
        \left\| e^{\lambda_f \|\phi(X)\|} \left| f_{\theta} - f_{\theta^\prime} \right| \right\|_{L^1(\Gamma)}
        \leq L_f d(\theta, \theta^\prime)
    \end{equation*}
    holds.
\end{assumption}
Sometimes the Lipschitz continuity condition in Assumption \ref{assumption_lipschitz_continuity_conditional_expectation} is overly restrictive and need not hold for all five function families specified therein.
In such cases, the assumption can be relaxed as follows.
\begin{assumption}
    \label{assumption_lipschitz_continuity_conditional_expectation_relaxed}
    
    Given $\mathcal{F}_{n-1}$, the conditional distribution of $(X_n, Z_n)$ is $\Gamma_{X,Z,\theta_{n-1}}$, and the marginal distribution of $X_n$ remains fixed as $\Gamma$.
    There exists a constant $L_f > 0$ and $\lambda_f > 0$ such that for any $\theta, \theta^\prime \in \Theta$ and for any family of functions
    \begin{equation*}
        \left\{ f_\theta \right\}_{\theta \in \Theta} \in \left\{
            \left\{ f_{Z^{(c)},\theta} \right\}_{\theta \in \Theta} ,
            \left\{ f_{Z^{(u)},\theta} \right\}_{\theta \in \Theta}
        \right\} ,
    \end{equation*}
    the Lipschitz continuous condition
    \begin{equation*}
        \left\| f_{\theta} - f_{\theta^\prime} \right\|_{L^1(\Gamma)}
        \leq L_f d(\theta, \theta^\prime)
    \end{equation*}
    holds.
\end{assumption}

\begin{lemma}
    \label{lemma_CLT}

    Suppose that Assumptions \ref{assumption_lipschitz_continuity_functions_kernels}, \ref{assumption_lipschitz_continuity_conditional_expectation} and \ref{assumption_simultaneous_properties} hold.
    If the allocation parameter sequence $\{\theta_n\}$ satisfies Assumption \ref{assumption_theta_convergence_strong_diminishing_adaptation}, $\sup_{\theta \in \Theta} \mathbb{E}_\theta \left[ {Z^{(c)}}^4 \right] < \infty$ and $\sup_{\theta \in \Theta} \mathbb{E}_\theta \left[ {Z^{(u)}}^4 \right] < \infty$, then
    \begin{equation*}
        \frac{1}{\sqrt{N}} \sum_{n=1}^{N}
        \left[ (T_n-\rho_{\theta_{n-1}}(X_n)) Z^{(c)}_n + Z^{(u)}_n
        - \pi_{\theta_{n-1}} h_{\theta_{n-1}}
        - \mathbb{E}Z^{(u)}_n \right]
        \xrightarrow{d} \mathcal{N}(0, {\sigma^*_{(Z)}}^2) ,
    \end{equation*}
    where $\pi_\theta$ is the invariant probability of the transition kernel $P_\theta$, the function $h_\theta$ is defined by
    \begin{equation*}
        h_\theta(\Lambda)
        = \mathbb{E}_{X \sim \Gamma} \left[ [g_\theta(\Lambda, X) - \rho_\theta(X)] f_{Z^{(c)},\theta}(X) \right] ,
    \end{equation*}
    and the asymptotic variance ${\sigma^*_{(Z)}}^2$ equals that under the fixed parameter sequence $\{\theta_n = \theta^*\}_{n \in \mathbb{N}}$.
\end{lemma}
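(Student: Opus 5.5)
Set $\xi_n := (T_n-\rho_{\theta_{n-1}}(X_n))Z^{(c)}_n + Z^{(u)}_n - \mathbb{E}Z^{(u)}_n$, so the centred summand is $\xi_n - \pi_{\theta_{n-1}}h_{\theta_{n-1}}$. The plan follows the strategy of Subsection \ref{subsec_proof_strategy_LLN_CLT}.

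\textbf{Step 1: martingale plus dependent term.} By Assumption \ref{assumption_lipschitz_continuity_conditional_expectation}, the conditional law of $(X_n,Z_n)$ given $\mathcal{F}_{n-1}$ is $\Gamma_{X,Z,\theta_{n-1}}$ with $X$-marginal $\Gamma$. Hence
\[
\mathbb{E}[\xi_n\mid\mathcal{F}_{n-1}] = h_{\theta_{n-1}}(\Lambda_{n-1}),
\]
since the $Z^{(u)}$ part is centred. This gives the decomposition $\xi_n = \Delta M^{(1)}_n + h_{\theta_{n-1}}(\Lambda_{n-1})$, where $\Delta M^{(1)}_n$ is a martingale difference.

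\textbf{Step 2: Poisson equation.} Item 1 of Assumption \ref{assumption_simultaneous_properties} gives simultaneous geometric ergodicity in $V$-norm. Since $|h_\theta|\le C V^\gamma$ for small $\gamma$ (Lemma \ref{lemma_h_lipschitz_continuous_bounded}), the series $\hat h_\theta=\sum_k(P_\theta^k-\pi_\theta)h_\theta$ solves the Poisson equation, with $|\hat h_\theta|\le CV^\gamma$ uniformly in $\theta$. The telescoping in Subsection \ref{subsec_proof_strategy_LLN_CLT}, together with Lemma \ref{lemma_transition_kernel}, then writes
\[
\sum_{n=1}^N\bigl(h_{\theta_{n-1}}(\Lambda_{n-1})-\pi_{\theta_{n-1}}h_{\theta_{n-1}}\bigr)
\]
as a second martingale $\sum\Delta M^{(2)}_n$, with $\Delta M^{(2)}_{n+1}=\hat h_{\theta_n}(\Lambda_{n+1})-P_{\theta_n}\hat h_{\theta_n}(\Lambda_n)$, plus a remainder
\[
R_N=\sum_{n=0}^{N-1}\bigl[\hat h_{\theta_{n+1}}-\hat h_{\theta_n}\bigr](\Lambda_{n+1})+\hat h_{\theta_0}(\Lambda_0)-\hat h_{\theta_N}(\Lambda_N).
\]
The boundary terms are $O_P(1)$ because $\sup_n\mathbb{E}V(\Lambda_n)<\infty$ (Lemma \ref{lemma_drift_condition_inequality}). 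For the sum, the local Hölder continuity of $\theta\mapsto\hat h_\theta$ (Subsection \ref{subsec_Holder_continuity_functions}) gives, for any $\alpha<1$,
\[
|\hat h_\theta-\hat h_{\theta'}|(\Lambda)\le C\,d(\theta,\theta')^\alpha V^\kappa(\Lambda).
\]
This continuity rests on the coupled robust Lipschitz property of Lemma \ref{lemma_continuity_transition_kernels} and Corollary \ref{corollary_n_step_family_robustly_Lipschitz_continuous_main}, with the polynomial-versus-geometric splitting of the series. Combining it with Hölder's inequality and Assumption \ref{assumption_theta_convergence_strong_diminishing_adaptation},
\[
\sum_n d(\theta_n,\theta_{n+1})^\alpha\le N^{1-\alpha}\Bigl(\sum_n d(\theta_n,\theta_{n+1})\Bigr)^\alpha=o_P\bigl(N^{1-\alpha+\alpha p}\bigr).
\]
Choosing $\alpha$ close to $1$ makes the exponent $<1/2$ because $p<1/2$, so $R_N/\sqrt N\to0$. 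A truncation on $\{\max_{n\le N}V^\kappa(\Lambda_n)\le N^{\delta}\}$ handles the random $V^\kappa$ factors; the uniform exponential moment makes its complement negligible.

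\textbf{Step 3: martingale CLT for $\Delta M_n=\Delta M^{(1)}_n+\Delta M^{(2)}_n$.} The conditional Lindeberg condition follows from the fourth-moment bounds on $Z^{(c)},Z^{(u)}$ and from $|\hat h_\theta|\le CV^\gamma$, using $\sup_n\mathbb{E}V(\Lambda_n)<\infty$. The conditional variance is $\mathbb{E}[\Delta M_{n}^2\mid\mathcal{F}_{n-1}]=\Sigma_{\theta_{n-1}}(\Lambda_{n-1})$, an explicit function built from $f_{(Z^{(c)})^2,\theta}$, $f_{(Z^{(u)})^2,\theta}$, the cross term, $\hat h_\theta$ and $P_\theta\hat h_\theta^2$. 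The Lipschitz hypotheses on these conditional means are precisely the five families listed in Assumption \ref{assumption_lipschitz_continuity_conditional_expectation}. Through them the family $\{\Sigma_\theta\}$ inherits joint local Hölder continuity and a $V^\gamma$ bound, so the WLLN Lemma \ref{lemma_convergence_average_enter}, whose hypothesis is implied by Assumption \ref{assumption_theta_convergence_strong_diminishing_adaptation}, yields
\[
\frac1N\sum_{n=0}^{N-1}\bigl(\Sigma_{\theta_n}(\Lambda_n)-\pi_{\theta_n}\Sigma_{\theta_n}\bigr)\xrightarrow{\mathbb{P}}0.
\]
Finally $\pi_{\theta_n}\Sigma_{\theta_n}\to\pi_{\theta^*}\Sigma_{\theta^*}$, because $\theta_n\to\theta^*$ in probability and $\theta\mapsto\pi_\theta\Sigma_\theta$ is continuous. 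The limit $\pi_{\theta^*}\Sigma_{\theta^*}$ is by construction the variance under the fixed sequence $\theta_n\equiv\theta^*$, which gives the claimed ${\sigma^*_{(Z)}}^2$.

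\textbf{Main obstacle.} The hardest step is the Hölder continuity of $\hat h_\theta$, and of the variance-related functions $P_\theta\hat h_\theta^2$, in $\theta$ with polynomially controlled $V^\kappa$ weights. The $V$-norm is unusable here (Subsubsection \ref{subsubsec_limitations_V_norm}). I would proceed by iterating the coupling of Lemma \ref{lemma_continuity_transition_kernels} $n$ times to bound $|P_\theta^nh_\theta-P_{\theta'}^nh_{\theta'}|$ by $C(n^2+1)d(\theta,\theta')^\alpha V^\kappa$. This uses the local Hölder continuity of $h$ in $\Lambda$ on the coupled pairs and a $V$-bound on the uncoupled mass. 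I would then cut the Poisson series at $n\asymp\log(1/d(\theta,\theta'))$ and control the tail with the geometric ergodicity of Lemma \ref{lemma_application_simultaneous_geometric_ergodicity}.
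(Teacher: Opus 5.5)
Your proposal is correct and follows essentially the same route as the paper. The common ingredients are the two-martingale decomposition via the Poisson equation, the H\"older-continuity estimates for $\hat h_\theta$ and for the conditional-variance function (the paper's $G_\theta+F_\theta+2H_\theta$) fed into Lemma~\ref{lemma_convergence_average_enter} and Theorem~\ref{theorem_V_invariant_probability_Holder}, and the Lindeberg check. The one difference is in the remainder term: the paper avoids your truncation of $\max_{n\le N}V^\kappa(\Lambda_n)$ by applying H\"older's inequality directly,
\[
\sum_n V^\kappa(\Lambda_{n+1})\,d(\theta_n,\theta_{n+1})^{1-\kappa}\le\bigl(\textstyle\sum_n V(\Lambda_{n+1})\bigr)^{\kappa}\bigl(\textstyle\sum_n d(\theta_n,\theta_{n+1})\bigr)^{1-\kappa},
\]
with $\kappa+p(1-\kappa)=1/2$ and $\frac1N\sum_n V(\Lambda_n)=O_P(1)$. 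Your union-bound truncation, which requires choosing $\kappa<\delta$, works equally well.
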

\begin{proof}
    Recall that
    \begin{equation*}
        h_\theta(\Lambda) = \mathbb{E} \left[ [g_\theta(\Lambda, X) - \rho_\theta(X)] f_{Z^{(c)},\theta}(X) \right] .
    \end{equation*}

    Given a function $h_\theta$, the transition kernel $P_\theta$, and its invariant probability measure $\pi_\theta$, consider the following Poisson equation, which is standard in the theory of Markov chains \cite{meynMarkovChainsStochastic2009}:
    \begin{equation*}
        \hat{h}_\theta - P_\theta \hat{h}_\theta = h_\theta - \pi_\theta h_\theta ,
    \end{equation*}
    where $\hat{h}_\theta$ denotes the solution to the Poisson equation.
    If $P_\theta$ is geometrically ergodic and $|h_\theta|_V < \infty$, the Poisson equation admits the following solution:
    \begin{equation*}
        \hat{h}_\theta = \sum_{n=0}^\infty (P_\theta^n - \pi_\theta) (h_\theta) .
    \end{equation*}

    We decompose the target expression into a sum:
    \begin{equation*}
        \frac{1}{\sqrt{N}} \sum_{n=1}^{N}
        \left[ (T_n-\rho_{\theta_{n-1}}(X_n)) Z^{(c)}_n + Z^{(u)}_n
        - \pi_{\theta_{n-1}} h_{\theta_{n-1}}
        - \mathbb{E}Z^{(u)}_n \right]
        = T_{N,1} + T_{N,2} + T_{N,3} + T_{N,4} ,
    \end{equation*}
    where
    \begin{align*}
        T_{N,1} &:= \frac{1}{\sqrt{N}} \sum_{n=0}^{N-1} \left[ (T_{n+1}-\rho_{\theta_n}(X_{n+1})) Z^{(c)}_{n+1} - h_{\theta_n}(\Lambda_n) + Z^{(u)}_{n+1}-\mathbb{E}Z^{(u)}_{n+1} \right] , \\
        T_{N,2} &:= \frac{1}{\sqrt{N}} \sum_{n=0}^{N-1} \left[ \hat{h}_{\theta_n}(\Lambda_{n+1}) - (P_{\theta_n}\hat{h}_{\theta_n})(\Lambda_n) \right] , \\
        T_{N,3} &:= \frac{1}{\sqrt{N}} \sum_{n=0}^{N-1} \left[ \hat{h}_{\theta_{n+1}}(\Lambda_{n+1}) - \hat{h}_{\theta_n}(\Lambda_{n+1}) \right] , \\
        T_{N,4} &:= \frac{1}{\sqrt{N}} \left[ \hat{h}_{\theta_0}(\Lambda_0) - \hat{h}_{\theta_N}(\Lambda_N) \right] .
    \end{align*}
    We will prove that the sum of $T_{N,1}$ and $T_{N,2}$ is asymptotically normal, while the last two terms are $o_P(1)$.

    The sum of $T_{N,1}$ and $T_{N,2}$ forms a martingale sequence.
    The corresponding martingale difference sequence, denoted by $\{\Delta M_n\}_{n \in \mathbb{N}^*}$, is given by
    \begin{equation*}
        \Delta M_{n+1} =
        (T_{n+1}-\rho_{\theta_n}(X_{n+1})) Z^{(c)}_{n+1} - h_{\theta_n}(\Lambda_n) + Z^{(u)}_{n+1}-\mathbb{E}Z^{(u)}_{n+1}
        + \hat{h}_{\theta_n}(\Lambda_{n+1}) - (P_{\theta_n}\hat{h}_{\theta_n})(\Lambda_n) .
    \end{equation*}
    Moreover, the conditional variance is
    \begin{align*}
        & \quad \mathbb{E} \left[ \left\{ \Delta M_{n+1} \right\}^2
        \mid \mathscr{F}_n \right] \\
        &= \mathbb{E} \left[ \left[ (T_{n+1}-\rho_{\theta_n}(X_{n+1})) Z^{(c)}_{n+1} - h_{\theta_n}(\Lambda_n) + Z^{(u)}_{n+1}-\mathbb{E}Z^{(u)}_{n+1}
        + \hat{h}_{\theta_n}(\Lambda_{n+1}) - (P_{\theta_n}\hat{h}_{\theta_n})(\Lambda_n)
        \right]^2
        \mid \mathscr{F}_n \right] \\
        &= \mathbb{E} \left[ \left[ (T_{n+1}-\rho_{\theta_n}(X_{n+1})) Z^{(c)}_{n+1} - h_{\theta_n}(\Lambda_n) + Z^{(u)}_{n+1}-\mathbb{E}Z^{(u)}_{n+1} \right]^2 \mid \mathscr{F}_n \right] \\
        & \quad + \mathbb{E} \left[ \left[ \hat{h}_{\theta_n}(\Lambda_{n+1}) - (P_{\theta_n}\hat{h}_{\theta_n})(\Lambda_n) \right]^2 \mid \mathscr{F}_n \right] \\
        & \quad + 2 \mathbb{E} \left[ \left[ (T_{n+1}-\rho_{\theta_n}(X_{n+1})) Z^{(c)}_{n+1} - h_{\theta_n}(\Lambda_n) + Z^{(u)}_{n+1}-\mathbb{E}Z^{(u)}_{n+1} \right]
        \left[ \hat{h}_{\theta_n}(\Lambda_{n+1}) - (P_{\theta_n}\hat{h}_{\theta_n})(\Lambda_n) \right] \mid \mathscr{F}_n \right] \\
        &= \mathbb{E} \left[ \left[ (T_{n+1}-\rho_{\theta_n}(X_{n+1})) Z^{(c)}_{n+1} - h_{\theta_n}(\Lambda_n) + Z^{(u)}_{n+1}-\mathbb{E}Z^{(u)}_{n+1} \right]^2 \mid \mathscr{F}_n \right] \\
        & \quad + \mathbb{E} \left[ P_{\theta_n}(\hat{h}_{\theta_n}^2)(\Lambda_n) - (P_{\theta_n}\hat{h}_{\theta_n})^2(\Lambda_n) \mid \mathscr{F}_n \right] \\
        & \quad + 2 \mathbb{E} \left[ \left[ (T_{n+1}-\rho_{\theta_n}(X_{n+1})) Z^{(c)}_{n+1} - h_{\theta_n}(\Lambda_n) + Z^{(u)}_{n+1}-\mathbb{E}Z^{(u)}_{n+1} \right]
        \left[ \hat{h}_{\theta_n}(\Lambda_{n+1}) - (P_{\theta_n}\hat{h}_{\theta_n})(\Lambda_n) \right] \mid \mathscr{F}_n \right] \\
        &= G_{\theta_n}(\Lambda_n) + F_{\theta_n}(\Lambda_n) + 2 H_{\theta_n}(\Lambda_n) ,
    \end{align*}
    where the functions $G_\theta$, $F_\theta$, and $H_\theta$ are defined in Subsection \ref{subsec_Holder_continuity_functions}.

    We now turn to the following limit
    \begin{equation*}
        \frac{1}{N} \sum_{n=0}^{N-1} \left[ G_{\theta_n}(\Lambda_n) + F_{\theta_n}(\Lambda_n) + 2H_{\theta_n}(\Lambda_n) \right]
        \xrightarrow{\mathbb{P}} \pi_{\theta^*}[G_{\theta^*}+F_{\theta^*}+2H_{\theta^*}] .
    \end{equation*}
    By Lemma \ref{lemma_convergence_average_enter}, the limit holds provided that
    \begin{equation}
        \frac{1}{N} \sum_{n=0}^{N-1} \pi_{\theta_n} \left[ G_{\theta_n} + F_{\theta_n} + 2H_{\theta_n} \right]
        \xrightarrow{\mathbb{P}} {\sigma^*_{(Z)}}^2 = \pi_{\theta^*}[G_{\theta^*}+F_{\theta^*}+2H_{\theta^*}]
        \label{eq_proof_lemma_CLT_convergence_1} .
    \end{equation}

    The limit \eqref{eq_proof_lemma_CLT_convergence_1} can be proved as follows.
    Based on the properties in Lemmas \ref{lemma_properties_F}--\ref{lemma_properties_H}, a direct consequence of Theorem \ref{theorem_V_invariant_probability_Holder} is that for any $\alpha \in (0,1)$, there exists some constants $L_\alpha > 0$ and $\Delta > 0$ such that for all $\theta, \theta^\prime$ satisfying $d(\theta, \theta^\prime) < \Delta$,
    \begin{equation*}
        \left| \pi_\theta \left[ G_\theta + F_\theta + 2 H_\theta \right]
        - \pi_{\theta^\prime} \left[ G_{\theta^\prime} + F_{\theta^\prime} + 2 H_{\theta^\prime} \right] \right|
        \leq L_\alpha [d(\theta, \theta^\prime)]^\alpha .
    \end{equation*}
    The function $\theta \mapsto \pi_\theta \left[ G_\theta + F_\theta + 2H_\theta \right]$ is $\alpha$-Hölder continuous with a Hölder constant $L_\alpha$ when $d(\theta, \theta^\prime) < \Delta$.
    Thus, by $\theta_n \xrightarrow{\mathbb{P}} \theta^*$ in Assumption \ref{assumption_theta_convergence_strong_diminishing_adaptation} and compactness of $\Theta$ in Assumption \ref{assumption_theta_compact},
    \begin{align*}
        & \quad \left| \frac{1}{N} \sum_{n=0}^{N-1} \pi_{\theta_n} \left[ G_{\theta_n} + F_{\theta_n} + 2H_{\theta_n} \right]
        - \pi_{\theta^*}[G_{\theta^*}+F_{\theta^*}+2H_{\theta^*}] \right| \\
        & \leq \frac{1}{N} \sum_{n=0}^{N-1} \left| \pi_{\theta_n} \left[ G_{\theta_n} + F_{\theta_n} + 2H_{\theta_n} \right]
        - \pi_{\theta^*}[G_{\theta^*}+F_{\theta^*}+2H_{\theta^*}] \right| \\
        & \leq \frac{1}{N} \sum_{n=0}^{N-1} \left[ L_\alpha \mathbb{I}(d(\theta_n, \theta^*) < \Delta)
        [d(\theta_n, \theta^*)]^\alpha \right]
        + \frac{1}{N} \sum_{n=0}^{N-1} \left[ 2 \mathbb{I}(d(\theta_n, \theta^*) \geq \Delta)
        \sup_{\theta \in \Theta} \pi_\theta \left| G_\theta + F_\theta + 2 H_\theta \right| \right] \\
        & \xrightarrow{\mathbb{P}} 0 .
    \end{align*}
    The last inequality follows from the fact that
    \begin{equation*}
        \frac{1}{N} \sum_{n=1}^N d^{\alpha_0}(\theta_n, \theta_{n+1}) \mathbb{I}(d(\theta_n, \theta_{n+1}) < 1) \xrightarrow{\mathbb{P}} 0
        \quad \text{and} \quad
        \frac{1}{N} \sum_{n=1}^N
        \mathbb{I} \left( d(\theta_n,\theta_{n+1}) \geq 1 \right) \xrightarrow{\mathbb{P}} 0 .
    \end{equation*}
    These limits correspond to Assumption \ref{assumption_theta_difference_average_convergence}, which follows from Assumption \ref{assumption_theta_convergence_strong_diminishing_adaptation}.

    Thus, \eqref{eq_proof_lemma_CLT_convergence_1} has been established.
    Therefore, for the asymptotic normality, it remains to verify the Lindeberg condition.
    For any $\epsilon>0$,
    \begin{align*}
        & \quad \frac{1}{N} \sum_{n=0}^{N-1} \mathbb{E} \left[ \Delta M_{n+1}^2
        \mathbb{I}(|\Delta M_{n+1}| \geq \epsilon \sqrt{N}) \right] \\
        & \leq \frac{25}{N} \sum_{n=0}^{N-1} \mathbb{E} \left[ ((T_{n+1}-\rho_{\theta_n}(X_{n+1}))Z^{(c)}_{n+1})^2
        \mathbb{I}(|(T_{n+1}-\rho_{\theta_n}(X_{n+1}))Z^{(c)}_{n+1}| \geq \epsilon \sqrt{N}/5) \right] \\
        & \quad + \frac{25}{N} \sum_{n=0}^{N-1} \mathbb{E} \left[ \left( Z^{(u)}_{n+1}-\mathbb{E}Z^{(u)}_{n+1} \right)^2
        \mathbb{I}(|Z^{(u)}_{n+1}-\mathbb{E}Z^{(u)}_{n+1}| \geq \epsilon \sqrt{N}/5) \right] \\
        & \quad + \frac{25}{N} \sum_{n=0}^{N-1} \mathbb{E} \left[ (h_{\theta_n}(\Lambda_n))^2
        \mathbb{I}(|h_{\theta_n}(\Lambda_n)| \geq \epsilon \sqrt{N}/5) \right] \\
        & \quad + \frac{25}{N} \sum_{n=0}^{N-1} \mathbb{E} \left[ (\hat{h}_{\theta_n}(\Lambda_{n+1}))^2
        \mathbb{I}(|\hat{h}_{\theta_n}(\Lambda_{n+1})| \geq \epsilon \sqrt{N}/5) \right] \\
        & \quad + \frac{25}{N} \sum_{n=0}^{N-1} \mathbb{E} \left[ ((P_{\theta_n}\hat{h}_{\theta_n})(\Lambda_n))^2
        \mathbb{I}(|(P_{\theta_n}\hat{h}_{\theta_n})(\Lambda_n)| \geq \epsilon \sqrt{N}/5) \right] \\
        & \leq \frac{25}{N} \sum_{n=0}^{N-1} \mathbb{E} \left[ \left( Z^{(c)}_{n+1} \right)^2
        \mathbb{I}(|Z^{(c)}_{n+1}| \geq \epsilon \sqrt{N}/5) \right]
        + \frac{25}{N} \sum_{n=0}^{N-1} \mathbb{E} \left[ (\mathbb{E}|Z^{(c)}|)^2
        \mathbb{I}((\mathbb{E}|Z^{(c)}|)^2 \geq \epsilon \sqrt{N}/5) \right] \\
        & \quad + \frac{25}{N} \sum_{n=0}^{N-1} \mathbb{E} \left[ \left( Z^{(u)}_{n+1}-\mathbb{E}Z^{(u)}_{n+1} \right)^2
        \mathbb{I}(|Z^{(u)}_{n+1}-\mathbb{E}Z^{(u)}_{n+1}| \geq \epsilon \sqrt{N}/5) \right] \\
        & \quad + \frac{25}{N} \sum_{n=0}^{N-1} \mathbb{E} \left[ (C_{\hat{h},\gamma} V^{\gamma}(\Lambda_{n+1}))^2
        \mathbb{I}(|C_{\hat{h},\gamma} V^{\gamma}(\Lambda_{n+1})| \geq \epsilon \sqrt{N}/5) \right] \\
        & \quad + \frac{25}{N} \sum_{n=0}^{N-1} \mathbb{E} \left[ (C_{P\hat{h},\gamma} V^{\gamma}(\Lambda_n))^2
        \mathbb{I}(|C_{P\hat{h},\gamma} V^{\gamma}(\Lambda_n)| \geq \epsilon \sqrt{N}/5) \right] ,
    \end{align*}
    where the constant $C_{P\hat{h},\gamma} = C_{\hat{h},\gamma}(\beta_\gamma+b_\gamma)$ satisfies $|P\hat{h}| \leq C_{\hat{h},\gamma} PV^\gamma \leq C_{\hat{h},\gamma}(\beta_\gamma+b_\gamma)V^\gamma$.
    Because The second moments of the sequence $\{(Z^{(c)}_{n+1},Z^{(u)}_{n+1})\}$ are uniformly bounded, the first three terms on the right-hand side converge to zero.
    The fourth and fifth terms can be derived from the inequality that
    \begin{align*}
        & \quad \frac{1}{N} \sum_{n=0}^{N-1} \mathbb{E} \left[ V^{2\gamma}(\Lambda_n) \mathbb{I}(|V^{\gamma}(\Lambda_n)| \geq \epsilon \sqrt{N}/5) \right]
        \leq \frac{1}{N} \sum_{n=0}^{N-1} \mathbb{E} \left[ \frac{5}{\epsilon \sqrt{N}} V^{3\gamma}(\Lambda_n) \right] \\
        & \leq \frac{1}{N} \sum_{n=0}^{N-1} \frac{5}{\epsilon \sqrt{N}} \max \left\{ \frac{b_{3\gamma}}{1-\beta_{3\gamma}},\mathbb{E} V(\Lambda_0) \right\}
        \leq \frac{5}{\epsilon \sqrt{N}}
        \max \left\{ \frac{b_{3\gamma}}{1-\beta_{3\gamma}},\mathbb{E} V(\Lambda_0) \right\}
        \rightarrow 0 ,
    \end{align*}
    as $N \rightarrow \infty$, where $\gamma \in (0,1/3]$.

    Thus, we have already established the law of large numbers for the conditional variance and verified the Lindeberg condition.
    Therefore, we can conclude from Corollary 3.1 in \cite{hallMartingaleLimitTheory1980} that
    \begin{align*}
        \frac{1}{\sqrt{N}} \sum_{n=0}^{N-1} \left[ (T_{n+1}-\rho_{\theta_n}(X_{n+1})) Z^{(c)}_{n+1} - h_{\theta_n}(\Lambda_n) + Z^{(u)}_{n+1}-\mathbb{E}Z^{(u)}_{n+1}
        + \hat{h}_{\theta_n}(\Lambda_{n+1}) - (P_{\theta_n}\hat{h}_{\theta_n})(\Lambda_n) \right] \\
        \xrightarrow{d} \mathcal{N}(0, {\sigma^*_{(Z)}}^2) . &
    \end{align*}

    For $T_{N,3} = \frac{1}{\sqrt{N}} \sum_{n=0}^{N-1} \left[ \hat{h}_{\theta_{n+1}}(\Lambda_{n+1}) - \hat{h}_{\theta_n}(\Lambda_{n+1}) \right]$, by Corollary \ref{corollary_V_Poisson_solution_robust_lipschitz_continuous_bounded}, we have
    \begin{equation*}
        \left| \hat{h}_{\theta_{n+1}}(\Lambda_{n+1}) - \hat{h}_{\theta_n}(\Lambda_{n+1}) \right|
        \leq \tilde{L}_{\hat{h}, \kappa, 1-\kappa} V^\kappa(\Lambda_{n+1})[d(\theta_n,\theta_{n+1})]^{1-\kappa}
        + 2C_{\hat{h},\kappa} V^\kappa(\Lambda_{n+1}) \mathbb{I}(d(\theta_n,\theta_{n+1}) \geq \tilde{\Delta}_{\hat{h}, \kappa, 1-\kappa})
    \end{equation*}
    with any $\kappa \in (0,1)$.
    Moreover, when $\kappa + p(1-\kappa) = \frac{1}{2}$, where $p  \in (0,1/2)$ is the constant in Assumption \ref{assumption_theta_convergence_strong_diminishing_adaptation}, it holds that
    \begin{align*}
        & \quad |T_{N,3}| \\
        & \leq \frac{1}{\sqrt{N}} \sum_{n=0}^{N-1} \left[ \tilde{L}_{\hat{h}, \kappa, 1-\kappa} V^\kappa(\Lambda_{n+1})[d(\theta_n,\theta_{n+1})]^{1-\kappa} \right] \\
        & \quad + \frac{1}{\sqrt{N}} \sum_{n=0}^{N-1} \left[ 2C_{\hat{h},\kappa} V^\kappa(\Lambda_{n+1}) \mathbb{I}(d(\theta_n,\theta_{n+1}) \geq \tilde{\Delta}_{\hat{h}, \kappa, 1-\kappa}) \right] \\
        & \leq \frac{\tilde{L}_{\hat{h}, \kappa, 1-\kappa}}{\sqrt{N}} \left( \sum_{n=0}^{N-1} V(\Lambda_{n+1}) \right)^{\kappa} \left( \sum_{n=0}^{N-1} d(\theta_n,\theta_{n+1}) \right)^{1-\kappa} \\
        & \quad + \frac{2C_{\hat{h},\kappa}}{\sqrt{N}} \left( \sum_{n=0}^{N-1} V(\Lambda_{n+1}) \right)^{\kappa} \left( \sum_{n=0}^{N-1} \mathbb{I}(d(\theta_n,\theta_{n+1}) \geq \tilde{\Delta}_{\hat{h}, \kappa, 1-\kappa}) \right)^{1-\kappa} \\
        &= \tilde{L}_{\hat{h}, \kappa, 1-\kappa}
        \left( \frac{1}{N} \sum_{n=0}^{N-1} V(\Lambda_{n+1}) \right)^{\kappa}
        \left( \frac{1}{N^p} \sum_{n=0}^{N-1} d(\theta_n,\theta_{n+1}) \right)^{1-\kappa} \\
        & \quad + 2C_{\hat{h},\kappa}
        \left( \frac{1}{N} \sum_{n=0}^{N-1} V(\Lambda_{n+1}) \right)^{\kappa}
        \left( \frac{1}{N^p} \sum_{n=0}^{N-1} \mathbb{I}(d(\theta_n,\theta_{n+1}) \geq \tilde{\Delta}_{\hat{h}, \kappa, 1-\kappa}) \right)^{1-\kappa} \\
        & \xrightarrow{\mathbb{P}} 0
    \end{align*}
    by $\frac{1}{N} \sum_{n=1}^N V(\Lambda_n) = O_P(1)$ in Lemma \ref{lemma_convergence_V_as} and $\sum_{n=0}^{N-1} d(\theta_n,\theta_{n+1}) = o_P(N^p)$ in Assumption \ref{assumption_theta_convergence_strong_diminishing_adaptation}.

    Finally, $T_{N,4} = \frac{1}{\sqrt{N}} \left[ \hat{h}_{\theta_0}(\Lambda_0) - \hat{h}_{\theta_N}(\Lambda_N) \right] = o_P(1)$ can be obtained by the bound of $\hat{h}$ in Corollary \ref{corollary_V_Poisson_solution_robust_lipschitz_continuous_bounded} and $V(\Lambda_n) = O_P(1)$ in Lemma \ref{lemma_convergence_V_as} directly.

    Combining the asymptotic normality of $T_{N,1}+T_{N,2}$ with the fact that $T_{N,3} = o_P(1)$ and $T_{N,4} = o_P(1)$, we conclude that
    \begin{equation*}
        \frac{1}{\sqrt{N}} \sum_{n=1}^{N}
        \left[ (T_n-\rho_{\theta_{n-1}}(X_n)) Z^{(c)}_n + Z^{(u)}_n
        - \pi_{\theta_{n-1}} h_{\theta_{n-1}}
        -\mathbb{E}Z^{(u)}_n \right]
        \xrightarrow{d} \mathcal{N}(0, {\sigma^*_{(Z)}}^2) .
    \end{equation*}
\end{proof}

\subsection{Continuity of Functions $h$, $F$, $G$, $H$}

\label{subsec_Holder_continuity_functions}

This subsection consists mainly of tedious calculations.
Throughout the proofs, the Cauchy-Schwarz inequality and Hölder's inequality are used repeatedly.
Therefore, we do not explicitly indicate each instance in which these inequalities are applied.
The continuity results in this subsection can be combined with Lemma \ref{lemma_convergence_average_enter} to establish the law of large numbers for the conditional variance in the proof of Lemma \ref{lemma_CLT}.

\subsubsection{Properties of $h$}

Define
\begin{equation*}
    h_\theta(\Lambda)
    = \mathbb{E}_{X \sim \Gamma} \left[ [g_\theta(\Lambda, X) - \rho_\theta(X)] f_{Z^{(c)},\theta}(X) \right] .
\end{equation*}

\begin{lemma}
    \label{lemma_h_lipschitz_continuous_bounded}

    Let the Lyapunov function $V: \mathrm{X} \to [1, \infty)$.
    Suppose that Assumptions \ref{assumption_lipschitz_continuity_functions_kernels} and \ref{assumption_lipschitz_continuity_conditional_expectation_relaxed} hold.
    If $\sup_{\theta \in \Theta} \mathbb{E}_\theta \left[ {Z^{(c)}}^2 \right] < \infty$, then the family of functions $\{h_\theta\}_{\theta \in \Theta}$ is bounded by $\sup_{\theta \in \Theta} \mathbb{E}_\theta \left| Z^{(c)} \right|$ and Lipschitz continuous with respect to $(\theta,\Lambda)$ with a Lipschitz constant $L_P = L_g \sqrt{\sup_{\theta \in \Theta} \mathbb{E}_\theta \left[ {Z^{(c)}}^2 \right]} + L_\rho \sqrt{\sup_{\theta \in \Theta} \mathbb{E}_\theta \left[ {Z^{(c)}}^2 \right]} + 2L_f$.
    In particular, for any $\alpha,\tilde{\alpha},\kappa,\tilde{\kappa} \in (0,1)$ and $\gamma \in (0,1]$, the family of functions $\{h_\theta\}_{\theta \in \Theta}$ is $((1, L_{h, \kappa, \alpha} V^\kappa, \alpha), (1, \tilde{L}_{h, \tilde{\kappa}, \tilde{\alpha}} V^{\tilde{\kappa}}, \tilde{\alpha}))$-joint locally Hölder continuous, and bounded by $C_{h,\gamma} V^\gamma$ with corresponding positive constants.
\end{lemma}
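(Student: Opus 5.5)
The plan is to write $h_\theta(\Lambda) = -\mathbb{E}_{X\sim\Gamma}[\alpha_\theta(X)^T\beta(\Lambda) f_{Z^{(c)},\theta}(X)]$, or equivalently to keep the form $\mathbb{E}[(g_\theta(\Lambda,X)-\rho_\theta(X)) f_{Z^{(c)},\theta}(X)]$. From this form the three claims follow in order: boundedness, joint Lipschitz continuity, and then the weighted Hölder statements.

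\emph{Boundedness.} Since $|g_\theta(\Lambda,x)-\rho_\theta(x)|\le 1$, we get $|h_\theta(\Lambda)|\le \mathbb{E}|f_{Z^{(c)},\theta}(X)|\le \mathbb{E}_\theta|Z^{(c)}|$ by conditional Jensen. Taking the supremum over $\theta$ gives the uniform bound. Because $V\ge 1$, this immediately yields $|h_\theta|\le C_{h,\gamma}V^\gamma$ for every $\gamma\in(0,1]$, with $C_{h,\gamma}=\sup_\theta\mathbb{E}_\theta|Z^{(c)}|$.

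\emph{Lipschitz continuity.} We telescope:
\begin{align*}
h_\theta(\Lambda)-h_{\theta'}(\Lambda') &= \mathbb{E}\bigl[(g_\theta(\Lambda,X)-g_{\theta'}(\Lambda',X))f_{Z^{(c)},\theta}\bigr] - \mathbb{E}\bigl[(\rho_\theta-\rho_{\theta'})f_{Z^{(c)},\theta}\bigr] \\
&\quad + \mathbb{E}\bigl[(g_{\theta'}(\Lambda',X)-\rho_{\theta'})(f_{Z^{(c)},\theta}-f_{Z^{(c)},\theta'})\bigr].
\end{align*}
The first two terms are handled by Cauchy--Schwarz, combining $\|f_{Z^{(c)},\theta}\|_{L^2(\Gamma)}\le(\sup_\theta\mathbb{E}_\theta {Z^{(c)}}^2)^{1/2}$ with Assumption \ref{assumption_lipschitz_continuity_functions_kernels}. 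This gives $L_g(\|\theta-\theta'\|+\|\Lambda-\Lambda'\|)$ and $L_\rho\|\theta-\theta'\|$, each multiplied by that square root. For the third term, the prefactor has absolute value at most $1$, so the term is at most $\|f_\theta-f_{\theta'}\|_{L^1(\Gamma)}\le L_f d(\theta,\theta')$ by Assumption \ref{assumption_lipschitz_continuity_conditional_expectation_relaxed}. Summing the three contributions gives a Lipschitz constant no larger than the stated $L_P$; the factor $2L_f$ leaves slack.

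\emph{Joint locally Hölder continuity.} Here I will use the definition from Section \ref{sec_lemmas_Markov_chain}, where the triple $(\Delta,L,\alpha)$ means a local Hölder bound with a $V^\kappa$-weighted constant when the increment is below $\Delta$. For increments of size $t\le 1$ (taking $\Delta=1$), global Lipschitz continuity gives a bound $L_P t\le L_P t^\alpha\le L_P t^\alpha V^\kappa$, using $t\le t^\alpha$ for $t\le1$ and $V^\kappa\ge1$. So we may take $L_{h,\kappa,\alpha}=\tilde L_{h,\tilde\kappa,\tilde\alpha}=L_P$. Alternatively, interpolating between the Lipschitz bound and the uniform bound $2\sup|h|$ gives the same conclusion without the restriction on $t$.

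The main obstacle is purely definitional: checking that the precise form of ``joint locally Hölder continuous'' (in $\theta$ and in $\Lambda$ separately or jointly, and with which weight) is implied by a global Lipschitz bound. If the definition requires Hölder continuity in $\Lambda$ uniformly over a $\theta$-neighbourhood, the same telescoping covers it.
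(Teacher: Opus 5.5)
Your proof is correct and follows the paper's argument: conditional Jensen for the uniform bound, a telescoping decomposition with Cauchy--Schwarz and Assumptions \ref{assumption_lipschitz_continuity_functions_kernels} and \ref{assumption_lipschitz_continuity_conditional_expectation_relaxed} for the joint Lipschitz property, and then $V\ge 1$ with $t\le t^\alpha$ for $t<1$ for the weighted local H\"older statements. The one difference is that you put both $f_{Z^{(c)},\theta}-f_{Z^{(c)},\theta'}$ contributions into a single term with prefactor $|g_{\theta'}-\rho_{\theta'}|\le 1$, so you get $L_f$ where the paper's four-term split gives $2L_f$; this is harmless because the stated constant is only an upper bound.
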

\begin{proof}
    First, by Jensen's inequality,
    \begin{equation*}
        \left| h_\theta(\Lambda) \right| \leq \mathbb{E}_\theta \left| Z^{(c)} \right| .
    \end{equation*}
    Moreover,
    \begin{align*}
        & \quad \left| h_\theta(\Lambda) - h_{\theta^\prime}(\Lambda^\prime) \right| \\
        & \leq \mathbb{E}_{X \sim \Gamma} \left|
            [g_\theta(\Lambda, X) - \rho_\theta(X)] f_{Z^{(c)},\theta}(X)
            - [g_{\theta^\prime}(\Lambda^\prime, X) - \rho_{\theta^\prime}(X)] f_{Z^{(c)},\theta^\prime}(X)
        \right| \\
        & \leq \mathbb{E}_{X \sim \Gamma} \left|
            g_\theta(\Lambda, X) f_{Z^{(c)},\theta}(X)
            - g_{\theta^\prime}(\Lambda^\prime, X) f_{Z^{(c)},\theta^\prime}(X)
        \right|
        + \mathbb{E}_{X \sim \Gamma} \left|
            \rho_\theta(X) f_{Z^{(c)},\theta}(X)
            - \rho_{\theta^\prime}(X) f_{Z^{(c)},\theta^\prime}(X)
        \right| \\
        & \leq \mathbb{E}_{X \sim \Gamma} \left[ \left| g_\theta(\Lambda, X) - g_{\theta^\prime}(\Lambda^\prime, X) \right| |f_{Z^{(c)},\theta}(X)| \right]
        + \mathbb{E}_{X \sim \Gamma} \left[ g_{\theta^\prime}(\Lambda^\prime, X) \left| f_{Z^{(c)},\theta}(X) - f_{Z^{(c)},\theta^\prime}(X) \right| \right] \\
        & \quad + \mathbb{E}_{X \sim \Gamma} \left[ \left| \rho_\theta(X) - \rho_{\theta^\prime}(X) \right| |f_{Z^{(c)},\theta}(X)| \right]
        + \mathbb{E}_{X \sim \Gamma} \left[ \rho_{\theta^\prime}(X) \left| f_{Z^{(c)},\theta}(X) - f_{Z^{(c)},\theta^\prime}(X) \right| \right] \\
        & \leq \mathbb{E}_{X \sim \Gamma} \left[ \left| g_\theta(\Lambda, X) - g_{\theta^\prime}(\Lambda^\prime, X) \right| |f_{Z^{(c)},\theta}(X)| \right]
        + \mathbb{E}_{X \sim \Gamma} \left[ \left| \rho_\theta(X) - \rho_{\theta^\prime}(X) \right| |f_{Z^{(c)},\theta}(X)| \right] \\
        & \quad + 2 \mathbb{E}_{X \sim \Gamma} \left[ \left| f_{Z^{(c)},\theta}(X) - f_{Z^{(c)},\theta^\prime}(X) \right| \right] \\
        & \leq \sqrt{\sup_{\theta \in \Theta} \mathbb{E}_\theta \left[ {Z^{(c)}}^2 \right]} \left\| g_{\theta}(\Lambda,\cdot) - g_{\theta^\prime}(\Lambda^\prime,\cdot) \right\|_{L^2(\Gamma)}
        + \sqrt{\sup_{\theta \in \Theta} \mathbb{E}_\theta \left[ {Z^{(c)}}^2 \right]} \left\| \rho_{\theta} - \rho_{\theta^\prime} \right\|_{L^2(\Gamma)} \\
        & \quad + 2 \left\| f_{Z^{(c)},\theta} - f_{Z^{(c)},\theta^\prime} \right\|_{L^1(\Gamma)} \\
        & \leq \left( L_g \sqrt{\sup_{\theta \in \Theta} \mathbb{E}_\theta \left[ {Z^{(c)}}^2 \right]}
        + L_\rho \sqrt{\sup_{\theta \in \Theta} \mathbb{E}_\theta \left[ {Z^{(c)}}^2 \right]}
        + 2L_f \right)
        \left( d(\theta,\theta^\prime) + d(\Lambda,\Lambda^\prime) \right) .
    \end{align*}
    This proves that $\{h_\theta\}_{\theta \in \Theta}$ is Lipschitz continuous in both $\theta$ and $\Lambda$. 

    Consequently, by $V \geq 1$, $\alpha,\tilde{\alpha},\kappa,\tilde{\kappa} \in (0,1)$ and $\gamma \in (0,1]$, the family is also jointly locally Hölder continuous with parameters $((1, L_{h, \kappa, \alpha} V^\kappa, \alpha), (1, \tilde{L}_{h, \tilde{\kappa}, \tilde{\alpha}} V^{\tilde{\kappa}}, \tilde{\alpha}))$, and is bounded by $C_{h,\gamma} V^\gamma$ for some constant $C_{h,\gamma}>0$. 
\end{proof}

\subsubsection{Properties of $F_\theta$}

Define
\begin{equation*}
    F_\theta = P_\theta(\hat{h}_\theta^2) - (P_\theta \hat{h}_\theta)^2 .
\end{equation*}

\begin{lemma}
    \label{lemma_properties_F}

    Suppose that Assumptions \ref{assumption_lipschitz_continuity_functions_kernels}, \ref{assumption_lipschitz_continuity_conditional_expectation} and \ref{assumption_simultaneous_properties} hold.
    If $\sup_{\theta \in \Theta} \mathbb{E}_\theta \left[ {Z^{(c)}}^2 \right] < \infty$, then for any $\alpha,\tilde{\alpha},\kappa,\tilde{\kappa} \in (0,1)$ and $\gamma \in (0,1]$, the family of functions $\{F_\theta\}_{\theta \in \Theta}$ is $((1, L_{F, \kappa, \alpha} V^\kappa, \alpha), (1, \tilde{L}_{F, \tilde{\kappa}, \tilde{\alpha}} V^{\tilde{\kappa}}, \tilde{\alpha}))$-joint locally Hölder continuous, and bounded by $C_{F,\gamma} V^\gamma$ with corresponding positive constants.
\end{lemma}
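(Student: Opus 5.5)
The plan is to write $F_\theta = P_\theta(\hat h_\theta^2) - (P_\theta\hat h_\theta)^2$ and reduce both the bound and the joint local Hölder continuity of $\{F_\theta\}$ to the corresponding properties of $\{\hat h_\theta\}$. Those properties come from Lemma \ref{lemma_h_lipschitz_continuous_bounded} together with the Poisson-equation continuity results: Theorem \ref{theorem_V_invariant_probability_Holder} and Corollary \ref{corollary_V_Poisson_solution_robust_lipschitz_continuous_bounded}. Under Assumption \ref{assumption_simultaneous_properties}, Lemma \ref{lemma_h_lipschitz_continuous_bounded} shows that $h_\theta$ is bounded and Lipschitz in $(\theta,\Lambda)$, hence jointly locally Hölder with weights $V^\kappa$. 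Feeding this into Corollary \ref{corollary_V_Poisson_solution_robust_lipschitz_continuous_bounded}, I expect $|\hat h_\theta| \le C_{\hat h,\gamma}V^\gamma$ for every $\gamma\in(0,1]$. I also expect $\{\hat h_\theta\}$ to be jointly locally Hölder, with constants of the form $L V^\kappa$ in $\Lambda$ and $\tilde L V^{\tilde\kappa}$ in $\theta$, for arbitrary exponents in $(0,1)$.

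\textbf{Boundedness.} Take $\gamma' = \gamma/2$ and use $\hat h_\theta^2 \le C^2 V^{\gamma}$. The drift inequality of Lemma \ref{lemma_drift_condition_inequality}, applied with $V^\gamma$, gives $P_\theta V^\gamma \le (\beta_\gamma + b)V^\gamma$. Hence $|F_\theta| \le P_\theta(\hat h_\theta^2) \le C_{F,\gamma}V^\gamma$.

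\textbf{Hölder continuity.} Split
\[
F_\theta(\Lambda)-F_{\theta'}(\Lambda') = \bigl[P_\theta(\hat h_\theta^2)(\Lambda) - P_{\theta'}(\hat h_{\theta'}^2)(\Lambda')\bigr] - \bigl[(P_\theta\hat h_\theta)(\Lambda)-(P_{\theta'}\hat h_{\theta'})(\Lambda')\bigr]\bigl[(P_\theta\hat h_\theta)(\Lambda)+(P_{\theta'}\hat h_{\theta'})(\Lambda')\bigr].
\]
For the second bracket, the $\Lambda$-part is handled by $P_\theta$ applied to a locally Hölder function, and the $\theta$-part by the already-established continuity of $(\theta,\Lambda)\mapsto P_\theta \hat h_\theta$. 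The last factor is bounded by $CV^{\gamma}$ with $\gamma$ small. The product therefore has a Hölder constant of order $V^{\kappa+\gamma}$, and the exponents can be adjusted by choosing the input exponents small.

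For the first bracket, first show that $\{\hat h_\theta^2\}$ is jointly locally Hölder. This follows from $a^2-b^2=(a-b)(a+b)$ together with the bound $V^\gamma$ on $\hat h$, at the price of raising the weight exponent from $\kappa$ to $\kappa+2\gamma$. Then apply the same one-step kernel continuity lemma used for $P_\theta\hat h_\theta$. That lemma is the coupled robust Lipschitz continuity of Lemma \ref{lemma_continuity_transition_kernels}, combined with the drift condition to integrate the weights $V^{\kappa}$ under the coupling $K_{\theta,\theta'}$. The mass-deficit part is bounded by $L_P(d(\Lambda,\Lambda')+d(\theta,\theta'))$ times $\sup P V^{\kappa}$.

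\textbf{Main obstacle.} I expect the main difficulty to be the bookkeeping of exponents. Squaring doubles the growth, so the weights $V^{\kappa}$ become $V^{\kappa+2\gamma}$, and the coupling step needs Hölder's inequality to pass $V$-weights through $K_{\theta,\theta'}$ while losing some exponent. One must check that for any target $\kappa,\alpha\in(0,1)$ the inputs can be chosen with small enough exponents that everything stays below $V^{1}$, which is integrable under the drift condition. Because Corollary \ref{corollary_V_Poisson_solution_robust_lipschitz_continuous_bounded} holds for all exponents in $(0,1)$, this should be achievable by choosing $\gamma$ and the intermediate $\kappa$ sufficiently small.
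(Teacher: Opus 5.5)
Your proposal is correct and follows the paper's route. The paper likewise gets the bound and joint H\"older continuity of $\hat h_\theta$ from Lemma~\ref{lemma_h_lipschitz_continuous_bounded} and Corollary~\ref{corollary_V_Poisson_solution_robust_lipschitz_continuous_bounded}, then obtains $F_\theta$ by applying Corollary~\ref{corollary_h_power_robust_lipschitz_continuous_bounded} (your $(a-b)(a+b)$ step for $\hat h_\theta^2$ and $(P_\theta\hat h_\theta)^2$) and Corollary~\ref{corollary_n_step_h_robust_lipschitz_continuous_bounded} with $n=1$ (your one-step coupled-kernel step).

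One correction to your sketch of that kernel step. Since $V$ is unbounded, the mass-deficit term cannot be bounded by the deficit mass times a supremum; it needs truncation and H\"older's inequality as in Lemma~\ref{lemma_robust_lipschitz_continuous_bounded}, giving a factor $\epsilon^{1-\gamma/a}$. In addition, coupled jumps longer than the local H\"older radius must be pushed into the deficit as in Lemma~\ref{lemma_decompose_bounded_complement}. Citing Corollary~\ref{corollary_n_step_h_robust_lipschitz_continuous_bounded} directly handles both points.
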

\begin{proof}
    The properties of $\hat{h}_\theta$ follow directly from 
    Lemma \ref{lemma_h_lipschitz_continuous_bounded} and 
    Corollary \ref{corollary_V_Poisson_solution_robust_lipschitz_continuous_bounded}.
    The properties of $F_\theta$ are consequences of those of $\hat{h}_\theta$ 
    together with Corollaries \ref{corollary_n_step_h_robust_lipschitz_continuous_bounded} 
    and \ref{corollary_h_power_robust_lipschitz_continuous_bounded}.
\end{proof}

\subsubsection{Properties of $G_\theta$}

Define
\begin{align*}
    G_\theta(\Lambda) &= \mathbb{E}_\theta \left[ (T_1-\rho_\theta(X_1))^2{Z^{(c)}_1}^2 - h_\theta^2(\Lambda_0) \mid \Lambda_0=\Lambda \right]
    + \mathbb{E}_\theta \left\{ \left[ Z^{(u)} - \mathbb{E}_\theta \left[ Z^{(u)} \right] \right]^2 \right\} \\
    & \quad + 2 \mathbb{E}_\theta \left[ [(T_1-\rho_\theta(X_1))Z^{(c)}_1-h_\theta(\Lambda)][Z^{(u)}_1-\mathbb{E}_\theta Z^{(u)}_1] \mid \Lambda_0=\Lambda \right] \\
    &= \int \left\{ \left[ \rho_\theta^2(x) + (1-2\rho_\theta(x))g_\theta(\Lambda, x) \right] f_{{Z^{(c)}}^2,\theta}(x) + f_{{Z^{(u)}}^2,\theta}(x) \right\} \Gamma(\mathrm{d} x)
    - \left[ \int f_{Z^{(u)},\theta}(x) \Gamma(\mathrm{d} x) \right]^2 \\
    & \quad - h_\theta^2(\Lambda)
    + 2\mathbb{E}_{X \sim \Gamma} \left[ (g_\theta(\Lambda, X)-\rho_\theta(X)) \mathbb{E}_{(X, Z) \sim \Gamma_{X,Z,\theta}} \left[ (Z^{(u)}-\mathbb{E}_\theta Z^{(u)})Z^{(c)} \mid X \right] \right] \\
    &= \int \left\{ \left[ \rho_\theta^2(x) + (1-2\rho_\theta(x))g_\theta(\Lambda, x) \right] f_{{Z^{(c)}}^2,\theta}(x) + f_{{Z^{(u)}}^2,\theta}(x) \right. \\
    & \quad \left. + 2 (g_\theta(\Lambda, x)-\rho_\theta(x)) f_{(Z^{(u)}-\mathbb{E}_\theta Z^{(u)})Z^{(c)}, \theta}(x) \right\} \Gamma(\mathrm{d} x) \\
    & \quad - \left[ \int f_{Z^{(u)},\theta}(x) \Gamma(\mathrm{d} x) \right]^2 - h_\theta^2(\Lambda) .
\end{align*}

\begin{lemma}
    \label{lemma_properties_G}

    Let the Lyapunov function $V: \mathrm{X} \to [1, \infty)$.
    Suppose that Assumptions \ref{assumption_lipschitz_continuity_functions_kernels} and \ref{assumption_lipschitz_continuity_conditional_expectation} hold.
    If $\sup_{\theta \in \Theta} \mathbb{E}_\theta \left[ {Z^{(c)}}^4 \right] < \infty$ and $\sup_{\theta \in \Theta} \mathbb{E}_\theta \left[ {Z^{(u)}}^4 \right] < \infty$, then the family of functions $\{G_\theta\}_{\theta \in \Theta}$ is bounded and Lipschitz continuous with respect to $(\theta,\Lambda)$.
    In particular, for any $\alpha,\tilde{\alpha},\kappa,\tilde{\kappa} \in (0,1)$ and $\gamma \in (0,1]$, the family of functions $\{G_\theta\}_{\theta \in \Theta}$ is $((1, L_{G, \kappa, \alpha} V^\kappa, \alpha), (1, \tilde{L}_{G, \tilde{\kappa}, \tilde{\alpha}} V^{\tilde{\kappa}}, \tilde{\alpha}))$-joint locally Hölder continuous, and bounded by $C_{G,\gamma} V^\gamma$ with corresponding positive constants.
\end{lemma}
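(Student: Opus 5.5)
The plan is to work directly from the explicit representation of $G_\theta$ displayed just before the statement. There $G_\theta(\Lambda)$ is a finite combination of three kinds of terms. The first kind is integrals against $\Gamma$ of products of the bounded functions $\rho_\theta$, $g_\theta(\Lambda,\cdot)$ with the conditional mean functions $f_{{Z^{(c)}}^2,\theta}$, $f_{{Z^{(u)}}^2,\theta}$ and $f_{(Z^{(u)}-\mathbb{E}_\theta Z^{(u)})Z^{(c)},\theta}$. The second is the square $\left[\int f_{Z^{(u)},\theta}\,\mathrm{d}\Gamma\right]^2$. The third is $h_\theta^2(\Lambda)$. I would prove boundedness and joint Lipschitz continuity in $(\theta,\Lambda)$ term by term. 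The locally Hölder statement with weights $V^\kappa$ and the bound $C_{G,\gamma}V^\gamma$ then follow exactly as at the end of the proof of Lemma \ref{lemma_h_lipschitz_continuous_bounded}, using $V\ge 1$: a bounded Lipschitz function satisfies $|G_\theta(\Lambda)-G_{\theta'}(\Lambda')|\le \min\{2C, L(d(\theta,\theta')+d(\Lambda,\Lambda'))\}$, which is dominated by any Hölder modulus of exponent $\alpha<1$ on bounded distances.

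\textbf{Boundedness.} Since $\rho_\theta, g_\theta\in[0,1]$, each integral term is bounded by a constant multiple of $\sup_\theta\mathbb{E}_\theta[{Z^{(c)}}^2]+\sup_\theta\mathbb{E}_\theta[{Z^{(u)}}^2]$, using Cauchy--Schwarz for the cross term. These suprema are finite because the fourth moments are uniformly bounded. The squared mean is bounded in the same way, and $|h_\theta|\le \sup_\theta\mathbb{E}_\theta|Z^{(c)}|$ by Lemma \ref{lemma_h_lipschitz_continuous_bounded}.

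\textbf{Lipschitz continuity.} For a generic term $\int a_\theta(\Lambda,x) f_\theta(x)\,\Gamma(\mathrm{d}x)$ with $a_\theta$ built from $\rho_\theta$ and $g_\theta(\Lambda,\cdot)$, I would split the difference as
\begin{equation*}
\int |a_\theta(\Lambda,\cdot)-a_{\theta'}(\Lambda',\cdot)|\,|f_\theta|\,\mathrm{d}\Gamma+\int |a_{\theta'}(\Lambda',\cdot)|\,|f_\theta-f_{\theta'}|\,\mathrm{d}\Gamma .
\end{equation*}
The second integral is at most $\sup|a|\cdot L_f d(\theta,\theta')$ by Assumption \ref{assumption_lipschitz_continuity_conditional_expectation}, since $e^{\lambda_f\|\phi\|}\ge1$. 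For the first integral I apply Cauchy--Schwarz: it is at most $\|a_\theta(\Lambda,\cdot)-a_{\theta'}(\Lambda',\cdot)\|_{L^2(\Gamma)}\,\|f_\theta\|_{L^2(\Gamma)}$. The first factor is controlled by $L_g$ and $L_\rho$ from Assumption \ref{assumption_lipschitz_continuity_functions_kernels}, after expanding products such as $(1-2\rho_\theta)g_\theta$ and using boundedness in $[0,1]$. The second factor is handled by Jensen's inequality: $\|f_{{Z^{(c)}}^2,\theta}\|_{L^2}^2\le \mathbb{E}_\theta[{Z^{(c)}}^4]$. This step is where the fourth-moment hypothesis is needed.

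The squared-mean term is a difference of squares of bounded quantities, and each quantity is Lipschitz by the same assumption. Likewise $h_\theta^2(\Lambda)-h_{\theta'}^2(\Lambda')=(h_\theta+h_{\theta'})(h_\theta-h_{\theta'})$ is Lipschitz by Lemma \ref{lemma_h_lipschitz_continuous_bounded}.

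\textbf{Main obstacle.} The one subtle point is the cross term $f_{(Z^{(u)}-\mathbb{E}_\theta Z^{(u)})Z^{(c)},\theta}$: its definition itself depends on $\theta$ through $\mathbb{E}_\theta Z^{(u)}$. I would treat it as listed in Assumption \ref{assumption_lipschitz_continuity_conditional_expectation}, which covers this family directly. I would also bound its $L^2$ norm by $\sqrt{\mathbb{E}_\theta[(Z^{(u)}-\mathbb{E}Z^{(u)})^2{Z^{(c)}}^2]}$, which is uniformly finite by Cauchy--Schwarz and the fourth moments. Everything else is routine bookkeeping of constants.
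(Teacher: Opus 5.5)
Your proposal is correct and follows the paper's proof essentially step for step. You split the explicit formula for $G_\theta$ term by term, and use Cauchy--Schwarz with $\|f_{{Z^{(c)}}^2,\theta}\|_{L^2(\Gamma)}$ controlled by the fourth moments. You take the $L^1$-Lipschitz bounds from Assumption \ref{assumption_lipschitz_continuity_conditional_expectation}, including the $\theta$-dependent cross-term family, which the paper also takes directly from that assumption. Then you use difference of squares for $h_\theta^2$ and the squared mean, and $V\ge 1$ to pass to the local Hölder form. The only cosmetic difference is that the paper first bounds $G_\theta$ by observing that it is a conditional variance, whereas you bound the terms individually; both give the same constant bound.
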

\begin{proof}
    For the boundedness of $G_\theta$, note that
    \begin{align*}
        G_\theta(\Lambda)
        &= \mathbb{E}_\theta \left[ \left\{
            (T_1-\rho_\theta(X_1)) Z^{(c)}_1 + Z^{(u)}_1 - \mathbb{E}_\theta \left[ Z^{(u)}_1 \right] \right. \right. \\
        & \quad\quad \left. \left. - \mathbb{E}_\theta \left[ (T_1-\rho_\theta(X_1)) Z^{(c)}_1 + Z^{(u)}_1 - \mathbb{E}_\theta \left[ Z^{(u)}_1 \right] \mid \Lambda_0 = \Lambda \right]
        \right\}^2\mid \Lambda_0 = \Lambda \right] \\
        & \leq \mathbb{E}_\theta \left[ \left\{
            (T_1-\rho_\theta(X_1)) Z^{(c)}_1 + Z^{(u)}_1 - \mathbb{E}_\theta \left[ Z^{(u)}_1 \right]
        \right\}^2 \mid \Lambda_0 = \Lambda \right] .
    \end{align*}
    By Assumption  \ref{assumption_lipschitz_continuity_conditional_expectation} and the moment conditions $\sup_{\theta \in \Theta} \mathbb{E}_\theta \left[ {Z^{(c)}}^4 \right] < \infty$ and $\sup_{\theta \in \Theta} \mathbb{E}_\theta \left[ {Z^{(u)}}^4 \right] < \infty$, we have
    \begin{align*}
        \left| G_\theta(\Lambda) \right|
        & \leq \int \left\{ f_{{Z^{(c)}}^2,\theta}(x) + f_{{Z^{(u)}}^2,\theta}(x)
        + 2 |f_{(Z^{(u)}-\mathbb{E}_\theta Z^{(u)})Z^{(c)}, \theta}(x)| \right\} \Gamma(\mathrm{d} x) \\
        & \leq \sup_{\theta \in \Theta} \mathbb{E}_\theta \left[ {Z^{(c)}}^2 \right]
        + \sup_{\theta \in \Theta} \mathbb{E}_\theta \left[ {Z^{(u)}}^2 \right]
        + 2 \sqrt{\sup_{\theta \in \Theta} \mathbb{E}_\theta \left[ {Z^{(c)}}^2 \right]
        \sup_{\theta \in \Theta} \mathbb{E}_\theta \left[ {Z^{(u)}}^2 \right]} \\
        & < \infty .
    \end{align*}

    The Lipschitz continuity with respect to $(\theta,\Lambda)$ follows from the Lipschitz continuity in Assumptions \ref{assumption_lipschitz_continuity_functions_kernels} and \ref{assumption_lipschitz_continuity_conditional_expectation}.
    Specifically, the Lipschitz continuity can be expressed as
    \begin{align*}
        & \quad |G_\theta(\Lambda) - G_{\theta^\prime}(\Lambda^\prime)| \\
        & \leq \int \left\{ \left| f_{{Z^{(c)}}^2,\theta}(x) \right| \left( |\rho_\theta^2(x) - \rho_{\theta^\prime}^2(x)| + |g_\theta(\Lambda, x) - g_{\theta^\prime}(\Lambda^\prime, x)| + 2| \rho_\theta(x)g_\theta(\Lambda, x) - \rho_{\theta^\prime}(x)g_{\theta^\prime}(\Lambda^\prime, x) | \right) \right. \\
        & \quad\quad + \left| \rho_{\theta^\prime}^2(x) + (1-2\rho_{\theta^\prime}(x))g_{\theta^\prime}(\Lambda^\prime, x) \right| \cdot |f_{{Z^{(c)}}^2,\theta}(x) - f_{{Z^{(c)}}^2,\theta^\prime}(x)| \\
        & \quad\quad + |f_{{Z^{(u)}}^2,\theta}(x) - f_{{Z^{(u)}}^2,\theta^\prime}(x)|
        + 2|f_{(Z^{(u)}-\mathbb{E}_\theta Z^{(u)})Z^{(c)}, \theta}(x)| \left( |g_\theta(\Lambda, x) - g_{\theta^\prime}(\Lambda^\prime, x)| + |\rho_\theta(x) - \rho_{\theta^\prime}(x)| \right) \\
        & \quad\quad + \left. 2|g_{\theta^\prime}(\Lambda^\prime, x)-\rho_{\theta^\prime}(x)| \cdot |f_{(Z^{(u)}-\mathbb{E}_\theta Z^{(u)})Z^{(c)}, \theta}(x) - f_{(Z^{(u)}-\mathbb{E}_{\theta^\prime} Z^{(u)})Z^{(c)}, \theta^\prime}(x)| \right\} \Gamma(\mathrm{d} x) \\
        & \quad + |h_\theta(\Lambda) - h_{\theta^\prime}(\Lambda^\prime)| |h_\theta(\Lambda) + h_{\theta^\prime}(\Lambda^\prime)| \\
        & \quad + \left| \int f_{Z^{(u)},\theta}(x) \Gamma(\mathrm{d} x) + \int f_{Z^{(u)},\theta^\prime}(x) \Gamma(\mathrm{d} x) \right|
        \int \left| f_{Z^{(u)},\theta}(x) - f_{Z^{(u)},\theta^\prime}(x) \right| \Gamma(\mathrm{d} x) \\
        & \leq \int \left\{ \left| f_{{Z^{(c)}}^2,\theta}(x) \right| \left( |(\rho_\theta(x) - \rho_{\theta^\prime}(x))(\rho_\theta(x) + \rho_{\theta^\prime}(x))| + |g_\theta(\Lambda, x) - g_{\theta^\prime}(\Lambda^\prime, x)| \right. \right. \\
        & \quad\quad\quad\quad\quad\quad + \left. 2 \left( |g_{\theta^\prime}(\Lambda^\prime, x)||\rho_\theta(x) - \rho_{\theta^\prime}(x)| + |\rho_\theta(x)| |g_\theta(\Lambda, x) - g_{\theta^\prime}(\Lambda^\prime, x)| \right) \right) \\
        & \quad\quad + 1 \cdot |f_{{Z^{(c)}}^2,\theta}(x) - f_{{Z^{(c)}}^2,\theta^\prime}(x)| + |f_{{Z^{(u)}}^2,\theta}(x) - f_{{Z^{(u)}}^2,\theta^\prime}(x)| \\
        & \quad\quad + 2|f_{(Z^{(u)}-\mathbb{E}_\theta Z^{(u)})Z^{(c)}, \theta}(x)| \left( |g_\theta(\Lambda, x) - g_{\theta^\prime}(\Lambda^\prime, x)| + |\rho_\theta(x) - \rho_{\theta^\prime}(x)| \right) \\
        & \quad\quad + \left. 2 \cdot |f_{(Z^{(u)}-\mathbb{E}_\theta Z^{(u)})Z^{(c)}, \theta}(x) - f_{(Z^{(u)}-\mathbb{E}_{\theta^\prime} Z^{(u)})Z^{(c)}, \theta^\prime}(x)| \right\} \Gamma(\mathrm{d} x) \\
        & \quad + |h_\theta(\Lambda) - h_{\theta^\prime}(\Lambda^\prime)| |h_\theta(\Lambda) + h_{\theta^\prime}(\Lambda^\prime)|
        + 2 \left[ \sup_{\theta \in \Theta} \mathbb{E}_\theta \left| {Z^{(u)}} \right| \right]
        \| f_{Z^{(u)},\theta} - f_{Z^{(u)},\theta^\prime} \|_{L^1(\Gamma)} \\
        & \leq \int \left| f_{{Z^{(c)}}^2,\theta}(x) \right| \left( 3|g_\theta(\Lambda, x) - g_{\theta^\prime}(\Lambda^\prime, x)| + 4|\rho_\theta(x) - \rho_{\theta^\prime}(x)| \right) \Gamma(\mathrm{d} x) \\
        & \quad + \int \left| f_{{Z^{(c)}}^2,\theta}(x) - f_{{Z^{(c)}}^2,\theta^\prime}(x) \right| \Gamma(\mathrm{d} x)
        + \int \left| f_{{Z^{(u)}}^2,\theta}(x) - f_{{Z^{(u)}}^2,\theta^\prime}(x) \right| \Gamma(\mathrm{d} x) \\
        & \quad + 2 \int \left|f_{(Z^{(u)}-\mathbb{E}_\theta Z^{(u)})Z^{(c)}, \theta}(x) \right| \left( |g_\theta(\Lambda, x) - g_{\theta^\prime}(\Lambda^\prime, x)| + |\rho_\theta(x) - \rho_{\theta^\prime}(x)| \right) \Gamma(\mathrm{d} x) \\
        & \quad + 2 \int \left| f_{(Z^{(u)}-\mathbb{E}_\theta Z^{(u)})Z^{(c)}, \theta}(x) - f_{(Z^{(u)}-\mathbb{E}_{\theta^\prime} Z^{(u)})Z^{(c)}, \theta^\prime}(x) \right| \Gamma(\mathrm{d} x) \\
        & \quad + |h_\theta(\Lambda) - h_{\theta^\prime}(\Lambda^\prime)| |h_\theta(\Lambda) + h_{\theta^\prime}(\Lambda^\prime)|
        + 2 \left[ \sup_{\theta \in \Theta} \mathbb{E}_\theta \left| {Z^{(u)}} \right| \right]
        \| f_{Z^{(u)},\theta} - f_{Z^{(u)},\theta^\prime} \|_{L^1(\Gamma)} \\
        & \leq 3 ||f_{{Z^{(c)}}^2,\theta}||_{L^2(\Gamma)} ||g_\theta(\Lambda, \cdot) - g_{\theta^\prime}(\Lambda^\prime, \cdot)||_{L^2(\Gamma)}
        + 4 ||f_{{Z^{(c)}}^2,\theta}||_{L^2(\Gamma)} ||\rho_\theta - \rho_{\theta^\prime}||_{L^2(\Gamma)} \\
        & \quad + ||f_{{Z^{(c)}}^2,\theta} - f_{{Z^{(c)}}^2,\theta^\prime}||_{L^1(\Gamma)}
        + ||f_{{Z^{(u)}}^2,\theta} - f_{{Z^{(u)}}^2,\theta^\prime}||_{L^1(\Gamma)} \\
        & \quad + 2 ||f_{(Z^{(u)}-\mathbb{E}_\theta Z^{(u)})Z^{(c)}, \theta}||_{L^2(\Gamma)} \left( ||g_\theta(\Lambda, \cdot) - g_{\theta^\prime}(\Lambda^\prime, \cdot)||_{L^2(\Gamma)} + ||\rho_\theta - \rho_{\theta^\prime}||_{L^2(\Gamma)} \right) \\
        & \quad + 2 ||f_{(Z^{(u)}-\mathbb{E} Z^{(u)})Z^{(c)}, \theta} - f_{(Z^{(u)}-\mathbb{E} Z^{(u)})Z^{(c)}, \theta^\prime}||_{L^1(\Gamma)} \\
        & \quad + |h_\theta(\Lambda) - h_{\theta^\prime}(\Lambda^\prime)| |h_\theta(\Lambda) + h_{\theta^\prime}(\Lambda^\prime)|
        + 2 \left[ \sup_{\theta \in \Theta} \mathbb{E}_\theta \left| {Z^{(u)}} \right| \right]
        \| f_{Z^{(u)},\theta} - f_{Z^{(u)},\theta^\prime} \|_{L^1(\Gamma)} \\
        & \leq 3 \sqrt{\sup_{\theta \in \Theta} \mathbb{E}_\theta \left[ {Z^{(c)}}^4 \right]} L_g \left( d(\theta, \theta^\prime) + d(\Lambda, \Lambda^\prime) \right)
        + 4 \sqrt{\sup_{\theta \in \Theta} \mathbb{E}_\theta \left[ {Z^{(c)}}^4 \right]} L_\rho d(\theta, \theta^\prime) \\
        & \quad + L_f d(\theta, \theta^\prime) + L_f d(\theta, \theta^\prime) \\
        & \quad + 2 \left[ \sup_{\theta \in \Theta} \mathbb{E}_\theta \left[ {Z^{(c)}}^4 \right]
        \sup_{\theta \in \Theta} \mathbb{E}_\theta \left[ {Z^{(u)}}^4 \right] \right]^{\frac{1}{4}}
        \left( L_g(d(\theta, \theta^\prime) + d(\Lambda, \Lambda^\prime)) + L_\rho d(\theta, \theta^\prime) \right) \\
        & \quad + 2 L_f d(\theta, \theta^\prime)
        + 2 \left[ \sup_{\theta \in \Theta} \mathbb{E}_\theta \left| {Z^{(c)}} \right| \right]
        L_h \left( d(\theta, \theta^\prime) + d(\Lambda, \Lambda^\prime) \right)
        + 2 \left[ \sup_{\theta \in \Theta} \mathbb{E}_\theta \left| {Z^{(u)}} \right| \right]
        L_f d(\theta, \theta^\prime) \\
        &= \left[
            (3 L_g + 4 L_\rho) \sqrt{\sup_{\theta \in \Theta} \mathbb{E}_\theta \left[ {Z^{(c)}}^4 \right]}
            + 2 (L_g + L_\rho) \left[ \sup_{\theta \in \Theta} \mathbb{E}_\theta \left[ {Z^{(c)}}^4 \right]
            \sup_{\theta \in \Theta} \mathbb{E}_\theta \left[ {Z^{(u)}}^4 \right] \right]^{\frac{1}{4}} \right. \\
        & \quad\quad \left. + 4L_f + 2L_h \sup_{\theta \in \Theta} \mathbb{E}_\theta \left| Z^{(c)} \right|
        + 2L_f \sup_{\theta \in \Theta} \mathbb{E}_\theta \left| Z^{(u)} \right|
        \right] d(\theta, \theta^\prime) \\
        & \quad + \left[
            3 L_g \sqrt{\sup_{\theta \in \Theta} \mathbb{E}_\theta \left[ {Z^{(c)}}^4 \right]}
            + 2 L_g \left[ \sup_{\theta \in \Theta} \mathbb{E}_\theta \left[ {Z^{(c)}}^4 \right]
            \sup_{\theta \in \Theta} \mathbb{E}_\theta \left[ {Z^{(u)}}^4 \right] \right]^{\frac{1}{4}} \right. \\
        & \quad\quad \left. + 2 L_h \sup_{\theta \in \Theta} \mathbb{E}_\theta \left| Z^{(c)} \right|
        \right] d(\Lambda, \Lambda^\prime) .
    \end{align*}

    This proves that $\{G_\theta\}_{\theta \in \Theta}$ is Lipschitz continuous in both $\theta$ and $\Lambda$. 

    Consequently, by $V \geq 1$, $\alpha,\tilde{\alpha},\kappa,\tilde{\kappa} \in (0,1)$ and $\gamma \in (0,1]$, the family is also jointly locally Hölder continuous with parameters $((1, L_{G, \kappa, \alpha} V^\kappa, \alpha), (1, \tilde{L}_{G, \tilde{\kappa}, \tilde{\alpha}} V^{\tilde{\kappa}}, \tilde{\alpha}))$, and is bounded by $C_{G,\gamma} V^\gamma$ for some constant $C_{G,\gamma}>0$. 
\end{proof}

\subsubsection{Properties of $H_\theta$}

Define
\begin{align*}
    & \quad H_\theta(\Lambda) \\
    &= \mathbb{E}_\theta \left[ \left[ (T_1-\rho_\theta(X_1)) f_{Z^{(c)},\theta}(X_1) - h_\theta(\Lambda_0) + f_{Z^{(u)},\theta}(X_1)-\mathbb{E}_\theta Z^{(u)} \right]
    \left[ \hat{h}_\theta(\Lambda_1) - (P_\theta\hat{h}_\theta)(\Lambda_0) \right] \mid \Lambda_0=\Lambda \right] \\
    &= \int \left[
    g_\theta(\Lambda, x) \left[ (1-\rho_\theta(x)) f_{Z^{(c)},\theta}(x)+f_{Z^{(u)},\theta}(x) \right]
    \hat{h}_\theta(\Lambda+\phi(x)/\rho_\theta(x)) \right. \\
    & \quad \left. + (1-g_\theta(\Lambda, x)) \left[ -\rho_\theta(x) f_{Z^{(c)},\theta}(x)+f_{Z^{(u)},\theta}(x) \right]
    \hat{h}_\theta(\Lambda-\phi(x)/(1-\rho_\theta(x))) \right] \Gamma(\mathrm{d} x) \\
    & \quad - [h_\theta(\Lambda)+\mathbb{E}_\theta Z^{(u)}](P_\theta\hat{h}_\theta)(\Lambda)
\end{align*}

\begin{lemma}
    \label{lemma_properties_H}

    Suppose that Assumptions \ref{assumption_lipschitz_continuity_functions_kernels}, \ref{assumption_lipschitz_continuity_conditional_expectation} and \ref{assumption_simultaneous_properties} hold.
    If $\sup_{\theta \in \Theta} \mathbb{E}_\theta \left[ {Z^{(c)}}^4 \right] < \infty$ and $\sup_{\theta \in \Theta} \mathbb{E}_\theta \left[ {Z^{(u)}}^4 \right] < \infty$, then for any $\alpha,\tilde{\alpha},\kappa,\tilde{\kappa} \in (0,1)$ and $\gamma \in (0,1]$, the family of functions $\{H_\theta\}_{\theta \in \Theta}$ is $((1, L_{H, \kappa, \alpha} V^\kappa, \alpha), (1, \tilde{L}_{H, \tilde{\kappa}, \tilde{\alpha}} V^{\tilde{\kappa}}, \tilde{\alpha}))$-joint locally Hölder continuous, and bounded by $C_{H,\gamma} V^\gamma$ with corresponding positive constants.
\end{lemma}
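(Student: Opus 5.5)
The plan is to treat $H_\theta$ exactly as $F_\theta$ was handled in Lemma \ref{lemma_properties_F}: write $H_\theta$ as a finite combination of products of "elementary" families whose regularity is already known, and then invoke the closure properties of joint local Hölder continuity under sums, products and application of $P_\theta$. Concretely, the displayed formula gives
\begin{equation*}
    H_\theta(\Lambda) = P_\theta\bigl(u_\theta \cdot \hat{h}_\theta\bigr)(\Lambda) - \bigl[h_\theta(\Lambda)+\mathbb{E}_\theta Z^{(u)}\bigr](P_\theta\hat{h}_\theta)(\Lambda) ,
\end{equation*}
where $u_\theta$ is the one-step "weight" that equals $(1-\rho_\theta(x))f_{Z^{(c)},\theta}(x)+f_{Z^{(u)},\theta}(x)$ on the treatment branch and $-\rho_\theta(x)f_{Z^{(c)},\theta}(x)+f_{Z^{(u)},\theta}(x)$ on the control branch. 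The second term is easy: $h_\theta$ is bounded and Lipschitz by Lemma \ref{lemma_h_lipschitz_continuous_bounded}, $\theta\mapsto\mathbb{E}_\theta Z^{(u)}=\int f_{Z^{(u)},\theta}\,\mathrm{d}\Gamma$ is Lipschitz by Assumption \ref{assumption_lipschitz_continuity_conditional_expectation}, and $P_\theta\hat{h}_\theta$ inherits joint local Hölder continuity and a $C V^\gamma$ bound from $\hat{h}_\theta$ (Corollary \ref{corollary_V_Poisson_solution_robust_lipschitz_continuous_bounded}) through Corollary \ref{corollary_n_step_h_robust_lipschitz_continuous_bounded}. A product of a bounded Lipschitz family with a $V^\kappa$-weighted locally Hölder family is again locally Hölder with weight $V^\kappa$ (adjusting constants), so this term is handled.

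The main work is the first term, where the weight $u_\theta$ depends on $x$ through the conditional means and sits inside the kernel integral, so Corollary \ref{corollary_n_step_h_robust_lipschitz_continuous_bounded} cannot be applied directly. I will estimate $|P_\theta(u_\theta\hat h_\theta)(\Lambda)-P_{\theta'}(u_{\theta'}\hat h_{\theta'})(\Lambda')|$ by splitting it into three pieces. The first piece replaces $u_\theta$ by $u_{\theta'}$ and is bounded by $\int |f_{\cdot,\theta}-f_{\cdot,\theta'}|\,\sup|\hat h|$. For this I use $|\hat h_\theta(\Lambda\pm\cdot)|\le C V^\gamma(\Lambda)e^{\lambda_1\gamma\|\phi(x)\|/(\iota_\rho(1-\iota_\rho))}$ and choose $\gamma$ small enough that the exponential weight is dominated by $e^{\lambda_f\|\phi\|}$ in Assumption \ref{assumption_lipschitz_continuity_conditional_expectation}; this is exactly why that assumption carries the exponential weight. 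The second piece replaces $g_\theta,\rho_\theta$ in the branch probabilities and destinations by their primed versions. I handle it in the same way as in Lemma \ref{lemma_continuity_transition_kernels}, using the coupling $K_{\theta,\theta'}$: the mass-deficiency part costs $L_g(d(\theta,\theta')+d(\Lambda,\Lambda'))$ times a Cauchy--Schwarz bound on $\|u\hat h\|_{L^2}$, using the fourth moments of $Z$ and sub-exponential $\phi$. The coupled part moves the arguments of $\hat h$ by $d(\Lambda,\Lambda')+|\rho_\theta-\rho_{\theta'}|\,\|\phi(x)\|/\iota_\rho^2$ and is controlled by the local Hölder continuity of $\hat h$ in $\Lambda$. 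The third piece replaces $\hat h_\theta$ by $\hat h_{\theta'}$ and is bounded directly by Corollary \ref{corollary_V_Poisson_solution_robust_lipschitz_continuous_bounded}, integrated against $|u_\theta|$.

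The main obstacle is the Hölder step in the second piece. There the displacement is random in $x$, and "local" Hölder continuity only holds when that displacement is below a threshold. I would split on the event $\{\|\phi(X)\|\,|\rho_\theta-\rho_{\theta'}|(X)\le \Delta\}$. On that event I apply Hölder with exponent $\alpha$ and then Hölder's inequality to put $|\rho_\theta-\rho_{\theta'}|^\alpha$ into $L^2(\Gamma)$ via Assumption \ref{assumption_lipschitz_continuity_functions_kernels}, which loses some exponent; since $\alpha<1$ is arbitrary, the loss is absorbed. On the complement I use the crude $CV^\gamma$ bound and Markov's inequality, which again gives a power of $d(\theta,\theta')$. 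Finally, boundedness of $H_\theta$ by $C_{H,\gamma}V^\gamma$ follows from Cauchy--Schwarz, the $V^\gamma$ bounds on $\hat h_\theta$ and $P_\theta\hat h_\theta$, and the drift inequality of Lemma \ref{lemma_drift_condition_inequality} applied to $V^{2\gamma}$.
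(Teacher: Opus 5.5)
Your proposal is correct and follows essentially the same route as the paper's proof. The paper also splits $|H_\theta(\Lambda)-H_{\theta'}(\Lambda')|$ into a $g$-difference term (Cauchy--Schwarz with fourth moments), a weight-difference term controlled by the exponentially weighted Lipschitz condition of Assumption~\ref{assumption_lipschitz_continuity_conditional_expectation}, shifted-$\hat{h}$ terms handled by splitting on whether the displacement exceeds the local Hölder threshold (Markov's inequality on the large part, Hölder's inequality with $\|\rho_\theta-\rho_{\theta'}\|_{L^2(\Gamma)}$ on the small part), and the product term $[h_\theta+\mathbb{E}_\theta Z^{(u)}]P_\theta\hat{h}_\theta$. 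The differences are cosmetic: the paper treats the destination shift and the parameter change of $\hat{h}$ together via joint local Hölder continuity, splitting on the full displacement including $d(\Lambda,\Lambda')$, and it integrates $V^a$ at the shifted points via $g_\theta\ge\iota$ and the drift condition rather than via sub-exponential moments of $\phi$.
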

\begin{proof}
    The bound follows directly from Corollaries \ref{corollary_n_step_h_robust_lipschitz_continuous_bounded} and \ref{corollary_V_Poisson_solution_robust_lipschitz_continuous_bounded}.
    We therefore focus on the continuity properties.

    Note that from Assumptions \ref{assumption_sampling} and \ref{assumption_simultaneous_properties}, for any $a \in (0,1]$, there exist constants $\beta_a \in (0,1)$ and $b_a < \infty$, independent of $\theta$, such that for all $\Lambda \in \mathrm{X}$,
    \begin{align*}
        & \quad \int \left[ 
        V^a(\Lambda+\phi(x)/\rho_\theta(x))
        + V^a(\Lambda-\phi(x)/(1-\rho_\theta(x)))
        \right] \Gamma(\mathrm{d} x) \\
        & \leq \frac{1}{\iota} \int \left[ g_\theta(\Lambda, x)
        V^a(\Lambda+\phi(x)/\rho_\theta(x))
        \right] \Gamma(\mathrm{d} x)
        + \int \left[ \left\{ 1 - g_\theta(\Lambda, x) \right\}
        V^a(\Lambda-\phi(x)/(1-\rho_\theta(x)))
        \right] \Gamma(\mathrm{d} x) \\
        & = \frac{1}{\iota}P_\theta V^a(\Lambda)
        \leq \frac{1}{\iota} \left[ \beta_a V^a(\Lambda) + b_a \right]
        \leq \frac{\beta_a + b_a}{\iota} V^a(\Lambda) .
    \end{align*}

    Moreover, by Corollary \ref{corollary_V_Poisson_solution_robust_lipschitz_continuous_bounded}, for any $\alpha,\tilde{\alpha},\kappa,\tilde{\kappa} \in (0,1)$ and $\gamma \in (0,1]$, the family $\{\hat{h}_\theta\}_{\theta \in \Theta}$ is
    \begin{equation*}
        ((\Delta_{\hat{h}, \kappa, \alpha}, L_{\hat{h}, \kappa, \alpha} V^\kappa, \alpha),
        (\tilde{\Delta}_{\hat{h}, \tilde{\kappa}, \tilde{\alpha}}, \tilde{L}_{\hat{h}, \tilde{\kappa}, \tilde{\alpha}} V^{\tilde{\kappa}}, \tilde{\alpha}))
    \end{equation*}
    -joint locally Hölder continuous, and bounded by $C_{\hat{h},\gamma} V^\gamma(\Lambda)$ for some constant $\Delta_{\hat{h}, \kappa, \alpha}$, $\tilde{\Delta}_{\hat{h}, \tilde{\kappa}, \tilde{\alpha}}$, $L_{\hat{h}, \kappa, \alpha}$, $\tilde{L}_{\hat{h}, \tilde{\kappa}, \tilde{\alpha}}$ and $C_{\hat{h},\gamma} > 0$.

    We first split the bound of the difference $|H_\theta(\Lambda) - H_{\theta^\prime}(\Lambda^\prime)|$.
    \begin{align*}
        & |H_\theta(\Lambda) - H_{\theta^\prime}(\Lambda^\prime)| \\
        & \leq \int \left| g_\theta(\Lambda, x) \left[ (1-\rho_\theta(x)) f_{Z^{(c)},\theta}(x)+f_{Z^{(u)},\theta}(x) \right] \hat{h}_\theta(\Lambda+\phi(x)/\rho_\theta(x)) \right. \\
        & \quad\quad \left. - g_{\theta^\prime}(\Lambda^\prime, x) \left[ (1-\rho_{\theta^\prime}(x)) f_{Z^{(c)},\theta^\prime}(x)+f_{Z^{(u)},\theta^\prime}(x) \right] \hat{h}_{\theta^\prime}(\Lambda^\prime+\phi(x)/\rho_{\theta^\prime}(x)) \right| \Gamma(\mathrm{d} x) \\
        & \quad + \int \left| (1-g_\theta(\Lambda, x)) \left[ -\rho_\theta(x) f_{Z^{(c)},\theta}(x)+f_{Z^{(u)},\theta}(x) \right] \hat{h}_\theta(\Lambda-\phi(x)/(1-\rho_\theta(x))) \right. \\
        & \quad\quad \left. - (1-g_{\theta^\prime}(\Lambda^\prime, x)) \left[ -\rho_{\theta^\prime}(x) f_{Z^{(c)},\theta^\prime}(x)+f_{Z^{(u)},\theta^\prime}(x) \right] \hat{h}_{\theta^\prime}(\Lambda^\prime-\phi(x)/(1-\rho_{\theta^\prime}(x))) \right| \Gamma(\mathrm{d} x) \\
        & \quad + \left| [h_\theta(\Lambda)+\mathbb{E}_\theta Z^{(u)}](P_\theta\hat{h}_\theta)(\Lambda) - [h_{\theta^\prime}(\Lambda^\prime)+\mathbb{E}_{\theta^\prime} Z^{(u)}](P_{\theta^\prime}\hat{h}_{\theta^\prime})(\Lambda^\prime) \right| \\
        & \leq \int \left\{ \left| \left[ (1-\rho_\theta(x)) f_{Z^{(c)},\theta}(x)+f_{Z^{(u)},\theta}(x) \right] \hat{h}_\theta(\Lambda+\phi(x)/\rho_\theta(x)) \right| \cdot |g_\theta(\Lambda, x) - g_{\theta^\prime}(\Lambda^\prime, x)| \right. \\
        & \quad\quad + |g_{\theta^\prime}(\Lambda^\prime, x)| |\hat{h}_\theta(\Lambda+\phi(x)/\rho_\theta(x))| \left| \left(f_{Z^{(c)},\theta} - \rho_\theta f_{Z^{(c)},\theta} + f_{Z^{(u)},\theta}\right) - \left(f_{Z^{(c)},\theta^\prime} - \rho_{\theta^\prime} f_{Z^{(c)},\theta^\prime} + f_{Z^{(u)},\theta^\prime}\right) \right| \\
        & \quad\quad + |g_{\theta^\prime}(\Lambda^\prime, x)| \left| (1-\rho_{\theta^\prime}(x)) f_{Z^{(c)},\theta^\prime}(x)+f_{Z^{(u)},\theta^\prime}(x) \right| \left| \hat{h}_\theta(\Lambda+\phi(x)/\rho_\theta(x)) - \hat{h}_{\theta^\prime}(\Lambda^\prime+\phi(x)/\rho_{\theta^\prime}(x)) \right| \\
        & \quad\quad + \left| \left[ -\rho_\theta(x) f_{Z^{(c)},\theta}(x)+f_{Z^{(u)},\theta}(x) \right] \hat{h}_\theta(\Lambda-\phi(x)/(1-\rho_\theta(x))) \right| \cdot |g_\theta(\Lambda, x) - g_{\theta^\prime}(\Lambda^\prime, x)| \\
        & \quad\quad + |1-g_{\theta^\prime}(\Lambda^\prime, x)| |\hat{h}_\theta(\Lambda-\phi(x)/(1-\rho_\theta(x)))| \left| \left(-\rho_\theta f_{Z^{(c)},\theta} + f_{Z^{(u)},\theta}\right) - \left(-\rho_{\theta^\prime} f_{Z^{(c)},\theta^\prime} + f_{Z^{(u)},\theta^\prime}\right) \right| \\
        & \quad\quad + \left. |1-g_{\theta^\prime}(\Lambda^\prime, x)| \left| -\rho_{\theta^\prime}(x) f_{Z^{(c)},\theta^\prime}(x)+f_{Z^{(u)},\theta^\prime}(x) \right| \left| \hat{h}_\theta(\Lambda-\phi(x)/(1-\rho_\theta(x))) - \hat{h}_{\theta^\prime}(\Lambda^\prime-\phi(x)/(1-\rho_{\theta^\prime}(x))) \right| \right\} \Gamma(\mathrm{d} x) \\
        & \quad + \left| h_\theta(\Lambda)+\mathbb{E}_\theta Z^{(u)} \right| \left| (P_\theta\hat{h}_\theta)(\Lambda) - (P_{\theta^\prime}\hat{h}_{\theta^\prime})(\Lambda^\prime) \right| \\
        & \quad + \left| (P_{\theta^\prime}\hat{h}_{\theta^\prime})(\Lambda^\prime) \right| \left( |h_\theta(\Lambda) - h_{\theta^\prime}(\Lambda^\prime)| + |\mathbb{E}_\theta Z^{(u)}-\mathbb{E}_{\theta^\prime} Z^{(u)}| \right) \\ 
        & \leq \int \left( (|\hat{h}_\theta(\Lambda+\phi(x)/\rho_\theta(x))| + |\hat{h}_\theta(\Lambda-\phi(x)/(1-\rho_\theta(x)))|)
        (|f_{Z^{(c)},\theta}|+|f_{Z^{(u)},\theta}|) \right)
        |g_\theta(\Lambda, x) - g_{\theta^\prime}(\Lambda^\prime, x)| \Gamma(\mathrm{d} x) \\
        & \quad + \int ( |\hat{h}_\theta(\dots)| + |\hat{h}_{\theta^\prime}(\dots)| ) \left( |f_{Z^{(c)},\theta} - f_{Z^{(c)},\theta^\prime}|
        + |\rho_\theta(x) - \rho_{\theta^\prime}(x)||f_{Z^{(c)},\theta^\prime}(x)|
        + |f_{Z^{(u)},\theta} - f_{Z^{(u)},\theta^\prime}| \right) \Gamma(\mathrm{d} x) \\
        & \quad + \int \left(|f_{Z^{(c)},\theta^\prime}| + |f_{Z^{(u)},\theta^\prime}| \right)
        \left| \hat{h}_\theta(\Lambda+\phi(x)/\rho_\theta(x)) - \hat{h}_{\theta^\prime}(\Lambda^\prime+\phi(x)/\rho_{\theta^\prime}(x)) \right| \Gamma(\mathrm{d} x) \\
        & \quad + \int \left(|f_{Z^{(c)},\theta^\prime}| + |f_{Z^{(u)},\theta^\prime}| \right)
        \left| \hat{h}_\theta(\Lambda-\phi(x)/(1-\rho_\theta(x))) - \hat{h}_{\theta^\prime}(\Lambda^\prime-\phi(x)/(1-\rho_{\theta^\prime}(x))) \right| \Gamma(\mathrm{d} x) \\
        & \quad + |h_\theta(\Lambda)+\mathbb{E}_\theta Z^{(u)}| |(P_\theta\hat{h}_\theta)(\Lambda) - (P_{\theta^\prime}\hat{h}_{\theta^\prime})(\Lambda^\prime)| \\
        & \quad + |(P_{\theta^\prime}\hat{h}_{\theta^\prime})(\Lambda^\prime)| \left( |h_\theta(\Lambda) - h_{\theta^\prime}(\Lambda^\prime)| + |\mathbb{E}_\theta Z^{(u)}-\mathbb{E}_{\theta^\prime} Z^{(u)}| \right) .
    \end{align*}
    To facilitate the subsequent analysis, we decompose the integrals and differences into six terms, which we denote by $T^{(1)}, \dots, T^{(6)}$:
    \begin{align*}
        T^{(1)} &:= \int \left( (|\hat{h}_\theta(\Lambda+\phi(x)/\rho_\theta(x))| + |\hat{h}_\theta(\Lambda-\phi(x)/(1-\rho_\theta(x)))|)
        (|f_{Z^{(c)},\theta}|+|f_{Z^{(u)},\theta}|) \right)
        |g_\theta(\Lambda, x) - g_{\theta^\prime}(\Lambda^\prime, x)| \Gamma(\mathrm{d} x) \\
        T^{(2)} &:= \int ( |\hat{h}_\theta(\dots)| + |\hat{h}_{\theta^\prime}(\dots)| ) \left( |f_{Z^{(c)},\theta} - f_{Z^{(c)},\theta^\prime}|
        + |\rho_\theta(x) - \rho_{\theta^\prime}(x)||f_{Z^{(c)},\theta^\prime}(x)|
        + |f_{Z^{(u)},\theta} - f_{Z^{(u)},\theta^\prime}| \right) \Gamma(\mathrm{d} x) \\
        T^{(3)} &:= \int \left(|f_{Z^{(c)},\theta^\prime}| + |f_{Z^{(u)},\theta^\prime}| \right)
        \left| \hat{h}_\theta(\Lambda+\phi(x)/\rho_\theta(x)) - \hat{h}_{\theta^\prime}(\Lambda^\prime+\phi(x)/\rho_{\theta^\prime}(x)) \right| \Gamma(\mathrm{d} x) \\
        T^{(4)} &:= \int \left(|f_{Z^{(c)},\theta^\prime}| + |f_{Z^{(u)},\theta^\prime}| \right)
        \left| \hat{h}_\theta(\Lambda-\phi(x)/(1-\rho_\theta(x))) - \hat{h}_{\theta^\prime}(\Lambda^\prime-\phi(x)/(1-\rho_{\theta^\prime}(x))) \right| \Gamma(\mathrm{d} x) \\
        T^{(5)} &:= |h_\theta(\Lambda)+\mathbb{E}_\theta Z^{(u)}| |(P_\theta\hat{h}_\theta)(\Lambda) - (P_{\theta^\prime}\hat{h}_{\theta^\prime})(\Lambda^\prime)| \\
        T^{(6)} &:= |(P_{\theta^\prime}\hat{h}_{\theta^\prime})(\Lambda^\prime)| \left( |h_\theta(\Lambda) - h_{\theta^\prime}(\Lambda^\prime)| + |\mathbb{E}_\theta Z^{(u)}-\mathbb{E}_{\theta^\prime} Z^{(u)}| \right) .
    \end{align*}
    We aim to show that for any $\alpha, \kappa \in (0,1)$ and any $i \in \{1,\dots,6\}$, it holds that
    \begin{equation*}
        T^{(i)} \leq L V^\kappa(\Lambda) ([d(\theta, \theta^\prime)]^\alpha + [d(\Lambda, \Lambda^\prime)]^\alpha) ,
    \end{equation*}
    where the constant $L$ is independent of $\theta$, $\theta^\prime$, $\Lambda$ and $\Lambda^\prime$ when $d(\theta, \theta^\prime)$ and $d(\Lambda, \Lambda^\prime)$ are sufficiently small.

    For $T^{(1)}$, it suffices to use the Cauchy-Schwarz inequality, the bounds on $\hat{h}_\theta$, $f_{Z^{(c)},\theta}$ and $f_{Z^{(u)},\theta}$, and the Lipschitz continuity of $g_\theta$.
    For any $\gamma \in (0,\frac{1}{4}]$,
    \begin{align*}
        & \quad T^{(1)} \\
        & \leq \left\| \left( |\hat{h}_\theta(\dots)| + |\hat{h}_\theta(\dots)| \right) \left(|f_{Z^{(c)},\theta}|+|f_{Z^{(u)},\theta}|\right) \right\|_{L^2(\Gamma)}
        \left\| g_\theta(\Lambda, \cdot) - g_{\theta^\prime}(\Lambda^\prime, \cdot) \right\|_{L^2(\Gamma)} \\
        & \leq \left( \left\| | \hat{h}_\theta(\Lambda+\phi(\cdot)/\rho_\theta(\cdot)) | + | \hat{h}_\theta(\Lambda-\phi(\cdot)/(1-\rho_\theta(\cdot))) | \right\|_{L^4(\Gamma)} \right) \\
        & \quad\quad \left( \left\| f_{Z^{(c)},\theta} \right\|_{L^4(\Gamma)} + \left\| f_{Z^{(u)},\theta} \right\|_{L^4(\Gamma)} \right)
        \left\| g_\theta(\Lambda, \cdot) - g_{\theta^\prime}(\Lambda^\prime, \cdot) \right\|_{L^2(\Gamma)} \\
        & \leq C_{\hat{h},\gamma} \left( \left\| V^\gamma(\Lambda+\phi(\cdot)/\rho_\theta(\cdot))
        + V^\gamma(\Lambda-\phi(\cdot)/(1-\rho_\theta(\cdot))) \right\|_{L^4(\Gamma)} \right) \\
        & \quad\quad \left[ \left\{ \sup_{\theta \in \Theta} \mathbb{E}_\theta \left[ {Z^{(c)}}^4 \right] \right\}^{1/4}
        + \left\{ \sup_{\theta \in \Theta} \mathbb{E}_\theta \left[ {Z^{(u)}}^4 \right] \right\}^{1/4} \right]
        \left\| g_\theta(\Lambda, \cdot) - g_{\theta^\prime}(\Lambda^\prime, \cdot) \right\|_{L^2(\Gamma)} \\
        & \leq C_{\hat{h},\gamma}
        \left[ 8 \int \left[ 
        V^{4\gamma}(\Lambda+\phi(x)/\rho_\theta(x))
        + V^{4\gamma}(\Lambda-\phi(x)/(1-\rho_\theta(x)))
        \right] \Gamma(\mathrm{d} x) \right]^{\frac{1}{4}} \\
        & \quad\quad \left[ \left\{ \sup_{\theta \in \Theta} \mathbb{E}_\theta \left[ {Z^{(c)}}^4 \right] \right\}^{1/4}
        + \left\{ \sup_{\theta \in \Theta} \mathbb{E}_\theta \left[ {Z^{(u)}}^4 \right] \right\}^{1/4} \right]
        \left\| g_\theta(\Lambda, \cdot) - g_{\theta^\prime}(\Lambda^\prime, \cdot) \right\|_{L^2(\Gamma)} \\
        & \leq C_{\hat{h},\gamma}
        \left[ \frac{8(\beta_{4\gamma} + b_{4\gamma})}{\iota} \right]^{\frac{1}{4}}
        \left[ \left\{ \sup_{\theta \in \Theta} \mathbb{E}_\theta \left[ {Z^{(c)}}^4 \right] \right\}^{1/4}
        + \left\{ \sup_{\theta \in \Theta} \mathbb{E}_\theta \left[ {Z^{(u)}}^4 \right] \right\}^{1/4} \right]
        L_g V^\gamma(\Lambda) [ d(\theta, \theta^\prime) + d(\Lambda, \Lambda^\prime) ]
    \end{align*}
    Note that the result can be extended to cover the case $\gamma \in (0,1]$ because $V \geq 1$.

    For $T^{(2)}$, the term $\int ( |\hat{h}_\theta(\dots)| + |\hat{h}_{\theta^\prime}(\dots)| ) |\rho_\theta(x) - \rho_{\theta^\prime}(x)||f_{Z^{(c)},\theta^\prime}(x)| \Gamma(\mathrm{d} x)$ can be bounded using a method similar to that used for $T^{(1)}$.
    Because $\rho_\theta \in [\iota_\rho, 1-\iota_\rho]$ and $\left| \hat{h} \right| \lesssim V^\gamma$ for any $\gamma \in (0,1]$, the other term
    \begin{equation*}
        \int ( |\hat{h}_\theta(\dots)| + |\hat{h}_{\theta^\prime}(\dots)| ) \left( |f_{Z^{(c)},\theta} - f_{Z^{(c)},\theta^\prime}|
        + |f_{Z^{(u)},\theta} - f_{Z^{(u)},\theta^\prime}| \right) \Gamma(\mathrm{d} x)
    \end{equation*}
    can be bounded by a multiple of
    \begin{align*}
        & \quad \int e^{\lambda_1^\prime (\|\phi(x)\| + \|\Lambda\|)}
        \left( |f_{Z^{(c)},\theta} - f_{Z^{(c)},\theta^\prime}|
        + |f_{Z^{(u)},\theta} - f_{Z^{(u)},\theta^\prime}| \right)
        \Gamma(\mathrm{d} x) \\
        &= \left\| e^{\lambda_1^\prime (\|\phi(\cdot)\| + \|\Lambda\|)}
        \left| f_{Z^{(c)},\theta} - f_{Z^{(c)},\theta^\prime}| \right| \right\|_{L^1(\Gamma)}
        + \left\| e^{\lambda_1^\prime (\|\phi(\cdot)\| + \|\Lambda\|)}
        \left| f_{Z^{(u)},\theta} - f_{Z^{(u)},\theta^\prime} \right| \right\|_{L^1(\Gamma)} ,
    \end{align*}
    for any sufficiently small $\lambda_1^\prime > 0$.
    By Assumption \ref{assumption_lipschitz_continuity_conditional_expectation}, the other term can be further bounded by a multiple of $V^\gamma(\Lambda) d(\theta, \theta^\prime)$ for any sufficiently small $\gamma > 0$.
    Since $V(\Lambda) \geq 1$, the same bound extends to all $\gamma \in (0,1]$.

    For $T^{(3)}$ and $T^{(4)}$, it suffices to establish the bound for $T^{(3)}$, since the argument for $T^{(4)}$ follows analogously.
    For $T^{(3)}$, we have
    \begin{align*}
        T^{(3)}
        & \leq \left( \left\| f_{Z^{(c)},\theta^\prime} \right\|_{L^2(\Gamma)}
        + \left\| f_{Z^{(u)},\theta^\prime} \right\|_{L^2(\Gamma)} \right)
        \left\| \hat{h}_\theta(\Lambda+\phi(\cdot)/\rho_\theta(\cdot)) - \hat{h}_{\theta^\prime}(\Lambda^\prime+\phi(\cdot)/\rho_{\theta^\prime}(\cdot)) \right\|_{L^2(\Gamma)} \\
        & \leq \left( \sqrt{\sup_{\theta \in \Theta} \mathbb{E}_\theta \left[ {Z^{(c)}}^2 \right]}
        + \sqrt{\sup_{\theta \in \Theta} \mathbb{E}_\theta \left[ {Z^{(u)}}^2 \right]} \right)
        \left\| \hat{h}_\theta(\Lambda+\phi(\cdot)/\rho_\theta(\cdot)) - \hat{h}_{\theta^\prime}(\Lambda^\prime+\phi(\cdot)/\rho_{\theta^\prime}(\cdot)) \right\|_{L^2(\Gamma)} ,
    \end{align*}
    Hence, the main task reduces to controlling $\left\| \hat{h}_\theta(\Lambda+\phi(\cdot)/\rho_\theta(\cdot)) - \hat{h}_{\theta^\prime}(\Lambda^\prime+\phi(\cdot)/\rho_{\theta^\prime}(\cdot)) \right\|_{L^2(\Gamma)}$.
    To this end, we further decompose the expression into two parts.
    Let $\epsilon, p, q>0$ and $\gamma \in (0,1)$ with $2p\gamma \leq 1$ and $\frac{1}{q} + \frac{1}{p} = 1$, then
    \begin{align*}
        & \quad \int \left\{ \left[ \hat{h}_\theta(\Lambda+\phi(x)/\rho_\theta(x)) - \hat{h}_{\theta^\prime}(\Lambda^\prime+\phi(x)/\rho_{\theta^\prime}(x)) \right]^2 \right. \\
        & \quad\quad \left. \cdot \mathbb{I} \left( d(\Lambda+\phi(x)/\rho_\theta(x), \Lambda^\prime+\phi(x)/\rho_{\theta^\prime}(x))
        \geq \Delta_{\hat{h}, \kappa, \alpha} \right) \right\} \Gamma(\mathrm{d} x) \\
        & \leq \left\| | \hat{h}_\theta(\Lambda+\phi(\cdot)/\rho_\theta(\cdot)) | + | \hat{h}_\theta(\Lambda-\phi(\cdot)/(1-\rho_\theta(\cdot))) | \right\|_{L^{2p}(\Gamma)}^2 \\
        & \quad\quad \left[ P_{X \sim \Gamma} \left( d(\Lambda+\phi(X)/\rho_\theta(X), \Lambda^\prime+\phi(X)/\rho_{\theta^\prime}(X)) > \Delta_{\hat{h}, \kappa, \alpha} \right) \right]^{\frac{1}{q}} \\
        & \leq C_{\hat{h},\gamma}^2 \left[ 2^{2p-1} \int \left[ 
        V^{2p\gamma}(\Lambda+\phi(x)/\rho_\theta(x))
        + V^{2p\gamma}(\Lambda-\phi(x)/(1-\rho_\theta(x)))
        \right] \Gamma(\mathrm{d} x) \right]^{\frac{1}{p}} \\
        & \quad\quad \left[ \frac{\mathbb{E}_{X \sim \Gamma}
        \left[ [d(\Lambda+\phi(X)/\rho_\theta(X), \Lambda^\prime+\phi(X)/\rho_{\theta^\prime}(X))]^{2/(1+\epsilon)} \right]}{\Delta_{\hat{h}, \kappa, \alpha}^{2/(1+\epsilon)}} \right]^{\frac{1}{q}} \\
        & \leq C_{\hat{h},\gamma}^2
        \left[ \frac{2^{2p-1}(\beta_{2p\gamma} + b_{2p\gamma})}{\iota} \right]^{\frac{1}{p}} \\
        & \quad\quad V^{2\gamma}(\Lambda)
        \left[ \frac{2 [d(\Lambda, \Lambda^\prime)]^{2/(1+\epsilon)}
        + 2 \mathbb{E}_{X \sim \Gamma} \left[ \|\phi(X)\|^{2/(1+\epsilon)} |\rho_\theta(X) - \rho_{\theta^\prime}(X)|^{2/(1+\epsilon)}/\iota_\rho^{4/(1+\epsilon)} \right]}{\Delta_{\hat{h}, \kappa, \alpha}^{2/(1+\epsilon)}} \right]^{\frac{1}{q}} \\
        & \leq C_{\hat{h},\gamma}^2
        \left[ \frac{2^{2p-1}(\beta_{2p\gamma} + b_{2p\gamma})}{\iota} \right]^{\frac{1}{p}} \\
        & \quad\quad V^{2\gamma}(\Lambda)
        \left[ \frac{d(\Lambda, \Lambda^\prime)^{2/(1+\epsilon)}
        + [\mathbb{E} \|\phi(X)\|^{2/\epsilon}]^{\epsilon/(1+\epsilon)}
        L_\rho^{2/(1+\epsilon)} [d(\theta, \theta^\prime)]^{2/(1+\epsilon)}/\iota_\rho^{4/(1+\epsilon)} }{\Delta_{\hat{h}, \kappa, \alpha}^{2/(1+\epsilon)}} \right]^{\frac{1}{q}} \\
        & \leq C_{\hat{h},\gamma}^2
        \left[ \frac{2^{2p-1}(\beta_{2p\gamma} + b_{2p\gamma})}{\iota} \right]^{\frac{1}{p}}
        \left[ \frac{1 + [\mathbb{E} \|\phi(X)\|^{2/\epsilon}]^{\epsilon/(1+\epsilon)} L_\rho^{2/(1+\epsilon)}/\iota_\rho^{4/(1+\epsilon)} }{\Delta_{\hat{h}, \kappa, \alpha}^{2/(1+\epsilon)}} \right]^{\frac{1}{q}} \\
        & \quad\quad V^{2\gamma}(\Lambda)
        \left[ [d(\Lambda, \Lambda^\prime)]^{\frac{2}{q(1+\epsilon)}} + [d(\theta, \theta^\prime)]^{\frac{2}{q(1+\epsilon)}} \right] .
    \end{align*}
    Thus, taking $\alpha = \frac{1}{q(1+\epsilon)}$ and $\kappa = \gamma$, we can conclude that for any $\kappa \in \left(0,\frac{1}{2p}\right)$, there exists some $L > 0$ such that
    \begin{align*}
        & \left\| \left[ \hat{h}_\theta(\Lambda+\phi(\cdot)/\rho_\theta(\cdot)) - \hat{h}_{\theta^\prime}(\Lambda^\prime+\phi(\cdot)/\rho_{\theta^\prime}(\cdot)) \right] \right. \\
        & \quad \left. \cdot \mathbb{I} \left( d(\Lambda+\phi(\cdot)/\rho_\theta(\cdot), \Lambda^\prime+\phi(\cdot)/\rho_{\theta^\prime}(\cdot))
        \geq \Delta_{\hat{h}, \kappa, \alpha} \right) \right\|_{L^2(\Gamma)}
        \leq L V^\kappa(\Lambda) ([d(\theta, \theta^\prime)]^\alpha + [d(\Lambda, \Lambda^\prime)]^\alpha) .
    \end{align*}
    Since $q>1$ and $\epsilon>0$ can be chosen arbitrarily, the expression $\alpha = \frac{1}{q(1+\epsilon)}$ ranges over $(0,1)$.
    Hence, for any $\alpha \in (0,1)$, there exists an interval $(0,\gamma_{\mathrm{max}})$ such that for any $\kappa \in (0,\gamma_{\mathrm{max}})$, the above inequality holds.
    Moreover, for any fixed $\alpha \in (0,1)$, since $V \geq 1$, the above inequality also holds for any $\kappa \in (0,1)$.

    Next, let $p, q>0$, $\kappa$, $\alpha \in (0,1)$ with $2p\alpha \leq 2$, $2q\kappa \leq 1$ and $\frac{2}{q} + \frac{1}{p} = 1$, when $d(\theta, \theta^\prime) < \tilde{\Delta}_{\hat{h}, \kappa, \alpha}$,
    \begin{align*}
        & \quad \int \left\{ \left[ \hat{h}_\theta(\Lambda+\phi(x)/\rho_\theta(x)) - \hat{h}_{\theta^\prime}(\Lambda^\prime+\phi(x)/\rho_{\theta^\prime}(x)) \right]^2 \right. \\
        & \quad\quad \left. \cdot \mathbb{I} \left( d(\Lambda+\phi(x)/\rho_\theta(x), \Lambda^\prime+\phi(x)/\rho_{\theta^\prime}(x))
        < \Delta_{\hat{h}, \kappa, \alpha} \right) \right\} \Gamma(\mathrm{d} x) \\
        & \leq \int \left[ L_{\hat{h}, \kappa, \alpha}
        V^\kappa(\Lambda+\phi(x)/\rho_\theta(x))
        d(\Lambda+\phi(x)/\rho_\theta(x), \Lambda^\prime+\phi(x)/\rho_{\theta^\prime}(x))^\alpha \right. \\
        & \quad\quad \left. + \tilde{L}_{\hat{h}, \kappa, \alpha}
        V^\kappa(\Lambda+\phi(x)/\rho_\theta(x))
        [d(\theta, \theta^\prime)]^\alpha \right]^2 \Gamma(\mathrm{d} x) \\
        & \leq \int \max \left\{ L_{\hat{h}, \kappa, \alpha}^2, \tilde{L}_{\hat{h}, \kappa, \alpha}^2 \right\}
        V^{2\kappa}(\Lambda+\phi(x)/\rho_\theta(x)) \\
        & \quad\quad \left[ d[(\Lambda, \Lambda^\prime)]^\alpha + \|\phi(x)\|^\alpha |\rho_\theta(x)-\rho_{\theta^\prime}(x)|^\alpha / \iota_\rho^{2\alpha} + [d(\theta, \theta^\prime)]^\alpha \right]^2 \Gamma(\mathrm{d} x) \\
        & \leq 3 \max \left\{ L_{\hat{h}, \kappa, \alpha}^2, \tilde{L}_{\hat{h}, \kappa, \alpha}^2 \right\} \left[ [d(\Lambda, \Lambda^\prime)]^{2\alpha} + [d(\theta, \theta^\prime)]^{2\alpha} \right]
        \int V^{2\kappa}(\Lambda+\phi(x)/\rho_\theta(x)) \Gamma(\mathrm{d} x) \\
        & \quad + \frac{3 \max \left\{ L_{\hat{h}, \kappa, \alpha}^2, \tilde{L}_{\hat{h}, \kappa, \alpha}^2 \right\}}{\iota_\rho^{4\alpha}}
        \int V^{2\kappa}(\Lambda+\phi(x)/\rho_\theta(x))
        \|\phi(x)\|^{2\alpha} |\rho_\theta(x)-\rho_{\theta^\prime}(x)|^{2\alpha} \Gamma(\mathrm{d} x) \\
        & \leq 3 \max \left\{ L_{\hat{h}, \kappa, \alpha}^2, \tilde{L}_{\hat{h}, \kappa, \alpha}^2 \right\}
        \left[ [d(\Lambda, \Lambda^\prime)]^{2\alpha} + [d(\theta, \theta^\prime)]^{2\alpha} \right]
        \int V^{2\kappa}(\Lambda+\phi(x)/\rho_\theta(x)) \Gamma(\mathrm{d} x) \\
        & \quad + \frac{3 \max \left\{ L_{\hat{h}, \kappa, \alpha}^2, \tilde{L}_{\hat{h}, \kappa, \alpha}^2 \right\}}{\iota_\rho^{4\alpha}}
        \left[ \int V^{2q\kappa}(\Lambda+\phi(x)/\rho_\theta(x)) \Gamma(\mathrm{d} x) \right]^{\frac{1}{q}} \\
        & \quad\quad \left[ \int \|\phi(x)\|^{2q\alpha} \Gamma(\mathrm{d} x) \right]^{\frac{1}{q}}
        \left[ \int \left| \rho_\theta(x)-\rho_{\theta^\prime}(x) \right|^{2p\alpha} \Gamma(\mathrm{d} x) \right]^{\frac{1}{p}} \\
        & \leq 3 \max \left\{ L_{\hat{h}, \kappa, \alpha}^2, \tilde{L}_{\hat{h}, \kappa, \alpha}^2 \right\}
        \left[ [d(\Lambda, \Lambda^\prime)]^{2\alpha} + [d(\theta, \theta^\prime)]^{2\alpha} \right]
        \frac{\beta_{2\kappa} + b_{2\kappa}}{\iota} V^{2\kappa}(\Lambda) \\
        & \quad + \frac{3 \max \left\{ L_{\hat{h}, \kappa, \alpha}^2, \tilde{L}_{\hat{h}, \kappa, \alpha}^2 \right\}}{\iota_\rho^{4\alpha}}
        \left[ \frac{\beta_{2q\kappa} + b_{2q\kappa}}{\iota} V^{2q\kappa}(\Lambda) \right]^{\frac{1}{q}}
        \left[ \mathbb{E} \|\phi(X)\|^{2q\alpha} \right]
        \left[ L_\rho d(\theta, \theta^\prime) \right]^{2\alpha} \\
        & \leq L^2 V^{2\kappa}(\Lambda) \left[ [d(\Lambda, \Lambda^\prime)]^{2\alpha} + [d(\theta, \theta^\prime)]^{2\alpha} \right]
    \end{align*}
    for some constant $L>0$.
    Since $\alpha$ can range over $(0,1)$, for any $\alpha \in (0,1)$, there exists an interval $(0,\kappa_{\mathrm{max}})$ such that for any $\kappa \in (0,\kappa_{\mathrm{max}})$, the inequality holds that
    \begin{align*}
        & \left\| \left[ \hat{h}_\theta(\Lambda+\phi(\cdot)/\rho_\theta(\cdot)) - \hat{h}_{\theta^\prime}(\Lambda^\prime+\phi(\cdot)/\rho_{\theta^\prime}(\cdot)) \right] \right. \\
        & \quad \left. \cdot \mathbb{I} \left( d(\Lambda+\phi(\cdot)/\rho_\theta(\cdot), \Lambda^\prime+\phi(\cdot)/\rho_{\theta^\prime}(\cdot))
        < \Delta_{\hat{h}, \kappa, \alpha} \right) \right\|_{L^2(\Gamma)}
        \leq L V^\kappa(\Lambda) ([d(\theta, \theta^\prime)]^\alpha + [d(\Lambda, \Lambda^\prime)]^\alpha) .
    \end{align*}
    Moreover, for any fixed $\alpha \in (0,1)$, since $V \geq 1$, the above inequality also holds for any $\kappa \in (0,1)$ with some constant $L > 0$.

    The bound of $T^{(5)}$ follows from the boundedness of $h_\theta$ in Lemma \ref{lemma_h_lipschitz_continuous_bounded}, the finiteness of $\mathbb{E}_\theta Z^{(u)}$ and the joint locally Hölder continuity of $\{P_\theta \hat{h}_\theta\}_{\theta \in \Theta}$, which follows from Corollaries \ref{corollary_n_step_h_robust_lipschitz_continuous_bounded} and \ref{corollary_V_Poisson_solution_robust_lipschitz_continuous_bounded}

    The bound of $T^{(6)}$ follows from the Lipschitz continuity of $h_\theta$ in Lemma \ref{lemma_h_lipschitz_continuous_bounded}, the Lipschitz continuity of $f_{Z^{(u)},\theta}$ in Assumption \ref{assumption_lipschitz_continuity_conditional_expectation}, the bound of $P_\theta \hat{h}_\theta$, which follows from Corollaries \ref{corollary_n_step_h_robust_lipschitz_continuous_bounded} and \ref{corollary_V_Poisson_solution_robust_lipschitz_continuous_bounded}.

    In summary, the desired continuity and boundedness properties hold.
\end{proof}

\section{Lemmas for the Markov Chain}
\label{sec_lemmas_Markov_chain}

In this section, suppose that the parameter space $\Theta$ is a complete metric space with the metric $d(\theta, \theta^\prime)$ and the state space $\mathrm{X}$ is a Polish (complete separable metric) space with the metric $d(\Lambda,\Lambda^\prime)$.

\subsection{Definitions and Lemmas on Transition Kernels}
\label{subsec_definitions_lemmas_kernel}

\subsubsection{Definitions of Transition Kernels}

In this section, we formally introduce the definitions and assumptions regarding the continuity and stability of the transition kernel family $\{P_\theta\}_{\theta \in \Theta}$.
These definitions and assumptions are crucial for the subsequent analysis.
We begin by defining the necessary metrics and continuity properties.

\begin{definition}[Robust Wasserstein Metric]
    \label{definition_robust_wasserstein_metric}

    Let $\nu$ and $\nu^\prime$ be two finite measures. The \emph{robust Wasserstein metric} between them is defined as:
    \begin{align*}
        & \quad \mathcal{W}_{d, \delta}(\nu, \nu^\prime) \\
        & := \inf \left\{ \mathcal{W}_d \left( \mu, \mu^\prime \right)
        \mid \|\mu\| = \|\mu^\prime\|,
        \|\nu - \mu\| \leq \delta, \|\nu^\prime - \mu^\prime\| \leq \delta,
        \nu \geq \mu, \nu^\prime \geq \mu^\prime \right\} ,
    \end{align*}
    where $\mathcal{W}_d$ denotes the standard $d$-Wasserstein distance.
\end{definition}

\begin{definition}[Coupled Robust Lipschitz Continuity of Transition Kernels, Definition \ref{lemma_continuity_transition_kernels_main}]
    \label{definition_coupled_robust_lipschitz_continuity_kernels}

    Let $P$ and $Q$ be transition probability kernels on $(\mathrm{X}, \mathcal{X})$. 
    We say that $P$ and $Q$ are \emph{$(L_P,\tau,\epsilon)$-coupled robustly Lipschitz continuous} if there exists a coupling kernel
    \begin{equation*}
        K: \mathrm{X}^2 \times \mathcal{X}^{\otimes 2} \to [0,1]
    \end{equation*}
    such that, for all $\Lambda, \Lambda^\prime \in \mathrm{X}$,
    \begin{align*}
        K(\Lambda, \Lambda^\prime; A \times \mathrm{X})
        & \leq P(\Lambda, A), 
        && A \in \mathcal{X}, \\
        K(\Lambda, \Lambda^\prime; \mathrm{X} \times B)
        & \leq Q(\Lambda^\prime, B), 
        && B \in \mathcal{X},
    \end{align*}
    and the following bounds hold:
    \begin{align*}
        0 \leq 1 - K(\Lambda, \Lambda^\prime; \mathrm{X} \times \mathrm{X})
        & \leq L_P d(\Lambda, \Lambda^\prime) + \tau , \\
        \int d(u,v) K(\Lambda, \Lambda^\prime; \mathrm{d} u \times \mathrm{d} v)
        &\leq d(\Lambda, \Lambda^\prime) + \epsilon .
    \end{align*}
\end{definition}

\begin{definition}[Robust Lipschitz Continuity of a Family of Transition Kernels]
    \label{definition_robust_lipschitz_continuity_family_kernels}

    A family of transition probability kernels $\{P_\theta\}_{\theta \in \Theta}$ is \emph{robustly Lipschitz continuous} with a Lipschitz constant $L_P \geq 0$ if, for any parameters $\theta$, $\theta^\prime \in \Theta$, the kernels $P_\theta$ and $P_{\theta^\prime}$ are $(L_P, L_P d(\theta, \theta^\prime), L_P d(\theta, \theta^\prime))$-coupled robustly Lipschitz continuous.
\end{definition}

\subsubsection{Lemmas on Transition Kernels}

\begin{lemma}
    \label{lemma_existence_robust_wasserstein_metric}

    Let $\delta \geq 0$ be given.
    For any two finite measures $\nu$ and $\nu^\prime$, there exist decompositions $\nu = \mu_b + \mu_c$ and $\nu^\prime = \mu_b^\prime + \mu_c^\prime$ such that $\|\mu_c\|=\|\mu_c^\prime\| \leq \delta$ and
    \begin{equation*}
        \mathcal{W}_d\left(\mu_b, \mu_b^\prime\right) = \mathcal{W}_{d, \delta}(\nu, \nu^\prime) .
    \end{equation*}
    In addition, there is a coupling of $\left( \mu_b, \mu_b^\prime \right)$ which minimizes the total cost $\mathcal{W}_d\left(\mu_b, \mu_b^\prime\right)$ among all possible couplings $\left( \mu_b, \mu_b^\prime \right)$.
\end{lemma}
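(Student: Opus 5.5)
The statement to prove is Lemma~\ref{lemma_existence_robust_wasserstein_metric}. The plan is a direct compactness argument in the space of finite measures on the Polish space $\mathrm{X}$, with $\delta$ fixed.

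First, reduce to couplings. Any admissible pair $(\mu,\mu^\prime)$ in Definition~\ref{definition_robust_wasserstein_metric} with finite cost comes with couplings $\gamma$ on $\mathrm{X}^2$ whose marginals are $\mu \le \nu$ and $\mu^\prime \le \nu^\prime$. The admissibility constraints say that the total mass $\|\gamma\| = \|\mu\| = \|\mu^\prime\|$ satisfies $\|\gamma\| \geq \max\{\|\nu\|,\|\nu^\prime\|\} - \delta$. Conversely, every subcoupling $\gamma$ with first marginal $\le \nu$, second marginal $\le \nu^\prime$ and mass at least $\max\{\|\nu\|,\|\nu^\prime\|\}-\delta$ yields an admissible pair. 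So
\[
\mathcal{W}_{d,\delta}(\nu,\nu^\prime) = \inf_{\gamma \in \mathcal{C}} \int d\,\mathrm{d}\gamma ,
\]
where $\mathcal{C}$ is this set of subcouplings. If $\mathcal{C}$ is empty, or every element has infinite cost, the value is $+\infty$. In that degenerate case one still picks any admissible decomposition, or notes the convention, so we assume the infimum is finite.

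Next, show that $\mathcal{C}$ is nonempty, tight and weakly closed. Tightness holds because each marginal is dominated by $\nu$ or $\nu^\prime$, and finite measures on a Polish space are tight. So for $K,K^\prime$ compact with $\nu(K^c),\nu^\prime(K^{\prime c})<\varepsilon$, every $\gamma\in\mathcal{C}$ gives $(K\times K^\prime)^c$ mass less than $2\varepsilon$; masses are also uniformly bounded. By Prokhorov's theorem, $\mathcal{C}$ is relatively compact in the weak topology of finite measures. Closedness follows from three facts: the marginal constraints $\pi_1\gamma\le\nu$ are preserved under weak limits (test against nonnegative bounded continuous functions and use that domination of finite Borel measures on a metric space is determined by such functions), total mass is continuous, and the mass lower bound is a closed condition.

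Then take a minimizing sequence $\gamma_k\in\mathcal{C}$ and extract a weak limit $\gamma^\ast\in\mathcal{C}$. The cost $\gamma\mapsto\int d\,\mathrm{d}\gamma$ is weakly lower semicontinuous because $d$ is continuous and nonnegative: approximate it from below by $\min\{d,M\}$ and apply the portmanteau theorem. Hence $\gamma^\ast$ attains the infimum. Setting $\mu_b=\pi_1\gamma^\ast$, $\mu_b^\prime=\pi_2\gamma^\ast$, $\mu_c=\nu-\mu_b$ and $\mu_c^\prime=\nu^\prime-\mu_b^\prime$ gives nonnegative remainders.

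It remains to fix the mass equality $\|\mu_c\|=\|\mu_c^\prime\|\le\delta$. The minimizer may have $\|\nu\|-\|\gamma^\ast\| \ne \|\nu^\prime\|-\|\gamma^\ast\|$, since $\nu$ and $\nu^\prime$ need not have equal mass. For the intended use, $\nu,\nu^\prime$ are probability measures or have equal mass, and then the remainders automatically have equal mass at most $\delta$; I would state this as the working case. In general, the conclusion should be read as requiring both remainders to have mass at most $\delta$. The same $\gamma^\ast$ is an optimal coupling of $(\mu_b,\mu_b^\prime)$, since any cheaper coupling of these marginals would lie in $\mathcal{C}$ and contradict minimality; this gives the second claim.

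The main obstacle is the weak closedness of the domination constraint $\pi_1\gamma\le\nu$ together with lower semicontinuity of an unbounded cost. I would handle the first by the monotone-class/regularity argument above, and the second by truncating $d$.
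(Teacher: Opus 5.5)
Your proof is correct. It uses the same ingredients as the paper (tightness from domination by $\nu,\nu^\prime$, Prokhorov, and lower semicontinuity of the cost via truncation), but it organizes the minimization differently.

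\textbf{Comparison of routes.} The paper takes a minimizing sequence of \emph{decompositions} $\nu=\nu_b^{(n)}+\nu_c^{(n)}$ and $\nu^\prime=\nu_b^{(n)\prime}+\nu_c^{(n)\prime}$, and extracts weak limits of all four pieces. For each $n$ it then separately invokes the existence of an optimal coupling $\pi^{(n)}$ of $(\nu_b^{(n)},\nu_b^{(n)\prime})$, extracts a limit $\pi$, and concludes by lower semicontinuity. You instead minimize directly over the set $\mathcal{C}$ of subcouplings, so the decomposition and the coupling are one object. This buys three things: a single compactness step, no need for optimal plans along the sequence, and the optimal coupling of $(\mu_b,\mu_b^\prime)$ obtained for free as $\gamma^\ast$. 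The price is that you must prove the domination constraint $\pi_1\gamma\le\nu$ is weakly closed, which your closed-set regularity argument does. In the paper this comes automatically, because the weak limits of $\nu_b^{(n)}$ and $\nu_c^{(n)}$ are nonnegative and sum to $\nu$.

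\textbf{The mass caveat.} Your caveat is correct and exposes an imprecision in the statement itself. Since $\|\mu_b\|=\|\mu_b^\prime\|$ is needed for a coupling of $(\mu_b,\mu_b^\prime)$ to exist, requiring $\|\mu_c\|=\|\mu_c^\prime\|$ forces $\|\nu\|=\|\nu^\prime\|$. For instance, take $\nu$ a unit point mass, $\nu^\prime=\tfrac12\nu$ and tolerance $\tfrac12$. Then $\mathcal{W}_{d,1/2}(\nu,\nu^\prime)=0$, yet every decomposition with equal remainder masses leaves $\|\mu_b\|\neq\|\mu_b^\prime\|$. The paper's proof silently assumes equal masses when it picks minimizing decompositions with $\|\nu_c^{(n)}\|=\|\nu_c^{(n)\prime}\|$. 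Restricting to equal masses is harmless, because the lemma is only applied to the probability measures $P(\Lambda,\cdot)$ and $Q(\Lambda^\prime,\cdot)$. In that case $\mathcal{C}$ is nonempty: it contains $\nu\otimes\nu^\prime/\|\nu\|$. You should say this explicitly rather than deferring the empty case to a convention, since your fallback of picking ``any admissible decomposition'' is vacuous when $\mathcal{C}$ is empty.
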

\begin{proof}
    The proof is similar to the proof of the existence of the optimal coupling (see the proof of Theorem 4.1 in \cite{villaniOptimalTransportOld2009} for details).

    The definition of $\mathcal{W}_{d,\delta}(\nu, \nu^\prime)$ in Definition \ref{definition_robust_wasserstein_metric} is given by
    \begin{align*}
        & \quad \mathcal{W}_{d, \delta}(\nu, \nu^\prime) \\
        & := \inf \left\{ \mathcal{W}_d\left(\mu, \mu^\prime\right)
        \mid \|\mu\| = \|\mu^\prime\|,
        \|\nu - \mu\| \leq \delta, \|\nu^\prime - \mu^\prime\| \leq \delta,
        \nu \geq \mu, \nu^\prime \geq \mu^\prime \right\} .
    \end{align*}

    Choose a minimizing sequence of decompositions $\left\{ \nu_b^{(n)},\nu_c^{(n)},\nu_b^{(n)\prime},\nu_c^{(n)\prime} \right\}_{n \geq 1}$ satisfying
    \begin{equation*}
        \nu=\nu_b^{(n)} + \nu_c^{(n)} , \quad
        \nu^\prime=\nu_b^{(n)\prime}+\nu_c^{(n)\prime},
    \end{equation*}
    with
    \begin{equation*}
        \|\nu_c^{(n)}\| = \|\nu_c^{(n)\prime}\| \leq \delta,
    \end{equation*}
    and
    \begin{equation*}
        \mathcal{W}_d \left(\nu_b^{(n)}, \nu_b^{(n)\prime}\right)
        \rightarrow \mathcal{W}_{d,\delta}(\nu,\nu^\prime) .
    \end{equation*}

    Since all measures are dominated by $\nu$ or $\nu^\prime$, the sequences $\{\nu_b^{(n)}\}$ and $\{\nu_b^{(n)\prime}\}$ are tight.
    By Prokhorov's theorem, along a subsequence (still denoted by $\nu_b^{(n)}, \nu_c^{(n)}, \nu_b^{(n) \prime}, \nu_c^{(n) \prime}$, without loss of generality), we have
    \begin{equation*}
        \nu_b^{(n)}\Rightarrow\mu_b,\quad \nu_b^{(n)\prime}\Rightarrow\mu_b^\prime,\quad
        \nu_c^{(n)}\Rightarrow\mu_c,\quad \nu_c^{(n)\prime}\Rightarrow\mu_c^\prime,
    \end{equation*}
    for some finite measures $\mu_b,\mu_b^\prime,\mu_c,\mu_c^\prime$, where $\Rightarrow$ denotes weak convergence of measures.
    Thus,
    \begin{equation*}
        \nu=\mu_b+\mu_c, \quad \nu^\prime=\mu_b^\prime+\mu_c^\prime,\quad \|\mu_c\|=\|\mu_c^\prime\| \leq \delta .
    \end{equation*}

    For each $n$, let $\pi^{(n)}$ be an optimal coupling between
    $\nu_b^{(n)}$ and $\nu_b^{(n)\prime}$, so that
    \begin{equation*}
        \mathcal{W}_d \left(\nu_b^{(n)}, \nu_b^{(n)\prime}\right)
        =\int d(x,y) \pi^{(n)}(\mathrm{d} x, \mathrm{d} y) .
    \end{equation*}
    By tightness, extract a weakly convergent subsequence $\pi^{(n)}\Rightarrow\pi$ with marginals $\mu_b$ and $\mu_b^\prime$.

    By Lemma 4.3 in \cite{villaniOptimalTransportOld2009},
    \begin{align*}
        & \quad \mathcal{W}_d \left( \mu_b, \mu_b^\prime \right)
        \leq \int d(x,y) \pi(\mathrm{d} x, \mathrm{d} y)
        \leq \liminf_{n \rightarrow \infty} \int d(x,y) \pi^{(n)}(\mathrm{d} x, \mathrm{d} y) \\
        &= \liminf_{n \rightarrow \infty}
        \mathcal{W}_d \left( \nu_b^{(n)}, \nu_b^{(n)\prime} \right)
        = \mathcal{W}_{d,\delta}(\nu,\nu^\prime) .
    \end{align*}
    Observe that the pair $(\mu_b,\mu_b^\prime)$ is an admissible decomposition in the definition of $\mathcal{W}_{d,\delta}(\nu,\nu^\prime)$.
    Hence,
    \begin{equation*}
        \mathcal{W}_{d,\delta}(\nu,\nu^\prime)
        \leq
        \mathcal{W}_d \left( \mu_b, \mu_b^\prime \right) .
    \end{equation*}
    Consequently, the equality $\mathcal{W}_{d,\delta}(\nu,\nu^\prime) = \mathcal{W}_d \left( \mu_b, \mu_b^\prime \right)$ holds.
\end{proof}

\begin{lemma}
    \label{lemma_decompose_bounded_complement}

    Let $\Delta > 0$ and $\delta \geq 0$ be given.
    For any two finite measures $\nu$ and $\nu^\prime$, there exist decompositions $\nu = \mu_b + \mu_c$ and $\nu^\prime = \mu_b^\prime + \mu_c^\prime$ such that
    \begin{enumerate}
        \item $\|\mu_b\|=\|\mu_b^\prime\|$ and $\|\mu_c\|=\|\mu_c^\prime\| \leq \delta + \frac{\mathcal{W}_{d, \delta} \left( \nu,\nu^\prime \right)}{\Delta}$,
        \item there exists a coupling $\pi_{\mathrm{cp}}$ of the finite measures $\mu_b$ and $\mu_b^\prime$ such that
        \begin{equation*}
            d(X,Y) \leq \Delta \quad \text{for almost all } (X,Y) \sim \pi_{\text{cp}}
        \end{equation*}
        and
        \begin{equation*}
            \mathbb{E}_{(X,Y) \sim \pi_{\text{cp}}} [d(X,Y)]
            \leq \mathcal{W}_{d, \delta} \left( \nu,\nu^\prime \right) .
        \end{equation*}
    \end{enumerate}
\end{lemma}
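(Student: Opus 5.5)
The plan is to start from an optimal robust decomposition and then move the long-distance pairs into the complement parts. First I invoke Lemma \ref{lemma_existence_robust_wasserstein_metric}. It gives decompositions $\nu = \tilde\mu_b + \tilde\mu_c$ and $\nu^\prime = \tilde\mu_b^\prime + \tilde\mu_c^\prime$ with $\|\tilde\mu_c\| = \|\tilde\mu_c^\prime\| \leq \delta$, together with an optimal coupling $\tilde\pi$ of $(\tilde\mu_b, \tilde\mu_b^\prime)$ satisfying
\begin{equation*}
    \int d(x,y)\, \tilde\pi(\mathrm{d} x, \mathrm{d} y) = \mathcal{W}_{d,\delta}(\nu,\nu^\prime) =: W .
\end{equation*}
Since $\tilde\mu_b$ and $\tilde\mu_b^\prime$ are the marginals of one finite measure $\tilde\pi$, we automatically have $\|\tilde\mu_b\| = \|\tilde\mu_b^\prime\|$.

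Next I split $\tilde\pi$ according to the event $E = \{(x,y) : d(x,y) \leq \Delta\}$. I set $\pi_{\mathrm{cp}} = \tilde\pi|_E$ and $\pi_{\mathrm{far}} = \tilde\pi|_{E^c}$, and I let $\mu_b$, $\mu_b^\prime$ be the first and second marginals of $\pi_{\mathrm{cp}}$. The complement parts are then
\begin{equation*}
    \mu_c = \tilde\mu_c + (\text{first marginal of } \pi_{\mathrm{far}}), \qquad \mu_c^\prime = \tilde\mu_c^\prime + (\text{second marginal of } \pi_{\mathrm{far}}).
\end{equation*}
With these choices $\nu = \mu_b + \mu_c$ and $\nu^\prime = \mu_b^\prime + \mu_c^\prime$ hold, and all four pieces are nonnegative measures. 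Both $\mu_b$ and $\mu_b^\prime$ have total mass $\tilde\pi(E)$, so $\|\mu_b\| = \|\mu_b^\prime\|$. The complement parts also have equal mass, namely $\|\tilde\mu_c\| + \tilde\pi(E^c)$.

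It remains to bound the masses and the cost. By Markov's inequality, $\tilde\pi(E^c) \leq \Delta^{-1}\int d\, \mathrm{d}\tilde\pi = W/\Delta$, so $\|\mu_c\| \leq \delta + W/\Delta$. This gives item 1. For item 2, $\pi_{\mathrm{cp}}$ is a coupling of $\mu_b$ and $\mu_b^\prime$ supported on $E$, so $d(X,Y) \leq \Delta$ holds $\pi_{\mathrm{cp}}$-a.e. Its cost satisfies $\int d\, \mathrm{d}\pi_{\mathrm{cp}} \leq \int d\, \mathrm{d}\tilde\pi = W$. Here the expectation in the statement is read as the integral against the finite measure $\pi_{\mathrm{cp}}$. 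If it is instead normalized, the bound $W$ still holds whenever $\tilde\pi(E) \geq 1$, and otherwise one would state the result for the unnormalized integral.

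The only delicate point is the existence of an exact optimal coupling attaining $W$, and Lemma \ref{lemma_existence_robust_wasserstein_metric} already supplies it. If only an approximate minimizer were available, a coupling with cost $W + \eta$ would give bounds with $W+\eta$ in place of $W$, and letting $\eta \to 0$ would not close the argument directly. That is why I use the attained minimizer from Lemma \ref{lemma_existence_robust_wasserstein_metric}. Measurability of $E$ is immediate because $d$ is continuous.
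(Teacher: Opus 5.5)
Your proposal is correct and is essentially the paper's proof. Both arguments take the attained optimal robust decomposition and coupling from Lemma \ref{lemma_existence_robust_wasserstein_metric}, restrict the coupling to $\{d(x,y)\le\Delta\}$, bound the discarded mass by $\mathcal{W}_{d,\delta}(\nu,\nu^\prime)/\Delta$ via Markov's inequality, and absorb the marginals of the far part into $\mu_c$ and $\mu_c^\prime$; your reading of $\mathbb{E}_{\pi_{\mathrm{cp}}}$ as the unnormalized integral against the finite measure $\pi_{\mathrm{cp}}$ matches the paper's usage.
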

\begin{proof}
    By Lemma \ref{lemma_existence_robust_wasserstein_metric}, there exist decompositions
    \begin{equation*}
        \nu = \nu_b + \nu_c ,
        \quad \nu^\prime = \nu_b^\prime + \nu_c^\prime ,
    \end{equation*}
    where $\|\nu_c\| = \|\nu_c^\prime\| \leq \delta$, $\|\nu_b\| = \|\nu_b^\prime\|$, and $\nu_b, \nu_b^\prime, \nu_c, \nu_c^\prime \geq 0$ are finite measures, such that
    \begin{equation*}
        \mathcal{W}_d \left( \nu_b, \nu_b^\prime \right)
        = \mathcal{W}_{d,\delta}(\nu,\nu^\prime).
    \end{equation*}
    Moreover, Lemma \ref{lemma_existence_robust_wasserstein_metric} implies that there exists a coupling $\pi_{\mathrm{cp}}^\prime$ of the measures $\nu_b$ and $\nu_b^\prime$ satisfying
    \begin{equation*}
        \mathbb{E}_{(X,Y) \sim \pi_{\mathrm{cp}}^\prime} [d(X,Y)]
        = \mathcal{W}_{d,\delta}(\nu, \nu^\prime) .
    \end{equation*}

    Define the set $A := \{ (x,y) \mid d(x,y) \leq \Delta \}$. Then
    \begin{equation*}
        \pi_{\mathrm{cp}}^\prime(A^c)
        \leq \frac{\mathbb{E}_{(X,Y)\sim \pi_{\mathrm{cp}}^\prime}[d(X,Y)]}{\Delta}
        \leq \frac{\mathcal{W}_{d,\delta}(\nu,\nu^\prime)}{\Delta} .
    \end{equation*}

    Let $\nu_t$ and $\nu_t^\prime$ denote the marginal measures of the restricted coupling
    \begin{equation*}
        \pi_{\mathrm{cp}} := \pi_{\mathrm{cp}}^\prime \big|_A.
    \end{equation*}
    By construction, it holds that $\nu_b \geq \nu_t$, $\nu_b^\prime \geq \nu_t^\prime$ and $\| \nu_t \| = \| \nu_t^\prime \| = \| \pi_{\mathrm{cp}} \| = \pi_{\mathrm{cp}}^\prime(A)$.
    In addition,
    \begin{equation*}
        \mathbb{E}_{(X,Y) \sim \pi_{\mathrm{cp}}} [d(X,Y)]
        \leq \mathbb{E}_{(X,Y) \sim \pi_{\text{cp}}^\prime} [d(X,Y)]
        = \mathcal{W}_{d, \delta} \left( \nu,\nu^\prime \right) .
    \end{equation*}

    Now define
    \begin{align*}
        \mu_b &:= \nu_t , & \mu_b^\prime &:= \nu_t^\prime , \\
        \mu_c &:= \nu_c + [\nu_b - \nu_t], & \mu_c^\prime &:= \nu_c^\prime + [\nu_b^\prime - \nu_t^\prime] .
    \end{align*}

    By construction, we have $\|\mu_b\| = \|\mu_b^\prime\|$ and
    \begin{equation*}
        \|\mu_c\| = \|\mu_c^\prime\|
        = \| \nu_c \| + \| \nu_b - \nu_t \|
        = \| \nu_c \| + \pi_{\mathrm{cp}}^\prime(A^c)
        \leq \delta + \frac{\mathcal{W}_{d,\delta}(\nu,\nu^\prime)}{\Delta} .
    \end{equation*}

    Finally, by the definition of the set $A$, we have
    \begin{equation*}
        d(X,Y) \leq \Delta \quad \text{for almost all } (X,Y) \sim \pi_{\text{cp}} .
    \end{equation*}
    This completes the construction of the desired decompositions and coupling.
\end{proof}

\begin{lemma}
    \label{lemma_coupled_robustly_Lipschitz_continuous_robust_wasserstein_metric}

    Suppose that $P$ and $Q$ are $(L_P,\tau,\epsilon)$-coupled robustly Lipschitz continuous.
    Then, for any $\Lambda,\Lambda^\prime \in \mathrm{X}$, the robust Wasserstein metric between the probability measures $P(\Lambda, \cdot)$ and $Q(\Lambda^\prime, \cdot)$ satisfies
    \begin{equation*}
        \mathcal{W}_{d, L_P d(\Lambda, \Lambda^\prime) + \tau}
        \left( P(\Lambda, \cdot), Q(\Lambda^\prime, \cdot) \right)
        \leq d(\Lambda, \Lambda^\prime) + \epsilon .
    \end{equation*}
\end{lemma}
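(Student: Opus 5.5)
The natural route is to read the coupling kernel from Definition \ref{definition_coupled_robust_lipschitz_continuity_kernels} directly as an admissible competitor in Definition \ref{definition_robust_wasserstein_metric}. Fix $\Lambda,\Lambda^\prime\in\mathrm{X}$ and let $K$ be the coupling kernel witnessing that $P$ and $Q$ are $(L_P,\tau,\epsilon)$-coupled robustly Lipschitz continuous. Let $\pi:=K(\Lambda,\Lambda^\prime;\cdot)$, a sub-probability measure on $\mathrm{X}^2$. Define its marginals
\begin{equation*}
\mu(A):=\pi(A\times\mathrm{X}), \qquad \mu^\prime(B):=\pi(\mathrm{X}\times B).
\end{equation*}
These are finite measures with $\|\mu\|=\|\mu^\prime\|=\pi(\mathrm{X}\times\mathrm{X})$.

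Next I check the constraints in Definition \ref{definition_robust_wasserstein_metric} with $\nu=P(\Lambda,\cdot)$, $\nu^\prime=Q(\Lambda^\prime,\cdot)$ and $\delta=L_P d(\Lambda,\Lambda^\prime)+\tau$.

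\begin{itemize}
\item The domination properties of $K$ give $\nu\geq\mu$ and $\nu^\prime\geq\mu^\prime$ setwise.
\item Because of this domination, $\nu-\mu$ is a nonnegative measure. Its total variation is therefore just its mass, $\|\nu-\mu\|=1-\pi(\mathrm{X}\times\mathrm{X})\leq L_P d(\Lambda,\Lambda^\prime)+\tau=\delta$, by the first bound in the definition.
\item The same argument gives $\|\nu^\prime-\mu^\prime\|\leq\delta$.
\item Equal masses hold by construction.
\end{itemize}

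Hence $(\mu,\mu^\prime)$ is admissible, and
\begin{equation*}
\mathcal{W}_{d,\delta}(\nu,\nu^\prime)\leq\mathcal{W}_d(\mu,\mu^\prime).
\end{equation*}

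Finally, $\pi$ is itself a coupling of the equal-mass finite measures $\mu$ and $\mu^\prime$. So $\mathcal{W}_d(\mu,\mu^\prime)$, understood as the infimum transport cost over couplings of equal-mass finite measures, is at most $\int d(u,v)\,\pi(\mathrm{d}u\times\mathrm{d}v)$. By the second bound in the definition this is at most $d(\Lambda,\Lambda^\prime)+\epsilon$, which proves the claim.

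The only delicate point is interpretive: $\mathcal{W}_d$ must be applied to finite, not necessarily probability, measures of equal mass. One should say explicitly that it is defined as the infimum of transport cost over couplings with the given marginals. An alternative is to normalize by the common mass $m$, if $m>0$, and note the scaling $\mathcal{W}_d(\mu,\mu^\prime)=m\,\mathcal{W}_d(\mu/m,\mu^\prime/m)$. The degenerate case $m=0$ is trivial, since then the cost is $0$.
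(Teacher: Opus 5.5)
Your proposal is correct and matches the paper's own proof: take $\pi=K(\Lambda,\Lambda^\prime;\cdot)$, use its marginals as the admissible pair in the robust Wasserstein infimum, and bound the transport cost by $\int d\,\mathrm{d}\pi\leq d(\Lambda,\Lambda^\prime)+\epsilon$. Your explicit check that $\|\nu-\mu\|=1-\pi(\mathrm{X}\times\mathrm{X})\leq\delta$, which uses the nonnegativity of $\nu-\mu$, fills in a step the paper leaves implicit.
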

\begin{proof}
    This follows directly from the definitions of $(L_P,\tau,\epsilon)$-coupled robustly Lipschitz continuity and the robust Wasserstein metric.
    In particular, the coupling kernel $K$ in Definition \ref{definition_coupled_robust_lipschitz_continuity_kernels} guarantees the existence of a coupling $\pi = K(\Lambda, \Lambda^\prime; \cdot)$ such that the expected distance under $\pi$ satisfies
    \begin{equation*}
        \mathbb{E}_{(x,y) \sim \pi}[d(x,y)] \leq d(\Lambda, \Lambda^\prime) + \epsilon
    \end{equation*}
    and
    \begin{equation*}
        1 - \pi(\mathrm{X} \times \mathrm{X}) \leq L_P d(\Lambda, \Lambda^\prime) + \tau .
    \end{equation*}
    Denote the marginal measure of $\pi$ by $\mu$ and $\mu^\prime$.
    Then we have $\mu(\cdot) \leq P(\Lambda, \cdot)$, $\mu^\prime(\cdot) \leq Q(\Lambda^\prime, \cdot)$, and
    \begin{equation*}
        \mathcal{W}_{d, L_P d(\Lambda, \Lambda^\prime) + \tau}
        \leq \mathcal{W}_d \left( \mu, \mu^\prime \right)
        \leq \mathbb{E}_{(x,y) \sim \pi}[d(x,y)]
        \leq d(\Lambda, \Lambda^\prime) + \epsilon .
    \end{equation*}
    This completes the proof.
\end{proof}

\begin{theorem}
    \label{theorem_n_step_coupled_robustly_Lipschitz_continuous}
    Suppose that $P$ and $Q$ are $(L_P,\tau,\epsilon)$-coupled robustly Lipschitz continuous.
    Then there exists a coupling kernel 
    \begin{equation*}
        K: \mathrm{X}^2 \times \mathcal{X}^{\otimes 2} \to [0,1]
    \end{equation*}
    such that, for any $\Lambda$, $\Lambda^\prime \in \mathrm{X}$ and any $n \in \mathbb{N}$,
    \begin{align*}
        K^n(\Lambda, \Lambda^\prime; A \times \mathrm{X})
        & \leq P^n(\Lambda, A) , 
        && A \in \mathcal{X}, \\
        K^n(\Lambda, \Lambda^\prime; \mathrm{X} \times B)
        & \leq Q^n(\Lambda^\prime, B) , 
        && B \in \mathcal{X},
    \end{align*}
    and the following bounds hold:
    \begin{align*}
        0 \leq 1 - K^n(\Lambda, \Lambda^\prime; \mathrm{X} \times \mathrm{X})
        & \leq n L_P d(\Lambda, \Lambda^\prime) + n\tau + \frac{n(n-1)L_P \epsilon}{2} , \\
        \int d(u,v) K^n(\Lambda, \Lambda^\prime; \mathrm{d} u \times \mathrm{d} v)
        &\leq d(\Lambda, \Lambda^\prime) + n\epsilon .
    \end{align*}
\end{theorem}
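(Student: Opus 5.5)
The plan is to build the $n$-step coupling by iterating the one-step coupling kernel $K$ from Definition~\ref{definition_coupled_robust_lipschitz_continuity_kernels}, viewed as a sub-Markov kernel on $\mathrm{X}^2$. Set $K^1 = K$ and $K^{n}(\Lambda,\Lambda';\cdot) = \int K^{n-1}(\Lambda,\Lambda';\mathrm{d}u\times\mathrm{d}v)\,K(u,v;\cdot)$. Mass lost at any step stays lost; this is the ``robust'' deficiency. All three conclusions will be proved together by induction on $n$. The case $n=0$ uses $K^0=\delta_{(\Lambda,\Lambda')}$ and is trivial, and $n=1$ is exactly the hypothesis.

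For the marginal domination, assume $K^{n-1}(\Lambda,\Lambda';A\times\mathrm{X})\le P^{n-1}(\Lambda,A)$ for all $A$. Then
\[
K^n(\Lambda,\Lambda';A\times\mathrm{X})=\int K^{n-1}(\Lambda,\Lambda';\mathrm{d}u\times\mathrm{d}v)\,K(u,v;A\times\mathrm{X})\le\int K^{n-1}(\Lambda,\Lambda';\mathrm{d}u\times\mathrm{X})\,P(u,A).
\]
Since the integrand $P(\cdot,A)$ is nonnegative, domination of the first marginal of $K^{n-1}$ by $P^{n-1}(\Lambda,\cdot)$ extends from sets to nonnegative functions. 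This gives the bound $P^n(\Lambda,A)$. The second marginal is handled symmetrically with $Q$.

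For the distance bound, integrate the one-step estimate $\int d\,\mathrm{d}K(u,v;\cdot)\le d(u,v)+\epsilon$ against $K^{n-1}$. Because the total mass of $K^{n-1}$ is at most $1$, this gives $\int d\,\mathrm{d}K^n\le\int d\,\mathrm{d}K^{n-1}+\epsilon$. Induction then yields $d(\Lambda,\Lambda')+n\epsilon$.

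For the mass deficiency, write
\[
1-K^n(\mathrm{X}^2)=\bigl[1-K^{n-1}(\mathrm{X}^2)\bigr]+\int K^{n-1}(\mathrm{d}u\times\mathrm{d}v)\,\bigl[1-K(u,v;\mathrm{X}^2)\bigr].
\]
The one-step hypothesis bounds the integrand by $L_P d(u,v)+\tau$. The second term is therefore at most $L_P\int d\,\mathrm{d}K^{n-1}+\tau$, which by the distance bound at step $n-1$ is at most $L_P d(\Lambda,\Lambda')+(n-1)L_P\epsilon+\tau$. Summing over steps $k=1,\dots,n$ gives $nL_P d(\Lambda,\Lambda')+n\tau+L_P\epsilon\sum_{k=1}^{n}(k-1)$, and the last sum equals $n(n-1)L_P\epsilon/2$, as claimed. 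Nonnegativity holds because $K^n$ has total mass at most $1$.

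The main obstacle is measure-theoretic: $K^n$ must be a genuine measurable kernel on $\mathrm{X}^2$. This requires that $K(\cdot,\cdot;B)$ be jointly measurable in $(\Lambda,\Lambda')$, which is part of the meaning of ``coupling kernel'' in the definition, and I will assume it. Everything else is the bookkeeping above. The quadratic term arises because mass loss at step $k$ depends on the accumulated distance, which has grown by $(k-1)\epsilon$.
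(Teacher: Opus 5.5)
Your proposal is correct and follows essentially the same route as the paper: induction on $n$ with the iterated one-step coupling $K^n$, establishing marginal domination, the distance bound $d(\Lambda,\Lambda')+n\epsilon$, and the quadratic mass-deficiency bound together. The only difference is cosmetic, and the constants come out identical. The paper peels off the first step, writing $K^{k+1}(\Lambda,\Lambda';\cdot)=\int K(\Lambda,\Lambda';\mathrm{d}x\times\mathrm{d}y)\,K^k(x,y;\cdot)$ and bounding the mass via $(1-a)(1-b)\ge 1-a-b$, whereas you peel off the last step and use an exact telescoping identity for $1-K^n(\Lambda,\Lambda';\mathrm{X}\times\mathrm{X})$.
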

\begin{proof}
    The assertions for $n = 0$ and $n = 1$ hold by definition.

    Assume the statements hold for $n=k$. We prove them for $n=k+1$.
    First,
    \begin{align*}
        & \quad K^{k+1}(\Lambda, \Lambda^\prime; A \times \mathrm{X})
        = \int K(\Lambda, \Lambda^\prime; \mathrm{d} x \times \mathrm{d} y)
        K^k(x, y; A \times \mathrm{X}) \\
        & \leq \int K(\Lambda, \Lambda^\prime; \mathrm{d} x \times \mathrm{d} y) P^k(x, A)
        = \int K(\Lambda, \Lambda^\prime; \mathrm{d} x \times \mathrm{X}) P^k(x, A) \\
        & \leq \int P(\Lambda, \mathrm{d} x) P^k(x, A)
        = P^{k+1}(\Lambda, A) .
    \end{align*}
    An analogous argument yields
    \begin{equation*}
        K^{k+1}(\Lambda, \Lambda^\prime; \mathrm{X} \times B)
        \leq Q^{k+1}(\Lambda^\prime, B) .
    \end{equation*}

    Clearly $K^{k+1}(\Lambda, \Lambda^\prime; \mathrm{X} \times \mathrm{X}) \leq 1$.
    Moreover,
    \begin{align*}
        & \quad K^{k+1}(\Lambda, \Lambda^\prime; \mathrm{X} \times \mathrm{X})
        = \int K^k(x, y; \mathrm{X} \times \mathrm{X})
        K(\Lambda, \Lambda^\prime; \mathrm{d} x \times \mathrm{d} y) \\
        & \geq \int \left[ 1 - k L_P d(x, y) - k\tau - \frac{k(k-1)L_P \epsilon}{2} \right]
        K(\Lambda, \Lambda^\prime; \mathrm{d} x \times \mathrm{d} y) \\
        &= \left[ 1 - k\tau - \frac{k(k-1)L_P \epsilon}{2} \right] K(\Lambda, \Lambda^\prime; \mathrm{X} \times \mathrm{X})
        - k L_P \int d(x, y) K(\Lambda, \Lambda^\prime; \mathrm{d} x \times \mathrm{d} y) \\
        & \geq \left[ 1 - k\tau - \frac{k(k-1)L_P \epsilon}{2} \right]
        - \left[ 1 - K(\Lambda, \Lambda^\prime; \mathrm{X} \times \mathrm{X}) \right]
        - k L_P \left[ d(\Lambda, \Lambda^\prime) + \epsilon \right] \\
        & \geq \left[ 1 - k\tau - \frac{k(k-1)L_P \epsilon}{2} \right]
        - \left[ L_P d(\Lambda, \Lambda^\prime) + \tau \right]
        - k L_P \left[ d(\Lambda, \Lambda^\prime) + \epsilon \right] \\
        &= 1 - (k+1)\tau - \frac{k(k+1)L_P \epsilon}{2} - (k+1) L_P d(\Lambda, \Lambda^\prime) ,
    \end{align*}
    where the inequality $(1-a)(1-b) \geq 1-a-b$ for $a,b \geq 0$ is used to obtain an intermediate bound in the second inequality.

    Finally,
    \begin{align*}
        & \quad \int d(u,v) K^{k+1}(\Lambda, \Lambda^\prime; \mathrm{d} u \times \mathrm{d} v)
        = \int \left[ K(\Lambda, \Lambda^\prime; \mathrm{d} x \times \mathrm{d} y) \int d(u,v) K^k(x, y; \mathrm{d} u \times \mathrm{d} v) \right] \\
        & \leq \int \left\{ K(\Lambda, \Lambda^\prime; \mathrm{d} x \times \mathrm{d} y) \left[ d(x,y) + k\epsilon \right] \right\}
        \leq \int \left[ K(\Lambda, \Lambda^\prime; \mathrm{d} x \times \mathrm{d} y) d(x,y) \right] + k\epsilon \\
        & \leq d(\Lambda, \Lambda^\prime) + (k+1)\epsilon .
    \end{align*}
    This completes the induction and the proof.
\end{proof}

\begin{corollary}[Corollary \ref{corollary_n_step_family_robustly_Lipschitz_continuous_main}]
    \label{corollary_n_step_family_robustly_Lipschitz_continuous}
    Let $\{P_\theta\}_{\theta \in \Theta}$ be a family of transition probability kernels that is robustly Lipschitz continuous with a Lipschitz constant $L_P \geq 0$.
    Then for any $\theta,\theta^\prime\in\Theta$ and any $n \in \mathbb{N}$, the probability kernels $P_\theta^n$ and $P_{\theta^\prime}^n$ are $(n L_P, n L_P d(\theta,\theta^\prime) + \frac{n(n-1) L_P^2}{2} d(\theta,\theta^\prime), n L_P d(\theta,\theta^\prime))$-coupled robustly Lipschitz continuous.
\end{corollary}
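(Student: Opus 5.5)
The plan is to apply Theorem \ref{theorem_n_step_coupled_robustly_Lipschitz_continuous} directly to the pair $P = P_\theta$, $Q = P_{\theta^\prime}$. By Definition \ref{definition_robust_lipschitz_continuity_family_kernels}, for fixed $\theta,\theta^\prime\in\Theta$ the kernels $P_\theta$ and $P_{\theta^\prime}$ are $(L_P,\tau,\epsilon)$-coupled robustly Lipschitz continuous with $\tau = L_P d(\theta,\theta^\prime)$ and $\epsilon = L_P d(\theta,\theta^\prime)$. Let $K$ be the associated coupling kernel. Theorem \ref{theorem_n_step_coupled_robustly_Lipschitz_continuous} shows that the iterate $K^n$ is a sub-coupling of $P_\theta^n(\Lambda,\cdot)$ and $P_{\theta^\prime}^n(\Lambda^\prime,\cdot)$, since its two marginals are dominated by these measures. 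I will take $K^n$ itself as the coupling kernel witnessing the claim for the $n$-step kernels.

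Next I substitute $\tau$ and $\epsilon$ into the bounds of the theorem. The mass deficiency satisfies
\begin{equation*}
1-K^n(\Lambda,\Lambda^\prime;\mathrm{X}\times\mathrm{X}) \leq nL_P d(\Lambda,\Lambda^\prime) + nL_P d(\theta,\theta^\prime) + \frac{n(n-1)L_P^2}{2} d(\theta,\theta^\prime).
\end{equation*}
This has exactly the form $(nL_P)\, d(\Lambda,\Lambda^\prime) + \tau_n$ with $\tau_n = nL_P d(\theta,\theta^\prime) + \frac{n(n-1)L_P^2}{2}d(\theta,\theta^\prime)$. The transport cost satisfies
\begin{equation*}
\int d(u,v)\, K^n(\Lambda,\Lambda^\prime;\mathrm{d}u\times\mathrm{d}v) \leq d(\Lambda,\Lambda^\prime) + nL_P d(\theta,\theta^\prime),
\end{equation*}
which gives $\epsilon_n = nL_P d(\theta,\theta^\prime)$. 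Together with the marginal domination, these are precisely the three requirements of Definition \ref{definition_coupled_robust_lipschitz_continuity_kernels} with constants $(nL_P,\tau_n,\epsilon_n)$, as the corollary asserts.

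The case $n=0$ is handled by the identity coupling $K^0(\Lambda,\Lambda^\prime;\cdot)=\delta_{(\Lambda,\Lambda^\prime)}$. It has no mass deficiency, and its cost is exactly $d(\Lambda,\Lambda^\prime)$, so all three constants may be taken to be zero.

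No step here is a real obstacle, because all of the work sits in Theorem \ref{theorem_n_step_coupled_robustly_Lipschitz_continuous}. The only point needing care is bookkeeping. The theorem's constant $\frac{n(n-1)L_P\epsilon}{2}$ turns into $\frac{n(n-1)L_P^2}{2}d(\theta,\theta^\prime)$ once $\epsilon = L_P d(\theta,\theta^\prime)$ is substituted. One must also check that the Lipschitz constant in the state variable becomes $nL_P$, while the transport-cost bound keeps coefficient $1$ in front of $d(\Lambda,\Lambda^\prime)$, which matches the definition.
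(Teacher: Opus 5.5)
Your proposal is correct and follows the paper's proof exactly. Like the paper, you instantiate Theorem~\ref{theorem_n_step_coupled_robustly_Lipschitz_continuous} with $P=P_\theta$, $Q=P_{\theta^\prime}$ and $\tau=\epsilon=L_P d(\theta,\theta^\prime)$, take the iterated coupling $K^n$ as the witness, and substitute into the two bounds; your separate treatment of $n=0$ is harmless but unnecessary, since the theorem already covers that case.
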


\begin{proof}
    By the definition of robust Lipschitz continuity, for any given $\theta$, $\theta^\prime \in \Theta$, the kernels $P_\theta$ and $P_{\theta^\prime}$ are $(L_P, \tau, \epsilon)$-coupled robustly Lipschitz continuous with
    \begin{equation*}
        \tau = L_P d(\theta,\theta^\prime), \quad \epsilon = L_P d(\theta,\theta^\prime) .
    \end{equation*}
    Hence, the condition of Theorem \ref{theorem_n_step_coupled_robustly_Lipschitz_continuous} are satisfied with these choices of $\tau$ and $\epsilon$.
    Applying that theorem yields the existence of a coupling kernel
    $K_{\theta,\theta^\prime}$ such that
    \begin{equation*}
        K_{\theta,\theta^\prime}^n(\Lambda,\Lambda^\prime;A\times\mathrm{X}) \leq P_\theta^n(\Lambda,A) , \quad
        K_{\theta,\theta^\prime}^n(\Lambda,\Lambda^\prime;\mathrm{X}\times B) \leq P_{\theta^\prime}^n(\Lambda^\prime,B) ,
    \end{equation*}
    together with the bounds
    \begin{equation*}
        0 \leq 1 - K_{\theta,\theta^\prime}^n(\Lambda,\Lambda^\prime;\mathrm{X}\times\mathrm{X})
        \leq n L_P d(\Lambda,\Lambda^\prime) + n\tau + \frac{n(n-1)L_P\epsilon}{2} ,
    \end{equation*}
    and
    \begin{equation*}
        \int d(u,v) K_{\theta,\theta^\prime}^n(\Lambda,\Lambda^\prime; \mathrm{d} u \times \mathrm{d} v)
        \leq d(\Lambda,\Lambda^\prime) + n\epsilon .
    \end{equation*}
    Substituting $\tau=\epsilon=L_P d(\theta,\theta^\prime)$ into these two inequalities proves the corollary.
\end{proof}

\subsection{Definitions and Assumptions on Functions}

\subsubsection{Definitions on Functions}

\begin{definition}[Local Hölder Continuity]
    Let $h: \mathrm{X} \to \mathbb{R}^m$ be a function.
    We say that $h$ is \emph{$(\Delta,\psi,\alpha)$-locally Hölder continuous} if for any $\Lambda,\Lambda^\prime \in \mathrm{X}$ with $d(\Lambda,\Lambda^\prime) < \Delta$, it holds that
    \begin{equation*}
        \|h(\Lambda)-h(\Lambda^\prime)\| \leq \psi(\Lambda) [d(\Lambda, \Lambda^\prime)]^\alpha ,
    \end{equation*}
    where $\Delta>0$, $\alpha \in (0,1]$, and
    $\psi:\mathrm{X}\to(0,\infty)$ is a nonnegative function depending on $\Lambda$.
\end{definition}

\subsubsection{Lemmas on Functions}

\begin{lemma}
    \label{lemma_robust_lipschitz_continuous_bounded}

    Let $\nu$ and $\nu^\prime$ be probability measures defined on the state space $\mathrm{X}$, and let $V: \mathrm{X} \to [1, \infty)$ be a measurable function.
    Suppose $\max\{\nu V^{a}, \nu^\prime V^{a}\} \leq C_{\nu,a}$ for any $a \in (0,1]$.
    Let $h$ be a function satisfying $(\Delta, L_h V^\kappa, \alpha)$-local Hölder continuity and $|h| \leq C_{h,\gamma} V^\gamma$ for some $\kappa$, $\gamma \in (0,1)$.
    Then for any $a \in (\gamma, 1]$,
    \begin{align*}
        & \quad | \nu h - \nu^\prime h | \\
        & \leq 2 C_{h,\gamma}
        \left[ \frac{C_{\nu,a}}{1-\frac{\gamma}{a}} + 1 \right]
        \left[ \delta + \frac{\mathcal{W}_{d, \delta} \left( \nu,\nu^\prime \right)}{\Delta} \right]^{1-\gamma/a}
        + L_h C_{\nu,1}^\kappa (\Delta+1)
        \mathcal{W}_{d, \delta}\left( \nu,\nu^\prime \right)^{\min \left\{ 1-\kappa,\alpha \right\}} .
    \end{align*}
\end{lemma}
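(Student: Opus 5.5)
We apply Lemma \ref{lemma_decompose_bounded_complement} to $\nu,\nu^\prime$ with the given $\Delta$ and $\delta$. This gives decompositions $\nu=\mu_b+\mu_c$ and $\nu^\prime=\mu_b^\prime+\mu_c^\prime$ with $\|\mu_b\|=\|\mu_b^\prime\|$ and
\begin{equation*}
    \|\mu_c\|=\|\mu_c^\prime\|\le \eta:=\delta+\mathcal{W}_{d,\delta}(\nu,\nu^\prime)/\Delta .
\end{equation*}
It also gives a coupling $\pi_{\mathrm{cp}}$ of $(\mu_b,\mu_b^\prime)$ supported on $\{d(x,y)\le\Delta\}$ with $\int d\,\mathrm{d}\pi_{\mathrm{cp}}\le \mathcal{W}_{d,\delta}(\nu,\nu^\prime)$. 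We then write
\begin{equation*}
    \nu h-\nu^\prime h=\mu_c h-\mu_c^\prime h+\int [h(x)-h(y)]\,\pi_{\mathrm{cp}}(\mathrm{d}x,\mathrm{d}y),
\end{equation*}
and bound the two pieces separately.

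\textbf{Mass-deficient part.} We use $|h|\le C_{h,\gamma}V^\gamma$, so it suffices to bound $\mu_c V^\gamma$ for a sub-measure $\mu_c\le\nu$ of mass at most $\eta$. The plan is a layer-cake argument. Split $V^\gamma$ at a threshold $t\ge1$:
\begin{equation*}
    \mu_c V^\gamma\le t\,\eta+\nu\bigl(V^\gamma\mathbb{I}(V^\gamma>t)\bigr).
\end{equation*}
By Markov's inequality applied to $\nu V^a\le C_{\nu,a}$, we have $\nu(V^\gamma>s)\le C_{\nu,a}s^{-a/\gamma}$, so
\begin{equation*}
    \nu\bigl(V^\gamma\mathbb{I}(V^\gamma>t)\bigr)\le tC_{\nu,a}t^{-a/\gamma}+\int_t^\infty C_{\nu,a}s^{-a/\gamma}\,\mathrm{d}s\lesssim \frac{C_{\nu,a}}{1-\gamma/a}\,t^{1-a/\gamma}.
\end{equation*}
Choosing $t=\eta^{-\gamma/a}$ when $\eta\le1$ makes both terms of order $\eta^{1-\gamma/a}$. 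When $\eta>1$ we instead use the trivial bound $\nu V^\gamma\le C_{\nu,a}$. This yields $\mu_cV^\gamma\le [C_{\nu,a}/(1-\gamma/a)+1]\eta^{1-\gamma/a}$. Doing the same for $\mu_c^\prime$ and summing produces the factor $2C_{h,\gamma}$.

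\textbf{Coupled part.} On the support of $\pi_{\mathrm{cp}}$ we have $d(x,y)\le\Delta$. When $d(x,y)<\Delta$ the local Hölder property gives $|h(x)-h(y)|\le L_hV^\kappa(x)d(x,y)^\alpha$. The boundary case $d=\Delta$ is negligible, or can be handled by shrinking $\Delta$ slightly. We then need to bound $\int V^\kappa(x)d(x,y)^\alpha\,\mathrm{d}\pi_{\mathrm{cp}}$. First replace $d^\alpha$ by $d^{\beta}\Delta^{\alpha-\beta}$ when $\alpha>\beta:=\min\{1-\kappa,\alpha\}$. This uses $d\le\Delta$ and produces a factor at most $\max(1,\Delta)\le\Delta+1$. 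Next apply Hölder's inequality with exponents $1/\kappa$ and $1/(1-\kappa)$:
\begin{equation*}
    \int V^\kappa d^{\beta}\le(\pi_{\mathrm{cp}}V)^\kappa\Bigl(\int d^{\beta/(1-\kappa)}\Bigr)^{1-\kappa}.
\end{equation*}
Since $\beta/(1-\kappa)\le1$ and $\|\pi_{\mathrm{cp}}\|\le1$, Jensen's inequality bounds the last factor by $\bigl(\int d\bigr)^{\beta}\le\mathcal{W}_{d,\delta}^{\beta}$. Finally, $\pi_{\mathrm{cp}}V\le\nu V\le C_{\nu,1}$, which gives $L_hC_{\nu,1}^\kappa(\Delta+1)\mathcal{W}_{d,\delta}^{\min\{1-\kappa,\alpha\}}$.

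\textbf{Main obstacle.} The delicate step is the tail estimate in the mass-deficient part. There the fact that the mass is small, rather than the measure itself being small, must be converted into a power $\eta^{1-\gamma/a}$ using only a moment of order $a>\gamma$. The rest is bookkeeping with Hölder's inequality. One also has to handle the cases $\eta>1$ and $\Delta<1$ uniformly so that the stated constants are obtained.
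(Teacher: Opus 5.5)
Your proposal is correct and follows the paper's proof essentially step for step. You use the decomposition from Lemma~\ref{lemma_decompose_bounded_complement}, the Markov/layer-cake tail estimate with a threshold of order $\eta^{-\gamma/a}$ for the mass-deficient parts, and H\"older's inequality with exponents $1/\kappa$, $1/(1-\kappa)$ followed by Jensen for the coupled part. The differences are cosmetic: your case split at $\eta>1$ is unnecessary, because the Markov tail bound holds for every threshold $t>0$ and the paper simply takes $t=C_{h,\gamma}\eta^{-\gamma/a}$ throughout; and your handling of the boundary $d(x,y)=\Delta$, by shrinking $\Delta$ and letting it increase back, covers a point the paper passes over silently.
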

\begin{proof}
    Apply the decomposition from Lemma \ref{lemma_decompose_bounded_complement} to write $\nu = \mu_b + \mu_c$ and $\nu^\prime = \mu_b^\prime + \mu_c^\prime$.
    Define $\epsilon = \delta + \frac{\mathcal{W}_{d, \delta} \left( \nu,\nu^\prime \right)}{\Delta}$.
    Then $\|\mu_c\|=\|\mu_c^\prime\| \leq \epsilon$.

    For $X \sim \nu$ and any $t>0$, we have the bound $P(V(X)>t) \leq \frac{C_{\nu,a}}{t^a}$.
    Since $|h(X)| \leq C_{h,\gamma} V^\gamma(X)$, it follows that
    \begin{equation*}
        P(|h(X)| > t)
        \leq P(C_{h,\gamma} V^\gamma(X) > t)
        = P \left( V(X) > \left( \frac{t}{C_{h,\gamma}} \right)^{\frac{1}{\gamma}} \right)
        \leq C_{\nu,a} C_{h,\gamma}^{\frac{a}{\gamma}} t^{-\frac{a}{\gamma}} .
    \end{equation*}
    Consequently,
    \begin{equation*}
        \mathbb{E} \left[ |h(X)| \mathbb{I}(|h(X)| > t) \right]
        = tP(|h(X)| > t) + \int_{t}^\infty P(|h(X)| > s) ds
        \leq \frac{C_{\nu,a} C_{h,\gamma}^{\frac{a}{\gamma}} t^{1-\frac{a}{\gamma}}}{1-\gamma/a} .
    \end{equation*}

    Therefore,
    \begin{equation*}
        |\mu_c h|
        \leq \mu_c [|h|\mathbb{I}(|h|>t)] + \mu_c [|h|\mathbb{I}(|h| \leq t)]
        \leq \nu [|h|\mathbb{I}(|h|>t)] + \mu_c t
        \leq \frac{C_{\nu,a} C_{h,\gamma}^{\frac{a}{\gamma}} t^{1-\frac{a}{\gamma}}}{1-\gamma/a}
        + t \epsilon .
    \end{equation*}
    When $\epsilon>0$, choosing $t=C_{h,\gamma}\epsilon^{-\gamma/a}$ reduces the bound to
    \begin{equation*}
        C_{h,\gamma}
        \left[ \frac{C_{\nu,a}}{1-\frac{\gamma}{a}} + 1 \right]
        \epsilon^{1-\gamma/a} .
    \end{equation*}
    Thus, we obtain
    \begin{equation*}
        |\mu_c h| \leq C_{h,\gamma}
        \left[ \frac{C_{\nu,a}}{1-\frac{\gamma}{a}} + 1 \right]
        \epsilon^{1-\gamma/a} .
    \end{equation*}
    An analogous bound holds for $|\mu_c^\prime h|$, and the same result applies when $\epsilon = 0$.

    By Lemma \ref{lemma_decompose_bounded_complement}, there exists a coupling $\pi_{\mathrm{cp}}$ of $\mu_b$ and $\mu_b^\prime$ supported on $\{d(x,y) \leq \Delta\}$ with $\pi_{\mathrm{cp}} d(X,Y) \leq \mathcal{W}_{d,\delta}(\nu,\nu^\prime)$.
    Using local Hölder continuity of $h$,
    \begin{align*}
        & \quad |\mu_b h - \mu_b^\prime h|
        \leq \pi_{\mathrm{cp}} |h(X)-h(Y)|
        \leq \pi_{\mathrm{cp}} \left\{ L_h V^\kappa(X) d(X,Y)^\alpha \right\} \\
        & \leq L_h (\pi_{\mathrm{cp}} V(X))^{\kappa}
        (\pi_{\mathrm{cp}} d(X,Y)^{\frac{\alpha}{1-\kappa}})^{1-\kappa}
        \leq L_h (\mu_b V(X))^\kappa \Delta^\alpha
        \left( \pi_{\mathrm{cp}} \left( \frac{d(X,Y)}{\Delta} \right)^{\min \left\{ 1,\frac{\alpha}{1-\kappa} \right\}} \right)^{1-\kappa} \\
        & \leq L_h C_{\nu,1}^\kappa \Delta^\alpha
        \left( \pi_{\mathrm{cp}} \left( \frac{d(X,Y)}{\Delta} \right) \right)^{\min \left\{ 1-\kappa,\alpha \right\}}
        \leq L_h C_{\nu,1}^\kappa \Delta^\alpha
        \left[ \mathcal{W}_{d, \delta} \left( \nu,\nu^\prime \right) \Delta^{-1} \right]^{\min \left\{ 1-\kappa,\alpha \right\}} \\
        & \leq L_h C_{\nu,1}^\kappa \Delta^{\max\{\alpha+\kappa-1,0\}}
        \mathcal{W}_{d, \delta} \left( \nu,\nu^\prime \right)^{\min \left\{ 1-\kappa,\alpha \right\}}
        \leq L_h C_{\nu,1}^\kappa (\Delta+1)
        \mathcal{W}_{d, \delta}\left( \nu,\nu^\prime \right)^{\min \left\{ 1-\kappa,\alpha \right\}} ,
    \end{align*}
    where Hölder's inequality and the inequality $\max\{\nu V, \nu^\prime V\} \leq C_{\nu,1}$ are used.

    Combining the bounds for $\mu_c$, $\mu_c^\prime$ and $\mu_b - \mu_b^\prime$, it holds that
    \begin{align*}
        | \nu h - \nu^\prime h |
        & \leq |\mu_c h| + |\mu_c^\prime h| + |\mu_b h - \mu_b^\prime h| \\
        & \leq 2 C_{h,\gamma}
        \left[ \frac{C_{\nu,a}}{1-\frac{\gamma}{a}} + 1 \right]
        \epsilon^{1-\gamma/a}
        + L_h C_{\nu,1}^\kappa (\Delta+1)
        \mathcal{W}_{d, \delta}\left( \nu,\nu^\prime \right)^{\min \left\{ 1-\kappa,\alpha \right\}} .
    \end{align*}
\end{proof}

\begin{corollary}
    \label{corollary_n_step_robust_lipschitz_continuous_bounded}

    Let $\{P_\theta\}_{\theta \in \Theta}$ be a family of transition probability kernels that is robustly Lipschitz continuous with a Lipschitz constant $L_P \geq 0$.
    Suppose that for any $a \in (0, 1]$, there exist constants $\beta_a \in (0, 1)$ and $b_a < \infty$, independent of $\theta$, such that for all $\Lambda \in \mathrm{X}$:
    \begin{equation*}
        P_\theta V^a(\Lambda) \leq \beta_a V^a(\Lambda) + b_a.
    \end{equation*}
    Assume that the function $h$ satisfies $(\Delta, L_h V^\kappa, \alpha)$-local Hölder continuity and $|h| \leq C_{h,\gamma} V^\gamma$ for some $\kappa$, $\gamma \in (0,1)$.
    Then for any $a \in (\gamma, 1]$ and $n \in \mathbb{N}$,
    \begin{align*}
        & \quad \left| P_\theta^n(\Lambda,h) - P_{\theta^\prime}^n(\Lambda^\prime,h) \right| \\
        & \leq 2 C_{h,\gamma}
        \left[ \frac{C_{\nu,a}}{1-\frac{\gamma}{a}} + 1 \right]
        \left[ L_P \left( nd(\Lambda,\Lambda^\prime) + \frac{2n + n(n-1)L_P}{2} d(\theta, \theta^\prime) \right)
        + \frac{d(\Lambda,\Lambda^\prime) + n L_Pd(\theta, \theta^\prime)}{\Delta} \right]^{1-\gamma/a} \\
        & \quad + L_h C_{\nu,1}^\kappa (\Delta+1)
        \left[ d(\Lambda,\Lambda^\prime) + n L_Pd(\theta, \theta^\prime) \right]^{\min \left\{ 1-\kappa,\alpha \right\}} ,
    \end{align*}
    where $C_{\nu,a} = \max \left\{ V^a(\Lambda),V^a(\Lambda^\prime),\frac{b_a}{1-\beta_a} \right\}$.
    When $\theta = \theta^\prime$,
    \begin{align*}
        & \quad \left| P_\theta^n(\Lambda,h) - P_\theta^n(\Lambda^\prime,h) \right| \\
        & \leq 2 C_{h,\gamma}
        \left[ \frac{C_{\nu,a}}{1-\frac{\gamma}{a}} + 1 \right]
        \left[ nL_Pd(\Lambda,\Lambda^\prime) + \frac{d(\Lambda,\Lambda^\prime)}{\Delta} \right]^{1-\gamma/a}
        + L_h C_{\nu,1}^\kappa (\Delta+1) [d(\Lambda,\Lambda^\prime)]^{\min \left\{ 1-\kappa,\alpha \right\}} .
    \end{align*}
\end{corollary}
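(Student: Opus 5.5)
The plan is to reduce Corollary \ref{corollary_n_step_robust_lipschitz_continuous_bounded} to Lemma \ref{lemma_robust_lipschitz_continuous_bounded}, applied with $\nu = P_\theta^n(\Lambda,\cdot)$ and $\nu^\prime = P_{\theta^\prime}^n(\Lambda^\prime,\cdot)$. That lemma needs three inputs: a uniform moment bound $\max\{\nu V^a, \nu^\prime V^a\} \le C_{\nu,a}$, a mass-deficiency level $\delta$, and a bound on $\mathcal{W}_{d,\delta}(\nu,\nu^\prime)$.

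\textbf{Step 1: moment bound.} Iterate the drift inequality $P_\theta V^a \le \beta_a V^a + b_a$. This gives
\[
P_\theta^n V^a(\Lambda) \le \beta_a^n V^a(\Lambda) + b_a\frac{1-\beta_a^n}{1-\beta_a}.
\]
The right-hand side is a convex combination of $V^a(\Lambda)$ and $b_a/(1-\beta_a)$, with weights $\beta_a^n$ and $1-\beta_a^n$. So it is at most $\max\{V^a(\Lambda), b_a/(1-\beta_a)\}$. The same argument at $\Lambda^\prime$ under $\theta^\prime$ gives $\max\{\nu V^a,\nu^\prime V^a\}\le C_{\nu,a}$ with $C_{\nu,a}$ as in the statement, uniformly in $n$.

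\textbf{Step 2: $n$-step coupling.} Corollary \ref{corollary_n_step_family_robustly_Lipschitz_continuous} says $P_\theta^n$ and $P_{\theta^\prime}^n$ are $(nL_P,\tau_n,\epsilon_n)$-coupled robustly Lipschitz continuous, with
\[
\tau_n = nL_P d(\theta,\theta^\prime)+\tfrac{n(n-1)L_P^2}{2}d(\theta,\theta^\prime), \qquad \epsilon_n = nL_Pd(\theta,\theta^\prime).
\]
Lemma \ref{lemma_coupled_robustly_Lipschitz_continuous_robust_wasserstein_metric} then gives $\mathcal{W}_{d,\delta}(\nu,\nu^\prime)\le d(\Lambda,\Lambda^\prime)+nL_Pd(\theta,\theta^\prime)$ with
\[
\delta = nL_P d(\Lambda,\Lambda^\prime)+\tau_n = L_P\Big(nd(\Lambda,\Lambda^\prime)+\tfrac{2n+n(n-1)L_P}{2}d(\theta,\theta^\prime)\Big),
\]
which matches the bracket in the statement.

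\textbf{Step 3: conclude.} Substitute these into the bound of Lemma \ref{lemma_robust_lipschitz_continuous_bounded}. The first term there is increasing in $\delta+\mathcal{W}_{d,\delta}/\Delta$. The second term is increasing in $\mathcal{W}_{d,\delta}$, since the exponent $\min\{1-\kappa,\alpha\}$ is positive. So replacing $\mathcal{W}_{d,\delta}$ by its upper bound yields the displayed inequality. For $\theta=\theta^\prime$, set $d(\theta,\theta^\prime)=0$ to get the second display.

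\textbf{Main obstacle.} The delicate point is that $\mathcal{W}_{d,\delta}$ is monotone in $\delta$, so the deficiency level used in the lemma must be exactly the one delivered by the coupling; I choose $\delta$ in Step 2 accordingly. I also need to check that Lemma \ref{lemma_robust_lipschitz_continuous_bounded} accepts a $C_{\nu,a}$ that depends on $\Lambda,\Lambda^\prime$. It does: that lemma is stated for fixed measures $\nu,\nu^\prime$, and $\Lambda,\Lambda^\prime$ are fixed here.
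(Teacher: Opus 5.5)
Your proposal is correct and is essentially the paper's own proof: the paper likewise uses Corollary~\ref{corollary_n_step_family_robustly_Lipschitz_continuous} for the $n$-step coupling, then converts it via Lemma~\ref{lemma_coupled_robustly_Lipschitz_continuous_robust_wasserstein_metric} (with the same $\delta$) into $\mathcal{W}_{d,\delta}\le d(\Lambda,\Lambda^\prime)+nL_Pd(\theta,\theta^\prime)$, bounds $\max\{P_\theta^n(\Lambda,V^a),P_{\theta^\prime}^n(\Lambda^\prime,V^a)\}\le C_{\nu,a}$ from the drift condition, and finally applies Lemma~\ref{lemma_robust_lipschitz_continuous_bounded}. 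Your convex-combination argument for the iterated moment bound and your monotonicity remark for substituting the upper bound on $\mathcal{W}_{d,\delta}$ fill in details that the paper leaves implicit.
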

\begin{proof}
    By Corollary \ref{corollary_n_step_family_robustly_Lipschitz_continuous}, the $n$-step kernels $P_\theta^n$ and $P_{\theta^\prime}^n$ are $(n L_P, n L_P d(\theta,\theta^\prime) + \frac{n(n-1) L_P^2}{2} d(\theta,\theta^\prime), n L_P d(\theta,\theta^\prime))$-coupled robustly Lipschitz continuous.
    
    Thus, Lemma \ref{lemma_coupled_robustly_Lipschitz_continuous_robust_wasserstein_metric} implies that
    \begin{equation*}
        \mathcal{W}_{d, n L_P d(\Lambda, \Lambda^\prime) + n L_P d(\theta,\theta^\prime) + \frac{n(n-1) L_P^2}{2} d(\theta,\theta^\prime)}
        \left( P(\Lambda, \cdot), Q(\Lambda^\prime, \cdot) \right)
        \leq d(\Lambda, \Lambda^\prime) + n L_P d(\theta,\theta^\prime) .
    \end{equation*}

    Combining with the setting $C_{\nu,a} = \max \left\{ V^a(\Lambda),V^a(\Lambda^\prime),\frac{b_a}{1-\beta_a} \right\}$, the inequality
    \begin{equation*}
        P_\theta V^a(\Lambda) \leq \beta_a V^a(\Lambda) + b_a
    \end{equation*}
    implies that
    \begin{equation*}
        \max\{ P_\theta^n(\Lambda,V^a), P_{\theta^\prime}^n(\Lambda^\prime,V^a)\} \leq C_{\nu,a} .
    \end{equation*}
    Applying Lemma \ref{lemma_robust_lipschitz_continuous_bounded} yields the asserted inequalities.
\end{proof}

\subsection{Definitions and Assumptions on Markov Chains and Functions with Parameters}
\label{subsec_definitions_assumptions_chains_functions}

This subsection aims to establish the joint local Hölder continuity of $\hat{h}_\theta$, $\pi_\theta h_\theta$, $P_\theta^n h_\theta$ and $h_\theta^n$.

\subsubsection{Assumptions}

\begin{definition}[Joint Local Hölder Continuity]
    A family of functions $\{h_\theta\}_{\theta \in \Theta}$ is said to be \emph{$((\Delta,\psi,\alpha), (\tilde{\Delta},\tilde{\psi},\tilde{\alpha}))$-jointly locally Hölder continuous} if it satisfies the following two conditions:
    \begin{enumerate}
        \item \textbf{(Continuity in state)} For each $\theta \in \Theta$, the function $h_\theta(\cdot)$ is $(\Delta, \psi, \alpha)$-locally Hölder continuous. That is, for some $\Delta > 0$, $\alpha \in (0, 1]$, and a function $\psi: \mathrm{X} \to (0,\infty)$,
        \begin{equation*}
            \|h_\theta(\Lambda) - h_\theta(\Lambda^\prime)\| \leq \psi(\Lambda) [d(\Lambda, \Lambda^\prime)]^\alpha
        \end{equation*}
        for all $\Lambda, \Lambda^\prime$ with $d(\Lambda, \Lambda^\prime) < \Delta$.

        \item \textbf{(Continuity in parameter)} For each state $\Lambda \in \mathrm{X}$, the mapping $\theta \mapsto h_\theta(\Lambda)$ is $(\tilde{\Delta}, \tilde{\psi}, \tilde{\alpha})$-locally Hölder continuous. That is, for some $\tilde{\Delta} > 0$, $\tilde{\alpha} \in (0, 1]$, and a function $\tilde{\psi}: \mathrm{X} \to (0,\infty)$,
        \begin{equation*}
            \|h_\theta(\Lambda) - h_{\theta^\prime}(\Lambda)\| \leq \tilde{\psi}(\Lambda) [d(\theta, \theta^\prime)]^{\tilde{\alpha}}
        \end{equation*}
        for all $\theta, \theta^\prime$ with $d(\theta, \theta^\prime) < \tilde{\Delta}$.
    \end{enumerate}
\end{definition}

\begin{assumption}[Simultaneous Stability, Ergodicity, and Regularity]
    \label{assumption_simultaneous_properties}

    The family of transition kernels $\{P_\theta\}_{\theta \in \Theta}$ and the associated Lyapunov function $V: \mathrm{X} \to [1, \infty)$ are assumed to satisfy the following conditions:
    \begin{enumerate}
        \item \label{item_assumption_simultaneous_properties_1} \textbf{(Simultaneous Geometric Ergodicity)}
        For each $\theta \in \Theta$, the transition kernel $P_\theta$ is positive Harris recurrent with a unique invariant probability $\pi_\theta$.
        Furthermore, for any $a \in (0,1]$, there exists a constant $L_a > 1$, independent of $\theta$, such that for all $n \geq 0$ and $\Lambda \in \mathrm{X}$:
        \begin{equation*}
            \| P_\theta^{n}(\Lambda, \cdot) - \pi_\theta \|_{V^a} \leq L_a(1 - L_a^{-1})^n V^a(\Lambda).
        \end{equation*}

        \item \label{item_assumption_simultaneous_properties_2} \textbf{(Simultaneous Drift Condition)}
        For any $a \in (0, 1]$, there exist constants $\beta_a \in (0, 1)$ and $b_a < \infty$, independent of $\theta$, such that for all $\Lambda \in \mathrm{X}$:
        \begin{equation*}
            P_\theta V^a(\Lambda) \leq \beta_a V^a(\Lambda) + b_a.
        \end{equation*}

        \item \label{item_assumption_simultaneous_properties_3} \textbf{(Robust Lipschitz Continuity of the Kernel)}
        The family of transition kernels $\{P_\theta\}_{\theta \in \Theta}$ is robustly Lipschitz continuous with a constant $L_P \geq 0$ that holds uniformly for all $\theta \in \Theta$.

        \item \label{item_assumption_simultaneous_properties_4} \textbf{(Regularity of the Lyapunov Function)}
        There exists a finite constant $L_{\ln V} > 1$ such that
        \begin{equation*}
            \sup_{d(\Lambda, \Lambda^\prime) \leq 1} \left| \ln V(\Lambda) - \ln V(\Lambda^\prime) \right|
            \leq \ln L_{\ln V} .
        \end{equation*}
    \end{enumerate}
    When $a = 1$, we suppress the subscripts and write $L = L_1$, $\beta = \beta_1$, and $b = b_1$.
\end{assumption}

Under Item \ref{item_assumption_simultaneous_properties_1} and \ref{item_assumption_simultaneous_properties_2} of Assumption \ref{assumption_simultaneous_properties}, it follows from the recursion
\begin{equation*}
    \mathbb{E} \left[ V^a(\Lambda_{n+1}) \right]
    \leq \beta_a \mathbb{E} \left[ V^a(\Lambda_n) \right] + b_a
\end{equation*}
for any $a \in (0,1]$ that
\begin{equation*}
    \pi_\theta V^a
    \leq \frac{b_a}{1-\beta_a}
    \quad \text{and} \quad
    \mathbb{E} \left[ V^a(\Lambda_n) \right] \leq \max\left\{ \mathbb{E} \left[ V^a(\Lambda_0) \right], \frac{b_a}{1-\beta_a} \right\} .
\end{equation*}
Also Note a useful inequality that $\frac{b_a}{1-\beta_a} \geq 1$ by $1 \leq \pi_\theta V^a \leq \frac{b_a}{1-\beta_a}$.

\subsubsection{Theorems}

\begin{theorem}
    \label{theorem_V_invariant_probability_Holder}

    Suppose that Assumption \ref{assumption_simultaneous_properties} holds.
    Assume that for some $\kappa \in (0,1)$ and $\gamma \in (0,1)$, the family of functions $\{h_\theta\}_{\theta \in \Theta}$ is $((\Delta, L_h V^\kappa, \alpha), (\tilde{\Delta}, \tilde{L}_h V^{\tilde{\kappa}}, \tilde{\alpha}))$-joint locally Hölder continuous, and bounded by $C_{h,\gamma} V^\gamma$.

    If $d(\theta, \theta^\prime) < \min\{\tilde{\Delta}, 1\}$, then for any chosen tuning parameter $\omega \in (0,\frac{1}{2})$, the following bound holds:
    \begin{equation*}
        \left| \pi_\theta h_\theta - \pi_{\theta^\prime} h_{\theta^\prime} \right|
        \leq L_{\pi h, \alpha_{\pi h, \omega}}
        \left[ d(\theta,\theta^\prime) \right]^{\alpha_{\pi h, \omega}} ,
    \end{equation*}
    where the Hölder exponent $\alpha_{\pi h, \omega}$ is defined as
    \begin{equation*}
        \alpha_{\pi h, \omega} = (1-2\omega) \min\{1-\gamma,1-\kappa,\alpha,\frac{\tilde{\alpha}}{1-2\omega}\} ,
    \end{equation*}
    and the constant $L_{\pi h, \omega}$ is finite.
    It is composed of the underlying parameters in Assumption \ref{assumption_simultaneous_properties}, with the basic quantity $C_\nu = b/(1-\beta)$ included, and is given by the expression:
    \begin{align*}
        & \quad L_{\pi h, \alpha_{\pi h, \omega}} \\
        &= L_{\pi h}(\Delta, \kappa, \tilde{\kappa}, \alpha, \tilde{\alpha}, L_h, \tilde{L}_h, \gamma, C_{h,\gamma}; \omega) \\
        &= \left\{
            2C_{h,\gamma} C_\nu^\gamma L_\gamma
            + 2 C_{h,\gamma} \left\{ \frac{C_\nu}{1-\gamma} + 1 \right\}
            L_P^{1-\gamma}\left(1+\frac{L_P}{2}+\frac{1}{\Delta}\right)^{1-\gamma}
            + L_h C_\nu^\kappa (\Delta+1)
            L_P^{\min \left\{ 1-\kappa,\alpha \right\}}
        \right\} \\
        & \quad \left\{ (1-L_\gamma^{-1})^{-1}
        + 2 + 2 \left[ \omega^2 e^2 \ln^2(1-L_\gamma^{-1}) \right]^{-1} \right\}
        + \frac{\tilde{L}_h b_{\tilde{\kappa}}}{1-\beta_{\tilde{\kappa}}} .
    \end{align*}
\end{theorem}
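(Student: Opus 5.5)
We split the difference through an intermediate $n$-step quantity, as outlined in Subsection \ref{subsec_proof_strategy_negligibility_remaining_term}. Fix a reference state, say $\Lambda_0 = 0$ (so $V(\Lambda_0)=1$), and an integer $n$ to be chosen later in terms of $d := d(\theta,\theta^\prime)$. We write
\begin{equation*}
    \pi_\theta h_\theta - \pi_{\theta^\prime} h_{\theta^\prime}
    = \bigl[\pi_\theta h_\theta - P_\theta^n h_\theta(\Lambda_0)\bigr]
    + \bigl[P_\theta^n h_\theta(\Lambda_0) - P_{\theta^\prime}^n h_\theta(\Lambda_0)\bigr]
    + \bigl[P_{\theta^\prime}^n (h_\theta - h_{\theta^\prime})(\Lambda_0)\bigr]
    + \bigl[P_{\theta^\prime}^n h_{\theta^\prime}(\Lambda_0) - \pi_{\theta^\prime} h_{\theta^\prime}\bigr].
\end{equation*}

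\textbf{First and fourth terms.} Since $|h_\theta| \le C_{h,\gamma}V^\gamma$, Item \ref{item_assumption_simultaneous_properties_1} of Assumption \ref{assumption_simultaneous_properties} with $a=\gamma$ gives a bound of $C_{h,\gamma}L_\gamma(1-L_\gamma^{-1})^n V^\gamma(\Lambda_0)$ for each. This is the source of the $2C_{h,\gamma}C_\nu^\gamma L_\gamma$ contribution to the constant.

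\textbf{Third term.} Because $d < \tilde\Delta$, the continuity in parameter gives $|h_\theta - h_{\theta^\prime}| \le \tilde L_h V^{\tilde\kappa} d^{\tilde\alpha}$. Iterating the drift condition with $a=\tilde\kappa$ yields $P_{\theta^\prime}^n V^{\tilde\kappa}(\Lambda_0) \le \max\{V^{\tilde\kappa}(\Lambda_0), b_{\tilde\kappa}/(1-\beta_{\tilde\kappa})\} = b_{\tilde\kappa}/(1-\beta_{\tilde\kappa})$, where the equality uses $b_a/(1-\beta_a)\ge 1$. This term is therefore at most $\tilde L_h b_{\tilde\kappa}(1-\beta_{\tilde\kappa})^{-1} d^{\tilde\alpha}$, matching the last summand of the constant.

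\textbf{Second term.} Apply Corollary \ref{corollary_n_step_robust_lipschitz_continuous_bounded} with $\Lambda=\Lambda^\prime=\Lambda_0$, $a=1$, and the fixed function $h=h_\theta$, whose state-Hölder and growth constants are uniform in $\theta$. This gives
\begin{equation*}
    2C_{h,\gamma}\Bigl(\tfrac{C_\nu}{1-\gamma}+1\Bigr)\Bigl[L_P\bigl(n+\tfrac{n(n-1)L_P}{2}\bigr)d + \tfrac{nL_P d}{\Delta}\Bigr]^{1-\gamma}
    + L_hC_\nu^\kappa(\Delta+1)(nL_Pd)^{\min\{1-\kappa,\alpha\}}.
\end{equation*}
Using $n\le n^2$ and $d<1$, this is at most $K\,n^2 d^{\min\{1-\gamma,1-\kappa,\alpha\}}$, with $K$ equal to the bracketed sum of the last two pieces of the displayed constant, and $C_\nu = b/(1-\beta)$.

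\textbf{Choice of $n$ (the main obstacle).} We must balance the geometric decay $r^n$, where $r = 1-L_\gamma^{-1}$, against the polynomial growth $n^2 d^{\mu}$, where $\mu=\min\{1-\gamma,1-\kappa,\alpha\}$. We choose $n = \lceil \omega\mu\ln d/\ln r\rceil$, so that $r^n \le d^{\omega\mu}$ up to the factor $(1-L_\gamma^{-1})^{-1}$ coming from the ceiling. Then $n^2 \le 2\bigl(\omega\mu\ln d/\ln r\bigr)^2 + 2$. To absorb the logarithm we use the elementary bound $x^2 e^{-\omega x} \le (2/(\omega e))^2$ for $x\ge0$, applied with $x = -\mu\ln d$, i.e.\ $\ln^2(1/d)\,d^{\omega\mu}\lesssim (\omega e)^{-2}$. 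This gives $n^2 d^{\mu} \le \bigl\{2+2[\omega^2e^2\ln^2 r]^{-1}\bigr\} d^{(1-2\omega)\mu}$; the constant bookkeeping $\mu\le1$ must be tracked carefully here.

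Since $d<1$, both $d^{\omega\mu}$ and $d^{(1-2\omega)\mu}$ are dominated by $d^{(1-2\omega)\mu}$. Combining this with the third-term bound $d^{\tilde\alpha}\le d^{\alpha_{\pi h,\omega}}$ gives the exponent $\alpha_{\pi h,\omega}=(1-2\omega)\min\{\mu,\tilde\alpha/(1-2\omega)\}$, and multiplying the factors out yields exactly the stated $L_{\pi h,\alpha_{\pi h,\omega}}$.
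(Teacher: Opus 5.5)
Your four-term decomposition and your bounds on the first, third and fourth terms are fine. You do the swap $h_\theta\to h_{\theta'}$ under $P^n_{\theta'}(\Lambda_0,\cdot)$ rather than under $\pi_{\theta'}$ as the paper does, and this gives the same constant $\tilde L_h b_{\tilde\kappa}/(1-\beta_{\tilde\kappa})$. It does rely on $V(\Lambda_0)=1$, which holds for $V=e^{\lambda_1\|\Lambda\|}$ but is not available in the abstract setting of Assumption \ref{assumption_simultaneous_properties}. The paper's placement under $\pi_{\theta'}$ needs no reference point.

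The real gap is the choice of $n$. With $n=\lceil \omega\mu\ln d/\ln r\rceil$ the geometric terms are only $O(d^{\omega\mu})$. Your final claim that $d^{\omega\mu}$ is dominated by $d^{(1-2\omega)\mu}$ for $d<1$ is false unless $\omega\ge 1/3$. For $d<1$ one has $d^{s}\le d^{t}$ if and only if $s\ge t$, and $\omega\mu<(1-2\omega)\mu$ whenever $\omega<1/3$. So for $\omega\in(0,1/3)$ your argument only yields the exponent $\min\{\omega\mu,\tilde\alpha\}$, never more than $\mu/3$, instead of $\alpha_{\pi h,\omega}$. That range is exactly the one giving exponents close to $\mu$, which the Poisson-solution bounds in Theorem \ref{theorem_V_Poisson_solution_robust_lipschitz_continuous_bounded} and its corollary depend on.

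The fix is to let the geometric factor decay all the way to $O(d)$.
\begin{itemize}
\item Take $n=\lfloor \ln d/\ln r\rfloor$. Then $r^n\le r^{-1}d$; this floor, not a ceiling, is the source of the $(1-L_\gamma^{-1})^{-1}$ in the constant. Also $n\le \ln d/\ln r$.
\item The cost is a factor $\ln^2(1/d)$ in the second term. To absorb it at the stated cost, do not pull $n^2$ outside the fractional powers as in your $Kn^2d^{\mu}$. Keep the corollary's bound in the form $\{\cdots\}\,[n^2 d]^{1-\gamma}+\{\cdots\}\,[nd]^{\min\{1-\kappa,\alpha\}}$.
\item Use $\ln(1/d)\le (\omega e)^{-1}d^{-\omega}$ to get, for $n\ge1$, $nd\le n^2 d\le d\ln^2 d/\ln^2 r\le d^{1-2\omega}/(\omega e\ln r)^2$.
\item Raising this to the powers $1-\gamma$ and $\min\{1-\kappa,\alpha\}$, both at most $1$, gives $d^{(1-2\omega)\mu}$ with prefactor at most $1+[\omega^2e^2\ln^2 r]^{-1}$ each. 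Together with the geometric term this is exactly the factor $(1-L_\gamma^{-1})^{-1}+2+2[\omega^2e^2\ln^2(1-L_\gamma^{-1})]^{-1}$ in the statement.
\end{itemize}
If you do factor $n^2$ out first, you must absorb $\ln^2(1/d)$ into $d^{-2\omega\mu}$ instead. The exponent then comes out right, but the constant picks up an extra $\mu^{-2}$ and no longer matches the displayed $L_{\pi h,\alpha_{\pi h,\omega}}$.
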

\begin{proof}
    We decompose the difference as
    \begin{equation*}
        \left| \pi_\theta h_\theta - \pi_{\theta^\prime} h_{\theta^\prime} \right|
        \leq \left| \pi_\theta h_\theta - \pi_{\theta^\prime} h_\theta \right|
        + \left| \pi_{\theta^\prime} h_{\theta^\prime} - \pi_{\theta^\prime} h_\theta \right| .
    \end{equation*}

    Choose $\Lambda$ with $V(\Lambda) \leq \frac{b}{1-\beta}$ and set $a=1$ in Corollary \ref{corollary_n_step_robust_lipschitz_continuous_bounded}.
    Combining Corollary \ref{corollary_n_step_robust_lipschitz_continuous_bounded} with Item \ref{item_assumption_simultaneous_properties_1} of Assumption \ref{assumption_simultaneous_properties}, we obtain that for any $n \in \mathbb{N}$,
    \begin{align*}
        & \quad \left| \pi_\theta h_\theta - \pi_{\theta^\prime} h_\theta \right| \\
        & \leq \left| \pi_\theta h_\theta - P_\theta^n(\Lambda,h_\theta) \right|
        + \left| P_\theta^n(\Lambda,h_\theta) - P_{\theta^\prime}^n(\Lambda,h_\theta) \right|
        + \left| \pi_{\theta^\prime} h_\theta - P_{\theta^\prime}^n(\Lambda,h_\theta) \right| \\
        & \leq 2C_{h,\gamma}V^\gamma(\Lambda)L_\gamma(1-L_\gamma^{-1})^n
        + 2 C_{h,\gamma}
        \left[ \frac{C_\nu}{1-\gamma} + 1 \right]
        \left[ L_P \left( \frac{2n + n(n-1)L_P}{2} d(\theta, \theta^\prime) \right)
        + \frac{n L_Pd(\theta, \theta^\prime)}{\Delta} \right]^{1-\gamma} \\
        & \quad + L_h C_{\nu,1}^\kappa (\Delta+1)
        \left[ n L_Pd(\theta, \theta^\prime) \right]^{\min \left\{ 1-\kappa,\alpha \right\}} \\
        & \leq \left\{
            2C_{h,\gamma}V^\gamma(\Lambda)L_\gamma
            + 2 C_{h,\gamma} \left\{ \frac{C_\nu}{1-\gamma} + 1 \right\}
            L_P^{1-\gamma}\left(1+\frac{L_P}{2}+\frac{1}{\Delta}\right)^{1-\gamma}
            + L_h C_\nu^\kappa (\Delta+1)
            L_P^{\min \left\{ 1-\kappa,\alpha \right\}}
        \right\} \\
        & \quad \left[
            (1-L_\gamma^{-1})^n
            + \left[ n^2d(\theta,\theta^\prime) \right]^{1-\gamma}
            + \left[ nd(\theta,\theta^\prime) \right]^{\min \left\{ 1-\kappa,\alpha \right\}}
        \right] .
    \end{align*}

    Now set $n=\lfloor \frac{\ln\epsilon}{\ln(1-L_\gamma^{-1})} \rfloor$, where $\epsilon = d(\theta,\theta^\prime) < \min\{\tilde{\Delta},1\} \leq 1$.
    Using the inequality $\ln\frac{1}{\epsilon} \leq \frac{1}{\omega e \epsilon^\omega}$ for $\omega > 0$, we deduce that for any $\omega \in (0, \frac{1}{2})$,
    \begin{align*}
        & \quad (1-L_\gamma^{-1})^n
        + \left[ n^2d(\theta,\theta^\prime) \right]^{1-\gamma}
        + \left[ nd(\theta,\theta^\prime) \right]^{\min \left\{ 1-\kappa,\alpha \right\}} \\
        & \leq (1-L_\gamma^{-1})^{-1}\epsilon
        + \left( \frac{\epsilon \ln^2 \epsilon}{\ln^2 (1-L_\gamma^{-1})} \right)^{1-\gamma}
        + \left( \frac{\epsilon \ln^2 \epsilon}{\ln^2 (1-L_\gamma^{-1})} \right)^{\min \left\{ 1-\kappa,\alpha \right\}} \\
        & \leq \left\{ (1-L_\gamma^{-1})^{-1}
        + 2 + 2 \left[ \omega^2 e^2 \ln^2(1-L_\gamma^{-1}) \right]^{-1} \right\}
        \epsilon^{(1-2\omega) \min\{1-\gamma,1-\kappa,\alpha\}} .
    \end{align*}

    On the other hand, when $d(\theta,\theta^\prime)<\min\{\tilde{\Delta},1\}$,
    \begin{equation*}
        \left| \pi_{\theta^\prime} h_{\theta^\prime} - \pi_{\theta^\prime} h_\theta \right|
        \leq \pi_\theta \left\{ \tilde{L}_h V^{\tilde{\kappa}}[d(\theta,\theta^\prime)]^{\tilde{\alpha}} \right\}
        \leq \frac{\tilde{L}_h b_{\tilde{\kappa}}}{1-\beta_{\tilde{\kappa}}} [d(\theta,\theta^\prime)]^{\tilde{\alpha}} .
    \end{equation*}
    Combining the previous bounds yields the asserted inequality and completes the proof.
\end{proof}

\begin{theorem}
    \label{theorem_V_n_step_robust_lipschitz_continuous_bounded}

    Suppose that Assumption \ref{assumption_simultaneous_properties} holds.
    Assume that for some $\kappa \in (0,1)$ and $\gamma \in (0,1)$, the family of functions $\{h_\theta\}_{\theta \in \Theta}$ is $((\Delta, L_h V^\kappa, \alpha), (\tilde{\Delta}, \tilde{L}_h V^{\tilde{\kappa}}, \tilde{\alpha}))$-joint locally Hölder continuous, and bounded by $C_{h,\gamma} V^\gamma$.

    Then when $n \in \mathbb{N}^*$, for any chosen tuning parameter $a \in (\gamma, 1]$, the family of functions $\{P_\theta^n h_\theta\}_{\theta \in \Theta}$ is bounded by $C_{P^n h,\gamma} V^\gamma(\Lambda)$ and is
    \begin{equation*}
        ((1, L_{P^n h, \kappa_{P^n h, a}, \alpha_{P^n h, a}} V^{\kappa_{P^n h, a}}, \alpha_{P^n h, a}),
        (\min\{\tilde{\Delta},1\}, L_{P^n h, \kappa_{P^n h, a}, \alpha_{P^n h, a}} V^{\kappa_{P^n h, a}}, \alpha_{P^n h, a}))
    \end{equation*}
    -joint locally Hölder continuous, where
    \begin{equation*}
        C_{P^n h,\gamma} = \frac{C_{h,\gamma} b_\gamma}{1-\beta_\gamma} ,
        \quad \kappa_{P^n h, a} = \max\{\kappa,\tilde{\kappa},a\} ,
        \quad \alpha_{P^n h, a} = \min\{1-\gamma/a,1-\kappa,\alpha,\tilde{\alpha}\}
    \end{equation*}
    and
    \begin{align*}
        & \quad L_{P^n h, \kappa_{P^n h, a}, \alpha_{P^n h, a}} \\
        &= L_{P^n h}(\Delta, \kappa, \tilde{\kappa}, \alpha, \tilde{\alpha}, L_h, \tilde{L}_h, \gamma, C_{h,\gamma}; a) \\
        &= 2 n^2 L_{\ln V} \left[ \frac{b_a}{1-\beta_a}
        + \frac{b_{\tilde{\kappa}}}{1-\beta_{\tilde{\kappa}}}
        + \left[\frac{b}{1-\beta} \right]^\kappa \right] \\
        & \quad \left\{
            \frac{4 C_{h,\gamma}}{1-\frac{\gamma}{a}}
            \left[ L_P + L_P^2 + \frac{L_P+1}{\Delta} \right]^{1-\gamma/a}
            + L_h (\Delta+1)(L_P+1)
            + \tilde{L}_h
        \right\} .
    \end{align*}
\end{theorem}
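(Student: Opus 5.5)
The plan is to prove boundedness and the two Hölder properties separately. The state- and parameter-continuity estimates both rest on Corollary \ref{corollary_n_step_robust_lipschitz_continuous_bounded}, combined with the parameter-continuity of $h_\theta$ and the Lyapunov regularity in Item \ref{item_assumption_simultaneous_properties_4} of Assumption \ref{assumption_simultaneous_properties}.

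\textbf{Boundedness.} Iterate the drift condition $P_\theta V^\gamma \le \beta_\gamma V^\gamma + b_\gamma$ to get $P_\theta^n V^\gamma \le V^\gamma + b_\gamma/(1-\beta_\gamma)$. Since $V \ge 1$ and $b_\gamma/(1-\beta_\gamma)\ge 1$, this gives
\[
|P_\theta^n h_\theta| \le C_{h,\gamma} P_\theta^n V^\gamma \le \frac{2 C_{h,\gamma} b_\gamma}{1-\beta_\gamma} V^\gamma .
\]
The stated constant is recovered after absorbing the factor $2$ (or by using $\max\{V^\gamma, b_\gamma/(1-\beta_\gamma)\}\le \tfrac{b_\gamma}{1-\beta_\gamma}V^\gamma$).

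\textbf{Joint continuity.} I would split
\begin{align*}
|P_\theta^n h_\theta(\Lambda) - P_{\theta'}^n h_{\theta'}(\Lambda')|
&\le |P_\theta^n(\Lambda,h_\theta) - P_{\theta'}^n(\Lambda',h_\theta)| \\
&\quad + |P_{\theta'}^n(\Lambda', h_\theta - h_{\theta'})| .
\end{align*}
Taking $\theta=\theta'$ gives continuity in the state, and taking $\Lambda=\Lambda'$ gives continuity in the parameter.

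\begin{itemize}
\item The second term is handled with the parameter-Hölder property of $h$. It is bounded by $\tilde L_h [d(\theta,\theta')]^{\tilde\alpha} P_{\theta'}^n V^{\tilde\kappa}(\Lambda')$, which is at most $\tilde L_h \frac{2b_{\tilde\kappa}}{1-\beta_{\tilde\kappa}} V^{\tilde\kappa}(\Lambda')\,d^{\tilde\alpha}$. Item \ref{item_assumption_simultaneous_properties_4} lets me replace $V(\Lambda')$ by $L_{\ln V} V(\Lambda)$ whenever $d(\Lambda,\Lambda')\le 1$.
\item The first term is handled with Corollary \ref{corollary_n_step_robust_lipschitz_continuous_bounded}, applied to $h=h_\theta$ with the tuning exponent $a\in(\gamma,1]$. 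The constant there is $C_{\nu,a} = \max\{V^a(\Lambda),V^a(\Lambda'),b_a/(1-\beta_a)\} \le L_{\ln V}\frac{b_a}{1-\beta_a} V^a(\Lambda)$ for $d(\Lambda,\Lambda')\le1$. Similarly $C_{\nu,1}^\kappa \lesssim [b/(1-\beta)]^\kappa L_{\ln V}V^\kappa(\Lambda)$.
\end{itemize}

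Next I would restrict to $d(\Lambda,\Lambda')\le 1$ and $d(\theta,\theta')\le 1$. In that regime the bracket in the corollary is at most $n^2\bigl(L_P+L_P^2+\frac{L_P+1}{\Delta}\bigr)\bigl(d(\Lambda,\Lambda')+d(\theta,\theta')\bigr)$. Raising it to the power $1-\gamma/a$, and using $x^{p}\le x^{q}$ for $x\le 1$ and $p\ge q$, lets every exponent be lowered to the common value $\alpha_{P^nh,a}=\min\{1-\gamma/a,1-\kappa,\alpha,\tilde\alpha\}$. The polynomial factors $n^{2(1-\gamma/a)}$ and $n^{\min\{1-\kappa,\alpha\}}$ are both at most $n^2$. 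Likewise each of $V^a$, $V^\kappa$, $V^{\tilde\kappa}$ is at most $V^{\kappa_{P^nh,a}}$, with $\kappa_{P^nh,a} = \max\{\kappa,\tilde\kappa,a\}$, because $V\ge1$. Collecting the three contributions produces the displayed constant, with the leading factor $2n^2 L_{\ln V}[\cdots]$.

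\textbf{Main obstacle.} The expected difficulty is bookkeeping. The $C_{\nu,a}$ in the corollary depends on both $\Lambda$ and $\Lambda'$, and must be converted into a single multiplicative weight $V^{\kappa}(\Lambda)$ without losing uniformity in $\theta$. Item \ref{item_assumption_simultaneous_properties_4} is exactly what permits this, and it is also why the locality radius is $1$ in the state and $\min\{\tilde\Delta,1\}$ in the parameter. A secondary point is that the mass-deficiency term $[\cdot]^{1-\gamma/a}$ carries the factor $C_{\nu,a}$ linearly rather than as a power. This is why a power $V^a$ with $a>\gamma$ appears, and why $\kappa_{P^nh,a}$ must include $a$.
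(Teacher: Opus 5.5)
Your proposal is correct and follows the paper's proof essentially step for step. It uses the same bound $P_\theta^n V^\gamma \le \max\{V^\gamma, b_\gamma/(1-\beta_\gamma)\}$ and the same split into two terms: a kernel-difference term handled by Corollary~\ref{corollary_n_step_robust_lipschitz_continuous_bounded} with general $a\in(\gamma,1]$, and an $h_\theta - h_{\theta'}$ term handled via $\tilde L_h V^{\tilde\kappa}$. It also uses Item~\ref{item_assumption_simultaneous_properties_4} in the same way, to replace $\max\{V(\Lambda),V(\Lambda')\}$ by $L_{\ln V}V(\Lambda)$. To land exactly on the displayed constant, bound $P_{\theta'}^n V^{\tilde\kappa}$ by the $\max$-form $C_{\nu,\tilde\kappa}$ rather than by $2b_{\tilde\kappa}/(1-\beta_{\tilde\kappa})$, as you already did in the boundedness step.
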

\begin{proof}
    Set $C_{\nu,a} = \max\{V^a(\Lambda),V^a(\Lambda^\prime),\frac{b_a}{1-\beta_a}\}$ for $a \in (0,1]$.
    We first bound the magnitude of $P_\theta^n(\Lambda,h_\theta)$:
    \begin{equation*}
        \left| P_\theta^n(\Lambda,h_\theta) \right|
        \leq C_{h,\gamma} \left| P_\theta^n(\Lambda,V^\gamma) \right|
        \leq C_{h,\gamma} \max \left\{ V^\gamma(\Lambda),\frac{b_\gamma}{1-\beta_\gamma} \right\}
        \leq \frac{C_{h,\gamma} b_\gamma}{1-\beta_\gamma} V^\gamma(\Lambda) ,
    \end{equation*}
    which yields the asserted control constant $C_{P^n h,\gamma}$.
    Now suppose $d(\theta,\theta^\prime)<\min\{\tilde{\Delta},1\}$ and set $a=1$ in Corollary \ref{corollary_n_step_robust_lipschitz_continuous_bounded}.
    Then
    \begin{align*}
        & \quad \left| P_\theta^n(\Lambda,h_\theta) - P_{\theta^\prime}^n(\Lambda^\prime,h_{\theta^\prime}) \right|
        \leq \left| P_\theta^n(\Lambda,h_\theta) - P_{\theta^\prime}^n(\Lambda^\prime,h_\theta) \right|
        + \left| P_{\theta^\prime}^n(\Lambda^\prime,h_\theta) - P_{\theta^\prime}^n(\Lambda^\prime,h_{\theta^\prime}) \right| \\
        & \leq 2 C_{h,\gamma}
        \left[ \frac{C_{\nu,a}}{1-\frac{\gamma}{a}} + 1 \right]
        \left[ L_P \left( nd(\Lambda,\Lambda^\prime) + \frac{2n + n(n-1)L_P}{2} d(\theta, \theta^\prime) \right)
        + \frac{d(\Lambda,\Lambda^\prime) + n L_Pd(\theta, \theta^\prime)}{\Delta} \right]^{1-\gamma/a} \\
        & \quad + L_h C_{\nu,1}^\kappa (\Delta+1)
        \left[ d(\Lambda,\Lambda^\prime) + n L_Pd(\theta, \theta^\prime) \right]^{\min \left\{ 1-\kappa,\alpha \right\}}
        + \tilde{L}_h C_{\nu,\tilde{\kappa}} [d(\theta,\theta^\prime)]^{\tilde{\alpha}} .
    \end{align*}

    The preceding expression is bounded by the product of
    \begin{equation*}
        \frac{4 C_{h,\gamma}}{1-\frac{\gamma}{a}}
        \left[ L_P + L_P^2 + \frac{L_P+1}{\Delta} \right]^{1-\gamma/a}
        + L_h (\Delta+1)(L_P+1)
        + \tilde{L}_h
    \end{equation*}
    and
    \begin{equation*}
        C_{\nu,a}+C_{\nu,1}^\kappa+C_{\nu,\tilde{\kappa}}
        \leq \left[ \frac{b_a}{1-\beta_a}
        + \frac{b_{\tilde{\kappa}}}{1-\beta_{\tilde{\kappa}}}
        + \left[\frac{b}{1-\beta} \right]^\kappa \right]
        \max\{V(\Lambda),V(\Lambda^\prime)\}^{\max\{\kappa,\tilde{\kappa},a\}}
    \end{equation*}
    together with
    \begin{equation*}
        n^2 \left[ d(\theta,\theta^\prime) \right]^{\min\{1-\gamma/a,1-\kappa,\alpha,\tilde{\alpha}\}}
        + n \left[ d(\Lambda,\Lambda^\prime) \right]^{\min\{1-\gamma/a,1-\kappa,\alpha,\tilde{\alpha}\}} ,
    \end{equation*}
    when $d(\theta,\theta^\prime)<\min\{\tilde{\Delta},1\}$ and $d(\Lambda,\Lambda^\prime) < 1$.

    Denote
    \begin{align*}
        C
        &= \left[ \frac{b_a}{1-\beta_a}
        + \frac{b_{\tilde{\kappa}}}{1-\beta_{\tilde{\kappa}}}
        + \left[\frac{b}{1-\beta} \right]^\kappa \right] \\
        & \quad \left\{
            \frac{4 C_{h,\gamma}}{1-\frac{\gamma}{a}}
            \left[ L_P + L_P^2 + \frac{L_P+1}{\Delta} \right]^{1-\gamma/a}
            + L_h (\Delta+1)(L_P+1)
            + \tilde{L}_h
        \right\}
    \end{align*}

    Hence, with $\alpha_{P^n h, a} = \min\{1-\gamma/a,1-\kappa,\alpha,\tilde{\alpha}\}$,
    \begin{align*}
        & \quad \left| P_\theta^n(\Lambda,h_\theta) - P_{\theta^\prime}^n(\Lambda^\prime,h_{\theta^\prime}) \right| \\
        & \leq
        n^2 C \max\{V(\Lambda),V(\Lambda^\prime)\}^{\max\{\kappa,\tilde{\kappa},a\}}
        \left[ [d(\Lambda,\Lambda^\prime)]^{\alpha_{P^n h, a}} + [d(\theta,\theta^\prime)]^{\alpha_{P^n h, a}} \right] \\
        & \leq
        2 n^2 C \max\{V(\Lambda),V(\Lambda^\prime)\}^{\max\{\kappa,\tilde{\kappa},a\}}
        \max \left\{d(\Lambda,\Lambda^\prime),d(\theta,\theta^\prime)\right\}^{\alpha_{P^n h, a}} .
    \end{align*}

    When $d(\Lambda, \Lambda^\prime) \leq 1$, Item \ref{item_assumption_simultaneous_properties_4} of Assumption \ref{assumption_simultaneous_properties} gives
    \begin{equation*}
        \max\{V(\Lambda),V(\Lambda^\prime)\}^{\max\{\kappa,\tilde{\kappa},a\}}
        \leq L_{\ln V} V(\Lambda)^{\max\{\kappa,\tilde{\kappa},a\}}
    \end{equation*}
    Combining the foregoing bounds yields the claimed inequalities and completes the proof.
\end{proof}

When $h_\theta$ is bounded by $C_{h,\gamma} V^\gamma(\Lambda)$ for some $C_{h,\gamma} > 0$ and $\gamma \in (0,1]$, Assumption \ref{assumption_simultaneous_properties} guarantees the geometric ergodicity of $P_\theta$.
In this case we define
\begin{equation*}
    \hat{h}_\theta = \sum_{n \geq 0} (P_\theta^n - \pi_\theta) h_\theta ,
\end{equation*}
which is well-defined under this assumption.
By construction, the function $\hat{h}_\theta$ solves the Poisson equation
\begin{equation*}
    \hat{h}_\theta - P_\theta \hat{h}_\theta
    = h_\theta - \pi_\theta h_\theta .
\end{equation*}
The properties of $\hat{h}_\theta$ are summarized in the following theorem.
\begin{theorem}
    \label{theorem_V_Poisson_solution_robust_lipschitz_continuous_bounded}

    Suppose that Assumption \ref{assumption_simultaneous_properties} holds.
    Assume that for some $\kappa \in (0,1)$ and $\gamma \in (0,1)$, the family of functions $\{h_\theta\}_{\theta \in \Theta}$ is $((\Delta, L_h V^\kappa, \alpha), (\tilde{\Delta}, \tilde{L}_h V^{\tilde{\kappa}}, \tilde{\alpha}))$-joint locally Hölder continuous, and bounded by $C_{h,\gamma} V^\gamma$.

    Then for any chosen tuning parameter $a \in (\gamma, 1]$, $\omega \in (0,\frac{1}{2})$ and $\tilde{\omega} \in (0,\frac{(1-2\omega)\min\{1-\gamma/a,1-\kappa,\alpha,\tilde{\alpha}\}}{3})$, the family of functions $\{\hat{h}_\theta\}_{\theta \in \Theta}$ is bounded by $C_{\hat{h},\gamma} V^\gamma(\Lambda)$ and is
    \begin{equation*}
        ((1, L_{\hat{h}, \kappa_{\hat{h}, a}, \alpha_{\hat{h}, a, \omega, \tilde{\omega}}} V^{\kappa_{\hat{h}, a}}, \alpha_{\hat{h}, a, \omega, \tilde{\omega}}),
        (\min\{\tilde{\Delta},1\}, L_{\hat{h}, \kappa_{\hat{h}, a}, \alpha_{\hat{h}, a, \omega, \tilde{\omega}}} V^{\kappa_{\hat{h}, a}}, \alpha_{\hat{h}, a, \omega, \tilde{\omega}}))
    \end{equation*}
    -joint locally Hölder continuous, where
    \begin{align*}
        C_{\hat{h},\gamma} &= C_{h,\gamma} L_\gamma^2 ,
        \quad
        \kappa_{\hat{h}, a} = \max\{\kappa,\tilde{\kappa},a,\gamma\} , \\
        \quad \alpha_{\hat{h}, a, \omega, \tilde{\omega}} &= (1-2\omega)\min\{1-\gamma/a,1-\kappa,\alpha,\tilde{\alpha}\}-3\tilde{\omega}
    \end{align*}
    and
    \begin{align*}
        & \quad L_{\hat{h}, \kappa_{\hat{h}, a}, \alpha_{\hat{h}, a, \omega, \tilde{\omega}}} \\
        &= L_{\hat{h}}(\Delta, \kappa, \tilde{\kappa}, \alpha, \tilde{\alpha}, L_h, \tilde{L}_h, \gamma, C_{h,\gamma}; a, \omega, \tilde{\omega}) \\
        &= L_h + \tilde{L}_h + L_{\pi h, \alpha_{\pi h, \omega}}
        + C_{h,\gamma} L_\gamma^2 (L_{\ln V}+1)
        + 2 \left[ L_{\pi h, \alpha_{\pi h, \omega}} + L_{\mathrm{step}, a} \right]
        \left[ \frac{1}{-\tilde{\omega} e \ln(1-L_\gamma^{-1})} \right]^3 .
    \end{align*}
    Here, the constant is further composed of $L_{\pi h, \alpha_{\pi h, \omega}}$ in Theorem \ref{theorem_V_invariant_probability_Holder} and $L_{\mathrm{step}, a}$, where the latter corresponds to $L_{P^n h, \kappa_{P^n h, a}, \alpha_{P^n h, a}}$ in Theorem \ref{theorem_V_n_step_robust_lipschitz_continuous_bounded}, but without the multiplicative factor $n^2$.
\end{theorem}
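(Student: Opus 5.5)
We work directly from the series $\hat{h}_\theta = \sum_{n\ge 0}(P_\theta^n h_\theta - \pi_\theta h_\theta)$. The plan has three parts: bound each summand, split the series at a cutoff $n_0$ that depends on the perturbation size, and then tune the exponents.

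\textbf{Boundedness.} Item \ref{item_assumption_simultaneous_properties_1} of Assumption \ref{assumption_simultaneous_properties} with $a=\gamma$, together with $|h_\theta|\le C_{h,\gamma}V^\gamma$, gives
\[
|P_\theta^n h_\theta(\Lambda)-\pi_\theta h_\theta| \le C_{h,\gamma}L_\gamma(1-L_\gamma^{-1})^n V^\gamma(\Lambda).
\]
Summing the geometric series gives $|\hat{h}_\theta|\le C_{h,\gamma}L_\gamma^2V^\gamma$, which is the claimed constant $C_{\hat{h},\gamma}$.

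\textbf{Continuity.} Fix $\Lambda,\Lambda'$ with $d(\Lambda,\Lambda')<1$ and $\theta,\theta'$ with $d(\theta,\theta')<\min\{\tilde\Delta,1\}$. Set $\epsilon=\max\{d(\Lambda,\Lambda'),d(\theta,\theta')\}$. We split the series at
\[
n_0=\bigl\lfloor \ln\epsilon/\ln(1-L_\gamma^{-1})\bigr\rfloor .
\]
\begin{itemize}
\item \emph{Tail $n>n_0$.} Use the geometric bound above for both $(\theta,\Lambda)$ and $(\theta',\Lambda')$. Since $d(\Lambda,\Lambda')\le1$, Item \ref{item_assumption_simultaneous_properties_4} gives $V(\Lambda')\le L_{\ln V}V(\Lambda)$. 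The tail is then at most $C_{h,\gamma}L_\gamma^2(L_{\ln V}+1)V^\gamma(\Lambda)(1-L_\gamma^{-1})^{n_0}$, which is of order $\epsilon\, V^\gamma(\Lambda)$.
\item \emph{Head $n\le n_0$, term $n=0$.} This is $h_\theta(\Lambda)-h_{\theta'}(\Lambda')$, controlled directly by the joint local Hölder continuity of $h$. This produces the $L_h+\tilde L_h$ contribution.
\item \emph{Head, terms $1\le n\le n_0$.} Write each as $\bigl(P_\theta^n h_\theta(\Lambda)-P_{\theta'}^n h_{\theta'}(\Lambda')\bigr)-\bigl(\pi_\theta h_\theta-\pi_{\theta'}h_{\theta'}\bigr)$. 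Theorem \ref{theorem_V_n_step_robust_lipschitz_continuous_bounded} bounds the first bracket by $n^2L_{\mathrm{step},a}V^{\kappa_{P^nh,a}}(\Lambda)\,\epsilon^{\alpha_{P^nh,a}}$. Theorem \ref{theorem_V_invariant_probability_Holder} bounds the second by $L_{\pi h,\alpha_{\pi h,\omega}}\epsilon^{\alpha_{\pi h,\omega}}$.
\end{itemize}
Both exponents are at least $(1-2\omega)\min\{1-\gamma/a,1-\kappa,\alpha,\tilde\alpha\}$, since $\epsilon<1$ and the first is the larger one. Summing the head over $n\le n_0$ gives a factor of at most $n_0^3\lesssim |\ln\epsilon|^3$.

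\textbf{Tuning the exponents.} This is the main obstacle: the head carries $\epsilon^{(1-2\omega)\min\{\cdots\}}$ multiplied by $|\ln\epsilon|^3$. We absorb the logarithm using $|\ln\epsilon|\le(\tilde\omega e)^{-1}\epsilon^{-\tilde\omega}$, applied to $n_0\le \ln\epsilon/\ln(1-L_\gamma^{-1})$. This costs $3\tilde\omega$ in the exponent and yields the factor $\bigl[-\tilde\omega e\ln(1-L_\gamma^{-1})\bigr]^{-3}$ appearing in $L_{\hat h}$. The resulting exponent is
\[
\alpha_{\hat{h},a,\omega,\tilde\omega}=(1-2\omega)\min\{1-\gamma/a,1-\kappa,\alpha,\tilde\alpha\}-3\tilde\omega ,
\]
which is positive by the stated restriction on $\tilde\omega$.

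\textbf{Collecting the $V$-weights.} The weights from the three pieces are $V^{\kappa}$, $V^{\tilde\kappa}$, $V^{\gamma}$ and $V^{\max\{\kappa,\tilde\kappa,a\}}$. Since $V\ge1$, all are dominated by $V^{\kappa_{\hat h,a}}(\Lambda)$, using again $V(\Lambda')\le L_{\ln V}V(\Lambda)$ where needed. The three pieces then add up to the stated $L_{\hat h}$.

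Finally, the two Hölder conditions in the definition of joint continuity come from the two special cases $\theta=\theta'$ (continuity in state, radius $1$) and $\Lambda=\Lambda'$ (continuity in parameter, radius $\min\{\tilde\Delta,1\}$), both with the same constant.
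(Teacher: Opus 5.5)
Your proposal is correct and follows the paper's proof essentially step for step. It uses the same series $\sum_{n\ge0}\{[(P_\theta^n h_\theta)(\Lambda)-\pi_\theta h_\theta]-[(P_{\theta'}^n h_{\theta'})(\Lambda')-\pi_{\theta'}h_{\theta'}]\}$ and the same cutoff $\lfloor \ln\epsilon/\ln(1-L_\gamma^{-1})\rfloor$. The tail is handled by geometric ergodicity plus Item \ref{item_assumption_simultaneous_properties_4}, the head by Theorems \ref{theorem_V_n_step_robust_lipschitz_continuous_bounded} and \ref{theorem_V_invariant_probability_Holder}, and the inequality $\ln(1/\epsilon)\le(\tilde\omega e)^{-1}\epsilon^{-\tilde\omega}$ converts the $N^3$ factor into the exponent loss $3\tilde\omega$.

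One small bookkeeping slip: the $n=0$ summand is $h_\theta(\Lambda)-h_{\theta'}(\Lambda')-(\pi_\theta h_\theta-\pi_{\theta'}h_{\theta'})$, not just $h_\theta(\Lambda)-h_{\theta'}(\Lambda')$. Bounding the extra piece by Theorem \ref{theorem_V_invariant_probability_Holder} is exactly where the standalone $L_{\pi h,\alpha_{\pi h,\omega}}$ term in $L_{\hat h}$ comes from.
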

\begin{proof}
    Consider the difference
    \begin{equation*}
        \hat{h}_\theta(\Lambda) - \hat{h}_{\theta^\prime}(\Lambda^\prime)
        = \sum_{n=0}^\infty \left[ (P_\theta^n h_\theta)(\Lambda)
        - (P_{\theta^\prime}^n h_{\theta^\prime})(\Lambda^\prime)
        - \pi_\theta h_\theta + \pi_{\theta^\prime} h_{\theta^\prime} \right] .
    \end{equation*}
    When $\omega \in (0,\frac{1}{2})$, $a \in (\gamma, 1]$, $d(\theta,\theta^\prime)<\min\{\tilde{\Delta},1\}$ and $d(\Lambda, \Lambda^\prime)<1$, each term in the sum,
    \begin{equation*}
        \left[ (P_\theta^n h_\theta)(\Lambda) - \pi_\theta h_\theta \right]
        - \left[ (P_{\theta^\prime}^n h_{\theta^\prime})(\Lambda^\prime) - \pi_{\theta^\prime} h_{\theta^\prime} \right]
    \end{equation*}
    can be bounded as follows.
    \begin{enumerate}
        \item Exponential tail term (by Item \ref{item_assumption_simultaneous_properties_1} of Assumption \ref{assumption_simultaneous_properties}):
        \begin{align*}
            & \quad \left| \left[ (P_\theta^n h_\theta)(\Lambda) - \pi_\theta h_\theta \right]
            - \left[ (P_{\theta^\prime}^n h_{\theta^\prime})(\Lambda^\prime) - \pi_{\theta^\prime} h_{\theta^\prime} \right] \right| \\
            & \leq \left| (P_\theta^n h_\theta)(\Lambda) - \pi_\theta h_\theta \right|
            + \left| (P_{\theta^\prime}^n h_{\theta^\prime})(\Lambda^\prime) - \pi_{\theta^\prime} h_{\theta^\prime} \right| \\
            & \leq C_{h,\gamma}
            \left[V^\gamma(\Lambda)+V^\gamma(\Lambda^\prime)\right]
            L_\gamma(1-L_\gamma^{-1})^n
            \leq C_{h,\gamma} L_\gamma (L_{\ln V}+1) V^\gamma(\Lambda) (1-L_\gamma^{-1})^n .
        \end{align*}
        \item Difference of invariant expectations (by Theorem \ref{theorem_V_invariant_probability_Holder}):
        \begin{equation*}
            \left| \pi_\theta h_\theta - \pi_{\theta^\prime} h_{\theta^\prime} \right|
            \leq L_{\pi h, \alpha_{\pi h, \omega}}
            \left[ d(\theta,\theta^\prime) \right]^{\alpha_{\pi h, \omega}} .
        \end{equation*}
        \item Difference of $n$-step kernels (by Theorem \ref{theorem_V_n_step_robust_lipschitz_continuous_bounded} for $n \in \mathbb{N}^*$):
        \begin{equation*}
            \left| (P_\theta^n h_\theta)(\Lambda) - (P_{\theta^\prime}^n h_{\theta^\prime})(\Lambda^\prime) \right|
            \leq L_{P^n h, \kappa_{P^n h, a}, \alpha_{P^n h, a}} V^{\kappa_{P^n h, a}}(\Lambda)
            [d(\theta, \theta^\prime)]^{\alpha_{P^n h, a}} .
        \end{equation*}
    \end{enumerate}

    Denote
    \begin{equation*}
        q_{\mathrm{min}} = (1-2\omega)\min\{1-\gamma/a,1-\kappa,\alpha,\tilde{\alpha}\} \leq \min\left\{ \alpha_{\pi h, \omega}, \alpha_{P^n h, a} \right\} ,
    \end{equation*}
    and set $L_{\mathrm{step}, a} = \frac{L_{P^n h, \kappa_{P^n h, a}, \alpha_{P^n h, a}}}{n^2}$, which is a constant independent of $n$.
    Then for any $n \in \mathbb{N}^*$,
    \begin{align*}
        & \quad \left| \left[ (P_\theta^n h_\theta)(\Lambda) - \pi_\theta h_\theta \right]
        - \left[ (P_{\theta^\prime}^n h_{\theta^\prime})(\Lambda^\prime) - \pi_{\theta^\prime} h_{\theta^\prime} \right] \right| \\
        & \leq \left[ L_{\pi h, \alpha_{\pi h, \omega}}
        + L_{P^n h, \kappa_{P^n h, a}, \alpha_{P^n h, a}} \right] V^{\kappa_{P^n h, a}}(\Lambda)
        \left[ [d(\Lambda,\Lambda^\prime)]^{q_{\mathrm{min}}} + [d(\theta,\theta^\prime)]^{q_{\mathrm{min}}} \right] \\
        & \leq 2 n^2 \left[ L_{\pi h, \alpha_{\pi h, \omega}}
        + L_{\mathrm{step}, a} \right] V^{\kappa_{P^n h, a}}(\Lambda)
        \max \left\{d(\Lambda,\Lambda^\prime),d(\theta,\theta^\prime)\right\}^{q_{\mathrm{min}}} .
    \end{align*}

    Let $N=\lfloor \frac{\ln\epsilon}{\ln(1-L_\gamma^{-1})} \rfloor$, where $\epsilon = \max \left\{d(\Lambda,\Lambda^\prime),d(\theta,\theta^\prime)\right\} < 1$.
    Using the technical inequality $\ln\frac{1}{\epsilon} \leq \frac{1}{\tilde{\omega} e \epsilon^{\tilde{\omega}}}$ for $\tilde{\omega} > 0$, we split the summation over $n$ into two parts: $n \leq N$ and $n > N$.
    For the tail sum, observe that
    \begin{equation*}
        \sum_{n>N} (1-L_\gamma^{-1})^n = L_\gamma(1-L_\gamma^{-1})^{N+1} \leq L_\gamma \epsilon .
    \end{equation*}
    For the finite part, we have
    \begin{align*}
        \sum_{0 < n \leq N} n^2 \epsilon^{q_{\mathrm{min}}}
        \leq N^3 \epsilon^{q_{\mathrm{min}}}
        < \left[ \frac{\ln\epsilon}{\ln(1-L_\gamma^{-1})} \right]^3 \epsilon^{q_{\mathrm{min}}}
        < \left[ \frac{1}{-\tilde{\omega} e \ln(1-L_\gamma^{-1})} \right]^3 \epsilon^{q_{\mathrm{min}}-3\tilde{\omega}} .
    \end{align*}

    Combining the two parts with $\left| (P_\theta^0 h_\theta)(\Lambda) - (P_{\theta^\prime}^0 h_{\theta^\prime})(\Lambda^\prime) \right| = d(\Lambda, \Lambda^\prime) \leq \epsilon < 1$, the entire summation
    \begin{equation*}
        \hat{h}_\theta(\Lambda) - \hat{h}_{\theta^\prime}(\Lambda^\prime) = \sum_{n \geq 0} \left\{ \left[ (P_\theta^n h_\theta)(\Lambda) - \pi_\theta h_\theta \right]
        - \left[ (P_{\theta^\prime}^n h_{\theta^\prime})(\Lambda^\prime) - \pi_{\theta^\prime} h_{\theta^\prime} \right] \right\}
    \end{equation*}
    is bounded by
    \begin{align*}
        & \quad \left\{ L_h V^\kappa(\Lambda) \left[ d(\Lambda, \Lambda^\prime) \right]^\alpha
        + \tilde{L}_h V^{\tilde{\kappa}}(\Lambda) \left[ d(\theta, \theta^\prime) \right]^{\tilde{\alpha}}+ L_{\pi h, \alpha_{\pi h, \omega}} \left[ d(\theta,\theta^\prime) \right]^{\alpha_{\pi h, \omega}} \right\} \\
        & \quad + C_{h,\gamma} L_\gamma (L_{\ln V}+1) V^\gamma(\Lambda)
        \sum_{n>N} (1-L_\gamma^{-1})^n \\
        & \quad + 2 \left[ L_{\pi h, \alpha_{\pi h, \omega}} + L_{\mathrm{step}, a} \right]
        V^{\kappa_{P^n h, a}}(\Lambda)
        \sum_{0 < n \leq N} n^2 \epsilon^{q_{\mathrm{min}}} \\
        & \leq \left\{ L_h V^\kappa(\Lambda) \epsilon^\alpha
        + \tilde{L}_h V^{\tilde{\kappa}}(\Lambda) \epsilon^{\tilde{\alpha}}+ L_{\pi h, \alpha_{\pi h, \omega}} \epsilon^{\alpha_{\pi h, \omega}} \right\} \\
        & \quad + C_{h,\gamma} L_\gamma^2 (L_{\ln V}+1) V^\gamma(\Lambda) \epsilon \\
        & \quad + 2 \left[ L_{\pi h, \alpha_{\pi h, \omega}} + L_{\mathrm{step}, a} \right]
        \left[ \frac{1}{-\tilde{\omega} e \ln(1-L_\gamma^{-1})} \right]^3
        V^{\kappa_{P^n h, a}}(\Lambda)
        \epsilon^{q_{\mathrm{min}}-3\tilde{\omega}} \\
        & \leq \left\{ L_h + \tilde{L}_h + L_{\pi h, \alpha_{\pi h, \omega}}
        + C_{h,\gamma} L_\gamma^2 (L_{\ln V}+1)
        + 2 \left[ L_{\pi h, \alpha_{\pi h, \omega}} + L_{\mathrm{step}, a} \right]
        \left[ \frac{1}{-\tilde{\omega} e \ln(1-L_\gamma^{-1})} \right]^3 \right\} \\
        & \quad V^{\max\{\kappa_{P^n h, a},\gamma\}}(\Lambda)
        \epsilon^{q_{\mathrm{min}}-3\tilde{\omega}} .
    \end{align*}
\end{proof}

\subsubsection{Corollaries}

\begin{corollary}
    \label{corollary_n_step_h_robust_lipschitz_continuous_bounded}

    Suppose that Assumption \ref{assumption_simultaneous_properties} holds.
    Assume that for any $\alpha,\tilde{\alpha},\kappa,\tilde{\kappa} \in (0,1)$ and $\gamma \in (0,1]$, the family of functions $\{h_\theta\}_{\theta \in \Theta}$ is $((\Delta_{h, \kappa, \alpha}, L_{h, \kappa, \alpha} V^\kappa, \alpha), (\tilde{\Delta}_{h, \tilde{\kappa}, \tilde{\alpha}}, \tilde{L}_{h, \tilde{\kappa}, \tilde{\alpha}} V^{\tilde{\kappa}}, \tilde{\alpha}))$-joint locally Hölder continuous, and bounded by $C_{h,\gamma} V^\gamma$.
    
    Then for any $n \in \mathbb{N}^*$, any $\alpha,\tilde{\alpha},\kappa,\tilde{\kappa} \in (0,1)$ and $\gamma \in (0,1]$, the family of functions $\{P_\theta^n h_\theta\}_{\theta \in \Theta}$ is also
    \begin{equation*}
        ((\Delta_{P^n h, \kappa, \alpha}, L_{P^n h, \kappa, \alpha} V^\kappa, \alpha),
        (\tilde{\Delta}_{P^n h, \tilde{\kappa}, \tilde{\alpha}}, \tilde{L}_{P^n h, \tilde{\kappa}, \tilde{\alpha}} V^{\tilde{\kappa}}, \tilde{\alpha}))
    \end{equation*}
    -joint locally Hölder continuous, and bounded by $C_{P^n h,\gamma} V^\gamma(\Lambda)$ for some constant $\Delta_{P^n h, \kappa, \alpha}$, $\tilde{\Delta}_{P^n h, \tilde{\kappa}, \tilde{\alpha}}$, $L_{P^n h, \kappa, \alpha}$, $\tilde{L}_{P^n h, \tilde{\kappa}, \tilde{\alpha}}$ and $C_{P^n h,\gamma} > 0$.
\end{corollary}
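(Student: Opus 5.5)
The plan is to deduce Corollary \ref{corollary_n_step_h_robust_lipschitz_continuous_bounded} from Theorem \ref{theorem_V_n_step_robust_lipschitz_continuous_bounded} by choosing its tuning parameters freely and then adjusting exponents, using $V \geq 1$. Fix $n \in \mathbb{N}^*$ and target exponents $\alpha,\tilde{\alpha},\kappa,\tilde{\kappa} \in (0,1)$ and $\gamma \in (0,1]$.

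\textbf{Boundedness.} For $\gamma<1$, apply the first part of Theorem \ref{theorem_V_n_step_robust_lipschitz_continuous_bounded} with the assumed bound $|h_\theta| \leq C_{h,\gamma}V^\gamma$. This gives $|P_\theta^n h_\theta| \leq C_{h,\gamma} b_\gamma(1-\beta_\gamma)^{-1} V^\gamma$. For $\gamma = 1$, instead use the bound with any smaller $\gamma'$ together with $V^{\gamma'} \leq V$.

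\textbf{Hölder continuity.} The key observation is that in Theorem \ref{theorem_V_n_step_robust_lipschitz_continuous_bounded} the output weight exponent is $\kappa_{P^nh,a}=\max\{\kappa',\tilde\kappa',a\}$ and the output Hölder exponent is $\alpha_{P^nh,a}=\min\{1-\gamma'/a,1-\kappa',\alpha',\tilde\alpha'\}$. Both can be tuned because the hypothesis holds for every choice of input exponents $(\alpha',\tilde\alpha',\kappa',\tilde\kappa',\gamma')$.

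Let $k=\min\{\kappa,\tilde\kappa\}$ and $A=\max\{\alpha,\tilde\alpha\}<1$. Choose
\begin{itemize}
\item $a \leq k$ with $a>0$;
\item $\kappa'=\tilde\kappa'=a$;
\item $\gamma'<a$ small enough that $1-\gamma'/a \geq A$ (possible since $A<1$);
\item $\alpha'=\tilde\alpha'=A$, and shrink $a$ if needed so that $1-a \geq A$.
\end{itemize}
With these choices the weight exponent is $a \leq k$ and the Hölder exponent is at least $A$. The theorem then yields joint local Hölder continuity with weight $L V^{a}$, exponent $A$, and radii $1$ and $\min\{\tilde\Delta_{h,\tilde\kappa',\tilde\alpha'},1\}$, with the same constant for the state and parameter parts.

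\textbf{Adjusting to the target exponents.} Since $V\geq1$ and $a\le \kappa,\tilde\kappa$, we have $V^a\le V^\kappa$ and $V^a\le V^{\tilde\kappa}$. On distances less than $1$, $d^{A}\le d^{\alpha}$ and $d^{A}\le d^{\tilde\alpha}$. So the same constant serves as $L_{P^nh,\kappa,\alpha}$ and $\tilde L_{P^nh,\tilde\kappa,\tilde\alpha}$, with $\Delta_{P^nh,\kappa,\alpha}=1$ and $\tilde\Delta_{P^nh,\tilde\kappa,\tilde\alpha}=\min\{\tilde\Delta_{h,\tilde\kappa',\tilde\alpha'},1\}$.

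\textbf{Main obstacle.} The subtle step is that Theorem \ref{theorem_V_n_step_robust_lipschitz_continuous_bounded} only controls the joint difference $P^n_\theta h_\theta(\Lambda)-P^n_{\theta'}h_{\theta'}(\Lambda')$. The two separate conditions in the definition of joint local Hölder continuity are recovered by setting $\theta=\theta'$ or $\Lambda=\Lambda'$ respectively. I would also verify that in the theorem the parameter-side radius only needs $d(\theta,\theta')<\min\{\tilde\Delta,1\}$ and the state-side radius only needs $d(\Lambda,\Lambda')<1$. Both requirements follow from its proof.
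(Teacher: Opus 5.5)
Your proposal is correct and is essentially the paper's proof: both apply Theorem~\ref{theorem_V_n_step_robust_lipschitz_continuous_bounded} with re-tuned input exponents ($\alpha'=\tilde\alpha'=\max\{\alpha,\tilde\alpha\}$ and small $\kappa',\tilde\kappa',\gamma'$), so that the output weight exponent is at most $\min\{\kappa,\tilde\kappa\}$ and the output H\"older exponent is at least $\max\{\alpha,\tilde\alpha\}$, and then relax the bound using $V\ge 1$ and $d<1$. The differences are cosmetic. The paper fixes $a=\min\{\kappa,\tilde\kappa\}$, $\kappa'=\tilde\kappa'=\min\{\kappa,\tilde\kappa,1-\max\{\alpha,\tilde\alpha\}\}$ and $\gamma'=a(1-\max\{\alpha,\tilde\alpha\})$ rather than shrinking $a$; if you do shrink $a$, fix it before choosing $\gamma'$ so that $1-\gamma'/a\ge\max\{\alpha,\tilde\alpha\}$ still holds. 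You also handle $\gamma=1$ in the bound more carefully than the paper, which reads it directly off the theorem.
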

\begin{proof}
    By Theorem \ref{theorem_V_n_step_robust_lipschitz_continuous_bounded}, the family $\{P_\theta^n h_\theta\}_{\theta \in \Theta}$ is bounded by $C_{P^n h,\gamma} V^\gamma(\Lambda)$ for any $\gamma \in (0,1]$.
    
    Moreover, to align the exponents with arbitrary prescribed $\alpha,\tilde{\alpha},\kappa,\tilde{\kappa} \in (0,1)$, define
    \begin{align*}
        \tilde{\alpha}^\prime, \alpha^\prime &= \max\{\alpha, \tilde{\alpha}\} , \\
        \tilde{\kappa}^\prime, \kappa^\prime &= \min\{\kappa, \tilde{\kappa}, 1-\max\{\alpha, \tilde{\alpha}\}\} , \\
        a &= \min\{\kappa, \tilde{\kappa}\} , \\
        \gamma^\prime &= \min\{\kappa, \tilde{\kappa}\} [1-\max\{\alpha, \tilde{\alpha}\}] .
    \end{align*}
    Since $\{h_\theta\}_{\theta \in \Theta}$ can be $((\Delta_{h, \kappa^\prime, \alpha^\prime}, L_{h, \kappa^\prime, \alpha^\prime} V^{\kappa^\prime}, \alpha^\prime), (\tilde{\Delta}_{h, \tilde{\kappa}^\prime, \tilde{\alpha}^\prime}, \tilde{L}_{h, \tilde{\kappa}^\prime, \tilde{\alpha}^\prime} V^{\tilde{\kappa}^\prime}, \tilde{\alpha}^\prime))$-joint locally Hölder continuous, the above choice is admissible.

    From this construction and Theorem \ref{theorem_V_n_step_robust_lipschitz_continuous_bounded}, we obtain that $\{P_\theta^n h_\theta\}_{\theta \in \Theta}$ is
    \begin{equation*}
        ((1, L_{P^n h, \kappa_{P^n h, a}, \alpha_{P^n h, a}} V^{\kappa_{P^n h, a}}, \alpha_{P^n h, a}),
        (\min\{\tilde{\Delta}_{h, \tilde{\kappa}, \tilde{\alpha}},1\}, L_{P^n h, \kappa_{P^n h, a}, \alpha_{P^n h, a}} V^{\kappa_{P^n h, a}}, \alpha_{P^n h, a}))
    \end{equation*}
    -joint locally Hölder continuous, where
    \begin{equation*}
        \kappa_{P^n h, a} = \max\{\kappa^\prime,\tilde{\kappa}^\prime,a\}
        \leq \min\{\kappa, \tilde{\kappa}\} ,
        \quad \alpha_{P^n h, a} = \min\{1-\gamma^\prime/a,1-\kappa^\prime,\alpha^\prime,\tilde{\alpha}^\prime\}
        \geq \max\{\alpha, \tilde{\alpha}\} .
    \end{equation*}
    Thus, $\{P_\theta^n h_\theta\}_{\theta \in \Theta}$ is 
    \begin{equation*}
        ((1, L_{P^n h, \kappa_{P^n h, a}, \alpha_{P^n h, a}} V^\kappa, \alpha),
        (\min\{\tilde{\Delta}_{h, \tilde{\kappa}, \tilde{\alpha}},1\}, L_{P^n h, \kappa_{P^n h, a}, \alpha_{P^n h, a}} V^{\tilde{\kappa}}, \tilde{\alpha}))
    \end{equation*}
    -joint locally Hölder continuous.
\end{proof}

\begin{corollary}
    \label{corollary_V_Poisson_solution_robust_lipschitz_continuous_bounded}

    Suppose that Assumption \ref{assumption_simultaneous_properties} holds.
    Assume that for any $\alpha,\tilde{\alpha},\kappa,\tilde{\kappa} \in (0,1)$ and $\gamma \in (0,1]$, the family of functions $\{h_\theta\}_{\theta \in \Theta}$ is $((\Delta_{h, \kappa, \alpha}, L_{h, \kappa, \alpha} V^\kappa, \alpha), (\tilde{\Delta}_{h, \tilde{\kappa}, \tilde{\alpha}}, \tilde{L}_{h, \tilde{\kappa}, \tilde{\alpha}} V^{\tilde{\kappa}}, \tilde{\alpha}))$-joint locally Hölder continuous, and bounded by $C_{h,\gamma} V^\gamma$.
    
    Then for any $\alpha,\tilde{\alpha},\kappa,\tilde{\kappa} \in (0,1)$ and $\gamma \in (0,1]$, the family of functions $\{\hat{h}_\theta\}_{\theta \in \Theta}$ is also
    \begin{equation*}
        ((\Delta_{\hat{h}, \kappa, \alpha}, L_{\hat{h}, \kappa, \alpha} V^\kappa, \alpha),
        (\tilde{\Delta}_{\hat{h}, \tilde{\kappa}, \tilde{\alpha}}, \tilde{L}_{\hat{h}, \tilde{\kappa}, \tilde{\alpha}} V^{\tilde{\kappa}}, \tilde{\alpha}))
    \end{equation*}
    -joint locally Hölder continuous, and bounded by $C_{\hat{h},\gamma} V^\gamma(\Lambda)$ for some constant $\Delta_{\hat{h}, \kappa, \alpha}$, $\tilde{\Delta}_{\hat{h}, \tilde{\kappa}, \tilde{\alpha}}$, $L_{\hat{h}, \kappa, \alpha}$, $\tilde{L}_{\hat{h}, \tilde{\kappa}, \tilde{\alpha}}$ and $C_{\hat{h},\gamma} > 0$.
\end{corollary}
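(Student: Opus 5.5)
The plan mirrors the proof of Corollary \ref{corollary_n_step_h_robust_lipschitz_continuous_bounded}: we reduce to Theorem \ref{theorem_V_Poisson_solution_robust_lipschitz_continuous_bounded} by a judicious choice of the auxiliary exponents, exploiting that the hypothesis on $\{h_\theta\}$ holds for \emph{every} choice of $\alpha,\tilde\alpha,\kappa,\tilde\kappa\in(0,1)$ and $\gamma\in(0,1]$.

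\textbf{Boundedness.} For $\gamma\in(0,1)$, Theorem \ref{theorem_V_Poisson_solution_robust_lipschitz_continuous_bounded} gives directly $|\hat h_\theta|\le C_{h,\gamma}L_\gamma^2V^\gamma$. For $\gamma=1$ we use $V\ge1$, so $V^{1/2}\le V$, and the bound with exponent $1/2$ already implies the one with exponent $1$.

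\textbf{Hölder continuity.} Fix target exponents $\alpha,\tilde\alpha,\kappa,\tilde\kappa\in(0,1)$ and set $\alpha_0=\max\{\alpha,\tilde\alpha\}$ and $\kappa_0=\min\{\kappa,\tilde\kappa\}$. We want to feed into Theorem \ref{theorem_V_Poisson_solution_robust_lipschitz_continuous_bounded} input data such that:
\begin{itemize}
\item the output Hölder exponent $\alpha_{\hat h,a,\omega,\tilde\omega}$ is at least $\alpha_0$;
\item the output weight exponent $\kappa_{\hat h,a}$ is at most $\kappa_0$.
\end{itemize}
To arrange this, first pick an intermediate exponent $\alpha_1\in(\alpha_0,1)$. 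Then choose small $\omega,\tilde\omega>0$ with $(1-2\omega)\alpha_1-3\tilde\omega\ge\alpha_0$, which is possible since $\alpha_1>\alpha_0$. Next take the input parameters:
\begin{itemize}
\item input Hölder exponents $\alpha'=\tilde\alpha'=\alpha_1$;
\item input weight exponents $\kappa'=\tilde\kappa'\le\min\{\kappa_0,\,1-\alpha_1\}$;
\item $a=\kappa_0$;
\item $\gamma'=\kappa_0(1-\alpha_1)$, so that $1-\gamma'/a=\alpha_1$ and $\gamma'<a$.
\end{itemize}
With these choices we get $\min\{1-\gamma'/a,\,1-\kappa',\,\alpha',\,\tilde\alpha'\}=\alpha_1$, hence $\alpha_{\hat h,a,\omega,\tilde\omega}\ge\alpha_0$. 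We also get $\kappa_{\hat h,a}=\max\{\kappa',\tilde\kappa',a,\gamma'\}=\kappa_0$. The hypothesis supplies $h$ with exactly these input exponents, together with some $\Delta,\tilde\Delta$.

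\textbf{Conversion to the target exponents.} Theorem \ref{theorem_V_Poisson_solution_robust_lipschitz_continuous_bounded} now yields joint local Hölder continuity of $\{\hat h_\theta\}$ with radii $1$ and $\min\{\tilde\Delta,1\}$, weight $LV^{\kappa_0}$, and exponent $\ge\alpha_0$. Since both distances are below $1$, we have $d^{\alpha_{\hat h}}\le d^{\alpha}$ and $d^{\alpha_{\hat h}}\le d^{\tilde\alpha}$. Since $V\ge1$, we have $V^{\kappa_0}\le V^\kappa$ and $V^{\kappa_0}\le V^{\tilde\kappa}$. Together these give the claimed $((\Delta_{\hat h,\kappa,\alpha},L V^\kappa,\alpha),(\tilde\Delta_{\hat h,\tilde\kappa,\tilde\alpha},\tilde LV^{\tilde\kappa},\tilde\alpha))$ continuity, with $\Delta_{\hat h,\kappa,\alpha}=1$ and $\tilde\Delta_{\hat h,\tilde\kappa,\tilde\alpha}=\min\{\tilde\Delta,1\}$.

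\textbf{Main obstacle.} The only delicate point is the bookkeeping of exponents: the loss factor $(1-2\omega)$ and the $-3\tilde\omega$ term must be absorbed. This is why we inflate to $\alpha_1>\alpha_0$ rather than taking $\alpha'=\alpha_0$ as in the $P^n$ corollary. We must also check that all constraints remain admissible: $\gamma'<a\le1$, $\tilde\omega$ below the threshold required in Theorem \ref{theorem_V_Poisson_solution_robust_lipschitz_continuous_bounded}, and all input exponents lying in $(0,1)$.
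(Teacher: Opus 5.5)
Your proposal is correct and follows the paper's proof essentially verbatim. The paper makes the same reduction to Theorem \ref{theorem_V_Poisson_solution_robust_lipschitz_continuous_bounded}, with $a=\min\{\kappa,\tilde\kappa\}$, $\gamma'=a(1-\alpha')$, $\kappa'=\tilde\kappa'=\min\{\kappa,\tilde\kappa,1-\alpha'\}$, and an inflated input exponent $\alpha'=\tilde\alpha'>\max\{\alpha,\tilde\alpha\}$ (concretely $\omega=(1-\max\{\alpha,\tilde\alpha\})/8$ and $\tilde\omega=(1-\max\{\alpha,\tilde\alpha\})/6$, so that $(1-2\omega)\alpha'-3\tilde\omega=\max\{\alpha,\tilde\alpha\}$ exactly), and then converts using $d<1$ and $V\ge 1$; your separate handling of $\gamma=1$ via $V^{1/2}\le V$ is a small point the paper glosses over, since the theorem is stated only for $\gamma\in(0,1)$.
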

\begin{proof}
    By Theorem \ref{theorem_V_Poisson_solution_robust_lipschitz_continuous_bounded}, the family $\{\hat{h}_\theta\}_{\theta \in \Theta}$ is bounded by $C_{\hat{h},\gamma} V^\gamma(\Lambda)$ for any $\gamma \in (0,1]$.
    
    Moreover, to align the exponents with arbitrary prescribed $\alpha,\tilde{\alpha},\kappa,\tilde{\kappa} \in (0,1)$, define
    \begin{align*}
        \tilde{\omega} &= \frac{1-\max\{\alpha, \tilde{\alpha}\}}{6} , \\
        \omega &= \frac{1-\max\{\alpha, \tilde{\alpha}\}}{8} , \\
        \tilde{\alpha}^\prime, \alpha^\prime &= \frac{\max\{\alpha, \tilde{\alpha}\} + 3\tilde{\omega}}{1-2\omega}
        = \frac{2+2\max\{\alpha, \tilde{\alpha}\}}{3+\max\{\alpha, \tilde{\alpha}\}} , \\
        \tilde{\kappa}^\prime, \kappa^\prime &= \min\{\kappa, \tilde{\kappa}, 1-\frac{\max\{\alpha, \tilde{\alpha}\} + 3\tilde{\omega}}{1-2\omega}\} , \\
        a &= \min\{\kappa, \tilde{\kappa}\} , \\
        \gamma^\prime &= \min\{\kappa, \tilde{\kappa}\} \left[ 1-\frac{\max\{\alpha, \tilde{\alpha}\} + 3\tilde{\omega}}{1-2\omega} \right] .
    \end{align*}
    Since $\{h_\theta\}_{\theta \in \Theta}$ can be $((\Delta_{h, \kappa^\prime, \alpha^\prime}, L_{h, \kappa^\prime, \alpha^\prime} V^{\kappa^\prime}, \alpha^\prime), (\tilde{\Delta}_{h, \tilde{\kappa}^\prime, \tilde{\alpha}^\prime}, \tilde{L}_{h, \tilde{\kappa}^\prime, \tilde{\alpha}^\prime} V^{\tilde{\kappa}^\prime}, \tilde{\alpha}^\prime))$-joint locally Hölder continuous, the above choice is admissible.

    From this construction and Theorem \ref{theorem_V_Poisson_solution_robust_lipschitz_continuous_bounded}, we obtain that $\{\hat{h}_\theta\}_{\theta \in \Theta}$ is
    \begin{equation*}
        ((1, L_{\hat{h}, \kappa_{\hat{h}, a}, \alpha_{\hat{h}, a, \omega, \tilde{\omega}}} V^{\kappa_{\hat{h}, a}}, \alpha_{\hat{h}, a, \omega, \tilde{\omega}}),
        (\min\{\tilde{\Delta}_{h, \tilde{\kappa}, \tilde{\alpha}},1\}, L_{\hat{h}, \kappa_{\hat{h}, a}, \alpha_{\hat{h}, a, \omega, \tilde{\omega}}} V^{\kappa_{\hat{h}, a}}, \alpha_{\hat{h}, a, \omega, \tilde{\omega}}))
    \end{equation*}
    -joint locally Hölder continuous, where
    \begin{align*}
        \kappa_{\hat{h}, a} &= \max\{\kappa^\prime,\tilde{\kappa}^\prime,a,\gamma^\prime\}
        \leq \min\{\kappa, \tilde{\kappa}\} , \\
        \quad \alpha_{\hat{h}, a, \omega, \tilde{\omega}} &= (1-2\omega)\min\{1-\gamma^\prime/a,1-\kappa^\prime,\alpha^\prime,\tilde{\alpha}^\prime\}-3\tilde{\omega}
        \geq \max\{\alpha, \tilde{\alpha}\} .
    \end{align*}
    Thus, $\{\hat{h}_\theta\}_{\theta \in \Theta}$ is 
    \begin{equation*}
        ((1, L_{\hat{h}, \kappa_{\hat{h}, a}, \alpha_{\hat{h}, a, \omega, \tilde{\omega}}} V^\kappa, \alpha),
        (\min\{\tilde{\Delta}_{h, \tilde{\kappa}, \tilde{\alpha}},1\}, L_{\hat{h}, \kappa_{\hat{h}, a}, \alpha_{\hat{h}, a, \omega, \tilde{\omega}}} V^{\tilde{\kappa}}, \tilde{\alpha}))
    \end{equation*}
    -joint locally Hölder continuous.
\end{proof}

\begin{corollary}
    \label{corollary_h_power_robust_lipschitz_continuous_bounded}

    Suppose that Assumption \ref{assumption_simultaneous_properties} holds.
    Assume that for any $\alpha,\tilde{\alpha},\kappa,\tilde{\kappa} \in (0,1)$ and $\gamma \in (0,1]$, the family of functions $\{h_\theta\}_{\theta \in \Theta}$ is $((\Delta_{h, \kappa, \alpha}, L_{h, \kappa, \alpha} V^\kappa, \alpha), (\tilde{\Delta}_{h, \tilde{\kappa}, \tilde{\alpha}}, \tilde{L}_{h, \tilde{\kappa}, \tilde{\alpha}} V^{\tilde{\kappa}}, \tilde{\alpha}))$-joint locally Hölder continuous, and bounded by $C_{h,\gamma} V^\gamma$.
    
    Then for any $\alpha,\tilde{\alpha},\kappa,\tilde{\kappa} \in (0,1)$ and $\gamma \in (0,1]$, the family of functions $\{h_\theta^n\}_{\theta \in \Theta}$ is also
    \begin{equation*}
        ((\Delta_{h^n, \kappa, \alpha}, L_{h^n, \kappa, \alpha} V^\kappa, \alpha),
        (\tilde{\Delta}_{h^n, \tilde{\kappa}, \tilde{\alpha}}, \tilde{L}_{h^n, \tilde{\kappa}, \tilde{\alpha}} V^{\tilde{\kappa}}, \tilde{\alpha}))
    \end{equation*}
    -joint locally Hölder continuous, and bounded by $C_{h^n,\gamma} V^\gamma(\Lambda)$ for some constant $\Delta_{h^n, \kappa, \alpha}$, $\tilde{\Delta}_{h^n, \tilde{\kappa}, \tilde{\alpha}}$, $L_{h^n, \kappa, \alpha}$, $\tilde{L}_{h^n, \tilde{\kappa}, \tilde{\alpha}}$ and $C_{h^n,\gamma} > 0$.
\end{corollary}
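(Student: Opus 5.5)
The plan is to reduce the claim for $h_\theta^n$ to the claim for $h_\theta$ by the elementary factorization $a^n-b^n=(a-b)\sum_{k=0}^{n-1}a^k b^{n-1-k}$, using the fact that the hypothesis is available for \emph{every} choice of exponents $\alpha,\tilde\alpha,\kappa,\tilde\kappa\in(0,1)$ and every $\gamma\in(0,1]$. Fix the target exponents $\alpha,\tilde\alpha,\kappa,\tilde\kappa,\gamma$. The boundedness part is immediate: apply the hypothesis with $\gamma/n$ in place of $\gamma$ to get $|h_\theta|\le C_{h,\gamma/n}V^{\gamma/n}$, hence $|h_\theta^n|\le C_{h,\gamma/n}^n V^\gamma$, so $C_{h^n,\gamma}=C_{h,\gamma/n}^n$.

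For continuity in the state, take $d(\Lambda,\Lambda')<\Delta'$ and write
\begin{equation*}
    |h_\theta^n(\Lambda)-h_\theta^n(\Lambda')|
    \le |h_\theta(\Lambda)-h_\theta(\Lambda')|\sum_{k=0}^{n-1}|h_\theta(\Lambda)|^k|h_\theta(\Lambda')|^{n-1-k} .
\end{equation*}
Choose auxiliary exponents $\kappa'=\kappa/2$ and $\gamma'=\kappa/(2n)$, so that the first factor is at most $L_{h,\kappa',\alpha}V^{\kappa/2}(\Lambda)[d(\Lambda,\Lambda')]^\alpha$ for $d<\Delta_{h,\kappa',\alpha}$, and the sum is at most $nC_{h,\gamma'}^{n-1}\max\{V(\Lambda),V(\Lambda')\}^{(n-1)\kappa/(2n)}$. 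The only nontrivial point is replacing $V(\Lambda')$ by $V(\Lambda)$. For this I invoke Item~\ref{item_assumption_simultaneous_properties_4} of Assumption~\ref{assumption_simultaneous_properties}: restricting to $d(\Lambda,\Lambda')\le 1$ gives $V(\Lambda')\le L_{\ln V}V(\Lambda)$. Setting $\Delta_{h^n,\kappa,\alpha}=\min\{\Delta_{h,\kappa',\alpha},1\}$ then yields the bound $L_{h^n,\kappa,\alpha}V^\kappa(\Lambda)[d(\Lambda,\Lambda')]^\alpha$, since $\kappa/2+(n-1)\kappa/(2n)<\kappa$ and $V\ge1$.

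Continuity in the parameter is handled identically at a fixed $\Lambda$:
\begin{equation*}
    |h_\theta^n(\Lambda)-h_{\theta'}^n(\Lambda)|\le \tilde L_{h,\tilde\kappa/2,\tilde\alpha}V^{\tilde\kappa/2}(\Lambda)[d(\theta,\theta')]^{\tilde\alpha}\cdot nC_{h,\tilde\kappa/(2n)}^{n-1}V^{(n-1)\tilde\kappa/(2n)}(\Lambda),
\end{equation*}
valid for $d(\theta,\theta')<\tilde\Delta_{h,\tilde\kappa/2,\tilde\alpha}$. Here no comparison of $V$ at different states is needed.

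The main (mild) obstacle is the state step, specifically controlling $V(\Lambda')$ by $V(\Lambda)$. It is resolved by shrinking $\Delta$ to at most $1$ and using the log-Lipschitz regularity of $V$. Everything else is bookkeeping of exponents, made possible because the hypothesis holds for all exponents, so the powers of $V$ can be split as $\kappa/2$ plus pieces of total size below $\kappa/2$.
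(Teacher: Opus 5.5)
Your proposal is correct and follows the paper's own argument: the same factorization $a^n-b^n=(a-b)\sum_{k=0}^{n-1}a^k b^{n-1-k}$, after which the paper simply states that the Hölder continuity transfers ``with suitably modified constants.'' Your exponent splitting ($\gamma/n$ for the bound; $\kappa/2$ and $\kappa/(2n)$, or their tilde versions, for the two continuity conditions) and your use of Item~\ref{item_assumption_simultaneous_properties_4} of Assumption~\ref{assumption_simultaneous_properties} to replace $V(\Lambda')$ by $V(\Lambda)$ supply exactly the details the paper leaves implicit.
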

\begin{proof}
    Fix $\gamma \in (0,1]$.
    Since $|h_\theta| \leq C_{h,\gamma} V^\gamma$, it follows that $|h_\theta^n| \leq C_{h,\gamma}^n V^{n\gamma}$.
    Moreover, for any $\Lambda,\Lambda^\prime$ and $\theta,\theta^\prime$, we have
    \begin{align*}
        \left| (h_\theta^n)(\Lambda) - (h_{\theta^\prime}^n)(\Lambda^\prime) \right|
        & \leq \left| h_\theta(\Lambda) - h_{\theta^\prime}(\Lambda^\prime) \right|
        \left[
            \sum_{k=0}^{n-1} |h_\theta|^k(\Lambda)|h_{\theta^\prime}|^{n-1-k}(\Lambda^\prime)
        \right] \\
        & \leq n C_{h,\gamma}^{n-1} \left| h_\theta(\Lambda) - h_{\theta^\prime}(\Lambda^\prime) \right| \max\{V^\gamma(\Lambda), V^\gamma(\Lambda^\prime)\}^{n-1} .
    \end{align*}

    Therefore, the joint local Hölder continuity of $\{h_\theta\}_{\theta \in \Theta}$ immediately transfers to the joint local Hölder continuity of $\{h_\theta^n\}_{\theta \in \Theta}$ with suitably modified constants, completing the proof.
\end{proof}

\subsection{Lemmas for LLN}
\label{subsec_lemmas_LLN}

\begin{assumption}
    \label{assumption_theta_difference_average_convergence}

    For some $\alpha_0>0$,
    \begin{equation*}
        \frac{1}{N} \sum_{n=0}^{N-1} d^{\alpha_0}(\theta_n, \theta_{n+1}) \mathbb{I}(d(\theta_n, \theta_{n+1}) < 1) \xrightarrow{\mathbb{P}} 0
        \quad \text{and} \quad
        \frac{1}{N} \sum_{n=0}^{N-1}
        \mathbb{I} \left( d(\theta_n,\theta_{n+1}) \geq 1 \right) \xrightarrow{\mathbb{P}} 0 ,
    \end{equation*}
\end{assumption}

\begin{assumption}
    \label{assumption_theta_difference_average_convergence_large_deviation}

    Building upon Assumption \ref{assumption_theta_difference_average_convergence}, suppose there exists some $q \in (0,1]$ such that, for any $\delta > 0$, there exists a constant $c_\delta > 0$ satisfying
    \begin{align*}
        & P\left( \frac{1}{N} \sum_{n=0}^{N-1} d^{\alpha_0}(\theta_n, \theta_{n+1}) \mathbb{I}(d(\theta_n, \theta_{n+1}) < 1) > \delta
        \right) < c_\delta N^{-q} , \\
        & P\left( \frac{1}{N} \sum_{n=0}^{N-1}
        \mathbb{I} \left( d(\theta_n,\theta_{n+1}) \geq 1 \right) > \delta
        \right) < c_\delta N^{-q} .
    \end{align*}
\end{assumption}

\begin{lemma}
    \label{lemma_convergence_average_enter}
    
    Suppose that Assumption \ref{assumption_simultaneous_properties} holds.
    Assume that for any $\alpha,\tilde{\alpha},\kappa,\tilde{\kappa} \in (0,1)$ and $\gamma \in (0,1]$, the family of functions $\{F_\theta\}_{\theta \in \Theta}$ is $((\Delta_{F, \kappa, \alpha}, L_{F, \kappa, \alpha} V^\kappa, \alpha), (\tilde{\Delta}_{F, \tilde{\kappa}, \tilde{\alpha}}, \tilde{L}_{F, \tilde{\kappa}, \tilde{\alpha}} V^{\tilde{\kappa}}, \tilde{\alpha}))$-joint locally Hölder continuous, and that $|F_\theta| \leq C_{F,\gamma} V^\gamma$.
    
    If the step sizes of the allocation parameter sequence $\{\theta_n\}$ satisfy Assumption \ref{assumption_theta_difference_average_convergence}, then the limit holds that
    \begin{equation*}
        \frac{1}{N} \sum_{n=0}^{N-1} F_{\theta_n}(\Lambda_n)
        -\frac{1}{N} \sum_{n=0}^{N-1} \pi_{\theta_n}F_{\theta_n}
        \xrightarrow{\mathbb{P}} 0 .
    \end{equation*}

    In addition, under Assumption \ref{assumption_theta_difference_average_convergence_large_deviation} on the step sizes of the allocation parameter sequence $\{\theta_n\}$, for any $\delta>0$, there exists a constant $c_\delta > 0$ such that
    \begin{equation*}
        P \left( \left| \frac{1}{N} \sum_{n=0}^{N-1} F_{\theta_n}(\Lambda_n)
        -\frac{1}{N} \sum_{n=0}^{N-1} \pi_{\theta_n}F_{\theta_n} \right|
        > \delta \right) < c_\delta N^{-q} ,
    \end{equation*}
    with $q \in (0,1]$ defined in Assumption \ref{assumption_theta_difference_average_convergence_large_deviation}.
\end{lemma}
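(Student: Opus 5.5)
The plan is a Poisson-equation decomposition, the same one used in Subsection \ref{subsec_proof_strategy_LLN_CLT}, but now for a general family $\{F_\theta\}$. By Item \ref{item_assumption_simultaneous_properties_1} of Assumption \ref{assumption_simultaneous_properties} and the bound $|F_\theta|\le C_{F,\gamma}V^\gamma$, the series $\hat F_\theta=\sum_{k\ge0}(P_\theta^k-\pi_\theta)F_\theta$ converges and solves $\hat F_\theta-P_\theta\hat F_\theta=F_\theta-\pi_\theta F_\theta$. Corollary \ref{corollary_V_Poisson_solution_robust_lipschitz_continuous_bounded} gives two further facts, for any $\kappa,\tilde\alpha\in(0,1)$ and $\gamma\in(0,1]$:
\begin{itemize}
\item $|\hat F_\theta|\le C_{\hat F,\gamma}V^\gamma$;
\item $|\hat F_\theta(\Lambda)-\hat F_{\theta'}(\Lambda)|\le \tilde L V^{\kappa}(\Lambda)d(\theta,\theta')^{\tilde\alpha}$ whenever $d(\theta,\theta')<\tilde\Delta$.
\end{itemize}

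Using Lemma \ref{lemma_transition_kernel}, I would write
\[
\sum_{n=0}^{N-1}\bigl[F_{\theta_n}(\Lambda_n)-\pi_{\theta_n}F_{\theta_n}\bigr]=M_N+R_N^{(1)}+R_N^{(2)},
\]
where
\begin{itemize}
\item $M_N=\sum_{n=0}^{N-1}[\hat F_{\theta_n}(\Lambda_{n+1})-P_{\theta_n}\hat F_{\theta_n}(\Lambda_n)]$ is a martingale;
\item $R_N^{(1)}=\sum_{n=0}^{N-1}[\hat F_{\theta_{n+1}}(\Lambda_{n+1})-\hat F_{\theta_n}(\Lambda_{n+1})]$;
\item $R_N^{(2)}=\hat F_{\theta_0}(\Lambda_0)-\hat F_{\theta_N}(\Lambda_N)$.
\end{itemize}
Each of the three terms has to be shown to be $o_P(N)$.

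\textbf{The martingale term.} Take $\gamma=1/2$. The increments are bounded by $C(V^{1/2}(\Lambda_{n+1})+V^{1/2}(\Lambda_n))$. The drift condition gives $\sup_n\mathbb{E}V(\Lambda_n)<\infty$, so the second moments are uniformly bounded. Orthogonality of martingale increments then gives $\mathbb{E}M_N^2=O(N)$. Chebyshev yields $P(|M_N|>\delta N)=O(N^{-1})$, which is both $o(1)$ and $O(N^{-q})$ for any $q\le1$. If a sharper rate is wanted, a Burkholder inequality with higher moments of $V^\gamma$ (small $\gamma$) is available.

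\textbf{The boundary term.} $\mathbb{E}|R_N^{(2)}|\le C\sup_n\mathbb{E}V^\gamma(\Lambda_n)=O(1)$, so Markov's inequality gives $P(|R_N^{(2)}|>\delta N)=O(N^{-1})$.

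\textbf{The remainder $R_N^{(1)}$.} This is the main obstacle. I would split each summand according to whether $d(\theta_n,\theta_{n+1})<\tilde\Delta$, with $\tilde\Delta\le1$ taken from the Hölder continuity in $\theta$.
\begin{itemize}
\item On $\{d\ge\tilde\Delta\}$, bound the summand crudely by $2C V^\gamma(\Lambda_{n+1})$.
\item On $\{d<\tilde\Delta\}$, use the Hölder bound $\tilde L V^\kappa(\Lambda_{n+1})d(\theta_n,\theta_{n+1})^{\tilde\alpha}$.
\end{itemize}
Fix the exponents so they match Assumption \ref{assumption_theta_difference_average_convergence}: $\tilde\alpha=\alpha_0(1-\kappa)$ when $\alpha_0<1$; otherwise use $d^{\tilde\alpha}\le d^{\alpha_0(1-\kappa)}$ on $d<1$ after shrinking exponents. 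Hölder's inequality with exponents $1/\kappa$ and $1/(1-\kappa)$ then gives
\[
\frac1N|R_N^{(1)}|\le C\Bigl(\frac1N\sum V(\Lambda_{n+1})\Bigr)^{\kappa}\Bigl[\Bigl(\frac1N\sum d^{\alpha_0}\mathbb{I}(d<1)\Bigr)^{1-\kappa}+\Bigl(\frac1N\sum\mathbb{I}(d\ge\tilde\Delta)\Bigr)^{1-\kappa}\Bigr].
\]
The same Hölder step applies to the crude bound $V^\gamma$ by choosing $\gamma=\kappa$. The indicator $\mathbb{I}(\tilde\Delta\le d<1)$ is dominated by $\tilde\Delta^{-\alpha_0}d^{\alpha_0}\mathbb{I}(d<1)$, so both bracketed terms reduce to the two quantities in Assumption \ref{assumption_theta_difference_average_convergence}.

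The first factor is $O_P(1)$ because $\sup_n\mathbb{E}V(\Lambda_n)<\infty$. Markov's inequality gives $P\bigl(\frac1N\sum V>K\bigr)\le C/K$, which is not $O(N^{-q})$. For the large-deviation part I would therefore replace $V$ by $V^{\kappa'}$ with small $\kappa'$ and bound $\frac1N\sum V^{\kappa'}(\Lambda_n)$ through its own martingale decomposition with $h=V^{\kappa'}$, applying the same second-moment argument. Alternatively, take $K=N^{q}$ and absorb the factor $K^{\kappa}$ by shrinking $\kappa$. The bracketed factors tend to $0$ in probability, or fail to with probability at most $c_\delta N^{-q}$ under Assumption \ref{assumption_theta_difference_average_convergence_large_deviation}. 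Combining the three terms proves both claims.
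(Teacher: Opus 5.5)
Your proposal is essentially the paper's proof. It uses the same Poisson-equation decomposition into a martingale, a parameter-change remainder and a boundary term, splits the remainder on the size of $d(\theta_n,\theta_{n+1})$, and pairs it by Hölder's inequality with the two averages in Assumption \ref{assumption_theta_difference_average_convergence}. Your $L^2$/Chebyshev bound on the martingale gives $O(N^{-1})$ for both claims and suffices because $q\le 1$; it is a simpler substitute for the paper's use of Theorem 2.18 of Hall--Heyde and Lemma \ref{lemma_martingale_large_deviation}.

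One of your two options for the large-deviation part of $R_N^{(1)}$ fails. Taking $K=N^{q}$ and absorbing the factor $K^{\kappa}=N^{q\kappa}$ would require a bound on $P\bigl(\frac1N\sum_n d^{\alpha_0}(\theta_n,\theta_{n+1})\mathbb{I}(d(\theta_n,\theta_{n+1})<1)>c\,N^{-q\kappa/(1-\kappa)}\bigr)$, a threshold that shrinks with $N$. Assumption \ref{assumption_theta_difference_average_convergence_large_deviation} only controls this probability for each fixed $\delta$, with $c_\delta$ allowed to blow up as $\delta\to0$. It gives no rate at which the average tends to zero, so no fixed $\kappa>0$ repairs this.

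Your first option is the one needed, and it is exactly the paper's route. Pair the weight $V^{\kappa}$ with $\frac1N\sum_n V^{1/2}(\Lambda_{n+1})$; the paper takes $\tilde\kappa=\alpha_0/(2(\alpha_0+1))$ and $\tilde\alpha=\alpha_0/(\alpha_0+1)$. Then use Lemma \ref{lemma_V_large_deviation} to get $P\bigl(\sum_{n\le N}V^{1/2}(\Lambda_n)>Nx\bigr)\le c/N$ for a fixed $x$. Obtain this concentration from the martingale differences $V^{1/2}(\Lambda_n)-\mathbb{E}[V^{1/2}(\Lambda_n)\mid\mathcal{F}_{n-1}]$ together with the uniform drift $P_\theta V^{1/2}\le\beta_{1/2}V^{1/2}+b_{1/2}$. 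Do not use a Poisson equation for $V^{1/2}$: its solution depends on $\theta$ and would create another parameter-change remainder.
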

\begin{proof}[Proof of Lemma \ref{lemma_convergence_average_enter}]
    By Item \ref{item_assumption_simultaneous_properties_2} of Assumption \ref{assumption_simultaneous_properties}, Lemma \ref{lemma_convergence_V_as} implies that $V(\Lambda_n) = O_P(1)$, $\frac{1}{N} \sum_{n=1}^N V(\Lambda_n) = O_P(1)$, and for any $a \in (0,1)$, $\sum_{n=1}^\infty n^{-1/a} V(\Lambda_n) < \infty$ almost surely.
    These properties will play an essential role in the subsequent analysis.

    For any $\theta \in \Theta$, define the function
    \begin{equation*}
        \hat{F}_\theta := \sum_{n \geq 0} P_\theta^n\left\{ F_\theta - \pi_\theta F_\theta \right\} .
    \end{equation*}
    It is well-defined under Item \ref{item_assumption_simultaneous_properties_1} of Assumption \ref{assumption_simultaneous_properties}.

    We decompose the target expression into a sum:
    \begin{equation*}
        \frac{1}{N} \sum_{n=0}^{N-1} F_{\theta_n}(\Lambda_n)
        -\frac{1}{N} \sum_{n=0}^{N-1} \pi_{\theta_n}F_{\theta_n}
        = T_{N,1} + T_{N,2} + T_{N,3} ,
    \end{equation*}
    where
    \begin{align*}
        T_{N,1} &:= \frac{1}{N} \sum_{n=0}^{N-2} \left[ \hat{F}_{\theta_n}(\Lambda_{n+1})
        -(P_{\theta_n}\hat{F}_{\theta_n})(\Lambda_n) \right] , \\
        T_{N,2} &:= \frac{1}{N} \sum_{n=0}^{N-2} \left[ \hat{F}_{\theta_{n+1}}(\Lambda_{n+1})
        -\hat{F}_{\theta_n}(\Lambda_{n+1}) \right] , \\
        T_{N,3} &:= \frac{1}{N} \left[ \hat{F}_{\theta_0}(\Lambda_0) - \hat{F}_{\theta_{N-1}}(\Lambda_{N-1})
        + F_{\theta_{N-1}}(\Lambda_{N-1}) - \pi_{\theta_{N-1}}F_{\theta_{N-1}} \right] .
    \end{align*}
    We show each term converges to zero in probability.

    First, $\{T_{N,1}\}_{N \in \mathbb{N}^*}$ is a martingale.
    To apply a law of large numbers for martingales, we bound its conditional moments.
    We use the Jensen's inequality and the conclusion in Corollary \ref{corollary_V_Poisson_solution_robust_lipschitz_continuous_bounded} that for any $a \in (0,1)$,
    \begin{align}
        & \quad \mathbb{E} \left[ \left| \hat{F}_{\theta_n}(\Lambda_{n+1})-(P_{\theta_n}\hat{F}_{\theta_n})(\Lambda_n) \right|^{1/a} \mid \mathcal{F}_n \right] \label{proof_lemma_convergence_average_enter_1} \\
        & \leq 2^{1/a-1} \mathbb{E} \left[ \left| \hat{F}_{\theta_n}(\Lambda_{n+1}) \right|^{1/a} 
        + \left| (P_{\theta_n}\hat{F}_{\theta_n})(\Lambda_n) \right|^{1/a} \mid \mathcal{F}_n \right] \notag \\
        & \leq 2^{1/a-1} \mathbb{E} \left[ \left| \hat{F}_{\theta_n} \right|^{1/a}(\Lambda_{n+1}) 
        + \left( P_{\theta_n}\left|\hat{F}_{\theta_n}\right|^{1/a} \right)(\Lambda_n) \mid \mathcal{F}_n \right] \notag \\
        &= 2^{1/a} \left( P_{\theta_n}\left|\hat{F}_{\theta_n}\right|^{1/a} \right)(\Lambda_n)
        \leq 2^{1/a} \left[ P_{\theta_n} \left( C_{\hat{F}, a} V^a \right)^{1/a} \right] (\Lambda_n) \notag \\
        & \leq 2^{1/a} C_{\hat{F}, a}^{1/a} (\beta+b) V(\Lambda_n) \notag .
    \end{align}

    By the assumption $\sum_{n=1}^\infty n^{-1/a} V(\Lambda_n) < \infty$ almost surely, we have
    \begin{equation*}
        \sum_{n=1}^\infty n^{-1/a}
        \mathbb{E} \left[ \left\{\left[ \hat{F}_{\theta_n}(\Lambda_{n+1})-(P_{\theta_n}\hat{F}_{\theta_n})(\Lambda_n) \right] \right\}^{1/a} \mid \mathcal{F}_n \right] < \infty .
    \end{equation*}
    By Theorem 2.18 in \cite{hallMartingaleLimitTheory1980}, it holds that
    \begin{equation*}
        T_{N,1}
        = \frac{1}{N} \sum_{n=0}^{N-2} \left[ \hat{F}_{\theta_n}(\Lambda_{n+1})
        -(P_{\theta_n}\hat{F}_{\theta_n})(\Lambda_n) \right] \rightarrow 0 \quad \text{a.s.}
    \end{equation*}

    Next, we prove that $T_{N,2} \xrightarrow{\mathbb{P}} 0$.
    Based on Corollary \ref{corollary_V_Poisson_solution_robust_lipschitz_continuous_bounded}, for any $\tilde{\kappa} \in (0,1)$ and $\tilde{\alpha} \in (0,1)$, there exists some positive constants $C_{\hat{F},\tilde{\kappa}}$, $\tilde{L}_{\hat{F}, \tilde{\kappa}, \tilde{\alpha}}$ and $\tilde{\Delta}_{\hat{F}, \tilde{\kappa}, \tilde{\alpha}}$ such that
    \begin{equation*}
        |\hat{F}_\theta(\Lambda)| \leq C_{\hat{F},\tilde{\kappa}} V^{\tilde{\kappa}}(\Lambda) ,
    \end{equation*}
    and when $d(\theta,\theta^\prime) < \tilde{\Delta}_{\hat{F}, \tilde{\kappa}, \tilde{\alpha}}$,
    \begin{equation*}
        \left| \hat{F}_\theta(\Lambda) - \hat{F}_{\theta^\prime}(\Lambda) \right|
        \leq \tilde{L}_{\hat{F}, \tilde{\kappa}, \tilde{\alpha}} V^{\tilde{\kappa}}(\Lambda) [d(\theta,\theta^\prime)]^{\tilde{\alpha}} .
    \end{equation*}

    Thus, by taking $\tilde{\kappa} = \frac{\alpha_0}{2(\alpha_0+1)} \in (0,1)$ and $\tilde{\alpha} = \frac{\alpha_0}{\alpha_0+1} \in (0,1)$, we have
    \begin{align*}
        & \quad \left| \frac{1}{N} \sum_{n=0}^{N-2} \left[ \hat{F}_{\theta_{n+1}}(\Lambda_{n+1})
        -\hat{F}_{\theta_n}(\Lambda_{n+1}) \right] \right|
        \leq \frac{1}{N} \sum_{n=0}^{N-2} \left|
        \hat{F}_{\theta_{n+1}}(\Lambda_{n+1})
        -\hat{F}_{\theta_n}(\Lambda_{n+1}) \right| \\
        & \leq \frac{1}{N} \sum_{n=0}^{N-2} \left\{
        \left| \hat{F}_{\theta_{n+1}}(\Lambda_{n+1})
        -\hat{F}_{\theta_n}(\Lambda_{n+1}) \right|
        \mathbb{I} \left( d(\theta_n,\theta_{n+1}) < \min \{ 1, \tilde{\Delta}_{\hat{F}, \frac{\alpha_0}{2(\alpha_0+1)}, \frac{\alpha_0}{\alpha_0+1}} \} \right) \right. \\
        & \quad \left. + \left[ \left| \hat{F}_{\theta_{n+1}}(\Lambda_{n+1}) \right|
        + \left| \hat{F}_{\theta_n}(\Lambda_{n+1}) \right| \right]
        \mathbb{I} \left( d(\theta_n,\theta_{n+1}) \geq \min \{ 1, \tilde{\Delta}_{\hat{F}, \frac{\alpha_0}{2(\alpha_0+1)}, \frac{\alpha_0}{\alpha_0+1}} \} \right) \right\} \\
        & \leq \frac{1}{N} \sum_{n=0}^{N-2}
        \tilde{L}_{\hat{F}, \frac{\alpha_0}{2(\alpha_0+1)}, \frac{\alpha_0}{\alpha_0+1}} V^{\frac{\alpha_0}{2(\alpha_0+1)}}(\Lambda_{n+1}) [d(\theta_n,\theta_{n+1})]^{\frac{\alpha_0}{\alpha_0+1}}
        \mathbb{I} \left( d(\theta_n,\theta_{n+1}) < \min \{ 1, \tilde{\Delta}_{\hat{F}, \frac{\alpha_0}{2(\alpha_0+1)}, \frac{\alpha_0}{\alpha_0+1}} \} \right) \\
        & \quad + \frac{1}{N} \sum_{n=0}^{N-2}
        2 C_{\hat{F},\frac{\alpha_0}{2(\alpha_0+1)}} V^{\frac{\alpha_0}{2(\alpha_0+1)}}(\Lambda_{n+1})
        \mathbb{I} \left( d(\theta_n,\theta_{n+1}) \geq \min \{ 1, \tilde{\Delta}_{\hat{F}, \frac{\alpha_0}{2(\alpha_0+1)}, \frac{\alpha_0}{\alpha_0+1}} \} \right) .
    \end{align*}

    Because
    \begin{align}
        & \quad \frac{1}{N} \sum_{n=0}^{N-2}
        \tilde{L}_{\hat{F}, \frac{\alpha_0}{2(\alpha_0+1)}, \frac{\alpha_0}{\alpha_0+1}} V^{\frac{\alpha_0}{2(\alpha_0+1)}}(\Lambda_{n+1}) [d(\theta_n,\theta_{n+1})]^{\frac{\alpha_0}{\alpha_0+1}}
        \mathbb{I} \left( d(\theta_n,\theta_{n+1}) < \min \{ 1, \tilde{\Delta}_{\hat{F}, \frac{\alpha_0}{2(\alpha_0+1)}, \frac{\alpha_0}{\alpha_0+1}} \} \right) \label{proof_lemma_convergence_average_enter_2} \\
        & \leq \tilde{L}_{\hat{F}, \frac{\alpha_0}{2(\alpha_0+1)}, \frac{\alpha_0}{\alpha_0+1}}
        \left[ \frac{1}{N} \sum_{n=0}^{N-2} V^{\frac{1}{2}}(\Lambda_{n+1}) \right]^{\frac{\alpha_0}{\alpha_0+1}}
        \left[ \frac{1}{N} \sum_{n=0}^{N-2} [d(\theta_n,\theta_{n+1})]^{\alpha_0} \mathbb{I} \left( d(\theta_n,\theta_{n+1}) < 1 \right) \right]^{\frac{1}{\alpha_0+1}} \notag \\
        & \xrightarrow{\mathbb{P}} 0 \notag
    \end{align}
    and
    \begin{align}
        & \quad \frac{1}{N} \sum_{n=0}^{N-2}
        2 C_{\hat{F},\frac{\alpha_0}{2(\alpha_0+1)}} V^{\frac{\alpha_0}{2(\alpha_0+1)}}(\Lambda_{n+1})
        \mathbb{I} \left( d(\theta_n,\theta_{n+1}) \geq \min \{ 1, \tilde{\Delta}_{\hat{F}, \frac{\alpha_0}{2(\alpha_0+1)}, \frac{\alpha_0}{\alpha_0+1}} \} \right) \label{proof_lemma_convergence_average_enter_3} \\
        & \leq 2 C_{\hat{F},\frac{\alpha_0}{2(\alpha_0+1)}} \left[ \frac{1}{N} \sum_{n=0}^{N-2} V^{\frac{1}{2}}(\Lambda_{n+1}) \right]^{\frac{\alpha_0}{\alpha_0+1}}
        \left[ \frac{1}{N} \sum_{n=0}^{N-2}
        \mathbb{I} \left( d(\theta_n,\theta_{n+1}) \geq \min \{ 1, \tilde{\Delta}_{\hat{F}, \frac{\alpha_0}{2(\alpha_0+1)}, \frac{\alpha_0}{\alpha_0+1}} \} \right) \right]^{\frac{1}{\alpha_0+1}} \notag \\
        & \leq 2 C_{\hat{F},\frac{\alpha_0}{2(\alpha_0+1)}} \left[ \frac{1}{N} \sum_{n=0}^{N-2} V^{\frac{1}{2}}(\Lambda_{n+1}) \right]^{\frac{\alpha_0}{\alpha_0+1}} \notag \\
        & \quad \left[ \frac{1}{N} \sum_{n=0}^{N-2}
        \mathbb{I} \left( d(\theta_n,\theta_{n+1}) \geq 1 \right)
        +  \frac{1}{N} \sum_{n=0}^{N-2} \left[ \frac{ d(\theta_n,\theta_{n+1})}{\tilde{\Delta}_{\hat{F}, \frac{\alpha_0}{2(\alpha_0+1)}, \frac{\alpha_0}{\alpha_0+1}}} \right]^{\alpha_0}
        \mathbb{I} \left( d(\theta_n,\theta_{n+1}) < 1 \right)
        \right]^{\frac{1}{\alpha_0+1}} \notag \\
        & \xrightarrow{\mathbb{P}} 0 \notag ,
    \end{align}
    it follows that
    \begin{equation*}
        |T_{N,2}|
        = \left| \frac{1}{N} \sum_{n=0}^{N-2} \left[ \hat{F}_{\theta_{n+1}}(\Lambda_{n+1})
        -\hat{F}_{\theta_n}(\Lambda_{n+1}) \right] \right|
        \xrightarrow{\mathbb{P}} 0 .
    \end{equation*}
    
    Finally, for the term $T_{N,3}$, the assumption in Lemma \ref{lemma_convergence_average_enter} implies that
    \begin{equation*}
        \left| F_{\theta_{N-1}}(\Lambda_{N-1}) \right|
        \leq C_{F,a} V^a(\Lambda_{N-1}) ,
    \end{equation*}
    Corollary \ref{corollary_V_Poisson_solution_robust_lipschitz_continuous_bounded} implies that
    \begin{equation*}
        \left| \hat{F}_{\theta_{N-1}}(\Lambda_{N-1}) \right|
        \leq C_{\hat{F},a} V^a(\Lambda_{N-1}) ,
    \end{equation*}
    and Assumption \ref{assumption_simultaneous_properties} implies that
    \begin{equation*}
        \left| \pi_{\theta_{N-1}}F_{\theta_{N-1}} \right|
        \leq C_{F,1} \pi_\theta V
        \leq C_{F,1} \frac{b}{1-\beta} .
    \end{equation*}
    Therefore, we can show that $T_{N,3} \xrightarrow{\mathbb{P}} 0$ from the condition that $V(\Lambda_n)$ is $O_P(1)$.

    The convergence of all three terms completes the proof of the first part.

    For the second part, regarding the large deviation of $\frac{1}{N} \sum_{n=0}^{N-1} F_{\theta_n}(\Lambda_n)-\frac{1}{N} \sum_{n=0}^{N-1} \pi_{\theta_n}F_{\theta_n}$, we proceed to establish the large deviation bounds for each component term in its decomposition, namely $T_{N,1}$, $T_{N,2}$, and $T_{N,3}$.
    \begin{enumerate}
        \item Because \eqref{proof_lemma_convergence_average_enter_1} holds and $\mathbb{E}[V(\Lambda_n)]$ is uniformly bounded, each increment of the martingale $T_{N,1}$ has a bounded $L^{p_1}$ norm for any $p_1 \geq 2$.
        By Lemma \ref{lemma_martingale_large_deviation}, it follows that for any $p_1 \geq 2$, $P(|T_{N,1}| > \delta) = O(N^{-p_1/2})$.
        \item Lemma \ref{lemma_V_large_deviation} with $p=2$ implies that for some $x>0$ and $c>0$, it holds that
        \begin{equation*}
            P\left( \sum_{n=1}^N V^{\frac{1}{2}}(\Lambda_n) > N x \right)
            \leq \frac{c}{N} .
        \end{equation*}
        Combining with Assumption \ref{assumption_theta_difference_average_convergence_large_deviation}, we can conclude that \eqref{proof_lemma_convergence_average_enter_2} and \eqref{proof_lemma_convergence_average_enter_3} admit a large deviation bound of order $O(N^{-q})$, with $q \in (0,1]$ defined in Assumption \ref{assumption_theta_difference_average_convergence_large_deviation}. Thus, we can conclude that $P(|T_{N,2}| > \delta) = O(N^{-q})$.
        \item Because $N |T_{N,3}|$ is bounded by
        \begin{equation*}
            C_{\hat{F},a} V^a(\Lambda_0) + (C_{F,a}+C_{\hat{F},a}) V^a(\Lambda_{N-1})+C_{F,1} \frac{b}{1-\beta} ,
        \end{equation*}
        which has a bounded expectation for $N \geq 1$.
        Therefore, by Markov's inequality, $P(|T_{N,3}| > \delta) = O(\frac{1}{N})$.
    \end{enumerate}
    Hence, all the component terms in the decomposition exhibit polynomially decaying large deviations, implying that the total deviation is of order $O(N^{-q})$, with $q \in (0,1]$ defined in Assumption \ref{assumption_theta_difference_average_convergence_large_deviation}.
\end{proof}

\begin{lemma}
    \label{lemma_allocation_convergence}

    Suppose that $\mathbb{E}_{(X,Y(t)) \sim \Gamma_{X,Y(t)}} \left[ \left[ m(X, Y(t), t) \right]^2 \right] < \infty$ for $t \in \{0, 1\}$, and that the allocation function $g_\theta$ satisfies $\pi_\theta \left[ g_\theta(\cdot, x) \right] = \rho_\theta(x)$ for $\Gamma$-a.e. $x$.
    Under Assumptions \ref{assumption_theta_compact}, \ref{assumption_lipschitz_continuity_functions_kernels} and \ref{assumption_simultaneous_properties}, if the step sizes of the allocation parameter sequence $\{\theta_n\}$ satisfy Assumption \ref{assumption_theta_difference_average_convergence}, then
    \begin{align*}
        & \frac{M_n}{n}
        = \frac{1}{n} \sum_{i=1}^n
        \left[ \frac{\rho^{\mathrm{ref}}(T_i \mid X_i)}{\rho_{\theta_{i-1}}(T_i \mid X_i)} m(X_i, Y_i(T_i), T_i) \right] \\
        & \quad \xrightarrow{\mathbb{P}} M = \mathbb{E}_{(X,Y(1),Y(0)) \sim \Gamma_{X,\bm{Y}}} \left[ \rho^{\mathrm{ref}}(X) m(X, Y(1), 1) + [1 - \rho^{\mathrm{ref}}(X)] m(X, Y(0), 0) \right] .
    \end{align*}

    In addition, under Assumption \ref{assumption_theta_difference_average_convergence_large_deviation} on the step sizes of the allocation parameter sequence $\{\theta_n\}$, for any $\delta>0$, there exists a constant $c_\delta > 0$ such that
    \begin{equation*}
        P \left( \left| \frac{M_n}{n} - M \right|
        > \delta \right) < c_\delta n^{-q} ,
    \end{equation*}
    with $q \in (0,1]$ defined in Assumption \ref{assumption_theta_difference_average_convergence_large_deviation}.
\end{lemma}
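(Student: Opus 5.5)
The plan is to write each summand as its conditional mean plus a martingale difference, and to handle the conditional means with Lemma \ref{lemma_convergence_average_enter}, the same way the proof of Theorem \ref{theorem_additional_covariate_LLN} is organized. First, since $T_i\in\{0,1\}$,
\begin{equation*}
    \frac{\rho^{\mathrm{ref}}(T_i \mid X_i)}{\rho_{\theta_{i-1}}(T_i \mid X_i)} m(X_i, Y_i(T_i), T_i)
    = \frac{T_i\rho^{\mathrm{ref}}(X_i)}{\rho_{\theta_{i-1}}(X_i)} m(X_i,Y_i(1),1) + \frac{(1-T_i)(1-\rho^{\mathrm{ref}}(X_i))}{1-\rho_{\theta_{i-1}}(X_i)} m(X_i,Y_i(0),0) .
\end{equation*}
Let $f_t(x)=\mathbb{E}[m(X,Y(t),t)\mid X=x]$. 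Conditioning on $\mathcal{F}_{i-1}$ and using Assumptions \ref{assumption_sampling}, \ref{assumption_filtration} and \ref{assumption_iid}, the conditional mean of the $i$th summand is $F_{\theta_{i-1}}(\Lambda_{i-1})$, where
\begin{equation*}
    F_\theta(\Lambda)=\mathbb{E}_{X\sim\Gamma}\left[\frac{g_\theta(\Lambda,X)\rho^{\mathrm{ref}}(X)}{\rho_\theta(X)}f_1(X)+\frac{(1-g_\theta(\Lambda,X))(1-\rho^{\mathrm{ref}}(X))}{1-\rho_\theta(X)}f_0(X)\right].
\end{equation*}
We then decompose $M_n/n-M$ into three pieces. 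The first is a martingale average $\frac1n\sum_i \Delta_i$. The second is $\frac1n\sum_{i}[F_{\theta_{i-1}}(\Lambda_{i-1})-\pi_{\theta_{i-1}}F_{\theta_{i-1}}]$. The third is $\frac1n\sum_i\pi_{\theta_{i-1}}F_{\theta_{i-1}}-M$.

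The third piece vanishes identically. Because $F_\theta$ is affine in $g_\theta(\Lambda,x)$, Fubini and the hypothesis $\pi_\theta[g_\theta(\cdot,x)]=\rho_\theta(x)$ give $\pi_\theta F_\theta=\mathbb{E}[\rho^{\mathrm{ref}}(X)f_1(X)+(1-\rho^{\mathrm{ref}}(X))f_0(X)]=M$ for every $\theta$. This is exactly where the targeted ratio, rather than the allocation probability, in the IPW denominator pays off.

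For the martingale piece, $|\Delta_i|\lesssim \iota_\rho^{-1}(|m(X_i,Y_i(1),1)|+|m(X_i,Y_i(0),0)|)$ up to its conditional mean, so its conditional second moments are uniformly bounded by the second-moment hypothesis. Theorem 2.18 of \cite{hallMartingaleLimitTheory1980} with exponent $2$ then gives almost sure convergence to $0$. For the large-deviation version, Chebyshev's inequality gives $P(|\frac1n\sum\Delta_i|>\delta)\le C/(n\delta^2)=O(n^{-q})$ since $q\le1$.

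The main step is the second piece, which requires verifying the hypotheses of Lemma \ref{lemma_convergence_average_enter} for $\{F_\theta\}$. We write $F_\theta(\Lambda)$ as a $\Lambda$-free term plus $\mathbb{E}[(g_\theta(\Lambda,X)-\rho_\theta(X))\{\rho^{\mathrm{ref}}f_1/\rho_\theta-(1-\rho^{\mathrm{ref}})f_0/(1-\rho_\theta)\}]$. This has exactly the form of $h_\theta$ in Lemma \ref{lemma_h_lipschitz_continuous_bounded}, with $Z^{(c)}=\rho^{\mathrm{ref}}m(\cdot,1)/\rho_\theta-(1-\rho^{\mathrm{ref}})m(\cdot,0)/(1-\rho_\theta)$. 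Lemma \ref{lemma_lipschitz_continuity_conditional_expectation}, using the second-moment condition and Assumption \ref{assumption_lipschitz_continuity_functions_kernels}, supplies Assumption \ref{assumption_lipschitz_continuity_conditional_expectation_relaxed}. Lemma \ref{lemma_h_lipschitz_continuous_bounded} then yields that $F_\theta$ is bounded and Lipschitz jointly in $(\theta,\Lambda)$, hence jointly locally Hölder for all exponents and bounded by $C V^\gamma$ because $V\ge1$. The $\Lambda$-free term is bounded and Lipschitz in $\theta$ by the same Cauchy--Schwarz argument. Lemma \ref{lemma_convergence_average_enter} then gives convergence in probability under Assumption \ref{assumption_theta_difference_average_convergence}, and the $O(n^{-q})$ tail under Assumption \ref{assumption_theta_difference_average_convergence_large_deviation}. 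Combining the three pieces by a union bound finishes both parts. The expected obstacle is the bookkeeping in this step, namely checking that the reweighted conditional means satisfy the Lipschitz requirement with only second moments of $m$; the bounds $\rho_\theta\in[\iota_\rho,1-\iota_\rho]$ make this routine.
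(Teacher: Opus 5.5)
Your proposal is correct and follows essentially the same route as the paper. You split $M_n$ into a martingale part, handled by Theorem 2.18 of \cite{hallMartingaleLimitTheory1980} with exponent $2$, plus the average of conditional means $F_{\theta_{i-1}}(\Lambda_{i-1})$; you use $\pi_\theta[g_\theta(\cdot,x)]=\rho_\theta(x)$ to get $\pi_\theta F_\theta = M$ for every $\theta$; and you apply Lemma \ref{lemma_convergence_average_enter} after checking joint Lipschitz continuity of $F_\theta$. The differences are cosmetic: the paper checks that Lipschitz property with a direct Cauchy--Schwarz bound on $g_\theta/\rho_\theta$ instead of Lemma \ref{lemma_h_lipschitz_continuous_bounded} (your $\Lambda$-free term is in fact exactly the constant $M$), and it uses Lemma \ref{lemma_martingale_large_deviation} with $p=2$ where you use Chebyshev, which gives the same $O(n^{-1})$ rate.
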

\begin{proof}
    Let us define the function $h_\theta(\Lambda)$ as the conditional expectation of a single term in $M_n$, given by
    \begin{equation*}
        h_\theta(\Lambda) =
        \mathbb{E}_{(X,Y(1),Y(0)) \sim \Gamma_{X,\bm{Y}}} \left[ \sum_{T=0}^{1} \frac{\rho^{\mathrm{ref}}(T \mid X) g_\theta(T \mid \Lambda, X)}
        {\rho_\theta(T \mid X)} m(X, Y(T), T) \right] .
    \end{equation*}
    By Assumption \ref{assumption_sampling}, $h_\theta(\Lambda)$ is uniformly bounded as follows:
    \begin{align*}
        h_\theta(\Lambda)
        & \leq \mathbb{E}_{(X,Y(1),Y(0)) \sim \Gamma_{X,\bm{Y}}} \left[
        \frac{ |m(X, Y(1), 1)| + |m(X, Y(0), 0)| }{\iota} \right] \\
        & \leq \frac{\mathbb{E}_{(X,Y(1)) \sim \Gamma_{X,Y(1)}} \left| m(X, Y(1), 1) \right|
        + \mathbb{E}_{(X,Y(0)) \sim \Gamma_{X,Y(0)}} \left| m(X, Y(0), 0) \right|}{\iota}
        < \infty .
    \end{align*}
    Let $C$ denote this finite upper bound.

    The proof proceeds by decomposing the term $M_n$ and establishing two key convergence results.
    Let us define an auxiliary sequence $M_n^\prime$ as
    \begin{equation*}
        M_n^\prime
        = \sum_{i=1}^n
        \left[ h_{\theta_{i-1}}(\Lambda_{i-1}) \right] .
    \end{equation*}
    The sequence $\{M_n - M_n^\prime\}$ forms a martingale with respect to the filtration $\{\mathcal{F}_i\}$, because for any $i \in \mathbb{N}^*$, the conditional expectation of each term satisfies
    \begin{equation*}
        \mathbb{E} \left[ \frac{\rho^{\mathrm{ref}}(T_i \mid X_i)}{\rho_{\theta_{i-1}}(T_i \mid X_i)} m(X_i, Y_i(T_i), T_i)
        - h_{\theta_{i-1}}(\Lambda_{i-1})
        \mid \mathcal{F}_{i-1} \right] = 0 .
    \end{equation*}

    Next, we will now show that
    \begin{enumerate}
        \item $\frac{M_n - M_n^\prime}{n} \rightarrow 0$ almost surely,
        \item $\frac{M_n^\prime}{n} \xrightarrow{\mathbb{P}} M = \mathbb{E}_{(X,Y(1),Y(0)) \sim \Gamma_{X,\bm{Y}}} \left[ \rho^{\mathrm{ref}}(X) m(X, Y(1), 1) + [1 - \rho^{\mathrm{ref}}(X)] m(X, Y(0), 0) \right]$.
    \end{enumerate}

    To prove the first claim, we invoke a strong law of large numbers for martingales.
    This requires verifying a moment condition on the martingale differences.
    Let the constant $a=1/2$.
    The conditional second moment of each difference term is bounded as
    \begin{align}
        & \quad \mathbb{E} \left[ \left| \frac{\rho^{\mathrm{ref}}(T_i \mid X_i)}{\rho_{\theta_{i-1}}(T_i \mid X_i)} m(X_i, Y_i(T_i), T_i)
        - h_{\theta_{i-1}}(\Lambda_{i-1}) \right|^{1/a}
        \mid \mathcal{F}_{i-1} \right] \label{eq_proof_lemma_allocation_convergence_1} \\
        & \leq 2^{1/a-1} \mathbb{E} \left[ \left| \frac{\rho^{\mathrm{ref}}(T_i \mid X_i)}{\rho_{\theta_{i-1}}(T_i \mid X_i)} m(X_i, Y_i(T_i), T_i) \right|^{1/a}
        + \left| h_{\theta_{i-1}}(\Lambda_{i-1}) \right|^{1/a} \mid \mathcal{F}_{i-1} \right] \notag \\
        & \leq 2^{1/a-1} \mathbb{E} \left[ \frac{\left| m(X_i, Y_i(1), 1) \right|^{1/a} + \left| m(X_i, Y_i(0), 0) \right|^{1/a}}{\iota^{1/a}}
        + C^{1/a} \mid \mathcal{F}_{i-1} \right] \notag .
    \end{align}
    Because the right-hand side of \eqref{eq_proof_lemma_allocation_convergence_1} is a constant independent of the index $i$, it holds that
    \begin{equation*}
        \sum_{i=1}^\infty i^{-1/a}
        \mathbb{E} \left[ \left| \frac{\rho^{\mathrm{ref}}(T_i \mid X_i)}{\rho_{\theta_{i-1}}(T_i \mid X_i)} m(X_i, Y_i(T_i), T_i)
        - h_{\theta_{i-1}}(\Lambda_{i-1}) \right|^{1/a}
        \mid \mathcal{F}_{i-1} \right]
        < \infty .
    \end{equation*}
    By Theorem 2.18 in \cite{hallMartingaleLimitTheory1980}, we can conclude that
    \begin{equation*}
        \frac{M_n - M_n^\prime}{n}
        = \frac{1}{n} \sum_{i=1}^{n} \left[ \frac{\rho^{\mathrm{ref}}(T_i \mid X_i)}{\rho_{\theta_{i-1}}(T_i \mid X_i)} m(X_i, Y_i(T_i), T_i)
        - h_{\theta_{i-1}}(\Lambda_{i-1}) \right]
        \rightarrow 0 \quad \text{a.s.}
    \end{equation*}
    
    We now turn to prove
    \begin{equation*}
        \frac{M_n^\prime}{n} \xrightarrow{\mathbb{P}}
        M = \mathbb{E}_{(X,Y(1),Y(0)) \sim \Gamma_{X,\bm{Y}}} \left[ \rho^{\mathrm{ref}}(X) m(X, Y(1), 1) + [1 - \rho^{\mathrm{ref}}(X)] m(X, Y(0), 0) \right] .
    \end{equation*}
    Lemma \ref{lemma_convergence_average_enter} establishes this limit under a collection of technical conditions because each term of the limit in Lemma \ref{lemma_convergence_average_enter}, $\pi_\theta h_\theta$, is
    \begin{align*}
        \pi_\theta h_\theta
        &= \mathbb{E}_{(X,Y(1),Y(0)) \sim \Gamma_{X,\bm{Y}}, \Lambda \sim \pi_\theta} \left[ \sum_{T=0}^{1} \frac{\rho^{\mathrm{ref}}(T \mid X) g_\theta(T \mid \Lambda, X)}
        {\rho_\theta(T \mid X)} m(X, Y(T), T) \right] \\
        &= \mathbb{E}_{(X,Y(1),Y(0)) \sim \Gamma_{X,\bm{Y}}} \left[
        \frac{\rho^{\mathrm{ref}}(X) \pi_\theta \left[ g_\theta(\cdot, X) \right]}
        {\rho_\theta(X)} m(X, Y(1), 1) \right. \\
        & \quad \left. + \frac{ \left\{ 1-\rho^{\mathrm{ref}}(X) \right\} \left\{ 1-\pi_\theta \left[ g_\theta(\cdot, X) \right] \right\} }
        {1-\rho_\theta(X)} m(X, Y(0), 0) \right] \\
        & = \mathbb{E}_{(X,Y(1),Y(0)) \sim \Gamma_{X,\bm{Y}}} \left[ \rho^{\mathrm{ref}}(X) m(X, Y(1), 1) + [1 - \rho^{\mathrm{ref}}(X)] m(X, Y(0), 0) \right] ,
    \end{align*}
    which is a constant independent of $\theta$.
    Since the conditions of the lemma already guarantee a subset of these requirements of Lemma \ref{lemma_convergence_average_enter} and the boundedness of $h_\theta$, it remains to verify the remaining one that for any $\alpha,\tilde{\alpha},\kappa,\tilde{\kappa} \in (0,1)$, the family of functions $\{h_\theta\}_{\theta \in \Theta}$ is $((\Delta_{h, \kappa, \alpha}, L_{h, \kappa, \alpha} V^\kappa, \alpha), (\tilde{\Delta}_{h, \tilde{\kappa}, \tilde{\alpha}}, \tilde{L}_{h, \tilde{\kappa}, \tilde{\alpha}} V^{\tilde{\kappa}}, \tilde{\alpha}))$-joint locally Hölder continuous.

    For brevity, it suffices to show that $\mathbb{E}_{(X,Y(1)) \sim \Gamma_{X,Y(1)}} \left[ \frac{\rho^{\mathrm{ref}}(X) g_\theta(\Lambda, X)}{\rho_\theta(X)} m(X, Y(1), 1) \right]$ is Lipschitz continuous with respect to $(\theta,\Lambda)$.
    To this end, we examine the difference between the expectations at $(\theta,\Lambda)$ and $(\theta^\prime,\Lambda^\prime)$:
    \begin{align*}
        & \quad \left| \mathbb{E}_{(X,Y(1)) \sim \Gamma_{X,Y(1)}} \left[ \frac{\rho^{\mathrm{ref}}(X) g_\theta(\Lambda, X)}{\rho_\theta(X)} m(X, Y(1), 1) \right] \right. \\
        & \quad\quad \left. - \mathbb{E}_{(X,Y(1)) \sim \Gamma_{X,Y(1)}} \left[ \frac{\rho^{\mathrm{ref}}(X) g_{\theta^\prime}(\Lambda^\prime, X)}{\rho_{\theta^\prime}(X)} m(X, Y(1), 1) \right] \right| \\
        & \leq \mathbb{E}_{(X,Y(1)) \sim \Gamma_{X,Y(1)}} \left|
        \frac{\rho^{\mathrm{ref}}(X) g_\theta(\Lambda, X)}{\rho_\theta(X)} m(X, Y(1), 1)
        - \frac{\rho^{\mathrm{ref}}(X) g_{\theta^\prime}(\Lambda^\prime, X)}{\rho_{\theta^\prime}(X)} m(X, Y(1), 1) \right| \\
        & \leq \sqrt{\mathbb{E}_{(X,Y(1)) \sim \Gamma_{X,Y(1)}} |m(X, Y(1), 1)|^2}
        \sqrt{\mathbb{E}_{X \sim \Gamma} \left[ \left[ \frac{g_\theta(\Lambda, X)}{\rho_\theta(X)}
        - \frac{g_{\theta^\prime}(\Lambda^\prime, X)}{\rho_{\theta^\prime}(X)} \right]^2 \right]} \\
        & \leq \sqrt{\mathbb{E}_{(X,Y(1)) \sim \Gamma_{X,Y(1)}} |m(X, Y(1), 1)|^2}
        \sqrt{\mathbb{E}_{X \sim \Gamma} \left[ \left[ \frac{g-g^\prime}{\rho}
        - \frac{g^\prime(\rho^\prime-\rho)}{\rho\rho^\prime} \right]^2 \right]} \\
        & \leq \sqrt{\mathbb{E}_{(X,Y(1)) \sim \Gamma_{X,Y(1)}} |m(X, Y(1), 1)|^2}
        \sqrt{ \frac{2}{\iota^2} \mathbb{E}_{X \sim \Gamma} \left[ g-g^\prime \right]^2
        + \frac{2}{\iota^4} \mathbb{E}_{X \sim \Gamma} \left[ \rho-\rho^\prime \right]^2 } \\
        & \leq \sqrt{\mathbb{E}_{(X,Y(1)) \sim \Gamma_{X,Y(1)}} |m(X, Y(1), 1)|^2} \frac{2}{\iota^2} \left[ \left\| g_{\theta}(\Lambda,\cdot) - g_{\theta^\prime}(\Lambda^\prime,\cdot) \right\|_{L^2(\Gamma)}
        + \left\| \rho_{\theta} - \rho_{\theta^\prime} \right\|_{L^2(\Gamma)} \right] \\
        & \leq \sqrt{\mathbb{E}_{(X,Y(1)) \sim \Gamma_{X,Y(1)}} |m(X, Y(1), 1)|^2} \frac{2(L_g+L_\rho)}{\iota^2}
        \left[ d(\theta, \theta^\prime) + d(\Lambda, \Lambda^\prime) \right] .
    \end{align*}
    These steps above use the Cauchy-Schwarz inequality, the bound in Assumption \ref{assumption_sampling}, and finally the Lipschitz continuity of $g$ and $\rho$ in Assumption \ref{assumption_lipschitz_continuity_functions_kernels}.
    This confirms that $h_\theta(\Lambda)$ is Lipschitz continuous with respect to $(\theta, \Lambda)$, thus satisfying the conditions of Lemma \ref{lemma_convergence_average_enter}.

    We have shown that $\frac{M_n - M_n'}{n} \to 0$ almost surely and $\frac{M_n'}{n} \xrightarrow{\mathbb{P}} M$.
    Therefore, we can conclude that
    \begin{align*}
        & \frac{M_n}{n}
        = \frac{1}{n} \sum_{i=1}^n
        \left[ \frac{\rho^{\mathrm{ref}}(T_i \mid X_i)}{\rho_{\theta_{i-1}}(T_i \mid X_i)} m(X_i, Y_i(T_i), T_i) \right] \\
        & \quad \xrightarrow{\mathbb{P}} M = \mathbb{E}_{(X,Y(1),Y(0)) \sim \Gamma_{X,\bm{Y}}} \left[ \rho^{\mathrm{ref}}(X) m(X, Y(1), 1) + [1 - \rho^{\mathrm{ref}}(X)] m(X, Y(0), 0) \right] .
    \end{align*}

    For the second part of the lemma, regarding the large deviation of $\frac{M_n}{n}$, we proceed to establish the large deviation bounds for each component term in its decomposition, namely $\frac{M_n - M_n^\prime}{n}$ and $\frac{M_n^\prime}{n}$.
    \begin{enumerate}
        \item Because $\mathbb{E}_{(X,Y(t)) \sim \Gamma_{X,Y(t)}} \left[ \left[ m(X, Y(t), t) \right]^2 \right] < \infty$, each increment of the martingale $\frac{M_n - M_n^\prime}{n}$ has a bounded $L^2$ norm.
        By Lemma \ref{lemma_martingale_large_deviation}, it follows that $P(|\frac{M_n - M_n^\prime}{n}| > \delta) = O(\frac{1}{n})$.
        \item Lemma \ref{lemma_convergence_average_enter} establishes a large deviation bound for $\frac{M_n^\prime}{n}$, such that
        \begin{equation*}
        P \left( \left| \frac{M_n^\prime}{n} - M \right|
        > \delta \right) < c_\delta n^{-q} .
    \end{equation*}
    \end{enumerate}
    Hence, all the component terms in the decomposition exhibit polynomially decaying large deviations, implying that the total deviation is of order $O(n^{-q})$, with $q \in (0,1]$ defined in Assumption \ref{assumption_theta_difference_average_convergence_large_deviation}.
\end{proof}

\subsection{Lemmas for Simultaneous Geometric Ergodicity}

The following lemma is based on Lemma 2.3 in \cite{fortConvergenceAdaptiveInteracting2011}.
\begin{lemma}
    \label{lemma_simultaneous_geometric_ergodicity}
    Assume that for all $\theta \in \Theta$, $P_\theta$ is a $\pi$-irreducible and aperiodic transition kernel on $\mathrm{X}$.
    Moreover, there exist some constants $b<\infty, \delta \in (0,1)$, $\beta \in (0,1)$, a probability measure $\nu$ on X and a function $V: \mathrm{X} \to [1,+\infty)$, such that for any $\theta \in \Theta$,
    \begin{align*}
        P_\theta V & \leq \beta V+b, \\
        P_\theta(x, \cdot) & \geq \delta \nu(\cdot) \mathbb{I}_{\left\{V \leq c\right\}}(x), \quad
        c := 2 b\left(1-\beta\right)^{-1} .
    \end{align*}
    Then there are some universal constants $C$ and $\gamma$ such that for any $\theta \in \Theta$, there exists a probability distribution $\pi_\theta$ such that $\pi_\theta P_\theta=\pi_\theta$, $\pi_\theta(V) \leq b\left(1-\beta\right)^{-1}$ and the inequality
    \begin{equation*}
        \left\|P_\theta^n(x, \cdot)-\pi_\theta\right\|_V \leq L(1-L^{-1})^n V(x)
    \end{equation*}
    holds with $L = C\left\{b \vee \delta^{-1} \vee\left(1-\beta\right)^{-1} \vee c\right\}^\gamma$.
\end{lemma}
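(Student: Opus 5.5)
This lemma is a uniform-in-$\theta$ version of the classical drift-plus-minorization geometric ergodicity theorem, in the form of Lemma 2.3 of \cite{fortConvergenceAdaptiveInteracting2011}. The plan is therefore to run a quantitative coupling argument once, for a generic kernel $P$ satisfying the two displayed conditions, and to check that every constant produced depends only on $(b,\delta,\beta,c)$.

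\textbf{Step 1: invariant probability and the bound on $\pi_\theta(V)$.} For fixed $\theta$, the drift condition $P_\theta V\le \beta V+b$ makes the level set $C=\{V\le c\}$ petite, by the minorization $P_\theta(x,\cdot)\ge\delta\nu(\cdot)$ on $C$. Together with the assumed $\pi$-irreducibility and aperiodicity, this gives positive Harris recurrence (Meyn--Tweedie, Theorem 11.3.4). Hence an invariant probability $\pi_\theta$ exists. Iterating the drift gives $P_\theta^n V\le \beta^n V+b/(1-\beta)$, and stationarity gives $\pi_\theta V\le \beta\,\pi_\theta V+b$. To handle the case $\pi_\theta V=\infty$, we apply this to $V\wedge k$ and pass to the limit, which yields $\pi_\theta V\le b(1-\beta)^{-1}$.

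\textbf{Step 2: uniform geometric rate via coupling.} Use the bivariate drift function $\bar V(x,y)=\tfrac12(V(x)+V(y))$. Outside $C\times C$ we have $V(x)+V(y)>c=2b/(1-\beta)$, so
\[
\bar P\bar V\le \beta\bar V+b\le \tfrac{1+\beta}{2}\bar V .
\]
Inside $C\times C$, the minorization lets the two chains coalesce with probability at least $\delta$. The standard quantitative bound (Rosenthal; Douc--Moulines--Roberts; equivalently the splitting construction of Baxendale) then gives
\[
\|P_\theta^n(x,\cdot)-P_\theta^n(y,\cdot)\|_V\le C_0\,\rho^n\,(V(x)+V(y)),
\]
where $\rho<1$ and $C_0$ are explicit functions of $(\beta,b,\delta,c)$ only. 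Integrating $y$ against $\pi_\theta$ and using Step 1 gives
\[
\|P_\theta^n(x,\cdot)-\pi_\theta\|_V\le C_0\rho^n\bigl(V(x)+b(1-\beta)^{-1}\bigr)\le C_0'\rho^n V(x),
\]
using $V\ge1$. Aperiodicity is not needed for the rate itself, since the minorization is one-step; it enters only through Step 1.

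\textbf{Step 3: repackaging the constants.} The quantities $C_0'$ and $(1-\rho)^{-1}$ are bounded by polynomials in $b$, $\delta^{-1}$, $(1-\beta)^{-1}$ and $c$. Choosing universal $C,\gamma$ and setting $L=C\{b\vee\delta^{-1}\vee(1-\beta)^{-1}\vee c\}^\gamma$, we get both $C_0'\le L$ and $\rho\le 1-L^{-1}$, which is the claimed form.

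The main obstacle is the bookkeeping in Step 2: showing that the rate from the coupling or regeneration argument is polynomially controlled, with no hidden dependence on $\theta$ through $\pi_\theta$ or through the kernel beyond the two displayed inequalities. I would take this directly from the explicit bounds of Douc--Moulines--Rosenthal, as done in \cite{fortConvergenceAdaptiveInteracting2011}, rather than re-derive it.
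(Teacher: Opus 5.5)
Your proposal is correct and takes the same route as the paper, which gives no separate proof of this lemma and simply imports it from Lemma 2.3 of \cite{fortConvergenceAdaptiveInteracting2011}, the result that rests on the explicit drift-and-minorization coupling bounds you plan to invoke. There is one small slip in your Step 2 sketch: off $C\times C$ you only know $V(x)+V(y)>c+1$, not $V(x)+V(y)\ge 2c$, so the bivariate drift gives $\bar P\bar V\le\{1-(1-\beta)/(c+1)\}\bar V$ rather than $\tfrac{1+\beta}{2}\bar V$; this factor is still polynomially controlled in $(1-\beta)^{-1}$ and $c$, so your Step 3 is unaffected.
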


\section{Additional Lemmas}

The following lemma provides a large-deviation inequality for martingales.
\begin{lemma}[Theorem 3.6 in \cite{lesigneLargeDeviationsMartingales2001}]
    \label{lemma_martingale_large_deviation}
    Let $\left\{ X_i \right\}_{1 \leq i \leq n}$ be a finite sequence of martingale differences where $X_i \in L^p, 2 \leq p < \infty, \left\| X_i \right\|_p < M <\infty$ for all $i$.
    Let $x>0$. Then
    \begin{equation*}
        P\left( \left| S_n \right| > n x \right)
        \leq \left(18 p q^{1 / 2}\right)^p \frac{M^p}{x^p} \frac{1}{n^{p / 2}} ,
    \end{equation*}
    where $q$ is the real number for which $1 / p+1 / q=1$.
\end{lemma}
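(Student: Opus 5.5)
The plan is to combine Burkholder's square-function inequality with Minkowski's inequality in $L^{p/2}$, and then finish with Markov's inequality at order $p$. Write $S_n = \sum_{i=1}^n X_i$. Since $\{X_i\}$ is a martingale difference sequence with $X_i \in L^p$ and $p \geq 2$, I would first invoke Burkholder's inequality in the form
\begin{equation*}
    \| S_n \|_p \leq C_p \left\| \Big( \sum_{i=1}^n X_i^2 \Big)^{1/2} \right\|_p ,
\end{equation*}
using the explicit constant $C_p = 18 p^{3/2}/(p-1)^{1/2}$. Because $1/p + 1/q = 1$ gives $q = p/(p-1)$, this constant equals $18 p\, q^{1/2}$, which is exactly the constant appearing in the statement.

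Next I would bound the square function. Since $p/2 \geq 1$, the $L^{p/2}$ norm is a genuine norm, so Minkowski's inequality gives
\begin{equation*}
    \left\| \Big( \sum_{i=1}^n X_i^2 \Big)^{1/2} \right\|_p^2
    = \Big\| \sum_{i=1}^n X_i^2 \Big\|_{p/2}
    \leq \sum_{i=1}^n \| X_i^2 \|_{p/2}
    = \sum_{i=1}^n \| X_i \|_p^2
    \leq n M^2 .
\end{equation*}
Combining this with the first step yields $\|S_n\|_p \leq 18 p\, q^{1/2} M n^{1/2}$.

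Finally, Markov's inequality applied to $|S_n|^p$ gives
\begin{equation*}
    P(|S_n| > n x) \leq \frac{\mathbb{E}|S_n|^p}{(n x)^p}
    \leq \frac{(18 p q^{1/2})^p M^p n^{p/2}}{n^p x^p}
    = \left(18 p q^{1/2}\right)^p \frac{M^p}{x^p} \frac{1}{n^{p/2}} ,
\end{equation*}
which is the claimed bound.

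The only delicate point is the first step. One must use a version of Burkholder's inequality whose constant is explicitly at most $18 p^{3/2}(p-1)^{-1/2}$, for instance Burkholder's original constant obtained by combining the Marcinkiewicz--Zygmund-type bound with Doob's maximal inequality. Proving this sharp constant from scratch would be the main obstacle. Here I would simply cite it, as the paper already does by attributing the lemma to \cite{lesigneLargeDeviationsMartingales2001}. Everything after that step is routine.
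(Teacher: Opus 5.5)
Your argument is correct and complete modulo citing Burkholder's inequality with the constant $18p^{3/2}(p-1)^{-1/2} = 18pq^{1/2}$; combined with the triangle inequality in $L^{p/2}$ (giving $\|S_n\|_p \le 18pq^{1/2} M\sqrt{n}$) and Markov's inequality at order $p$, it yields exactly the stated bound. The paper gives no proof of its own and simply cites Theorem 3.6 of \cite{lesigneLargeDeviationsMartingales2001}. The factor $(18pq^{1/2})^p$ there is precisely Burkholder's constant, so the cited result comes from this same Burkholder--Minkowski--Markov chain, and your route matches it.
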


The following lemma is a slight modification of the corresponding lemma in \cite{fangGeneralNonMarkovianFramework2026}.
\begin{lemma}
    \label{lemma_convergence_V_as}

    Let $V \geq 0$ be a function such that its initial expectation is finite, i.e., $\mathbb{E}[V(\Lambda_0)] < \infty$.
    Suppose that the inequality $(P_\theta V)(\Lambda) \leq \beta V(\Lambda)+b$ holds uniformly for any $\theta \in \Theta$.
    Then,
    \begin{enumerate}
        \item $\sup_{n \geq 0} \mathbb{E}[V(\Lambda_n)] \leq \max \left\{ \mathbb{E}[V(\Lambda_0)], \frac{b}{1-\beta} \right\}$,
        \item $V(\Lambda_n) = O_P(1)$,
        \item $\frac{1}{N} \sum_{n=1}^N V(\Lambda_n) = O_P(1)$,
        \item for any $p>1$, $\sum_{n=1}^\infty n^{-p} V(\Lambda_n) < \infty$ almost surely.
    \end{enumerate}
\end{lemma}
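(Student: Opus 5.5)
The plan is to prove the four items in order, since each feeds the next. They all rest on the uniform drift inequality and the pseudo-Markov identity of Lemma \ref{lemma_transition_kernel}.

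\emph{Item 1.} By Lemma \ref{lemma_transition_kernel}, $\mathbb{E}[V(\Lambda_{n+1})\mid\mathcal{F}_n]=P_{\theta_n}V(\Lambda_n)$. Because the drift inequality holds uniformly in $\theta$, this is at most $\beta V(\Lambda_n)+b$ even though $\theta_n$ is random and $\mathcal{F}_n$-measurable. Taking expectations gives the recursion $e_{n+1}\le \beta e_n+b$ with $e_n=\mathbb{E}V(\Lambda_n)$. An induction then yields $e_n\le\max\{e_0,b/(1-\beta)\}$: if $e_n\le M:=\max\{e_0,b/(1-\beta)\}$, then $e_{n+1}\le\beta M+b\le\beta M+(1-\beta)M=M$.

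\emph{Item 2} follows immediately from Item 1 and Markov's inequality, $P(V(\Lambda_n)>K)\le M/K$, uniformly in $n$.

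\emph{Item 3.} Use $\mathbb{E}[N^{-1}\sum_{n\le N}V(\Lambda_n)]\le M$ and apply Markov's inequality again.

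\emph{Item 4.} By monotone convergence,
\[
\mathbb{E}\sum_{n\ge1} n^{-p}V(\Lambda_n)=\sum_{n\ge1} n^{-p}\,\mathbb{E}V(\Lambda_n)\le M\sum_{n\ge1} n^{-p}<\infty
\]
for $p>1$. Hence the nonnegative series is finite almost surely.

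The only delicate point is the first step. There one must justify that the conditional expectation given $\mathcal{F}_n$ can be bounded using the uniform-in-$\theta$ drift, despite $\theta_n$ being random. This works because $P_{\theta_n}(\Lambda_n,V)$ is evaluated pointwise at the $\mathcal{F}_n$-measurable pair $(\theta_n,\Lambda_n)$, and the drift bound holds for every fixed value of that pair. For general nonnegative $V$, integrability should be handled by truncation or monotone convergence. Everything else is routine.
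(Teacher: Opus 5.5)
Your proof is correct and matches the paper's argument. The paper likewise iterates the uniform drift bound through the tower property to get $\mathbb{E}[V(\Lambda_n)] \le \beta^n \mathbb{E}[V(\Lambda_0)] + (1-\beta^n)\frac{b}{1-\beta}$, which is the same estimate as your induction, and then derives Items 2--4 from Markov's inequality and monotone convergence exactly as you do; your remark that the drift bound is applied pointwise at the $\mathcal{F}_n$-measurable pair $(\theta_n,\Lambda_n)$ is a correct clarification that the paper leaves implicit.
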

\begin{proof}
    Iterating the inequality $(P_\theta V)(\Lambda) \leq \beta V(\Lambda)+b$ via the law of total expectation yields, for any $n \geq 0$,
    \begin{align*}
        \mathbb{E}[V(\Lambda_n)]
        & \leq \beta \mathbb{E}[V(\Lambda_{n-1})] + b
        \leq \beta^2 \mathbb{E}[V(\Lambda_{n-2})] + b(1+\beta) \\
        & \leq \dots
        \leq \beta^n \mathbb{E}[V(\Lambda_0)] + (1-\beta^n) \frac{b}{1-\beta}
        \leq \max \left\{ \mathbb{E}[V(\Lambda_0)], \frac{b}{1-\beta} \right\} .
    \end{align*}
    Since $\mathbb{E}[V(\Lambda_0)]$ is finite, this establishes a uniform bound on the moments that $\sup_{n \geq 0} \mathbb{E}[V(\Lambda_n)] \leq C < \infty$.

    The three results are direct consequences of this uniform moment bound.
    \begin{enumerate}
        \item By Markov's inequality, $\mathbb{P}(V(\Lambda_n) > M) \leq C/M$, which implies that $V(\Lambda_n) = O_P(1)$.
        \item Again by Markov's inequality,
        \begin{equation*}
            \mathbb{P}\left(\frac{1}{N} \sum_{n=1}^N V(\Lambda_n) > M\right)
            \leq \frac{1}{M N} \sum_{n=1}^N \mathbb{E}[V(\Lambda_n)]
            \leq \frac{C}{M} .
        \end{equation*}
        \item By the Monotone Convergence Theorem, for any $p>1$,
        \begin{equation*}
            \mathbb{E}\left[ \sum_{n=1}^\infty n^{-p} V(\Lambda_n) \right] = \sum_{n=1}^\infty n^{-p} \mathbb{E}[V(\Lambda_n)] \leq C \sum_{n=1}^\infty n^{-p} < \infty .
        \end{equation*}
        A non-negative random variable with finite expectation is finite almost surely, so the series converges almost surely.
    \end{enumerate}
    This completes the proof.
\end{proof}

\begin{lemma}
    \label{lemma_V_large_deviation}

    Let $V \geq 0$ be a measurable function satisfying $\mathbb{E}[V(\Lambda_0)] < \infty$.
    Suppose that for some $p \in [2,\infty)$, there exist constants $\beta, \beta_{1/p} \in (0,1)$, and $b, b_{1/p} > 0$ such that, uniformly over all $\theta \in \Theta$,
    \begin{equation*}
        (P_\theta V)(\Lambda) \leq \beta V(\Lambda) + b
        \quad \text{and} \quad
        (P_\theta V^{1/p})(\Lambda) \leq \beta_{1/p} V^{1/p}(\Lambda) + b_{1/p}.
    \end{equation*}
    Then for any $x > 0$,
    \begin{align*}
        & P\left( \sum_{n=1}^N V^{\frac{1}{p}}(\Lambda_n) > N \left( \frac{2b_{\frac{1}{p}}}{1-\beta_{\frac{1}{p}}}+x \right) \right) \\
        & \quad \leq \left(18 p q^{1 / 2}\right)^p \frac{2^p \left[ \max \left\{ \mathbb{E}[V(\Lambda_0)], \frac{b}{1-\beta} \right\} \right]}{(1-\beta_{\frac{1}{p}})^p x^p} \frac{1}{N^{p / 2}}
        + \frac{\mathbb{E}[V(\Lambda_0)] \beta_{\frac{1}{p}}^p}{N^p b_{\frac{1}{p}}^p} ,
    \end{align*}
    where $q$ is the conjugate exponent of $p$, that is, $1/p + 1/q = 1$.
\end{lemma}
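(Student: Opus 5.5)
The plan is to split $\sum_{n=1}^N V^{1/p}(\Lambda_n)$ into a martingale part and a drift part, using the drift inequality for $V^{1/p}$. Write $W_n = V^{1/p}(\Lambda_n)$ and define the martingale differences
\begin{equation*}
\xi_{n+1} = W_{n+1} - (P_{\theta_n}V^{1/p})(\Lambda_n).
\end{equation*}
By Lemma \ref{lemma_transition_kernel}, $\mathbb{E}[\xi_{n+1}\mid\mathcal{F}_n]=0$.

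The drift inequality gives $W_{n+1} \le \beta_{1/p} W_n + b_{1/p} + \xi_{n+1}$. Summing over $n=0,\dots,N-1$ gives
\begin{equation*}
\sum_{n=1}^N W_n \le \beta_{1/p}\sum_{n=0}^{N-1}W_n + N b_{1/p} + S_N, \qquad S_N=\sum_{n=1}^N\xi_n .
\end{equation*}
Moving the $\beta_{1/p}$-sum to the left (the two sums differ only by $W_0 - W_N$, and $W_N \ge 0$) yields
\begin{equation*}
(1-\beta_{1/p})\sum_{n=1}^N W_n \le N b_{1/p} + \beta_{1/p} W_0 + S_N .
\end{equation*}

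The target event $\sum_{n=1}^N W_n > N\bigl(2b_{1/p}/(1-\beta_{1/p}) + x\bigr)$ therefore forces at least one of two events:
\begin{equation*}
S_N > N(1-\beta_{1/p})x/2 \qquad\text{or}\qquad \beta_{1/p}W_0 > N b_{1/p}.
\end{equation*}
The split of $x$ can absorb an extra $N b_{1/p}$ slack, and I will adjust the bookkeeping so that the constants match the stated ones; this may require assigning the slack differently. The stated constants suggest that the factor $2$ in front of $b_{1/p}$ is used exactly for this split.

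For the second event, Markov's inequality applied to $W_0^p = V(\Lambda_0)$ gives
\begin{equation*}
P\bigl(W_0 > N b_{1/p}/\beta_{1/p}\bigr) \le \frac{\mathbb{E}V(\Lambda_0)\,\beta_{1/p}^p}{N^p b_{1/p}^p},
\end{equation*}
which matches the second term of the bound.

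For the first event, I apply Lemma \ref{lemma_martingale_large_deviation} with exponent $p$ and threshold $(1-\beta_{1/p})x/2$. This requires a uniform bound on $\|\xi_n\|_p$. By conditional Jensen, $\mathbb{E}|\xi_{n+1}|^p \le 2^{p-1}\bigl(\mathbb{E}W_{n+1}^p + \mathbb{E}[(P_{\theta_n}V^{1/p})^p(\Lambda_n)]\bigr)$. Since $(P_{\theta_n}V^{1/p})^p \le P_{\theta_n}V$, this is at most $2^p\sup_n\mathbb{E}V(\Lambda_n)$. Lemma \ref{lemma_convergence_V_as} then bounds it by $2^p\max\{\mathbb{E}V(\Lambda_0), b/(1-\beta)\}$, so $M^p$ equals this quantity. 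Substituting gives exactly the first term $(18pq^{1/2})^p\,2^p\max\{\cdot\}/\bigl((1-\beta_{1/p})^p x^p N^{p/2}\bigr)$, up to the factor coming from the halving of $x$.

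The main obstacle I expect is this constant bookkeeping: arranging the split so that no stray $2^p$ appears. I would handle it by absorbing the halving into the choice of threshold, using the extra $b_{1/p}N/(1-\beta_{1/p})$ margin. I would also check that Lemma \ref{lemma_martingale_large_deviation} applies with a strict $<M$, which a slight enlargement of $M$ handles.
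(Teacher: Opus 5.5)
Your proposal is correct and follows essentially the same route as the paper: a martingale decomposition via the drift inequality for $V^{1/p}$, Markov's inequality on $V(\Lambda_0)=W_0^p$ for the initial term, and Lemma \ref{lemma_martingale_large_deviation} with $M^p=2^p\max\{\mathbb{E}[V(\Lambda_0)],b/(1-\beta)\}$, obtained from conditional Jensen and Lemma \ref{lemma_convergence_V_as}. The constant ``obstacle'' is self-inflicted, because no halving of $x$ is needed: on $\{\beta_{1/p}W_0\le Nb_{1/p}\}$ your inequality $(1-\beta_{1/p})\sum_{n=1}^N W_n\le Nb_{1/p}+\beta_{1/p}W_0+S_N$ turns the target event directly into $S_N>N(1-\beta_{1/p})x$ (one copy of $Nb_{1/p}$ from the factor $2$ cancels the drift constant and the other dominates $\beta_{1/p}W_0$), which gives exactly the stated first term with no stray $2^p$.
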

\begin{proof}
    Lemma \ref{lemma_martingale_large_deviation} implies that for any $x>0$,
    \begin{equation*}
        P\left( \sum_{n=1}^N \left[ V^{\frac{1}{p}}(\Lambda_n) - \mathbb{E} \left[ V^{\frac{1}{p}}(\Lambda_n) \mid \mathcal{F}_{n-1} \right] \right] > N x \right)
        \leq \left(18 p q^{1 / 2}\right)^p \frac{M^p}{x^p} \frac{1}{N^{p / 2}}
    \end{equation*}
    with $M = 2 \left[ \max \left\{ \mathbb{E}[V(\Lambda_0)], \frac{b}{1-\beta} \right\} \right]^{\frac{1}{p}}$, which follows from Lemma \ref{lemma_convergence_V_as} and the inequality
    \begin{equation*}
        \mathbb{E} \left| V^{\frac{1}{p}}(\Lambda_n) - \mathbb{E} \left[ V^{\frac{1}{p}}(\Lambda_n) \mid \mathcal{F}_{n-1} \right] \right|^p
        \leq 2^{p-1} \mathbb{E} \left| V^{\frac{1}{p}}(\Lambda_n) \right|^p
        + 2^{p-1} \mathbb{E} \left| \mathbb{E} \left[ V^{\frac{1}{p}}(\Lambda_n) \mid \mathcal{F}_{n-1} \right] \right|^p
        \leq 2^p \mathbb{E}[V(\Lambda_n)] .
    \end{equation*}
    Moreover, by $(P_\theta V^{\frac{1}{p}})(\Lambda) \leq \beta_{\frac{1}{p}} V^{\frac{1}{p}}(\Lambda)+b_{\frac{1}{p}}$ for any $\theta \in \Theta$, it holds that
    \begin{equation*}
        \sum_{n=1}^N \left[ V^{\frac{1}{p}}(\Lambda_n) - \mathbb{E} \left[ V^{\frac{1}{p}}(\Lambda_n) \mid \mathcal{F}_{n-1} \right] \right]
        \geq \sum_{n=1}^N \left[ (1-\beta_{\frac{1}{p}}) V^{\frac{1}{p}}(\Lambda_n) \right] - N b_{\frac{1}{p}} - \beta_{\frac{1}{p}} V^{\frac{1}{p}}(\Lambda_0) .
    \end{equation*}
    Thus,
    \begin{align*}
        & \quad P\left( \sum_{n=1}^N \left[ V^{\frac{1}{p}}(\Lambda_n) - \mathbb{E} \left[ V^{\frac{1}{p}}(\Lambda_n) \mid \mathcal{F}_{n-1} \right] \right] > N x \right) \\
        & \geq P\left( \sum_{n=1}^N \left[ (1-\beta_{\frac{1}{p}}) V^{\frac{1}{p}}(\Lambda_n) \right] - N b_{\frac{1}{p}} - \beta_{\frac{1}{p}} V^{\frac{1}{p}}(\Lambda_0) > N x \right)
    \end{align*}
    Combining with
    \begin{equation*}
        P \left( V^{\frac{1}{p}}(\Lambda_0) \geq N x \right)
        \leq \frac{\mathbb{E}[V(\Lambda_0)]}{N^p x^p} ,
    \end{equation*}
    we can conclude that for any $x > 0$,
    \begin{align*}
        & \quad P\left( \sum_{n=1}^N V^{\frac{1}{p}}(\Lambda_n) > N \left( \frac{2b_{\frac{1}{p}}}{1-\beta_{\frac{1}{p}}}+x \right) \right) \\
        &= P\left( \sum_{n=1}^N \left[ (1-\beta_{\frac{1}{p}}) V^{\frac{1}{p}}(\Lambda_n) \right] - N b_{\frac{1}{p}} - \beta_{\frac{1}{p}} V^{\frac{1}{p}}(\Lambda_0)
        > N b_{\frac{1}{p}} + (1-\beta_{\frac{1}{p}}) N x - \beta_{\frac{1}{p}} V^{\frac{1}{p}}(\Lambda_0) \right) \\
        & \leq P\left( \sum_{n=1}^N \left[ (1-\beta_{\frac{1}{p}}) V^{\frac{1}{p}}(\Lambda_n) \right] - N b_{\frac{1}{p}} - \beta_{\frac{1}{p}} V^{\frac{1}{p}}(\Lambda_0)
        > (1-\beta_{\frac{1}{p}}) N x \right) \\
        & \quad + P \left( \beta_{\frac{1}{p}} V^{\frac{1}{p}}(\Lambda_0) \geq N b_{\frac{1}{p}} \right) \\
        & \leq \left(18 p q^{1 / 2}\right)^p \frac{2^p \left[ \max \left\{ \mathbb{E}[V(\Lambda_0)], \frac{b}{1-\beta} \right\} \right]}{(1-\beta_{\frac{1}{p}})^p x^p} \frac{1}{N^{p / 2}}
        + \frac{\mathbb{E}[V(\Lambda_0)] \beta_{\frac{1}{p}}^p}{N^p b_{\frac{1}{p}}^p} .
    \end{align*}
\end{proof}

\section{Experiments}

In this subsection, we present the simulation results of several randomization procedures.
Consider the case where the covariate vector $X$ is $3$-dimensional.
To construct a covariate vector that is both correlated and bounded, we proceed as follows:
\begin{enumerate}
    \item Generate a random vector from a multivariate normal distribution from a correlated multivariate normal distribution with covariance matrix
    \begin{equation*}
        \begin{pmatrix}
            1 & 0.3 & 0.2 \\
            0.3 & 1 & 0.4 \\
            0.2 & 0.4 & 1
        \end{pmatrix} .
    \end{equation*}

    \item Transform the first component monotonically to take values in $\{-1, 0, 1\}$, with probabilities $0.25$, $0.5$, and $0.25$, respectively.

    \item Transform the second component $x$ to $\max\{\min\{x, 2\}, -2\}/2$ to ensure boundedness.

    \item Transform the third component monotonically to have a uniform distribution on $[-1, 1]$.
\end{enumerate}

The working model used for estimation is
\begin{equation}
    Y(T) = T (\alpha_1 + X_1 \gamma_1) + (1-T) (\alpha_0 + X_1 \gamma_0) + X_2 \beta_2 + X_3 \beta_3 + \epsilon ,
    \label{eq_working_model_experiment}
\end{equation}
where $\epsilon$ denotes the random noise.
Moreover, the feature map $\phi$ used in the covariate imbalance and in \eqref{eq_CARA_allocation_function} is defined as $\phi: x \mapsto (1,x^T)^T$.
Thus, we are concerned with the balance of the treatment group sizes and the first moments of the covariate vector $X$.
The randomization procedures in this subsection are implemented with the following components:
\begin{enumerate}
    \item When the number of allocated units $n \le 20$, the allocation probability is fixed at $0.5$.
    When $n$ exceeds the threshold $20$, the allocation probability is set equal to the targeted allocation ratio itself (direct) or determined by the allocation function \eqref{eq_CARA_allocation_function} (balance).
    In the expression of \eqref{eq_CARA_allocation_function}, the constants $p_\theta = \frac{1}{\max_x \rho_\theta(x)}$, $C_\theta = \frac{\sqrt{3+1}}{(\max_x \rho_\theta(x))(1-(\max_x \rho_\theta(x)))}$ and $C_\Lambda = 1$.
    Moreover, when using \eqref{eq_CARA_allocation_function}, the parameter update mechanism is given by \eqref{eq_allocation_parameter_update_mechanism_2} with $C_{\mathrm{clip}, n} = n^{-1/2}$.
    Otherwise, the parameter is set equal to the estimate of the model parameter, following the framework in \cite{zhangAsymptoticPropertiesCovariateadjusted2007}.
    
    \item The model parameter is estimated using either the weighted M-estimator in \eqref{eq_eta_estimation} with $\rho^{\mathrm{ref}} \equiv 1/2$ (weighted) or the M-estimator without weights (unweighted).
    
    \item The targeted allocation ratio $\rho(x)$ is specified as 
    \begin{equation*}
        \rho(x) =
        \begin{cases}
            0.5, & \text{(CRD)},\\
            \text{plogistic}\left(\hat{\mu}_{\theta,1}(x)-\hat{\mu}_{\theta,0}(x)\right), & \text{(logistic)} , \\
            \text{probit}\left(\hat{\mu}_{\theta,1}(x)-\hat{\mu}_{\theta,0}(x)\right), & \text{(probit)} ,
        \end{cases}
    \end{equation*}
    where
    \begin{align*}
        \text{plogistic}(x) &= \min \left\{ \max \left\{ \frac{1}{1+\exp(-x/2)},0.2 \right\}, 0.8 \right\} , \\
        \text{probit}(x) &= \min \left\{ \max \left\{ \Phi(x/3),0.2 \right\} ,0.8 \right\} , \\
        \Phi(x) &= \int_{-\infty}^{x} \frac{1}{\sqrt{2\pi}} e^{-t^2/2} \mathrm{d} t .
    \end{align*}
    $\hat{\mu}_{\theta,1}(x)$ and $\hat{\mu}_{\theta,0}(x)$ denote the expected responses of a unit with covariate $x$ under the treatment and control group, respectively, evaluated at the current allocation parameter $\theta$.
    Note that larger responses are considered preferable and by \eqref{eq_working_model_experiment}, the maximum $\max_x \rho_\theta(x)$ equals $0.5$ (CRD), $\text{plogistic}(|\alpha_1 - \alpha_0| + |\gamma_1 - \gamma_0|)$ (logistic), and $\text{probit}(|\alpha_1 - \alpha_0| + |\gamma_1 - \gamma_0|)$ (probit).
    When the first setting of the targeted allocation ratio is used, the CBARA procedure reduces to CAR or simple randomization, whereas the latter two correspond to the classical targeted allocation ratio settings within the CARA framework \cite{bandyopadhyayAdaptiveDesignsNormal2001,zhangAsymptoticPropertiesCovariateadjusted2007,zhuCovariateadjustedResponseAdaptive2015}.
\end{enumerate}

The primary performance measures considered in this simulation include
\begin{enumerate}
    \item the magnitude of the imbalance vector,
    \item the convergence rate of the imbalance of the additional covariate,
    \item and the mean squared error (MSE) of the ATE estimator.
\end{enumerate}
The additional covariate vector we consider is set to
\begin{equation*}
    Z^* = \tanh \left( 0.8 X_1 + 0.5 X_2^2 - 0.3 X_3 + 0.1 X_1 X_3 \right) + \epsilon,
\end{equation*}
where $\epsilon \sim \mathcal{N}(0, 0.2^2)$ is independent of $X$.
The true response model is defined for two scenarios: scenario A with only a treatment-by-$X_1$ interaction, and scenario B with additional treatment-by-$X_2$ and treatment-by-$X_3$ interactions.
For scenario A, the potential outcomes are
\begin{align*}
    Y(1) &= 4.5 + 4.7 X_1 + 2.9 X_2 + 1.4 X_3 , \\
    Y(0) &= 7.5 + 1.7 X_1 + 2.9 X_2 + 1.4 X_3 .
\end{align*}
For scenario B, the potential outcomes are
\begin{align*}
    Y(1) &= 4.5 + 4.7 X_1 - 0.6 X_2 - 0.6 X_3 , \\
    Y(0) &= 7.5 + 1.7 X_1 + 2.9 X_2 + 1.4 X_3 .
\end{align*}
For each case, the sample sizes are $n=200, 800$, and we conduct the simulation for $N=2000$ times.
The simulation results are given in Tables \ref{table_imbalance} and \ref{table_estimation}.

\begin{table}[!tbp]
    \centering
    \caption{Average responses, average norms of the imbalance vectors $\|\Lambda_n\|$ and $\|\Psi_n\|$ under two allocation mechanisms.}
    \label{table_imbalance}
    \begin{adjustbox}{scale=0.9}
    \begin{threeparttable}
    \begin{tabular}{cccccccccc}
        \hline
        Size & Model & Procedure & Estimation &
        $\text{Response}_D$ & $\text{Response}_B$ &
        $\Lambda_D$ & $\Lambda_B$ &
        $\Psi_D$ & $\Psi_B$ \\
        \hline
        200 & A & CRD & unweighted & 5.997 & 5.992 & 37.354 & 8.061 & 11.610 & 5.814 \\
            &   &     & weighted   & 6.012 & 6.000 & 37.128 & 7.875 & 11.457 & 5.835 \\
        \cline{3-10}
            &   & logistic & unweighted & 6.806 & 6.812 & 43.432 & 18.066 & 12.817 & 8.375 \\
            &   &          & weighted   & 6.802 & 6.804 & 44.009 & 18.076 & 12.953 & 8.527 \\
        \cline{3-10}
            &   & probit   & unweighted & 6.813 & 6.801 & 43.932 & 18.281 & 13.430 & 8.548 \\
            &   &          & weighted   & 6.811 & 6.822 & 43.529 & 18.264 & 13.414 & 8.468 \\
        \hline
        200 & B & CRD & unweighted & 5.994 & 6.000 & 37.502 & 8.141 & 11.808 & 5.889 \\
            &   &     & weighted   & 6.002 & 6.005 & 37.891 & 7.909 & 11.331 & 5.740 \\
        \cline{3-10}
            &   & logistic & unweighted & 6.759 & 6.754 & 44.142 & 18.127 & 13.509 & 8.474 \\
            &   &          & weighted   & 6.766 & 6.763 & 43.587 & 18.021 & 13.172 & 8.447 \\
        \cline{3-10}
            &   & probit   & unweighted & 6.769 & 6.767 & 43.880 & 18.244 & 13.320 & 8.633 \\
            &   &          & weighted   & 6.769 & 6.763 & 43.838 & 17.917 & 13.160 & 8.276 \\
        \hline
        800 & A & CRD & unweighted & 6.001 & 5.996 & 74.839 & 8.044 & 24.134 & 10.774 \\
            &   &     & weighted   & 6.002 & 5.998 & 75.309 & 8.004 & 23.744 & 11.078 \\
        \cline{3-10}
            &   & logistic & unweighted & 6.876 & 6.879 & 88.679 & 19.856 & 26.820 & 14.430 \\
            &   &          & weighted   & 6.876 & 6.879 & 88.996 & 19.859 & 27.613 & 14.120 \\
        \cline{3-10}
            &   & probit   & unweighted & 6.877 & 6.877 & 87.228 & 20.300 & 26.544 & 13.676 \\
            &   &          & weighted   & 6.875 & 6.877 & 88.124 & 19.920 & 26.438 & 14.240 \\
        \hline
        800 & B & CRD & unweighted & 6.002 & 6.001 & 75.292 & 7.976 & 23.459 & 10.961 \\
            &   &     & weighted   & 5.999 & 5.998 & 74.935 & 8.073 & 23.724 & 11.119 \\
        \cline{3-10}
            &   & logistic & unweighted & 6.832 & 6.832 & 88.997 & 19.921 & 26.700 & 14.445 \\
            &   &          & weighted   & 6.831 & 6.830 & 88.886 & 19.655 & 26.625 & 13.876 \\
        \cline{3-10}
            &   & probit   & unweighted & 6.834 & 6.841 & 89.585 & 19.779 & 27.681 & 13.934 \\
            &   &          & weighted   & 6.831 & 6.836 & 88.633 & 20.053 & 26.876 & 14.296 \\
        \hline
    \end{tabular}
    \begin{tablenotes}
        \item
        $\text{Response}_U$, average response;
        $\Lambda_U$, average norm of the imbalance vectors $\|\Lambda_n\|$;
        $\Psi_U$, average norm of the imbalances $\|\Psi_n\|$ of the additional covariate $Z^*$.
        \item
        The subscript $U \in \{D,B\}$ indicates the direct allocation mechanism and the balance allocation mechanism, respectively.
    \end{tablenotes}
    \end{threeparttable}
    \end{adjustbox}
\end{table}

\begin{table}[!tbp]
    \centering
    \caption{Average overall standard deviations of the targeted allocation ratios across allocation steps, mean squared errors (MSEs) of the IPW estimators under two allocation mechanisms.}
    \label{table_estimation}
    \begin{adjustbox}{scale=1.0}
    \begin{threeparttable}
    \begin{tabular}{cccccccc}
        \hline
        Size & Model & Procedure & Estimation &
        $\text{TargetSD}_D$ & $\text{TargetSD}_B$ &
        $MSE_{\mathrm{W},D}$ & $MSE_{\mathrm{W},B}$ \\
        \hline
        200 & A & CRD & unweighted & 0     & 0     & 0.970 & 0.071 \\
            &   &     & weighted   & 0     & 0     & 0.995 & 0.070 \\
        \cline{3-8}
            &   & logistic & unweighted & 0.141 & 0.141 & 1.033 & 0.200 \\
            &   &          & weighted   & 0.142 & 0.141 & 1.025 & 0.189 \\
        \cline{3-8}
            &   & probit   & unweighted & 0.142 & 0.142 & 1.056 & 0.195 \\
            &   &          & weighted   & 0.142 & 0.142 & 1.022 & 0.194 \\
        \hline
        200 & B & CRD & unweighted & 0     & 0     & 0.895 & 0.080 \\
            &   &     & weighted   & 0     & 0     & 0.913 & 0.082 \\
        \cline{3-8}
            &   & logistic & unweighted & 0.111 & 0.111 & 0.950 & 0.190 \\
            &   &          & weighted   & 0.111 & 0.112 & 0.947 & 0.191 \\
        \cline{3-8}
            &   & probit   & unweighted & 0.111 & 0.111 & 0.932 & 0.189 \\
            &   &          & weighted   & 0.111 & 0.112 & 1.013 & 0.193 \\
        \hline
        800 & A & CRD & unweighted & 0     & 0     & 0.241 & 0.012 \\
            &   &     & weighted   & 0     & 0     & 0.250 & 0.012 \\
        \cline{3-8}
            &   & logistic & unweighted & 0.133 & 0.133 & 0.273 & 0.022 \\
            &   &          & weighted   & 0.133 & 0.134 & 0.273 & 0.022 \\
        \cline{3-8}
            &   & probit   & unweighted & 0.133 & 0.133 & 0.256 & 0.023 \\
            &   &          & weighted   & 0.133 & 0.134 & 0.263 & 0.024 \\
        \hline
        800 & B & CRD & unweighted & 0     & 0     & 0.224 & 0.016 \\
            &   &     & weighted   & 0     & 0     & 0.212 & 0.016 \\
        \cline{3-8}
            &   & logistic & unweighted & 0.090 & 0.090 & 0.247 & 0.025 \\
            &   &          & weighted   & 0.091 & 0.091 & 0.227 & 0.026 \\
        \cline{3-8}
            &   & probit   & unweighted & 0.088 & 0.088 & 0.238 & 0.024 \\
            &   &          & weighted   & 0.089 & 0.089 & 0.236 & 0.025 \\
        \hline
    \end{tabular}
    \begin{tablenotes}
        \item
        $\text{TargetSD}_U$, average overall standard deviation of the targeted allocation ratios across allocation steps;
        $MSE_{\mathrm{W},U}$, MSE of the ATE IPW estimator.
        \item
        The subscript $U \in \{D,B\}$ indicates the direct allocation mechanism and the balance allocation mechanism, respectively.
    \end{tablenotes}
    \end{threeparttable}
    \end{adjustbox}
\end{table}

Table \ref{table_imbalance} indicates that, under the last two targeted allocation ratio settings, the average response is higher than under CRD.
Across all sample sizes, model specifications, and estimation methods, the average responses under direct allocation and the balance allocation mechanism are nearly identical.
This indicates that the balance allocation mechanism achieves improved covariate balance without sacrificing the average response.
Furthermore, under the balance allocation mechanism, the norm of the primary covariate imbalance vector $\|\Lambda_n\| = O_P(1)$, whereas under direct allocation it grows at rate $O_P(\sqrt{n})$.
The imbalance of the additional covariate, $\|\Psi_n\|$, grows at $O_P(\sqrt{n})$ under both allocation mechanisms, but is reduced under the balance allocation mechanism.
These results indicate that applying the balance allocation mechanism significantly improves the balance of $X$ and also partially reduces imbalance for the additional covariate.

Table \ref{table_estimation} shows that the logistic and probit targeted allocation ratios under the CBARA procedure exhibit similar variability.
The choice of estimation method does not materially affect the overall variations of the allocation ratios across allocation steps, although weighted methods are theoretically more robust.
Moreover, improved covariate balance leads to lower mean squared errors for the IPW estimators of the ATE.
As the sample size increases, the efficiency gains of the balance allocation mechanism become more pronounced.
This corresponds to the fact that $\|\Lambda_n\| = O_P(1)$ under the balance mechanism, which provides a clear advantage compared with the direct allocation mechanism in relatively large samples.
Overall, the balance allocation mechanism improves covariate balance and substantially enhances the accuracy of ATE estimators.

\FloatBarrier

\bibliographystyle{plain}
\bibliography{Library.bib}

@article{alettiNonparametricCovariateadjustedResponseadaptive2018,
  title = {Nonparametric Covariate-Adjusted Response-Adaptive Design Based on a Functional Urn Model},
  author = {Aletti, Giacomo and Ghiglietti, Andrea and Rosenberger, William F.},
  year = 2018,
  journal = {The Annals of Statistics},
  volume = {46},
  number = {6B},
  pages = {3838--3866},
  publisher = {Institute of Mathematical Statistics},
  doi = {10.1214/17-AOS1677}
}

@article{baldiantogniniCovariateadaptiveBiasedCoin2011,
  title = {The Covariate-Adaptive Biased Coin Design for Balancing Clinical Trials in the Presence of Prognostic Factors},
  author = {Baldi Antognini, A. and Zagoraiou, M.},
  year = 2011,
  journal = {Biometrika},
  volume = {98},
  number = {3},
  pages = {519--535},
  doi = {10.1093/biomet/asr021}
}

@article{baldiantogniniEfficientCovariateAdaptiveDesign2024,
  title = {The {{Efficient Covariate-Adaptive Design}} for High-Order Balancing of Quantitative and Qualitative Covariates},
  author = {Baldi Antognini, Alessandro and Frieri, Rosamarie and Zagoraiou, Maroussa and Novelli, Marco},
  year = 2024,
  journal = {Statistical Papers},
  volume = {65},
  number = {1},
  pages = {19--44},
  doi = {10.1007/s00362-022-01381-1}
}

@phdthesis{ballouResponseadaptiveCovariatebalancedRandomization2015,
  title = {A {{Response-adaptive}} Covariate-Balanced Randomization for Multi-Arm Clinical Trials},
  author = {Ballou, Cassandra M.},
  year = 2015,
  doi = {10.6083/M4PR7TX9},
  school = {Oregon Health \& Science University}
}

@article{bandyopadhyayAdaptiveDesignsNormal2001,
  title = {Adaptive Designs for Normal Responses with Prognostic Factors},
  author = {Bandyopadhyay, Uttam and Biswas, Atanu},
  year = 2001,
  journal = {Biometrika},
  volume = {88},
  number = {2},
  pages = {409--419},
  doi = {10.1093/biomet/88.2.409}
}

@article{bhattacharyaClassOptimalType2015,
  title = {On a {{Class}} of {{Optimal Type Covariate Adjusted Response Adaptive Allocations}} for {{Normal Treatment Responses}}},
  author = {Bhattacharya, Rahul and Bandyopadhyay, Uttam},
  year = 2015,
  journal = {Austrian Journal of Statistics},
  volume = {44},
  number = {4},
  pages = {53--65},
  doi = {10.17713/ajs.v44i4.69}
}

@article{biswasClassCovariateAdjustedResponseAdaptive2018,
  title = {A Class of {{Covariate-Adjusted Response-Adaptive Allocation Designs}} for {{Multitreatment Binary Response Trials}}},
  author = {Biswas, Atanu and Bhattacharya, Rahul},
  year = 2018,
  journal = {Journal of Biopharmaceutical Statistics},
  volume = {28},
  number = {5},
  pages = {809--823},
  doi = {10.1080/10543406.2018.1485683}
}

@article{biswasClassOptimalCovariateadjusted2016,
  title = {On a Class of Optimal Covariate-Adjusted Response Adaptive Designs for Survival Outcomes},
  author = {Biswas, Atanu and Bhattacharya, Rahul and Park, Eunsik},
  year = 2016,
  journal = {Statistical Methods in Medical Research},
  volume = {25},
  number = {6},
  pages = {2444--2456},
  doi = {10.1177/0962280214524177}
}

@article{bugniInferenceCovariateadaptiveRandomization2018,
  title = {Inference under Covariate-Adaptive Randomization},
  author = {Bugni, Federico A. and Canay, Ivan A. and Shaikh, Azeem M.},
  year = 2018,
  journal = {Journal of the American Statistical Association},
  volume = {113},
  number = {524},
  pages = {1784--1796},
  doi = {10.1080/01621459.2017.1375934}
}

@article{chambazInferenceTargetedGroupSequential2014,
  title = {Inference in {{Targeted Group}}-{{Sequential Covariate}}-{{Adjusted Randomized Clinical Trials}}},
  author = {Chambaz, Antoine and Van Der Laan, Mark J.},
  year = 2014,
  journal = {Scandinavian Journal of Statistics},
  volume = {41},
  number = {1},
  pages = {104--140},
  doi = {10.1111/sjos.12013}
}

@article{chambazTargetedSequentialDesign2017,
  title = {Targeted Sequential Design for Targeted Learning Inference of the Optimal Treatment Rule and Its Mean Reward},
  author = {Chambaz, Antoine and Zheng, Wenjing and Van Der Laan, Mark J.},
  year = 2017,
  journal = {The Annals of Statistics},
  volume = {45},
  number = {6},
  pages = {2537--2564},
  publisher = {Institute of Mathematical Statistics},
  doi = {10.1214/16-AOS1534}
}

@article{cheungCovariateadjustedResponseadaptiveDesigns2014,
  title = {Covariate-Adjusted Response-Adaptive Designs for Generalized Linear Models},
  author = {Cheung, Siu Hung and Zhang, Li-Xin and Hu, Feifang and Chan, Wai Sum},
  year = 2014,
  journal = {Journal of Statistical Planning and Inference},
  volume = {149},
  pages = {152--161},
  doi = {10.1016/j.jspi.2014.02.006}
}

@article{chimisovAirMarkovChain2018,
  title = {Air {{Markov Chain Monte Carlo}}},
  author = {Chimisov, Cyril and Latuszynski, Krzysztof and Roberts, Gareth},
  year = 2018,
  publisher = {arXiv},
  doi = {10.48550/arXiv.1801.09309},
  journal = {arXiv preprint arXiv:1801.09309}
}

@article{fangGeneralNonMarkovianFramework2026,
  title = {A {{General}} ({{Non-Markovian}}) {{Framework}} for {{Covariate Adaptive Randomization}}: {{Achieving Balance While Eliminating}} the {{Shift}}},
  shorttitle = {A {{General}} ({{Non-Markovian}}) {{Framework}} for {{Covariate Adaptive Randomization}}},
  author = {Fang, Hengjia and Ma, Wei},
  year = 2026,
  publisher = {arXiv},
  doi = {10.48550/arXiv.2602.22648},
  journal = {arXiv preprint arXiv:2602.22648}
}

@unpublished{fangSupplementCBARACovariateBalancedandAdjusted2026,
  title = {Supplement to ``{{CBARA}}: {{Covariate-Balanced-and-Adjusted Response-Adaptive Randomization}}''},
  author = {Fang, Hengjia and Ma, Wei},
  year = 2026
}

@book{follandRealAnalysisModern1999,
  title = {Real Analysis: Modern Techniques and Their Applications},
  shorttitle = {Real Analysis},
  author = {Folland, G. B.},
  year = 1999,
  series = {Pure and Applied Mathematics},
  edition = {2nd},
  publisher = {Wiley},
  address = {New York},
  isbn = {978-0-471-31716-6},
  lccn = {QA300 .F67 1999}
}

@article{fortCentralLimitTheorem2014,
  title = {A Central Limit Theorem for Adaptive and Interacting {{Markov}} Chains},
  author = {Fort, G. and Moulines, E. and Priouret, P. and Vandekerkhove, P.},
  year = 2014,
  journal = {Bernoulli},
  volume = {20},
  number = {2},
  pages = {457--485},
  doi = {10.3150/12-BEJ493}
}

@article{fortConvergenceAdaptiveInteracting2011,
  title = {Convergence of Adaptive and Interacting {{Markov}} Chain {{Monte Carlo}} Algorithms},
  author = {Fort, G. and Moulines, E. and Priouret, P.},
  year = 2011,
  journal = {The Annals of Statistics},
  volume = {39},
  number = {6},
  pages = {3262--3289},
  doi = {10.1214/11-AOS938}
}

@article{gaoResponseAdaptiveRandomizationProcedure2024,
  title = {Response-{{Adaptive Randomization Procedure}} in {{Clinical Trials}} with {{Surrogate Endpoints}}},
  author = {Gao, Jingya and Hu, Feifang and Ma, Wei},
  year = 2024,
  journal = {Statistics in Medicine},
  volume = {43},
  number = {30},
  pages = {5911--5921},
  doi = {10.1002/sim.10286}
}

@book{hallMartingaleLimitTheory1980,
  title = {Martingale Limit Theory and Its Application},
  author = {Hall, Peter and Heyde, Christopher Charles},
  year = 1980,
  series = {Probability {{And Mathematical Statistics}} [Unnumbered]},
  publisher = {Academic Press},
  address = {New York},
  isbn = {978-0-12-319350-6},
  lccn = {519.287}
}

@article{huAsymptoticPropertiesCovariateadaptive2012,
  title = {Asymptotic Properties of Covariate-Adaptive Randomization},
  author = {Hu, Yanqing and Hu, Feifang},
  year = 2012,
  journal = {The Annals of Statistics},
  volume = {40},
  number = {3},
  pages = {1794--1815},
  doi = {10.1214/12-AOS983}
}

@article{huDoublyAdaptiveBiased2008,
  title = {Doubly Adaptive Biased Coin Designs with Delayed Responses},
  author = {Hu, Feifang and Zhang, Li-Xin and Cheung, Siu H. and Chan, Wai S.},
  year = 2008,
  journal = {Canadian Journal of Statistics},
  volume = {36},
  number = {4},
  pages = {541--559},
  doi = {10.1002/cjs.5550360404},
  copyright = {http://onlinelibrary.wiley.com/termsAndConditions\#vor}
}

@article{huMultiArmCovariateAdaptiveRandomization2023,
  title = {Multi-{{Arm Covariate-Adaptive Randomization}}},
  author = {Hu, Feifang and Ye, Xiaoqing and Zhang, Li-Xin},
  year = 2023,
  journal = {Science China Mathematics},
  volume = {66},
  number = {1},
  pages = {163--190},
  doi = {10.1007/s11425-020-1954-y}
}

@article{huTheoryCovariateadaptiveDesigns2020,
  title = {On the Theory of Covariate-Adaptive Designs},
  author = {Hu, Feifang and Zhang, Li-Xin},
  year = 2020,
  publisher = {arXiv},
  journal = {arXiv preprint arXiv:2004.02994}
}

@article{huUnifiedFamilyCovariateAdjusted2015,
  title = {A {{Unified Family}} of {{Covariate-Adjusted Response-Adaptive Designs Based}} on {{Efficiency}} and {{Ethics}}},
  author = {Hu, Jianhua and Zhu, Hongjian and Hu, Feifang},
  year = 2015,
  journal = {Journal of the American Statistical Association},
  volume = {110},
  number = {509},
  pages = {357--367},
  doi = {10.1080/01621459.2014.903846}
}

@article{imaiCovariateBalancingPropensity2014,
  title = {Covariate {{Balancing Propensity Score}}},
  author = {Imai, Kosuke and Ratkovic, Marc},
  year = 2014,
  journal = {Journal of the Royal Statistical Society Series B: Statistical Methodology},
  volume = {76},
  number = {1},
  pages = {243--263},
  doi = {10.1111/rssb.12027},
  copyright = {https://academic.oup.com/journals/pages/open\_access/funder\_policies/chorus/standard\_publication\_model}
}

@article{lesigneLargeDeviationsMartingales2001,
  title = {Large Deviations for Martingales},
  author = {Lesigne, Emmanuel and Voln{\'y}, Dalibor},
  year = 2001,
  journal = {Stochastic Processes and their Applications},
  volume = {96},
  number = {1},
  pages = {143--159},
  doi = {10.1016/S0304-4149(01)00112-0}
}

@article{liuBalancingUnobservedCovariates2022,
  title = {Balancing {{Unobserved Covariates With Covariate-Adaptive Randomized Experiments}}},
  author = {Liu, Yang and Hu, Feifang},
  year = 2022,
  journal = {Journal of the American Statistical Association},
  volume = {117},
  number = {538},
  pages = {875--886},
  doi = {10.1080/01621459.2020.1825450}
}

@article{liuImpactsUnobservedCovariates2023,
  title = {The Impacts of Unobserved Covariates on Covariate-Adaptive Randomized Experiments},
  author = {Liu, Yang and Hu, Feifang},
  year = 2023,
  journal = {The Annals of Statistics},
  volume = {51},
  number = {5},
  pages = {1895--1920},
  publisher = {Institute of Mathematical Statistics},
  doi = {10.1214/23-AOS2308}
}

@article{liuPropertiesCovariateadaptiveRandomization2025,
  title = {The Properties of Covariate-Adaptive Randomization Procedures with Possibly Unequal Allocation Ratio},
  author = {Liu, Xiao and Hu, Feifang and Ma, Wei},
  year = 2025,
  journal = {The Annals of Applied Statistics},
  volume = {19},
  number = {2},
  pages = {907--925},
  publisher = {Institute of Mathematical Statistics},
  doi = {10.1214/25-AOAS2023}
}

@article{maNewUnifiedFamily2024,
  title = {A {{New}} and {{Unified Family}} of {{Covariate Adaptive Randomization Procedures}} and {{Their Properties}}},
  author = {Ma, Wei and Li, Ping and Zhang, Li-Xin and Hu, Feifang},
  year = 2024,
  journal = {Journal of the American Statistical Association},
  volume = {119},
  number = {545},
  pages = {151--162},
  doi = {10.1080/01621459.2022.2102986}
}

@article{maStatisticalInferenceCovariateAdaptive2020,
  title = {Statistical {{Inference}} for {{Covariate-Adaptive Randomization Procedures}}},
  author = {Ma, Wei and Qin, Yichen and Li, Yang and Hu, Feifang},
  year = 2020,
  journal = {Journal of the American Statistical Association},
  volume = {115},
  number = {531},
  pages = {1488--1497},
  doi = {10.1080/01621459.2019.1635483}
}

@article{maTestingHypothesesCovariateAdaptive2015,
  title = {Testing {{Hypotheses}} of {{Covariate-Adaptive Randomized Clinical Trials}}},
  author = {Ma, Wei and Hu, Feifang and Zhang, Li-Xin},
  year = 2015,
  journal = {Journal of the American Statistical Association},
  volume = {110},
  number = {510},
  pages = {669--680},
  doi = {10.1080/01621459.2014.922469}
}

@article{meurerSimulationVariousRandomization2016,
  title = {Simulation of Various Randomization Strategies for a Clinical Trial in Sickle Cell Disease},
  author = {Meurer, William J. and Connor, Jason T. and Glassberg, Jeffrey},
  year = 2016,
  journal = {Hematology},
  volume = {21},
  number = {4},
  pages = {241--247},
  doi = {10.1080/10245332.2015.1101966}
}

@book{meynMarkovChainsStochastic2009,
  title = {Markov {{Chains}} and {{Stochastic Stability}}},
  author = {Meyn, S. P. and Tweedie, R. L.},
  year = 2009,
  series = {Communications and Control Engineering Series},
  edition = {2nd},
  publisher = {Cambridge University Press},
  address = {Cambridge ; New York},
  isbn = {978-0-521-73182-9},
  lccn = {QA274.7 .M49 2009}
}

@book{nocedalNumericalOptimization2006,
  title = {Numerical Optimization},
  author = {Nocedal, Jorge and Wright, Stephen J.},
  year = 2006,
  series = {Springer Series in Operations Research},
  edition = {2nd},
  publisher = {Springer},
  address = {New York},
  isbn = {978-0-387-30303-1},
  lccn = {QA402.5 .N62 2006}
}

@article{phienQuantitativeResultsLipschitz2012,
  title = {Some Quantitative Results on {{Lipschitz}} Inverse and Implicit Functions Theorems},
  author = {Phien, Phan},
  year = 2012,
  publisher = {arXiv},
  doi = {10.48550/arXiv.1204.4916},
  journal = {arXiv preprint arXiv:1204.4916}
}

@article{pocockSequentialTreatmentAssignment1975,
  title = {Sequential Treatment Assignment with Balancing for Prognostic Factors in the Controlled Clinical Trial},
  author = {Pocock, Stuart J. and Simon, Richard},
  year = 1975,
  journal = {Biometrics},
  volume = {31},
  number = {1},
  eprinttype = {jstor},
  pages = {103--115},
  doi = {10.2307/2529712}
}

@article{rosenbergerCovariateAdjustedResponseAdaptiveDesigns2001,
  title = {Covariate-{{Adjusted Response-Adaptive Designs}} for {{Binary Response}}},
  author = {Rosenberger, William F. and Vidyashankar, A. N. and Agarwal, Deepak K.},
  year = 2001,
  journal = {Journal of Biopharmaceutical Statistics},
  volume = {11},
  number = {4},
  pages = {227--236},
  doi = {10.1081/BIP-120008846}
}

@article{rosenbergerHandlingCovariatesDesign2008,
  title = {Handling {{Covariates}} in the {{Design}} of {{Clinical Trials}}},
  author = {Rosenberger, William F. and Sverdlov, Oleksandr},
  year = 2008,
  journal = {Statistical Science},
  volume = {23},
  number = {3},
  pages = {404--419},
  publisher = {Institute of Mathematical Statistics},
  doi = {10.1214/08-STS269}
}

@book{rosenbergerRandomizationClinicalTrials2016,
  title = {Randomization in Clinical Trials: Theory and Practice},
  shorttitle = {Randomization in Clinical Trials},
  author = {Rosenberger, William F. and Lachin, John M.},
  year = 2016,
  series = {Wiley Series in Probability and Statistics},
  edition = {Second},
  publisher = {John Wiley \& Sons, Inc},
  address = {Hoboken, New Jersey},
  doi = {10.1002/9781118742112},
  isbn = {978-1-118-74224-2 978-1-118-74211-2 978-1-118-74215-0}
}

@article{shaoTheoryTestingHypotheses2010,
  title = {A Theory for Testing Hypotheses under Covariate-Adaptive Randomization},
  author = {Shao, J. and Yu, X. and Zhong, B.},
  year = 2010,
  journal = {Biometrika},
  volume = {97},
  number = {2},
  pages = {347--360},
  doi = {10.1093/biomet/asq014}
}

@book{sverdlovModernAdaptiveRandomized2016,
  title = {Modern Adaptive Randomized Clinical Trials: Statistical and Practical Aspects},
  shorttitle = {Modern Adaptive Randomized Clinical Trials},
  editor = {Sverdlov, Oleksan},
  year = 2016,
  publisher = {CRC Press, Taylor \& Francis Group},
  address = {Boca Raton, FL},
  isbn = {978-1-4822-3989-8}
}

@article{tavesMinimizationNewMethod1974,
  title = {Minimization: {{A}} New Method of Assigning Patients to Treatment and Control Groups},
  shorttitle = {Minimization},
  author = {Taves, Donald R.},
  year = 1974,
  journal = {Clinical Pharmacology \& Therapeutics},
  volume = {15},
  number = {5},
  pages = {443--453},
  doi = {10.1002/cpt1974155443}
}

@book{vaartAsymptoticStatistics2007,
  title = {Asymptotic Statistics},
  author = {van der Vaart, Aad W.},
  year = 2007,
  series = {Cambridge Series on Statistical and Probabilistic Mathematics},
  edition = {8th},
  number = {3},
  publisher = {Cambridge University Press},
  address = {Cambridge New York Port Melbourne New Delhi Singapore},
  doi = {10.1017/CBO9780511802256},
  isbn = {978-0-521-49603-2 978-0-511-80225-6}
}

@book{villaniOptimalTransportOld2009,
  title = {Optimal {{Transport}}: {{Old}} and {{New}}},
  shorttitle = {Optimal {{Transport}}},
  author = {Villani, C{\'e}dric},
  year = 2009,
  series = {Grundlehren Der Mathematischen {{Wissenschaften}}},
  number = {338},
  publisher = {Springer},
  address = {Berlin},
  isbn = {978-3-540-71049-3},
  lccn = {QA402.5 .V538 2009}
}

@article{weiApplicationUrnModel1978,
  title = {An Application of an Urn Model to the Design of Sequential Controlled Clinical Trials},
  author = {Wei, L. J.},
  year = 1978,
  journal = {Journal of the American Statistical Association},
  volume = {73},
  number = {363},
  pages = {559--563},
  doi = {10.1080/01621459.1978.10480054}
}

@article{yangSequentialCovariateadjustedRandomization2024,
  title = {Sequential Covariate-Adjusted Randomization via Hierarchically Minimizing {{Mahalanobis}} Distance and Marginal Imbalance},
  author = {Yang, Haoyu and Qin, Yichen and Li, Yang and Hu, Feifang},
  year = 2024,
  journal = {Biometrics},
  volume = {80},
  number = {2},
  pages = {ujae047},
  doi = {10.1093/biomtc/ujae047},
  copyright = {https://academic.oup.com/journals/pages/open\_access/funder\_policies/chorus/standard\_publication\_model}
}

@article{yuanBayesianResponseadaptiveCovariatebalanced2011,
  title = {A {{Bayesian}} Response-adaptive Covariate-balanced Randomization Design with Application to a Leukemia Clinical Trial},
  author = {Yuan, Ying and Huang, Xuelin and Liu, Suyu},
  year = 2011,
  journal = {Statistics in Medicine},
  volume = {30},
  number = {11},
  pages = {1218--1229},
  doi = {10.1002/sim.4218},
  copyright = {http://onlinelibrary.wiley.com/termsAndConditions\#vor}
}

@article{zelenRandomizationStratificationPatients1974,
  title = {The Randomization and Stratification of Patients to Clinical Trials},
  author = {Zelen, M.},
  year = 1974,
  journal = {Journal of Chronic Diseases},
  volume = {27},
  number = {7-8},
  pages = {365--375},
  doi = {10.1016/0021-9681(74)90015-0}
}

@article{zhangAsymptoticPropertiesCovariateadjusted2007,
  title = {Asymptotic Properties of Covariate-Adjusted Response-Adaptive Designs},
  author = {Zhang, Li-Xin and Hu, Feifang and Cheung, Siu Hung and Chan, Wai Sum},
  year = 2007,
  journal = {The Annals of Statistics},
  volume = {35},
  number = {3},
  pages = {1166--1182},
  publisher = {Institute of Mathematical Statistics},
  doi = {10.1214/009053606000001424}
}

@article{zhangAsymptoticPropertiesMultitreatment2023,
  title = {Asymptotic Properties of Multi-Treatment Covariate Adaptive Randomization Procedures for Balancing Observed and Unobserved Covariates},
  author = {Zhang, Li-Xin},
  year = 2023,
  publisher = {arXiv},
  journal = {arXiv preprint arXiv:2305.13842}
}

@article{zhangNewFamilyCovariateAdjusted2009,
  title = {A {{New Family}} of {{Covariate-Adjusted Response Adaptive Designs}} and {{Their Properties}}},
  author = {Zhang, Li-Xin and Hu, Fei-fang},
  year = 2009,
  journal = {Applied Mathematics-A Journal of Chinese Universities},
  volume = {24},
  number = {1},
  pages = {1--13},
  doi = {10.1007/s11766-009-0001-6}
}

@article{zhangOnlineMetaLevelAdaptiveDesign2025,
  title = {An {{Online Meta-Level Adaptive-Design Framework}} with {{Targeted Learning Inference}}: {{Applications}} to {{Evaluating}} and {{Utilizing Surrogate Outcomes}} in {{Adaptive Designs}}},
  shorttitle = {An {{Online Meta-Level Adaptive-Design Framework}} with {{Targeted Learning Inference}}},
  author = {Zhang, Wenxin and Hudson, Aaron and Petersen, Maya and van der Laan, Mark},
  year = 2025,
  publisher = {arXiv},
  doi = {10.48550/arXiv.2408.02667},
  journal = {arXiv preprint arXiv:2408.02667}
}

@inproceedings{zhangStatisticalInferenceMEstimators2021,
  title = {Statistical {{Inference}} with {{M-Estimators}} on {{Adaptively Collected Data}}},
  booktitle = {Advances in {{Neural Information Processing Systems}}},
  author = {Zhang, Kelly and Janson, Lucas and Murphy, Susan},
  year = 2021,
  volume = {34},
  pages = {7460--7471},
  publisher = {Curran Associates, Inc.}
}

@article{zhaoIncorporatingCovariatesInformation2022,
  title = {Incorporating Covariates Information in Adaptive Clinical Trials for Precision Medicine},
  author = {Zhao, Wanying and Ma, Wei and Wang, Fan and Hu, Feifang},
  year = 2022,
  journal = {Pharmaceutical Statistics},
  volume = {21},
  number = {1},
  pages = {176--195},
  doi = {10.1002/pst.2160}
}

@article{zhuCovariateadjustedResponseAdaptive2015,
  title = {Covariate-adjusted Response Adaptive Designs Incorporating Covariates with and without Treatment Interactions},
  author = {Zhu, Hongjian},
  year = 2015,
  journal = {Canadian Journal of Statistics},
  volume = {43},
  number = {4},
  pages = {534--553},
  doi = {10.1002/cjs.11260},
  copyright = {http://onlinelibrary.wiley.com/termsAndConditions\#vor}
}

\end{document}